\documentclass[a4paper]{amsbook}

\usepackage{cite}
\def\BibTeX{{\rm B\kern-.05em{\sc i\kern-.025em b}\kern-.08em
    T\kern-.1667em\lower.7ex\hbox{E}\kern-.125emX}}
\makeindex
\usepackage[arrow, matrix, curve]{xy}
\usepackage{amsfonts}
\usepackage{paralist}
\usepackage[english]{babel}
\usepackage{amssymb}
\usepackage{amsthm}

\newtheorem{dummytheorem}{Dummy-Theorem}[section]
\newtheorem{lemma}[dummytheorem]{Lemma}
\newtheorem{theorem}[dummytheorem]{Theorem}
\newtheorem{proposition}[dummytheorem]{Proposition}
\newtheorem{corollary}[dummytheorem]{Corollary}
\newtheorem{example}[dummytheorem]{Example}

\theoremstyle{definition}
\newtheorem{definition}[dummytheorem]{Definition}
\newtheorem{remark}[dummytheorem]{Remark}
\numberwithin{section}{chapter}
\numberwithin{equation}{chapter}

\begin{document}

\thispagestyle{empty}
\voffset-20pt

\begin{center}






{\bf{\LARGE Universit\"at Paderborn \\[3mm]
Fakult\"at f\"ur Elektrotechnik, Informatik und Mathematik }}\\[3mm]
{\bf{\LARGE Universit\'e Lorraine-Metz \\[3mm]
\'Ecole Doctorale, IAEM Lorraine }}\\[5cm]
{\bf{\sf{\Huge A Dynamical Interpretation of Patterson-Sullivan Distributions
}}}\\[6cm]





{\Large
{\huge Dissertation}											\\[1cm]
{\huge 
Th\`ese de doctorat
}


vorgelegt von															\\[0.5cm]
{\huge Jan Emonds}									\\[0.5cm]
Advisors:  {Prof.~Dr.~J.~Hilgert} 	\\[1mm]
\hspace*{4em}  {Prof.~Dr.~S.~Mehdi}	\\[5mm]
Version: 26.05.2014
}

\vspace{10mm}

\vspace{10mm}

\vspace{10mm}

\vspace{10mm} 
\vspace{4mm}

\end{center}
\newpage

\cleardoublepage
\thispagestyle{empty}

\noindent{\bf{\LARGE Zusammenfassung}}
\\[0.2cm]

Dieser Arbeit untersucht den Zusammenhang zwischen Patterson-Sullivan Distributionen und dynamischen Zetafunktionen auf reellhyperbolisch kompakten R\"aumen. In \cite{AZ} wurde im Fl\"achenfall gezeigt, dass die Residuen von gewissen Zetafunktionen, die mithilfe von Daten des geod\"atischen Flusses definiert werden, durch Patterson-Sullivan Distributionen beschrieben werden. In dieser Arbeit wird der h\"oher dimensionale Fall behandelt. 
\cleardoublepage
\thispagestyle{empty}

\noindent{\bf{\LARGE Summary}}
\\[0.2cm]
Given a compact real hyperbolic space we study the connection between certain
phase space distributions, so called Patterson-Sullivan distributions, and dynamical
zeta functions. These zeta functions generalize logarithmic derivatives of classical
Selberg zeta functions which are defined by closed geodesics which is data from the
geodesic 
flow on phase space. Patterson-Sullivan distributions are asymptotically
equivalent to Wigner distributions which play a key role in quantum ergodicity but
they are also invariant under the geodesic 
flow.
The surface case was studied before in \cite{AZ} and thus the emphasis in this
work lies on the higher dimensional case.
\cleardoublepage
\thispagestyle{empty}
\voffset-20pt

\noindent{\bf{\LARGE Acknowledgements}}
\\[0.2cm]

First, I want to thank my advisors Joachim Hilgert and Salah Mehdi. They gave me the chance to work in this fascinating area of mathematics and without their constant support and patience I would not be able to finish this work. 

I would also like to thank the staff of the mathematical institutes in Paderborn and Metz, in particular my colleagues from the working group. Finally, I want to to thank my family for always believing in me.
\tableofcontents
\chapter{Statement of results}
This work is mainly based on the articles \cite{AZ} by N. Anantharaman and S. Zelditch and \cite{Z} by the latter author. We will shortly summarize the results from \cite{AZ} and \cite{Z}, which we want to examine. 

Let $X_\Gamma$ be a compact hyperbolic surface which can be written as \index{$X_\Gamma$, $X_\Gamma=\Gamma\backslash X$} \[X_\Gamma:=\Gamma\backslash G/K.\]
Here $G=\mathrm{PSL}_2(\mathbb R)$, $K=PSO(2)$ and $\Gamma\subset PSL_2(\mathbb R)$ is a cocompact, discrete and torsionfree subgroup of $G$. We also consider the right-regular representation $\pi_R$ of $G$ on $L^2(\Gamma\backslash G)$ defined by \[g\cdot f(x)=f(xg).\]

Since $\Gamma$ is cocompact, this unitary representation decomposes into a discrete sum 
\begin{equation}\label{eq: l2decdisc}
L^2(\Gamma\backslash G)=\bigoplus_{\pi\in\widehat G} m_\pi V_\pi,
\end{equation}
where $m_\pi\in\mathbb N$. We call the irreducible component $V_\pi$ spherical, if it possesses a $K$-fixed vector. These $K$-fixed vectors are unique up to constants, smooth and give an orthogonal basis of $L^2(X_\Gamma)$ of Laplace eigenfunctions. We fix such a normalized basis $\{\varphi_n\}_n$ with eigenvalues $-\left(\rho_0^2+\lambda_n^2\right)$. It can be shown that one can either assume $\lambda_n\in i\mathbb R$ or $\lambda_n\in\mathbb R^+$. In the former case, we say that $\lambda_n$ is in the complementary series, else we say that $\lambda_n$ is in the principal series. There are only finitely many $\lambda_n$ in the complementary series. Associated to $\{\varphi_n\}_n$ there are some specific distributions on phase space. They are called Wigner distributions and are given by \index{$W_n(\sigma)$} \[W_n(\sigma):=\langle \mathrm{Op}(\sigma)\varphi_n,\varphi_n\rangle_{L^2(X_\Gamma)}\]
for $\sigma\in C^\infty(S^*X_\Gamma)$. Here one needs a calculus for pseudodifferential operators, i.e. an assignment \index{$\mathrm{Op}(a)$} \[C^\infty(S^*X_\Gamma)\to B(L^2(X_\Gamma))\mbox{ ,  } a\mapsto \mathrm{Op}(a).\]
 This is the data from the quantum side we need.

On the side of classical dynamics, we consider the geodesic flow on $T^*X_\Gamma$. Since the geodesic flow preserves the metric on $T^*X_\Gamma$, it can be considered as a mapping on phase space $S^*X_\Gamma\cong SX_\Gamma$, which can also be identified with $\Gamma\backslash G$. Under this identification the geodesic flow is given by right translation with $\exp tH_0$, where $\mathfrak a=\mathbb RH_0 =\mathrm{Lie}(A)$ comes from an Iwasawa decomposition of $G=ANK$. Periodic orbits of the geodesic flow are called closed geodesics and it can be shown that there is a bijection between closed geodesics and nontrivial conjugacy classes \index{$[\gamma]$} $[\gamma]$ in $\Gamma$. The smallest possible period is called the length of $[\gamma]$ and denoted by $L_\gamma$. Finally, a closed geodesic $[\gamma]$ is called prime, if there is no $\gamma_0\in \Gamma$ and $n\in\mathbb N$, $n>1$, such that
\begin{equation}\label{eq: primegeodst}
\gamma=\gamma_0^n.
\end{equation}

With this data from classical mechanics, in \cite{AZ} a dynamical zeta function 
\begin{equation}\label{eq: zetasurf}
\mathcal{Z}(k;\sigma):= \sum_{[\gamma]\neq 1} \frac{e^{-kL_\gamma}}{1-e^{-L_\gamma}}\cdot \int_{\gamma_0} \sigma
\end{equation}
for $k\in \mathbb{C}$, $\mathrm{Re}(k)>1$, is defined. Here $[\gamma_0]$ is the unique prime closed geodesic belonging to $[\gamma]$ and $\sigma$ is a test function on phase space, that is $\sigma\in C^\infty(\Gamma\backslash G)$. 

For $\sigma\equiv 1$ constant,  $\mathcal{Z}(\cdot;\sigma)$ just equals the logarithmic derivative of the dynamical Selberg zeta function $Z_S$\index{$Z_S$} \[\frac{d}{dk}\mathrm{ln } Z_S(k)= \sum_{[\gamma]\neq 1} \frac{e^{-kL_\gamma}}{1-e^{-L_\gamma}}\cdot L_{\gamma_0}  .\] 

In this context Anantharaman and Zelditch show the following theorem, see \cite[Th.1.3]{AZ}:
\begin{theorem}\label{th: mainth}
If $\sigma$ is a real analytic function on $SX_\Gamma$, then $\mathcal Z(\cdot;\sigma)$ admits a meromorphic continuation to $\mathbb C$. The poles in the critical strip $0<\mathrm{Re}(k)<1$ appear at $s=\frac{1}{2}+i\lambda$, where $-\left(\frac{1}{4}+\lambda^2\right)$ is an eigenvalue of the Laplacian. The residue at $k=\frac{1}{2}+i\lambda$ is given by \[\sum \langle \sigma,\widehat{\mathrm{PS}}_{\varphi}\rangle,\] where this finite sum runs over all (normalized) Patterson-Sullivan distributions $\widehat{PS}_{\varphi}$ associated with an eigenfunction $\varphi$ for the eigenvalue $-\left(\frac{1}{4}+\lambda^2\right)$. 
\end{theorem}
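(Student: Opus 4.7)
The plan is to upgrade the classical Selberg trace formula — which already explains the meromorphic continuation of $\frac{d}{dk}\log Z_S(k)$ and locates its poles in the critical strip at the spectral parameters $\frac{1}{2}+i\lambda_n$ — to a \emph{symbol-valued} version in which the constant weight is replaced by the quantization $\mathrm{Op}(\sigma)$ of the phase-space observable. Concretely, one introduces an operator $R(k)$ on $L^2(\Gamma\backslash G)$ built from the geodesic flow (morally $R(k)\sim \int_{0}^{\infty} e^{-kt}\pi_R(\exp tH_0)\, dt$, localized on the diagonal of $\Gamma\backslash G\times \Gamma\backslash G$) and studies $\mathrm{tr}\bigl(\mathrm{Op}(\sigma)R(k)\bigr)$. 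Evaluating this trace geometrically should reproduce $\mathcal Z(k;\sigma)$ up to a term holomorphic in $\mathrm{Re}(k)>0$: summing $e^{-knL_{\gamma_0}}$ over $n\geq 1$ creates the factor $e^{-kL_{\gamma_0}}/(1-e^{-L_{\gamma_0}})$, and the fixed-point contribution at the prime geodesic $[\gamma_0]$ integrates $\sigma$ around the orbit, reproducing \eqref{eq: zetasurf}.

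For the spectral side one uses the decomposition \eqref{eq: l2decdisc}. On each spherical component $V_\pi$ associated with a Laplace eigenvalue $-\bigl(\tfrac14+\lambda_n^2\bigr)$, the geodesic-flow resolvent $R(k)$ acts — after testing against the spherical vector $\varphi_n$ — by a scalar $\mu_n(k)$ that is rational in $k$ with a simple pole exactly at $k=\tfrac12+i\lambda_n$. The spectral expansion of the trace therefore takes the form $\sum_n \mu_n(k)\,\langle \mathrm{Op}(\sigma)\varphi_n,\varphi_n\rangle=\sum_n \mu_n(k)\,W_n(\sigma)$, whose poles in the critical strip sit at exactly the predicted values and whose residues are finite sums of Wigner-type pairings $W_n(\sigma)$ over eigenfunctions attached to the given eigenvalue. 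The first technical step is to justify this identity rigorously by controlling the growth of $\mu_n(k)$ and verifying that non-spherical $V_\pi$ contribute only to the analytic part.

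The main obstacle is the final substitution: converting each Wigner pairing $W_n(\sigma)$ in the residue into the Patterson–Sullivan pairing $\langle\sigma,\widehat{PS}_{\varphi_n}\rangle$. The two distributions are only asymptotically equivalent as $\lambda_n\to\infty$ and are \emph{not} equal in general; however $\widehat{PS}_\varphi$ is $A$-invariant on $\Gamma\backslash G$ whereas $W_n$ is not, and the two should be related by an explicit intertwining operator built from the geodesic-flow average and the spherical parameter $\lambda_n$. Here the real-analyticity hypothesis on $\sigma$ becomes essential: the intertwining is presented as a formal series in derivatives along the $A$-orbits, and analyticity supplies precisely the decay needed for this series to converge and to legitimately replace $W_n(\sigma)$ by $\langle\sigma,\widehat{PS}_{\varphi_n}\rangle$ inside the residue computation. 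Feeding this identification back into the spectral expansion above then produces the claimed formula.
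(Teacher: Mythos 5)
Your sketch captures the broad shape of the representation-theoretic proof in \cite{AZ} --- trace of a test operator, geometric vs.\ spectral evaluation, conversion of Wigner to Patterson--Sullivan pairings --- but several load-bearing details are off. The actual test operator is $\sigma\cdot\pi_R(f_k)$ for a bi-$K$-invariant admissible $f_k$, not $\mathrm{Op}(\sigma)R(k)$ with $R(k)\sim\int_0^\infty e^{-kt}\pi_R(\exp tH_0)\,dt$: that integral does not converge, and $\pi_R(\exp tH_0)$ does not preserve $L^2(X_\Gamma)$, the space on which multiplication by $\sigma$ acts. It is the $K$-bi-invariance of $f_k$ that makes $\pi_R(f_k)$ diagonalize on the eigenbasis via the spherical transform $\mathcal S(f_k,\lambda_j)$ --- this is the correct analogue of your $\mu_n(k)$, and it is a ratio of Gamma functions, not a rational function. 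Also, the factor $1-e^{-L_\gamma}$ in (\ref{eq: zetasurf}) does not come from summing $e^{-knL_{\gamma_0}}$ over $n$: that would give $1-e^{-kL_{\gamma_0}}$ with a spurious $k$, and moreover the sum in (\ref{eq: zetasurf}) is over all $[\gamma]$, so its denominator $1-e^{-L_\gamma}$ already depends on $n$ through $L_\gamma=nL_{\gamma_0}$. It is the linearized Poincar\'e return-map Jacobian $\det\left(1-\mathrm{Ad}(a_\gamma m_\gamma)^{-1}|_{\mathfrak n}\right)$ produced by the hyperbolic orbital integral.

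The deeper gap is the role you assign to real-analyticity. In the trace-formula route the Wigner-to-PS conversion is exact at each fixed spectral parameter and involves no formal series in $A$-derivatives: it is the intertwiner $L_\lambda$ (an integral over $N$) followed by an ODE/hypergeometric argument showing that $n\mapsto\langle\pi_M(\sigma^n),\mathrm{PS}_{\varphi_j}\rangle$ is a Casimir eigenfunction on $N$ and hence an explicit scalar multiple of $\langle\sigma,\mathrm{PS}_{\varphi_j}\rangle$; nothing beyond $C^\infty$ smoothness is used there. Real-analyticity is essential only in the \emph{other} proof of \cite{AZ}, via thermodynamic formalism, where $\mathcal Z$ is realized through a Fredholm determinant of a Ruelle transfer operator that is nuclear precisely on spaces of real-analytic observables, and this nuclearity is the source of the meromorphic continuation. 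The trace-formula approach you are sketching proves Theorem~\ref{th: mainth} only for $\sigma$ with finitely many nontrivial components in the decomposition (\ref{eq: l2decdisc}) --- the text explicitly flags this limitation --- so to obtain the full real-analytic statement you need either the transfer-operator machinery or a genuinely new argument to pass to arbitrary real-analytic $\sigma$; the series-convergence mechanism you describe does not supply it.
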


Here, (normalized) Patterson-Sullivan distributions $\widehat{\mathrm{PS}}_{\varphi}$  are certain other kind of phase space distributions which one can associate to the eigenfunctions $\{\varphi_n\}_n$, but in contrast to Wigner distributions they are invariant under the geodesic flow. They were first defined in \cite{AZ} using the same calculus which was used to construct Wigner distributions. See \cite{HHS} for an extension of this calculus and a definition of Patterson-Sullivan distributions by this calculus for arbitrary compact locally symmetric spaces. Patterson-Sullivan distributions also play a role in quantum unique ergodicity, since they are asymptotically equivalent to Wigner distributions, see \cite{AZ}, \cite{HHS}. 

Note that if the spectrum is simple, i.e. if $m_\pi=1$ for all spherical components $V_\pi$, this theorem yields a definition of Patterson-Sullivan distributions, which only uses knowledge of the length spectrum $\{L_\gamma:[\gamma]\text{ conjugacy class in } \Gamma\}$. 
See for example the survey article \cite{Sar} for more information on simple spectra of the Laplacian.  

\cite{AZ} presents two kinds of proofs for Theorem \ref{th: mainth}. The first one uses thermodynamic formalism, the second one is of representation theoretical nature and uses a generalized Selberg trace formula, which can be found in \cite{Z}. We will pursue the second proof as this seems to be more feasible in generalizing \ref{th: mainth} to higher dimensional spaces. It has the disadvantage as it so far only works for test functions $\sigma$, which have only finitely many nontrivial components in the decomposition $(\ref{eq: l2decdisc})$. On the other hand this approach makes it possible also to determine poles/residues of $\mathcal Z(\sigma)$ outside the strip $0<\mathrm{Re}(k)<1$. Furthermore, a generalized Selberg trace formula could be of independent interest, as it connects periodic orbit measures to Wigner distributions.

We will now come to the content of our work. \index{$X_\Gamma$} $X_\Gamma:=\Gamma\backslash G/K$ will be a compact locally symmetric space of (real) rank one. Here $G$ is a real semisimple, noncompact Lie group with finite center, $K\subset G$ a maximal compact subgroup and $\Gamma\subset G$ a discrete, cocompact and torsionfree subgroup. We also fix an Iwasawa decomposition $G=ANK$ and set $M=Z_K(A)$ for the centralizer of $A$ in $K$. In the course of this work, we will specialize $X_\Gamma$ to be a compact real hyperbolic locally symmetric space, this means $G=SO(1,n)$. We restrict to real rank one symmetric spaces, as this ensures the absence of non trivial elliptic elements in the uniform lattice $\Gamma$. The major reason for specializing further to real hyperbolic spaces is that in this case $M$ acts on $N$ with a one dimensional slice and the differential equation which will occur during this work can be solved on this simple slice. 

The representation theoretic proof of Theorem \ref{th: mainth} relies mainly on three results, which we will now explain. The first one is an observation using the fact that Patterson-Sullivan distributions are invariant under the geodesic flow. Then one uses the result that the representation of $A$ on irreducible components $V_\pi$ of $L^2(\Gamma\backslash G)$, which we obtain by restricting the right-regular representation of $G$, is particularly simple. More precisely, if $G=PSL_2(\mathbb R)$ and $V_\pi $ is a spherical component of $(\ref{eq: l2decdisc})$, the representation of $A$ on $V_\pi$ has exactly two invariant subspaces, one of which is generated by the $K$-fixed vector in $V_\pi$. If $V_\pi$ is not spherical, then it has already a cyclic vector, see \cite[Prop. 2.2.]{AZ}. This yields that in the proof instead of considering all possible $\sigma\in C^{\infty}(SX_\Gamma)$ one can restrict to $\sigma$ coming from three basic series. For $X_\Gamma$ a real hyperbolic space we can generalize the result on the action of $A$ on the spherical spectrum. It will turn out that if $X_\Gamma$ is of dimension at least 3, the action of $A$ on any spherical component $V_\pi$ is already irreducible, the $K$-fixed vector being cyclic, see Theorem \ref{th: geodirrsphere}. Here we use the fact that the set of $M$-invariants in the universal enveloping algebra $U(\mathfrak n)$ is generated by the (euclidean) Laplacian on $\mathfrak n$. If $V_\pi$ is not spherical, we state a result, but we will not use it, see Proposition \ref{th: geodirrnonsphere}. 

The procedure for defining $\mathcal Z(\sigma)$ is computing the (geometric) trace of a suitable trace class operator $\sigma\cdot \pi_R(f)$, which depends on $\sigma$ and a suitable function $f\in C^\infty(G)$. Here $\pi_R(f)$ is just the Fourier transform of $f$ with respect to the representation $\pi_R$ and $\sigma$ is viewed as a multiplication operator. The trace can be computed in two ways, the first one is the so-called geometric, the latter one is called the spectral trace. We start by computing the geometric trace and thus, the next ingredient is a generalization of Selberg's trace formula depending on $\sigma$ coming from the three series. We first mention that the computation of the trace formula heavily depends on the rank one assumption. Namely, if $X_\Gamma$ is a locally symmetric compact space of (real) rank one, then all nontrivial elements in $\Gamma$ are hyperbolic which means they are conjugate to some element $ma\in MA$.  We state a trace formula for general compact locally symmetric spaces of rank one in Theorem \ref{theo: generaltr}. 

We further specialize $\sigma$ to be a function in $C(SX_\Gamma)$, which is only allowed to have finitely many nontrivial components in the \textit{spherical} spectrum and no components in the nonspherical spectrum. Hence in the case of a real hyperbolic space, Lemma  \ref{lem: sphcy} allows us now to reduce to the case where $\sigma$ equals some eigenfunction $\varphi$ on $X_\Gamma$. The obstacle that occurs in computing a satisfactory trace formula now, is a factor we call the weight function $I_\gamma(\sigma)$. It is a real valued function on $N$ and depends on the element $\gamma$ in $\Gamma$ and the function $\sigma$.\footnote{In \cite{AZ} this weight function is called orbital integral but this seems to not quite compatible with the terminology of the classical Selberg trace formula.} 
Basically, for $n\in N$, $I_\gamma(\sigma)(n)$ is the integral of the $n$-translation of $\sigma$ over the prime closed geodesic  belonging to $[\gamma]$. For $\varphi=\sigma\equiv 1$ this weight function is constant and just equals $L_{\gamma_0}$ but for non constant eigenfunctions the evaluation of the weight function is more complicated. Since $\varphi$ is a Laplace eigenfunction by assumption, we obtain a differential equation using an expression for the Casimir $\Omega$ from Chapter \ref{chap: casimir}. But this differential equation is a priori an equation on $N$ and thus it is not clear how  $I_\gamma(\varphi)$, which is as mentioned above a real valued function on $N$, depends in higher dimensions on its value at the neutral element, which is just $\int_{\gamma_0} \varphi$ from the definition of $\mathcal Z$. We circumvent this problem by decomposing $I_\gamma(\varphi)$ into a sum \[I_\gamma(\varphi)=\sum_{\pi\in\widehat M}d_\pi \chi_\pi*I_\gamma(\varphi),\]
see Theorem \ref{th: gstf}. Here $\widehat M$ consists of all irreducible representations $(\pi,V_\pi)$ of $M$, $d_\pi$ is the dimension of $V_\pi$, $\chi_\pi=\mathrm{Tr}(\pi)$ the character of $\pi$ and $*$ denotes convolution. Then $d_\pi\chi_\pi*I_\gamma(\varphi)$ is the projection of $I_\gamma(\varphi)$ in the space of $M$-finite functions of type $\check\pi$, $\check\pi$ the contragradient representation to $\pi$. Furthermore, each $\chi_\pi*I_\gamma(\varphi)$ is also a Casimir eigenfunction with the same eigenvalue as $\varphi$. The observation which helps us now is the fact that in real hyperbolic spaces, the subgroup $M$ acts transitively on spheres in $N$ if the dimension of the real hyperbolic space is at least 3, i.e. slices for this action of $M$ on $N$ are one dimensional. For any slice $S$ we can now restrict the equation for each $\chi_\pi*I_\gamma(\varphi)$ to $S$ and the results of Chapter \ref{chap: radial} and \ref{chap: ode} allow us to determine $\chi_\pi*I_\gamma(\varphi)|_S$ as a product of a hypergeometric function with a monomial and a scalar which is connected to $I_\gamma(\varphi)$ at the neutral element, see equations (\ref{eq: lghyp}) and (\ref{eq: l2hyp}). 

The resulting trace formula is now sufficient to define a zeta function 
\begin{equation}\label{eq: ourzeta}
\mathcal Z(k;\sigma):=\sum_{1\neq [\gamma]\in C\Gamma}\sum_{\pi\in\widehat{M}}c(\gamma,\sigma,\pi,k)e^{(-k+\rho_0)L_\gamma},
\end{equation}
 which converges at least on the half plane $\{k\in\mathbb C:\mathrm{Re}(k)>2\rho_0\}$ and generalizes the one from \cite{AZ}. Here $\rho_0$ is a number depending only on the dimension and $\sigma$ is as above and the coefficients $c(\gamma,\sigma,\pi,k)$ depends on $k$, the test function $\sigma$, $\pi$ in $\widehat M$ and the period length $L_{\gamma_0}$ of the prime geodesic $[\gamma_0]$.

The meromorphic continuation now follows from the computation of the spectral trace by using the basis of eigenfunctions $\{\varphi_n\}_n$. The preliminary result of the trace, for an eigenfunction $\varphi$, is given in Proposition \ref{thm: strac}. It uses a calculus for pseudodifferential operators from \cite{Sch}, see also \cite{HS}, which is adapted to the rank one setting. Proposition \ref{thm: strac} is valid for any locally symmetric compact space of rank one, but it only involves Wigner distributions. To connect it with Patterson-Sullivan distributions we have to perform computations similar to the ones on the geometric side. In particular, we again encounter a differential equation, which we can only solve when the underlying space is real hyperbolic. The final result for the spectral trace can be found in Theorem \ref{thm: stracspec}. 

The meromorphic continuation now follows by standard arguments, we only mention that we have to make some explicit computations for functions on hyperbolic space. These calculations also seem to be possible in other rank one spaces by the classification results but we have not tried to do so.

The main result is as follows, see Proposition \ref{th: geodirrnonsphere}.

\begin{theorem}\label{th: ourmain}
Let $X_\Gamma$ be a compact, locally symmetric real hyperbolic space and $\sigma$ a function in $C^\infty(SX_\Gamma)$ with only finitely many nontrivial components in the spherical spectrum and no components in the non-spherical spectrum. The associated zeta function $\mathcal Z(\sigma)$ defines a meromorphic function on $\mathbb C$. In the strip $\rho_0-\frac{1}{2} <\mathrm{Re}(k)<\rho_0+\frac{1}{2}$ the poles of $\mathcal{Z}(\sigma)$ are at $k=\rho_0+i\lambda$, where $-(\rho_0^2+\lambda^2)$ is an eigenvalue of the Laplacian, and $k=\rho_0$. 
Their  residues are determined by Patterson-Sullivan distributions. If $\lambda$ comes from the principal series, then the residue at $k=\rho_0+i\lambda$ is given (up to a non-zero constant) by normalized Patterson-Sullivan distributions $\widehat{\mathrm{PS}}_\varphi$ \begin{equation}\sum \langle \sigma ,\widehat{\mathrm{PS}}_{\varphi}\rangle,
\end{equation} 
where the sum runs over all eigenfunctions $\varphi$ with eigenvalue $-\left(\lambda^2+\rho_0^2\right)$. 
\end{theorem}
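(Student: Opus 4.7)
The plan is to realize $\mathcal Z(\sigma)$ as the generating series of the geometric trace of a suitable trace-class operator $\sigma\cdot\pi_R(f)$, compute the same trace spectrally, and match the two expressions to obtain both the meromorphic continuation and the identification of residues with Patterson-Sullivan distributions.

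First, because $\sigma$ is assumed to lie in finitely many spherical components of $(\ref{eq: l2decdisc})$ and in no nonspherical ones, Theorem \ref{th: geodirrsphere} together with the cyclicity of the $K$-fixed vector for the $A$-action lets me reduce by linearity to the case where $\sigma=\varphi$ is a single Laplace eigenfunction. I would then apply the general rank-one trace formula (Theorem \ref{theo: generaltr}) to $\varphi\cdot\pi_R(f)$. On the geometric side the awkward ingredient is the weight function $I_\gamma(\varphi)\colon N\to\mathbb C$, whose value at the identity is the period integral $\int_{\gamma_0}\varphi$ that appears in the naive definition of $\mathcal Z$. Being a translate-integral of a Casimir eigenfunction, $I_\gamma(\varphi)$ itself satisfies a linear PDE on $N$ coming from the Casimir computations of Chapter \ref{chap: casimir}.

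To solve this PDE I would use the $M$-type decomposition
\[I_\gamma(\varphi)=\sum_{\pi\in\widehat M}d_\pi\,\chi_\pi*I_\gamma(\varphi)\]
from Theorem \ref{th: gstf}. Because $X_\Gamma$ is real hyperbolic of dimension at least three, $M$ acts transitively on spheres in $N$, so each $M$-isotypic piece is determined by its restriction to a one-dimensional slice $S\subset N$; on $S$ the PDE becomes an ODE of hypergeometric type. The results of Chapters \ref{chap: radial} and \ref{chap: ode} give the explicit solution as a hypergeometric function times a monomial times a scalar proportional to the period integral, cf.\ (\ref{eq: lghyp}) and (\ref{eq: l2hyp}). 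Substituting back into the trace formula and summing over conjugacy classes produces the series (\ref{eq: ourzeta}) and its absolute convergence on $\{\mathrm{Re}(k)>2\rho_0\}$.

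For the meromorphic continuation I would compute the spectral trace using the rank-one pseudodifferential calculus of \cite{Sch, HS}: Proposition \ref{thm: strac} gives the trace in terms of Wigner distributions on the eigenbasis $\{\varphi_n\}$, and repeating the radial/ODE analysis now on the \emph{spectral} side upgrades Wigner to normalized Patterson-Sullivan distributions, yielding Theorem \ref{thm: stracspec}. Matching the two trace expressions and applying a standard Mellin/Laplace-type inversion then gives the meromorphic continuation; the pole at $k=\rho_0+i\lambda$ inherits the exponential factor $e^{(-k+\rho_0)L_\gamma}$ from the geometric series, while the residue formula drops out of the explicit hypergeometric normalization together with the Patterson-Sullivan identification. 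The pole at $k=\rho_0$ is the contribution of the constant eigenfunction, and the restriction of the pole set to the claimed strip reflects the dichotomy $\lambda_n\in i\mathbb R\cup\mathbb R^+$ together with the finiteness of the complementary series. The main obstacle is precisely the one flagged in the introduction: a priori the PDE for $I_\gamma(\varphi)$ on $N$ does not express its value at the identity in terms of the single period integral appearing in $\mathcal Z$, and the real-hyperbolic, dimension $\geq 3$ hypothesis — via transitivity of $M$ on spheres in $N$ — is what reduces the PDE to a solvable ODE; the same difficulty reappears on the spectral side, and the two radial reductions must be performed consistently to see normalized Patterson-Sullivan rather than merely Wigner distributions in the residue.
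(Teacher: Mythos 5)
The overall strategy matches the paper — reduce to eigenfunctions, compute geometric and spectral traces of $\sigma\cdot\pi_R(f_k)$, use the $M$-isotypic decomposition of $I_\gamma(\varphi)$ and the one-dimensional slice for $M$ acting on $N$ to solve the resulting hypergeometric ODE, and identify Patterson-Sullivan distributions in the spectral residues. However, your reduction step is stated incorrectly and hides a genuine gap.

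You write that cyclicity of the $K$-fixed vector under $A$ lets you ``reduce by linearity to the case where $\sigma=\varphi$ is a single Laplace eigenfunction.'' This is not how the reduction works, and mere linearity would fail. What cyclicity (Lemma \ref{lem: sphcy}) actually gives you is a \emph{continuous} superposition: each $M$-invariant element of a spherical component $V$ has the form $\sigma_n=\int_A\alpha_n(a)\,\pi_R(a)\varphi_n\,da$ with $\alpha_n\in L^1(A)$ (equation (\ref{eq: sdec})) — not a finite linear combination of eigenfunctions. To bring the zeta function for $\sigma_n$ back to that of the single eigenfunction $\varphi_n$ one must use the \emph{invariance of Patterson-Sullivan distributions under the geodesic flow}, i.e.\ under right $A$-translation on $\Gamma\backslash G/M$: $\langle\pi_R(a)\varphi_n,\mathrm{PS}_{\varphi_j}\rangle=\langle\varphi_n,\mathrm{PS}_{\varphi_j}\rangle$, which together with continuity of $\mathrm{PS}_{\varphi_j}$ makes the scalar $\int_A\alpha_n$ factor out (equation (\ref{eq: dsdec})), and motivates the definition (\ref{def: zetagen}) of $\mathcal Z(\sigma)$. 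This flow-invariance is precisely the ``first result'' of the representation-theoretic proof flagged in the introduction, and without it the reduction to a single eigenfunction does not go through. You should also double-check your theorem reference: the relevant statement used here is Lemma \ref{lem: sphcy} ($A$-cyclicity of the spherical vector), not Lemma \ref{th: geodirrsphere} (the $K$-finite version, which the paper states but does not use). Finally, the meromorphic continuation is not obtained by a ``Mellin/Laplace-type inversion'' — it comes from directly establishing absolute convergence of the spectral trace series $(*)$ away from the pole set $\mathcal P$ (Proposition \ref{prop: merozeta}), and then expressing $\mathcal Z$ through the shifted $\mathcal R(k+2m;\varphi)$ via Lemma \ref{binom}; this is worth stating precisely since the pole/residue computation depends on that explicit form.
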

  
\begin{remark}
Let $\sigma=\varphi$ be a Laplace eigenfunction. There are two statements in Theorem 1.3 from \cite{AZ} which we cannot verify in Theorem \ref{th: ourmain} in the case of a compact surface. 

The first one is about the form of the zeta function $\mathcal Z(\sigma)$ from (\ref{eq: zetasurf}). It seems to come from an incomplete integral substitution in \cite{AZ} which is used to get from equation (9.8) to (9.9). Our definition (\ref{eq: ourzeta}) differs in the surface case by a constant which depends on $k\in\mathbb C$, the geodesic $[\gamma]$ and the eigenvalues from the $K$-fixed vectors in the spherical components of $\sigma$, see Section \ref{sec: out} for the definition. This constant is furthermore holomorphic in $k$ on $\{k\in \mathbb{C}:\mathrm{Re}(k)>0\}$ and approaches $1$, as $L_\gamma$ goes to infinity. After normalizing $\mathcal Z(\sigma)$ this constant is \[ \left(\frac{\cosh L_\gamma}{\cosh L_\gamma-1}\right)^{k-1/2}\cdot {}_2F_1\left(k-\frac{1}{4}-\frac{ir}{2},k-\frac{1}{4}+\frac{ir}{2},k;1-\frac{\cosh L_\gamma}{\cosh L_\gamma-1}\right),\] 
where $-\frac{1}{2}(r^2+\frac{1}{4} )$ is the eigenvalue of $\varphi$, see Section \ref{sec: out}. We can only deduce from (\ref{eq: ourzeta}) that (\ref{eq: zetasurf}) has a meromorphic continuation to the half plane  $\{k\in\mathbb C:\mathrm{Re}(k)>0\}$ with the same poles and residues as (\ref{eq: ourzeta}). In particular, in the strip $0<\mathrm{Re}(k)<1$ the residues of (\ref{eq: ourzeta}) are given by (normalized) Patterson-Sullivan distributions.

The second difference is about the location and the residues of the poles of the continuation of $\mathcal Z(\sigma)$ in the strip $0<\mathrm{Re}(k)<1$. Here the problem seems to be that the constant $\mu_0(s)$ in \cite{AZ} which relates Wigner- to Patterson-Sullivan distributions could have poles at values $s=1/2+ir$, if $-(1/4+r^2)$ is an eigenvalue from the complementary series, which are not considered in \cite{AZ}. We can only recover the result on the poles/residues from Theorem 1.3 in \cite{AZ} in the case where the pole $k=\frac{1}{2}+i\lambda$ corresponds to an eigenvalue $-(\frac{1}{4}+\lambda^2 )$ from the principal series. 
 
In the complementary series case, the poles/residue are more complicated. Especially, it makes a difference if $-\rho_0^2$ is an eigenvalue or not.  
\end{remark}  


\chapter{Preliminaries}\label{chap: prelim}

In this chapter we collect some facts about the geometry of semisimple Lie groups $G$ and locally symmetric spaces. In Section \ref{chap: casimir} we compute a formula for the Casimir operator $\Omega$ which becomes useful if we apply $\Omega$ to functions in $C(A\backslash G/K)$ for a fixed Iwasawa decomposition $G=ANK$. In Section \ref{chap: geometry} we discuss the geometry and dynamics of locally symmetric spaces $\Gamma\backslash G/K$. In Section \ref{chap: hyper} we discuss a model for real hyperbolic spaces $G/K$ which makes it possible to do some concrete computations in Section \ref{sec: spher traf} where we compute the spherical transform $\mathcal S(f)$ of a certain bi-$K$-invariant function $f$.  

\section{Some computations on the Casimir element}\label{chap: casimir}
In \cite[p.40]{Z} a decomposition of the Casimir operator in $SL_2(\mathbb{R})$ is given, which we want to generalize to arbitrary semisimple Lie algebras. The result is formula (\ref{eq: casfin}), see also the example at the end of the section.

In what follows let $G$ be a connected, noncompact and semisimple Lie group with finite center and $\mathfrak{g}$ be its Lie algebra with fixed \index{Cartan decomposition} \index{$\mathfrak{k}$, Lie algebra of $K$} \index{$\mathfrak p$, part of Cartan dec. of $\mathfrak g$} Cartan decomposition \[\mathfrak{g}=\mathfrak{k}\oplus\mathfrak p\] and \index{$t$@$\theta$, Cartan involution} \index{Cartan involution} Cartan involution $\theta$. Let \index{$K$, max. compact in $G$} $K$ be the analytic subgroup of $G$ corresponding to $\mathfrak{k}$. Then $K$ is compact. Let \index{$\mathfrak{a}$, Lie algebra of $A$} $\mathfrak{a}\subset\mathfrak p$ be a maximal abelian subspace and \index{$A$, abelian subgroup of $G$} $A=\exp\mathfrak{a}$. 

Further, let \index{$\mathfrak{m}$} $ \mathfrak{m}$ and \index{$M$, centralizer of $A$ in $K$} $M$ be the centralizers of $\mathfrak{a}$ in $\mathfrak{k}$ resp. $K$. Then we have the \index{Iwasawa decomposition} \textit{Iwasawa decomposition} \[\mathfrak{g}= \mathfrak{n}\oplus \mathfrak{a}\oplus\mathfrak{k}\] which gives the decompositions \[G=NAK=ANK=KAN,\] where \index{$\mathfrak{n}$, Lie algebra of $N$} $\mathfrak{n}$ is a nilpotent Lie subalgebra and \index{$N$, nilpotent subgroup of $G$} $N=\exp\mathfrak{n}$. By slight abuse of notation, we will call all these decompositions \textit{Iwasawa decomposition of $G$}. If we fix $K$, the Iwasawa decomposition is unique up to conjugation in $K$, that is, if \[G=KAN=KA_1N_1,\] then there is an element $k\in K$ such that \begin{equation}\label{eq: iwauniq}A_1=kAk^{-1} \mbox{ and  } N_1=kNk^{-1},\end{equation}
also \index{$ \mathfrak{a}^k $, $\mathfrak a$ conjugated by $k\in K$} $\mathfrak a^k:=\mathrm{Ad}(k)\mathfrak{a}=\mathfrak{a}_1$ and \index{$ \mathfrak{n}^k $, $\mathfrak n$ conjugated by $k\in K$} $\mathfrak n^k:=\mathrm{Ad}(k)\mathfrak{n}=\mathfrak{n}_1$, 
\cite[(2.2.12)]{GV}.

Let \index{Killing form} \index{$B(\cdot,\cdot)$, Killing form} $B(\cdot,\cdot)$ be the Killing form of $\mathfrak g$, then \index{$B_\theta(\cdot,\cdot)$, inner product on $\mathfrak g$ induced by $B$ and $\theta$} \[B_\theta(\cdot,\cdot):=-B(\cdot,\theta\cdot)\] defines an inner product on $\mathfrak g$. We also have the \index{root space decomposition} root space decomposition

\begin{equation*} \mathfrak{g}=\mathfrak{g}_0\oplus\underset{\alpha\in \Delta(\mathfrak{g},\mathfrak{a})}\bigoplus \mathfrak{g}_\alpha,\end{equation*} 
where \[ \mathfrak{g}_\alpha=\{X\in \mathfrak{g}:\mbox{ad}(H)X=\alpha(H)X \mbox{ for all } H\in\mathfrak{a}\}\] for $\alpha\neq 0$ is called the root space of $\mathfrak{g}$ with respect to $\alpha$ and \index{$Delta$@$\Delta(\mathfrak{g},\mathfrak{a})$, restricted root system of $\mathfrak g$} \[\Delta(\mathfrak{g},\mathfrak{a})=\{\alpha\in\mathfrak{a}^*-\{0\}:\mathfrak{g}_\alpha\neq\{0\}\}.\]

We temporarily assume that the rank $n=\mathrm{dim}_\mathbb{R}\mathfrak{a}$ of $\mathfrak{g}$ is arbitrary. We need some lemmata:
\begin{lemma}\label{lem: theta}
For each $\alpha\in\Delta(\mathfrak{g},\mathfrak{a})\cup\{0\}$ we have $\theta \mathfrak{g}_\alpha=\mathfrak{g}_{-\alpha}$. 
\end{lemma}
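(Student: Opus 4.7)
The plan is to exploit two facts about $\theta$: it is a Lie algebra automorphism of $\mathfrak{g}$, and it acts as $-\mathrm{id}$ on $\mathfrak{p}$, hence in particular on $\mathfrak{a}\subset\mathfrak{p}$. With these two ingredients the statement reduces to a short eigenvalue computation on the adjoint action of $\mathfrak{a}$.

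Concretely, I would fix $\alpha\in\Delta(\mathfrak{g},\mathfrak{a})\cup\{0\}$ and $X\in\mathfrak{g}_\alpha$, and for $H\in\mathfrak{a}$ compute
\[
[H,\theta X] \;=\; -[\theta H,\theta X] \;=\; -\theta[H,X] \;=\; -\alpha(H)\,\theta X \;=\; (-\alpha)(H)\,\theta X,
\]
where the first equality uses $\theta H=-H$ (since $H\in\mathfrak{a}\subset\mathfrak{p}$), the second uses that $\theta$ preserves brackets, and the third uses $X\in\mathfrak{g}_\alpha$. This shows $\theta X\in\mathfrak{g}_{-\alpha}$ and hence $\theta\mathfrak{g}_\alpha\subseteq\mathfrak{g}_{-\alpha}$.

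To finish, I would observe that $\theta^2=\mathrm{id}$. Applying the inclusion just obtained with $\alpha$ replaced by $-\alpha$ gives $\theta\mathfrak{g}_{-\alpha}\subseteq\mathfrak{g}_\alpha$, and then applying $\theta$ once more yields $\mathfrak{g}_{-\alpha}\subseteq\theta\mathfrak{g}_\alpha$, so equality holds. The case $\alpha=0$ is covered by the same computation (both sides then equal $\mathfrak{g}_0=\mathfrak{m}\oplus\mathfrak{a}$), so no separate argument is needed. There is no real obstacle here; the only thing to be careful about is getting the sign right in $\theta H=-H$ for $H\in\mathfrak{a}$, which is the entire substance of the lemma.
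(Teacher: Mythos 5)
Your proof is correct and is the standard argument; the paper itself gives no proof but simply cites \cite[12.3.2]{HN}, and the computation you present (using that $\theta$ is a Lie algebra automorphism with $\theta|_{\mathfrak{a}}=-\mathrm{id}$, then bootstrapping from $\theta\mathfrak{g}_\alpha\subseteq\mathfrak{g}_{-\alpha}$ to equality via $\theta^2=\mathrm{id}$) is precisely the proof one finds in that reference.
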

\begin{proof}
\cite[12.3.2]{HN} 
\end{proof}
\begin{lemma}\label{lem: orth}
If $\alpha,\beta\in\mathfrak{a}^*$ with $\alpha+\beta\neq 0$, then $B(\mathfrak{g}_\alpha,\mathfrak{g}_{\beta})=0$. 
\end{lemma}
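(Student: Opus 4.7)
The plan is to exploit the fundamental invariance of the Killing form under the adjoint action, namely $B([H,X],Y) = -B(X,[H,Y])$ for all $H,X,Y \in \mathfrak{g}$. This identity follows immediately from the associativity $B([U,V],W) = B(U,[V,W])$, which is a routine consequence of the trace definition of $B$ together with the Jacobi identity, so I would cite it rather than reprove it.

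Given this, the argument reduces to a short eigenvalue calculation on $\mathfrak{a}$. For arbitrary $X \in \mathfrak{g}_\alpha$, $Y \in \mathfrak{g}_\beta$, and $H \in \mathfrak{a}$, the definition of the root spaces gives $\mathrm{ad}(H)X = \alpha(H)X$ and $\mathrm{ad}(H)Y = \beta(H)Y$. Substituting into the invariance relation yields
\[
\alpha(H)\, B(X,Y) \;=\; B(\mathrm{ad}(H)X,\,Y) \;=\; -B(X,\,\mathrm{ad}(H)Y) \;=\; -\beta(H)\, B(X,Y),
\]
so $(\alpha+\beta)(H)\cdot B(X,Y) = 0$ for every $H \in \mathfrak{a}$.

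To conclude, the hypothesis $\alpha + \beta \neq 0$ in $\mathfrak{a}^*$ provides some $H_0 \in \mathfrak{a}$ with $(\alpha+\beta)(H_0) \neq 0$, so we can cancel that scalar to obtain $B(X,Y) = 0$. Since $X$ and $Y$ were arbitrary elements of $\mathfrak{g}_\alpha$ and $\mathfrak{g}_\beta$, this gives $B(\mathfrak{g}_\alpha, \mathfrak{g}_\beta) = 0$, which is the claim. There is no genuine obstacle here: this is the standard orthogonality of root spaces under the Killing form, and the whole argument rests on a single application of the invariance identity.
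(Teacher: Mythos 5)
Your proof is correct and is the standard orthogonality argument for root spaces. The paper does not spell out a proof for this lemma — it simply cites \cite[12.3.4]{HN} — and the argument you give (invariance of the Killing form plus the eigenvector property of root vectors under $\mathrm{ad}(H)$) is precisely what that reference contains, so you have reconstructed the intended proof.
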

\begin{proof}
\cite[12.3.4]{HN} 
\end{proof}

Let \index{$X_i$, part of ONB in $\mathfrak n$}$X_1,\ldots, X_m$ be any basis of orthonormal elements of $\mathfrak{n}$ with respect to $B_\theta(\cdot,\cdot)$ and set \index{$Z_i$, $Z_i=-\theta X_i$} \[Z_i=-\theta X_i\] for $i=1,\ldots,m$. Then \[B(X_i,Z_j)=-B(X_i,\theta X_j)=\delta_{ij}.\] Further, let \index{$H_i$, part of ONB in $\mathfrak a$} $H_1,\ldots,H_n$ and \index{$M_i$, part of ONB in $\mathfrak m$} $M_1,\ldots,M_k$ be any orthonormal bases of $\mathfrak{a}$ resp. $\mathfrak m$ with respect to $B_\theta(\cdot,\cdot)$. \footnote{Note that $B(\cdot,\theta\cdot)=-B(\cdot,\cdot)$ on $\mathfrak{a}$ while $B(\cdot,\theta\cdot)=B(\cdot,\cdot)$ on $\mathfrak m$.}

Then \[H_1,\ldots,H_n, M_1,\ldots,M_k, X_1,\ldots,X_m,Z_1,\ldots,Z_m\] is a basis of $\mathfrak{g}$. We denote the dual basis with respect to $B_\theta(\cdot,\cdot)$ with \index{$H^1,\ldots,H^n$, dual basis to $H_1,\ldots,H_n$} \index{$M^1,\ldots,M^k$, dual basis to $M_1,\ldots,M_k$} \index{$X^1,\ldots,X^m$, dual basis to $X_1,\ldots,X_m$} \index{$Z^1,\ldots,Z^m$, dual basis to $Z_1,\ldots,Z_m$} \[H^1,\ldots,H^n,M^1,\ldots,M^k, X^1,\ldots,X^m,Z^1,\ldots,Z^m.\] 

Since 
\[\mathfrak{g}_0=\mathfrak{a}+\mathfrak m,\] while $\mathfrak m\subset\mathfrak{k}$ and $\mathfrak{a}\subset\mathfrak p$, we see by Lemma \ref{lem: orth} that \[H^i=H_i\] and \[M^i=-M_i.\]

Furthermore, by Lemma \ref{lem: orth}, \[X^i=Z_i\] and \[Z^i=X_i.\] 

There is an object of special interest, the Casimir operator \index{$Omega$@$\Omega$, Casimir operator} \index{Casimir operator} $\Omega$, which is an element of\index{$U(\mathfrak{g})$, universal enveloping of $\mathfrak g$} $U(\mathfrak g)$, \index{universal enveloping algebra} the universal enveloping algebra of $\mathfrak g$. More precisely, $\Omega$ lies in \index{$Z(\mathfrak g)$, center of $U(\mathfrak g)$} $Z(\mathfrak g)$, that means it commutes with every element in $U(\mathfrak g)$. If $Q_j$ is any basis of $\mathfrak g$ and $Q^j$ the dual basis with $B_\theta(Q_i,Q^j)=\delta_{ij}$, then $\Omega$ is defined by \begin{equation}\label{def: casimir}\Omega:=\sum_{j}Q_jQ^j.\end{equation} 

This is independent of the choice of $Q_j$, \cite[(2.6.58)]{GV}. Consequently, we can write the Casimir operator as 
\begin{equation*}\Omega=\sum_{i=1}^{n}H_i^2-\sum_{i=1}^{k}M_i^2+\sum_{i=1}^m(X_iZ_i+Z_iX_i).\end{equation*}
We work on the last sum:
\begin{eqnarray*}
\sum_{i=1}^m(X_iZ_i+Z_iX_i) &=& -\sum_{i=1}^m(X_i\theta X_i+ \theta X_iX_i+X_i^2-X_i^2)\\ &=&\sum_{i=1}^mX_i(X_i-\theta X_i)-\sum_{i=1}^m(X_i+\theta X_i)X_i\\ &=& \sum_{i=1}^mX_i(2X_i-(X_i+\theta X_i))-\sum_{i=1}^m(X_i+\theta X_i)X_i =(*). 
\end{eqnarray*}

Set \index{$W_i$, $W_i=X_i+\theta X_i$} \[W_i:=X_i+\theta X_i,\] then \[W_i\in\mathfrak{k}=\mathfrak m\oplus \mathfrak{m}^{\bot_\mathfrak{k}}=\mathfrak m\oplus (1+\theta)\mathfrak{n}.\] 

Thus,
\begin{equation*}(*)=2\sum_{i=1}^mX_i^2-\sum_{i=1}^m(X_iW_i+W_iX_i).\end{equation*}
Now,
\begin{equation*} [W_i,X_i]=[X_i+\theta X_i,X_i]=[\theta X_i,X_i]\in \mathfrak{a}.\end{equation*} 

More precisely, a direct computation shows that \[[\theta X_i,X_i]=H_\alpha,\] if $X_i\in\mathfrak{g}_\alpha$, see \cite[(4.2.1)]{GV}, where \index{$H_\alpha$, defining vector for root $\alpha$ in $\mathfrak a^+$} $H_\alpha\in\mathfrak{a}$ is defined by \[\alpha(H)=B(H_\alpha,H)\] for all $H\in\mathfrak{a}$. Therefore,
\begin{equation*} (*)=2\sum_{i=1}^mX_i^2-2\sum_{i=1}^mX_iW_i-2H_\rho,\end{equation*}  and 
\begin{equation}\label{eq: casfin} \Omega=\sum_{i=1}^{n}H_i^2-\sum_{i=1}^{k}M_i^2+2\sum_{i=1}^mX_i^2-2
\sum_{i=1}^mX_iW_i-2H_\rho,
\end{equation} if \index{$H_\rho$, defining vector for $\rho$ in $\mathfrak a^+$} \index{$rho$@$\rho$, half-sum of positive roots times dimension} \[2\rho=\sum_{\alpha\in\Delta(\mathfrak{g},\mathfrak{a})^+}\dim(\mathfrak{g}_\alpha)\alpha.\] 

This decomposition (\ref{eq: casfin}) is the central result of this section, as it will suffices for our purpose, see Chapter \ref{chap: ode}. 
\begin{remark}
The homogeneous space $X=G/K$ is a symmetric space, in particular, it is a Riemannian space $(X,g)$ with metric $g$. Using this metric one can define a special differential operator \index{$Deltaa$@$\Delta$, Laplacian on $G/K$} $\Delta$ called \index{Laplace operator} \textit{Laplace}- or \textit{Laplace-Beltrami}-operator, see \cite[Ch. II \S 2.4]{GGA}. The Laplace operator equals $-\Omega$ on  $C^\infty(X)$, see for example \cite[p.97]{Sch}. This means, if we let elements $X$ in $\mathfrak g$ act on functions $f\in C^\infty(G/K)$ by \[ X\cdot f(g):=\frac{d}{dt}|_{t=0}f(g\exp tX)\] as left invariant differential operators and extend this definition to $U(\mathfrak g)$, then \[\Delta f=-\Omega f.\]

That is, we have shown in this section, that
 \[\Delta=-\sum_{i=1}^{n}H_i^2+2H_\rho-2\sum_{i=1}^mX_i^2,\]
see also \cite[(4.4.2)]{AJ},
\end{remark}

Now we apply the computation to the case of $SL_2(\mathbb R)$. 
\begin{example}\label{ex: casimir}
Let $\mathfrak{g}=\mathfrak{sl}_2(\mathbb{R})$ with the standard basis $H=\left(\begin{array}{cc} 1&  0\\ 0&-1 \end{array}\right)$, $V=\left(\begin{array}{cc} 0&  1\\ 1& 0 \end{array}\right)$, $W=\left(\begin{array}{cc} 0&  1\\ -1&0 \end{array}\right)$. Then the Cartan-Killing form $B$ is defined by the matrix \[8\left(\begin{array}{ccc}1 & 0 & 0\\0 & 1 &0\\ 0& 0 & -1 \end{array}\right),\] in particular \[B(H,H)=8=-B(W,W).\] Let $X=\frac{1}{2}V+\frac{1}{2}W =\left(\begin{array}{cc} 0&  1\\ 0&0 \end{array}\right)$. We want to express $\Omega$ in terms of $H$, $X$ and $W$. We see that $Z=-2\theta X$ satisfies \[B(X,Z)=8\] since $\theta V=-V$ and $\theta W=W$. Thus,\begin{equation*} 8\Omega=H^2+4X^2-4XW-2H
\end{equation*} is (eight times) the operator from above, see also \cite[p.40]{Z}.
\end{example}
\begin{remark}
This example shows a difference between \cite{AZ} and our work. For the formula (\ref{eq: casfin}) we have worked with an orthonormal basis of $\mathfrak n$ with respect to $B_\theta$, while in \cite{AZ} the vector $X$ from Example \ref{ex: casimir} is used to identify $N$ with $\mathbb R$, $\exp tX\to t$. But $B_\theta(X,X)=4$ not 1. The difference comes from the fact that in \cite{AZ} the invariant form $\tilde B(X,Y)=2\cdot\mathrm{tr}(XY)$ is implicitly used. This is off by the factor $2$ from our definition $B(X,Y)=4\cdot\mathrm{tr}(XY)$.    
\end{remark}

\section{Geometry and dynamics of (locally) symmetric spaces}\label{chap: geometry}
We summarize some notions and facts on the geometry of the (locally) symmetric spaces $X$ and $X_\Gamma$. We will mainly cite from the articles \cite{Ga} and \cite{Ga1}.

Let $G$ be a semisimple, noncompact connected Lie group of real rank one with finite center, maximal compact subgroup $K$ and Lie algebra $\mathfrak g$. We fix a Cartan decomposition $\mathfrak g=\mathfrak{k}\oplus \mathfrak{p}$. Let $X=G/K$ be the associated (Riemannian) symmetric space and $\Gamma\subset G$ a discrete co-compact and torsionfree subgroup. For short we call $\Gamma$ a \index{uniform lattice} \textit{uniform lattice} in $G$. Then $\Gamma$ acts freely (by isometries) on $X$ and the locally symmetric space \index{$X_\Gamma$} $X_\Gamma:=\Gamma\backslash X$ is a compact (Riemannian) manifold with simply connected covering $X$, in particular the fundamental group $\pi_1(X)$ is isomorphic to $\Gamma$. Further for any $Z\in\mathfrak g$ we put \index{$|Z|$, norm on $\mathfrak g$ induced by $B_\theta$} \[|Z|:=-B( Z,\theta Z),\] 
where $B$ denotes the Killing form on $\mathfrak g$ and $\theta$ the Cartan involution. We fix an Iwasawa decomposition $G=ANK$ and set $M$ resp. $M'$ to be the centralizer resp. normalizer of $A$ in $K$. Then $M$ is normal in $M'$ and the quotient group $M'/M$ is called the \index{Weyl group} \textit{Weyl group} \index{$W$, Weyl group} $W$. Since the rank of $G$ is one, $W$ is isomorphic to the group with two elements and we denote the nontrivial element in $W$ by \index{$w$, nontrivial weyl element} $w$. On $G$ we fix a Haar measure $dg$ such that $$\int_G f(g)dg=\int_{ANK}f(ank)dkdnda$$ for integrable function $f$ on $G$, where $da$ and $dn$ are defined by the euclidean structure on $A$ and $N$ coming from the inner product $B_\theta(.,.)$. The Haar measure $dk$ is assumed to give $K$ unit mass. Then we fix also a $G$-invariant measure $dx$ on $\Gamma\backslash G$ such that \[\int_Gf(g)dg=\int_{\Gamma\backslash G}\sum_{\gamma\in\Gamma}f(\gamma x)dx\] for all $f\in C_c(G)$. 

We call $x\in G$ \index{elliptic} \textit{elliptic}, if it is conjugated to some element in $K$, which implies in particular that every elliptic element is semisimple, i.e. $\mbox{Ad}(x)\in \mbox{End}(\mathfrak{g})$ is semisimple. If $x\in G$ is semisimple but not elliptic, we call it \index{hyperbolic} \textit{hyperbolic}.\footnote{In all other cases, we call $x$ parabolic. Since $\Gamma\backslash G$ is compact, $\Gamma$ does not contain any parabolic elements.} It is known that $\gamma\in \Gamma$ is elliptic iff it is of finite order, which is again equivalent to  $\gamma$ having a fixedpoint in $X$. Since $\Gamma$ is torsionfree, every nontrivial element $\gamma$ in $\Gamma$ is hyperbolic. Finally, we note:
\begin{proposition}\label{chap: geom gcon}
Every $\gamma\in \Gamma-\{e\}$ is conjugated to some element $a_{\mathfrak k} a_{\mathfrak{p}}$ in the Cartan subgroup \index{$a_{\mathfrak k}$, element of $A_{\mathfrak k}$} \index{$a_{\mathfrak p}$, element in $A$} $A_{\mathfrak k} A$. Here $A_{\mathfrak k}$ is a subgroup contained in $K$.  Even more, by possibly conjugating $a_\mathfrak k a_{\mathfrak p}$ with the nontrivial Weyl group element $w$ one can assume that $\gamma\in \Gamma-\{e\}$ is conjugated to some \index{$deltagamma$@$\delta_\gamma$, $\delta_\gamma=m_\gamma a_\gamma\in MA^+$} $\delta_\gamma=m_\gamma a_\gamma\in MA^+$. 
\end{proposition}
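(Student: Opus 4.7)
The plan is to first reduce to a statement about hyperbolic elements, and then to exploit the rank-one geometry of $X=G/K$ to put $\gamma$ into the standard Cartan subgroup $MA$.

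First I would check that every $\gamma\in\Gamma-\{e\}$ is in fact hyperbolic. Torsion-freeness of $\Gamma$ rules out elliptic elements (since an elliptic $\gamma$ is conjugate into $K$ and so has finite order), while cocompactness rules out parabolic elements (which is the content of the footnote in the proposition). Hence every nontrivial $\gamma\in\Gamma$ is semisimple but not conjugate into $K$.

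Next I would invoke the rank-one structure. A hyperbolic $\gamma$ acts on $X$ as translation along a unique invariant geodesic $\sigma_\gamma$. Since $G$ acts transitively on unit tangent vectors of $X$, there is an $h\in G$ sending $\sigma_\gamma$ to the standard geodesic $t\mapsto\exp(tH_0)K$ with $H_0\in\mathfrak{a}$. Replacing $\gamma$ by $h\gamma h^{-1}$, we may assume $\gamma$ translates along this standard geodesic, so $\gamma\cdot eK=\exp(t_0H_0)K$ for some $t_0\in\mathbb R$. Writing $\gamma=\exp(t_0H_0)\cdot k_0$ with $k_0\in K$, the fact that $\gamma$ commutes with the one-parameter group $\{\exp(tH_0)\}\subset A$ forces $k_0$ to centralize $A$, i.e.\ $k_0\in Z_K(A)=M$. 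Thus $\gamma$ is already of the form $m_\gamma a_\gamma\in MA$ with $a_\gamma=\exp(t_0H_0)$, which is the stated decomposition with $A_{\mathfrak k}\subset M$.

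Finally, to achieve $a_\gamma\in A^+$, I would use the Weyl group. In real rank one $W=M'/M$ has order two with nontrivial element $w$, and $w$ acts on $A$ by $a\mapsto a^{-1}$. If $a_\gamma\notin A^+$, then $a_\gamma^{-1}\in A^+$, and conjugating $\gamma$ by a representative $\tilde w\in M'$ yields $\tilde w\gamma\tilde w^{-1}=(\tilde w m_\gamma\tilde w^{-1})\cdot a_\gamma^{-1}$; normality of $M$ in $M'$ ensures $\tilde w m_\gamma\tilde w^{-1}\in M$, so the form $\delta_\gamma=m'_\gamma a'_\gamma\in MA^+$ is preserved.

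The main obstacle I expect is the clean identification of the compact factor $A_{\mathfrak k}$ of an arbitrary non-compact Cartan subgroup with a subgroup of $M$; the geometric argument via the invariant geodesic circumvents this by producing the desired conjugating element directly, rather than going through the classification of Cartan subgroups. The Weyl-group step is then routine once one knows that $w$ acts on $A$ by inversion, which is a standard consequence of the rank-one restricted root system.
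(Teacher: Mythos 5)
Your proof is correct, and it is genuinely different from the paper's, which does not give an argument at all: it simply cites \cite[Cor.~11.5]{Wi} (Wallach). Your geometric argument — classify $\gamma$ as hyperbolic, conjugate its translation axis onto the standard geodesic via the transitive action of $G$ on $SX$, and then read off $\gamma\in MA$ — is a self-contained proof of the same result, and it also correctly handles the Weyl-group step using that $w$ acts on $A$ by inversion and that $M\trianglelefteq M'$.

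One small phrasing issue worth flagging: you justify $k_0\in M$ by saying ``the fact that $\gamma$ commutes with the one-parameter group $\{\exp(tH_0)\}$ forces $k_0$ to centralize $A$.'' As stated this is circular, because $\gamma=\exp(t_0H_0)k_0$ commutes with $A$ \emph{if and only if} $k_0\in M$, which is what you want to prove. What you actually have, prior to knowing $k_0\in M$, is the weaker fact that $\gamma$ maps the geodesic $\sigma(t)=\exp(tH_0)K$ to itself with $\gamma\cdot\sigma(t)=\sigma(t+t_0)$; this gives $\exp(-tH_0)\,k_0\exp(tH_0)\in K$ for all $t$, and differentiating at $t=0$ and using $\mathfrak p\cap\mathfrak k=\{0\}$ yields $\mathrm{Ad}(k_0)H_0=H_0$, hence $k_0\in M$. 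Equivalently, one can invoke directly that the stabilizer in $G$ of the oriented standard geodesic is $MA$, which follows from $SX\cong G/M$. Either way the content of your argument is right; only the stated justification should be tightened.
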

\begin{proof}
See \cite[Cor.11.5]{Wi}.
\end{proof}
Note that $a_\gamma$ is uniquely determined in $A^+$ by $\gamma$ while $m_\gamma$ is determined up to conjugacy in $M$, see \cite[Lem.6.6]{Wal}. In particular, $a_\gamma$ and $a_\mathfrak p$ generate the same (cyclic) subgroup of $A$  and $|\log a_\gamma|=|\log a_\mathfrak p|$, see the proof of \cite[Prop.11.4]{Wi}. 
By \index{$\Gamma_\gamma$, centralizer of $\gamma$ in $G$} $G_\gamma$ and $\Gamma_\gamma=\Gamma_\gamma\cap \Gamma$ we denote the centralizer of $\gamma$ in $G$ and $\Gamma$. 
\begin{proposition}\label{chap: geom icg}
$\Gamma_\gamma$ is co-compact in $G_\gamma$ which is reductive and unimodular. Furthermore, the centralizer $\Gamma_\gamma$ of any $\gamma$ in $\Gamma-\{e\}$ is always an infinite cyclic group. 
\end{proposition}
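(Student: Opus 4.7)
The plan is to treat the three assertions---reductivity and unimodularity of $G_\gamma$, cocompactness of $\Gamma_\gamma$ in $G_\gamma$, and infinite cyclicity of $\Gamma_\gamma$---in that order. Since conjugation by a fixed element of $G$ replaces $G_\gamma$ and $\Gamma_\gamma$ by isomorphic subgroups, and preserves all properties in question, Proposition \ref{chap: geom gcon} allows me to assume $\gamma = m_\gamma a_\gamma \in MA^+$.

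For the first claim I would identify $G_\gamma$ explicitly, using the rank-one hypothesis. Because $A\cong\mathbb{R}$, every non-identity element of $A$ is regular, so $Z_G(a_\gamma) = Z_G(A) = MA$. The element $m_\gamma a_\gamma$ is semisimple with commuting elliptic part $m_\gamma$ and hyperbolic part $a_\gamma$, so anything centralizing the product centralizes both factors; intersecting $Z_G(m_\gamma)$ with $MA$ and using that $A$ commutes with all of $M = Z_K(A)$ yields the internal direct product
\[
G_\gamma \;=\; Z_M(m_\gamma)\cdot A.
\]
This is a compact group times $A\cong\mathbb{R}$, hence reductive and unimodular.

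The second claim is the general principle that, for a semisimple element $\gamma$ of a cocompact lattice $\Gamma\subset G$, the centralizer $\Gamma_\gamma$ is cocompact in $G_\gamma$. The key ingredient is that semisimplicity of $\gamma$ (which holds here because $\gamma$ is hyperbolic) makes the $G$-conjugacy class $\{g\gamma g^{-1}:g\in G\}$ closed in $G$, so that $G/G_\gamma$ embeds as a closed submanifold of $G$; combined with compactness of $\Gamma\backslash G$ this forces $\Gamma_\gamma\backslash G_\gamma$ to be compact. Rather than rederive this, I would cite the precise statement from \cite{Ga} or \cite{Wal}; this is the only place where a genuinely non-trivial Lie-theoretic input beyond the explicit description of $G_\gamma$ is needed, and is therefore the main obstacle.

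Granted these structural facts, the cyclicity statement is essentially linear algebra on $A\cong\mathbb{R}$. Consider the projection $\pi\colon G_\gamma = Z_M(m_\gamma)\cdot A\to A$. Its kernel $Z_M(m_\gamma)$ is compact, so discreteness of $\Gamma_\gamma$ together with the compactness of $\ker\pi$ forces $\pi(\Gamma_\gamma)$ to be discrete in $\mathbb{R}$, while cocompactness of $\Gamma_\gamma$ in $G_\gamma$ forces $\pi(\Gamma_\gamma)$ to be cocompact in $\mathbb{R}$; hence $\pi(\Gamma_\gamma)$ is infinite cyclic. Finally, $\ker(\pi|_{\Gamma_\gamma}) = \Gamma_\gamma\cap Z_M(m_\gamma)$ is a discrete subgroup of a compact group and so finite, and torsion-freeness of $\Gamma$ shows it is trivial. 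Therefore $\pi$ identifies $\Gamma_\gamma$ with $\mathbb{Z}$, completing the proof.
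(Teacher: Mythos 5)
The paper's own proof is pure citation: it refers both assertions to Gangolli's article \cite{Ga1} (p.~407 for $G_\gamma$, Lemma~4.1 for cyclicity), so there is no internal argument to compare step by step. Your proposal is correct but genuinely more self-contained. You compute $G_\gamma = Z_M(m_\gamma)\cdot A$ directly, using the commuting elliptic--hyperbolic factorization of $\gamma$ and the rank-one fact that $a_\gamma\in A\setminus\{e\}$ is already regular, so $Z_G(a_\gamma)=MA$; reductivity and unimodularity then fall out of this explicit description. Having cited the cocompactness of $\Gamma_\gamma$ in $G_\gamma$ (the one genuinely nontrivial external input, resting on closedness of the semisimple conjugacy class), you derive cyclicity by projecting to the $A$-factor: compactness of $Z_M(m_\gamma)$ forces the image of the discrete group $\Gamma_\gamma$ to be discrete and, by cocompactness, a lattice in $\mathbb{R}\cong A$, hence infinite cyclic, while torsion-freeness of $\Gamma$ annihilates the kernel. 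This buys a cleaner structural picture---indeed the identification $G_{\delta_\gamma}=M_{m_\gamma}A$ is precisely what the paper invokes later in the proof of Proposition~\ref{chap: geoprop l}, citing \cite[p.185]{Wi}---at the cost of still having to cite the cocompactness result, which you correctly flag as the only irreducible appeal to the literature.
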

\begin{proof}
See \cite[p. 407]{Ga1} for the claim on $G_\gamma$ and \cite[Lem. 4.1.]{Ga1} for the claim on $\Gamma_\gamma$.
\end{proof}

Then we call $\gamma$ \index{primitive} \textit{primitive} if $\Gamma_\gamma$ is generated by $\gamma$. Since $\Gamma$ is torsionfree, every $\gamma\in \Gamma-\{e\}$ is the unique positive power of a primitive element \index{$gamma0$@$\gamma_0$, primitive element to $\gamma$} $\gamma_0$.

A \index{geodesic loop} \textit{geodesic loop} in $X_\Gamma$ is the image of closed path in $X$ under the orbit map $X\to X_\Gamma$. We call it a \index{closed geodesic} \textit{closed geodesic} if it is a geodesic when viewed as a subset of $X$. It is well-known that there is a one-to-one correspondence between closed geodesics and nontrivial conjugacy classes in $\Gamma$, \cite[p. 404]{Ga1}. We call a closed geodesic a \index{prime geodesic} \textit{prime geodesic}, if the corresponding conjugacy class contains a primitive element. By \index{$C\Gamma$,  conjugacy classes in $\Gamma$} $C\Gamma$ we denote the set of conjugacy classes in $\Gamma$. For $[\gamma]\in C\Gamma$ let $l_\gamma:=|\log a_\gamma|$, then \index{$l_\gamma$, $l_\gamma=|\log a_\gamma|$} 
\begin{equation}\label{eq: legamma} l_\gamma =\inf_{x\in G}|x^{-1}\gamma x|,
\end{equation} 
see \cite[p. 413]{Ga1}. Here for $g\in G$ \index{$|g|$, $|g|=|X|$ induced by$|.|$ on $\mathfrak g$}, $|g|:=|X|$, if $g=k\exp X$ with $k\in K$ and $X\in \mathfrak{p}$. We fix a Haar measure \index{$dx_\gamma$, Haar measure on $G_\gamma$} $dx_\gamma$ on $G_\gamma$ analogous to the Haar measure on $G$, following the Iwasawa decomposition of $G_\gamma=A_\gamma K_\gamma N_\gamma$ such that $K_\gamma$ has unit measure. We also have a Haar measure \index{$d\overset{\cdot}x_\gamma$, Haar measure on $\Gamma_\gamma \backslash G_\gamma$} $d\overset{\cdot}x_\gamma$ on the quotient $\Gamma_\gamma \backslash G_\gamma$. and we set \index{$|\Gamma_\gamma\backslash G_\gamma|$, volue induced by $d\overset{\cdot}x_\gamma$} $|\Gamma_\gamma\backslash G_\gamma|:=\int_{\Gamma_\gamma\backslash G_\gamma}d\overset{\cdot}x_\gamma$. We can make this more precise, if we consider the centralizer of $\alpha_\gamma \gamma \alpha_\gamma^{-1}=\delta_\gamma$. Then $G_{\delta_\gamma}=A(G_{\delta_\gamma}\cap K)=A(G_{\delta_\gamma}\cap M)$, see \cite[p. 414]{Ga1}. As before we fix Haar measures $dx_{\delta_\gamma}=dadk_{\delta_\gamma}$ on $G_{\delta}$ following the Iwasawa decomposition of $G_{\delta_\gamma}$ such that $K_{\delta_\gamma}:=(K\cap G_{\delta_\gamma})=(M\cap G_{\delta_\gamma})=:M_{\delta_\gamma}$ has unit measure with respect to $dk_{\delta_\gamma}$. We set $|\Gamma_\gamma\backslash G_\gamma|:=\int_{\Gamma_\gamma\backslash G_\gamma}d\overset{\cdot}x_\gamma$ etc.. The number $l_\gamma$ can now be related to $\Gamma_\gamma\backslash G_\gamma$.
\begin{proposition}\label{chap: geoprop l}
For $\gamma\in\Gamma$ let \index{$alphagamma$@$\alpha_\gamma$, element in $G$ conjugating $\gamma$ into $\delta_\gamma$} $\delta_\gamma=\alpha_\gamma \gamma \alpha_\gamma^{-1}\in MA^+$ and $H_\gamma:=\alpha_\gamma \Gamma_\gamma \alpha^{-1}_\gamma$\index{$H_\gamma$, centralizer of $\delta_\gamma$ in $G$}. Furthermore, let $K_{\delta_\gamma}$ and $G_{\delta_\gamma}$ be the centralizer of $\delta_\gamma$ in $K$ resp. $G$. If we identify $A$ with $G_{\delta_\gamma}/M_{\delta_\gamma}$ then \index{$l_{\gamma_0}$, $l_{\gamma_0}=|\log a_{\gamma_0}|$}
\begin{equation*} 
|\Gamma_\gamma\backslash G_\gamma|=|H_\gamma\backslash G_{\delta_\gamma}|=|H_\gamma\backslash G_{\delta_\gamma}/K_{\delta_\gamma}|=| A/\langle a_{\gamma_0} \rangle|=|\log a_{\gamma_0}|=l_{\gamma_0}=l_\gamma j^{-1}.
\end{equation*} 

Here $|.|$ always denotes the volume with respect to the invariant measure on quotient space, i.e. $|\Gamma_\gamma\backslash G_\gamma|=\int_{\Gamma_\gamma\backslash G_\gamma}d\overset{\cdot}x_\gamma$, $|H_\gamma\backslash G_{\delta_\gamma}|=\int_{H_\gamma\backslash G_{\delta_\gamma}}d\overset{\cdot}x_{\delta_\gamma}$ etc.

\end{proposition}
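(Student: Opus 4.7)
The plan is to establish the chain of equalities from left to right, each one by a concrete identification of the spaces and a check that the chosen Haar measures agree under it.

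First I would verify $|\Gamma_\gamma\backslash G_\gamma|=|H_\gamma\backslash G_{\delta_\gamma}|$ by pushing forward along conjugation by $\alpha_\gamma$. Since $\delta_\gamma=\alpha_\gamma\gamma\alpha_\gamma^{-1}$, inner conjugation by $\alpha_\gamma$ is a Lie group isomorphism $G_\gamma\to G_{\delta_\gamma}$ sending $\Gamma_\gamma$ onto $H_\gamma$. Both $G_\gamma$ and $G_{\delta_\gamma}$ are reductive and unimodular by Proposition \ref{chap: geom icg}, and the Haar measures were fixed analogously via Iwasawa decompositions with unit mass on the maximal compact piece, so the isomorphism is measure preserving and the quotient volumes coincide. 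Next, because $K_{\delta_\gamma}$ has unit mass by construction, the quotient $G_{\delta_\gamma}\to G_{\delta_\gamma}/K_{\delta_\gamma}$ is measure preserving with respect to the induced measures, giving $|H_\gamma\backslash G_{\delta_\gamma}|=|H_\gamma\backslash G_{\delta_\gamma}/K_{\delta_\gamma}|$.

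The third equality is the heart of the argument. Using the decomposition $G_{\delta_\gamma}=A(G_{\delta_\gamma}\cap K)=AM_{\delta_\gamma}$ cited from \cite[p.~414]{Ga1} and the equality $K_{\delta_\gamma}=M_{\delta_\gamma}$, I would identify $G_{\delta_\gamma}/K_{\delta_\gamma}\cong A$. Under this identification I need to check that $H_\gamma$ is sent to $\langle a_{\gamma_0}\rangle$. Since $\Gamma_\gamma$ is infinite cyclic, generated by the primitive element $\gamma_0$ (Proposition \ref{chap: geom icg} together with the definition of primitive), the group $H_\gamma$ is generated by $\alpha_\gamma\gamma_0\alpha_\gamma^{-1}$. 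This element centralizes $\delta_\gamma$, hence lies in $AM_{\delta_\gamma}$, and its $A$-component is precisely $a_{\gamma_0}$ (by the defining property of the Jordan-type decomposition in Proposition \ref{chap: geom gcon}). Thus its image in $A=G_{\delta_\gamma}/M_{\delta_\gamma}$ is $a_{\gamma_0}$, which gives $|H_\gamma\backslash G_{\delta_\gamma}/K_{\delta_\gamma}|=|A/\langle a_{\gamma_0}\rangle|$.

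The remaining three equalities are direct. The exponential $\log:A\to\mathfrak{a}$ is an isometry for the chosen Haar measures (the measure on $A$ was fixed as the pushforward of Lebesgue measure on $\mathfrak{a}$ induced by $B_\theta$), and in the rank one case $\mathfrak{a}\cong\mathbb{R}$, so $|A/\langle a_{\gamma_0}\rangle|$ equals the length $|\log a_{\gamma_0}|$ of the fundamental interval $[0,\log a_{\gamma_0}]\subset\mathfrak{a}$. By the definition of $l_{\gamma_0}$ in (\ref{eq: legamma}) together with the formula $|ma|=|\log a|$ for $ma\in MA^+$, this quantity is exactly $l_{\gamma_0}$. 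Finally, since $\gamma=\gamma_0^j$ gives $a_\gamma=a_{\gamma_0}^j$ in $A^+$, we obtain $l_\gamma=j\,l_{\gamma_0}$, i.e. $l_{\gamma_0}=l_\gamma j^{-1}$.

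The main obstacle I expect is bookkeeping in the third equality: one has to be careful that the $M_{\delta_\gamma}$-component of the generator of $H_\gamma$ really is killed by the passage to $G_{\delta_\gamma}/K_{\delta_\gamma}$ and that the Haar measures on $A$ transport correctly under $G_{\delta_\gamma}/M_{\delta_\gamma}\cong A$ (this is where the normalization $K_{\delta_\gamma}=M_{\delta_\gamma}$ has unit mass is essential). The other steps are essentially formal consequences of the conventions fixed earlier in the section.
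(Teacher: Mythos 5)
Your proof is correct and takes essentially the same route as the paper: conjugate by $\alpha_\gamma$ to reduce to $H_\gamma\backslash G_{\delta_\gamma}$, use $G_{\delta_\gamma}=M_{\delta_\gamma}A$ with $K_{\delta_\gamma}=M_{\delta_\gamma}$ of unit mass to identify $G_{\delta_\gamma}/K_{\delta_\gamma}$ with $A$, show $H_\gamma$ acts through $\langle a_{\gamma_0}\rangle$, and finish with the isometry $\log:A\to\mathfrak{a}$. One small point: the assertion that the $A$-component of $\alpha_\gamma\gamma_0\alpha_\gamma^{-1}$ is ``precisely $a_{\gamma_0}$'' needs an extra word — a priori it could also be $a_{\gamma_0}^{-1}$ (the paper only pins it down up to Weyl action and uses $\langle a_{\gamma_0}\rangle=\langle a_{\gamma_0}^{-1}\rangle$); your claim is in fact correct because $\tilde a^{\,j}=a_\gamma\in A^+$ with $j>0$ forces $\tilde a\in A^+$, but that step should be spelled out.
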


\begin{proof}
See also \cite[(4.5) pp. 414]{Ga1}. Assume that $\gamma$ is conjugated to $\delta_\gamma=m_\gamma a_\gamma\in MA^+$, i.e. $\alpha_\gamma\gamma\alpha_{\gamma^{-1}}=\delta_\gamma$ for some $\alpha_\gamma\in G$. The centralizer $G_{\delta_\gamma}$ of $\delta_\gamma$ in $G$ equals now $M_{m_\gamma}A$, where $M_{m_\gamma}$ is the centralizer of $m_\gamma$ in $M$, see \cite[p.185]{Wi}. Let $\gamma=\gamma_0^j$ with $\gamma_0$ primitive and $j\in\mathbb N$. If $\gamma_0$ is conjugated to $m_{\gamma_0}a_{\gamma_0}\in MA^+$, then it follows by the uniqueness of $a_\gamma$ that $a_\gamma=a_{\gamma_0}^j$. Then  
\begin{equation*} H_\gamma=\alpha_\gamma \Gamma_\gamma \alpha^{-1}_\gamma=\alpha_\gamma\langle\gamma_0\rangle\alpha_{\gamma^{-1}}\subset G_{\delta_\gamma}=M_{m_\gamma}A\subset MA.
\end{equation*}

If $\delta'=m'a'\in H_\gamma$, where $a'\in A$ and $m'\in M$, it follows that 
\begin{equation*}
m'\in M_{m_\gamma}=G_{\delta_\gamma}\cap M=M_{\delta_\gamma}=K_{\delta_\gamma}.
\end{equation*}
 Therefore, the action of $H_\gamma$ on $G_{\delta_\gamma}/K_{\delta_\gamma}$ equals the action of $\{a':\delta'=m'a'\in H_\gamma\}$ by left translation on $A$, where we identified 
\begin{equation*} 
 G_{\delta_\gamma}/K_{\delta_\gamma}=M_{\delta_\gamma}A/M_{\delta_\gamma}
\end{equation*} 
with $A$. Now $\{a':\delta'=m'a'\in H_\gamma\}=\langle \tilde a\rangle$, where $\tilde a\in A$ such that $\alpha_\gamma \gamma_0\alpha_{\gamma^{-1}}=\tilde a\tilde m\in AM$. Since $\gamma_0$ is also conjugated to $a_{\gamma_0}m_{\gamma_0}\in A^+M$, we deduce by possibly replacing $\alpha_\gamma$ with $\alpha_\gamma w$, that $\tilde a\in\{a_{\gamma_0},a_{\gamma_0}^{-1}\}$, i.e. $\langle \tilde a\rangle=\langle a_{\gamma_0}\rangle$. If we fix a Haar measure on $G_{\delta_\gamma}=K_{\delta_\gamma}A$ as we did for $G_\gamma$ with normalized measure on $K_{\delta_\gamma}$, then we get by unimodularity 
\begin{equation*} 
|\Gamma_\gamma\backslash G_\gamma|=|H_\gamma\backslash G_{\delta_\gamma}|=|H_\gamma\backslash G_{\delta_\gamma}/K_{\delta_\gamma}|=| A/\langle a_{\gamma_0} \rangle|=|\log a_{\gamma_0}|=l_{\gamma_0}=l_\gamma j^{-1}.
\end{equation*} 
\end{proof}

One can show the following. 
\begin{proposition}\label{prop: inflspec}
The set \[\{l_\gamma: 1\neq [\gamma]\in C\Gamma\}\] is bounded away from zero and has no upper bound. 
\end{proposition}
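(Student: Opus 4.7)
The plan is to prove the two assertions separately; each follows from a standard property of a torsion-free uniform lattice acting on a non-compact symmetric space.

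For the lower bound, I would first reinterpret $l_\gamma$ as a displacement length on $X$. Since $|g|=d(eK,gK)$ for the $G$-invariant metric $d$ on $X=G/K$, equation $(\ref{eq: legamma})$ becomes $l_\gamma = \inf_{p\in X} d(p,\gamma p)$. Because $X_\Gamma$ is compact, one can fix a relatively compact fundamental domain $F\subset X$ for the $\Gamma$-action. Proper discontinuity combined with torsion-freeness (so that no $\gamma\in\Gamma-\{e\}$ has a fixed point in $X$) yields the standard consequence that each $q\in F$ admits a neighbourhood $V_q$ meeting none of its nontrivial $\Gamma$-translates. Covering $\overline F$ by finitely many such $V_{q_i}$ and extracting a Lebesgue number $\epsilon>0$, one obtains $d(q,\gamma q)\geq \epsilon$ for every $q\in F$ and every $\gamma\in\Gamma-\{e\}$. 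For arbitrary $p\in X$, write $p=\gamma_0 q$ with $q\in F$; then $d(p,\gamma p)=d(q,\gamma_0^{-1}\gamma\gamma_0 q)\geq \epsilon$ since $\gamma_0^{-1}\gamma\gamma_0\neq e$. Taking the infimum over $p$ yields $l_\gamma\geq \epsilon$, uniformly in $\gamma\neq e$.

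For the absence of an upper bound, I would pick any nontrivial $\gamma_0\in\Gamma$ — such elements exist because $\Gamma$ is infinite (if $\Gamma$ were finite, the free proper action would realise $X$ as a finite cover of the compact space $X_\Gamma$, contradicting non-compactness of $X$). By Proposition \ref{chap: geom gcon}, $\gamma_0$ is conjugate to some $\delta_{\gamma_0}=m_{\gamma_0}a_{\gamma_0}\in MA^+$. Because $M$ centralises $A$, we have $\delta_{\gamma_0}^n = m_{\gamma_0}^n a_{\gamma_0}^n$, and $a_{\gamma_0}^n$ still lies in $A^+$ (the open Weyl chamber is closed under positive powers). Hence $\delta_{\gamma_0}^n$ represents the conjugacy class of $\gamma_0^n$, and the uniqueness of the $A^+$-component noted after Proposition \ref{chap: geom gcon} forces $a_{\gamma_0^n}=a_{\gamma_0}^n$. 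Consequently
\[
l_{\gamma_0^n} \;=\; |\log a_{\gamma_0^n}| \;=\; n\,|\log a_{\gamma_0}| \;=\; n\, l_{\gamma_0}.
\]
The lower bound already proved ensures $l_{\gamma_0}>0$, so $l_{\gamma_0^n}\to\infty$ and the set is unbounded.

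The main obstacle is the uniform lower bound: one must correctly combine proper discontinuity, torsion-freeness, and the compactness of $X_\Gamma$ to upgrade the pointwise statement ``no nontrivial fixed points'' into the uniform inequality $d(q,\gamma q)\geq\epsilon$ on a fundamental domain. Once this uniform $\epsilon$ is in hand, both halves of the proposition follow quickly; the unboundedness is essentially a two-line consequence via $\gamma_0\mapsto\gamma_0^n$.
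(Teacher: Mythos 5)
Your argument is correct and, unlike the paper's treatment, self-contained: the paper simply refers the reader to \cite[Th.~4.4.]{Ga1} without giving a proof. Your two-part reduction is the natural one, and both halves go through. For the lower bound, an equivalent and slightly cleaner packaging of the same idea is to consider the displacement function $f(x):=\inf_{e\neq\gamma\in\Gamma} d(x,\gamma x)$: it is continuous (locally a minimum over a finite set, by proper discontinuity), strictly positive (freeness, i.e.\ torsion-freeness), and $\Gamma$-invariant, hence descends to a continuous positive function on the compact quotient $X_\Gamma$ and attains a positive minimum $\epsilon$, which then bounds every $l_\gamma$ from below. This sidesteps a small imprecision in your Lebesgue-number step: the lemma, as usually stated, covers subsets of the compact set $\overline F$, not full ambient balls $B_\epsilon(q)\subset X$, so one should either shrink by a factor of two and work with concentric balls $B_{\delta}(q_i)\subset B_{2\delta}(q_i)$, or use the displacement-function formulation directly. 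For the absence of an upper bound, your use of $a_{\gamma_0^n}=a_{\gamma_0}^n$, justified by $M$ centralizing $A$, $A^+$ being closed under positive powers, and the uniqueness of the $A^+$-component noted after Proposition~\ref{chap: geom gcon}, is exactly right; combined with the lower bound $l_{\gamma_0}>0$, it gives $l_{\gamma_0^n}=n\,l_{\gamma_0}\to\infty$. What your route buys over the citation is transparency: it makes explicit where compactness of $X_\Gamma$ and torsion-freeness of $\Gamma$ enter, at the small cost of needing the reinterpretation of $l_\gamma$ in (\ref{eq: legamma}) as the displacement $\inf_{p\in X}d(p,\gamma p)$, which you state correctly.
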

\begin{proof}
See \cite[Th. 4.4.]{Ga1}. 
\end{proof}

We denote its infimum by \index{$l_{\mathrm{\inf}}$, infimum of $\{l_\gamma: 1\neq [\gamma]\in C\Gamma\}$} $l_{\mathrm{inf}}$. The \index{geodesic flow}\index{$G_t$, geodesic flow} geodesic flow $G_t$ on the tangent bundle $TX$ is given by \[G_t(V):=\gamma_V'(t),\] see \cite[Ex. 1.1]{Hi} where $V\in T_xX$ and $\gamma_V:\mathbb R\to X$ is the (unique) geodesic with $\gamma_V'(0)=V$  and $\gamma_V(0)=x$.\footnote{More precisely, $\gamma_V(t)=g\exp (tX)K$ for $V=d(xK\mapsto gxK)_oX\in T_g(G/K)$, see \cite[p.14]{Hi}.} $G_t$ preserves the Riemannian metric on $TX$, in particular $G_t$ maps the spherical bundle $SX$ into itself. 
\begin{lemma}\label{eq: sphbund}
The bundle $SX$ can be identified with $G/M$. Also in the same vein 
\begin{equation*}
SX_\Gamma\cong \Gamma\backslash G/M.
\end{equation*}
\end{lemma}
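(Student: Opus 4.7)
The plan is to produce the identification $SX \cong G/M$ by choosing a distinguished base point in $SX$ and showing that $G$ acts transitively on $SX$ with stabilizer $M$. First I would identify the tangent space $T_oX$ at the origin $o = eK$ with $\mathfrak{p}$ via the standard isomorphism coming from the orbit map $G \to G/K$, $g \mapsto gK$, whose differential at $e$ restricted to $\mathfrak{p}$ is a linear isomorphism onto $T_oX$. Under this identification the Riemannian metric on $T_oX$ corresponds (up to the usual normalization) to the restriction of $B_\theta$ to $\mathfrak{p}$, so the unit sphere $S_oX \subset T_oX$ corresponds to the unit sphere of $(\mathfrak{p},B_\theta)$.

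Next I would fix the unit vector $H_0 \in \mathfrak{a} \subset \mathfrak{p}$ with $|H_0|=1$ and consider the $G$-action on $TX$ induced by the differentials of the diffeomorphisms $x \mapsto gx$ on $X$. Since this action is by isometries it preserves $SX$. To see that the orbit $G \cdot H_0$ equals all of $SX$ it suffices to show that $K$ acts transitively on the unit sphere of $\mathfrak{p}$ via $\operatorname{Ad}$: every element of $\mathfrak{p}$ is $K$-conjugate to an element of $\mathfrak{a}$ by the $KAK$-decomposition, and under the rank one assumption $\mathfrak{a} = \mathbb{R}H_0$, so any two unit vectors in $\mathfrak{a}$ differ at most by the sign which is realized by the nontrivial Weyl element $w \in M'/M \subset K/M$. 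The stabilizer of $H_0$ in $G$ is then computed as follows: $g \cdot H_0 = H_0$ forces $g \cdot o = o$, hence $g \in K$, and among elements of $K$ the condition $\operatorname{Ad}(k) H_0 = H_0$ is equivalent, again using $\mathfrak{a} = \mathbb{R} H_0$, to $k \in Z_K(\mathfrak{a}) = M$. Thus the orbit map $G \to SX$, $g \mapsto g \cdot H_0$, descends to a smooth bijection $G/M \to SX$, which is a diffeomorphism since both sides are smooth homogeneous $G$-spaces and the map is $G$-equivariant.

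For the second statement I would observe that $\Gamma$ acts on $X$ freely by isometries (torsionfree, discrete, cocompact), hence acts on $TX$ and on $SX$ via the differentials, still freely and properly discontinuously. Under the identification $SX = G/M$ from the first part this action is just left multiplication by $\Gamma$ on $G/M$, and passing to the quotient yields
\[
SX_\Gamma = \Gamma \backslash SX \cong \Gamma \backslash G / M.
\]
The only subtle point in the whole argument — and the step I would single out as the main obstacle — is the transitivity of $K$ on the unit sphere of $\mathfrak{p}$; this is where the rank one hypothesis on $G$ is used in an essential way, and without it one would only obtain a fibration $SX \to G/K$ with fibres the unit spheres in $\mathfrak{p}$, which are no longer single $K$-orbits.
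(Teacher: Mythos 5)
Your proposal is correct and uses the same key idea as the paper: both proofs reduce to the rank-one fact that $K$ acts transitively, via $\mathrm{Ad}$, on the unit sphere of $\mathfrak{p}$ with stabilizer $M=Z_K(\mathfrak{a})$. The paper packages this through the associated bundle $TX\cong G\times_K\mathfrak{p}$ and the identification of the fibre sphere with $K/M$, while you unpack the same content as an orbit–stabilizer argument for the $G$-action at the base unit vector in $\mathfrak{a}$ (note only that what you call $H_0$ is, in the paper's notation, the unit vector $H_1$, not the vector normalized by $\alpha(H_0)=1$); the passage to $SX_\Gamma$ by taking the $\Gamma$-quotient is identical.
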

\begin{proof}
See \cite[p.15]{Hi}. We have the identification
\begin{equation*}
TX=T(G/K)\cong G\times_K (\mathfrak{g}/\mathfrak k)=G\times_K\mathfrak p.
\end{equation*} 

The bundle $SX$ gets identified with $G/M$, since $K$ acts transitively on the spheres in $\mathfrak p$, i.e. the unit sphere in $\mathfrak{p}$ can be written as $K/M$. Thus $SX\cong G/K\times K/M\cong G/M$. For $SX_\Gamma$ we project from $X$ to $\Gamma\backslash X$. 
\end{proof}

The geodesic flow on the spherical bundle $SX\cong G/M$ is then given by right-translation with $A$, i.e. by 
\begin{equation*} (gM,\exp(tH_1))\mapsto G_t(gM)=g\exp(-tH_1)M,
\end{equation*} 
where \index{$H_1$, unit vector in $\mathfrak a^+$} $H_1$ is the unique unit vector in $\mathfrak a^+$, see \cite[p.89]{BO}. It also projects to a flow on $\Gamma\backslash G/M$ via 
\begin{equation*} (\Gamma gM,\exp(tH_1))\mapsto G_t(\Gamma g M)=\Gamma g\exp(-tH_1)M.
\end{equation*}

We already remarked that a closed geodesic corresponds to a closed orbit of the geodesic flow. We want to make this more precise. Let $\gamma\neq e$ be conjugated to $m_\gamma a_\gamma$ via $\alpha_\gamma$. Then the corresponding orbit is given by 
\begin{equation*} c_\gamma:=\{\Gamma\alpha_{\gamma^{-1}}\exp (-tH_1)M:t\in\mathbb R\}\subset \Gamma\backslash G/M,
\end{equation*} see \cite[p.90]{BO}. Note that indeed $G_t(\Gamma\alpha_{\gamma^{-1}}M)=\Gamma\alpha_{\gamma^{-1}}M$ for $t=l_\gamma$ and that $\gamma$ is primitive iff $t=l_\gamma$ is the smallest $t>0$ with $G_t(\Gamma\alpha_{\gamma^{-1}}M)=\Gamma\alpha_{\gamma^{-1}}M$.

Assume now that a continuous, left-$\Gamma$- and right-$K$-invariant function $\sigma$ is given. By the the same arguments as in the proof of \ref{chap: geoprop l} we get \begin{eqnarray*} \int_{H_\gamma\backslash G_{\delta_\gamma}}\sigma(\alpha_{\gamma^{-1}}x)dx&=& \int_{H_\gamma\backslash G_{\delta_\gamma}/K_{\delta_\gamma}}\sigma(\alpha_{\gamma^{-1}}x)dx\\&=& \int_{A/ \langle a_{\gamma_0}\rangle}\sigma(\alpha_{\gamma^{-1}}x)dx\\&=:& \int_{c_{\gamma_0}}\sigma,
\end{eqnarray*} where \index{$c_{\gamma_0}$, prime, closed geodesic to $\gamma$} \begin{equation}\label{eq: primclos}c_{\gamma_0}=\{\Gamma\alpha_{\gamma^{-1}}\exp (-tH_1) M:0\leq t\leq l_{\gamma_0}\}\end{equation} is the prime, closed orbit in $SX_\Gamma$ belonging to $\gamma$. Furthermore, we note an easy observation:
\begin{proposition}\label{chap: geom prop minv} 
The mapping $G\to\mathbb C$, $g\mapsto \int_{H_\gamma\backslash G_{\delta_\gamma}}\sigma(\alpha_{\gamma^{-1}}xg)dx$ is invariant under left translation by elements of $\in G_{\delta_\gamma}$. That is, for all $z\in G_{\delta_\gamma}$ $$\int_{H_\gamma\backslash G_{\delta_\gamma}}\sigma(\alpha_{\gamma^{-1}}xzg)dx=\int_{H_\gamma\backslash G_{\delta_\gamma}}\sigma(\alpha_{\gamma^{-1}}xg)dx.$$
\end{proposition}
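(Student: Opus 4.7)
The plan is to reduce the claim to a change-of-variables argument exploiting the right $G_{\delta_\gamma}$-invariance of the quotient measure $d\overset{\cdot}x_{\delta_\gamma}$ on $H_\gamma\backslash G_{\delta_\gamma}$. First I would collect the relevant structural facts already established earlier in the text. By Proposition \ref{chap: geom icg}, the centralizer $G_\gamma$ is reductive and unimodular; since $G_{\delta_\gamma}=\alpha_\gamma G_\gamma \alpha_\gamma^{-1}$ is just a conjugate of $G_\gamma$, it is unimodular as well. Similarly $H_\gamma=\alpha_\gamma\Gamma_\gamma\alpha_\gamma^{-1}$ is a discrete subgroup of $G_{\delta_\gamma}$, namely the infinite cyclic group $\langle\delta_{\gamma_0}\rangle$ (or equivalently $\langle\tilde a\rangle M_{\delta_\gamma}$'s trace on $H_\gamma$), as used in the proof of Proposition \ref{chap: geoprop l}.

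Next I would observe that since $H_\gamma$ is discrete and $G_{\delta_\gamma}$ unimodular, the Haar measure on $G_{\delta_\gamma}$ descends to a $G_{\delta_\gamma}$-invariant (from the right) measure on $H_\gamma\backslash G_{\delta_\gamma}$: the counting measure of $H_\gamma$ on the left is bi-invariant, and unimodularity of $G_{\delta_\gamma}$ guarantees that the resulting quotient measure $d\overset{\cdot}x_{\delta_\gamma}$ is invariant under right multiplication by any $z\in G_{\delta_\gamma}$. This is exactly the measure fixed in the preliminaries before the statement.

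With this in hand the proposition is immediate. Given $z\in G_{\delta_\gamma}$, perform the substitution $y:=xz$ in the integral
\[
\int_{H_\gamma\backslash G_{\delta_\gamma}}\sigma(\alpha_{\gamma^{-1}}xzg)\,dx.
\]
Because $z\in G_{\delta_\gamma}$, right multiplication by $z$ is a well-defined bijection $H_\gamma\backslash G_{\delta_\gamma}\to H_\gamma\backslash G_{\delta_\gamma}$ (the left $H_\gamma$-action commutes with the right $z$-action), and by the right $G_{\delta_\gamma}$-invariance of $d\overset{\cdot}x_{\delta_\gamma}$ we have $dy=dx$. Substituting yields
\[
\int_{H_\gamma\backslash G_{\delta_\gamma}}\sigma(\alpha_{\gamma^{-1}}xzg)\,dx
=\int_{H_\gamma\backslash G_{\delta_\gamma}}\sigma(\alpha_{\gamma^{-1}}yg)\,dy,
\]
which is the claimed identity.

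There is essentially no obstacle here beyond bookkeeping: the only thing one has to double-check is that the normalization of $d\overset{\cdot}x_{\delta_\gamma}$ (specified via the Iwasawa decomposition $G_{\delta_\gamma}=A(G_{\delta_\gamma}\cap M)$ with $K_{\delta_\gamma}$ of unit mass) is in fact bi-invariant under $G_{\delta_\gamma}$, which is automatic because $G_{\delta_\gamma}\subseteq MA$ is abelian-by-compact, hence unimodular. Once this is noted, the proof is a one-line change of variables.
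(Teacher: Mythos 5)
Your proof is correct and follows essentially the same route as the paper: both reduce the claim to the observation that $H_\gamma$ is discrete and $G_{\delta_\gamma}=M_{m_\gamma}A$ is unimodular, so the quotient measure on $H_\gamma\backslash G_{\delta_\gamma}$ is invariant under right translation by $G_{\delta_\gamma}$, after which the change of variables $x\mapsto xz$ gives the identity. The paper simply cites a reference for the invariance of the quotient measure, whereas you spell out the unimodularity argument in slightly more detail; the substance is the same.
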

\begin{proof} 
This follows, because $H_\gamma$, which is discrete, and $G_{\delta_\gamma}=M_{m_\gamma}A$ are unimodular. By \cite[p. 5]{Wi} the measure $dx$ on $H_\gamma\backslash G_{\delta_\gamma}$ is invariant under left translation by elements in $G_{\delta_\gamma}$, that is    $\int_{H_\gamma\backslash G_{\delta_\gamma}}\sigma(\alpha_{\gamma^{-1}}xzg)dx=\int_{H_\gamma\backslash G_{\delta_\gamma}}\sigma(\alpha_{\gamma^{-1}}xg)dx$ for all $z\in G_{\delta_\gamma}$ and continuous functions $\sigma$.
\end{proof}

\section{Real hyperbolic spaces}\label{chap: hyper}
We collect some facts concerning real hyperbolic symmetric spaces. We follow \cite{Ko} and \cite{Q}. 

The real hyperbolic symmetric spaces \index{$X$, (real hyperbolic) symmetric space} $X$ form one of the three main series for real symmetric spaces of noncompact type of rank one. If $X$ is a real hyperbolic space of dimension $l$, it can be described as a quotient $X=\tilde G/\tilde K$, where $\tilde G=O(1,l)$ and $\tilde K\subset G$ is a maximal compact subgroup in $\tilde G$ isomorphic to $O(l)$. Here $O(1,l)$ is the group of real $(l+1)\times (l+1)$ matrices which preserve the quadratic form \index{$[x,y]$, quadratic form on $\mathbb R^{l+1}$}

\begin{equation*}
[x,y]=x_0y_0-x_1y_1-\ldots x_ly_l
\end{equation*}
on $\mathbb{R}^{l+1}$. 

Furthermore $\tilde K=\{\pm 1\}\times SO(l)$. Real hyperbolic spaces can also be written as $X=G/K$, where \index{$SO_o(1,l)$, identity component of $SO(1,l)$} $G=SO_o(1,l)$ is the connected component of identity in $SO(1,l)$ and $SO(1,l)\subset O(1,l)$ is the subgroup of elements with determinant 1. Further, \index{$K$, max. compact in $SO_o(1,l)$} $K$ equals $\{1\}\times SO(l)$. This group $G$ has the advantage of being semisimple while $\tilde G$ is only reductive. 

We denote the Lie algebras of $G$ and $K$ by $ \mathfrak{g}$ resp. $ \mathfrak{k}$. Then we have the Cartan decomposition $ \mathfrak{g}=\mathfrak{k}\oplus \mathfrak{p}$ with Cartan involution $\theta$ defined by \[\theta(X):=JXJ,\] where \index{$J$, matrix in $\mathbb R^{l+1}$ defining $\theta$} 
\[
J:=\left(\begin{array}{cc} -1 & 0  \\ 0 & I_l   \end{array}\right),
\] $I_l$ the identity matrix in $SO(l)$. The Iwasawa decomposition reads $G=NAK$, where \index{$A$, abelian part of Iwasawa dec. of $SO_o(1,l)$} $A=\exp \mathfrak{a}\cong \mathbb{R}$, $ \mathfrak{a}\subset \mathfrak{p}$ is maximal abelian and \index{$N$, nilpotent part of Iwasawa dec. of $SO_o(1,l)$} $N=\exp \mathfrak{n}\cong \mathbb{R}^{l-1}$, $ \mathfrak{n}\subset \mathfrak{g}$ is an abelian Lie subalgebra. We will make the identification of $A$ resp. $\mathfrak a$ with $\mathbb R$ and of $ N $ resp. $\mathfrak n$ with $\mathbb R^{l-1}$ more precise below, see (\ref{eq: aidr}) and (\ref{eq: nidr}). The root space decomposition is simply

\begin{equation*}
\mathfrak{g}=\mathfrak{g_{-\alpha}}\oplus \mathfrak{g}_0\oplus \mathfrak{g}_\alpha.
\end{equation*}

Here \index{$alpha$@$\alpha$, positive root} $\alpha$ is the unique positive root of the pair $(\mathfrak g, \mathfrak a)$ with $ \mathfrak{n}=\mathfrak{g}_\alpha$,

\begin{equation*}
\mathfrak{g}_\alpha=\{X\in \mathfrak{g}:[H,X]=\alpha(H)X \mbox{ for all } H\in \mathfrak{a}\}.
\end{equation*}

It follows that \index{$rho$@$\rho$, $\rho=\frac{l-1}{2}\alpha$} $\rho=\frac{l-1}{2}\alpha$. We can also decompose $ \mathfrak{g}_0$ into $\mathfrak g_0= \mathfrak{a}\oplus \mathfrak{m}$, where $ \mathfrak{m}\subset \mathfrak{k}$ is the Lie algebra of the centraliser $M$ of $A$ in $K$. In our setting $M$ is just $\{1\}\times SO(l-1)\times \{1\}$. The subgroup $A$ is given by the set of all matrices of the form \index{$a_t$, element of $A$}
\[
a_t:=\left(\begin{array}{ccc}\cosh t & 0 & \sinh t\\ 0 & I_{l-1} & 0\\ \sinh t & 0 & \cosh t\end{array}\right),
\]
$t\in \mathbb{R}$ and \index{$I_{l-1}$, identity matrix} $I_{l-1}\in SO(l-1)$ the identity matrix. The Lie algebra $\mathfrak a$ of $A$ is spanned by \[H_0:=\left(\begin{array}{ccc}0 & 0 & 1\\ 0 & 0_{l-1} & 0\\ 1 & 0 & 0\end{array}\right),\]
where \index{$0_{l-1}$, null matrix} $0_{l-1}$ is the null matrix in $\mathfrak{so}(l-1)$. One computes $\alpha(H_0)=1$, $B(H_0,H_0)=2(l-1)\alpha(H_0)^2=2(l-1)$ and \index{$rho0$@$\rho_0$, $\rho_0=\rho(H_0)$} $\rho_0:=\rho(H_0)=\frac{l-1}{2}$, see \cite[p.135]{GV}. Here \[B(X,Y)=(l-1)\mathrm{Tr}(XY)=\mathrm{Tr}\left(\mathrm{ad}X\mathrm{ad} Y\right)\] for $X,Y\in \mathfrak{g}$ denotes the Cartan-Killing form. We identify $A$ resp. $\mathfrak a$ with $\mathbb R$ via 
\begin{equation}\label{eq: aidr}
\mathbb{R}\to \mathfrak{a} \to A \mbox{ , }
t\mapsto tH_0\mapsto \exp tH_0. 
\end{equation}  

The subgroup $N$ is given by \index{$n_u$, element of $N$}
\[
n_u:=\left(\begin{array}{ccc}1+\frac{1}{2}|u|^2 & u^T & -\frac{1}{2}|u|^2\\ u & I_{l-1} & -u\\ \frac{1}{2}|u|^2 & u^T & 1-\frac{1}{2}|u|^2\end{array}\right),
\]
$u\in\mathbb R^{l-1}$. Its Lie algebra $\mathfrak n$ consists of all matrices $X_u$ \index{$X_u$, element of $\mathfrak n$}, $u\in \mathbb{R}^{l-1}$, of the form \[
X_u:=\left(\begin{array}{ccc}0 & u^T & 0\\ u & 0_{l-1} & -u\\ 0 & u^T & 0\end{array}\right)
\] with $0_{l-1}$ the zero matrix in $\mathbb R^{(l-1)\times (l-1)}$. It follows that \index{$X_{e_1}$, element of $\mathfrak n$ to unit vector $e_1$} \begin{equation}\label{eq: normx1}B_\theta(X_{e_1},X_{e_1})=-B(X_{e_1},\theta X_{e_1})=4(l-1), \end{equation} where $e_1=(1,0,\ldots,0)^T\in\mathbb R^{l-1}$. Similar to (\ref{eq: aidr}) the identification here is
\begin{equation}\label{eq: nidr}
\mathbb{R}^{l-1}\to \mathfrak{n} \to N \mbox{ , }
u\mapsto X_u \mapsto \exp X_u 
\end{equation}
 
Conjugation by $A$ resp. $M$ on $N$ can now be described as

\begin{equation*}
a_tn_ua_{-t}=n_{e^tu} \mbox{ resp. } mn_um^{-1}=n_{m\cdot u}.
\end{equation*}

Furthermore,
 \[n_un_{u'}=n_{u+u'},\] 

while \[\mathrm{Ad}(a_t)X_u=e^tX_u=X_{e^tu}\] and \begin{equation}\label{eq: mactonn}\mathrm{Ad}(m)X_u=X_{m\cdot u}\end{equation} for $a_t\in A$, $m\in M\cong SO(l-1)$ and $X_u\in\mathfrak n\cong\mathbb R^{l-1}$. We further recall that the  Weyl group $W=N_K(A)/M$, where $N_K(A)$ is the normalizer of $A$ in $K$, is isomorphic to the symmetric group of two elements.

The space $X=G/K$ can be described as the image of the open set \[\{x\in\mathbb R^{l+1}:[x,x]>0\}\] under the mapping from $ \mathbb{R}^{l+1}$ to the unit ball \index{$B(\mathbb R^l)$, unit ball in $\mathbb R^l$} $B(\mathbb R^l)$ in $\mathbb R^l$ given by \[x\mapsto y \mbox{ , } y_i=x_ix_0^{-1}.\] 

The group $G$ acts then on $B(\mathbb R^l)$ by fractional linear transformation, that is if $g\in G$ is given by
\begin{equation}\label{eq: gmatrix}g=\left(\begin{array}{cc} a & b^T \\ c & d\end{array}\right)\end{equation} with $a\in \mathbb{R}$, $b,c\in \mathbb{R}^{l}$ and $d\in \mathbb{R}^{l\times l}$, then 

\begin{equation}\label{def: gaction}
g\cdot y=(dy+c)(\langle b,y\rangle+a)^{-1}.
\end{equation} 

Here \index{$\langle.,.\rangle$, inner product} $\langle.,.\rangle$ denotes the standard inner product on $\mathbb R^l$.
One can show that, see \cite[p.111]{DH},

\begin{equation}\label{eq: bik}
1-\left|g\cdot y\right|^2=\frac{1-|y|^2}{\left|\langle b,y\rangle +a\right|^{2}},
\end{equation}
where $|.|$ means the euclidean norm derived from $\langle.,.\rangle$ on $\mathbb R^{l}$. Hence,
\begin{equation}\label{eq: bikspec}
1-\left|g\cdot 0\right|^2=|a|^{-2}.
\end{equation}

\begin{lemma}\label{lem: bikinv}
The mapping \[g\mapsto 1-\left|g\cdot 0\right|^2\] from $SO_o(1,l)\to\mathbb R$ is bi-$K$-invariant.
\end{lemma}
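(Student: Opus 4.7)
The plan is to reduce the statement to equation (\ref{eq: bikspec}), which gives $1-|g\cdot 0|^2 = |a|^{-2}$ for $a$ the $(1,1)$-entry of $g$ written in block form as in (\ref{eq: gmatrix}). Once this reduction is made, it suffices to observe that the $(1,1)$-entry of $g$ is preserved under left and right multiplication by elements of $K$.

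To make this concrete, recall that $K=\{1\}\times SO(l)$, so every $k\in K$ has the block form
\[
k=\begin{pmatrix} 1 & 0 \\ 0 & R \end{pmatrix},\qquad R\in SO(l).
\]
Writing $g=\left(\begin{smallmatrix} a & b^T \\ c & d \end{smallmatrix}\right)$ as in (\ref{eq: gmatrix}), a direct block-matrix computation gives
\[
kg=\begin{pmatrix} a & b^T \\ Rc & Rd \end{pmatrix},\qquad gk=\begin{pmatrix} a & b^T R \\ c & dR \end{pmatrix},
\]
so in both cases the $(1,1)$-entry equals $a$. Applying (\ref{eq: bikspec}) to $k_1gk_2$ then yields $1-|k_1gk_2\cdot 0|^2=|a|^{-2}=1-|g\cdot 0|^2$.

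Alternatively, one can argue slightly more conceptually: the right-$K$-invariance is immediate since $K$ is the stabilizer of the basepoint $0\in B(\mathbb{R}^l)$, hence $gk_2\cdot 0=g\cdot 0$ for all $k_2\in K$; and the left-$K$-invariance follows because $K=\{1\}\times SO(l)$ acts on $B(\mathbb{R}^l)\subset\mathbb{R}^l$ through the standard $SO(l)$-action, which preserves the Euclidean norm $|\cdot|$. Either route is entirely routine, so there is no serious obstacle; the content of the lemma is really just the compatibility of (\ref{eq: bikspec}) with the block structure of $K$, which will be used in the next section to compute the spherical transform of bi-$K$-invariant functions expressed in terms of $1-|g\cdot 0|^2$.
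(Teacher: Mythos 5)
Your proof is correct and follows the same approach as the paper: the paper also establishes right-$K$-invariance via the stabilizer observation and left-$K$-invariance by writing $kg$ in block form and applying (\ref{eq: bikspec}). Your first argument simply treats both sides uniformly via the block computation, which is a harmless variant.
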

\begin{proof}
Since $K$ equals the stabilizer of $0$ in $G$, the right-$K$-invariance is clear. For the left-$K$-invariance we note that the product of $g=\left(\begin{array}{cc} a & b^t \\ c & d\end{array}\right)$ as in (\ref{eq: gmatrix}) and $k=\left(\begin{array}{cc} 1 & 0 \\ 0 & f\end{array}\right)$, $f\in SO(l)$, is of the form
\[k\cdot g= \left(\begin{array}{cc} a & b^t \\ f\cdot c & f\cdot d\end{array}\right)\] it follows from (\ref{eq: bikspec}) that for $k\in K$
\[1-\left|kg\cdot 0\right|^{2}=1-\left|g\cdot 0\right|^{2}.\]

\end{proof}

For later purpose let us make two computations. For $n_u\in N$ and $a_t\in A$ resp. $m\in M$ we have by (\ref{eq: bikspec})
\begin{eqnarray}\label{comp: atn}
1-\left|a_tn_u\cdot 0\right|^2&=&\left(\cosh t\left(1+\frac{1}{2}|u|^2 \right)+\frac{1}{2}|u|^2\sinh t\right)^{-2}\\&=& \left(\cosh t +\frac{|u|^2}{2}e^t\right)^{-2}. 
\end{eqnarray}

For the next computation we identify $M$ with $SO(l-1)$ and $m$ with the $(l-1)\times (l-1)$-matrix $(m)_{i,j=1}^{l-1}$ in 
$SO(l-1)$. Let $(m)_{1,1}$ be the first entry on the diagonal. Then by (\ref{eq: bikspec}) 
\begin{eqnarray*}
1-\left|n_{-u}a_tmn_u\cdot 0\right|^2&=&1-\left|n_{-u}a_tn_{m\cdot u}\cdot 0\right|^2\\&=& 
\left(\left(1+\frac{|u|^2}{2} \right)\left(\cosh t+ \frac{|u|^2}{2}e^{-t}\right)-\langle u,m\cdot u\rangle+\frac{|u|^2}{2}\left(\sinh t-\frac{|u|^2}{2}e^{-t}\right)\right)^{-2}\\&=& 
\left(-\langle u,m\cdot u\rangle+(1+|u|^2)\cosh t\right)^{-2}.
\end{eqnarray*}

In particular, for $u=r\cdot e_1=(r,0,\ldots,0)^T\in\mathbb R^{l-1}$, $r\in\mathbb R$.
\begin{eqnarray}\label{comp: natmn}
1-|n_{-re_1}a_tmn_{re_1}\cdot 0|^2&=&
\left(-(m)_{1,1}r^2 +(1+r^2)\cosh t\right)^{-2}.
\end{eqnarray}

Finally, let \index{$H:SO_o(1,l)\to \mathfrak{a}$, Iwasawa projection} $H:SO_o(1,l)\to \mathfrak{a}$ be the \index{Iwasawa projection} Iwasawa projection to $SO_o(1,l)=KAN$ defined by
\[H(k a_tn)=H(k\exp tH_0 n):=tH_0,\]
where $H_0\in\mathfrak a$ with $\alpha(H_0)=1$. From \cite[(5.9)]{Ko} we cite that for $g=(g)_{i,j=1}^{l+1}$ in $SO_o(1,l)$ 
\begin{equation}\label{eq: hproj}
H(g)=\ln|g_{1,1}+g_{0,l+1}|.
\end{equation}

Then let \index{$w$}
\[
w:=\left(\begin{array}{ccc} I_{l-1} & 0 & 0 \\ 0 & -1 & 0 \\ 0&0& -1  \end{array}\right),
\]
be a representative for the nontrivial Weyl group element in $K=\{1\}\times SO(l)$. From (\ref{eq: hproj}) we conclude that
\begin{equation*}
H(n_uw)=tH_0\end{equation*}
with $t=\ln\left(1+|u|^2 \right)
$. Also 
\begin{equation}\label{eq: Hnuw}
H(n_{-u}w)=H(n_uw)=\ln\left(1+|u|^2 \right)H_0.
\end{equation}

\section{Spherical Transforms}\label{sec: spheretra}
In this section we recall some facts about spherical transforms on semisimple Lie groups. We start by the general definition of a spherical function and spherical transform. In the next section we explicitly calculate the spherical transform of a function $f_k$ depending on some complex parameter $k$ when $G/K$ is a real hyperbolic space.

Let $G=NAK$ be a semisimple, noncompact Lie group, \index{$A$ $:G\to\mathfrak a$, projection} $A:G\to \mathfrak{a}$ the corresponding projection such that \[g=n(g)\exp A(g)k(g)\] and \begin{equation}\label{def: horobrac}\langle gK, kM \rangle:=A(k^{-1}g)\end{equation} is the so-called horocycle bracket \index{horocycle bracket} where $X=G/K$ and boundary \index{$B=K/M$} $B=K/M$.
\begin{definition} \cite{GGA}\\
For $f\in C_c^\infty(G/K)$ we 
define its \index{Fourier transform} \index{$\mathcal{F}$, Fourier transform} Fourier transform by 
\begin{equation}\label{def: fouriertransf} 
\mathcal{F}(f,\lambda,b):=\int_{G/K}f(gK)e^{(-i\lambda+\rho)\langle gK,b\rangle}d(gK),
\end{equation} where \index{horocycle bracket} $\langle\cdot,\cdot\rangle$ is the horocycle bracket.
\end{definition}
\begin{definition}\index{spherical function}
We call a function in $C^\infty(G//K)$ \index{(elementary) spherical} \textit{(elementary) spherical}, if it is an eigenfunction of $U(\mathfrak{g}_{\mathbb{C}})^K$, the subset of $K$-invariants in $U(\mathfrak{g}_\mathbb{C})$.
\end{definition}

The following theorem gives the integral formula for spherical functions. It describes spherical functions as integral transforms of the wave $e^{(i\lambda+\rho)A(\cdot)}$. 
\begin{theorem}\label{Thm: HC}
Every spherical function is given by \index{$fi$@$\varphi_\lambda$, spherical function to $\lambda\in \mathfrak{a}^+_{\mathbb C}$}
\begin{equation*}
\varphi_\lambda(g)=\int_Ke^{(i\lambda+\rho)A(kg)}dk,
\end{equation*} where $\lambda\in \mathfrak{a}^*$ and $\varphi_\lambda=\varphi_{s\lambda}$ for $s\in W$.
\end{theorem}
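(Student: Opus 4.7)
}

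The plan is to verify that the integral formula on the right-hand side produces a spherical function, to prove Weyl-group invariance, and then to argue that every spherical function arises in this way. First I would check that $\varphi_\lambda(g):=\int_K e^{(i\lambda+\rho)A(kg)}\,dk$ is smooth and bi-$K$-invariant. Right-$K$-invariance is immediate from $A(gk')=A(g)$ for $k'\in K$, which follows from uniqueness of the Iwasawa decomposition $g=n(g)\exp A(g)k(g)$; left-$K$-invariance follows by substituting $k\mapsto k_0^{-1}k$ in the integral over $K$ together with invariance of $dk$.

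The key step is to show that the integrand itself, the ``plane wave'' $e_\lambda(g):=e^{(i\lambda+\rho)A(g)}$, is a joint eigenfunction of every $D\in U(\mathfrak g_{\mathbb C})^K$ acting as a left-invariant differential operator on the right-$K$-invariant function space $C^\infty(G/K)$. This is the standard computation behind the Harish-Chandra isomorphism: modulo the left ideal $U(\mathfrak g_{\mathbb C})\mathfrak k_{\mathbb C}$, any $K$-invariant $D$ can be rewritten as a polynomial in $\mathfrak a_{\mathbb C}$ of Iwasawa type, and $e_\lambda$ transforms under left-invariant differentiation in the $\mathfrak a$-direction by the character $i\lambda+\rho$, with a $\rho$-shift absorbed to produce a $W$-invariant polynomial in $i\lambda$. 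Since every $D\in U(\mathfrak g_{\mathbb C})^K$ commutes with left translations by $G$ (the operators descend to $G$-invariant operators on $G/K$), integrating over $K$ on the left preserves the eigenvalue, so $\varphi_\lambda$ is a joint eigenfunction of $U(\mathfrak g_{\mathbb C})^K$ and hence spherical.

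For the equality $\varphi_\lambda=\varphi_{s\lambda}$ I would argue by eigenvalue matching plus uniqueness. Because the image of $D$ under the Harish-Chandra isomorphism lies in $U(\mathfrak a_{\mathbb C})^W$, the character determined by $\varphi_\lambda$ depends only on the $W$-orbit of $\lambda$. A standard argument — ultimately using the functional equation $\varphi(g)\varphi(h)=\int_K\varphi(gkh)\,dk$ characterizing spherical functions — shows that a bi-$K$-invariant joint eigenfunction $\varphi$ with $\varphi(e)=1$ is uniquely determined by its character on $U(\mathfrak g_{\mathbb C})^K$. Since $\varphi_\lambda(e)=\int_K 1\,dk=1$, this forces $\varphi_\lambda=\varphi_{s\lambda}$. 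The same uniqueness, together with surjectivity of the Harish-Chandra isomorphism onto the characters of $U(\mathfrak g_{\mathbb C})^K/(U(\mathfrak g_{\mathbb C})^K\cap U(\mathfrak g_{\mathbb C})\mathfrak k_{\mathbb C})$, yields exhaustion: any spherical function is some $\varphi_\lambda$.

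The main obstacle is the uniqueness of a bi-$K$-invariant joint eigenfunction with value $1$ at the identity; this is the nontrivial classical fact on which everything else hinges, and in a textbook treatment one would cite \cite{GGA} rather than reprove it. The remaining items (bi-invariance, the plane-wave eigenfunction calculation, and Weyl invariance through the Harish-Chandra isomorphism) are direct.
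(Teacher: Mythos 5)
The paper does not prove this theorem; its ``proof'' is a single citation to Helgason \cite[p.418]{GGA}. Your plan reconstructs the standard Helgason argument that lives behind that citation: bi-$K$-invariance of the integral, the plane wave $e^{(i\lambda+\rho)A(\cdot)}$ as a joint eigenfunction of $U(\mathfrak g_{\mathbb C})^K$ via the Harish-Chandra homomorphism, Weyl invariance and exhaustion through eigenvalue matching plus the uniqueness of a normalized bi-$K$-invariant joint eigenfunction. That is the correct route, and since the paper itself delegates to the same source, there is no genuine difference in approach to compare. The only caveat worth flagging is that what you have written is a proof outline rather than a complete proof: as you acknowledge, the uniqueness step and the identification of characters of $U(\mathfrak g_{\mathbb C})^K/(U(\mathfrak g_{\mathbb C})^K\cap U(\mathfrak g_{\mathbb C})\mathfrak k_{\mathbb C})$ with points of $\mathfrak a^*_{\mathbb C}/W$ are the real content and would themselves need to be cited or proved in full. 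Also note that the parameter should really range over $\mathfrak a^*_{\mathbb C}$ (up to $W$), not just $\mathfrak a^*$; the paper's statement is imprecise on this point, and your argument implicitly (and correctly) allows complex $\lambda$.
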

\begin{proof}
\cite[p.418]{GGA}
\end{proof}
Note that $A(.)$ is bi-$M$-invariant, hence the domain of integration can be changed to $K/M$ if $dk_M$ corresponds to the (normalized) measures $dk$ and $dm$.

We state some useful formulae for spherical functions.
\begin{remark}\label{rem: sphere}
If the real rank of $G$ is one and \index{$H$ $:G\to \mathfrak{a}$, projection} \index{$\log a$, element of $\mathfrak a$} $H:G\to \mathfrak a, H(kna):=\log a$, where $\exp \log a=a$, corresponds to the Iwasawa decomposition $G=KAN$, then $H(g^{-1}k)=-A(k^{-1}g)$ and 
\begin{eqnarray*}
\varphi_\lambda(g)&=&\int_Ke^{(i\lambda+\rho)A(kg)}dk\\ 
&=&  \int_Ke^{(i\lambda+\rho)A(k^{-1}g)}dk \mbox{ by unimodularity}\\
&=& \int_Ke^{(-i\lambda-\rho)H(g^{-1}k)}dk\\
&=& \int_Ke^{(i\lambda-\rho)H(g^{-1}k)}dk \mbox{ by Weyl group invariance}\\
&=& \int_K e^{(i\lambda-\rho)H(gk)}dk \mbox{ , see \cite[p.89]{Sc}}.
\end{eqnarray*}
\end{remark}


With this preparation we can define the spherical transform on bi-$K$-invariant functions.
\begin{definition} See \cite[p.457]{GGA}\\ \index{spherical transform} \index{$\mathcal S(f,\lambda)$, spherical transform} Let $f\in C^\infty(G//K)$ and $\lambda\in\mathfrak a_{\mathbb C^*}$. We define define the spherical transform of $f$ by  
\begin{equation}\label{def: sphericaltransform} 
\mathcal{S}(f,\lambda):=\int_Gf(g)\varphi_{-\lambda}(g)dg,
\end{equation} 
whenever the integral is finite and where $\varphi_\lambda(g)=\int_Ke^{(i\lambda+\rho)A(kg)}dk$ is an elementary spherical function.
\end{definition}

From Remark \ref{rem: sphere} one can deduce that the spherical transform can be factorized into a product of the euclidean Fourier transform on $A$ \index{$\hat f$, euclidean Fourier transform} \[\hat{f}(\lambda):=\int_A f(a)e^{-i\lambda\log a}da\] and the Abel transform, see Remark \ref{rem: gang} for the definition of the Abel transform $F_f$.
\begin{proposition}\label{prop: spherefact}
The spherical transform can be factorized as
\begin{eqnarray*}
\mathcal{S}(f,\lambda)&=&\widehat{F_f}(\lambda)\\&=& \int_{A}\int_N f(an)e^{(-i\lambda+\rho)\log a}dnda.
\end{eqnarray*}
\end{proposition}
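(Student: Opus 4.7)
The plan is to reduce $\mathcal{S}(f,\lambda)$ to an iterated integral over $A$ and $N$ by exploiting bi-$K$-invariance of $f$ together with the Iwasawa decomposition $G=ANK$, and then to recognise the outer $A$-integral as a euclidean Fourier transform of the Abel transform $F_f$.

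First I would substitute the integral formula from Theorem \ref{Thm: HC} for the spherical function,
\[
\varphi_{-\lambda}(g)=\int_K e^{(-i\lambda+\rho)A(kg)}\,dk,
\]
into the definition (\ref{def: sphericaltransform}) and apply Fubini. In the resulting inner integral $\int_G f(g)e^{(-i\lambda+\rho)A(kg)}\,dg$ I would change variables $g\mapsto k^{-1}g$. Since $f$ is bi-$K$-invariant, in particular left $K$-invariant, and since Haar measure on $G$ is invariant, the inner integral is independent of $k$. Because $dk$ has unit mass, the outer $K$-integration collapses, leaving
\[
\mathcal{S}(f,\lambda)=\int_G f(g)\,e^{(-i\lambda+\rho)A(g)}\,dg.
\]

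Next I would apply the Iwasawa decomposition recalled at the start of Section \ref{chap: geometry},
\[
\int_G F(g)\,dg=\int_A\int_N\int_K F(ank)\,dk\,dn\,da.
\]
The three facts needed to simplify are: (i) $f(ank)=f(an)$ by right $K$-invariance; (ii) $\int_K dk=1$; and (iii) $A(ank)=\log a$ under the convention $g=n(g)\exp A(g)k(g)$. Point (iii) holds because $A$ normalises $N$, so $ank=(ana^{-1})\cdot a\cdot k\in NAK$, giving $\exp A(ank)=a$. Substituting these yields
\[
\mathcal{S}(f,\lambda)=\int_A\int_N f(an)\,e^{(-i\lambda+\rho)\log a}\,dn\,da,
\]
which is the second displayed identity in the proposition.

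To conclude the equality with $\widehat{F_f}(\lambda)$ I would recall the definition of the Abel transform, $F_f(a):=e^{\rho\log a}\int_N f(an)\,dn$, and of the euclidean Fourier transform $\widehat{F_f}(\lambda)=\int_A F_f(a)\,e^{-i\lambda\log a}\,da$, so that grouping the exponential factors in the double integral identifies it with $\widehat{F_f}(\lambda)$. The argument is essentially bookkeeping; the only mildly delicate point is (iii), i.e.\ verifying that the $A$-component of $ank$ in the $NAK$ order is $a$ itself, and this is immediate from the fact that $A$ normalises $N$. Nothing uses the rank-one hypothesis, so the proposition holds for any semisimple noncompact $G$.
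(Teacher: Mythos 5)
Your proof is correct, and the key steps hold: substituting Harish--Chandra's integral formula for $\varphi_{-\lambda}$, collapsing the outer $K$-integral via the substitution $g\mapsto k^{-1}g$ and left $K$-invariance of $f$, applying the paper's integral formula $\int_G F\,dg=\int_{ANK}F(ank)\,dk\,dn\,da$ (Section~\ref{chap: geometry}), and the bookkeeping identity $A(ank)=\log a$, which you rightly justify by writing $ank=(ana^{-1})\,a\,k\in NAK$. The paper itself does not reproduce this argument: its proof is a bare citation to \cite[Ch.II~\S5~(37)]{GGA} and \cite[Prop.~3.3.1]{GV}. What you wrote out is essentially the standard proof one finds in those references, so the content agrees; you just give the self-contained derivation where the paper delegates it. Your closing observation that nothing here uses the rank-one hypothesis is also accurate and worth making, since the surrounding remarks in the paper's Section~\ref{sec: spheretra} invoke rank one even though this particular factorization is valid for any noncompact semisimple $G$. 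One small cosmetic caveat: the paper's Remark~\ref{rem: gang} defines the Abel transform only for $SO_o(1,l)$ using $\rho_0 t$ with $a_t=\exp tH_0$; your $F_f(a)=e^{\rho(\log a)}\int_N f(an)\,dn$ is the correct general-rank-one (indeed general) version, consistent with the paper's since $\rho_0 t=\rho(\log a_t)$.
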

\begin{proof}
\cite[Ch.II \S 5 (37)]{GGA} or \cite[Prop.3.3.1.]{GV}
\end{proof}
\begin{theorem}\label{th: hccf}\index{Harish-Chandra's $c$-function} \index{$c(\lambda)$, Harish-Chandra's $c$-function}
(\textit{Harish-Chandra's} $c$-function)\\ Let $G$ be semisimple, noncompact of real rank one. For $\lambda\in \mathfrak{a}^*$ with $\mathrm{Re}(i\lambda)\in \mathfrak{a}_{+}^*$ the integral 
\begin{equation*} 
c(\lambda)=\int_{\overline{N} }e^{-(i\lambda+\rho)H(\bar n)}d\bar{n}
\end{equation*} converges absolutely and admits a meromorphic extension in the parameter $\lambda$ to $\mathfrak a_{\mathbb{C}}^*\cong \mathbb{C}$. Here the measure $d\bar n$ on \index{$\overline N$, $\overline{N}=\theta N$} $\overline{N}:=\theta N$ is normalized such that $\int_{\bar N}e^{-2\rho\left(H(\bar n)\right)}dn=1$.
\end{theorem}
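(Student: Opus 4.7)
The plan is to reduce $c(\lambda)$ to an explicit one-variable Beta integral; absolute convergence in the stated half-space and the meromorphic extension will then both fall out of this single reduction. The work lies in the change of variables, which is particularly clean in the real hyperbolic setting of Section \ref{chap: hyper}.

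To carry out the reduction I would exploit the Weyl representative $w$ from Section \ref{chap: hyper}: since $w \in N_K(A)$ swaps $\mathfrak n$ and $\overline{\mathfrak n}$ in rank one, the assignment $u \mapsto \bar n_u := w^{-1} n_{-u} w$ is a diffeomorphism of $\mathbb{R}^{l-1}$ onto $\overline N$, and under it $d\bar n$ pulls back to a constant multiple of Lebesgue measure $du$. Using $w^{-1} \in K$ together with the left $K$-invariance of $H$ and equation (\ref{eq: Hnuw}) gives
\begin{equation*}
H(\bar n_u) = H(n_{-u}\, w) = \ln(1+|u|^2)\, H_0.
\end{equation*}
With $\rho(H_0) = \rho_0 = (l-1)/2$ the defining integral becomes
\begin{equation*}
c(\lambda) = C \int_{\mathbb R^{l-1}} (1+|u|^2)^{-i\lambda(H_0)-\rho_0}\, du
\end{equation*}
for a positive constant $C$. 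Passing to polar coordinates $u = r\xi$ with $\xi \in S^{l-2}$ and then substituting $t = r^2/(1+r^2)$ turns the radial integral into a classical Beta integral, producing the closed form
\begin{equation*}
c(\lambda) = C'\, \frac{\Gamma(\rho_0)\, \Gamma(i\lambda(H_0))}{\Gamma(\rho_0 + i\lambda(H_0))}.
\end{equation*}

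This Beta integral converges absolutely precisely when $\mathrm{Re}(i\lambda(H_0)) > 0$, which (since $H_0$ generates $\mathfrak a^+$) is the stated hypothesis $\mathrm{Re}(i\lambda) \in \mathfrak a^*_+$, so the same computation establishes absolute convergence in the advertised region. The right-hand side is manifestly meromorphic on $\mathfrak a^*_{\mathbb C} \cong \mathbb C$ via the Gamma function, giving the claimed meromorphic extension, and the constant $C'$ is pinned down by requiring $c(-i\rho) = 1$, which is equivalent to the stated normalization $\int_{\overline N} e^{-2\rho H(\bar n)}\, d\bar n = 1$ since $\lambda = -i\rho$ forces $-(i\lambda + \rho) = -2\rho$. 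The main technical point to monitor is the bookkeeping of normalizations: the Jacobian of $u \mapsto \bar n_u$, the sphere volume $|S^{l-2}|$ from polar coordinates, and the chosen normalization of $d\bar n$ all merge into the single constant $C'$, and their consistency must be verified against the Iwasawa Haar measure identity. For a general real rank one $G$ beyond the hyperbolic case, the template is identical, with $\mathbb R^{l-1}$ replaced by $\overline{\mathfrak n}$ and polar coordinates replaced by the $M$-orbit decomposition of $\overline{\mathfrak n}$; the radial variable still produces a Beta integral and the conclusion is unchanged.
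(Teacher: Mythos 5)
Your computation is a genuine proof rather than a citation; the paper's own ``proof'' of Theorem \ref{th: hccf} is simply the reference \cite[p.436]{GGA}, so you have supplied the argument that the paper outsources to Helgason. For the real hyperbolic case the steps check out: $w^{-1}\in K$ together with left-$K$-invariance of the $KAN$-projection $H$ gives $H(w^{-1}n_{-u}w)=H(n_{-u}w)$, which by (\ref{eq: Hnuw}) equals $\ln(1+|u|^2)H_0$; with $\rho(H_0)=\rho_0$ the integrand becomes $(1+|u|^2)^{-i\lambda(H_0)-\rho_0}$, and polar coordinates plus the substitution $v=r^2$ yield $\tfrac12 B(\rho_0,\,i\lambda(H_0))$, so the convergence half-plane $\mathrm{Re}(i\lambda(H_0))>0$ and the $\Gamma$-quotient meromorphic continuation both come out exactly as in the Gindikin--Karpelevich formula; the normalization $c(-i\rho)=1$ matching $\int_{\bar N}e^{-2\rho H(\bar n)}d\bar n=1$ is also correct. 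The one place you are too quick is the final sentence: for the non-hyperbolic rank-one groups $SU(1,n)$, $Sp(1,n)$, $F_{4(-20)}$, the space $\overline{\mathfrak n}=\overline{\mathfrak g}_{-\alpha}\oplus\overline{\mathfrak g}_{-2\alpha}$ has a nontrivial $-2\alpha$ summand, the Iwasawa projection $H(\bar n)$ is no longer a function of a single radius, and the integral reduces to a product of \emph{two} Beta-type integrals giving the full quotient
\begin{equation*}
c(\lambda)\;\propto\;\frac{\Gamma\!\bigl(i\lambda(H_0)\bigr)}{\Gamma\!\bigl(\tfrac12(\tfrac{m_\alpha}{2}+1+i\lambda(H_0))\bigr)\,\Gamma\!\bigl(\tfrac12(\tfrac{m_\alpha}{2}+m_{2\alpha}+i\lambda(H_0))\bigr)}.
\end{equation*}
So ``the radial variable still produces a Beta integral'' understates the bookkeeping; the convergence region and meromorphic extension still hold, but you would need to treat the two root spaces separately (or cite Gindikin--Karpelevich directly) to make the general rank-one claim honest. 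Since the paper ultimately specializes to $SO_o(1,l)$, this gap is harmless for its purposes, but it should be flagged if you want the theorem as stated.
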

\begin{proof}
\cite[p.436]{GGA} 
\end{proof}

\section{The spherical transform of $f_k$}\label{sec: spher traf}
In this section we specialize to $G=SO_o(1,l)$. We want to compute the spherical transform of the bi-$K$-invariant function, see Lemma \ref{lem: bikinv}, \index{$f_k$, bi-$K$-invariant function} \[f_k:G\to \mathbb{R}\mbox{ , } g\mapsto \left(1-|g\cdot 0|^2\right)^{k/2}\] depending on $k\in \mathbb{C}$. We will use $f_k$ to obtain the zeta function $ \mathcal{R}(k;\varphi_n) $, see Section \ref{sec: zeta}. The spherical transform of $f_k$ will in turn produce the meromorphic continuation of $ \mathcal{R}(k;\varphi_n)$ to $ \mathbb{C}$ via the spectral trace of the operator $ \varphi_n\cdot \pi_R(f_k)$ in Chapter \ref{chap: meromorph}.

The spherical transform of $f_k$ is given by Proposition \ref{prop: spherefact} by

\begin{equation*}
\mathcal{S}(f_k,\mu)= \int_A\int_N f_k(a n) e^{(-i\mu+\rho)\log a}dnda.
\end{equation*}

Now we identify $A$ with $\mathbb{R}$ by $a_t=\exp tH_0\mapsto t$, $\alpha(H_0)=1$, and $N$ with $\mathbb R^{l-1}$, $\exp X_u\mapsto u$, see (\ref{eq: nidr}). From (\ref{comp: atn}ff) it follows that \begin{equation}\label{eq: fkatsp}f_k(a_t \exp X_u)=\left(\cosh t +\frac{|u|^2}{2}e^t\right)^{-k} .\end{equation}

The spherical transform of $f_k$ is hence given by
\begin{eqnarray*}
\mathcal{S}(f_k,\mu)&=& \int_{-\infty}^{\infty}\int_N f_k(\exp tH_0  n) e^{(-i\mu+\rho_0)t}dndt
\\&\overset{(\ref{eq: fkatsp})}=&\int_{-\infty}^\infty\int_N \left( \cosh t+ \frac{1}{2}|u|^2e^{t}\right)^{-k} e^{(-i\mu+\rho_0)t}dudt\\
&\overset{t\mapsto -t}=&
\int_{-\infty}^{\infty}\int_N \left( \cosh t+ \frac{1}{2}|u|^2e^{-t}\right)^{-k} e^{(i\mu-\rho_0)t}dudt
\\&=& \omega_{l-1}\cdot \int_{-\infty}^{\infty}\int_{0}^\infty s^{l-2}\left(\cosh t +\frac{1}{2}s^2e^{-t}\right)^{-k}e^{(i\mu-\rho_0)t}dsdt\\
&=& \omega_{l-1}\cdot \int_{-\infty}^{\infty} (\cosh t)^{-k}e^{(i\mu-\rho_0)t} \int_{0}^\infty s^{l-2}\left(1+s^2(1+e^{2t})^{-1}\right)^{-k} ds dt,
\end{eqnarray*}
where we used \index{polar coordinates} polar coordinates, namely we identify $N\cong \mathbb{R}^{l-1}$, $u\mapsto X_u$, see (\ref{eq: nidr}),
\begin{eqnarray*}
\int_Nf(n)dn&=&\int_{\mathbb{R}^{l-1}}f(\exp X_u)du\\
&=& \int_{0}^\infty \int_{\partial B_s(0)}fdSds\\
&=& \omega_{l-1}\int_0^\infty s^{l-2}  f(\exp sX_{e_1})ds
\end{eqnarray*}
for radial functions $f$, i.e. $f(n)=\int_Mf(m\cdot n)dm$ for all $n\in N$. Here \index{$omega$@$\omega_{l-1}$, volume of unit ball}\index{$B_1(0)$, unit ball} $B_1(0)=\{x\in\mathbb R^{l-1}:|x|\leq 1\}$ and $\omega_{l-1}:=|\partial B_1(0)|$ with respect to the Lebesgue measure in $\mathbb R^{l-2}$ for $l\geq 3$. For $l=2$ we set $\omega_1:=2$. Next we substitute $s\mapsto (1+e^{2t})^{1/2}s$ to get
\begin{eqnarray*}
&&\omega_{l-1}\cdot\int_{-\infty}^{\infty} (\cosh t)^{-k}(1+e^{2t})^{\frac{l-1}{2}}e^{(i\mu-\rho_0)t}dt\int_{0}^{\infty}s^{l-2}(1+s^2)^{-k}ds
\\&=& \omega_{l-1}\cdot 2^{ \frac{l-1}{2}}\int_A (\cosh t)^{-k+ \frac{l-1}{2}}e^{ \frac{l-1}{2}t }\cdot e^{(i\mu-\rho_0)t}dt\int_{0}^{\infty}s^{l-2}(1+s^2)^{-k}ds.
\end{eqnarray*}

Now we remember that $\rho_0= \frac{l-1}{2}$ to find that we have to compute the integral 
\begin{eqnarray*}
2^{ \frac{l-1}{2}}\int_{-\infty}^\infty (\cosh t)^{-k+ \frac{l-1}{2} }e^{i\mu t}dt&=&2^{ \frac{l-1}{2}} \int_0^\infty 2^{k- \frac{l-1}{2} }\left(v+v^{-1}\right)^{-k+ \frac{l-1}{2} } v^{i\mu-1} dv\\ 
&=& 2^{k}\int_0^\infty \frac{v^{i\mu-1}}{\left( \frac{v^2+1}{v}\right)^{k-\frac{l-1}{2}}}dv,
\end{eqnarray*}
by the substitution $e^t=v$. Then we substitute $v\mapsto v^{1/2}$ to obtain
\begin{equation}\label{eq: intbeta}
2^{k-1} \int_0^\infty v^{ \frac{k+i\mu-\rho_0}{2}-1 }\left(v+1 \right)^{-k+\rho_0}dv.
\end{equation}

Next we use the integral formula for the Beta function \index{Beta function} \index{$B(x,y)$, beta function}
\begin{equation}\label{eq: betafunc}B(x,y)=\int_0^\infty \frac{u^{x-1}}{(1+u)^{x+y}}du=\frac{\Gamma(x)\Gamma(y)}{\Gamma(x+y)}
\end{equation} to obtain that (\ref{eq: intbeta}) equals 
\begin{equation*}
2^{k-1}B\left( \frac{k+i\mu-\rho_0}{2}, \frac{k-i\mu-\rho_0}{2}\right).
\end{equation*}

We recall the following facts on the Gamma function:
\begin{remark}\label{rem: gamma}
The Gamma function $\Gamma(z)=\int_0^\infty t^{z-1}e^{-t}dt$ converges absolutely for $\mathrm{Re}(z)>0$ and defines there an entire function. It can be continued analytically to $\mathbb{C}-\{0,-1,-2,\ldots\}$. The poles in $-n$, $n\in\mathbb{N}_0$, are of first order with residues $\mathrm{Res}(\Gamma;-n)=\frac{(-1)^n}{n}$. The Gamma function has no zeros, \cite[Ch. IV 1]{FB}.

Furthermore, for $x$ and $y$ real we have \begin{equation}\label{eq: gammaasympt}\lim_{|y|\to\infty}\left|\Gamma(x+iy)\right|e^{\frac{\pi}{2}|y| }|y|^{\frac{1}{2}-x }=\sqrt{2\pi},\end{equation}
see \cite[8.328]{GR}.
\end{remark}

That is, $B(x,y)$ is defined for \[\mathrm{Re}(x),\mathrm{Re}(y)\neq 0,-1,-2,\ldots,\] in our case \[k\neq \rho_0\pm i\mu,\rho_0-2\pm i\mu,\ldots.\]

To sum up:
\begin{proposition}
The spherical transform of $f_k$ is given by
\begin{eqnarray}\label{eq: stfk}
\mathcal{S}(f_k,\mu)&=&\omega_{l-1}\int_0^\infty s^{l-2}(s^2+1)^{-k}ds\cdot 2^{k-1}B\left( \frac{k+i\mu-\rho_0}{2}, \frac{k-i\mu-\rho_0}{2}\right)\nonumber\\
&=& \omega_{l-1}\cdot 2^{k-1} \frac{\Gamma(k-\rho_0)\Gamma(\rho_0)}{2\Gamma(k)}B\left( \frac{k+i\mu-\rho_0}{2}, \frac{k-i\mu-\rho_0}{2}\right)\nonumber\\
&=& \omega_{l-1}\cdot 2^{k-2}B(k-\rho_0,\rho_0)B\left( \frac{k+i\mu-\rho_0}{2}, \frac{k-i\mu-\rho_0}{2}\right) .
\end{eqnarray}
\end{proposition}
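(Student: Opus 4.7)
The plan is to compute $\mathcal{S}(f_k,\mu)$ directly using the Abel transform factorization from Proposition~\ref{prop: spherefact}, which reduces the spherical transform to the iterated integral
\[
\mathcal{S}(f_k,\mu)=\int_A \int_N f_k(an)\, e^{(-i\mu+\rho)\log a}\, dn\, da.
\]
First I would use the identifications (\ref{eq: aidr}) and (\ref{eq: nidr}) to write $a=a_t$ for $t\in\mathbb{R}$ and $n=\exp X_u$ for $u\in\mathbb{R}^{l-1}$, and invoke the explicit formula (\ref{eq: fkatsp}), namely $f_k(a_t \exp X_u)=(\cosh t+\tfrac{1}{2}|u|^2 e^t)^{-k}$. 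Because this integrand depends on $u$ only through $|u|$, I would pass to polar coordinates on $N\cong \mathbb{R}^{l-1}$, pulling out the surface area $\omega_{l-1}$ (with the convention $\omega_1=2$ in the edge case $l=2$).

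Next, the decisive maneuver is to decouple the two remaining one-dimensional integrals. After the substitution $t\mapsto -t$ (permissible because we integrate over all of $\mathbb{R}$) and then $s\mapsto (1+e^{2t})^{1/2}s$ in the radial direction of $N$, the radial integral becomes
\[
\int_0^\infty s^{l-2}(1+s^2)^{-k}\,ds,
\]
which is a standard Beta integral equal to $\tfrac{1}{2}B(\tfrac{l-1}{2},k-\tfrac{l-1}{2})=\tfrac{1}{2}B(\rho_0,k-\rho_0)$. The remaining $t$-integral, after absorbing the Jacobian factor $(1+e^{2t})^{(l-1)/2}$ and using $\rho_0=(l-1)/2$, takes the form $2^{(l-1)/2}\int_\mathbb{R} (\cosh t)^{-k+\rho_0} e^{i\mu t}\, dt$.

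To finish, I would substitute $v=e^t$ on the $t$-integral, rewrite $\cosh t=\tfrac12(v+v^{-1})$, then substitute $v\mapsto v^{1/2}$ so that the integrand matches the Beta kernel $v^{x-1}(1+v)^{-(x+y)}$. Setting $x=\tfrac{k+i\mu-\rho_0}{2}$ and $y=\tfrac{k-i\mu-\rho_0}{2}$ and applying (\ref{eq: betafunc}) delivers the second Beta factor. Multiplying the two Beta factors together with the constant $\omega_{l-1}\cdot 2^{k-1}\cdot \tfrac12$ coming from the normalizations yields the stated identity.

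The main obstacle is not algebraic but analytic: each of these substitutions and the application of (\ref{eq: betafunc}) requires $\mathrm{Re}(k)>\rho_0$ (for the radial Beta integral) together with $\mathrm{Re}(k\pm i\mu-\rho_0)>0$ (for the second Beta integral). I would first perform the computation for $k$ and $\mu$ in this common region of absolute convergence, and then appeal to the meromorphic extension properties of $B(\cdot,\cdot)$ recorded in Remark~\ref{rem: gamma} to extend the identity to all $k\in\mathbb{C}$ outside the indicated pole set $k\in \rho_0-2\mathbb{N}_0\pm i\mu$. No case analysis should be needed beyond keeping track of the $l=2$ convention $\omega_1=2$, since the rest of the computation is uniform in the dimension $l$.
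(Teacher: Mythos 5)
Your proposal is correct and follows the same route as the paper: the factorization through $\mathcal{S}(f_k,\mu)=\int_A\int_N f_k(an)e^{(-i\mu+\rho)\log a}\,dn\,da$, the passage to polar coordinates on $N$, the substitutions $t\mapsto -t$ and $s\mapsto(1+e^{2t})^{1/2}s$ to decouple the radial and $A$-integrals, and the $v=e^t$ followed by $v\mapsto v^{1/2}$ change of variables to produce the second Beta factor. Your explicit tracking of the convergence region and the constant $\omega_{l-1}\cdot 2^{k-1}\cdot\tfrac12$ matches the paper's computation exactly.
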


As we will later see in Section \ref{sec: zeta} this function $f_k$ leads to an operator of trace class and its trace in turn will produce the (auxiliary) zeta function $\mathcal R(\varphi)$, see (\ref{def: auxzet}). 
\chapter{Polar decomposition and radial parts}\label{chap: radial}

In this chapter $G$ will always come from a real hyperbolic symmetric space, which is at least 3 dimensional. That is, $G=SO_o(1,l)$ with $l\geq 3$, see Chapter \ref{chap: hyper}. We fix an Iwasawa decomposition $G=ANK$ resp. $\mathfrak{g}=\mathfrak{a}\oplus\mathfrak{n}\oplus\mathfrak{k}$. Then $M=Z_K(A)\cong SO(l-1)$. For $l=2$, $M$ is hence trivial and we exclude this case.

In Section \ref{sec: mac} we discuss the action of $M$ on $N$ which is essentially described by the natural action of $SO(l-1)$ on $\mathbb R^{l-1}$. In the next Section \ref{sec: rapo} we develop a theory of polar decomposition for differential operators $D$ on $G$ for a certain decomposition of $G$ which is similar to the polar decomposition $G=KAK$. This also leads to the definition of radial parts of differential operators for $M$-invariant functions. For the succeeding chapters, in particular Chapter \ref{chap: ode}, Section \ref{sec: radial.1} is crucial where we apply the theory of polar decomposition from Section \ref{sec: mac} to the Casimir operator $\Omega$. While Section \ref{sec: rapo} suggests formula (\ref{eq: omegapolar}) for $\Omega$, see Theorem \ref{th:radial part1}, the calculations of Section \ref{sec: radial.1} which lead to (\ref{eq: omegapolar}) only depend on Section \ref{chap: casimir} and \ref{sec: mac}.
\section{The action of $M$ on $\mathfrak n$}\label{sec: mac}

We assume that the dimension of $\mathfrak n=\mathrm{Lie}(N)$ is $l-1$ and we consider the adjoint action of $M$ on $\mathfrak n $. It is content of Kostant's double transitivity theorem, see \cite[Th. 8.11.3]{Wal2}, that the generic orbits are spheres, i.e. if $0\neq X\in \mathfrak{n}$, then \[M\cdot X=\{X'\in \mathfrak{n}:|X'|=|X|\}.\] 

It follows that the tangent space $\mathfrak m\cdot X$ to the $M$-orbit through $X$ is $l-2$ dimensional and thus \[\left(\mathfrak{m}\cdot X\right)^{\bot_{\mathfrak n}}:=\{X'\in\mathfrak n: B_\theta(X',\mathfrak{m}\cdot X)=0\}\] is 1-dimensional. Now it is a general fact that for any compact group $H$ and real, finite dimensional representation of $H$ on some vector space $V$, the space $\left(H\cdot v\right)^{\bot}$ meets every $H$-orbit in $V$, see \cite[Lem. 1]{Dad}. We call a linear subspace of $\mathfrak n$ a \index{section} \textit{section}, if it intersects every $M$-orbit. 
 A \textit{slice} \index{slice} is then a subset of a section which intersects every regular orbit exactly once.

\begin{lemma}\label{lem: mpol}
Let $0\neq X\in\mathfrak n$. Then $\left(\mathfrak{m}\cdot X\right)^{\bot_{\mathfrak n}}$ meets every orbit orthogonally, i.e $$B_\theta(Z\cdot X,X')=0$$ for all $Z\in\mathfrak m$ and $X'\in \left(\mathfrak{m}\cdot X\right)^{\bot_{\mathfrak n}}$.\footnote{The Lemma would still be true if the dimension of $\left(\mathfrak{m}\cdot X\right)^{\bot_{\mathfrak n}}$ is two. In particular for any semisimple $\mathfrak g$ of real rank one, $M$ acts polarly on $\mathfrak n$.}
\end{lemma}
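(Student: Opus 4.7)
At the purely formal level the stated identity $B_\theta(Z \cdot X, X') = 0$ is nothing but a rewriting of the definition of the orthogonal complement: $\mathfrak m \cdot X$ is the set $\{Z \cdot X : Z \in \mathfrak m\}$, and $X' \in (\mathfrak m \cdot X)^{\bot_{\mathfrak n}}$ is by definition the statement that $B_\theta(X', Z \cdot X) = 0$ for every $Z \in \mathfrak m$. Since $B_\theta$ is symmetric, this is the formula claimed. So the ``proof'' of the identity as written is a one-line unwinding of notation.

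What makes the lemma non-vacuous is the geometric content hidden in the phrase \emph{meets every orbit orthogonally}, and what the footnote about polar actions is really asserting: for any $X'$ lying in the slice $(\mathfrak m \cdot X)^{\bot_{\mathfrak n}}$, the tangent space $\mathfrak m \cdot X'$ to the $M$-orbit through $X'$ is $B_\theta$-orthogonal to the whole slice at the point $X'$. The plan is to reduce this stronger statement to the written identity by using the rank-one hyperbolic geometry.

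First I would record that $M \subset K$ acts on $\mathfrak n$ via $\mathrm{Ad}$, and that $B_\theta$ is $\mathrm{Ad}(K)$-invariant, so $M$ acts by $B_\theta$-isometries on $\mathfrak n$. Differentiating $B_\theta(\mathrm{Ad}(\exp tZ)W, \mathrm{Ad}(\exp tZ)W) = B_\theta(W,W)$ at $t = 0$ yields
\[
B_\theta(Z \cdot W, W) = 0 \qquad \text{for every } W \in \mathfrak n, \; Z \in \mathfrak m.
\]
Next I would invoke Kostant's double transitivity, recalled immediately before the lemma: the generic $M$-orbits in $\mathfrak n$ are the spheres $\{X'' \in \mathfrak n : |X''| = |X|\}$, so $\mathfrak m \cdot X$ has codimension one and hence $(\mathfrak m \cdot X)^{\bot_{\mathfrak n}} = \mathbb R X$. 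Any $X' \in (\mathfrak m \cdot X)^{\bot_{\mathfrak n}}$ is therefore a scalar multiple of $X$, which gives $(\mathfrak m \cdot X')^{\bot_{\mathfrak n}} = \mathbb R X' = \mathbb R X$; the slice is preserved as one moves along it. The required orthogonality between orbit and slice at $X'$ then collapses to the single relation $B_\theta(Z \cdot X', X') = 0$, which is precisely the isometry identity above.

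The only real obstacle is the one flagged in the footnote: if the slice were two-dimensional (as for other rank-one groups), the one-line reduction $(\mathfrak m \cdot X')^{\bot_{\mathfrak n}} = (\mathfrak m \cdot X)^{\bot_{\mathfrak n}}$ is no longer automatic, and one needs a finer argument using the stabiliser $M_{X'}$ to rule out cross-terms $B_\theta(Z \cdot X', Y)$ for $Y \in (\mathfrak m \cdot X)^{\bot_{\mathfrak n}}$ not proportional to $X'$. In the real hyperbolic setting imposed here ($G = SO_o(1,l)$ with $l \geq 3$) the slice is genuinely one-dimensional and this difficulty does not arise.
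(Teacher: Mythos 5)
Your proof is correct and reaches the same conclusion $(\mathfrak{m}\cdot X)^{\bot_{\mathfrak n}} = \mathbb{R}X$ as the paper's, relying on Kostant's double transitivity for the dimension count exactly as the paper does. The one genuine difference is the mechanism for the radial orthogonality $B_\theta(Z\cdot X, X) = 0$: the paper uses the invariance (associativity) of the Killing form to write $B_\theta(Z\cdot X, X) = -B([Z,X],\theta X) = -B(Z,[X,\theta X])$ and then invokes the structural fact $[X,\theta X]\in\mathfrak a$, which is $B_\theta$-orthogonal to $\mathfrak m\subset\mathfrak k$; you instead differentiate the $\mathrm{Ad}(M)$-invariance of $B_\theta$, which amounts to the skew-symmetry of $\mathrm{ad}(Z)$ on $(\mathfrak g,B_\theta)$ for $Z\in\mathfrak k$, viewed geometrically. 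These are two presentations of the same elementary fact, so the two proofs coincide in substance; your version is slightly more portable (it applies verbatim to any compact group acting orthogonally), while the paper's makes the root-space structure visible. Your opening remark --- that the displayed identity is, as literally written, a one-line unwinding of the definition of the orthogonal complement, and that the real content is the identification of the slice with $\mathbb{R}X$ --- is a fair and accurate observation that the paper does not make explicit; it is precisely this identification that the paper's proof sets out to establish.
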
 
\begin{proof}
The claim will follow as soon as we show $\left(\mathfrak{m}\cdot X\right)^{\bot_{\mathfrak n}}=\mathbb{R}X$. For this we compute for arbitrary $Z\in\mathfrak m$
\begin{eqnarray*}
B_\theta(Z\cdot X,X)&=& -B([Z,X,],\theta X)\\&=& -B(Z,[X,\theta X])\\ &=&0,
\end{eqnarray*}
since $[X,\theta X]\in\mathfrak a$ which is orthogonal to $\mathfrak k$ with respect to $B_\theta$. 
\end{proof}

In other words $M$ acts \textit{polarly} on $\mathfrak n$, that is, there is a section which intersects every $M$-orbit orthogonally. See also \cite[Chap.VI 30]{M} for more information on polar actions.

We note the following:
\begin{lemma}
Let $\mathfrak{s}_1$ and $\mathfrak{s}_2$ be 1-dimensional sections for $M$ acting on $N$. If
\begin{eqnarray*}
\mathfrak{g}&=&\mathfrak{k}\oplus \mathfrak{a}_1\oplus\mathfrak{s}_1\oplus\mathfrak{s}_1^{\bot_{\mathfrak{n}_1}}\\
&=& \mathfrak{k}\oplus\mathfrak{a}_2\oplus\mathfrak{s}_2\oplus\mathfrak{s}_2^{\bot_{\mathfrak{n}_2}},
\end{eqnarray*}
with $\mathfrak{s}_1\subset\mathfrak{n}_1$, $\mathfrak{s}_2\subset\mathfrak{n}_2$, then there is some $k\in K$ such that \[\mathfrak{a}_1^k=\mathfrak{a}_2 \mbox{ and } \mathfrak{s}_1^k=\mathfrak{s}_2.\]

In particular, if $\mathfrak k$ and $\mathfrak a_1=\mathfrak{a}_2$ are fixed, then there is some $m\in M$ such that $\mathfrak s_1^m=\mathfrak{s}_2$.
\end{lemma}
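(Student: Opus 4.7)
The plan is to first align the $\mathfrak{a}$-factors using uniqueness of the Iwasawa decomposition up to $K$-conjugation (equation (\ref{eq: iwauniq})), and then to align the $1$-dimensional sections using the $M$-transitivity on spheres in $\mathfrak{n}$ supplied by Kostant's double transitivity theorem recalled at the start of Section \ref{sec: mac}.

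First, since $G=KA_1N_1=KA_2N_2$ are two Iwasawa decompositions sharing the same maximal compact $K$, (\ref{eq: iwauniq}) yields $k_0\in K$ with
\[
\mathfrak{a}_1^{k_0}=\mathfrak{a}_2 \qquad\text{and}\qquad \mathfrak{n}_1^{k_0}=\mathfrak{n}_2.
\]
Under $\mathrm{Ad}(k_0)$ the centraliser $M_1=Z_K(A_1)$ is carried to $M_2=Z_K(A_2)$, so $\mathfrak{s}_1^{k_0}\subset\mathfrak{n}_2$ is again a $1$-dimensional section for the $M_2$-action on $\mathfrak{n}_2$, and $\mathfrak{s}_1^{k_0,\bot_{\mathfrak n_2}}=(\mathfrak{s}_1^{\bot_{\mathfrak n_1}})^{k_0}$.

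Next, I reduce to showing that any two $1$-dimensional sections $\mathfrak{s},\mathfrak{s}'\subset\mathfrak{n}_2$ for $M_2$ are $M_2$-conjugate. Choose $B_\theta$-unit vectors $X\in\mathfrak{s}$ and $X'\in\mathfrak{s}'$; by Kostant's theorem the $M_2$-orbit of a nonzero vector in $\mathfrak{n}_2$ is the full $B_\theta$-sphere of its norm, so there is $m\in M_2$ with $\mathrm{Ad}(m)X=X'$, and one-dimensionality of $\mathfrak{s},\mathfrak{s}'$ forces $\mathrm{Ad}(m)\mathfrak{s}=\mathfrak{s}'$. Applying this with $\mathfrak{s}=\mathfrak{s}_1^{k_0}$ and $\mathfrak{s}'=\mathfrak{s}_2$, and setting $k:=mk_0\in K$, the fact that $m\in M_2$ centralises $\mathfrak{a}_2$ gives $\mathfrak{a}_1^{k}=\mathfrak{a}_2$ and $\mathfrak{s}_1^{k}=\mathfrak{s}_2$, as required.

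The ``in particular'' statement now follows by simply omitting the first step: if $\mathfrak{a}_1=\mathfrak{a}_2$ (and hence, in the rank-one setup, $\mathfrak{n}_1=\mathfrak{n}_2$, up to absorbing the Weyl ambiguity into $k_0\in N_K(A)$ if $\mathfrak n_2=\mathfrak n_1^w$), one may take $k_0=e$ and obtains $k=m\in M$ directly. I do not foresee a genuine technical obstacle — the argument is just a combination of Iwasawa uniqueness and Kostant's transitivity — but some care is needed to track which of $M_1,M_2$ is being invoked after each conjugation, and to confirm that transporting the orthogonal complement $\mathfrak{s}_i^{\bot_{\mathfrak n_i}}$ under $\mathrm{Ad}(k)$ is consistent, which it is since $\mathrm{Ad}(k)$ preserves $B_\theta$.
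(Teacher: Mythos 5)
Your proof is correct and follows exactly the route the paper takes: invoke the uniqueness of the Iwasawa decomposition (\ref{eq: iwauniq}) to produce $k_0\in K$ aligning $\mathfrak a_1$ with $\mathfrak a_2$ and $\mathfrak n_1$ with $\mathfrak n_2$, note that $\mathfrak s_1^{k_0}$ is then a one-dimensional (hence automatically a section, by Kostant's transitivity) subspace of $\mathfrak n_2$, find $m\in M_2$ conjugating it to $\mathfrak s_2$, and observe that $m$ centralizes $\mathfrak a_2$. The paper leaves the $M_2$-conjugacy of two one-dimensional sections implicit; you spell out the Kostant step, but this is the same argument.
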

\begin{proof}
This follows since by (\ref{eq: iwauniq}) in Chapter \ref{chap: casimir} there is some $k\in K$ such that \[\mathfrak{a}_1^k=\mathfrak{a}_2 \mbox{ and } \mathfrak{s}_1^k+\left(\mathfrak{s}_1^{\bot_{\mathfrak{n}_1}}\right)^ k=\mathfrak{s}_2+\mathfrak{s}_2^{\bot_{\mathfrak{n}_2}}.\]

But $\mathfrak{s}_1^k$ is a one dimensional subspace of $\mathfrak n_2$, hence a section for $M_2:=Z_K(A_2)$ acting on $\mathfrak{n}_2$ and consequently there is some $m\in M_2$ with \[(\mathfrak{s}_1^k)^m=\mathfrak{s}_2.\]

Still of course, \[(\mathfrak{a}_1^k)^m=\mathfrak{a}_2^m=\mathfrak{a}_2.\]
\end{proof}


We fix \index{$X_1$, unit vector in $\mathfrak n$} $X_1\neq 0$ of length 1 with respect to $B_\theta(.,.)$ and \index{$\mathfrak{s}$, section in $\mathfrak n$} \[\mathfrak{s}:= (\mathfrak{m}\cdot X_1)^{\bot_\mathfrak{n}}= \mathbb{R}X_1.\] 

We denote by \index{$\mathfrak{s}'$, regular elements in $\mathfrak s$} \[\mathfrak{s}':= \mathbb{R}X_1\setminus\{0\}\] the set of \index{regular element} \textit{regular elements}, i.e. the set of elements with maximal orbit dimension. The first aim of this section is to find a decomposition of $\mathfrak{g}$ resp. $G$ according to the section $\mathfrak{s}$. Later we will study differential equations of invariant functions on $\mathfrak{s}$. 

Let \index{$\mathfrak{z}_{\mathfrak{m}}(\mathfrak{s})$, Lie algebra of $Z_M(\mathfrak s)$}
\begin{equation*}
\mathfrak{z}_{\mathfrak{m}}(\mathfrak{s}):=\{Z\in\mathfrak{m}:[Z,X']=0 \mbox{ for all } X'\in \mathfrak{s}\}.
\end{equation*}
be the centralizer of $\mathfrak{s}$ in $\mathfrak m$ and let \index{$\mathfrak{z}_{\mathfrak{m}}(\mathfrak{s})^{\bot_{\mathfrak{m}}}$, orthognal complement to $\mathfrak{z}_{\mathfrak{m}}(\mathfrak{s})$}
\begin{equation*}
\mathfrak{z}_{\mathfrak{m}}(\mathfrak{s})^{\bot_{\mathfrak{m}}}:=\{W\in\mathfrak{m}:B(Z,W)=0 \mbox{ for all } Z\in \mathfrak{z}_{\mathfrak{m}}(\mathfrak{s})\}.
\end{equation*}
be its orthogonal complement with respect to the Killing form. Finally,
\begin{equation*}
\mathfrak{s}^{\bot_\mathfrak{n}}:=\{Y\in\mathfrak n:B_\theta(X', Y)=0 \mbox{ for all } X'\in \mathfrak{s}\}.
\end{equation*}

We start by a lemma showing that $\mathfrak{z}_{\mathfrak{m}}(\mathfrak{s})^{\bot_{\mathfrak{m}}}$ is isomorphic to  $\mathfrak{s}^{\bot_{\mathfrak n}}$.
\begin{lemma}\label{l2}
The map $\mathrm{ad } X':\mathfrak{z}_{\mathfrak m}(\mathfrak{s})^{\bot_{\mathfrak{m}}}\to \mathfrak{s}^{\bot_\mathfrak{n}}$, $Z\mapsto [Z,X']$ is an isomorphism for all $X'\in\mathfrak{s}'$.
\end{lemma}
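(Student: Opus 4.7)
The plan is to verify the three standard ingredients for this kind of slice-type statement: well-definedness of the map, injectivity, and a dimension count that upgrades injectivity to bijectivity. All three should follow essentially from the basic bracket relations between $\mathfrak{m}$, $\mathfrak{a}$ and $\mathfrak{n}$, together with the explicit description of $M \cong SO(l-1)$ acting on $\mathfrak{n} \cong \mathbb{R}^{l-1}$ recorded in Chapter \ref{chap: hyper}.

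First I would check that $\mathrm{ad}\,X'$ actually lands in $\mathfrak{s}^{\bot_{\mathfrak{n}}}$. Since $\mathfrak{m}$ centralizes $\mathfrak{a}$, it preserves every root space, so $[\mathfrak{m},\mathfrak{n}]\subset\mathfrak{n}$; in particular $[Z,X']\in\mathfrak{n}$ for $Z\in\mathfrak{z}_{\mathfrak{m}}(\mathfrak{s})^{\bot_{\mathfrak{m}}}$ and $X'\in\mathfrak{s}'$. To see that this element is $B_\theta$-orthogonal to $\mathfrak{s}=\mathbb{R}X_1$, note that $X'=cX_1$ with $c\neq 0$, so it suffices to repeat the computation from the proof of Lemma \ref{lem: mpol}:
\begin{equation*}
B_\theta([Z,X'],X')=-B([Z,X'],\theta X')=-B(Z,[X',\theta X'])=0,
\end{equation*}
because $[X',\theta X']\in\mathfrak{a}$ and $\mathfrak{a}\perp_{B}\mathfrak{k}\supset\mathfrak{m}$.

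Next I would check injectivity. If $[Z,X']=0$ for some $Z\in\mathfrak{z}_{\mathfrak{m}}(\mathfrak{s})^{\bot_{\mathfrak{m}}}$, then since $X'$ spans $\mathfrak{s}$ we get $Z\in\mathfrak{z}_{\mathfrak{m}}(\mathfrak{s})$; combined with $Z\in\mathfrak{z}_{\mathfrak{m}}(\mathfrak{s})^{\bot_{\mathfrak{m}}}$ (and nondegeneracy of $B$ on $\mathfrak{m}\subset\mathfrak{k}$) this forces $Z=0$.

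Finally I would finish by a dimension count. Under the identifications of Chapter \ref{chap: hyper}, $\mathfrak{m}\cong\mathfrak{so}(l-1)$ acts on $\mathfrak{n}\cong\mathbb{R}^{l-1}$ by the standard representation, so the stabilizer of a nonzero vector $X_1$ is $\mathfrak{z}_{\mathfrak{m}}(\mathfrak{s})\cong\mathfrak{so}(l-2)$. Therefore
\begin{equation*}
\dim\mathfrak{z}_{\mathfrak{m}}(\mathfrak{s})^{\bot_{\mathfrak{m}}}=\binom{l-1}{2}-\binom{l-2}{2}=l-2=\dim\mathfrak{s}^{\bot_{\mathfrak{n}}},
\end{equation*}
so the injective linear map $\mathrm{ad}\,X'$ is in fact an isomorphism. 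The only real subtlety is the well-definedness of the codomain; once the Jacobi/invariance identity $B_\theta([Z,X'],X')=0$ is in hand, the rest is a rank–nullity argument combined with the explicit stabilizer computation for the standard $SO(l-1)$-action, for which the hypothesis $l\geq 3$ is crucial (otherwise $M$ is trivial and the statement is vacuous).
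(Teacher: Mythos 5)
Your proof is correct but reaches the conclusion by a genuinely different route than the paper. The paper's proof first invokes polarity of the $M$-action (Lemma \ref{lem: mpol} plus \cite[30.23]{M}) to get the orthogonal splitting $\mathfrak{n}=[\mathfrak{m},X']\oplus\mathfrak{s}$, which immediately gives $[\mathfrak{m},X']=\mathfrak{s}^{\bot_{\mathfrak{n}}}$ and hence surjectivity of $\mathrm{ad}\,X'$ once the $\mathfrak{z}_{\mathfrak{m}}(X')$-part is discarded; it then concludes by observing that both sides have dimension $\dim M-\dim Z_M(X')$ via the orbit-stabilizer isomorphism $M\cdot X'\cong M/Z_M(X')$, which yields injectivity by rank--nullity. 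You instead establish injectivity directly — kernel elements of $\mathrm{ad}\,X'$ land in $\mathfrak{z}_{\mathfrak{m}}(\mathfrak{s})\cap\mathfrak{z}_{\mathfrak{m}}(\mathfrak{s})^{\bot_{\mathfrak{m}}}=\{0\}$ by negative-definiteness of $B$ on $\mathfrak{m}\subset\mathfrak{k}$ — and then do an explicit count $\binom{l-1}{2}-\binom{l-2}{2}=l-2=\dim\mathfrak{s}^{\bot_{\mathfrak{n}}}$ using the concrete model $\mathfrak{m}\cong\mathfrak{so}(l-1)$, $\mathfrak{z}_{\mathfrak{m}}(\mathfrak{s})\cong\mathfrak{so}(l-2)$. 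The tradeoff: your version makes well-definedness explicit and is more self-contained (you don't lean on the external polar-action reference for $Z_M(X')=Z_M(\mathfrak{s})$), but the paper's argument is structurally cleaner and, as the paper notes immediately after the proof, extends verbatim to any subgroup $M'\subset M$ acting polarly on $\mathfrak{n}$; your hard-coded dimension count would have to be redone for each such $M'$.
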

\begin{proof}
Let $X'\in\mathfrak{s}'$. Since the action is polar, this implies \[Z_{M}(X')=Z_{M}(\mathfrak{s}),\] \cite[30.23]{M}. Also by polarity of the action we get the orthogonal decomposition 
\begin{equation*} \mathfrak{n}=[\mathfrak{m},X']\oplus \mathfrak{s},
\end{equation*} i.e. $[\mathfrak{m},X']=\mathfrak{s}^{\bot_\mathfrak{n}}$. Furthermore, \begin{equation*} [\mathfrak{m},X']=[\mathfrak{z}_{\mathfrak{m}}(X')\oplus\mathfrak{z}_{\mathfrak{m}}(X')^{\bot_{\mathfrak{m}}},X']=[\mathfrak{z}_{\mathfrak{m}}(X')^{\bot_{\mathfrak{m}}},X'],\end{equation*} 
where $\mathfrak{z}_{\mathfrak{m}}(X')$ is the Lie algebra of $Z_M(X')=Z_M(\mathfrak{s})$. That is, $\mathrm{ad }X'$ maps $\mathfrak{z}_{\mathfrak{m}}(\mathfrak{s})^{\bot_{\mathfrak{m}}}$ onto $\mathfrak{s}^{\bot_\mathfrak{n}}$ and both spaces have the same dimension, since \[M\cdot X'\cong M/Z_{M}(X').\] 
\end{proof}

The proof shows that lemma is also true for any subgroup $M'\subset M$ acting polarly on $\mathfrak n$, if we replace $\mathfrak s$ with a section for $M'$ acting on $\mathfrak n$.

Lemma \ref{l2} allows us now to write $\mathfrak g$ as a direct sum involving $\mathfrak{s}$ and $\mathfrak z(\mathfrak{s})$. This decomposition is analogous to the one derived from $K$ acting on $\mathfrak p$ for a semisimple Lie algebra $\mathfrak g$, see for example \cite[Chap.4]{GV}. 

\begin{proposition}\label{t1}
Let $\mathfrak{g}=\mathfrak{so}(1,l)=\mathfrak{k}\oplus\mathfrak{a}\oplus\mathfrak{n}$, $\mathfrak{m}=\mathfrak{z}_\mathfrak{k}(\mathfrak{a})$ and $\mathfrak{s}$ any section for $M=\exp\mathfrak m$ acting on $\mathfrak n$. We have the following direct decompositions for any $X'\in\mathfrak{s}'$:
\begin{eqnarray*}
\mathfrak{g}&=& (\mathfrak{z}_{\mathfrak{m}}(\mathfrak{s})^{\bot_{\mathfrak{m}}})^{\exp -X'}\oplus \mathfrak{a}\oplus\mathfrak{s}\oplus \mathfrak{k}\\&=&\left(\mathfrak{z}_{\mathfrak{m}}(\mathfrak{s})^{\bot_{\mathfrak{m}}}\oplus (1+\theta)\mathfrak{s}\right)^{\exp -X'}\oplus\mathfrak{s}\oplus\mathfrak k. 
\end{eqnarray*}
 
\end{proposition}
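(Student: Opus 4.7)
The strategy is to deduce both decompositions from a dimension count combined with a directness check via the standard Iwasawa projection $\mathfrak{g}=\mathfrak{k}\oplus\mathfrak{a}\oplus\mathfrak{n}$. Two structural features keep the computation short. First, since $\mathfrak{g}=\mathfrak{so}(1,l)$, the nilradical $\mathfrak{n}=\mathfrak{g}_\alpha$ is abelian, so the exponential series defining $\mathrm{Ad}(\exp -X')$ truncates after at most the second term when applied to $\mathfrak{m}\oplus(1+\theta)\mathfrak{s}$. Second, Lemma \ref{l2} provides the isomorphism $\mathrm{ad}X'\colon\mathfrak{z}_\mathfrak{m}(\mathfrak{s})^{\bot_\mathfrak{m}}\to\mathfrak{s}^{\bot_\mathfrak{n}}$, and $\mathfrak{n}=\mathfrak{s}\oplus\mathfrak{s}^{\bot_\mathfrak{n}}$ by polarity of the $M$-action. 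A dimension count then gives
\[\dim V+\dim\mathfrak{a}+\dim\mathfrak{s}+\dim\mathfrak{k}=(l-2)+1+1+\tfrac{l(l-1)}{2}=\tfrac{l(l+1)}{2}=\dim\mathfrak{g},\]
with $V:=(\mathfrak{z}_\mathfrak{m}(\mathfrak{s})^{\bot_\mathfrak{m}})^{\exp -X'}$; the same total is reached when $\mathfrak{a}$ is replaced by the one-dimensional subspace $((1+\theta)\mathfrak{s})^{\exp -X'}$. Thus only directness remains.

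For the first decomposition, for $Z\in\mathfrak{z}_\mathfrak{m}(\mathfrak{s})^{\bot_\mathfrak{m}}\subset\mathfrak{m}$ one has $[X',Z]\in\mathfrak{n}$ and $[X',[X',Z]]=0$ by abelianness of $\mathfrak{n}$, hence $\mathrm{Ad}(\exp -X')Z=Z+[Z,X']$ with $[Z,X']\in\mathfrak{s}^{\bot_\mathfrak{n}}$ by Lemma \ref{l2}. Any relation
\[(Z+[Z,X'])+H+X+W=0,\qquad Z\in\mathfrak{z}_\mathfrak{m}(\mathfrak{s})^{\bot_\mathfrak{m}},\ H\in\mathfrak{a},\ X\in\mathfrak{s},\ W\in\mathfrak{k},\]
splits under the Iwasawa projection: the $\mathfrak{a}$-part forces $H=0$; the $\mathfrak{n}$-part gives $[Z,X']+X=0$, and since $\mathfrak{n}=\mathfrak{s}\oplus\mathfrak{s}^{\bot_\mathfrak{n}}$ both summands vanish, so $Z=0$ by Lemma \ref{l2}; the $\mathfrak{k}$-part reduces to $Z+W=0$ and so $W=0$.

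For the second decomposition, pick a generator $Y\in\mathfrak{s}'$, so $X'=tY$ with $t\neq 0$. Both $Y$ and $X'$ lie in the abelian $\mathfrak{n}$, whence $[X',Y]=0$, while $[X',\theta Y]=t[Y,\theta Y]$ lies in $\mathfrak{a}\setminus\{0\}$ by the computation $[Y,\theta Y]\in\mathbb{R}H_\alpha$ from Section \ref{chap: casimir}. Setting $H':=-[X',\theta Y]$, the truncated exponential yields
\[\mathrm{Ad}(\exp -X')(Y+\theta Y)=(Y+\theta Y)+H'+\tfrac{1}{2}\alpha(H')\,X',\]
using $[H',X']=\alpha(H')X'$ for the second-order term and the vanishing $[X',X']=0$ to kill higher orders. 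In the analogous relation $(Z+[Z,X'])+\mu\bigl[(Y+\theta Y)+H'+\tfrac{1}{2}\alpha(H')X'\bigr]+X+W=0$ the $\mathfrak{a}$-projection now reads $\mu H'=0$; since $H'\neq 0$ this forces $\mu=0$, and the rest of the argument reduces to the first case.

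The only genuinely nontrivial ingredient is the explicit formula for $\mathrm{Ad}(\exp -X')(Y+\theta Y)$, and the single delicate step is the identification $[X',\theta Y]\in\mathfrak{a}$. This is where the rank-one hypothesis enters: without the one-dimensionality of $\mathfrak{s}$, the bracket $[X',\theta Y]$ could acquire an $\mathfrak{m}$-component because $[\mathfrak{g}_\alpha,\mathfrak{g}_{-\alpha}]\subset\mathfrak{a}\oplus\mathfrak{m}$ in general; it is only the collinearity of $X'$ and $Y$ in $\mathfrak{s}$ that collapses this bracket into $\mathfrak{a}$ and simultaneously guarantees $H'\neq 0$. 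A minor bookkeeping issue worth flagging is that $Z\in\mathfrak{m}$ and $W\in\mathfrak{k}$ both sit inside $\mathfrak{k}$, so one must remember that the $V$-component of an element of the sum is uniquely determined by its $\mathfrak{m}$-part $Z$ (again via Lemma \ref{l2}) before extracting $W=0$ from $Z+W=0$.
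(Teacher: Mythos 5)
Your proof is correct and follows essentially the same plan as the paper's: reduce to dimension count plus a directness check, compute $\mathrm{Ad}(\exp -X')Z = Z+[Z,X']$ using the abelianness of $\mathfrak{n}$, and invoke Lemma~\ref{l2} together with the polarity decomposition $\mathfrak{n}=\mathfrak{s}\oplus\mathfrak{s}^{\bot_\mathfrak{n}}$ to force $Z=0$. For the second decomposition the paper is much terser --- it merely asserts $\left((1+\theta)\mathfrak{s}\right)^{\exp -X'}\oplus\mathfrak{k}\supset\mathfrak a$ --- whereas you spell out the truncated $\mathrm{Ad}$-expansion of $(1+\theta)\mathfrak{s}$ explicitly: $\mathrm{Ad}(\exp -X')(Y+\theta Y)=Y+\theta Y+H'+\tfrac{1}{2}\alpha(H')X'$ with $H'=-[X',\theta Y]\in\mathfrak{a}\setminus\{0\}$. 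That extra detail is actually informative, since the $\tfrac{1}{2}\alpha(H')X'$ term shows that the $\mathfrak{s}$-summand, not just $\mathfrak{k}$, is needed to absorb $\mathrm{Ad}(\exp -X')((1+\theta)\mathfrak{s})$ modulo $\mathfrak{a}$; the paper's inclusion as literally stated omits the $\mathfrak{s}$-contribution, while your relation-and-projection argument makes the bookkeeping transparent. Your remark that the rank-one (one-dimensionality of $\mathfrak{s}$) hypothesis is what forces $[X',\theta Y]$ into $\mathfrak{a}$ rather than $\mathfrak{a}\oplus\mathfrak{m}$ is also a correct and worthwhile observation that the paper leaves implicit.
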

\begin{proof}
Let $\mathfrak s=\mathbb{R}X_1$ for some $0\neq X_1\in\mathfrak{n}$. Since the sum $\mathfrak{a}\oplus\mathfrak{n}\oplus\mathfrak{k}$ is direct, it suffices to show that $(\mathfrak{z}_\mathfrak{m}(\mathfrak{s})^\bot_\mathfrak{n})^{\exp -X'}$ is not contained in $\mathfrak{a}\oplus\mathfrak{s}\oplus\mathfrak{k}$. The first claim then follows by dimension counting, see Lemma \ref{l2}. 

So let $Z^{\exp-X'}\in \left(\mathfrak{z}_\mathfrak{m}(\mathfrak{s})^{\bot_{\mathfrak{m}}}\right)^{\exp -X'}\cap \mathfrak{a}\oplus\mathfrak{s}\oplus\mathfrak{k}$. Then \index{$Z^{\exp -X'}$, $Z^{\exp -X'}=\mathrm{Ad}(-X')Z$}
\begin{eqnarray*}
Z^{\exp -X'}&=& \mathrm{Ad}(\exp -X')Z\\&=&\sum_{l= 0}^\infty\frac{(\mathrm{ad}-X')^l}{l!}Z\\
&=& Z+[Z,X']-\frac{[X',[Z,X']]}{2}+\ldots\\
&=&Z+[Z,X'],
\end{eqnarray*} as $\mathfrak{n}$ is abelian and $[Z,X']\in\mathfrak n$. That is, $Z^{\exp -X'} \in \mathfrak{m}\oplus\mathfrak{s}^{\bot_\mathfrak{n}}$ for all $Z\in\mathfrak{m}$. So if we assume \[ Z^{\exp-X'}=Z+[Z,X']=H+rX_1+W\in\mathfrak{a}\oplus\mathfrak{s}\oplus\mathfrak{k},\] we see that \[[Z,X']=rX_1\in \mathfrak{s}^{\bot_\mathfrak{n}}\cap \mathfrak{s}=0\]

But by Lemma \ref{l2} this is true for $Z\in\mathfrak z_{\mathfrak{m}}(\mathfrak{s})^{\bot_{\mathfrak{m}}}$ precisely iff $Z=0$. For the second claim, we note that $\dim\mathfrak{s}=1$, i.e. $\mathfrak{s}=\mathbb{R}\cdot X'$ for all $X'\in\mathfrak{s}'$. Hence, \begin{equation*} \left((1+\theta)\mathfrak{s}\right)^{\exp -X'}\oplus\mathfrak{k}\supset\mathfrak a\end{equation*} for all $X'\in\mathfrak{s}'$.
\end{proof}

The next lemma is an easy consequence of the fact that spheres are the generic orbits of $M$ in $\mathfrak n$. 
\begin{lemma}\label{lem: radial main}
Let $Y\in\mathfrak{s}^{\bot_\mathfrak{n}}$, then there is an analytic function $f$ on $\mathfrak{s}'$ and a (unique) \index{$Z_Y$, element of $\mathfrak{z}_{\mathfrak{m}}(\mathfrak{s})^{\bot_{\mathfrak{m}}}$} $Z=Z_Y\in \mathfrak{z}_{\mathfrak{m}}(\mathfrak{s})^{\bot_{\mathfrak{m}}}$ such that \begin{equation*} Y= f(\exp X')Z^{\exp -X'}-f(\exp X')Z\end{equation*}
for all $X'\in\mathfrak{s}'$ 
\end{lemma}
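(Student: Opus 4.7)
The plan is to reduce the desired identity to a statement purely about $\mathfrak n$, using the explicit formula for $\mathrm{Ad}(\exp -X')Z$ that was established during the proof of Proposition~\ref{t1}. Because $G=SO_o(1,l)$ has a one-dimensional restricted root system of type $A_1$, its nilpotent radical $\mathfrak n$ is abelian, so the exponential series for $\mathrm{ad}(-X')$ collapses after one term and
$$Z^{\exp -X'} \;=\; Z+[Z,X'].$$
Consequently the expression $f(\exp X')Z^{\exp -X'}-f(\exp X')Z$ simplifies to $f(\exp X')\,[Z,X']$, and the identity to be proved becomes simply
$$Y \;=\; f(\exp X')\,[Z,X'] \qquad \text{for all } X'\in\mathfrak s'.$$

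The second step is to construct $Z_Y$ via Lemma~\ref{l2}. Fix the generator $X_1$ of $\mathfrak s$; then $\mathrm{ad}(X_1)\colon \mathfrak z_{\mathfrak m}(\mathfrak s)^{\bot_{\mathfrak m}}\to \mathfrak s^{\bot_\mathfrak n}$ is a linear isomorphism, and I define $Z_Y$ to be the unique preimage of $Y$ under this map, so that $[Z_Y,X_1]=Y$. Uniqueness of $Z_Y$ is immediate, which yields the uniqueness clause of the lemma. Writing any $X'\in\mathfrak s'$ as $X'=tX_1$ with $t\in\mathbb R\setminus\{0\}$, bilinearity of the bracket gives $[Z_Y,X']=t\,Y$, so the required identity reduces to $Y=t\,f(\exp tX_1)\,Y$.

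It therefore suffices to define $f$ on $\mathfrak s'$ by $f(\exp tX_1):=1/t$. Since $\mathfrak s'=\mathbb R^\ast X_1$, this is manifestly real analytic, and it is independent of the sign ambiguity in the choice of generator: replacing $X_1$ by $-X_1$ flips the sign of both $Z_Y$ (by linearity of $\mathrm{ad}$) and of the parameter $t$, leaving the product $f(\exp X')[Z_Y,X']$ unchanged.

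There is no genuine obstacle here once Proposition~\ref{t1} and Lemma~\ref{l2} are in hand. The only point that merits care is the one just mentioned, namely checking that the construction is intrinsic (not dependent on the orientation of $X_1$); everything else is a direct verification. The proof is essentially a two-line calculation: abelianness of $\mathfrak n$ turns $Z^{\exp -X'}-Z$ into $[Z,X']$, and Lemma~\ref{l2} then solves $Y=t\,f(\exp tX_1)\,[Z_Y,X_1]$ explicitly by the pair $(Z_Y,1/t)$.
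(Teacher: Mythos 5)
Your proposal is correct and takes essentially the same route as the paper: use abelianness of $\mathfrak n$ to collapse $Z^{\exp -X'}-Z$ to $[Z,X']$, invoke Lemma~\ref{l2} to pick the unique $Z_Y$ with $[Z_Y,X_1]=Y$, and set $f(\exp tX_1)=1/t$. The paper additionally treats the degenerate case $\dim\mathfrak n=1$ (setting $f\equiv 0$), which is moot under the chapter's standing assumption $l\geq 3$; your extra check that the construction is insensitive to the sign of the chosen generator $X_1$ is harmless but not needed, since the lemma is an existence statement with $X_1$ already fixed.
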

\begin{proof}
If the dimension of $\mathfrak{n}$ is one, then $\mathfrak{s}=\mathfrak{n}$, i.e. $\mathfrak{s}^{\bot_{\mathfrak{n}}}$ is trivial and we set \[f\equiv 0.\] If $\dim\mathfrak n>1$ we define \[f(\exp X')=\frac{1}{r}\] for $X'=rX_1\in \mathfrak{s}'$. By Lemma \ref{l2} there is a unique $Z=Z_{Y}\in\mathfrak{z}_{\mathfrak{m}}(\mathfrak{s})^{\bot_{\mathfrak{m}}}$ such that $[Z_{Y},X_1]=Y$. Now \[Z^{\exp -X'}=Z+[Z,X']=Z+r[Z,X_1]\] and thus \begin{equation*} f(\exp X') Z_{Y}^{\exp -X'}-f(\exp X')Z_{Y}=[Z,X_1]=Y.\end{equation*} 
\end{proof}


\section{Polar decompositions and radial parts for $G=M (A\exp\mathfrak{s}) K$}\label{sec: rapo}

We denote by  $U(\mathfrak{g})$ the universal enveloping algebra of $\mathfrak{g}$ and by \index{$S(\mathfrak{g})$, symmetric algebra of $\mathfrak g$} $S(\mathfrak{g})$ the symmetric algebra symmetric algebra. There is a linear bijection \index{$lambda$@$\lambda:S(\mathfrak{g})\to U(\mathfrak{g})$, symmetrizer}\[\lambda:S(\mathfrak{g})\to U(\mathfrak{g})\] called \textit{symmetrization}, see \cite[p.161]{Wr1}. 

From the decomposition $G=M (A\exp\mathfrak{s}) K$ we want to obtain a theory of radial parts for differential operators on functions, which are left-$M$- and right $K$-invariant. That is, we construct for given $D\in U(\mathfrak{g})$ some differential operator $\delta(D)$ on $A\exp\mathfrak{s}'$ such that \index{$delta(D)$@$\delta(D)$, radial part}
\begin{equation*} 
D\phi=\delta(D)\phi
\end{equation*} on $A\exp\mathfrak{s}'$ for all $\phi\in C^\infty(M\backslash G/K)$. This theory is developed in analogy with the theory for $G=KAK$, see for example \cite[Chap.9]{Wr2} or \cite[Chap.4]{GV}.

\begin{proposition}
The mapping $\psi :M\times A\exp \mathfrak{s}\times K\to G,(m,ax,k)\mapsto maxk$ is surjective and a submersion on $M\times A\exp \mathfrak{s}'\times K$. 
\end{proposition}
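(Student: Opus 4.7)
The plan is to handle surjectivity and the submersion property separately; both reduce to structural facts already established in Section \ref{sec: mac}.

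For surjectivity I would begin with the Iwasawa decomposition $G = ANK$: any $g \in G$ factors as $g = ank$ with $n = \exp Y$, $Y \in \mathfrak{n}$. Since $\mathfrak{s} = \mathbb{R} X_1$ is a section for the adjoint action of $M$ on $\mathfrak{n}$ (via Kostant's double transitivity, used in Lemma \ref{lem: mpol}), there exist $m \in M$ and $X \in \mathfrak{s}$ with $Y = \mathrm{Ad}(m) X$, hence $n = m (\exp X) m^{-1}$. Because $M = Z_K(A)$ centralises $A$, this yields
\[ g = a m (\exp X) m^{-1} k = m (a \exp X) (m^{-1} k), \]
which exhibits $g$ as $\psi(m, a \exp X, m^{-1} k)$.

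For the submersion claim at $(m_0, a_0 x_0, k_0)$ with $x_0 = \exp X' \in \exp \mathfrak{s}'$, I would compute $d\psi$ by differentiating the four-parameter curve $m_0 e^{t Z_1} \cdot a_0 e^{t H} \cdot x_0 e^{t X} \cdot k_0 e^{t Z_2}$ at $t = 0$, then pull back by $L_{g_0^{-1}}$ (where $g_0 = m_0 a_0 x_0 k_0$) and apply $\mathrm{Ad}(k_0)$ to identify the image with a subspace of $\mathfrak{g}$. A routine Maurer--Cartan style bookkeeping identifies this image as
\[ \mathrm{Ad}((a_0 x_0)^{-1}) \mathfrak{m} + \mathrm{Ad}(x_0^{-1}) \mathfrak{a} + \mathfrak{s} + \mathfrak{k}. \]
Using that $A$ and $M$ commute the first summand collapses to $\mathrm{Ad}(x_0^{-1}) \mathfrak{m}$, and since $\mathfrak{n}$ is abelian one has $\mathrm{Ad}(\exp(-X')) H = H + \alpha(H) X'$, so $\mathrm{Ad}(x_0^{-1}) \mathfrak{a} + \mathfrak{s} = \mathfrak{a} + \mathfrak{s}$. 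The image of $d\psi$ therefore contains $\mathrm{Ad}(x_0^{-1}) \mathfrak{m} + \mathfrak{a} + \mathfrak{s} + \mathfrak{k}$, and since $(\mathfrak{z}_\mathfrak{m}(\mathfrak{s})^{\bot_\mathfrak{m}})^{\exp(-X')} \subset \mathrm{Ad}(x_0^{-1}) \mathfrak{m}$, this subspace already exhausts $\mathfrak{g}$ by the first decomposition in Proposition \ref{t1}.

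In short, both parts are essentially packaging of work already done: Iwasawa plus polarity gives surjectivity, while the submersion claim at regular points is a restatement of Proposition \ref{t1}. I do not foresee a genuine obstacle; the only care needed is in the left-translation bookkeeping for the differential, and in noting that the regularity hypothesis $x_0 \in \exp \mathfrak{s}'$ (i.e.\ $X' \neq 0$) is precisely what makes Proposition \ref{t1} applicable.
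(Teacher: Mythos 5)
Your proof is correct, and the submersion argument is essentially the same as the paper's: both reduce it to the direct sum decomposition
\[\mathfrak{g}=(\mathfrak{z}_{\mathfrak{m}}(\mathfrak{s})^{\bot_{\mathfrak{m}}})^{\exp -X'}\oplus\mathfrak{a}\oplus\mathfrak{s}\oplus\mathfrak{k}\]
from Proposition~\ref{t1}, applied after conjugating the image of $d\psi$ by $\mathrm{Ad}(k)$. The one visible divergence in the differential computation is a matter of parametrisation: the paper takes $R\in\mathfrak{a}\oplus\mathfrak{s}$ and differentiates along the curve $ax\exp tR$ inside the group $A\exp\mathfrak{s}$, which produces the tangent contribution $\mathfrak{a}\oplus\mathfrak{s}$ immediately; you split the middle factor as $A\times\exp\mathfrak{s}$ and differentiate along $a_0e^{tH}\cdot x_0e^{tX}$, which gives $\mathrm{Ad}(x_0^{-1})\mathfrak{a}+\mathfrak{s}$ and requires your extra observation that $\mathrm{Ad}(\exp(-X'))H=H+\alpha(H)X'$ so this collapses to $\mathfrak{a}+\mathfrak{s}$. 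That extra step is correct and harmless, just a bit more work. The more substantive difference is that you supply a genuine surjectivity argument — Iwasawa decomposition, writing $n=m(\exp X)m^{-1}$ via the polar (section) property of the $M$-action on $\mathfrak{n}$, and commuting $m$ past $a$ — whereas the paper's proof only establishes the submersion property and leaves surjectivity implicit. So your version is, if anything, a more complete treatment of the stated proposition.
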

\begin{proof}
For $Z\in\mathfrak m,R\in\mathfrak a\oplus\mathfrak{s}, W\in\mathfrak k$, the differential $(d\psi)_{m,ax,k}$ computes to
\begin{eqnarray*}
(d\psi)_{m,ax,k}(Z,R,W)&=& (d\psi)_{m,ax,k}(Z,0,0)+ (d\psi)_{m,ax,k}(0,R,0)+(d\psi)_{m,ax,k}(0,0,W)\\&=& \frac{d}{dt}|_{t=0}\psi(m\exp tZ,ax,k)+\frac{d}{dt}|_{t=0}\psi(m,ax\exp tR,k)\\&&+\frac{d}{dt}|_{t=0}\psi(m,ax,k\exp tW)\\&=& Z^{(axk)^{-1}}+R^{k^{-1}}+W\\ &=& Z^{(xk)^{-1}}+R^{k^{-1}}+W,
\end{eqnarray*}
since $\frac{d}{dt}|_{t=0}\psi(g,\exp tL,h)=L^{h^{-1}}$ for all $L\in\mathfrak g$, $g,h\in G$ and since $M$ centralizes $A$. Thus,
\begin{equation*} (\mbox{Ad}(k)\circ (d\psi)_{m,ax,k})(Z,R,W)=Z^{x^{-1}}+R+W^{k}\end{equation*} and because $\mbox{Ad}(k)$ is an isomorphism of $\mathfrak{g}$, the surjectivity of $d\psi_{m,ax,k}$ follows from the decomposition \begin{equation*}\mathfrak{g}=\mathfrak{m}^{\exp -X'}+\mathfrak{a}+\mathfrak{s}+ \mathfrak{k},\end{equation*} which is valid for all $X'\in\mathfrak{s}'$.
\end{proof}

The decomposition of $\mathfrak g$ and the \textit{Poincar\'{e}-Birkhoff-Witt}-Theorem together yield a decomposition of the universal enveloping algebra which we view as the algebra of differential operators on $C^\infty(G)$ with constant coefficients.

\begin{lemma}\label{lem: unideco}
We have the following decompositions 
\begin{eqnarray*}
U(\mathfrak{g})&=& U(\mathfrak{a}\oplus\mathfrak{s})U(\mathfrak{k})\oplus \left((\mathfrak{s}^{\bot_\mathfrak{n}})U(\mathfrak{g})\right)\\
&=& U(\mathfrak{a}\oplus\mathfrak{s})\oplus \left((\mathfrak{s}^{\bot_\mathfrak{n}})U(\mathfrak{g})+U(\mathfrak{g})\mathfrak{k}\right).
\end{eqnarray*}
with projections \index{$pi1$@$\pi_1$, projection from $U(\mathfrak g)$ to $U(\mathfrak{a}\oplus\mathfrak{s})U(\mathfrak{k})$} \[\pi_1:U(\mathfrak{g})\to U(\mathfrak{a}\oplus\mathfrak{s})U(\mathfrak{k})\] and \index{$pi2$@$\pi_2$, projection from $U(\mathfrak g)$ to $U(\mathfrak{a}\oplus\mathfrak{s})$} 
\[\pi_2:U(\mathfrak{g})\to U(\mathfrak{a}\oplus\mathfrak{s}).\] Furthermore, \[\pi_1\equiv \pi_2 \mbox{ }\mathrm{ mod }\mbox{ } (U(\mathfrak{a}\oplus\mathfrak{s})U(\mathfrak{k})\mathfrak{k}).\]
\end{lemma}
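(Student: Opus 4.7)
The plan is to prove both decompositions and the congruence simultaneously via a refined Poincaré–Birkhoff–Witt argument. Note that $\mathfrak{a}\oplus\mathfrak{s}$ is a $2$-dimensional Lie subalgebra of $\mathfrak{g}$ (since $[\mathfrak{a},\mathfrak{s}]\subset\mathfrak{s}$), so $U(\mathfrak{a}\oplus\mathfrak{s})$ is an honest subalgebra of $U(\mathfrak{g})$. Choosing an ordered basis of $\mathfrak{g}$ where elements of $\mathfrak{s}^{\bot_\mathfrak{n}}$ come first, then of $\mathfrak{a}\oplus\mathfrak{s}$, then of $\mathfrak{k}$, PBW gives a vector-space isomorphism
\[
U(\mathfrak{g})\;\cong\;U(\mathfrak{s}^{\bot_\mathfrak{n}})\,\otimes\,U(\mathfrak{a}\oplus\mathfrak{s})\,\otimes\,U(\mathfrak{k}),
\]
where the tensor product corresponds to multiplication inside $U(\mathfrak{g})$. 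Splitting each outer factor by the augmentation, $U(\mathfrak{s}^{\bot_\mathfrak{n}})=\mathbb{C}\oplus\mathfrak{s}^{\bot_\mathfrak{n}}U(\mathfrak{s}^{\bot_\mathfrak{n}})$ and $U(\mathfrak{k})=\mathbb{C}\oplus U(\mathfrak{k})\mathfrak{k}$, produces a four-fold direct sum $U(\mathfrak{g})=T_1\oplus T_2\oplus T_3\oplus T_4$ with
\[
T_1=U(\mathfrak{a}\oplus\mathfrak{s}),\;\; T_2=\mathfrak{s}^{\bot_\mathfrak{n}}U(\mathfrak{s}^{\bot_\mathfrak{n}})\cdot U(\mathfrak{a}\oplus\mathfrak{s}),
\]
\[
T_3=U(\mathfrak{a}\oplus\mathfrak{s})\cdot U(\mathfrak{k})\mathfrak{k},\;\; T_4=\mathfrak{s}^{\bot_\mathfrak{n}}U(\mathfrak{s}^{\bot_\mathfrak{n}})\cdot U(\mathfrak{a}\oplus\mathfrak{s})\cdot U(\mathfrak{k})\mathfrak{k}.
\]

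For the first decomposition, the identification $U(\mathfrak{a}\oplus\mathfrak{s})U(\mathfrak{k})=T_1\oplus T_3$ is immediate from the PBW form. To see that the left ideal $\mathfrak{s}^{\bot_\mathfrak{n}}U(\mathfrak{g})$ equals $T_2\oplus T_4$, one inclusion is obvious, and for the other I would take $Yu$ with $Y\in\mathfrak{s}^{\bot_\mathfrak{n}}$ and expand $u$ in PBW form. Here the rank-one assumption enters crucially: since $\mathfrak{n}$ is abelian in the real hyperbolic setting, $Y$ commutes with every $\mathfrak{s}^{\bot_\mathfrak{n}}$- and $\mathfrak{s}$-factor in $u$, and bringing $Y$ into PBW order produces no lower-order commutator terms that could escape $T_2\oplus T_4$. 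Combined with $T_1\oplus T_3$, this yields the first direct-sum decomposition and defines $\pi_1$.

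For the second decomposition, the inclusion $T_2+T_3+T_4\subset\mathfrak{s}^{\bot_\mathfrak{n}}U(\mathfrak{g})+U(\mathfrak{g})\mathfrak{k}$ is immediate, and for the converse, given $u\cdot W$ with $W\in\mathfrak{k}$, writing $u$ in PBW form and multiplying on the right by $W$ produces only elements whose rightmost factor lies in $U(\mathfrak{k})\mathfrak{k}$, landing the product in $T_3+T_4$. The direct-sum statement then reduces to $T_1\cap(T_2+T_3+T_4)=0$, which is immediate from the four-fold direct sum. This defines $\pi_2$.

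Finally, for the congruence, if $u=u_1+u_2+u_3+u_4$ with $u_i\in T_i$, then by the two identifications above $\pi_1(u)=u_1+u_3$ and $\pi_2(u)=u_1$, so $\pi_1(u)-\pi_2(u)=u_3\in T_3=U(\mathfrak{a}\oplus\mathfrak{s})U(\mathfrak{k})\mathfrak{k}$, exactly the claimed congruence. The main obstacle is the bookkeeping in step two, i.e.\ verifying that the left ideal $\mathfrak{s}^{\bot_\mathfrak{n}}U(\mathfrak{g})$ and the right ideal $U(\mathfrak{g})\mathfrak{k}$ line up cleanly with the four PBW pieces rather than leaking into $T_1$; this is exactly where the abelianness of $\mathfrak{n}$ (hence the rank-one hyperbolic hypothesis) is used.
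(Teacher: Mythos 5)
Your proof is correct and takes essentially the same route as the paper: both apply Poincar\'e--Birkhoff--Witt to factor $U(\mathfrak g)$ as a triple tensor product of the enveloping algebras of $\mathfrak{s}^{\bot_\mathfrak{n}}$, $\mathfrak{a}\oplus\mathfrak{s}$ and $\mathfrak{k}$, split the outer factors by their augmentation ideals, and read off the two direct-sum decompositions and the congruence $\pi_1\equiv\pi_2 \bmod U(\mathfrak{a}\oplus\mathfrak{s})U(\mathfrak{k})\mathfrak{k}$. The only cosmetic difference is that you fix a single PBW ordering (with $\mathfrak{s}^{\bot_\mathfrak{n}}$ leftmost) and do the monomial bookkeeping by hand, whereas the paper uses two different orderings and invokes the standard splitting lemma $U(\mathfrak{g})=U(\mathfrak{g}_1)\oplus\mathfrak{g}_2U(\mathfrak{g})=U(\mathfrak{g}_1)\oplus U(\mathfrak{g})\mathfrak{g}_2$ from Varadarajan to organize the re-ordering.
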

\begin{proof}

We modify the proof of \cite[Lemma 2.6.6.]{GV}.
For the second decomposition we use the direct decomposition of $\mathfrak g$

\begin{equation*}
\mathfrak g=\mathfrak{a}\oplus \mathfrak{s} \oplus\mathfrak{s}^{\bot_\mathfrak{n} }\oplus\mathfrak{k}.
\end{equation*}

It follows by the Poincar\'{e}-Birkhoff-Witt-Theorem, see \cite[Th.7.1.9]{HN}, that the mapping \[U(\mathfrak{a}\oplus \mathfrak{s})\otimes U(\mathfrak{s}^{\bot_\mathfrak{n} })\otimes U(\mathfrak{k})\to U(\mathfrak g)\mbox{ , } a\otimes b\otimes c\mapsto abc\] defines a linear isomorphism. Because $\mathfrak n$ is abelian, $\mathfrak{s}$ and $\mathfrak{s}^{\bot_\mathfrak{n}}$ form indeed subalgebras of $\mathfrak g$. Furthermore,
\begin{eqnarray*}
U(\mathfrak g)&=& U(\mathfrak{a}\oplus \mathfrak{s}) U(\mathfrak{s}^{\bot_\mathfrak{n} })U(\mathfrak{k})
\\&=& U(\mathfrak a\oplus\mathfrak{s})U(\mathfrak{s}^{\bot_\mathfrak{n}})\oplus U(\mathfrak a\oplus \mathfrak{s})U(\mathfrak{s}^{\bot_\mathfrak{n}})U(\mathfrak{k})\mathfrak{k}\\&=&
U(\mathfrak{a}\oplus\mathfrak{s})\oplus\mathfrak{s}^{\bot_\mathfrak{n}}U(\mathfrak{s}^{\bot_\mathfrak{n}})U(\mathfrak a\oplus \mathfrak{s}) \oplus U(\mathfrak g)\mathfrak{k},
\end{eqnarray*}
since if $\mathfrak{g}$ is the direct sum of two subalgebras $\mathfrak{g}_1$ and $\mathfrak{g}_2$, then \[U(\mathfrak{g})=U(\mathfrak{g}_1)\oplus \mathfrak{g}_2U(\mathfrak{g})=U(\mathfrak{g}_1)\oplus U(\mathfrak{g})\mathfrak{g}_2,\] see \cite[Cor. 3.2.7.]{V}. If we apply this to \[\mathfrak{s}^{\bot_\mathfrak{n}}U(\mathfrak g)\subset U(\mathfrak g),\] we get

 
\begin{equation*}
\mathfrak{s}^{\bot_\mathfrak{n}}U(\mathfrak g)\subset \mathfrak{s}^{\bot_\mathfrak{n}}U(\mathfrak{s}^{\bot_\mathfrak{n}})U(\mathfrak a\oplus \mathfrak{s}) \oplus U(\mathfrak g)\mathfrak{k},
\end{equation*}
i.e. 
\begin{equation*}
\mathfrak{s}^{\bot_\mathfrak{n}}U(\mathfrak g)+U(\mathfrak g)\mathfrak{k}=\mathfrak{s}^{\bot_\mathfrak{n}}U(\mathfrak{s}^{\bot_\mathfrak{n}})U(\mathfrak a\oplus \mathfrak{s}) \oplus U(\mathfrak g)\mathfrak{k}.
\end{equation*}

For the first decomposition we proceed as before by using 
\begin{equation*}
U(\mathfrak g)\cong  U(\mathfrak{a}\oplus \mathfrak{s})\otimes U(\mathfrak{k})\otimes U(\mathfrak{s}^{\bot_\mathfrak{n} })
\end{equation*}
from which we derive this time
\begin{eqnarray*}
U(\mathfrak g)&=& U(\mathfrak{a}\oplus \mathfrak{s})U(\mathfrak{k})\oplus \mathfrak{s}^{\bot_\mathfrak{n}}U(\mathfrak{a}\oplus \mathfrak{s})U(\mathfrak{k})U(\mathfrak{s}^{\bot_\mathfrak{n} })
\\&=& U(\mathfrak{a}\oplus \mathfrak{s})U(\mathfrak{k})\oplus \mathfrak{s}^{\bot_\mathfrak{n}}U(\mathfrak g).
\end{eqnarray*}


The last claim \[\pi_1(D)\equiv \pi_2(D) \mbox{ }\mathrm{mod}\mbox{ } U(\mathfrak a\oplus\mathfrak{s})U(\mathfrak k)\mathfrak{k}\] follows since $$U(\mathfrak a\oplus \mathfrak s)U(\mathfrak k)=U(\mathfrak a\oplus\mathfrak s)\oplus U(\mathfrak a\oplus\mathfrak s)U(\mathfrak k)\mathfrak{k}$$ and since $U(\mathfrak a\oplus\mathfrak s)U(\mathfrak k)\subset U(\mathfrak g)$.
\end{proof}

For the definition of a radial part we need to express any 
$D\in U( \mathfrak{g} )$ in polar coordinates adapted to the decomposition of $G=M(A\exp \mathfrak s)K$.
The following theorem is the major ingredient for this.

\begin{proposition}\label{th: bijec}
Let $\mathfrak{g}=\mathfrak{so}(1,l)=\mathfrak{a}\oplus\mathfrak{n}\oplus\mathfrak{k}$, $\mathfrak{m}=\mathfrak{z}_\mathfrak{k}(\mathfrak{a})$ and $\mathfrak s$ a section for $M=\exp\mathfrak m$ acting on $\mathfrak n$. The map \index{$Gammaa$@$\Gamma_{a\exp X'}$, map from $\lambda(S(\mathfrak{z}_{\mathfrak{m}}(\mathfrak{s})^{\bot_{\mathfrak{m}}}))\otimes U(\mathfrak{a}\oplus\mathfrak{s}) \otimes U(\mathfrak{k})$ to $U(\mathfrak g)$}
\begin{eqnarray*} \Gamma_{a\exp X'}&:&\lambda(S(\mathfrak{z}_{\mathfrak{m}}(\mathfrak{s})^{\bot_{\mathfrak{m}}}))\otimes U(\mathfrak{a}\oplus\mathfrak{s}) \otimes U(\mathfrak{k})\to U(\mathfrak{g}),\\ &&\xi\otimes u\otimes \xi'\mapsto \xi^{\exp (-X')}u\xi'
\end{eqnarray*}
defines a linear isomorphism for all $X'$ in $\mathfrak{s}'$.
\end{proposition}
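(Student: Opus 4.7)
The plan is to prove the claimed isomorphism by combining Proposition \ref{t1} with a vector-space version of the Poincar\'{e}-Birkhoff-Witt (PBW) theorem and then transporting the result along the inner automorphism $\mathrm{Ad}(\exp -X')$ in the first tensor factor.

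First I would invoke Proposition \ref{t1}, which gives the vector space direct sum
\begin{equation*}
\mathfrak{g} \;=\; V' \;\oplus\; (\mathfrak{a}\oplus\mathfrak{s}) \;\oplus\; \mathfrak{k}, \qquad V' := \left(\mathfrak{z}_{\mathfrak{m}}(\mathfrak{s})^{\bot_{\mathfrak{m}}}\right)^{\exp -X'}.
\end{equation*}
Here $\mathfrak{a}\oplus\mathfrak{s}$ is a (solvable) Lie subalgebra of $\mathfrak{g}$ and $\mathfrak{k}$ is a Lie subalgebra, whereas $V'$ is only a subspace. The next step is to apply the following variant of PBW: whenever $\mathfrak{g}$ decomposes as a vector space into one arbitrary subspace $V$ and two Lie subalgebras $\mathfrak{h}_1,\mathfrak{h}_2$, the multiplication map
\begin{equation*}
\lambda(S(V)) \,\otimes\, U(\mathfrak{h}_1) \,\otimes\, U(\mathfrak{h}_2) \;\longrightarrow\; U(\mathfrak{g}), \qquad p\otimes u_1\otimes u_2 \mapsto p\cdot u_1\cdot u_2,
\end{equation*}
is a linear isomorphism. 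This is proved by the standard associated-graded argument: the canonical filtration on $U(\mathfrak{g})$ has associated graded $S(\mathfrak{g})$; the symbol of $\lambda(p)$ for $p\in S^k(V)$ is $p$ itself viewed in $S^k(V)\subset S^k(\mathfrak{g})$; hence the induced map on graded level is the canonical vector-space isomorphism $S(V)\otimes S(\mathfrak{h}_1)\otimes S(\mathfrak{h}_2) \xrightarrow{\cong} S(\mathfrak{g})$ coming from the direct sum decomposition of $\mathfrak{g}$, and one concludes by a filtration-iso argument.

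Second, I would use that symmetrization is natural under linear automorphisms of $\mathfrak{g}$, so $\mathrm{Ad}(\exp -X')$ acting on $U(\mathfrak{g})$ satisfies $\mathrm{Ad}(\exp -X')\lambda(\eta)=\lambda(\mathrm{Ad}(\exp -X')\eta)$ for $\eta\in S(\mathfrak{g})$. Since $\mathrm{Ad}(\exp -X')$ carries $\mathfrak{z}_{\mathfrak{m}}(\mathfrak{s})^{\bot_{\mathfrak{m}}}$ isomorphically onto $V'$, the assignment $\xi\mapsto \xi^{\exp -X'}$ is a linear isomorphism
\begin{equation*}
\lambda\!\left(S(\mathfrak{z}_{\mathfrak{m}}(\mathfrak{s})^{\bot_{\mathfrak{m}}})\right) \;\xrightarrow{\;\cong\;}\; \lambda(S(V')).
\end{equation*}
Composing this in the first tensor factor with the PBW isomorphism of the previous paragraph produces exactly the map $\Gamma_{a\exp X'}$, which is therefore a linear isomorphism. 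Note that $a\in A$ does not enter the formula for $\Gamma_{a\exp X'}$; it only labels the coset in $A\exp\mathfrak{s}'$ relevant to the decomposition $G=M(A\exp\mathfrak{s})K$ and plays no algebraic role in this proof.

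The main obstacle I expect is the vector-space version of PBW used above: because $V'$ is not a Lie subalgebra, one cannot directly tensor universal enveloping algebras, and the identification of $\lambda(S(V'))$ as a well-behaved factor must be made via the filtration/principal-symbol argument rather than via an ordered-basis construction (in which ordered monomials from $V'$ would not generally coincide with symmetrized products, as $\lambda$ of a product of two elements of $V'$ typically picks up lower order contributions from $\mathfrak{h}_1\oplus\mathfrak{h}_2$).
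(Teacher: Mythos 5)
Your proof is correct and takes the same route as the paper's: invoke Proposition \ref{t1} for the direct decomposition of $\mathfrak{g}$ and then apply a PBW-type argument. The paper's proof is essentially a one-line appeal to these two facts; your version usefully makes explicit the two ingredients the paper leaves implicit, namely the filtered/associated-graded form of PBW needed because $\left(\mathfrak{z}_{\mathfrak{m}}(\mathfrak{s})^{\bot_{\mathfrak{m}}}\right)^{\exp -X'}$ is not a Lie subalgebra (so one must work with $\lambda(S(\cdot))$ rather than $U(\cdot)$), and the naturality $\mathrm{Ad}(\exp -X')\circ\lambda=\lambda\circ\mathrm{Ad}(\exp -X')$ which identifies the first tensor factor with $\lambda(S(V'))$.
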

\begin{proof}
The bijectivity of $\Gamma_{a\exp X'}$ follows from the decomposition \begin{equation*}
\mathfrak{g}=(\mathfrak{z}_{\mathfrak{m}}(\mathfrak{s})^{\bot_{\mathfrak{m}}})^{\exp -X'}\oplus\left(\mathfrak{s}\oplus\mathfrak a\right)\oplus \mathfrak{k},\end{equation*} see Proposition \ref{t1}, which is valid for all $X'\in\mathfrak{s}'$, and the \textit{Poincar\'{e}-Birkhoff-Witt}-Theorem. 
\end{proof}

We denote the inverse of $\Gamma_{a\exp X'}$ by \index{$\stackrel{\circ}{\bot}_{\mathfrak{a}\oplus\mathfrak{s},a\exp X'}$, inverse of $\Gamma_{a\exp X'}$} $\stackrel{\circ}{\bot}_{\mathfrak{a}\oplus\mathfrak{s},a\exp X'}$ and interpret it as the local expression of $D$ in polar coordinates  $\stackrel{\circ}\bot_{\mathfrak{a}\oplus\mathfrak{s}}$ of $D$ around the point $a\cdot \exp X'$ for the decomposition of the dense, open subset in $G$ 
\begin{equation*}
G'=M(A\exp\mathfrak{s}')K.
\end{equation*}
 Then we define \index{$\mathcal F$, algebra of functions} $\mathcal F$ to be the algebra with unit of functions generated by the function $f$ from Lemma \ref{lem: radial main} and let \index{$\mathcal F^+$, subset of $\mathcal F$} $\mathcal{F}^+$ be the linear span of monomials of positive degree in this generator $f$. 

\begin{theorem}\label{th:radial part1}
Let $\mathfrak{g}=\mathfrak{so}(1,l)=\mathfrak{a}\oplus\mathfrak{n}\oplus\mathfrak{k}$ with universal enveloping $U(\mathfrak{g})$, $\mathfrak{m}=\mathfrak{z}_\mathfrak{k}(\mathfrak{a})$ and $\mathfrak s$ a section for $M=\exp\mathfrak m$ acting on $\mathfrak n$. Let $D$ be in $U(\mathfrak{g})$. There are $\Delta_i\in U(\mathfrak{m})\otimes U(\mathfrak{s}\oplus\mathfrak a) \otimes U(\mathfrak{k})$ and \index{$Delta0$@$\Delta_0$, element of $U(\mathfrak a\oplus\mathfrak{s})\otimes U(\mathfrak{k})$} $\Delta_0\in U(\mathfrak a\oplus\mathfrak{s})\otimes U(\mathfrak{k})$, $\varphi_j\in\mathcal F^+$ analytic on $\mathfrak{s}'$ such that 
\begin{equation*} \stackrel{\circ}{\bot}_{\mathfrak{a}\oplus\mathfrak{s},a\exp X'}(D)=\Delta_0+\sum_j\varphi_j(\exp X')\Delta_j
\end{equation*}
for all $X'\in\mathfrak{s}'$.
\end{theorem}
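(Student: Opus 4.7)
The plan is to prove this by reducing the question to the coefficient of $\mathfrak{s}^{\bot_\mathfrak{n}}$-components in a PBW expansion of $D$, and then applying Lemma \ref{lem: radial main} to convert each such component into a sum of polar-coordinate terms at the cost of picking up functions in $\mathcal{F}^+$. I would proceed by induction on the canonical filtration degree $n$ of $D$ in $U(\mathfrak{g})$. In the base case $n=0$, $D$ is a scalar, so $\stackrel{\circ}{\bot}_{\mathfrak{a}\oplus\mathfrak{s},a\exp X'}(D) = D \in U(\mathfrak{a}\oplus\mathfrak{s}) \otimes U(\mathfrak{k})$ and we may take $\Delta_0 = D$ with no further terms.

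For the inductive step, first expand $D$ using the Poincar\'e--Birkhoff--Witt theorem with respect to the (\emph{$X'$-independent}) direct sum decomposition
\[
\mathfrak{g} = (\mathfrak{a}\oplus\mathfrak{s}) \oplus \mathfrak{s}^{\bot_\mathfrak{n}} \oplus \mathfrak{k},
\]
giving $D = D_0 + D_1$, where $D_0 \in U(\mathfrak{a}\oplus\mathfrak{s})\,U(\mathfrak{k})$ and $D_1$ is a sum of monomials each containing at least one factor from $\mathfrak{s}^{\bot_\mathfrak{n}}$. The term $D_0$ already lies in $U(\mathfrak{a}\oplus\mathfrak{s}) \otimes U(\mathfrak{k})$ and contributes directly to $\Delta_0$. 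For a typical summand of $D_1$, isolate a leftmost factor $Y \in \mathfrak{s}^{\bot_\mathfrak{n}}$ and write the monomial as $Y \cdot R$ with $R \in U(\mathfrak{g})$ of strictly smaller degree. Applying Lemma \ref{lem: radial main} yields
\[
Y \cdot R \;=\; f(\exp X')\bigl(Z_Y^{\exp -X'} - Z_Y\bigr)\cdot R,
\]
where $Z_Y \in \mathfrak{z}_\mathfrak{m}(\mathfrak{s})^{\bot_\mathfrak{m}}$ and $f \in \mathcal{F}^+$.

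Now I would commute $Z_Y^{\exp -X'}$ and $Z_Y$ to the left of $R$ and of each other to achieve polar-coordinate form. Each commutator $[Z_Y^{\exp -X'}, \cdot]$ or $[Z_Y, \cdot]$ produces terms of strictly lower degree in $U(\mathfrak{g})$, which by the induction hypothesis already admit a polar expansion of the claimed shape. Moreover, the purely polynomial part $Z_Y^{\exp -X'}\cdot (\text{reordered})$ can be further organized using the bijection $\Gamma_{a\exp X'}$ of Proposition \ref{th: bijec}, with the symmetrized form in $\lambda(S(\mathfrak{z}_\mathfrak{m}(\mathfrak{s})^{\bot_\mathfrak{m}}))$ obtained from the ordered polynomial at the cost of adding lower-degree corrections (handled again by induction). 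The second term $-Z_Y \cdot R$, where $Z_Y \in \mathfrak{m} \subset \mathfrak{k}$, ultimately contributes to the $U(\mathfrak{k})$-factor via commutation through $R$, again producing only lower-order corrections already controlled by the induction. Collecting everything gives the decomposition
\[
\stackrel{\circ}{\bot}_{\mathfrak{a}\oplus\mathfrak{s},\,a\exp X'}(D) \;=\; \Delta_0 + \sum_j \varphi_j(\exp X')\,\Delta_j
\]
with $\varphi_j \in \mathcal{F}^+$, because each occurrence of an $\mathfrak{s}^{\bot_\mathfrak{n}}$-letter contributes at least one factor of $f$.

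The main obstacle is the bookkeeping in the commutation step. We must verify that commutators between the $X'$-dependent element $Z_Y^{\exp -X'}$, the fixed element $Z_Y$, and the factors of $R$ really drop the degree \emph{and} stay within $U(\mathfrak{g})$ in a way compatible with the tensor decomposition $\lambda(S(\mathfrak{z}_\mathfrak{m}(\mathfrak{s})^{\bot_\mathfrak{m}})) \otimes U(\mathfrak{a}\oplus\mathfrak{s}) \otimes U(\mathfrak{k})$ from Proposition \ref{th: bijec}, so that the induction hypothesis can be cleanly invoked. The abelian nature of $\mathfrak{n}$ is important here: it implies $[\mathfrak{s}^{\bot_\mathfrak{n}},\mathfrak{s}] = 0$ and $[\mathfrak{s}^{\bot_\mathfrak{n}},\mathfrak{s}^{\bot_\mathfrak{n}}] = 0$, so commutators only produce factors from $\mathfrak{a}$ or $\mathfrak{k}$, keeping the induction under control. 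Once this is in place, tracking that the resulting coefficient functions are products of the single generator of $\mathcal{F}$ and its positive powers gives the membership $\varphi_j \in \mathcal{F}^+$, and a final application of $\Gamma_{a\exp X'}^{-1}$ converts the expression into the asserted form.
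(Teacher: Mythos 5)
Your proposal follows essentially the same route as the paper's proof: a PBW decomposition of $U(\mathfrak g)$ into $U(\mathfrak a\oplus\mathfrak s)U(\mathfrak k)$ plus $(\mathfrak s^{\bot_{\mathfrak n}})U(\mathfrak g)$ (the content of Lemma~\ref{lem: unideco}), induction on the filtration degree, replacement of a leftmost $Y\in\mathfrak s^{\bot_{\mathfrak n}}$ via Lemma~\ref{lem: radial main}, and absorption of the resulting commutator corrections by the induction hypothesis. One small inaccuracy worth flagging: you propose to commute both $Z_Y^{\exp-X'}$ and $Z_Y$ \emph{to the left} of $R$, but $Z_Y^{\exp-X'}$ is already in the leftmost position where $\Gamma_{a\exp X'}$ wants it, and $Z_Y\in\mathfrak m\subset\mathfrak k$ must be pushed to the \emph{right} through $R$ (as the paper does, picking up $-[Z,D_1]$); likewise the remark that "commutators only produce factors from $\mathfrak a$ or $\mathfrak k$" is not literally true (e.g.\ $[\mathfrak m,\mathfrak n]\subset\mathfrak n$), though it is not load-bearing -- the degree drop, not the image of the bracket, is what drives the induction.
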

\begin{proof}
By Proposition \ref{th: bijec} it is equivalent to prove that for any $D\in U(\mathfrak g)$ there are $D_0\in U(\mathfrak a \oplus\mathfrak s)U(\mathfrak k)$, $\xi_j\in \lambda(S(\mathfrak{z}_{\mathfrak{m}}(\mathfrak{s})^{\bot_{\mathfrak{m}}}))$, $u_j\in U(\mathfrak{a}\oplus\mathfrak{s})$ and $\xi'_j\in U(\mathfrak k)$ such that \[D=D_0+\sum_j \varphi_j \xi_j^{\exp-X'}u_j\xi_j', \]
where $\varphi_j\in\mathcal F^+$. 

For $D\in U(\mathfrak{g})$ we set $\Delta_0:=\pi_1(D)$. Next if $D\in U(\mathfrak{s}\oplus\mathfrak a)U(\mathfrak{k})$, then the theorem is clear, since \begin{equation*} D=\pi_1(D)+\Gamma_{a\exp X'}(0\otimes 0\otimes 0)=\Gamma_{a\exp X'}(\pi_1(D))\end{equation*} by the definition of $\Gamma_{a\exp X'}$.

Now we continue with an induction on the degree of $D\in U(\mathfrak{g})$ and note that the claim is true for constants. If the degree is one and $D$ is an element of $\mathfrak{s}^{\bot_\mathfrak{n}}$, then there is by Lemma \ref{lem: radial main} a unique  $Z\in\mathfrak{z}_{\mathfrak{m}}(\mathfrak{s})^{\bot_{\mathfrak{m}}}$ such that 
\begin{eqnarray*}
D&=&f(\exp X')\Gamma_{a\exp X'}(Z\otimes 1\otimes 1)-f(\exp X')\Gamma_{a\exp X'}(1\otimes 1\otimes Z)\\ &=& f(\exp X')\Gamma_{a\exp X'}(Z\otimes 1\otimes 1-1\otimes 1\otimes Z)
\end{eqnarray*}
proving the claim for $D\in\mathfrak{s}^{\bot_n}$.

Since $U(\mathfrak g)=U(\mathfrak s\oplus \mathfrak a)\oplus(\mathfrak s^{\bot_{\mathfrak n}})U(\mathfrak g)$ by Lemma \ref{lem: unideco}, we only have to consider $D\in(\mathfrak{s}^{\bot_\mathfrak{n}}) U(\mathfrak{g})$ of degree $m+1$ assuming the theorem is true for degree $m$. Without loss we can assume that $D=YD_1$, where $Y$ is an element of $\mathfrak{s}^{\bot_n}$ and $D_1$ of $ U(\mathfrak{g})$ of degree $m$. Again by Lemma \ref{lem: radial main} \begin{equation*} Y=f(\exp X')Z^{\exp -X'}-f(\exp X')Z\end{equation*} for some $Z\in \mathfrak{z}_{\mathfrak{m}}(\mathfrak{s})^{\bot_{\mathfrak{m}}}$ and we find for $X\in\mathfrak{s}'$ that
\begin{eqnarray*}
 YD_1&=&f(\exp X')Z^{\exp -X'}D_1-f(\exp X')ZD_1\\&=& f(\exp X')Z^{\exp-X'}D_1-f(\exp X')D_1Z-f(\exp X')[Z,D_1].
\end{eqnarray*}
But $D_1$ and $[Z,D_1]$ are of degree $\leq m$, so we can apply the induction hypothesis \[\stackrel{\circ}{\bot}_{\mathfrak{a}\oplus\mathfrak{s},a\exp X'}(D_1)=\Delta_0+\sum_j\varphi_j(\exp X')\Delta_j \mbox{ , } \stackrel{\circ}{\bot}_{\mathfrak{a}\oplus\mathfrak{s},a\exp X'}([Z,D_1])=\tilde{\Delta}_0+\sum_j\tilde{\varphi}_j(\exp X')\tilde{\Delta}_j\] resp. \[D_1=\Gamma_{a\exp X'}\left(\Delta_0+\sum_j\varphi_j(\exp X')\Delta_j(D_1)\right)=\Gamma_{a\exp X'}(\Delta_0)+\sum_j\varphi_j(\exp X')\Gamma_{a\exp X'}(\Delta_j)\] and 
\begin{eqnarray*}[Z,D_1]&=&\Gamma_{a\exp X'}\left(\Delta_0+\sum_j\varphi_j(\exp X')\Delta_j([Z,D_1])\right)=\Gamma_{a\exp X'}(\tilde{\Delta}_0)+\sum_l\tilde{\varphi}_l(\exp X')\Gamma_{a\exp X'}(\tilde{\Delta}_l)\\
&=&\tilde{\Delta}_0+\sum_l \varphi_j(\exp X')\Gamma_{a\exp X'}(\tilde{\Delta}_l)
\end{eqnarray*}
with $D_j$, $\varphi_j$, $\tilde{D}_l$, $\tilde{\varphi}_j$ as claimed in the theorem. That is,
\begin{eqnarray*}
Z^{\exp -X'}D_1&=& Z^{\exp -X}\Gamma_{a\exp X'}(\Delta_0)+\sum_j\varphi_j(\exp X')Z^{\exp -X}\Gamma_{a\exp X'}(\Delta_j)\\
&=& Z^{\exp -X'}\Delta_0+\sum_j\varphi_j(\exp X') Z^{\exp -X'}\Gamma_{a\exp -X'}(\Delta_j),\\
D_1Z&=&\Gamma_{a\exp X'}(\Delta_0)Z+\sum_j\varphi_j(\exp X')\Gamma_{a\exp -X'}(\Delta_j)Z\\&=& 
\Delta_0 Z+\sum_j\varphi_j(\exp X')\Gamma_{a\exp -X'}(\Delta_j)Z,
\end{eqnarray*}

It follows that
\begin{eqnarray*}
YD_1&=& f(\exp X')Z^{\exp -X'}\Delta_0+\sum_j f(\exp X')\varphi_j(\exp X')Z^{\exp -X'}\Gamma_{a\exp X'}(\Delta_j)-f(\exp X')\Delta_0Z\\&&-\sum_j f(\exp X')\varphi_j(\exp X')\Gamma_{a\exp X'}(\Delta_j)Z- f(\exp X')\tilde{\Delta}_0-\sum_l f(\exp X')\varphi_l(\exp X')\Gamma(\tilde\Delta_lj)\\
&=& -f(\exp X')\Delta_0Z-f(\exp X')\tilde{\Delta}_j+f(\exp X')Z^{\exp -X'}\Delta_0+\\&&\sum_j f(\exp X')\varphi_j(\exp X')\left(Z^{\exp -X'}\Gamma_{a\exp X'}(\Delta_j)+\Gamma_{a\exp X'}(\Delta_j)Z\right)\\&&+\sum_l f(\exp X')\varphi_l(\exp X')\Gamma(\tilde\Delta_l). 
\end{eqnarray*}
 
 Then we are done, since $\pi_1(YD_1)=0$ which implies \[0=-f(\exp X')\Delta_0Z-f(\exp X')\tilde{\Delta}_0.\]
\end{proof}

It follows that for functions $\phi\in C^\infty(G')$  and $D\in U(\mathfrak g)$
\begin{equation*}
(D\phi)(a\exp X')=\left(\Gamma_{a\exp X'}\left(\stackrel{\circ}{\bot}_{\mathfrak{a}\oplus\mathfrak{s},a\exp X'}(D)\right)\phi\right)(a\exp X')
\end{equation*}
valid for all $a\in A$, $X'\in \mathfrak{s}'$. 

For certain $D\in U(\mathfrak g)$, the expression of $\stackrel{\circ}{\bot}_{\mathfrak{a}\oplus\mathfrak{s},a\exp X'}(D)$ is simpler. 
\begin{corollary}
With the notation of Theorem \ref{th:radial part1} we have
\begin{equation*} \stackrel{\circ}\bot_{\mathfrak{a}\oplus\mathfrak{s},a\exp X'}(\pi_1(D))=\Delta_0. \end{equation*} Further, if $$\stackrel{\circ}\bot_{\mathfrak{a}\oplus\mathfrak{s},a\exp X'}(D)=\tilde{\Delta}_0+\sum_{j=1}^n\varphi_j(\exp X')\Delta_j=\Delta_0+\sum_{l=1}^m\tilde\varphi_l(X')\tilde{\Delta}_l$$ for all $X'\in\mathfrak s'$, where $\Delta_j,\tilde{\Delta}_l,\varphi_j,\tilde{\varphi}_l$ as in Theorem \ref{th:radial part1} and $n\leq m$, then $\Delta_0=\tilde{\Delta}_0$ and after a possible permutation of $j$, $$ \Delta_1=\tilde{\Delta}_1,\ldots, \Delta_k=\tilde{\Delta}_k, \mbox{ } \varphi_1=\tilde{\varphi}_1,\ldots,\varphi_k=\tilde{\varphi}_k .$$ and $\tilde \Delta_{k+1}=\ldots=\tilde{\Delta}_{m}=0$.
\end{corollary}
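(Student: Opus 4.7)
The argument rests on two pillars: the injectivity of $\Gamma_{a\exp X'}$ established in Proposition \ref{th: bijec} for each fixed $X' \in \mathfrak{s}'$, and the linear independence of the generators of $\mathcal{F}^+$ as functions on $\mathfrak{s}'$. Recall from Lemma \ref{lem: radial main} that $\mathcal{F}$ is the algebra generated by a single function $f$ with $f(\exp(rX_1)) = 1/r$, so $\mathcal{F}^+$ is spanned by the monomials $f, f^2, f^3, \ldots$, which are linearly independent on $\mathfrak{s}' \setminus \{0\}$.

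For the first identity, I would observe that $\pi_1(D) \in U(\mathfrak{a}\oplus\mathfrak{s})U(\mathfrak{k})$ already lies in the image of the ``trivial-$\lambda$'' slice $1 \otimes U(\mathfrak{a}\oplus\mathfrak{s}) \otimes U(\mathfrak{k})$ under $\Gamma_{a\exp X'}$, because $1^{\exp(-X')} = 1$ forces $\Gamma_{a\exp X'}(1 \otimes u \otimes \xi') = u\xi'$. Since $\Gamma_{a\exp X'}$ is a linear bijection, the preimage $\stackrel{\circ}{\bot}_{\mathfrak{a}\oplus\mathfrak{s},a\exp X'}(\pi_1(D))$ must coincide with $1 \otimes \pi_1(D)$ and hence has no contributions weighted by elements of $\mathcal{F}^+$; this yields $\Delta_0 = \pi_1(D)$.

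For the uniqueness statement, I read the supposed equality
\begin{equation*}
\tilde{\Delta}_0 + \sum_{j=1}^{n} \varphi_j(\exp X')\Delta_j \;=\; \Delta_0 + \sum_{l=1}^{m}\tilde{\varphi}_l(X')\tilde{\Delta}_l
\end{equation*}
as an identity, valid for every $X' \in \mathfrak{s}'$, in the tensor product $\lambda(S(\mathfrak{z}_{\mathfrak{m}}(\mathfrak{s})^{\bot_{\mathfrak{m}}})) \otimes U(\mathfrak{a}\oplus\mathfrak{s})\otimes U(\mathfrak{k})$ with coefficients that are functions of $X'$. Since every $\varphi_j,\tilde{\varphi}_l$ is a positive-degree monomial in $f = 1/r$, letting $r \to \infty$ kills all of them simultaneously and forces $\Delta_0 = \tilde{\Delta}_0$. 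Subtracting, we reduce to $\sum_j \varphi_j(\exp X') \Delta_j = \sum_l \tilde{\varphi}_l(X') \tilde{\Delta}_l$ as a tensor-valued function of $X'$.

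Finally, I would group terms on each side by the degree of the monomial $\varphi_j$ (respectively $\tilde{\varphi}_l$) in $f$, and invoke the linear independence of $f, f^2, f^3, \ldots$ on $\mathfrak{s}'$ to equate, degree by degree, the tensor sums appearing as coefficients. After relabelling (and combining or splitting terms whose monomials agree, which can be done because the scalar prefactors of the monomials can be absorbed into the $\Delta_j$'s), this produces the pairwise identification $\Delta_j = \tilde{\Delta}_j$, $\varphi_j = \tilde{\varphi}_j$ for the indices of the shorter list, and forces the excess $\tilde{\Delta}_{n+1}, \ldots, \tilde{\Delta}_m$ to vanish. The only non-mechanical step is this last piece of bookkeeping; the linear-algebraic content — injectivity of $\Gamma_{a\exp X'}$ combined with linear independence of monomials in $f$ — is immediate.
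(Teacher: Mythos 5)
Your proposal is correct and follows essentially the same strategy as the paper's own proof: use the bijectivity of $\Gamma_{a\exp X'}$ together with the fact that $\pi_1(D)$ already lies in $U(\mathfrak{a}\oplus\mathfrak{s})U(\mathfrak{k})$ to settle the first identity; rescale $X'\mapsto tX'$ and let $t\to\infty$ so that all $\mathcal{F}^+$-weights vanish, forcing $\Delta_0=\tilde{\Delta}_0$; and then appeal to the linear independence of the monomials $f,f^2,\ldots$ on $\mathfrak{s}'$ to match the remaining terms. Your write-up is in fact somewhat more explicit than the paper's terse treatment of the last step, where the normalisation (combining terms sharing a monomial before matching indices) is left implicit — the paper only remarks that the $\Delta_j,\tilde{\Delta}_l$ are constant in $X'$ and that $\varphi_j,\tilde{\varphi}_l\in\mathcal F^+$.
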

\begin{proof}
The first claim, $\stackrel{\circ}\bot_{\mathfrak{a}\oplus\mathfrak{s},a\exp X'}(\pi_1(D))=\Delta_0$, follows by the proof of 
Theorem \ref{th:radial part1}. For the second claim we note that $\varphi_j(\exp tX')\to 0$, $t\to\infty$, for all $X'\in\mathfrak{s}'$ 
and $\varphi_j\in \mathcal{F}^+$. So if we assume 
\begin{eqnarray*}  
\Delta_0+\sum_j\varphi_j(\exp X')\Delta_j&=&\Delta_0'+\sum_k\varphi_k'(\exp X')\Delta_k' \mbox{  } \mbox{ resp.}
\\ \Delta_0-\Delta_0'&=& \sum_l \varphi_l''(\exp X')D_l'' 
,
\end{eqnarray*}
for some $\Delta_0,\Delta_0'\in U(\mathfrak{s}\oplus\mathfrak a)U(\mathfrak k)$, $\Delta_j,\Delta_k', D_l''\in U(\mathfrak m)U(\Sigma\oplus\mathfrak a)U(\mathfrak k)$, $\varphi_j,\varphi_k',\varphi_l''\in\mathcal F^+$ and all $X'\in\mathfrak{s}'$, then replacing $\exp X'$ by $\exp tX'$ and letting $t\to\infty$ gives the claim.

The last claim, follows since $\Delta_j,\tilde{\Delta}_l$ are independent of $X'\in \mathfrak{s}'$ and since $\varphi_j.\tilde{\varphi}_k\in\mathcal{F}^+$.
\end{proof}

We state now a lemma which tells us more about the nature of the projection $\pi_2$ from $U(\mathfrak g)$ onto $U(\mathfrak{s}\oplus \mathfrak a)$.

\begin{lemma}
The mapping $\pi_2$ is a homomorphism from $ U(\mathfrak{g})^K$ into $ U(\mathfrak{s}\oplus \mathfrak{a})$. 
\end{lemma}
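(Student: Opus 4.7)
The plan is to show that for $D_1, D_2 \in U(\mathfrak{g})^K$, the element $D_1D_2 - \pi_2(D_1)\pi_2(D_2)$ lies in $\ker\pi_2 = J$, where $J := \mathfrak{s}^{\bot_\mathfrak{n}}U(\mathfrak{g}) + U(\mathfrak{g})\mathfrak{k}$. Writing $D_i = \pi_2(D_i) + R_i$ with $R_i \in J$, a direct expansion gives
\[
D_1 D_2 - \pi_2(D_1)\pi_2(D_2) = \pi_2(D_1)\,R_2 + R_1\,D_2,
\]
so it suffices to verify that both summands on the right already belong to $J$.

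For the first summand I would prove the more general stability property $U(\mathfrak{a}\oplus\mathfrak{s})\cdot J\subset J$. The inclusion $U(\mathfrak{a}\oplus\mathfrak{s})\cdot U(\mathfrak{g})\mathfrak{k}\subset U(\mathfrak{g})\mathfrak{k}$ is immediate, so only $U(\mathfrak{a}\oplus\mathfrak{s})\cdot \mathfrak{s}^{\bot_\mathfrak{n}}U(\mathfrak{g})$ needs work. The key commutation relation is $[\mathfrak{a}\oplus\mathfrak{s},\mathfrak{s}^{\bot_\mathfrak{n}}]\subset \mathfrak{s}^{\bot_\mathfrak{n}}$, which holds because $\mathfrak{n}=\mathfrak{g}_\alpha$ is abelian (so $[\mathfrak{s},\mathfrak{s}^{\bot_\mathfrak{n}}]=0$) and $\mathrm{ad}(H)$ acts on $\mathfrak{n}$ as the scalar $\alpha(H)$ for $H\in\mathfrak{a}$, hence preserves $\mathfrak{s}^{\bot_\mathfrak{n}}$. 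A standard induction on degree then yields $U(\mathfrak{a}\oplus\mathfrak{s})\cdot \mathfrak{s}^{\bot_\mathfrak{n}}\subset \mathfrak{s}^{\bot_\mathfrak{n}}\,U(\mathfrak{a}\oplus\mathfrak{s})$, whence $U(\mathfrak{a}\oplus\mathfrak{s})\cdot \mathfrak{s}^{\bot_\mathfrak{n}}U(\mathfrak{g})\subset \mathfrak{s}^{\bot_\mathfrak{n}}U(\mathfrak{g})\subset J$.

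For the second summand the $K$-invariance of $D_2$ is decisive. Decompose $R_1 = A + \sum_i u_iY_i$ with $A\in \mathfrak{s}^{\bot_\mathfrak{n}}U(\mathfrak{g})$, $u_i\in U(\mathfrak{g})$ and $Y_i\in\mathfrak{k}$. Then $A\,D_2 \in \mathfrak{s}^{\bot_\mathfrak{n}}U(\mathfrak{g})\subset J$ is clear, while for the remaining terms the invariance $D_2\in U(\mathfrak{g})^K$ yields the infinitesimal identity $[Y_i,D_2]=0$ in $U(\mathfrak{g})$, so that $Y_iD_2 = D_2 Y_i$ and therefore $u_iY_iD_2 = u_iD_2Y_i\in U(\mathfrak{g})\mathfrak{k}\subset J$. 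I expect this step, where the hypothesis lets one slide $Y_i$ past $D_2$, to be the only substantive point of the proof; everything else is purely formal manipulation in $U(\mathfrak{g})$ that would work for arbitrary elements. Assembling the two inclusions yields $\pi_2(D_1D_2)=\pi_2(D_1)\pi_2(D_2)$, and together with the trivial $\pi_2(1)=1$ this gives the claimed algebra homomorphism property.
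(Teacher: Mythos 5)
Your proof is correct and follows essentially the same approach as the paper: split $D_1D_2-\pi_2(D_1)\pi_2(D_2)$ into $\pi_2(D_1)R_2+R_1D_2$, handle the first term via the commutation relation $[\mathfrak a\oplus\mathfrak s,\mathfrak s^{\bot_{\mathfrak n}}]\subset\mathfrak s^{\bot_{\mathfrak n}}$, and handle the second via the $K$-invariance of $D_2$ to slide the $\mathfrak k$-factors to the right. You give a bit more detail than the paper does (explaining why the commutation relation holds via abelianity of $\mathfrak n$ and the scalar $\mathrm{ad}(\mathfrak a)$-action), but the strategy and the two substantive steps are identical.
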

\begin{proof}
Let $D,D'\in U(\mathfrak{g})^K$. Then \begin{equation*} DD'-\pi_2(D)\pi_2(D')=\pi_2(D)(D'-\pi_2(D'))+(D-\pi_2(D))D'.\end{equation*} 
By definition $D-\pi_2(D),D'-\pi_2(D')$ are elements of $(\mathfrak{s}^{\bot_\mathfrak{n}}) U(\mathfrak{g})+ U(\mathfrak{g})\mathfrak{k}$ and since 
$D'$ is in $ U(\mathfrak{g})^K$, it follows that \[WD'=D'W\] for $W\in\mathfrak{k}$. Hence,
\begin{equation*} (D-\pi_2(D))D'\in(\mathfrak{s}^{\bot_\mathfrak{n}}) U(\mathfrak{g})
+ U(\mathfrak{g})\mathfrak{k}.
\end{equation*}
 Since $[\mathfrak{s}\oplus\mathfrak{a},\mathfrak{s}^{\bot_n}]\subset \mathfrak{s}^{\bot_\mathfrak{n}}$ it also follows that 
\begin{equation*}
\pi_2(D)(D'-\pi_2(D'))\in(\mathfrak{s}^{\bot_\mathfrak{n}}) U(\mathfrak{g})+ U(\mathfrak{g})\mathfrak{k}.
\end{equation*} Thus, 
\begin{equation*}
\pi_2(DD')=\pi_2(D)\pi_2(D').
\end{equation*}

\end{proof}

Now we explain how 
$\Gamma_{a\exp X'}(D)$ transforms if we apply it to invariant functions $\phi\in C^\infty(MA\backslash G/K)$. This will lead us to the definition of the radial part of $D$. We therefore compute for $Z\in\mathfrak{z}_{\mathfrak{m}}(\mathfrak{s})^{\bot_{\mathfrak{m}}}, u\in U(\mathfrak{a}\oplus\mathfrak{s}),\xi\in U(\mathfrak{k})$, $a\exp X'\in A\exp \mathfrak{s}'$ and $\phi\in C^\infty(G)$  
\begin{eqnarray}\label{eq: compminv} \Gamma_{a\exp X'}(Z\otimes u \otimes \xi)\phi(a\exp X')&=&\left(Z^{\exp -X'}u\xi \right)\phi(a\exp X')\nonumber\\
&=& \frac{d}{dt}|_{t=0}u\xi \phi(a\exp X'\exp \mathrm{Ad}(\exp -X')tZ)\nonumber\\&=& \frac{d}{dt}|_{t=0}u\xi \phi(a\exp X' \exp -X'\exp tZ\exp X')\\
&=& \frac{d}{dt}|_{t=0}u\xi \phi(a\exp tZ\exp X')\nonumber\\ &=& \frac{d}{dt}|_{t=0}u\xi \phi(\exp tZ a\exp X'),\nonumber
\end{eqnarray}
as $\exp tZ\in M$. In particular it is clear that \begin{equation}\label{eq: compminv2}\Gamma_{a\exp X'}(\xi\otimes u\otimes \xi') \phi(a\exp X')=0\end{equation} if $\phi\in C^\infty(M\backslash G/K)$, $\xi\in  U(\mathfrak{m})$,
$\xi'\in U(\mathfrak{k})$ and $\xi$ or $\xi'$ is not a constant. If $u=H$ is in $\mathfrak{a}$ and $\phi$ is left-$A$-invariant, then a similar computation gives 
\begin{eqnarray*}
Z^{\exp -X'}H\xi \phi(a\exp X')&=& \frac{d}{dt_1}|_{t_1=0}\frac{d}{dt_2}|_{t_2=0}\xi \phi(\exp t_1Z\exp X'\exp t_2H)\\&=&\frac{d}{dt_1}|_{t_1=0}\frac{d}{dt_2}|_{t_2=0}\xi \phi(\exp t_1Z\exp e^{-t_2\alpha(H)}X').
\end{eqnarray*}

Let $c:U(\mathfrak k)\to \mathbb{C}$ be the trivial one dimensional representation of $U(\mathfrak k)$. The map $Z\mapsto c(Z)1$ can be regarded as the projection $U(\mathfrak k)\to \mathbb{C}\cdot 1$ which belongs to the decomposition
\begin{equation*} 
 U(\mathfrak{k})=\mathfrak{k}\oplus \mathrm{ker}(c),
\end{equation*}
see also \cite[p.129]{GV}. Then for all $D\in U(\mathfrak{g})$ and $\phi\in C^\infty(M\backslash G/K)$, $x=a\exp X'\in A\exp\mathfrak{s}'$,
\begin{eqnarray*} D\phi(x)&=& \Gamma_{a\exp X'}\left(\stackrel{\circ}\bot_{\mathfrak{a}\oplus\mathfrak{s},a\exp X'}(D)\right)\phi(x)\\
&=& \Gamma_{a\exp X'}\left(\Delta_0+\sum_j \varphi_j(x)\Delta_j\right)\varphi(x)\\ 
&=& \Gamma_{a\exp X'}(\Delta_0)\varphi(x)+\sum_j \varphi_j(\exp X') \Gamma_{a\exp X'}(\Delta_j)\varphi(x)\\ 
&=& \Gamma_{a\exp X'}\left(\pi_1(D)\right)\varphi(x)+\sum_j \varphi_j(\exp X') \Gamma_{a\exp X'}(\Delta_j)\varphi(x) \mbox{ by the definition of } \Delta_0\\
&=& \pi_1(D)\varphi(x)+\sum_j \varphi_j(\exp X') \Gamma_{a\exp X'}(\Delta_j)\varphi(x) \mbox{ by the definition of } \Gamma_{s\exp X'}\\
&=&\pi_2(D)\phi(x)+\sum_j \varphi_j(\exp X') \Gamma_{a\exp X'}(\Delta_j)\varphi(x), \\&&\mbox{ as $\phi$ is right-$K$-invariant and } \pi_1\equiv \pi_2 \mbox{ }\mathrm{ mod }\mbox{ } (U(\mathfrak{a}\oplus\mathfrak{s})U(\mathfrak{k})\mathfrak{k}\\
&=&\pi_2(D)\phi(x)+\sum_j\varphi_j(x)c(\xi)c(\xi')u_j\phi(x)\end{eqnarray*}
by (\ref{eq: compminv}) and (\ref{eq: compminv2}), where $\Delta_0,\Delta_j,\varphi_j$ are as in Theorem \ref{th:radial part1}, i.e. $\varphi_j\in\mathcal F^+$ and $u_j\in U(\mathfrak a \oplus \mathfrak{s})$. Thus we may define a differential operator \index{$delta$@$\delta(D)$} $\delta(D)$ on $
A\exp {\mathfrak{s}'}$ by 
\begin{equation}\label{eq: radpart}\delta(D):=\pi_2(D)+\sum_j\psi_ju_j,
\end{equation} where $\psi_j=c(\xi_j)c(\xi_j')\varphi_j$ and \begin{equation*} D\phi=\delta(D)\phi
\end{equation*} on $A\exp\mathfrak{s}'$ for all  $\phi\in C^\infty(M\backslash G/K)$. We call $\delta(D)$ the \index{radial part} \textit{radial part} of $D$.
\section{Example: Polar decomposition of $\Omega$}\label{sec: radial.1}

In this section we apply the theory developed above to the Casimir operator $\Omega$.
We start with the computation of the polar decomposition $\stackrel{\circ}{\bot}_{\mathfrak{a}\oplus\mathfrak{s},a\exp X'}(\Omega)$ of $\Omega$. 

We recall that we fixed some $X_1\in \mathfrak{n}$ of unit length. We want to complete $X_1$ to an orthonormal basis of $\mathfrak{n}$. Therefore, we remark:
\begin{lemma}\label{lem: jrelations}
We can complete $X_1$ to an orthonormal basis $\{X_1,Y_2,\ldots, Y_{l-1}\}$ of $\mathfrak n$ such that for all $j$ \[ [Z_{Y_j},X_1]=Y_j \]
 always implies \[[X_1,\theta Y_j]=2Z_{Y_j}\] and \[[Y_j,Z_{Y_j}]=X_1.\]

Here $Z_{Y_j}\in \mathfrak{z}_\mathfrak{m}(\mathfrak{s} )^\bot_\mathfrak{m}$ is uniquely determined by the relation $[Z_{Y_j},X_1]=Y_j$, see Lemma \ref{l2}.\end{lemma}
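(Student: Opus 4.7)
The plan is to work in the explicit matrix model of $\mathfrak{g} = \mathfrak{so}(1,l)$ from Section \ref{chap: hyper} and exploit the fact that $M \cong SO(l-1)$ acts on $\mathfrak{n} \cong \mathbb{R}^{l-1}$ by the standard rotation representation, so that $[Z, X_u] = X_{Zu}$ for $Z \in \mathfrak{m}$. Since this action is transitive on spheres in $\mathbb{R}^{l-1}$, after conjugating by some element of $M$ I may assume the given unit vector $X_1$ equals $c X_{e_1}$, where $e_1, \ldots, e_{l-1}$ is the standard basis of $\mathbb{R}^{l-1}$; the formula $B_\theta(X_{e_i}, X_{e_j}) = 4(l-1)\delta_{ij}$ from Section \ref{chap: hyper} fixes $c$, and setting $Y_j := c X_{e_j}$ for $j = 2, \ldots, l-1$ produces an orthonormal basis of $\mathfrak{n}$ completing $X_1$.

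Next I would identify $Z_{Y_j}$ explicitly. Under $\mathfrak{m} \cong \mathfrak{so}(l-1)$ the centralizer $\mathfrak{z}_\mathfrak{m}(\mathfrak{s})$ consists of rotations fixing $e_1$, so its orthogonal complement is spanned by the elementary rotations mixing $e_1$ with each $e_j$ for $j \geq 2$. The defining relation $[Z_{Y_j}, X_1] = Y_j$ then translates to $Z_{Y_j} \cdot e_1 = e_j$, forcing $Z_{Y_j} = e_j e_1^T - e_1 e_j^T$, the rotation sending $e_1 \mapsto e_j$ and $e_j \mapsto -e_1$. The first required relation is now just the definition, and the third, $[Y_j, Z_{Y_j}] = X_1$, is immediate from $Z_{Y_j} \cdot e_j = -e_1$: indeed $[Y_j, Z_{Y_j}] = -c X_{Z_{Y_j} \cdot e_j} = -c X_{-e_1} = X_1$.

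The core of the proof is therefore the identity $[X_1, \theta Y_j] = 2 Z_{Y_j}$. A useful reduction is to observe that $[X_1, \theta Y_j] \in [\mathfrak{g}_\alpha, \mathfrak{g}_{-\alpha}] \subset \mathfrak{a} \oplus \mathfrak{m}$, and that its $\mathfrak{a}$-component vanishes by Killing form invariance (since $B_\theta(X_1, Y_j) = 0$), while a similar invariance argument confines the $\mathfrak{m}$-component to the one-dimensional subspace $\mathbb{R} Z_{Y_j}$; thus $[X_1, \theta Y_j]$ is automatically a scalar multiple of $Z_{Y_j}$, so it suffices to determine a single entry. To pin down the scalar I would compute $\theta X_{e_j}$ via $\theta(X) = JXJ$ from Section \ref{chap: hyper} and then read off the $(1,j)$-entry in the middle block of the product $[X_{e_1}, \theta X_{e_j}]$ by direct matrix multiplication. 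The main subtlety is keeping the several normalization conventions (the form $B_\theta$, the scaling $c$, and the definition of $Z_{Y_j}$) consistent, so that the three constants $(1,2,1)$ in the three required identities emerge simultaneously from a single choice of basis. As a possible shortcut one could instead work inside the subalgebra of $\mathfrak{g}$ generated by $X_1, Y_j, \theta X_1, \theta Y_j$, which corresponds to $\mathfrak{so}(1,3)$ acting on $\mathrm{span}(f_0, f_1, f_j, f_l) \subset \mathbb{R}^{l+1}$, and read the structure constants off this well-known algebra.
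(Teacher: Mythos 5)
Your approach is genuinely different from the paper's. The paper imports the basis $\{L_j\}$ and its bracket relations from \cite[Chap.~2~(2.9)]{Ju}, picks $m\in M$ with $X_1 = m\cdot L_1$, and transports everything by the $M$-action (which commutes with $\theta$, preserves brackets, and preserves $B_\theta$). You instead verify the structure constants directly in the matrix model, which is more self-contained and would, in principle, double-check the cited numbers.

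The subtlety you flag --- ``keeping the several normalization conventions consistent'' --- is exactly where the plan fails, and your own setup contains everything needed to see it. With $B_\theta(X_{e_1},X_{e_1})=4(l-1)$ from (\ref{eq: normx1}), the orthonormal choice is $X_1 = c X_{e_1}$, $Y_j = c X_{e_j}$ with $c = \frac{1}{2\sqrt{l-1}}$. The defining relation $[Z_{Y_j},X_1]=Y_j$ fixes $Z_{Y_j}$ with \emph{no} $c$-dependence: both sides scale linearly in $c$, so $Z_{Y_j}=\hat R_j$, the image of $e_je_1^T-e_1e_j^T$ in $\mathfrak m$, exactly. But the matrix identity is $[X_{e_1},\theta X_{e_j}]=2\hat R_j$, hence $[X_1,\theta Y_j]=c^2\cdot 2\hat R_j=\frac{1}{2(l-1)}Z_{Y_j}$, not $2Z_{Y_j}$. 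You can confirm this without any matrices: by invariance of $B$, if $[X_1,\theta Y_j]=\lambda Z_{Y_j}$ then $1=B_\theta(Y_j,Y_j)=-B([Z_{Y_j},X_1],\theta Y_j)=-B(Z_{Y_j},[X_1,\theta Y_j])=\lambda B_\theta(Z_{Y_j},Z_{Y_j})$, and $B_\theta(\hat R_j,\hat R_j)=-(l-1)\mathrm{Tr}(\hat R_j^2)=2(l-1)$, forcing $\lambda=\frac{1}{2(l-1)}$. In short, $[X_1,\theta Y_j]=2Z_{Y_j}$ is scale-sensitive (the left side carries $c^2$, the right side $c^0$) and cannot hold in the paper's normalization $B=(l-1)\mathrm{Tr}$, whereas $[Y_j,Z_{Y_j}]=X_1$ is scale-invariant and does hold as you checked. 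Your plan would therefore surface a normalization clash that the paper's proof does not detect, since it transfers the constant $2$ from \cite{Ju} by $M$-transport without reconciling that reference's convention with (\ref{eq: normx1}).
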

\begin{proof}
In \cite[Chap.2 (2.9)]{Ju} we find a special orthonormal basis $\{L_j\}_j$ of $\mathfrak n$ which satisfies the requirements of this lemma. Because $M$ acts transitively on the sphere in $\mathfrak n$, we can find some $m\in M$ such that $X_1=m\cdot L_1$. We set \index{$Y_j$, $Y_j=m\cdot L_j$} $Y_j:=m\cdot L_j$ which implies $Z_{Y_j}=m\cdot Z_{L_j}$. But then $$[X_1,\theta Y_j]=m\cdot [L_1,\theta L_j]=m\cdot 2Z_{L_j}=2Z_{Y_j}$$ and $$[Y_j,Z_{Y_j}]=[m\cdot L_j,m\cdot Z_{L_j}]=m\cdot L_1=X_1.$$
\end{proof}

We take the orthonormal basis \index{$Y_j$} $\{X_1,Y_j\}_j$ of $\mathfrak{s}^{\bot_\mathfrak{n}}$ from Lemma \ref{lem: jrelations}. Then we know from Chapter \ref{chap: casimir}, equation (\ref{eq: casfin}), that 
\begin{eqnarray*} \Omega
&=&H_1^2-\sum_{i=1}^{k}M_i^2+2X_1^2+2\sum_{j=2}^{l-1}Y_j^2
-2X_1W_1-2\sum_{j=2}^{l-1}Y_jW_j-2H_\rho\\ 
&=& H_1^2-2H_\rho-\sum_{i=1}^kM_i^2+2\left(X_1^2-X_1W \right)+2\sum_{j=2}^{l-1} \left(Y_j^2-Y_jW_j\right),\end{eqnarray*} 
where $M_i \in\mathfrak m$, $H_1\in\mathfrak a$, $H_\rho\in\mathfrak a$ and $W=W_1,W_j\in\mathfrak m^{\bot_\mathfrak{k}}$, see Chapter \ref{chap: casimir} for the details. 

Let $X'=rX_1\in\mathfrak{s}'$, then there exists by Lemma \ref{lem: radial main} for any $Y_j\in\mathfrak n$ some $Z_{Y_j}\in\mathfrak z_{\mathfrak{m}}(\mathfrak{s})^{\bot_{\mathfrak{m}}}$ such that 
\begin{equation*} Y_j=f(\exp X')Z_{Y_j}^{\exp -X'}-f(\exp X')Z_{Y_j},\end{equation*} where $Z_{Y_j}\cdot X_1=Y_j$. Hence,
\begin{eqnarray*}
\sum_j Y_j^2&=& f(\exp X')^2\sum_{j}(Z_{Y_j}^{\exp -X'}-Z_{Y_j})(Z_{Y_j}^{\exp -X'}-Z_{Y_j})
\\ &=&  f(\exp X')^2\sum_{j}(Z_{Y_j}Z_{Y_j})^{\exp-X}-Z_{Y_j}^{\exp -X'}Z_{Y_j}-Z_{Y_j}Z_{Y_j}^{\exp -X}+Z_{Y_j}Z_{Y_j}
\\&=& f(\exp X')^2\sum_{j}(Z_{Y_j}Z_{Y_j})^{\exp -X'}-2Z_{Y_j}^{\exp -X'}Z_{Y_j}-[Z_{Y_j},Z_{Y_j}^{\exp -X'}]+Z_{Y_j}Z_{Y_j}
\\&=& f(\exp X')^2\sum_{j}(Z_{Y_j}Z_{Y_j})^{\exp -X'}-2Z_{Y_j}^{\exp -X'}Z_{Y_j}-[Z_{Y_j},Z_{Y_j}+[Z_{Y_j},X']]+Z_{Y_j}Z_{Y_j}
\\&=& f(\exp X')^2\sum_{j}(Z_{Y_j}Z_{Y_j})^{\exp -X'}-2Z_{Y_j}^{\exp -X'}Z_{Y_j}-[Z_{Y_j},f(\exp X')^{-1}[Z_{Y_j},X_1]]+Z_{Y_j}Z_{Y_j}
\\&=&f(\exp X')^2\sum_{j}(Z_{Y_j}Z_{Y_j})^{\exp -X'}-2Z_{Y_j}^{\exp -X'}Z_{Y_j}-f(\exp X')^{-1}[Z_{Y_j},Y_j]+Z_{Y_j}Z_{Y_j}
\\&\overset{\text{Lem. \ref{lem: jrelations}}}=&f(\exp X')^2\sum_{j}(Z_{Y_j}Z_{Y_j})^{\exp -X'}-2Z_{Y_j}^{\exp -X'}Z_{Y_j}+f(\exp X')^{-1}X_1+Z_{Y_j}Z_{Y_j}.
\end{eqnarray*}

For $\Omega$ we obtain
\begin{eqnarray}\label{eq: omegapolar}
\Omega=\Gamma_{a\exp X'}\left(\stackrel{\circ}{\bot}_{\mathfrak{a}\oplus\mathfrak{s},a\exp X'}(\Omega)\right)&=& H_1^2-\sum_i M_i^2-2H_\rho+
2X^2_1+2(l-2)f(\exp X')X_1-2X_1W\nonumber\\&&+2f(\exp X')^2\sum_{j}(Z_{Y_j}Z_{Y_j})^{\exp -X'}-2Z_{Y_j}^{\exp -X'}Z_{Y_j}
+Z_{Y_j}Z_{Y_j}\nonumber \\&&
-f(\exp X')^{-1}(Z_{Y_j}^{\exp -X'}-Z_{Y_j})W_j.
\end{eqnarray}

From this we can easily derive the radial part of $\Omega$ by dropping all terms involving elements from $\mathfrak m$. Note that $(Z_{Y_i}Z_{Y_i})^{\exp -X'}$ operates from the left, i.e. it vanishes if the function is left-$M$-invariant. Thus, \begin{equation}\label{eq: radpartomega}  \delta(\Omega)=H_1^2-2H_\rho+2 X_1^2+2(n-2)f\cdot X_1,\end{equation} 
where $f(\exp X')=\frac{1}{r}$ for $X'=rX_1\in\mathfrak{s}'$.
\section{Restrictions of bi-$M$-invariant functions}
Here we want to settle the question what happens if we restrict functions which are bi-$M$-invariant to the section $\mathfrak{s}$.
\begin{lemma}
Let $\mathfrak{s}$ be a section for $M$ acting on $\mathfrak{n}$, $F$ a fundamental domain for $W=N_{M}(\mathfrak{s})/Z_{M}(\mathfrak{s})$ acting on $\mathfrak{s}$. Then one can choose a slice $F'\subset F$ consisting of regular elements which intersects every regular orbit. The mapping $\phi:M/Z_{M}(\mathfrak{s})\times F'\to \mathfrak{n}'$, where $\mathfrak{n}'$ is the subset of regular points in $\mathfrak{n}$ (always relative to the action of $M$), is a diffeomorphism.
\end{lemma}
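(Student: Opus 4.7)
The plan is first to identify the pieces explicitly and then check that the obvious candidate map is a diffeomorphism, with Lemma \ref{l2} doing the heavy lifting on the differential. Since $M\cong SO(l-1)$ acts on $\mathfrak n\cong\mathbb R^{l-1}$ by the standard rotation action and $\mathfrak s=\mathbb R X_1$ is one-dimensional, the normalizer $N_M(\mathfrak s)$ acts on $\mathfrak s$ by $\pm 1$, so $W\cong\mathbb Z/2$. A natural fundamental domain is $F=\{rX_1:r\ge 0\}$, and its regular part is $F':=\{rX_1:r>0\}$. By Kostant's double transitivity theorem recalled at the start of Section \ref{sec: mac}, every regular $M$-orbit in $\mathfrak n$ is a sphere of some radius $r>0$, which contains the single point $rX_1\in F'$. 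Hence $F'$ meets every regular orbit exactly once, which is precisely the slice property.

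Next, define $\phi:M/Z_M(\mathfrak s)\times F'\to\mathfrak n'$ by $\phi(mZ_M(\mathfrak s),X):=m\cdot X$. This is well-defined because $Z_M(\mathfrak s)$ fixes $\mathfrak s$ pointwise, and it is smooth since it factors through the smooth action map $M\times\mathfrak n\to\mathfrak n$. Surjectivity is immediate: for $Y\in\mathfrak n'$ with $|Y|=r>0$, transitivity of $M$ on the sphere of radius $r$ yields some $m$ with $m\cdot(rX_1)=Y$. For injectivity, suppose $m_1\cdot X'_1=m_2\cdot X'_2$ with $X'_i\in F'$; taking norms and using that $F'$ is a ray forces $X'_1=X'_2=:X'$, so $m_2^{-1}m_1\in Z_M(X')$. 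Because the action of $M$ on $\mathfrak n$ is polar (Lemma \ref{lem: mpol}), the argument in the proof of Lemma \ref{l2} gives $Z_M(X')=Z_M(\mathfrak s)$ for every $X'\in\mathfrak s'$, so $m_1Z_M(\mathfrak s)=m_2Z_M(\mathfrak s)$.

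It remains to upgrade the smooth bijection $\phi$ to a diffeomorphism. By $M$-equivariance of $\phi$ (with $M$ acting on the source only through left multiplication on the first factor), it suffices to check that $d\phi$ is an isomorphism at a single point $(eZ_M(\mathfrak s),rX_1)$, $r>0$. Identifying $T_{eZ_M(\mathfrak s)}(M/Z_M(\mathfrak s))$ with $\mathfrak z_\mathfrak m(\mathfrak s)^{\perp_\mathfrak m}$ and $T_{rX_1}F'$ with $\mathfrak s$, one computes
\begin{equation*}
d\phi_{(eZ_M(\mathfrak s),rX_1)}(Z,tX_1)=tX_1+r[Z,X_1]\in\mathfrak s\oplus\mathfrak s^{\perp_\mathfrak n}=\mathfrak n.
\end{equation*}
By Lemma \ref{l2}, $\operatorname{ad}X_1:\mathfrak z_\mathfrak m(\mathfrak s)^{\perp_\mathfrak m}\to\mathfrak s^{\perp_\mathfrak n}$ is a linear isomorphism, so the above map is a linear isomorphism $\mathfrak z_\mathfrak m(\mathfrak s)^{\perp_\mathfrak m}\oplus\mathfrak s\xrightarrow{\sim}\mathfrak n$; dimensions match since both sides have dimension $l-1$. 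The inverse function theorem then promotes $\phi$ to a diffeomorphism.

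The only subtle point in this argument is the identification $Z_M(X')=Z_M(\mathfrak s)$ used for injectivity and the isomorphism property of $\operatorname{ad}X_1$ used for the differential; both, however, are direct consequences of the polarity of the $M$-action established in Lemma \ref{lem: mpol} and worked out in Lemma \ref{l2}, so the proof reduces to a bookkeeping of results already in place.
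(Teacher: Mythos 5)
Your proof is correct and takes essentially the same route as the paper: compute the differential of $\phi$ and use the polar structure (via Lemma \ref{l2}) to see it is an isomorphism, then invoke the inverse function theorem. Two points where you go further than the paper, both legitimate: (i) the paper dismisses bijectivity of $\phi$ with the phrase ``by definition'' whereas you actually verify surjectivity from Kostant's transitivity and injectivity from $Z_M(X')=Z_M(\mathfrak s)$ for regular $X'$, which is the more careful account; (ii) you use $M$-equivariance to reduce the differential computation to the single point $(eZ_M(\mathfrak s),rX_1)$, while the paper computes at a general $(m_0,X_0)$ and extracts $\mathrm{Ad}(m_0)$ as a harmless conjugation. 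Your injectivity-of-$d\phi$ argument cites Lemma \ref{l2} directly (that $\operatorname{ad}X_1$ is an isomorphism onto $\mathfrak s^{\perp_\mathfrak n}$), whereas the paper argues from the orthogonality $\langle Z\cdot X_0,T\rangle=0$ coming from polarity and then regularity of $X_0$; these are the same content, just packaged differently. No gaps.
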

\begin{proof}
By definition the map is bijective. For $m_0\in M$ and $X_0\in F'$ the differential computes to \begin{equation*} d\phi_{(m_0,X_0)}(d\tau(m_0)Z,T)=\mathrm{Ad}(m_0)([Z,X_0]+T),\end{equation*} where $Z$ is an element of the orthogonal complement of $\mathfrak{z}_{\mathfrak m}(\mathfrak{s})$ in $\mathfrak m$, $T\in\mathfrak{s}$ and $\tau(x)$ is the mapping $mZ_{M}(\mathfrak{s})\mapsto xmZ_{M}(\mathfrak{s})$ from $M/Z_{M}(\mathfrak{s})$ onto itself. 

Now \[\langle Z\cdot X_0,T\rangle=0,\] since the action is polar. Hence, the differential vanishes iff $[Z,X_0]=T=0$. But $[Z,X_0]=0$ implies that \[Z\in \mathfrak{z}_{\mathfrak m}(H_0))=\mathfrak{z}_{\mathfrak m}(\mathfrak{s}),\] since $X_0$ is regular. Thus, $Z=0$.
\end{proof}
\begin{corollary}\label{cor1}
The restriction map from $C^\infty(\mathfrak{n}')^{M}\to C^\infty(\mathfrak{s}')^{W}$ is an isomorphism. 
\end{corollary}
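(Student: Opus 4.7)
The plan is to transfer the question to the product $M/Z_M(\mathfrak{s}) \times F'$ via the diffeomorphism $\phi$ from the preceding lemma, where both invariance and smoothness become transparent.

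For injectivity, I would argue that since $\mathfrak{s}$ is a section, every $M$-orbit in $\mathfrak{n}'$ meets $\mathfrak{s}'$, so any $M$-invariant function is completely determined by its restriction to $\mathfrak{s}'$. The restriction is automatically $W$-invariant: elements of $N_M(\mathfrak{s})$ preserve $\mathfrak{s}$ and $M$-invariance forces the values to agree on a given $W$-orbit, while elements of $Z_M(\mathfrak{s})$ act trivially on $\mathfrak{s}$ so descend to $W = N_M(\mathfrak{s})/Z_M(\mathfrak{s})$.

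For surjectivity, given $g \in C^\infty(\mathfrak{s}')^W$, I would define its $M$-invariant extension $\tilde{g}:\mathfrak{n}'\to\mathbb{C}$ by $\tilde{g}(m\cdot X):= g(X)$ for $m\in M$ and $X\in F'$. This is well-defined because $F'$ intersects every regular orbit exactly once (this is precisely the content of the slice property established in the previous lemma). Under the diffeomorphism $\phi: M/Z_M(\mathfrak{s}) \times F' \to \mathfrak{n}'$, the function $\tilde g$ corresponds to $(mZ_M(\mathfrak{s}),X) \mapsto g(X)$, i.e. the pullback of $g|_{F'}$ by the second projection, which is smooth; since $\phi$ is a diffeomorphism, $\tilde g \in C^\infty(\mathfrak{n}')^M$.

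It remains to check that $\tilde g$ restricted to $\mathfrak{s}'$ really equals $g$ on all of $\mathfrak{s}'$, not just on $F'$. For $X\in \mathfrak{s}'$, one can find $w\in W$ with $w\cdot X\in F'$, and a representative $m\in N_M(\mathfrak{s})$ of $w$. Then $X = m^{-1}\cdot(w\cdot X)$ gives $\tilde{g}(X) = g(w\cdot X) = g(X)$ by the $W$-invariance of $g$. This is the one place where $W$-invariance of $g$ is essential, and it is the only subtle point in the argument; everything else is a direct translation of the slice structure provided by $\phi$.
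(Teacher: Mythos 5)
Your proof is correct and is precisely the intended deduction from the preceding lemma: the paper states the corollary without proof, and the argument you give—transferring everything to $M/Z_M(\mathfrak{s})\times F'$ via $\phi$, using the slice property for well-definedness, pulling back along the second projection for smoothness, and invoking $W$-invariance to patch from $F'$ to all of $\mathfrak{s}'$—is the natural one. The only point you leave implicit (but which is forced by $\phi$ being a diffeomorphism, and is clear in this rank-one setting where $\mathfrak{s}'=\mathbb{R}X_1\setminus\{0\}$ and $F'=\mathbb{R}^+X_1$) is that $F'$ is an open submanifold of $\mathfrak{s}$, so that $g|_{F'}$ is smooth and the product $M/Z_M(\mathfrak{s})\times F'$ is a manifold.
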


\begin{remark}
Due to the structure of (non exceptional) rank one symmetric spaces most of the theory of this chapter is also applicable to any semisimple, rank one group $G$ coming from the 3 non exceptional series and a subgroup $M'$ of $M$ acting polarly on $\mathfrak n$. Examples for $M'$ are centralizer $M_{m}$, $m\in M$. The only case when such centralizers do not act polarly on $\mathfrak n$, is when  $G=SU(1,n)$ and $M_m$ is a maximal torus in $SU(1,n)$.
\end{remark}
\chapter{Some differential equations of hypergeometric type}\label{chap: ode}

In this chapter we consider a special differential equation coming from the Casimir operator. We make use of the theory of radial parts for the decomposition $SO_o(1,l)=G=M(A\exp \mathfrak{s}) K$ from Chapter \ref{chap: radial} and consider only functions satisfying a certain equivariance property. We show then that the differential equation resembles a hypergeometric equation and we determine the space of solutions thereof.

\section{Differential equations for $M$-equivariant functions on a slice}\label{sec: odeimp}

Let \index{$V_\pi$, representation space of $\pi$} $(\pi,V_\pi)$ be an irreducible representation of $M$ on $V_\pi$ and \index{$C^\infty\left(X\times_M\mathrm{End}(V_\pi)\right)$, space of sections} $F\in C^\infty\left(X\times_M\mathrm{End}(V_\pi)\right)$, where 
\begin{equation*}
C^\infty\left(X\times_M \mathrm{End}(V_\pi) \right):=\{F\in C^\infty(X,\mathrm{End}(V_\pi)):F(m\cdot x)=\pi(m)F(x) \mbox{ f.a. $m\in M$ and $x\in X$}\}.
\end{equation*}

Here we define the action of $M$ on $X=G/K$ by $m\cdot x=mx$. We can also identify $X$ with $AN$, $xK=anK\mapsto an\cdot o \mapsto an$, \index{$o$, $o=K$}$o=K\in G/K$, via the Iwasawa decomposition. The action of $M$ on $X$ is then given by \[m\cdot x=mx=manK=amnK=a(mnm^{-1})K\] for $x=an=anK$. We can also define an action of $M$ on $G$ via \label{$*$} left translation \[m*g:=mg.\]  

Let $\mathrm{pr}:G\to X=G/K=AN$ be the canonical projection associated to the Iwasawa decomposition $G=ANK$ which maps $g=ank$ to $anK=an$. Then we have the following commutative diagram
 
\begin{center}
    \makebox[0pt]{
\begin{xy}
  \xymatrix{
      G \ar[r]^{m*} \ar[d]_{\mathrm{pr}}    &   G \ar[d]^{\mathrm{pr}}  \\
      X=G/K=AN \ar[r]_{m\cdot }             &   X=G/K=AN   
  }
\end{xy}
}
\end{center}
because \[\mathrm{pr}(m*g)=\mathrm{pr}(m*ank)=\mathrm{pr}(mank)=\mathrm{pr}(amnm^{-1}mk)=amnm^{-1}K=m\cdot anK=m\cdot \mathrm{pr}(g).\]

Furthermore, we fix a slice $S\subset N$ for the action of $M$ on $N$. We can assume that $S=\exp \mathbb{R}^+X_1$, where $X_1\in\mathfrak n\cong \mathbb{R}^{l-1}$ is of unit length with respect to $B_\theta(.,.)$. We set $\mathfrak s:=\mathbb{R}X_1$. We fix an orthonormal basis $\{X_1,Y_2,\ldots,Y_{l-1} \}$ of $\mathfrak n$ according to Lemma \ref{lem: jrelations}. Then we compute for a differential operator $Z\in \mathfrak m$, $X'=sX_1$ \index{$pi(Z)$@$\pi(Z)$, derived operator from $\pi(\exp Z)$} and for any $F:G\to \mathrm{End}(V_\pi)$ smooth with $F(mg)=\pi(m)F(g)$ for all $m\in M$, $g\in G$  
\begin{eqnarray}\label{eq: zmequi}
Z^{\exp -X'}F(\exp sX_1)&=&\frac{d}{dt}|_{t=0}F(\exp sX_1\exp tZ^{\exp -sX_1})\nonumber\\ 
&=& \frac{d}{dt}|_{t=0}F(\exp tZ\exp sX_1)\nonumber
\\&=& \frac{d}{dt}|_{t=0}\pi(\exp tZ) F(\exp sX_1)\nonumber
\\&=:&\pi(Z)F(\exp sX_1). 
\end{eqnarray}

We assume in addition that $F$ solves the differential equation 
\begin{equation}\label{eq: ode}
\Omega F+\mu F=0
\end{equation}
for some $\mu\in \mathbb{C}$. Then we use the expression for $\Omega$ from equation (\ref{eq: omegapolar})

\begin{eqnarray*}
\Omega&=& H_1^2-2H_\rho+2X_1^2+ 2(l-2)f(\exp X')X_1+2f(\exp X')^2\sum_j \left((Z_{Y_j}Z_{Y_j})^{\exp -X'}\right)\\
&&-2X_1W-\sum_iM_i^2-\sum_j 2Z_{Y_j}^{\exp -X'}+Z_{Y_j}Z_{Y_j}-f(\exp X')^{-1}(Z_{Y_j}^{\exp -X'}-Z_{Y_j})W_j, 
\end{eqnarray*} 
which is valid for any $X'=sX_1$, $s\neq 0$. Here $f(\exp X')=f(\exp sX_1)=\frac{1}{s}$, $W,W_j\in \mathfrak{m}^{\bot_\mathfrak{k} }$ and $Z_{Y_j}\in \mathfrak{z}_\mathfrak{m}(\mathfrak s)^{\bot_\mathfrak m}$ as in Lemma \ref{lem: jrelations}.  
Hence, $\Omega$ can be written modulo $U(\mathfrak{g})\mathfrak{k}$ for any $X'=sX_1$, $s\neq 0$, in the form

\begin{equation}\label{eq: radom}
H_1^2-2H_\rho+ 2X^2_1 +2\frac{l-2}{s}X_1+\frac{2}{s^2}\sum_{j}(Z_{Y_j}Z_{Y_j})^{\exp -X'}.
\end{equation}

We call this the \index{polar coordinate form of $\Omega$} polar coordinate form of $\Omega$. Let $\overline{F}$ be the restriction of $F$ to $S\cdot o\subset X$, where $o=K$ and $\tilde F$ the lift of $F$ to a function $\tilde F:G\to \mathrm{End}(V_\pi)$. Then it follows from (\ref{eq: radom}) that for $s>0$

\begin{eqnarray*}
(\Omega F)(\exp sX_1\cdot o)&=& \left[\left(H_1-2H_\rho+2X_1^2+\frac{2(l-2)}{s}X_1 \right)\overline{F}\right](\exp sX_1\cdot o)\\
&&+\left[\left(\frac{2}{s^2}\mathrm{Ad}(\exp -sX_1)\sum_j Z_{Y_j}^2 \right)\tilde{F}\right](\exp sX_1).
\end{eqnarray*}

It follows from (\ref{eq: zmequi}) that \[\left(\mathrm{Ad}(\exp -sX_1)\sum_j Z_{Y_j}^2\right) \tilde{F}(\exp sX_1)=\sum_j \pi(Z_{Y_j}^2)\tilde{F}(\exp sX_1).\]

Hence, the restriction $\overline F$ of $F$ to $S\cdot o$ satisfies

\begin{equation}\label{eq: vecfield}
\left(H_1^2-2H_\rho+2X^2+2\frac{l-2}{s}X_1+\frac{2}{s^2}\sum_{j}\pi(Z_{Y_j}^2)+\mu\right) \overline{F}=0, 
\end{equation}
$s\in\mathbb R^+$. Let \index{$Z_M(S)$, centralizer of $S$ in $M$} \[Z_M(S):=\{m\in M:m\cdot \exp sX_1=\exp sX_1 \mbox{ for all } s\in\mathbb R\}.\]  

Since $\mathrm{pr}:G\to X=AN$ is the identity when restricted to $S\subset N$ and by the commutative diagram from above, it follows that \begin{eqnarray*}Z_M(S)&=&\{m\in M:m\cdot \exp sX_1=\exp sX_1 \mbox{ for all } s\in \mathbb{R}\}\\&=&\{m\in M:\exp sX_1m=m\exp sX_1 \mbox{ for all } s\in \mathbb{R}\}
.\end{eqnarray*}

That is, $Z_M(S)$ also equals the centralizer of $\exp X_1$ in $M$.  
\begin{lemma}\label{chap: ode aux}
For any $(\pi,V_\pi)\in \widehat M$ we have that $\dim(V_\pi^{Z_M(S)})$ is either 0 or 1. 
\end{lemma}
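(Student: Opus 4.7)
The plan is to identify the pair $(M, Z_M(S))$ with a classical symmetric pair of compact groups and then quote the corresponding Gelfand-pair statement. First I would unwind the identifications already in place: by (\ref{eq: mactonn}) the adjoint action of $M \cong SO(l-1)$ on $\mathfrak n \cong \mathbb R^{l-1}$ is the standard defining one, and the slice $S = \exp(\mathbb R^+ X_1)$ corresponds to the positive $e_1$-axis under $X_{e_1} \leftrightarrow e_1$. Combining this with the observation immediately preceding the lemma that $Z_M(S)$ coincides with the centralizer in $M$ of $\exp X_1$, one obtains
\[ Z_M(S) \;=\; \{m \in SO(l-1) : m\cdot e_1 = e_1\} \;\cong\; SO(l-2), \]
realised as the block subgroup fixing the first coordinate.

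Next I would invoke the classical fact that $(SO(l-1), SO(l-2))$ is a Gelfand pair. The quickest way is to note that $SO(l-1)/SO(l-2) \cong S^{l-2}$ is a Riemannian symmetric space of compact type, and for such symmetric pairs every irreducible unitary representation of the larger group restricts multiplicity-freely to the smaller one, see for instance \cite[Ch.~V]{GGA}. Specialising this multiplicity-freeness to the trivial representation of $SO(l-2)$ gives
\[ \dim V_\pi^{SO(l-2)} \;\leq\; 1 \quad \mbox{for every } \pi \in \widehat{M}, \]
which is the asserted statement. An equivalent route is to appeal directly to the classical branching rule $SO(n)\downarrow SO(n-1)$, which is multiplicity-free and immediately yields the same bound on the multiplicity of the trivial type.

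The only case needing separate attention is the boundary one $l=3$: here $M = SO(2)$ is abelian, so every irreducible representation is one-dimensional, and $Z_M(S) = \{1\}$, whence $V_\pi^{Z_M(S)} = V_\pi$ is automatically of dimension one. I do not anticipate any genuine obstacle; the lemma is essentially just a translation of the Gelfand-pair property of $(SO(l-1),SO(l-2))$ once the identification of $Z_M(S)$ with the standard embedded $SO(l-2)$ has been carried out, and the only thing that could conceivably go wrong is the bookkeeping of that identification.
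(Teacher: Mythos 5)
Your proof is correct and takes the same route as the paper: the key step in both is that $(M,Z_M(S))\cong(SO(l-1),SO(l-2))$ is a Gelfand pair, so that the trivial $Z_M(S)$-type occurs at most once in each irreducible $\pi\in\widehat M$. One small caution on phrasing: it is not true that every compact symmetric pair has multiplicity-free branching (e.g.\ $SU(n)\downarrow SO(n)$ does not), so the blanket claim ``for such symmetric pairs every irreducible representation restricts multiplicity-freely'' overstates the general principle; what the symmetric-pair property actually gives is precisely the Gelfand-pair bound $\dim V_\pi^{Z_M(S)}\le 1$, which is all you need, and your alternative appeal to the explicit $SO(n)\downarrow SO(n-1)$ branching rule is the clean way to get the stronger statement if you want it.
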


\begin{proof}
This follows since $(M,Z_M(S))$, $M\cong SO(l-1)$, $Z_M(S)\cong SO(l-2)$, is a Gelfand pair. Hence for any irreducible $\pi$, the space of
vectors fixed under $Z_M(S)$ is at most one dimensional.  
\end{proof}

\begin{lemma}\label{chap: ode lem: inv}
Assume $V_\pi^{Z_M(S)}$ is spanned by $v$. The restriction $\overline{F}$ of $F$ to $S\cdot o$ maps $V_\pi$ to $V_\pi^{Z_M(S)}$  
\end{lemma}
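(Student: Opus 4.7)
The plan is very short: the statement is essentially an immediate consequence of the $M$-equivariance property built into $F \in C^\infty(X \times_M \mathrm{End}(V_\pi))$, applied at points of $S\cdot o$ that are fixed by $Z_M(S)$.

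First I would unpack what needs to be shown. For each $s > 0$ the restriction $\overline{F}(\exp(sX_1)\cdot o)$ is an element of $\mathrm{End}(V_\pi)$, and the claim is that its image lies in the (at most one-dimensional, by Lemma \ref{chap: ode aux}) subspace $V_\pi^{Z_M(S)}$. Equivalently, I need to verify that for every $m \in Z_M(S)$ and every $w \in V_\pi$,
\[
\pi(m)\,\overline{F}(\exp(sX_1)\cdot o)\,w \;=\; \overline{F}(\exp(sX_1)\cdot o)\,w.
\]

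The key observation is that by the very definition of $Z_M(S)$, for any $m \in Z_M(S)$ we have $m\cdot \exp(sX_1) = \exp(sX_1)$, and since $m\cdot o = o$ (because $M\subset K$ fixes the base point $o$), we get $m\cdot (\exp(sX_1)\cdot o) = \exp(sX_1)\cdot o$. Applying the equivariance relation
\[
F(m\cdot x) \;=\; \pi(m)\,F(x) \qquad (m\in M,\; x\in X)
\]
with $x = \exp(sX_1)\cdot o$ therefore yields
\[
\overline{F}(\exp(sX_1)\cdot o) \;=\; F(m\cdot \exp(sX_1)\cdot o) \;=\; \pi(m)\,\overline{F}(\exp(sX_1)\cdot o),
\]
and this identity holds as elements of $\mathrm{End}(V_\pi)$, i.e.\ for every argument $w\in V_\pi$. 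This is exactly the desired conclusion.

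There is essentially no obstacle here; the only point worth being careful about is to match the action of $M$ on $X$ used earlier in the section with the action of $M$ on $S$ via $S\subset N$, which is precisely what the commutative diagram relating $m*$ on $G$ and $m\cdot$ on $X=AN$ ensures, and to recall that $F$ takes values in $\mathrm{End}(V_\pi)$ so that the conclusion is naturally phrased as a statement about the image of the endomorphism $\overline{F}(\exp(sX_1)\cdot o)$ lying in $V_\pi^{Z_M(S)}$.
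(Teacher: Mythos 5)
Your argument is correct and is essentially the paper's own proof: both use the equivariance $F(m\cdot x)=\pi(m)F(x)$ together with the fact that $Z_M(S)$ fixes each point $\exp(sX_1)\cdot o$, which forces $\pi(m)\,\overline{F}(\exp(sX_1)\cdot o)=\overline{F}(\exp(sX_1)\cdot o)$ for all $m\in Z_M(S)$. The only (inconsequential) difference is that you phrase the conclusion at the level of the endomorphism, while the paper applies it to an arbitrary vector $v\in V_\pi$.
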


\begin{proof}
This is a consequence of $F(m\cdot x)=\pi(m)F(x)$ for any $m\in M$, $x\in X$. Then for any $v\in V_\pi$
\begin{eqnarray*} \pi(m)F(\exp sX_1)v&=&F(m\cdot \exp sX_1)v\\
&=&F(\exp sX_1 )v
\end{eqnarray*} for all $s\in \mathbb{R}$ and $m\in Z_M(S)$. This means that $F(\exp sX_1)v$ is $Z_M(S)$-invariant. 
\end{proof}

\begin{lemma}\label{chap: ode d}
$\sum_jZ_{Y_j}^2$ is $Z_M(S)$-invariant and $\sum_j \pi(Z_{Y_j}^2)|_{V_\pi^{Z_M(S)}}$ is negative semidefinite for any $(\pi,V_\pi)\in\widehat M$. 
\end{lemma}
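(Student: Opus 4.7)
The plan is to prove invariance and negative semidefiniteness separately. For invariance, I would identify $\{Z_{Y_j}\}$ as an orthonormal basis of $\mathfrak{z}_{\mathfrak{m}}(\mathfrak{s})^{\bot_{\mathfrak{m}}}$ with respect to some $\mathrm{Ad}(Z_M(S))$-invariant inner product, and then invoke the classical fact that for any orthonormal basis $\{e_j\}$ of a vector space on which a group acts orthogonally, the sum $\sum_j e_j^2\in U(\mathfrak{m})$ is invariant. The definiteness then follows from unitarisability of $\pi$.

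Concretely, I would first observe that $Z_M(S)$ normalises $\mathfrak{z}_{\mathfrak{m}}(\mathfrak{s})$: for $m\in Z_M(S)$ and $Z\in\mathfrak{z}_{\mathfrak{m}}(\mathfrak{s})$ one has $[\mathrm{Ad}(m)Z,\mathfrak{s}]=\mathrm{Ad}(m)[Z,\mathrm{Ad}(m^{-1})\mathfrak{s}]=\mathrm{Ad}(m)[Z,\mathfrak{s}]=0$. Since $B_\theta$ is $\mathrm{Ad}(K)$-invariant, the complement $\mathfrak{z}_{\mathfrak{m}}(\mathfrak{s})^{\bot_{\mathfrak{m}}}$ is $\mathrm{Ad}(Z_M(S))$-stable as well. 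Lemma \ref{l2} gives a linear isomorphism $\mathrm{ad}(X_1)\colon \mathfrak{z}_{\mathfrak{m}}(\mathfrak{s})^{\bot_{\mathfrak{m}}}\to\mathfrak{s}^{\bot_{\mathfrak{n}}}$, and because $\mathrm{Ad}(m)X_1=X_1$ for $m\in Z_M(S)$ this isomorphism is $Z_M(S)$-equivariant. Pulling back $B_\theta|_{\mathfrak{s}^{\bot_{\mathfrak{n}}}}$ via $\mathrm{ad}(X_1)^{-1}$ therefore yields a $Z_M(S)$-invariant inner product on $\mathfrak{z}_{\mathfrak{m}}(\mathfrak{s})^{\bot_{\mathfrak{m}}}$ in which $\{Z_{Y_j}\}$ is orthonormal (as $\{Y_j\}$ is orthonormal by construction). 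Writing $\mathrm{Ad}(m)Z_{Y_j}=\sum_k A_{jk}Z_{Y_k}$ with $AA^{\top}=I$, invariance of $\sum_j Z_{Y_j}^2\in U(\mathfrak{m})$ drops out of the elementary identity
\[
\mathrm{Ad}(m)\sum_j Z_{Y_j}^2 \;=\; \sum_{k,l}\Big(\sum_j A_{jk}A_{jl}\Big)Z_{Y_k}Z_{Y_l}\;=\;\sum_k Z_{Y_k}^2.
\]

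For the second assertion, compactness of $M$ equips the irreducible $(\pi,V_\pi)$ with an $M$-invariant Hermitian inner product. Differentiating $\pi(m)^{\ast}=\pi(m)^{-1}$ at $m=e$ gives $\pi(Z)^{\ast}=-\pi(Z)$ for every $Z\in\mathfrak{m}$, so each $\pi(Z_{Y_j})$ is skew-Hermitian and $\pi(Z_{Y_j})^2=-\pi(Z_{Y_j})^{\ast}\pi(Z_{Y_j})\leq 0$. Summing, $\sum_j\pi(Z_{Y_j}^2)$ is negative semidefinite on all of $V_\pi$; invariance from the first part ensures that it commutes with $\pi|_{Z_M(S)}$, hence preserves $V_\pi^{Z_M(S)}$, on which the restriction remains negative semidefinite.

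The only mildly delicate point is that $\{Z_{Y_j}\}$ is not automatically orthonormal with respect to $B_\theta|_{\mathfrak{z}_{\mathfrak{m}}(\mathfrak{s})^{\bot_{\mathfrak{m}}}}$, since $\mathrm{ad}(X_1)$ need not be an isometry; the pullback construction circumvents a direct norm computation and works uniformly for $l\geq 3$. All remaining steps reduce to $\mathrm{Ad}(K)$-invariance of $B_\theta$ and to unitarisability of the finite-dimensional representation $\pi$.
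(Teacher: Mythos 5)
Your proposal is correct, and for the negative semidefiniteness it follows the same route as the paper (skew-symmetry of $\pi(Z)$ from unitarity, hence $\pi(Z^2)\leq 0$; invariance keeps $V_\pi^{Z_M(S)}$ stable). For the invariance assertion, however, you take a genuinely different path. The paper computes $B(Z_{Y_k},Z_{Y_l})$ directly from the relations of Lemma~\ref{lem: jrelations} (giving $B(Z_{Y_j},Z_{Y_j})=\tfrac12$, so that $\{\sqrt{2}Z_{Y_j}\}$ is $B_\theta$-orthonormal in $\mathfrak{z}_{\mathfrak m}(\mathfrak s)^{\bot_{\mathfrak m}}$) and then realises $2\sum_j Z_{Y_j}^2$ as the difference $\Omega_{\mathfrak m}-\sum_i K_i^2$ of two manifestly $Z_M(S)$-invariant elements of $U(\mathfrak m)$. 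You instead sidestep the norm computation: you exploit that $\mathrm{ad}(X_1)\colon\mathfrak{z}_{\mathfrak m}(\mathfrak s)^{\bot_{\mathfrak m}}\to\mathfrak{s}^{\bot_{\mathfrak n}}$ is $Z_M(S)$-equivariant (because $Z_M(S)$ fixes $X_1$), pull back $B_\theta|_{\mathfrak{s}^{\bot_{\mathfrak n}}}$ to get an invariant inner product in which $\{Z_{Y_j}\}$ is tautologically orthonormal, and then use the elementary identity that the sum of squares of an orthonormal basis is unchanged under an orthogonal change of basis. Your route is slightly more self-contained — it needs only the defining relation $[Z_{Y_j},X_1]=Y_j$ and not the specific scaling $[X_1,\theta Y_j]=2Z_{Y_j}$ from Lemma~\ref{lem: jrelations} — whereas the paper's route makes the Casimir of $\mathfrak m$ visible, which is conceptually informative. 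Both are sound. (One tiny slip of phrasing: you pull back $B_\theta$ via $\mathrm{ad}(X_1)$ itself, not via its inverse, but the intent is unambiguous.)
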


By Lemma \ref{chap: ode aux} we can hence identify the operator $\sum_j \pi(Z_{Y_j}^2)|_{V_\pi^{Z_M(S)}}\in \mathrm{End} (V_\pi^{Z_M(S)})$ with a nonnegative number.

\begin{proof}
 For the first claim we use the orthogonal decomposition 
\begin{equation*} 
\mathfrak{m}=\mathfrak{z}_{ \mathfrak m}(S)\oplus \mathfrak{z}_{ \mathfrak{m} }(S)^{\bot_{ \mathfrak{m} }}.
\end{equation*}

Recall the relations from Lemma \ref{lem: jrelations} on $X_1,Y_j,Z_{Y_j}$. We compute for $Y_k\neq Y_l$ from $\{Y_j\}_j$ 
\begin{eqnarray*} 
0=-B_\theta(Y_k,Y_l)&=&B([Z_{Y_k},X_1],\theta Y_l)
\\&=& B(Z_{Y_k},[X_1,\theta Y_l])
\\&\overset{\text{Lem. } \ref{lem: jrelations}}=& 2B(Z_{Y_k},Z_{Y_l}).
\end{eqnarray*}

By the same computation $B(Z_{Y_j},Z_{Y_j})=1/2$. That is, $\{Z_{Y_j}\}_j$ is an orthogonal set and $|Z_{Y_j}|^2=1/2$. Since the dimension of $\mathfrak z_{\mathfrak m}(S)^{\bot_\mathfrak{m}}$ equals the dimension of $\mathfrak s^{\bot_\mathfrak{n}}$, see Lemma \ref{l2}, it follows that the set $\{\sqrt{2} Z_{Y_j}\}_j$ forms an orthonormal basis for $\mathfrak z_{\mathfrak m}(S)^{\bot_\mathfrak{m}}$. If $\{K_i\}_i$ is an orthonormal basis of $ \mathfrak{z}_{ \mathfrak{m} }(S)$, then $ 2 \sum_j Z_{Y_j}^2$ can be written as the difference 
\begin{equation*}
2\sum_j Z_{Y_j}^2=\Omega_{ \mathfrak{m} }-\sum_iK_i^2
\end{equation*}
of two $Z_M(S)$-invariant operators, where $\Omega_{ \mathfrak{m} }$ is the Casimir of $U( \mathfrak{m} )$.

The second claim is clear, if $V^{Z_M(S)}_\pi=0$. Otherwise we fix some representation $(\pi,V_\pi)$ with $V_\pi^{Z_M(S)}\neq 0$. Since $\sum_j Z_{Y_j}^2$ is $Z_M(S)$-invariant, this implies that $\sum_j\pi(Z_{Y_j}^2)$ maps $V_\pi^{Z_M(S)}$ into itself. Further, $\pi$ defines also a representation of $ \mathfrak{m}$ by differentiating. Since $\pi$ is unitary, i.e. there is some preserved inner product on $V_\pi$, the mapping $\pi(Z)$ is skew-symmetric for any $Z$ in $\mathfrak{m}$. Thus, 
\begin{equation*}
 \mathrm{spec}(\pi(Z))\subset i \mathbb{R} \mbox{ resp. } \mbox{spec}(\pi(Z^2))\subset \mathbb{R}^-
\end{equation*} 
for any $Z$ in $\mathfrak{m}$.
\end{proof}

Thus, we see that $\sum_j\pi(Z_{Y_j}^2)$, just as $F(\exp sX_1)$ for $s\in\mathbb R$, maps $V_\pi^{Z_M(S)}$ into itself. Next, we define the action of $X_1$ resp. for $H\in \mathfrak{a}$ on $C^\infty(X\times_M V_\pi)$. While the action of $X_1$ on $\exp \mathbb{R}X_1$ is standard, i.e. by translation, we define an action of $A$ on $N$ by $a\cdot n=a^{-1}na$. This really  defines an action, since $A$ is commutative. Indeed, \begin{equation*} a_1a_2\cdot n=a_2a_1\cdot n=a_1^{-1}a_2^{-1}na_2a_1=a_1\cdot(a_2\cdot n).\end{equation*}

Furthermore, \[a^{-1}\exp( X_1)a=\exp(-tH)\exp( X_1)\exp(tH)=\exp( e^{-t\alpha(H)}X_1)\in \exp \mathbb{R}X_1.\] 

Thus, we see that the action of $A$ on $N$ restricts to an action on $\exp \mathbb{R}X_1$. The reason for defining the action of $A$ in this non-standard way is that we consider functions on $G$ which are right-$K$- and left-$A$-invariant, i.e. we will consider functions on $N$ using the Iwasawa decomposition $G=ANK$. In this way the action of $A$ on $N$ is compatible with the action of vectorfields $V\in\mathfrak g$ on $C^\infty(G)$, i.e. for $n\in N$, $H\in\mathfrak a$ and $f\in C^\infty(A\backslash G/K)$
\begin{equation*}
Hf(n)=\frac{d}{dt}|_{t=0}f(n\exp tH)=\frac{d}{dt}|_{t=0}f(\exp -tHn\exp tH)=\frac{d}{dt}|_{t=0}f(\exp tH\cdot n) .
\end{equation*}Next we determine how the vector fields occurring in (\ref{eq: vecfield}) act on functions of $C^\infty(X,\mathrm{End}(V_\pi))$.

\begin{lemma}
Identifying $\exp \mathbb{R}X_1$ with $\mathbb R$, the vectorfield $X_1$ corresponds to $\frac{d}{ds}$ while $H_\rho$ is 
$-\alpha(H_\rho)s\frac{d}{ds}$ and $H_1^2$ acts as $\alpha(H_1)^2\left(s^2\frac{d^2}{ds^2}+s\frac{d}{ds}\right)$.
\end{lemma}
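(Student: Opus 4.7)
The plan is a direct computation using two ingredients: the identification $\exp sX_1 \leftrightarrow s$, and the action of $A$ on $N$ defined just above the statement by $a \cdot n = a^{-1} n a$. The preceding paragraph already records the key identity
\[
H f(n) = \frac{d}{dt}\bigg|_{t=0} f(\exp tH \cdot n)
\]
for $H \in \mathfrak{a}$ and $f$ a left-$A$- and right-$K$-invariant function on $G$, which is what reduces the action of $H$ to an operation on functions of $s$.

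First I would handle $X_1$, which is immediate: since $\exp sX_1 \exp tX_1 = \exp(s+t)X_1$, the left-invariant vector field $X_1$ corresponds to $\frac{d}{ds}$ on $\tilde f(s) := f(\exp sX_1)$. Next I would compute, for arbitrary $H \in \mathfrak{a}$, its action on $\tilde f$. Using $\mathrm{Ad}(\exp -tH) X_1 = e^{-t\alpha(H)} X_1$ one obtains
\[
\exp tH \cdot \exp sX_1 = \exp(-tH) \exp sX_1 \exp tH = \exp\bigl(e^{-t\alpha(H)} s \, X_1\bigr),
\]
and the chain rule then produces
\[
H \tilde f(s) = \frac{d}{dt}\bigg|_{t=0} \tilde f\bigl(e^{-t\alpha(H)} s\bigr) = -\alpha(H) \, s \, \tilde f'(s),
\]
which for $H = H_\rho$ is the second assertion.

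For $H_1^2$ I would simply iterate: composing $H_1 = -\alpha(H_1) s \frac{d}{ds}$ with itself and using $\frac{d}{ds}(s g(s)) = g(s) + s g'(s)$ yields
\[
H_1^2 = \alpha(H_1)^2 \left(s^2 \frac{d^2}{ds^2} + s \frac{d}{ds}\right).
\]
There is no serious obstacle; the content is essentially the chain rule, and the crucial point --- already set up in the discussion just above --- is that the conjugation action of $A$ on $N$ preserves the slice $S$, so that $H \in \mathfrak{a}$ really does act as a genuine differential operator in the single variable $s$.
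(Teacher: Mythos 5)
Your proposal is correct and follows essentially the same route as the paper: the only content is the chain rule applied to $F(\exp sX_1\exp tH)=F(\exp(e^{-t\alpha(H)}sX_1))$, which the paper executes identically for $H$. The one small difference is your treatment of $H^2$: the paper computes $\frac{d^2}{dt^2}\big|_{t=0}F(e^{-t\alpha(H)}s)$ directly, whereas you iterate the operator identity $H\mapsto -\alpha(H)s\frac{d}{ds}$. The iteration is valid, but it tacitly uses that $HF$ again lies in the class of left-$A$-invariant, $M$-equivariant sections to which the first-order formula applies --- this holds because right translation by $\exp tH$ commutes with left multiplication by $A$ and $M$, but it is worth stating since that commutation is the reason the operator-level composition is legitimate; the paper's direct second derivative sidesteps the point entirely.
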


\begin{proof}
Let $F\in C^\infty(X\times_M V_\pi) $. We have for any $H\in \mathfrak{a}$ and $s\in \mathbb{R}$
\begin{eqnarray*}
(HF)(\exp sX_1) &=& \frac{d}{dt}|_{t=0}F( \exp -tH \exp sX_1\exp tH)\\ 
&=& \frac{d}{dt}|_{t=0}F( \exp sX^{\exp -tH})\\ &=&  \frac{d}{dt}|_{t=0}F(\exp (\mathrm{Ad}(\exp -tH)sX_1))\\ &=& \frac{d}{dt}|_{t=0} F(\exp (\exp -t\mathrm{ad}(H)sX_1))\\ 
&=& \frac{d}{dt}|_{t=0}F(\exp e^{-\alpha(H)t}sX_1)\\ &\overset{\text{Identification}}=&\frac{d}{dt}|_{t=0} F( e^{-\alpha(H)t}s)\\
&=& -\alpha(H)sF'(s).
\end{eqnarray*}
Similarly,
\begin{eqnarray*}
(H^2F)(\exp sX_1) &=& \frac{d^2}{dt^2}|_{t=0}  F(\exp -tH\exp sX_1\exp tH)\\ 
&=& \frac{d^2}{dt^2}|_{t=0} F(\exp e^{-t\alpha(H)}sX_1)\\ 
&\overset{\text{Identification}}=& \frac{d^2}{dt^2}|_{t=0} F(e^{-t\alpha(H)}s)\\ 
&=& \frac{d}{dt}|_{t=0}  -\alpha(H)sF'(e^{-t\alpha(H)}s)\\ 
&=& \alpha(H)^2 \left( s F'(s)+s^2 F''(s) \right),
\end{eqnarray*}
where $F(s):=F(\exp sX_1)$.
\end{proof}
 
We apply equation (\ref{eq: vecfield}) to $v$ and get 
\begin{equation*}
\left(H_1^2-2H_\rho+ 2X^2_1+2\frac{l-2}{s}X_1+\frac{2}{s^2}\sum_{j}\pi(Z_{Y_j}^2)|_{V_\pi^{Z_M(S)}}+\mu\right) F(\exp sX_1)v=0.
\end{equation*}
i.e.
\begin{eqnarray}\label{e11}
\left((\alpha(H_1)^2s^2+2)\frac{d^2}{ds^2}+\left(\left(\alpha(H_1)^2+2\alpha(H_\rho)\right)s+2\frac{l-2}{s}\right)\frac{d}{ds}\right.\nonumber\\
\left.+
\frac{2}{s^2}\sum_{j}\pi(Z_{Y_j}^2)|_{V_\pi^{Z_M(S)}}+\mu\right)F(\exp sX_1)v=0.
\end{eqnarray}

Let $V_\pi^{Z_M(S)}\neq\{0\}$. As $F(\exp sX_1)|_{V_\pi^{Z_M(S)}}$ and $\sum_{j}\pi(Z_{Y_j}^2)|_{V_\pi^{Z_M(S)}}$ are elements of $\mathrm{End}(V_\pi^{Z_M(S)})$ which we identify with $\mathbb C$ we deduce that 

\begin{eqnarray}\label{e1}
\left((\alpha(H_1)^2s^2+2)\frac{d^2}{ds^2}+\left(\left(\alpha(H_1)^2+2\alpha(H_\rho)\right)s+2\frac{l-2}{s}\right)\frac{d}{ds}\right.\nonumber\\
\left.+
\frac{2}{s^2}\sum_{j}\pi(Z_{Y_j}^2)|_{V_\pi^{Z_M(S)}}+\mu\right)F(s)=0
\end{eqnarray}
with the convention $F(\exp sX_1)v=:F(s)v$. We view (\ref{e1}) as an ordinary differential equation for a scalar valued function. 
\begin{lemma}\label{lem: ffinitesl}
Let $F\in C^\infty(G\times_M V_\pi)$. Then \[\mathrm{Tr}(F(s))=\langle F(s)v,v\rangle_{V_\pi},\] where $v\in V_\pi$ spans $V_\pi^{Z_M(S)}$.
\end{lemma}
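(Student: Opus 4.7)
The plan is to use the crucial fact from Lemma \ref{chap: ode lem: inv} that $F(s)$ maps all of $V_\pi$ into the one-dimensional subspace $V_\pi^{Z_M(S)}$, combined with the fact (Lemma \ref{chap: ode aux}) that this invariant subspace is at most one-dimensional. Under these constraints $F(s)$ is effectively a rank-at-most-one endomorphism landing in $\mathbb{C}v$, and for such operators the trace collapses to a single diagonal entry.

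More concretely, first I would dispose of the trivial case: if $V_\pi^{Z_M(S)}=\{0\}$, then Lemma \ref{chap: ode lem: inv} forces $F(s)=0$, and both sides of the asserted identity vanish. Otherwise, assume $v\in V_\pi^{Z_M(S)}$ is a unit vector (with respect to the $M$-invariant inner product on $V_\pi$) spanning the one-dimensional space $V_\pi^{Z_M(S)}$, and extend it to an orthonormal basis $\{v=v_1,v_2,\ldots,v_{d_\pi}\}$ of $V_\pi$.

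Then I would compute the trace in this basis:
\begin{equation*}
\mathrm{Tr}(F(s))=\sum_{i=1}^{d_\pi}\langle F(s)v_i,v_i\rangle_{V_\pi}.
\end{equation*}
By Lemma \ref{chap: ode lem: inv}, each $F(s)v_i$ lies in $V_\pi^{Z_M(S)}=\mathbb{C}v$, so there exist scalars $c_i\in\mathbb{C}$ with $F(s)v_i=c_i v$. Hence
\begin{equation*}
\langle F(s)v_i,v_i\rangle_{V_\pi}=c_i\langle v,v_i\rangle_{V_\pi}=c_i\,\delta_{i1},
\end{equation*}
so only the $i=1$ term survives and
\begin{equation*}
\mathrm{Tr}(F(s))=c_1=\langle F(s)v,v\rangle_{V_\pi},
\end{equation*}
as claimed.

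There is no genuine obstacle here; the only thing to be careful about is the choice of inner product and the normalization of $v$. The identity relies on $v$ being a unit vector with respect to the $M$-invariant Hermitian form on $V_\pi$, and on the orthonormal extension being compatible with this form — both of which are standard since $\pi$ is a unitary representation of the compact group $M$. The real content of the statement is not the trace computation itself but rather the preceding structural result (Lemma \ref{chap: ode aux} together with Lemma \ref{chap: ode lem: inv}) guaranteeing that $F(s)$ factors through a one-dimensional subspace.
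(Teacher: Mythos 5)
Your proof is correct and follows essentially the same route as the paper: pick an orthonormal basis starting with $v$, observe via Lemma \ref{chap: ode lem: inv} that $F(s)v_i\in\mathbb{C}v$, and conclude that every diagonal entry except the first vanishes by orthogonality. The only (harmless) addition is your explicit treatment of the degenerate case $V_\pi^{Z_M(S)}=\{0\}$, which the paper leaves implicit since the statement already posits a spanning vector $v$.
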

\begin{proof}
Let $v,v_2,\ldots, v_{d_\pi}$ be an orthonormal basis of $V_\pi$ w.r.t. the inner product $\langle.,,\rangle_{V_\pi}$. Then
\[\mathrm{Tr}(F(\exp sX_1))=\langle F(\exp sX_1)v,v\rangle_{V_\pi} +\sum_{i=2}^{d_\pi}\langle F(\exp sX_1)v_i,v_i\rangle_{V_\pi}\]

By Lemma \ref{chap: ode lem: inv}, $F(\exp sX_1)v_i\in V_\pi^{Z_M(S)}=\mathbb{C}v$, hence $\langle F(\exp sX_1)v_i,v_i\rangle_{V_\pi}=0$ for all $i$.

\end{proof}

Let $M$ act on $C^\infty(X)$ by the left regular representation, i.e. $m\cdot f(x)=f(m^{-1}x)$ for $m\in M$ and $x\in X$. We call $f\in C^\infty(X)$ \index{$M$-finite of type $\pi$} $M$-finite of type $\pi\in\widehat M$, if the left regular representation of $M$ restricted to  $\mathrm{span}\{M\cdot f\}$ decomposes into finitely many copies of $\pi$. We set for $\pi\in\widehat M$ \index{$C^\infty(X)_\pi$, space of $M$-finite fucntions of type $\pi$} \[C^\infty(X)_\pi:=\{f\in C^\infty(X): f \mbox{ $M$-finite of type } \pi\}.\]

For $\pi\in \widehat M$ we denote its contragradient representation by \index{$picheck$@$\check\pi$, contragradient representation of $\pi$} $\check \pi$.
\begin{lemma}\label{lem: finfacts}
For any $M$-finite function $f\in C^\infty(X)$ of type $\check\pi\in \widehat{M}$ we define \[f^\pi(x):=d_\pi \int_M f(m\cdot x)\pi(m^{-1})dm,\]
where \index{$d_\pi$, dimension of $V_\pi$} $d_\pi=\mathrm{dim}(V_\pi)$. Then \begin{itemize}\item[a)] $f^\pi \in C^\infty(X\times_M V_\pi)$,
\item[b)] $f(x)=\mathrm{Tr}(f^\pi(x))$ for all $x\in X$,
\item[c)] $f(\exp sX_1)=\langle F(\exp sX_1)v,v\rangle_{V_\pi}$, in particular $f|_S=0$, if $V_\pi^{Z_M(S)}=\{0\}$.
\item[d)] There exist finitely many slices $S_i$ for $M$ acting on $N$ such that the restriction to $N$ of $f|_N$ vanishes iff all its restrictions $f|_{S_i}$ vanish. \footnote{ Later in Section \ref{sec: specialization} we will see that this is equivalent to the existence of certain $p_i\in \frac{1}{2}\mathbb{N}_0$ such that $\left(X_i^{2p_i}f\right)(e)=0 $ for all $i$, where $S_i=\exp \mathbb{R}^+X_i$, $X_i$ of unit length.}
\end{itemize}
\end{lemma}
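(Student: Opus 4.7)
The first three parts are essentially formal; the real content lies in (d).

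For (a), a change of variables using the bi-invariance of the Haar measure on $M$ in the defining integral for $f^\pi(m_0\cdot x)$ extracts $\pi(m_0)$ on the left, giving $f^\pi(m_0\cdot x)=\pi(m_0)f^\pi(x)$; smoothness follows from differentiation under the integral sign.

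For (b), taking the trace inside the integral gives
\[
\mathrm{Tr}(f^\pi(x))=d_\pi\int_M f(m\cdot x)\,\chi_\pi(m^{-1})\,dm=d_\pi\int_M f(m\cdot x)\,\chi_{\check\pi}(m)\,dm,
\]
which is precisely the Peter--Weyl projection of $f$, viewed under the left $M$-action on $C^\infty(X)$, onto the $\check\pi$-isotypic subspace. Since $f$ is $M$-finite of type $\check\pi$ by assumption, this projection returns $f$ unchanged, and we obtain $\mathrm{Tr}(f^\pi)=f$. For (c), part (a) places $f^\pi$ in $C^\infty(X\times_M\mathrm{End}(V_\pi))$, so Lemma~\ref{chap: ode lem: inv} forces $\mathrm{Im}(f^\pi(\exp sX_1))\subset V_\pi^{Z_M(S)}$. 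Lemma~\ref{lem: ffinitesl} then identifies $\mathrm{Tr}(f^\pi(\exp sX_1))$ with $\langle f^\pi(\exp sX_1)v,v\rangle_{V_\pi}$, which by (b) equals $f(\exp sX_1)$. When $V_\pi^{Z_M(S)}=\{0\}$ we simply have $v=0$ and the conclusion $f|_S\equiv 0$ is automatic.

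The substantive step is (d), and the main obstacle is tracking the rank-one structure of $f^\pi$ as it is transported under conjugation by elements of $M$. Since $\mathrm{Im}(f^\pi(\exp sX_1))\subset\mathbb C v$ is at most one-dimensional, there is a unique $\phi(s)\in V_\pi$ with $f^\pi(\exp sX_1)u=\langle u,\phi(s)\rangle_{V_\pi}v$. For $m_i\in M$ the set $S_i:=\exp\mathbb R^+\mathrm{Ad}(m_i)X_1$ is again a slice (conjugates of slices are slices), and combining the equivariance of $f^\pi$ with the identity $m_i\cdot\exp sX_1=\exp(s\,\mathrm{Ad}(m_i)X_1)$ and part (b), a short calculation yields
\[
f(\exp s\,\mathrm{Ad}(m_i)X_1)=\mathrm{Tr}(\pi(m_i)f^\pi(\exp sX_1))=\langle\pi(m_i)v,\phi(s)\rangle_{V_\pi}.
\]
Because $\pi$ is irreducible, $M\cdot v$ spans the finite-dimensional space $V_\pi$, so one may select $m_1,\dots,m_{d_\pi}\in M$ with $\{\pi(m_i)v\}_{i=1}^{d_\pi}$ a basis of $V_\pi$. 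The simultaneous vanishing of every $f|_{S_i}$ then forces $\phi\equiv 0$, hence $f^\pi\equiv 0$ on $S$. Since $N=M\cdot S$ (which holds because $M\cong SO(l-1)$ with $l\geq 3$ acts transitively on spheres in $\mathfrak n$), equivariance propagates this to $f^\pi\equiv 0$ on $N$, and (b) concludes $f|_N\equiv 0$. The converse implication $f|_N\equiv 0\Rightarrow f|_{S_i}\equiv 0$ is trivial since $S_i\subset N$.
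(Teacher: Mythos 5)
Parts (a)--(c) match the paper's arguments essentially verbatim, so the interest lies in (d), where you take a genuinely different route. The paper's proof uses $M$-finiteness of $f$ directly: since $\mathrm{span}\{M\cdot f\}$ is finite-dimensional, it fixes a basis $f_1=f$, $f_j=m_j\cdot f$ for suitable $m_j\in M$, sets $S_j:=m_j^{-1}\cdot S$, notes $f_j|_S=f|_{S_j}$, and observes that any $f(m\cdot\exp sX_1)=(m^{-1}\cdot f)(\exp sX_1)$ is a finite linear combination of the $f_j(\exp sX_1)$, so vanishing of all $f|_{S_j}$ forces $f|_N\equiv 0$. You instead exploit the rank-one structure of $f^\pi$ on $S$ together with irreducibility of $\pi$, choosing $m_i\in M$ so that $\{\pi(m_i)v\}_{i=1}^{d_\pi}$ spans $V_\pi$ and reading off $f|_{S_i}$ through the matrix coefficient $\langle \pi(m_i)v,\phi(s)\rangle$. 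Both arguments are correct and both ultimately rest on finite-dimensionality: the paper on $\dim\mathrm{span}\{M\cdot f\}<\infty$, you on $d_\pi<\infty$. A pleasant feature of your version is that your slices $S_i$ depend only on the $M$-type $\check\pi$ and not on the particular $f$, whereas the paper's $S_j$ a priori depend on $f$ through the chosen basis of $\mathrm{span}\{M\cdot f\}$.

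One edge case deserves an explicit sentence in your (d). The step ``$M\cdot v$ spans $V_\pi$'' requires $v\neq 0$, i.e.\ $V_\pi^{Z_M(S)}\neq\{0\}$. When $V_\pi^{Z_M(S)}=\{0\}$ -- the very case you flagged in (c) -- the rank-one bookkeeping degenerates, but then $f^\pi|_S\equiv 0$ is forced outright by Lemma~\ref{chap: ode lem: inv}, equivariance and $N=M\cdot S$ push this to $f^\pi\equiv 0$ on all of $N$, and (b) yields $f|_N\equiv 0$, so the equivalence in (d) holds trivially with any single slice. Adding that sentence closes the argument.
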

\begin{proof}
Claim $a)$ follows from the computation \[f^\pi(m'\cdot x)=d_\pi\int_M f(mm'\cdot x)\pi(m^{-1})dm=d_\pi\int_M f(m\cdot x)\pi(m'm^{-1})dm=\pi(m)f^\pi(x).\]

From \cite[Ch. IV Lem. 1.7]{GGA} it follows that the mapping from $C(X)$ to $C(X)_{\check{\pi}}$ given by convolution with \index{$\chi_\pi$, character of $\pi$} $d_\pi\chi_\pi:=d_\pi\mathrm{Tr}(\pi)$ \[C(X)\to C(X)_{\check{\pi}} \mbox{ , } f\mapsto d_\pi\chi_\pi*f, \]
where $\chi_\pi*f(x):=\int_M f(m\cdot x)\chi_\pi(m^{-1})dm$, is a continuous projection. If $f$ is already $M$-finite of type $\check{\pi}$, then \begin{eqnarray*} f(x)&=&d_\pi\chi_\pi*f(x)\\&=& d_\pi\int_M f(m\cdot x)\chi_\pi(m^{-1})dm\\
&=& d_\pi\int_M f(m\cdot x)\mathrm{Tr}(\pi(m^{-1}))dm\\
&=&  \mathrm{Tr}(f^\pi(x)). \end{eqnarray*}

Claim $c)$ is a direct consequence of $b)$ and Lemma \ref{lem: ffinitesl}. For $d)$ let $n\in N$. Then there is some $m\in M$ and $s\geq 0$ such that $n=m\cdot s$. This implies \[f(n)= f(m\cdot s)=m^{-1}\cdot f(s). \] 

We fix a basis $f_1,f_2,\ldots,f_d$ of $\mathrm{span}\{M\cdot f\}$. We can assume that $f_1=f$ and $f_j=m_j\cdot f$ for some $m_j\in M$, $j=2,\ldots,d$. We set $S_j:=m_{j}^{-1}\cdot S$, then $f_j(s)=m_j\cdot f(s)=f(m_j^{-1}s)$, i.e. $f_j|_S=f_{S_j}$. Since $\{M\cdot f\}\subset\mathrm{span}\{f_1,\ldots,f_d\}$ the claim follows. 
\end{proof} 

We go back to equation (\ref{e1}). The application we have in mind is the following. Let $f\in C^\infty(X)_{\check{\pi}}$, then $f^\pi \in C^\infty(X\times_M V_\pi)$ and with the convention $f^\pi(\exp sX_1)v=:f^\pi(s)v$, $s\geq 0$, it follows that \[f^\pi(s)=\langle f^\pi(\exp sX_1)v,v\rangle_{V_\pi}=\mathrm{Tr}\left(f^\pi(\exp sX_1)\right)=f(\exp sX_1).\] 

Thus, if $f$ satisfies $\Omega f=\mu f$, then by dominated convergence $\Omega f^\pi=\mu f^\pi$ and the restriction $f|_S$ satisfies equation (\ref{e1}). 
 
We note that if $\sum_{j}\pi(Z_{Y_j}^2)|_{V_\pi^{Z_M(S)}}=0$, in particular for $\pi=\mathbf{1}$ the trivial representation, this can be transformed 
to a well-known hypergeometric differential operator, i.e. $F:\mathbb R\to \mathbb{R}$ is given by some hypergeometric function as we will see later. More precisely, we will solve equation (\ref{e1}) in the next section.

Let us summarize the results of this section.
\begin{theorem}\label{th: odesum}
Let $G=SO_o(1,l)$ and $(\pi,V_\pi)\in\widehat M$. Furthermore, let $F\in C^\infty(X\times_MV_\pi)$ with $\Omega F=\mu F$ and $\exp \mathbb{R}X_1$ a section for $M$ acting on $N$. We can assume that $V_\pi^{Z_M(S)}=\mathbb{C}v$, where $v\in V_\pi$ is of unit length or 0. 
\begin{itemize}
\item[a)]
If we define $F:\mathbb{R}\to \mathbb{R}$ by $F(s)v:=F(\exp sX_1)v$ for $v\neq 0$ or $F(s):=0$ for all $s\in\mathbb R$ otherwise, then $F:\mathbb{R}\to \mathbb{R}$ satisfies equation (\ref{e1}) on $\mathbb R^+$.
\item[b)]
For $s\in\mathbb R$: 
\begin{eqnarray*}
F(s)&=& \langle F(s)v,v\rangle_{V_\pi}\\
&=& \langle F(\exp sX_1)v,v\rangle_{V_\pi}\\
&=& \mathrm{Tr}\left(F(\exp sX_1)\right).
\end{eqnarray*}

If $F(x)=f^\pi(x)=d_\pi \int_Mf(m\cdot x)\pi(m^{-1})dm$ for some $f\in C^\infty(X)$, then also 
\begin{eqnarray*}
F(s)&=& d_\pi \int_M f(m\cdot s)\chi_\pi(m^{-1})dm\\
&=& d_\pi(\chi_\pi*f)(\exp sX_1)\\
&=& d_\pi^2\left(\chi_\pi*(\chi_\pi*f)\right)(\exp sX_1).
\end{eqnarray*}
\end{itemize}
\end{theorem}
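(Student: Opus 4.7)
The plan is to assemble Theorem \ref{th: odesum} from the pieces already developed in this section, since it is essentially a consolidation statement rather than a new result.

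For part (a), I would retrace the derivation that immediately precedes the theorem. Starting from the hypothesis $\Omega F=\mu F$ and using the polar-coordinate form (\ref{eq: radom}) of $\Omega$, the $M$-equivariance of $F$ via equation (\ref{eq: zmequi}) converts the left-action of $(Z_{Y_j}Z_{Y_j})^{\exp -X'}$ on the lift $\tilde F$ into the scalar operator $\pi(Z_{Y_j}^2)$ acting on the restriction $\overline F$. This produces equation (\ref{eq: vecfield}). Evaluating on the distinguished unit vector $v\in V_\pi^{Z_M(S)}$, using Lemma \ref{chap: ode lem: inv} to see that $F(\exp sX_1)v$ again lies in $V_\pi^{Z_M(S)}$, and using Lemma \ref{chap: ode aux} to identify $\mathrm{End}(V_\pi^{Z_M(S)})\cong \mathbb C$, reduces (\ref{eq: vecfield}) to the scalar equation (\ref{e1}) on $\mathbb R^+$. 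The restriction to $s>0$ is intrinsic because the polar form (\ref{eq: radom}) is only valid on $\mathfrak s'$. If $v=0$ then $F\equiv 0$ satisfies (\ref{e1}) trivially by the stated convention.

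For part (b), the first string of identities is a direct application of Lemma \ref{lem: ffinitesl}: extending $v$ to an orthonormal basis $v,v_2,\ldots,v_{d_\pi}$ of $V_\pi$, the trace $\mathrm{Tr}(F(\exp sX_1))=\sum_i \langle F(\exp sX_1)v_i,v_i\rangle_{V_\pi}$ collapses to $\langle F(\exp sX_1)v,v\rangle_{V_\pi}$, because each $F(\exp sX_1)v_i$ lies in $V_\pi^{Z_M(S)}=\mathbb C v$ by Lemma \ref{chap: ode lem: inv}, so the inner products with $v_i$ for $i\geq 2$ vanish. Combined with the convention $F(s)v=F(\exp sX_1)v$ this produces the chain $F(s)=\langle F(s)v,v\rangle=\langle F(\exp sX_1)v,v\rangle=\mathrm{Tr}(F(\exp sX_1))$.

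For the formulas involving $f^\pi$, I would apply the trace to the defining integral $f^\pi(x)=d_\pi\int_M f(m\cdot x)\pi(m^{-1})\,dm$; linearity of the trace together with $\chi_\pi=\mathrm{Tr}(\pi)$ yields
\[ F(s)=\mathrm{Tr}(f^\pi(\exp sX_1))=d_\pi\int_M f(m\cdot s)\chi_\pi(m^{-1})\,dm=d_\pi(\chi_\pi*f)(\exp sX_1). \]
The final equality $F(s)=d_\pi^2(\chi_\pi*(\chi_\pi*f))(\exp sX_1)$ is then just the idempotency of the projection $f\mapsto d_\pi\chi_\pi*f$ onto $C^\infty(X)_{\check\pi}$ recorded in the proof of Lemma \ref{lem: finfacts}, applied to the element $f\in C^\infty(X)_{\check\pi}$. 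There is no substantive obstacle: all the machinery—polarity of $M$ acting on $\mathfrak n$, the Gelfand-pair property $(M,Z_M(S))$ controlling $\dim V_\pi^{Z_M(S)}$, the polar form of $\Omega$, and the character projection onto $M$-isotypic components—has already been set up, and the theorem's purpose is simply to record the scalar ODE satisfied by the restriction of a Casimir eigenfunction to the slice $S$, in a form convenient for the hypergeometric analysis of the next chapter.
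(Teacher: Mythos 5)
Your proposal is correct and follows exactly the route the paper intends: the theorem is stated as a consolidation of the section's preceding computations and lemmas (\ref{eq: vecfield}), (\ref{e1}), Lemma \ref{chap: ode aux}, Lemma \ref{chap: ode lem: inv}, Lemma \ref{lem: ffinitesl}, and the projection property from Lemma \ref{lem: finfacts}, which is why the paper gives no separate proof, and your reassembly of those ingredients is the argument. The only minor imprecision is the phrase ``applied to the element $f\in C^\infty(X)_{\check\pi}$'' in the last step, where what is used is idempotency applied to $\chi_\pi*f$, which lies in $C^\infty(X)_{\check\pi}$ even when $f$ does not.
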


\section{Differential equations of hypergeometric type}\label{sec: hyppr}

We continue with working on equation (\ref{e1}) and make the substitution $s^2=u$. The differentials transform according to
\begin{equation*}
s\frac{d}{ds}=2u \frac{d}{du} \mbox{ resp. } \frac{1}{s}\frac{d}{ds}=2\frac{d}{du} \mbox{ and } \frac{d^2}{ds^2}=4u \frac{d^2}{du^2}+2 \frac{d}{du}.
\end{equation*}

Thus, we obtain the new equation

\begin{eqnarray*}
\left((4\alpha(H_1)^2u^2+8u)\frac{d^2}{du^2}+\left(\left(4\alpha(H_1)^2+4\alpha(H_\rho)\right)u+4(l-1)\right)\frac{d}{du}\right.\nonumber\\
\left.+
\frac{2}{u}\sum_{j}\pi(Z_{Y_j}^2)|_{V_\pi^{Z_M(S)}}+\mu\right)F(w)=0.
\end{eqnarray*}

Then we divide by the positive scalar $4\alpha(H_1)^2$ to obtain
\begin{eqnarray*}
 \left(u(u+k_1)\frac{d^2}{du^2}+\left((1+k_2)u+k_3\right)\frac{d}{du}+
\frac{k_4}{u}\sum_{j}\pi(Z_{Y_j}^2)|_{V_\pi^{Z_M(S)}}+k_5\right)F(w)=0,
\end{eqnarray*} 
where we set $k_1=\frac{2}{\alpha(H_1)^2}$, $k_2=\frac{\alpha(H_\rho)}{\alpha(H_1)^2}$, $k_3=\frac{l-1}{\alpha(H_1)^2}$, $k_4=\frac{1}{2\alpha(H_1)^2}$ and $k_5=\frac{\mu}{4\alpha(H_1)^2}$. 

Then obviously $k_1,k_4$ and $k_3\in\mathbb R^+$ but also $k_2$ is positive, since $H_\rho$ lies in the \index{positive Weyl chamber} \index{$\mathfrak a^+$, positive Weyl chamber} positive Weyl chamber $\mathfrak{a}^+$ of $\mathfrak{a}$. The (regular) singularities of this ordinary differential equation are $0$, $-k_1$ and $\infty$.

Now we substitute $u\mapsto w=-\frac{u}{k_1}$, i.e. $u=-k_1w$, 
\begin{equation*}
\frac{d}{du}=-\frac{1}{k_1}\frac{d}{dw} \mbox{ and } \frac{d^2}{du^2}=\frac{1}{k_1^2}\frac{d^2}{dw^2}.
\end{equation*}

We get the new equation
\begin{eqnarray*}
\left( (-k_1w)\left((-k_1w)+k_1\right)\frac{1}{k_1^2}\frac{d^2}{dw^2}+\left((1+k_2)(-k_1w)+k_3\right)\frac{1}{(-k_1)}\frac{d}{dw}\right.\nonumber\\-\left. 
\frac{k_4}{k_1w}\sum_{j}\pi(Z_{Y_j}^2)|_{V_\pi^{Z_M(S)}}+k_5\right)F(w)=0.
\end{eqnarray*}

After multiplying with $(-1)$ we arrive at 
\begin{equation}\label{ode3}\left( w(1-w)\frac{d^2}{dw^2}+\left(\frac{k_3}{k_1}-(1+k_2)w\right)\frac{d}{dw}+\frac{k_4}{k_1w}\sum_{i}\pi(Z_{Y_i}^2)|_{V_\pi^{Z_M(S)}}-k_5\right)F(w)=0
\end{equation} 

We call this is an ordinary differential equation of hypergeometric type 
\begin{equation}\label{eh} \left(x(1-x)\frac{d^2y}{dx^2}+[c-(a+b+1)x]\frac{dy}{dx}-aby+\frac{d}{x}\right)F(x)=0\end{equation}
with (regular) singularities at $0$, $1$ and $\infty$. Here \index{$c=\frac{l-1}{2}=\rho_0$} $c=\frac{k_3}{k_1}=\frac{l-1}{2}$, $a+b=k_2=-\frac{\alpha(H_\rho)}{\alpha(H_1)^2}$, $ab=k_5$ and \index{$d=\frac{k_4}{k_1}\sum_{j}\pi(Z_{Y_j}^2)|_{V_\pi^{Z_M(S)}}$} $d=\frac{k_4}{k_1}\sum_{j}\pi(Z_{Y_i}^2)|_{V_\pi^{Z_M(S)}}\leq 0$.

We recall that the dimension of $N$ is $l-1$. Then $\rho=\frac{l-1}{2}\alpha$ and the Cartan-Killing on $\mathfrak a$ form is given by, see \cite[(4.2.10)]{GV} 
\begin{equation*}
B(H,H)=2(l-1)\alpha(H)^2.
\end{equation*}

We also defined $H_0\in\mathfrak{a}^+$ such that $H_0$ satisfies \begin{equation}\label{def: h0}\alpha(H_0)=1.\end{equation} It follows that 
\begin{equation*}
B(H_0,H_0)=2(l-1)
\end{equation*} and 
\begin{equation*}
H_\alpha=\frac{1}{2(l-1)}H_0.
\end{equation*}
 
Furthermore, we can assume that $H_1$ with $B(H_1,H_1)=1$ lies in the positive Weyl chamber $\mathfrak{a}^+$. That is, 
\begin{equation*}
H_1=\frac{1}{\sqrt{2(l-1)}}H_0
\end{equation*}
and 
\begin{equation*}
H_\rho=\frac{1}{4}H_0.
\end{equation*}
For all these facts see \cite[p.135]{GV}. 

Now \begin{equation*}
a+b=k_2=\frac{\alpha(H_\rho)}{\alpha(H_1)^2}=\frac{\frac{1}{4}\alpha(H_0)}{\frac{1}{2(l-1)}\alpha(H_0)^2}=\frac{(l-1)}{2}=\rho_0 
\end{equation*} and 
\begin{equation*}
a\cdot b=k_4=\frac{\mu}{4\alpha(H_1)^2}=\frac{\mu (l-1)}{2\alpha(H_0)^2}=-\frac{\mu (l-1)}{2}=\mu\rho_0.
\end{equation*}  

Let us assume that the eigenvalue is given by  
\begin{equation*}
\mu=\frac{1}{4}\left(\rho_0+\frac{r^2}{\rho_0}\right)
\end{equation*} for some $r\in\mathbb C$. Then \index{$a=\frac{1}{2}(\rho_0+ir)$} $a=\frac{1}{2}(\rho_0+ir)$ and \index{$b=\frac{1}{2}(\rho_0-ir)$} $b=\frac{1}{2}(\rho_0-ir)$ solves the equations for $a+b$ and $a\cdot b$, where $\rho_0=\rho(H_0)=\frac{l-1}{2} $.

\section{Solutions to $\left( x(1-x)\frac{d^2y}{dx^2}+[c-(a+b+1)x]\frac{dy}{dx}-aby+
\frac{d}{x}\right)f(x)=0$}\label{sec: ode.1}

Motivated by the chapter in \cite[III,2,3]{Y} on solutions to the hypergeometric equation we define $D=x\frac{d}{dx}$. Then it can be shown that the differential operator 
\begin{equation*}
E(a,b,c):=\left( x(1-x)\frac{d^2y}{dx^2}+[c-(a+b+1)x]\frac{dy}{dx}-aby\right)
\end{equation*}
can be factorized as
\begin{equation*}
E(a,b,c)=\left((c+D)(1+D)\frac{1}{x}-(a+D)(b+D)\right),
\end{equation*}
see \cite[p.61]{Y}. Thus we define 

\begin{eqnarray*}
F(a,b,c,d)&:=&  x(1-x)\frac{d^2y}{dx^2}+[c-(a+b+1)x]\frac{dy}{dx}-aby +\frac{d}{x}
\nonumber\\&=&E(a,b,c)+\frac{d}{x}\nonumber\\
&=& \left((c+D)(1+D)+d\right)\frac{1}{x}-(a+D)(b+D). 
\end{eqnarray*}

Now it is easy to show that
\begin{equation}\label{eq: hytra}
D(x^pu(x))=x^p(p+D)u(x) \mbox{ , i.e. } Dx^p=x^p(p+D), 
\end{equation} if we view $x^p$ as an operator $u\mapsto x^pu$ for functions $u=u(x)$. Hence also \[D^2x^p=x^p(p+D)^2.\]

We now claim that 
\begin{equation*}
x^{p_1}{}_2F_1(a+p_1,b+p_1,1+p_1-p_2;x)
\end{equation*} and 
\begin{equation*}
x^{p_2}{}_2F_1(a+p_2,b+p_2,1+p_2-p_1;x)
\end{equation*} both solve 
\begin{equation*}
F(a,b,c,d)f(x)=0,
\end{equation*}
where in turn \index{$p_{1,2}$, roots of indicial equation} $p_{1,2}$ solve the indicial equation 
\begin{equation}\label{eq: indic1}
p^2+(c-1)p+d=0.
\end{equation}

Therefore we compute what happens if we apply $F(a,b,c,d)$ to functions of the form $x^p g(x)$. Again we omit the function $g(x)$ viewing $x^p$ as the operator $g\mapsto x^pg$ to keep formulas simple. We compute

\begin{eqnarray*}
F(a,b,c,d)x^p&=&\left[\left[(c+D)(1+D)+d\right]\frac{1}{x}-(a+D)(b+D)\right]x^p\\ 
&=& \left[c+d+(c+1)D+D^2\right]x^{p-1}-\left[(ab+(a+b)D+D^2)\right]x^p\\
&=& x^{p-1}\left[c+d+(c+1)(p-1+D)+(p-1+D)^2\right]-x^p \left[ab+(a+b)(p+D)+(p+D^2)\right]\\
&=& x^p\left[x^{-1}\left(c+d+(c+1)(p-1+D)+(p-1+D)^2\right)\right]-x^p\left[(a+p+D)(b+p+D)\right]\\
&=&  x^p\left[x^{-1}\left(p^2+(c-1)p +d+(c-1+2p)D+D^2\right)\right]-x^p\left[(a+p+D)(b+p+D)\right]\\
&=:&(*).
\end{eqnarray*}

To get this into the desired form we must therefore have 
\begin{equation*}
p^2+(c-1)p+d=0,
\end{equation*}
i.e.
\begin{equation*}
p_{1,2}=\frac{1-c}{2}\pm \sqrt{\left(\frac{1-c}{2}\right)^2-d} \mbox{ and } p_1-p_2=2\sqrt{\left(\frac{1-c}{2}\right)^2-d}.
\end{equation*}

It follows that 
\begin{equation*}
c-1+2p_{1,2}=\pm 2\sqrt{\left(\frac{1-c}{2}\right)^2-d}=\pm(p_1-p_2)
\end{equation*}
and
\begin{eqnarray*}
&&x^{-1}\left(p_{1,2}^2+(c-1)p_{1,2} +d+(c-1+2p_{1,2})D+D^2\right)\\&=& x^{-1}\left(\pm (p_1-p_2)D+D^2\right)
\\ 
&=& x^{-1}(-1+1\pm(p_1-p_2)D+D^2)
\\ &=& x^{-1}\left(1\pm(p_1-p_2)-(2\pm(p_1-p_2))\right.
+\left.(2\pm(p_1-p_2))D+1-2D+
D^2\right)\\
&=& x^{-1}\left(1\pm(p_1-p_2)+(2\pm(p_1-p_2))(-1+D)\right.
+\left.(-1+D)^2\right)\\
&\overset{(\ref{eq: hytra})}=& \left(1\pm(p_1-p_2)+(2\pm(p_1-p_2))D+D^2\right)x^{-1}
\\ &=& (1+D)(1\pm(p_1-p_2)+D)\frac{1}{x},
\end{eqnarray*}
that is,

\begin{eqnarray*}
(*)&=& x^{p_{1,2}}\left[(1+D)(1\pm(p_1-p_2)+D)\frac{1}{x}-(a+p+D)(b+p+D)\right]\\
&=& x^{p_{1,2}}E\left(a+p_{1,2},b+p_{1,2},1\pm (p_1-p_2)\right).
\end{eqnarray*}

Thus we have shown 
\begin{equation*}
F(a,b,c,d)[x^{p_{1,2}} g(x)]=x^{p_{1,2}} E\left(a+p_{1,2},b+p_{1,2},1\pm (p_1-p_2)\right)g(x),
\end{equation*}
hence 
\begin{eqnarray*}
&&F(a,b,c,d)\left[x^{p_{1,2}} {}_2F_1\left(a+p_{1/2},b+p_{1/2},1\pm (p_1-p_2)\right)\right]
\\&=& x^{p_{1,2}} E\left(a+p_{1,2},b+p_{1,2},1\pm (p_1-p_2)\right){}_2F_1\left(a+p_{1/2},b+p_{1/2},1\pm (p_1-p_2)\right)
\\&=&0
\end{eqnarray*}
as we claimed. 

Another point is which solutions are smooth at the origin. We are only interested in solutions which are smooth everywhere, in particular at the origin. While 
$x^{p_1}{}_2F_1(a+p_1,b+p_2,1+p_1-p_2)$ is always defined as $p_1\geq 0$, the second solution $x^{p_2}{}_2F_1(a+p_1,b+p_2,1-p_1+p_2)$ is only defined even for $x\neq 0$ if $p_2-p_1\neq -1,-2,-3,\ldots$.
Finally, $x^{p_{1,2}} {}_2F_1\left(a+p_{1/2},b+p_{1/2},1\pm (p_1-p_2)\right)$ is smooth at the origin iff $p_{1,2}\in \mathbb{N}_0$.

It remains to consider when the two solutions are linearly independent, i.e. span the space of solution to $F(a,b,c,d)f=0$. We will come to this in the next section.
\subsection{Application of Frobenius method}
Since the second solution \[x^{p_2}{}_2F_1(a+p_1,b+p_2,1-p_1+p_2)\] is not defined for special $p_2$ we shortly explain another method for finding two linearly independent solutions of (\ref{e1}) which has the advantage of covering all cases but is less explicit. It is called the \index{Frobenius method}\textit{Frobenius method} and we follow Chapter 6 in \cite{Mi}. Let us consider the second order differential equation

\begin{equation}\label{eq: so}
\frac{d^2y}{dz^2}+P(z)\frac{dy}{dz}+Q(z)y=0
\end{equation}
on the punctured disc $D^*=\{z\in \mathbb{C}:0<|z|<1\}$, where $P$ has a pole of at most order 1 at 0 and $Q$ of order 2. Let 

\begin{equation*}
zP(z)=\sum_{r=0}^\infty a_rz^r
\end{equation*} 
and
\begin{equation*}
z^2Q(z)=\sum_{s=0}^\infty b_sz^s
\end{equation*}
be the Taylor series in $D$. Then we make the ansatz for a formal solution \[y(p,z)=z^p\sum_{t=0}^\infty c_tz^t.\] 

The coefficients can now be determined by putting $y(p,z)$ into (\ref{eq: so}). If 
\begin{equation}\label{eq: indic2}
f(p):=p(p-1)+pa_0+b_0
\end{equation}
denotes the indicial equation, one can show that
\begin{equation}\label{eq: indic}
c_t=\frac{\sum_{k=1}^t\left((t-k+p)a_k+b_k\right)c_{t-k}}{f(p+t)}.
\end{equation}

Let now $r$ and $s$, $s<r$, be the roots of (\ref{eq: indic2}). Since $r+\mathbb N$ does not contain another root the succeeding coefficients can be calculated recursively starting with $c_0=1$. Thus we obtain a first  solution to (\ref{eq: so}) $y(r,z)=z^r\sum_{t=0}^\infty c_t z^t$. If $r-s$ is not an integer we also find a second solution $y(s,z)$ which is linearly independent from the first. If on the other hand $r-s\in \mathbb{N}$, then one can construct another solution independent from $y(r,z)$ but having a logarithmic singularity at $z=0$. 

Going back to equation (\ref{e1}) we divide by $\alpha(H_1)^2s^2+2$ to find 
\begin{equation}\label{e2}
\left(\frac{d^2}{ds^2}+P(s)\frac{d}{ds}+Q(s)\right)F(s)=0,
\end{equation} 
where 
\begin{equation*}
P(s):=\frac{\left(\alpha(H_1)^2+2\alpha(H_\rho)\right)s+2 \frac{l-2}{s} }{\alpha(H_1)^2s^2+2}
\end{equation*}
has a pole of order 1 at $s=0$ and 
\begin{equation*}
Q(s):=\frac{ \frac{2}{s^2}\sum_j\pi(Z_{Y_j}^2)|_{V_\pi^{Z_M(S)}}+\langle \lambda , \lambda \rangle + \langle \rho,\rho,\rangle }{\alpha(H_1)^2s^2+2}
\end{equation*}
has a pole of order 2, i.e. the method of Frobenius is applicable. We find that $a_0=l-2$, $b_0=4d$ in accordance with (\ref{eq: indic1}). We already know by Lemma \ref{chap: ode d} that $d\leq 0$ and we distinguish the two cases $d=0$ and $d<0$.

If $d<0$, then $p_1>0$, while $p_2<0$. The first solution \[x^{p_1}{}_2F_1(a+p_1,b+p_1,1+p_1-p_2;x)\] is always defined and smooth at the origin for $p_1\in\mathbb N_0$, the second solution is only defined even for $x\neq 0$ if $p_2-p_1\neq -1,-2,-3,\ldots$, as we have seen above. By the Frobenius method one can construct another solution which  has a singularity at the origin, i.e. there is only one solution which is smooth at $x=0$.

If $d=0$, then $p_1=0$ and $p_2=1-c$, i.e. $F(a,b,c,0)$ reduces to $E(a,b,c)$. The two solutions simplify to 
\begin{equation*}
x^{p_1}{}_2F_1(a+p_1,b+p_1,1+p_1-p_2;x)={}_2F_1(a,b,c;x)
\end{equation*}
and 
\begin{equation*} 
x^{p_2}{}_2F_1(a+p_2,b+p_2,1+p_2-p_1;x)=x^{1-c}{}_2F_1(a+1-c,b+1-c,2-c;x).
\end{equation*}

In our case where $G=SO_o(1,l)$ this implies $c=\rho_0=\frac{l-1}{2}$, i.e. $c=\frac{1}{2},1,\frac{3}{2},\ldots$. Thus, except for $c=\frac{1}{2}$ or $c=1$ there is at most one solution which is smooth at the origin, namely ${}_2F_1(a,b,c;x)$. 

Let us discuss the two cases $c=\frac{1}{2}$ and $c=1$, i.e. the cases $G=SO_o(1,2)$ and $G=SO_o(1,3)$. We start with $c=\frac{1}{2}$. For $c=\frac{1}{2}$, $x^{1-c}=x^{1/2}$. Since we made the substitution $s^2=-u$ to get from equation (\ref{e1}) to the hypergeometric type equation (\ref{eh}) there is a second smooth solution to (\ref{e1}) for $d=0$ and $c=\frac{1}{2}$, i.e. $G=SO_o(1,2)$. This solution is - up to constants - 
\begin{equation*}
is \cdot {}_2F_1\left(a+\frac{1}{2},b+\frac{1}{2}, \frac{3}{2};-\frac{s^2}{4} \right),  
\end{equation*}
which is an odd function.

For $c=1$, i.e. $G=SO_o(1,3)$, the solutions ${}_2F_1(a,b,c;x)$ and $x^{1-c}{}_2F_1(a+1-c,b+1-c,2-c;x)$ coincide. Then 
\begin{equation*}
\frac{d}{dc}|_{c=1}x^{1-c}{}_2F_1(a+1-c,b+1-c,2-c;x) 
\end{equation*} 
is another solution of (\ref{eh}) having a singularity at $x=0$, see \cite[p.64]{Y}. 

In any case there is for $d=0$ - up to constants - only one even, everywhere smooth solution to (\ref{e1}). 

We make the resubstitution $s^2=u=-k_1w=-k_1x$, where $k_1=\frac{2}{\alpha(H_1)^2}=4(l-1)$. It follows that $x=-\frac{s^2}{4(l-1)}$ and $x^{p_{1,2}}=\left(\frac{-s^2}{4(l-1)}\right)^{p_{1,2}}$.
Finally let us state the results of this chapter in summarized form:

\begin{theorem}\label{chap: ode main}
For $l\geq 3$ the equation (\ref{e1}) has up to constants at most one solution which is smooth at the origin $s=0$. This solution is up to a constant given by \index{$p$, $p=\frac{1-c}{2}+ \sqrt{\left(\frac{1-\rho_0}{2}\right)^2-d}$}
\begin{equation*}
\left(\frac{-s^2}{4(l-1)}\right)^{p}{}_2F_1\left(a+p,b+p,1+2\sqrt{\left(\frac{1-\rho_0}{2}\right)^2-d};\frac{-s^2}{4(l-1)}\right).
\end{equation*}

Here $a=\frac{1}{2}(\rho_0+ir)$, $b=\frac{1}{2}(\rho_0-ir)$ depend only on the eigenvalue $\mu$ given by $\mu=\frac{1}{4}(\rho_0+\frac{r^2}{\rho_0})$, while $d\leq 0$ depends only on the $K$-type $\pi\in \widehat{K}$ and the section $S$. Here
\begin{equation*}
p=\frac{1-c}{2}+ \sqrt{\left(\frac{1-\rho_0}{2}\right)^2-d}\geq 0
\end{equation*}
solves the indicial equation (\ref{eq: indic1}). For the solution to be 
smooth at the origin it is necessary and sufficient that $2p\in \mathbb{N}_0$. This implies also $1+2\sqrt{\left(\frac{1-\rho_0}{2}\right)^2-d}\in \mathbb{Q}$. Further, if $d=0$, then $p=0$.

For $l=2$ the space of solution to (\ref{e1}) is 2-dimensional and spanned by
${}_2F_1\left(a,b,\frac{1}{2};\frac{-s^2}{4}\right)$  and $is\cdot {}_2F_1\left(a+\frac{1}{2},b+\frac{1}{2}, \frac{3}{2};-\frac{s^2}{4} \right)$.
\end{theorem}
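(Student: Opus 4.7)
The strategy is to trace (\ref{e1}) through the substitutions performed in Sections \ref{sec: hyppr}--\ref{sec: ode.1} and then invoke the factorization analysis of the hypergeometric-type operator $F(a,b,c,d)$. First I would apply $s^2 = u$ and then $u = -k_1 x$ with $k_1 = 2/\alpha(H_1)^2 = 4(l-1)$ to reduce (\ref{e1}) to the hypergeometric-type equation (\ref{eh}), identifying the parameters
\[
c = \rho_0 = \tfrac{l-1}{2}, \qquad a+b = \rho_0, \qquad ab = \mu\rho_0, \qquad d = \tfrac{k_4}{k_1}\sum_{j}\pi(Z_{Y_j}^2)\big|_{V_\pi^{Z_M(S)}} \le 0,
\]
as done after (\ref{ode3}). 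Solving $a+b = \rho_0$ and $ab = \tfrac14(\rho_0^2 + r^2)$ under the parametrisation $\mu = \tfrac14(\rho_0 + r^2/\rho_0)$ yields $a = \tfrac12(\rho_0 + ir)$, $b = \tfrac12(\rho_0 - ir)$.

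Next, I would appeal to the factorization $F(a,b,c,d) = ((c+D)(1+D)+d)\tfrac{1}{x} - (a+D)(b+D)$ with $D = x\tfrac{d}{dx}$, together with the key commutation identity $Dx^p = x^p(p+D)$ derived in Section \ref{sec: ode.1}. The computation carried out there shows that conjugation by $x^{p}$ gives
\[
F(a,b,c,d)\,x^{p_{1,2}} \;=\; x^{p_{1,2}}\, E\bigl(a+p_{1,2},\,b+p_{1,2},\,1 \pm (p_1 - p_2)\bigr),
\]
precisely when $p = p_{1,2}$ is a root of the indicial equation (\ref{eq: indic1}), $p^2 + (c-1)p + d = 0$. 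Consequently $x^{p_{1,2}}\,{}_2F_1(a+p_{1,2},\,b+p_{1,2},\,1 \pm (p_1 - p_2);\,x)$ are two candidate solutions, and since (\ref{eh}) is a second-order ODE with regular singularities at $0,1,\infty$, these span the solution space whenever $p_1 - p_2 \notin \mathbb{Z}$.

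For $l \ge 3$ one has $c = \rho_0 \ge 1$, so that $p_1 \ge 0 \ge p_2$ with equality only when $d = 0$. The candidate $x^{p_2}\cdot(\cdot)$ is never smooth at $x=0$ unless $p_2 = 0$, and in the exceptional integer cases $p_1 - p_2 \in \mathbb{N}$ the Frobenius recursion (\ref{eq: indic}) applied to (\ref{e2}) produces a second linearly independent solution with a logarithmic singularity at the origin, as recalled in Section \ref{sec: ode.1}. Thus at most one solution is smooth at $s = 0$, namely $x^{p_1}\,{}_2F_1(a+p_1, b+p_1, 1+(p_1-p_2); x)$. Undoing the substitution $x = -s^2/(4(l-1))$ turns $x^{p_1}$ into $\bigl(-s^2/(4(l-1))\bigr)^{p_1}$; this is smooth in $s$ at the origin if and only if $2p_1 \in \mathbb{N}_0$, and combined with $p_1 = \tfrac{1-c}{2} + \sqrt{(\tfrac{1-\rho_0}{2})^2 - d}$ this forces $1 + 2\sqrt{(\tfrac{1-\rho_0}{2})^2 - d}\in \mathbb{Q}$. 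In the special subcase $d = 0$ the roots reduce to $p_1 = 0$ and $p_2 = 1 - c \le 0$, giving $p = 0$ as claimed.

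Finally, for $l = 2$ one has $c = \tfrac{1}{2}$, so $p_1 = 0$ and $p_2 = \tfrac12$; since $p_1 - p_2 = -\tfrac12 \notin \mathbb{Z}$ both candidate solutions exist and are linearly independent. After resubstitution they become the even function ${}_2F_1(a,b,\tfrac12;-s^2/4)$ and the odd function $is\cdot{}_2F_1(a+\tfrac12, b+\tfrac12, \tfrac32; -s^2/4)$, both analytic at $s=0$, exhibiting a two-dimensional smooth solution space. The main obstacle in this argument is the careful case analysis when $p_1 - p_2 \in \mathbb{N}$, where one must invoke the Frobenius logarithmic-solution construction to rule out additional smooth solutions not visible from the naive $x^{p_{1,2}}$ ansatz; once that is handled, uniqueness and the explicit formula follow from the computations already recorded in Section \ref{sec: ode.1}.
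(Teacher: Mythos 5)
Your proposal follows the paper's proof almost step for step: the same chain of substitutions $s^2=u$, $u=-k_1 x$ to arrive at the hypergeometric-type equation, the same factorization $F(a,b,c,d)=\bigl((c+D)(1+D)+d\bigr)\tfrac1x-(a+D)(b+D)$ with the commutation $Dx^p=x^p(p+D)$, the same two candidate solutions $x^{p_{1,2}}\,{}_2F_1(a+p_{1,2},b+p_{1,2},1\pm(p_1-p_2);x)$, and the same appeal to the Frobenius method to rule out a second smooth solution in the degenerate cases. The resubstitution argument and the criterion $2p\in\mathbb N_0$ for smoothness in $s$, along with the $l=2$ discussion yielding the even/odd pair, also match.

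One small point to tighten: you invoke the Frobenius log-singularity only for $p_1-p_2\in\mathbb N$, but the borderline case $c=1$, $d=0$ (i.e.\ $G=SO_o(1,3)$, trivial $M$-type) has $p_1=p_2=0$, so $p_1-p_2=0\notin\mathbb N$ and the two explicit candidate solutions coincide. The paper addresses this repeated-root case separately by differentiating $x^{1-c}\,{}_2F_1(a+1-c,b+1-c,2-c;x)$ with respect to $c$ at $c=1$ to exhibit an independent solution with a singularity at the origin. Your argument needs either that observation or the standard fact that equal exponents also force a logarithmic second Frobenius solution; with that, the uniqueness claim for $l\ge 3$ is complete.
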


\begin{remark}\label{rem: hypsymxp}
Let $S=\exp \mathbb{R}^+X_{1}$ be a slice. If\begin{eqnarray*}F(\exp sX_1)&=& (-1)^p\left(\frac{s^2}{4(l-1)}\right)^{p}{}_2F_1\left(a+p,b+p,1+2\sqrt{\left(\frac{1-\rho_0}{2}\right)^2-d};\frac{-s^2}{4(l-1)}\right),
\end{eqnarray*}
then it follows that
\[F(\exp 2 \sqrt{l-1}sX_1)= (-1)^ps^{2p}\cdot{}_2F_1\left(a+p,b+p,1+2\sqrt{\left(\frac{1-\rho_0}{2}\right)^2-d};-s^2\right).\]

In particular, if $X_1$ is proportional to $X_{e_1}$, see Section \ref{chap: hyper} for the definition of $X_{e_1}$, this implies $2 \sqrt{l-1}X_1 = X_{e_1}$, see (\ref{eq: normx1}), and
\[F(\exp sX_{e_1})= (-1)^ps^{2p}\cdot{}_2F_1\left(a+p,b+p,1+2\sqrt{\left(\frac{1-\rho_0}{2}\right)^2-d};-s^2\right).\]
\end{remark}

\chapter{The zeta function on $\{\mathrm{Re}(k)>2\rho_0\}$}\label{chap: gtrace}
In this chapter we want to compute the trace of a certain convolution operator $\sigma\cdot \pi_R(f)$ in order to obtain  the (logarithmic derivative of an) auxiliary zeta function $\mathcal{R}(\sigma)=\mathcal{R}(\cdot;\sigma)$ in a half plane of $\mathbb C$. We show that this operator is of trace class and has a kernel. Then we compute its trace by integrating the kernel over the diagonal. For $\sigma\equiv 1$ this procedure yields Selberg's trace formula which is used to derive the classical dynamical zeta function. We generalize this approach to nontrivial eigenfunctions $\sigma=\varphi$ of the Laplacian. From the auxiliary zeta function $\mathcal{R}(\varphi)=\mathcal{R}(\cdot;\varphi)$ we derive the zeta function $\mathcal{Z}(\varphi)=\mathcal Z(\cdot;\varphi)$ as a superposition of shifted $\mathcal{R}(\varphi)$. This idea was first used in \cite{AZ} for $G=SL_2(\mathbb R)$. 

The theory of this chapter is developed for real hyperbolic spaces $X$ as discussed in Section \ref{chap: hyper}. In particular, $X=G/K$ with $G=SO_o(1,l)$ and $K\cong SO(l)$, where $l\in \mathbb{N}$, $l\geq 2$. We identify $X$ with $AN$ by using the Iwasawa decomposition $G=ANK$ , $A\cong\mathbb R$, $N\cong \mathbb{R}^{l-1}$ and $M=Z_K(A)\cong SO(l-1)$. Finally, $\Gamma$ denotes a 
uniform lattice, that is, $\Gamma\subset G$ is a discrete, torsion-free and cocompact subgroup. Every nontrivial $\gamma\in\Gamma$ can be conjugate to some $a_\gamma m_\gamma\in A^+M$, see Proposition \ref{chap: geom gcon}.

\section{A generalized trace formula}\label{sec: traceform}

We let $G=SO_o(1,l)$. Recall the identification of $\mathbb R$ with $A$, $t\mapsto \exp tH_0$ and of $\mathbb R^{l-1}$ with $N$ via $u\mapsto \exp X_u$ from Section \ref{chap: hyper}. Let \index{$da$, Haar measure on $A$} $da$ and \index{$dn$, Haar measure on $N$} $dn$ be the left-invariant Haar measures on $A$ and $N$ obtained from this identification.\footnote{ For the integral definition of Harish-Chandra's $c$-function one normally requires $\int_{\theta N} \exp\left(-2\rho\left(H(\theta n)\right)\right)d\theta n=1$, see \cite[Ch. IV Th. 6.14]{GGA}. We discuss this in Remark \ref{rem: hccfc}.} Further we fix a Haar measure \index{$dk$, Haar measure on $K$} $dk$ on $K$ by requiring $K$ to have unit measure, then it follows from \cite[Ch.I Prop. 5.1]{GGA} and \cite[Ch. I Cor. 5.3]{GGA} that there is a  Haar measure \index{$dg$, Haar measure on $G$} $dg$ on $G$ such that \[\int_G f(g)dg=\int_{ANK}f(ank)dadndk\] for all integrable functions $f$. We fix a Haar measures on $X$, $\Gamma\backslash G$ and on $X_\Gamma=\Gamma\backslash G/K$ such that for all integrable functions on $G$ $$\int_G f(g)dg=\int_{\Gamma\backslash G}\sum_{\gamma\in \Gamma}f(\gamma g)d(\Gamma g)$$ and $$\int_G f(g)dg=\int_X\int_Kf(gk)dkd(gK)$$ resp. for all integrable functions on $\Gamma\backslash G$ $$\int_{\Gamma \backslash G}f(\Gamma g)d(\Gamma g)=\int_{X_\Gamma}\int_K f(\Gamma g k)dkd(\Gamma gK).$$  

Let \index{$C_c^\infty(G//K)$, space of bi-$K$-invariant functions of compact support} $C_c^\infty(G//K)$ be the set of smooth functions with compact support that are bi-$K$-invariant. Denote the right-regular representation of $G$ on $L^2(\Gamma\backslash G)$ by \index{$piR$@$\pi_R$, right-regular representation of $G$} $\pi_R$ and let $\varphi,\sigma\in C^\infty(\Gamma\backslash G)$ and $f\in C_c^\infty(G)$. The Fourier transform of $f$ is defined as an operator on $L^2(\Gamma\backslash G)$ by \index{$piRf$@$\pi_R(f)$, Fourier transform w.r.t. $\pi_R$}
\begin{equation}\label{def: fourierr}
\pi_R(f)\varphi(x):=\int_{G}f(g)\left(\pi_R(g)\varphi\right)(x)dg=\int_Gf(g)\varphi(xg)dg=\left(*\tilde{\varphi}\right)(x^{-1}),
\end{equation}
where \index{$fit$@$\tilde \phi$, $\tilde{\phi}(x)=\phi(x^{-1})$} $\tilde{\phi}(x):=\phi(x^{-1})$ and convolution is defined by \index{$phif$@$\varphi*f$, convolution} $(\varphi*f)(x):=\int_G\varphi(h)f(h^{-1}x)dh$. Note that $\pi_R(f)\varphi$ is trivially square integrable as $\Gamma\backslash G$ is compact. 

We now combine $\pi_R(f)$ with the multiplication operator on $L^2(X_\Gamma)$ which sends $f\mapsto \sigma\cdot f$.
By the compact support of $f$ and an application of Fubini's theorem
\begin{eqnarray}\label{eq: convolution}
\left[\sigma\cdot \pi_R(f)\right] \varphi(\Gamma x)&=& \sigma(\Gamma x)\int_Gf(g)\varphi(\Gamma xg)dg\nonumber\\
&=& \sigma(\Gamma x)\int_G\varphi(\Gamma g)f(x^{-1}g)dg\\ &=& \sigma(\Gamma x) \int_{\Gamma\backslash G}\underset{ \gamma\in \Gamma}\sum \varphi(\Gamma\gamma g)f(x^{-1}\gamma g)d(\Gamma g)\nonumber\\ &=& \sigma(\Gamma x)\int_{\Gamma \backslash G}\varphi(\Gamma g)\underset{\gamma\in\Gamma}\sum f(x^{-1}\gamma g)d(\Gamma g)\nonumber.
\end{eqnarray}

So, if we define \index{$K_\sigma(x,g)$, intgeral kernel of $\sigma\cdot\pi_R(f)$} $K_\sigma(x,g):=\sigma(x)\underset{\gamma\in \Gamma}\sum f(x^{-1}\gamma g)$, then $K_\sigma(\gamma x,g)=K_\sigma(x,\gamma g)=K_\sigma(x,g)$ defines a smooth function $K_\sigma:\Gamma\backslash G\times \Gamma \backslash G\to\mathbb C$ satisfying 
\begin{equation*} \left[\sigma\cdot \pi_R(f)\right] \varphi(x)=\int_{\Gamma\backslash G}\varphi(\Gamma g)K_\sigma(x,\Gamma g)d (\Gamma g).
\end{equation*} 

From \cite{GW} we adopt the following definition. We call $f\in C^\infty(G//K)$ \index{admissible} \textit{admissible}, if \begin{itemize}
\item the operator $\pi_R(f)$ is of trace class on $L^2(\Gamma\backslash G)$ and
\item the series $ \sum_{\gamma\in\Gamma} f(x\gamma y^{-1})$ converges to a continuous function of $(x,y)$.

\end{itemize}
\begin{proposition}\label{prop: traceclass}
For $\sigma\in C^\infty(\Gamma\backslash G/K)$ and $f\in C^\infty(G//K)$ admissible the operator $\sigma\cdot\pi_R(f)$ is an integral operator with kernel $K_\sigma(\cdot,\cdot)$ as from above. It maps 
$L^2(X_\Gamma)$ into itself and is a trace class operator on 
$L^2(X_\Gamma)$.
\end{proposition}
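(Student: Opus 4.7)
The plan is to unpack the definition of admissibility, verify that the integral kernel formula from (\ref{eq: convolution}) still makes sense in this generality, and then factor $\sigma\cdot\pi_R(f)$ as a bounded operator times a trace class operator.

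First I would re-examine the derivation of (\ref{eq: convolution}). It was carried out under the assumption that $f\in C_c^\infty(G)$, but the only place where compact support was actually used is the application of Fubini's theorem to unfold the integral $\int_G f(x^{-1}g)\varphi(\Gamma g)\,dg$ into a sum over $\gamma\in\Gamma$ of integrals over a fundamental domain. For $f\in C^\infty(G//K)$ admissible, the hypothesis that $\sum_{\gamma\in\Gamma}f(x^{-1}\gamma g)$ converges to a continuous function of $(x,g)$ is symmetric in $x,g$ up to a change of variable (one can always replace $f$ by $|f|$-type majorants after a standard argument; in practice one assumes the convergence is absolute, as in \cite{GW}). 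Compactness of $\Gamma\backslash G$ then gives uniform bounds, justifying Fubini and yielding
\[
\bigl[\sigma\cdot\pi_R(f)\bigr]\varphi(\Gamma x)=\int_{\Gamma\backslash G}\varphi(\Gamma g)K_\sigma(\Gamma x,\Gamma g)\,d(\Gamma g),
\]
with $K_\sigma(\Gamma x,\Gamma g)=\sigma(\Gamma x)\sum_{\gamma\in\Gamma}f(x^{-1}\gamma g)$ continuous on $\Gamma\backslash G\times\Gamma\backslash G$.

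Next I would check descent to $X_\Gamma\times X_\Gamma$. Since $\sigma$ is right-$K$-invariant, $K_\sigma(\Gamma x k,\Gamma g)=K_\sigma(\Gamma x,\Gamma g)$ uses only the left-$K$-invariance of $f$, while $K_\sigma(\Gamma x,\Gamma gk)=K_\sigma(\Gamma x,\Gamma g)$ follows from right-$K$-invariance of $f$; bi-$K$-invariance of $f$ therefore ensures that $K_\sigma$ descends to a continuous kernel on $X_\Gamma\times X_\Gamma$. Compactness of $X_\Gamma$ gives $K_\sigma\in L^\infty(X_\Gamma\times X_\Gamma)\subset L^2(X_\Gamma\times X_\Gamma)$, so $\sigma\cdot\pi_R(f)$ is a bounded (in fact Hilbert–Schmidt) integral operator from $L^2(X_\Gamma)$ into itself.

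For the trace class property I would factor the operator. Bi-$K$-invariance of $f$ makes $\pi_R(f)$ commute with the orthogonal projection $P_K:L^2(\Gamma\backslash G)\to L^2(X_\Gamma)$, so $\pi_R(f)$ preserves $L^2(X_\Gamma)$, and its restriction inherits the trace class property assumed in the definition of admissibility. Since $\sigma\in C^\infty(X_\Gamma)$ is continuous on the compact manifold $X_\Gamma$, the multiplication operator $M_\sigma$ is bounded on $L^2(X_\Gamma)$ with $\lVert M_\sigma\rVert\le\lVert\sigma\rVert_\infty$. The ideal property of the trace class then yields
\[
\sigma\cdot\pi_R(f)=M_\sigma\circ\pi_R(f)|_{L^2(X_\Gamma)}\in\mathcal{L}^1\bigl(L^2(X_\Gamma)\bigr),
\]
completing the proof.

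The main obstacle, such as it is, is the first step: one needs the absolute convergence of $\sum_\gamma f(x^{-1}\gamma g)$ to justify the Fubini manipulation in (\ref{eq: convolution}). This is essentially built into the definition of admissibility (the continuity of the sum together with compactness of $\Gamma\backslash G$ gives uniform majorization), but it should be flagged explicitly rather than inherited tacitly from the compact-support calculation.
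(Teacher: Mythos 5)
Your proposal is correct and follows essentially the same route as the paper: admissibility gives $\pi_R(f)$ trace class, bi-$K$-invariance of $f$ (together with right-$K$-invariance of $\sigma$) shows $L^2(X_\Gamma)$ is preserved, and the ideal property of the trace class together with boundedness of $M_\sigma$ finishes the argument. You are somewhat more thorough than the paper, which takes the kernel formula for granted and verifies invariance by a direct computation rather than via commutation with $P_K$, but the underlying ideas are the same.
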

\begin{proof}
Standard theory implies that $\pi_R(f)$ is of trace class on $L^2(\Gamma\backslash G)$, see \cite[p.172]{Wal}, and hence also on $L^2(X_\Gamma)$ as soon as we check that $\pi_R(f)$ resp. $\sigma\cdot\pi_R(f)$ leave $L^2(X_\Gamma)$ invariant. Note that $\pi_R(f)$ being of trace class implies that $\sigma\cdot\pi_R(f)$ is of trace class as well, since $\sigma$ is bounded. So we only have to check invariance. Since by assumption $\sigma$ is right-$K$-invariant and $f$ is also bi-$K$-invariant, we get for $k\in K$
\\
\begin{eqnarray*}
\left[\sigma\cdot \pi_R(f)\right]\varphi(xk)&=& \sigma(xk)[\pi_R(f)\varphi](xk)\\
&=& \sigma(x)\int_{G}f(g)\varphi(xkg)dg\\ &=&
\sigma(x)\int_G f(k^{-1}g)\varphi(xg)dg\\&=&
\sigma(x)\int_Gf(g)\varphi(xg)dg\mbox{ }=\mbox{ }\left[\sigma\cdot \pi_R(f)\right]\varphi(x),
\end{eqnarray*}
where we applied the transformation $g\mapsto k^{-1}g$ and used the unimodularity of $G$. 
\end{proof}

From now on we assume that $\sigma\in C^\infty(\Gamma\backslash G/K)$ and $f\in C^\infty(G//K)$ is admissible. We can now compute the trace by

\begin{eqnarray*}
\mathrm{Tr}(\sigma\cdot \pi_R(f)) 
&=& \int_{X_\Gamma}\sigma(\Gamma y)\sum_{\gamma\in\Gamma}f(y^{-1}\gamma y)d(\Gamma y) \\
&=& f(e)\int_{F_\Gamma}\sigma(x)dx +\sum_{1\neq\gamma\in\Gamma}\int_{F_\Gamma} \sigma(x)f(x^{-1}\gamma x)dx,
\end{eqnarray*}
where \index{$F_\Gamma$, fundamental domain for $\Gamma$} $F_\Gamma\subset X$ is some \index{fundamental domain} fundamental domain for $\Gamma$ in $X$, i.e. $F_\Gamma$ is measurable and up to a possible set of measure zero it contains exactly one element of every orbit $\Gamma x$, $x\in X$. In particular, any fundamental domain $F_\Gamma$ satisfies $\int_{\Gamma\backslash X} f(x)dx=\int_{F_\Gamma}f\circ\mathrm{pr}(g)dg$, $\mathrm{pr}:X\to \Gamma\backslash X$ canonical projection, for any integrable function $f:\Gamma\backslash X\to\mathbb C$. 

We recall notations from Chapter \ref{chap: geometry}, in particular that $C\Gamma$ denotes the set of conjugacy classes $[\gamma]$ in $\Gamma$ and $\Gamma_\gamma$ is the centralizer of $\gamma\in \Gamma$ . Then we continue our computation with

\begin{eqnarray*}
\sum_{1\neq\gamma\in\Gamma}\int_{F_\Gamma} \sigma(x)f(x^{-1}\gamma x)dx&=& \sum_{1\neq[\gamma]\in C\Gamma}\sum_{\gamma'\in[\gamma]} \int_{F_\Gamma}\sigma(x)f(x^{-1}\gamma' x)dx\\
&=& \sum_{1\neq[\gamma]\in C\Gamma}\sum_{\Gamma_\gamma\gamma'\in \Gamma_\gamma\backslash \Gamma}\int_{F_\Gamma}\sigma(\gamma' x)f(x^{-1}\gamma'^{-1}\gamma\gamma'x)dx
\end{eqnarray*}

In Proposition \ref{chap: geom icg} it was mentioned that for any nontrivial $\gamma\in \Gamma$ its centralizer $\Gamma_\gamma$ is cyclic and infinite as $\Gamma$ is torsion-free. Furthermore, there is a unique primitive element $\gamma_0\in \Gamma$ such that $\gamma=\gamma_0^{n_\gamma}$ for some \index{$n_\gamma$, $\gamma=\gamma_0^{n_\gamma}$} $n_\gamma\in \mathbb{N}$. It follows $\Gamma_{\gamma}=<\gamma_0>$ and 
\begin{equation*}
\sum_{1\neq[\gamma]\in C\Gamma}\sum_{\Gamma_\gamma\gamma'\in \Gamma_\gamma\backslash \Gamma}\int_{F_\Gamma}\sigma(\gamma' x)f(x^{-1}\gamma'^{-1}\gamma\gamma'x)dx= \sum_{1\neq[\gamma]\in C\Gamma}\int_{F_{\gamma_0}}\sigma(x)f(x^{-1}\gamma x)dx, 
\end{equation*}
where \index{$F_{\gamma_0}$, fundamental domain for $\Gamma_\gamma$} \[F_{\gamma_0}:=\biguplus_{\Gamma_\gamma\gamma'\in \Gamma_\gamma\backslash \Gamma} \gamma'F_\Gamma\subset X\] is a fundamental domain for $\Gamma_{\gamma}=<\gamma_0>$. Here $\biguplus$ means the disjoint union. Then $<\gamma>\backslash \Gamma_\gamma=<\gamma>\backslash<\gamma_0>\cong \mathbb{Z}/n_\gamma\mathbb Z$, i.e. $|<\gamma>\backslash \Gamma_\gamma|=n_\gamma$  and

\begin{equation*}
\int_{F_{\gamma_0}}\sigma(x)f(x^{-1}\gamma x)dx=  \frac{1}{n_\gamma}\sum_{<\gamma> h\in <\gamma>\backslash \Gamma_\gamma} \int_{F_{\gamma_0}}\sigma(hx)f(x^{-1}h^{-1}\gamma hx)dx
\end{equation*}
as $f(x^{-1}h^{-1}\gamma hx)=f(x^{-1}\gamma x)$ for $h\in \Gamma_\gamma$. Applying now the transformation $x\mapsto hx$ and using the unimodularity of $X$, we get
\begin{eqnarray*}
&&\sum_{1\neq[\gamma]\in C\Gamma}\frac{1}{n_\gamma}\sum_{<\gamma> h\in <\gamma>\backslash \Gamma_\gamma} \int_{F_{\gamma_0}}\sigma(hx)f(x^{-1}h^{-1}\gamma hx)dx\\&=& \sum_{1\neq[\gamma]\in C\Gamma} \frac{1}{n_\gamma}\sum_{<\gamma >h\in <\gamma>\backslash \Gamma_\gamma}\int_{hF_{\gamma_0}}\sigma(x)f(x^{-1}\gamma x)dx
\end{eqnarray*}

Now we change to a fundamental domain \index{$F_\gamma$, fundamental domain for $<\gamma>$} $F_\gamma\subset X$ for $<\gamma>$ to obtain

\begin{equation}\label{equation}
\sum_{1\neq[\gamma]\in C\Gamma} \frac{1}{n_\gamma}\sum_{<\gamma >h\in <\gamma>\backslash \Gamma_\gamma}\int_{hF_{\gamma_0}}\sigma(x)f(x^{-1}\gamma x)dx= \sum_{1\neq[\gamma]\in C\Gamma} \frac{1}{n_\gamma}
\int_{F_\gamma}\sigma(x)f(x^{-1}\gamma x)dx 
\end{equation}

But $\gamma$ is conjugated to $m_\gamma a_\gamma\in MA^+$, i.e. \index{$alphagamma$@$\alpha_\gamma$, $\alpha_\gamma \gamma \alpha_\gamma^{-1}=a_\gamma m_\gamma$} 
\begin{equation}\label{eq: alga}
\alpha_\gamma \gamma \alpha_\gamma^{-1}=a_\gamma m_\gamma
\end{equation} for some $\alpha_\gamma\in G$, and a fundamental domain for $a_\gamma m_\gamma$ can be stated explicitly.

\begin{lemma}(See \cite[Lemma 3.1]{BO2})

Let $ma_t\in MA^+$, where $A=\exp \mathbb{R}H_0$, $a_t=\exp tH_0$, and identify $X$ with $AN$. A fundamental domain for the cyclic subgroup generated by $a_tm$ is given by \index{$F_{a_tm}$, fundamental domain for $<a_tm>$} 
\begin{equation*}
F_{a_tm}:=\{a_sn:n\in N,0\leq s\leq t\}.
\end{equation*}
\end{lemma}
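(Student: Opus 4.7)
The plan is to unpack the action of $\langle a_t m\rangle$ on $X$ under the identification $X=G/K\cong AN$ via $ankK\mapsto an$, and then directly verify the two defining properties of a fundamental domain: that every orbit meets $F_{a_tm}$ and that the meeting is unique off a measure-zero set.

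First I would compute the action explicitly. Since $m\in M=Z_K(A)$, we have $ma_s=a_sm$, and $m\in K$, so for any $a_sn\in AN$,
\[
(a_t m)(a_s n)=a_t a_s m n=a_{t+s}(mnm^{-1})\,m.
\]
Projecting onto the $AN$-factor of $G=ANK$ gives
\[
(a_t m)\cdot (a_s n)=a_{t+s}\,(mnm^{-1}).
\]
By commutativity $(a_t m)^k=a_{kt}m^k$, so iterating yields
\[
(a_tm)^k\cdot(a_sn)=a_{s+kt}\,(m^k n m^{-k}),\qquad k\in\mathbb Z.
\]

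Next I would check the covering property. Given an arbitrary point $a_u n\in AN$, choose the unique $k\in\mathbb Z$ with $u-kt\in [0,t)$ and set $s:=u-kt$, $n':=m^{-k}nm^k$. Then $(a_tm)^{-k}\cdot(a_un)=a_s n'\in F_{a_tm}$, so every orbit meets $F_{a_tm}$. For uniqueness, if $a_sn$ and $a_{s'}n'$ both lie in $F_{a_tm}$ and belong to the same orbit, then $a_{s'}n'=a_{s+kt}(m^knm^{-k})$ for some $k\in\mathbb Z$; reading off the $A$-component, the uniqueness of the Iwasawa decomposition forces $s'-s=kt$, and since $s,s'\in[0,t]$ this means either $k=0$ (whence $n=n'$) or the two points lie on the boundary slices $\{a_0n\}$ and $\{a_tn\}$, which have Haar measure zero in $AN$. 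This completes the verification.

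The argument is essentially bookkeeping once the action is written in the correct coordinates; the only mild point to keep straight is that the action on the $N$-factor is twisted by conjugation by $m$, but this twist is irrelevant for the $A$-component argument which controls both coverage and (up to measure zero) injectivity. No obstacle of substance arises beyond ensuring that one is comparing $G$-actions on $G/K$ rather than on $G$ itself.
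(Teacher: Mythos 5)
Your proof is correct and takes essentially the same route as the paper: compute the induced action of $a_tm$ on $AN\cong G/K$ (a shift by $t$ in the $A$-coordinate and a conjugation by $m$ in the $N$-coordinate), use this to reduce any $a_un$ into $F_{a_tm}$ by an appropriate power, and then argue uniqueness off the boundary slices $s\in\{0,t\}$ by comparing $A$-components. Your write-up is slightly more explicit about the covering step and correctly reads "$h\equiv s\ \mathrm{mod}\ t\mathbb Z$" where the paper has an apparent typo writing $\mathrm{mod}\ \mathbb Z$, but the underlying argument is identical.
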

\begin{proof} Note that $a_tm$ acts on $x=a_sn$ by $a_tm\cdot x=a_{t+s}mnm^ {-1}$. Further, \[<a_tm>=\{a_{zt}m^z:z\in\mathbb Z\}.\] We fix an arbitrary $a_hn\in X=AN$ with $h=c\cdot t + s$, where $c\in\mathbb Z$ and $0\leq s\leq t$. 
That is, \[<a_tm>\cdot a_hn=\{a_{(z+c)t+s}m^znm^{-z}:z\in \mathbb{Z}\},\] in particular $a_sm^{-c}nm^c$ is contained in the orbit of $a_hn$ and $F_{a_tm}$. 

Finally, $h\equiv s \mbox{ }\mathrm{ mod }\mbox{ } \mathbb{Z}$, if $a_hn$ is contained in the orbit of $a_sn'$. This shows that $F_{a_tm}$ contains at most one point from any orbit (up to a set of measure zero). 
\end{proof}

\begin{lemma}\label{lco1}
Let $G$ be a group acting on a manifold $M$, $H\subset G$ a subgroup and $F_H$ a fundamental domain for $H$. Then $gF_H$ is a fundamental domain for $gHg^{-1}$ for all $g\in G$. 
\end{lemma}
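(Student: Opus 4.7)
The plan is to reduce the statement to the transport of orbits under the bijection $x\mapsto gx$ of $M$. More precisely, I would first observe that for any $x\in M$ the $gHg^{-1}$-orbit through $gx$ is exactly the image of the $H$-orbit through $x$ under left multiplication by $g$, since
\[
(gHg^{-1})\cdot (gx)=gH\cdot x.
\]
Thus the map $\Phi_g\colon M\to M$, $\Phi_g(x)=gx$, is a bijection (with inverse $\Phi_{g^{-1}}$) which carries the partition of $M$ into $H$-orbits onto the partition of $M$ into $gHg^{-1}$-orbits.

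Given this, the verification splits into the two defining properties of a fundamental domain. For existence: let $y\in M$ and set $x=g^{-1}y$. Since $F_H$ is a fundamental domain for $H$, there exist $h\in H$ and $x_0\in F_H$ with $hx_0=x$; then $(ghg^{-1})(gx_0)=gx=y$ with $gx_0\in gF_H$, so every $gHg^{-1}$-orbit meets $gF_H$. For (essential) uniqueness: if $gx_0$ and $gx_0'$ with $x_0,x_0'\in F_H$ lie in the same $gHg^{-1}$-orbit, then $x_0$ and $x_0'$ lie in the same $H$-orbit by applying $\Phi_{g^{-1}}$; hence $x_0=x_0'$ (up to the prescribed negligible set for $F_H$).

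Since $F_H$ is assumed measurable and $\Phi_g$ is a diffeomorphism of $M$, $gF_H=\Phi_g(F_H)$ is again measurable, and any negligible set for $F_H$ is mapped to a negligible set for $gF_H$. This completes the argument and establishes the lemma. The only point requiring a hint of care is making sure the identification $(gHg^{-1})\cdot(gx)=gH\cdot x$ is applied consistently, but there is no real obstacle — the statement is essentially a transport-of-structure observation.
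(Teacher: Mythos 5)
Your proof is correct and takes essentially the same approach as the paper's: both hinge on the observation that left translation by $g$ is a bijection carrying the $H$-orbit through $x$ onto the $gHg^{-1}$-orbit through $gx$, and hence transports the fundamental domain $F_H$ to one for $gHg^{-1}$. Your write-up is somewhat more explicit about separating existence from essential uniqueness and about the measurability bookkeeping, but the argument is the same.
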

\begin{proof}
The proof is simple. Fix $g\in G$ and $m\in M$. If $y\in Hm\cap F_H$, then for some $h\in H$ \[gy=ghm=ghg^{-1}gx\in gHg^{-1}(gx).\] 

Since left translation by $g$ is a bijection on $G$, the claim follows.
\end{proof}


Thus, a fundamental domain for the cyclic subgroup generated by $\gamma=\alpha_\gamma^{-1}a_\gamma m_\gamma \alpha_\gamma$, see (\ref{eq: alga}), is given by
\begin{equation*}
F_\gamma:=\{\alpha_\gamma^{-1}x:x\in F_{a_\gamma m_\gamma}\}=:\alpha_\gamma^{-1}F_{a_\gamma m_\gamma}.
\end{equation*}

Next we rewrite the integral from the right side of (\ref{equation}) as an integral over a subset of $AN$. Note that the identification $X=AN$ implies $dx=dadn$, see \cite[Ch.I §5 Cor. 5.3]{GGA}
\begin{eqnarray*}\sum_{1\neq[\gamma]\in C\Gamma} \frac{1}{n_\gamma}
\int_{F_\gamma}\sigma(x)f(x^{-1}\gamma x)dx  &=&\sum_{1\neq[\gamma]\in C\Gamma} \frac{1}{n_\gamma}
\int_{A/<a_\gamma>}\int_N\sigma(\alpha_\gamma^{-1}an)f(n^{-1}a^{-1}a_\gamma m_\gamma an)dnda\\
&=&  \sum_{1\neq[\gamma]\in C\Gamma}  \int_N\int_{A/<a_{\gamma_0}>}\sigma(\alpha_\gamma^{-1}an) f(n^{-1}a^{-1}a_\gamma m_\gamma a n)dadn\\&=&
\sum_{1\neq[\gamma]\in C\Gamma}  \int_N\int_{A/<a_{\gamma_0}>}\sigma(\alpha_\gamma^{-1}an)daf(n^{-1}a_\gamma m_\gamma n)dn
\end{eqnarray*}

In the last but one equation we used again Lemma \ref{lco1} for a fundamental domain of $<a_\gamma m_\gamma>$ and the fact $a_\gamma=a_{\gamma_0}^{n_\gamma}$, see the proof of Proposition \ref{chap: geoprop l}. We call \index{$I_\gamma(\sigma)$, weight function} $I_\gamma(\sigma)$ defined by  \begin{equation}\label{def: weight}
I_\gamma(\sigma)(x):=\int_{A/<a_{\gamma_0}>}\sigma(\alpha_\gamma^{-1}ax)da 
\end{equation} for $x\in X=AN$ the \index{weight function} weight function of the \index{orbital integral} \textit{(weighted) orbital integral} \index{$\mathcal{O}_\gamma(f)$, (weighted) orbital integral} \begin{equation}\label{def: orbitalint}
\mathcal{O}_\gamma(f):=\int_N f(n^{-1}a_\gamma m_\gamma n)I_\gamma(\sigma)(n)dn.
\end{equation}

Thus, 

\begin{eqnarray*}
\sum_{1\neq[\gamma]\in C\Gamma}  \int_N\int_{A/<a_{\gamma_0}>}\sigma(\alpha_\gamma^{-1}an)daf(n^{-1}a_\gamma m_\gamma n)dn&=&  \sum_{1\neq[\gamma]\in C\Gamma}  \int_N f(n^{-1}a_\gamma m_\gamma n)I_\gamma(\sigma)(n)dn\\
&=& \sum_{1\neq[\gamma]\in C\Gamma}  \mathcal{O}_\gamma(f).
\end{eqnarray*}

\section{The weight $I_\gamma(\sigma)$}

In this section we want to examine the weight $I_\gamma(\sigma)$ more closely. The next theorem will us show how we can decompose $I_\gamma(\sigma)$. We make use of the following fact which can be found in Helgason's book \cite{GGA}.

\begin{theorem}(\cite[Ch. V Cor. 3.4]{GGA})\label{th helgason}

Let $X$ be a manifold with countable base and $H$ a compact, connected Lie transformation group, then for any $f\in C^\infty(X)$
\begin{equation*}
f=\sum_{\delta\in\widehat H}d_\delta\overline{\chi}_\delta*f = \sum_{\delta\in\widehat H}d_\delta \chi_\delta*f
\end{equation*} 
with absolute convergence, where $(\phi*f)(x):=\int_H\phi(h)f(h^{-1}x)dh$. Here \index{$d_\delta$, dimension of $\delta$} $d_\delta$ is the dimension of $\delta$ and \index{$chidelta$@$\chi_\delta$, character of $\delta$} $\chi_\delta:=\mathrm{Tr}(\delta)$ its character.  
\end{theorem}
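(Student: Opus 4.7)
The plan is to reduce the identity to the classical Peter-Weyl theorem on $H$ applied pointwise, and then upgrade pointwise convergence to absolute uniform convergence using smoothness of $f$. First, for each fixed $x\in X$ define $f_x\in C^\infty(H)$ by $f_x(h):=f(h^{-1}x)$; this is smooth because the action of $H$ on $X$ is smooth and $f\in C^\infty(X)$. Peter-Weyl gives the $L^2(H)$ decomposition $f_x=\sum_{\delta\in\widehat H}d_\delta \chi_\delta\star_H f_x$, where $\star_H$ denotes convolution on $H$. Evaluating at the identity and unwinding the convolution yields $f(x)=\sum_{\delta}d_\delta(\chi_\delta*f)(x)$, which is the desired identity at a single point $x$.

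The main task is therefore to establish absolute and uniform convergence on $X$. For this I would use the Casimir operator $\Omega_H$ of $H$, which acts on $C^\infty(X)$ via the $H$-action and commutes with each projection $P_\delta f:=d_\delta\chi_\delta*f$. On the $\delta$-isotypic component $\Omega_H$ acts as the scalar $c_\delta=\langle\lambda_\delta+\rho_H,\lambda_\delta+\rho_H\rangle-\langle\rho_H,\rho_H\rangle$, where $\lambda_\delta$ is the highest weight of $\delta$ and $\rho_H$ the half-sum of positive roots of $H$. Consequently $P_\delta f=c_\delta^{-N}P_\delta(\Omega_H^N f)$ for any $N\in\mathbb N$ and any $\delta$ with $c_\delta\neq 0$. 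Combining this with the elementary bound $\|P_\delta g\|_\infty\leq d_\delta^2\|g\|_\infty$ (coming from $|\chi_\delta(h)|\leq d_\delta$) yields $\|P_\delta f\|_\infty\leq d_\delta^2 c_\delta^{-N}\|\Omega_H^N f\|_\infty$.

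The last step is a summability argument. By the Weyl dimension formula $d_\delta$ grows at most polynomially in $|\lambda_\delta|$, while $c_\delta$ grows like $|\lambda_\delta|^2$; therefore for $N$ sufficiently large the series $\sum_{\delta}d_\delta^2 c_\delta^{-N}$ converges. This gives absolute uniform convergence of $\sum_\delta P_\delta f$ on $X$, and applying the same argument to derivatives of $f$ yields convergence in every $C^k$-norm. The equality of the two expressions $\sum_\delta d_\delta\overline{\chi}_\delta* f=\sum_\delta d_\delta\chi_\delta*f$ follows from $\overline{\chi_\delta}=\chi_{\check\delta}$ together with the fact that taking contragredients is a bijection $\widehat H\to\widehat H$ satisfying $d_\delta=d_{\check\delta}$; i.e.\ it is merely a reindexing of one and the same sum.

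The main obstacle is the quantitative Casimir estimate $\sum_\delta d_\delta^2 c_\delta^{-N}<\infty$. This is a standard but nontrivial Lie-theoretic input about compact connected $H$, ultimately resting on Weyl's dimension formula and the Harish-Chandra parametrization of $\widehat H$ by dominant integral weights. Once that input is in hand, the remainder of the argument is essentially a Fourier-analytic routine and presents no real difficulty; the countable-base hypothesis on $X$ is used only to guarantee that the topologies involved are metrizable, so that pointwise and uniform convergence can be compared safely.
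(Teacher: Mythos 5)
The paper does not actually prove this statement; it simply cites \cite[Ch.\ V Cor.\ 3.4]{GGA} and then records, as a definitional remark, what ``absolute convergence'' means (namely, absolute convergence of $\sum_\delta \nu(P_\delta f)$ for every continuous seminorm $\nu$ on the Fr\'echet space $C^\infty(X)$). So there is no proof in the paper to compare against. Your argument is, however, correct in outline and follows what I believe to be Helgason's own strategy: reduce to the Peter--Weyl decomposition on $H$, then use the Casimir $\Omega_H$ to force rapid decay of the projections $P_\delta f = d_\delta\,\overline{\chi}_\delta * f$, using $P_\delta f = c_\delta^{-N} P_\delta(\Omega_H^N f)$ together with $|\chi_\delta|\le d_\delta$ and the polynomial growth of $d_\delta$ against the quadratic growth of $c_\delta$. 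The equality of the two sums by $\overline{\chi_\delta}=\chi_{\check\delta}$, $d_\delta=d_{\check\delta}$ is correct.

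One point you should tighten. When $X$ is noncompact the quantities $\|\Omega_H^N f\|_\infty$ and $\|P_\delta f\|_\infty$ may be infinite, so the bound
\[
\|P_\delta f\|_\infty \le d_\delta^2\, c_\delta^{-N}\, \|\Omega_H^N f\|_\infty
\]
is not directly usable as stated. Since the relevant seminorms on $C^\infty(X)$ are $p_{K,m}(f)=\sup_{|\alpha|\le m}\sup_K |D^\alpha f|$ over compact $K\subset X$, you should replace the sup-norms by local ones: note that $(P_\delta f)(x)$ involves only the values of $f$ on the orbit $H\cdot x$, so
\[
p_{K,0}(P_\delta f) \le d_\delta^2\, p_{H\cdot K,\,0}(f),
\]
with $H\cdot K$ compact because $H$ is compact, and similarly for higher-order seminorms after commuting the $x$-derivatives through the convolution. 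Combined with $P_\delta f = c_\delta^{-N}P_\delta(\Omega_H^N f)$ (valid for every $\delta$ with nonzero Casimir eigenvalue, i.e.\ every nontrivial $\delta$ since $c_\delta=\langle\lambda_\delta,\lambda_\delta+2\rho_H\rangle>0$ for $\lambda_\delta\neq 0$; the trivial $\delta$ is a single term and can be ignored) this gives $\sum_\delta p_{K,m}(P_\delta f)<\infty$ for every $K,m$, which is the convergence the paper means. The countable-base hypothesis enters precisely in guaranteeing that $X$ is $\sigma$-compact so this family of seminorms is a countable fundamental system; your one-line remark about metrizability is the right intuition but should be stated in these concrete terms. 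Finally, to conclude that the limit of $\sum_\delta P_\delta f$ is actually $f$ and not some other smooth function, observe that for each $x$ the series $\sum_\delta d_\delta \overline{\chi}_\delta * f_x$ converges in $C^\infty(H)$ by the same Casimir estimate applied to $X=H$, hence in $L^2(H)$, and by Peter--Weyl the $L^2$ limit is $f_x$; evaluating the continuous limit at the identity gives $f(x)$.
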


For the second equality we use that $\overline{\chi_\delta}=\chi_{\delta^*}$, if $\delta^*$ is the dual of $\delta$. Absolute convergence of a series $\sum_{a\in A}v_a$, $\{v_a\}_{a\in A}\subset C^\infty(X)$ means that $\sum_{a\in A}\nu(v_a)$ is absolute convergent for every continuous seminorm $\nu$ on $C^\infty(X)$.

We apply this to $M$ acting on $X=AN$ by conjugation to obtain a decomposition of the weight $I_\gamma(\sigma)$. A useful guiding for the theory presented in the remainder of this section is the theory of generalized spherical functions (Eisenstein integrals) for a rank one symmetric space which can be found in \cite[Ch.III \S 2 and \S 11]{GASS}. Roughly speaking we replace $K$ acting on $\mathfrak p$ with section $\mathfrak a$ by $M$ acting on $\mathfrak n$ with section $\mathfrak s$.

For a unitary representation $(\pi,V_\pi)\in \widehat{M}$ of \index{$d_\pi$} dimension $d_\pi$ and $f\in C^\infty(X)$ we define its $\pi$-projection by \index{$pi$@$\pi$-projection} \index{$f^\pi$, $\pi$-projection of $f$} \index{$()^{\pi}$, $\pi$-projection}
\begin{equation}\label{def: projection}
f^\pi(x):=(f)^\pi(x):=d_\pi \int_M f(m\cdot x)\pi(m^{-1})dm
\end{equation}
for $x\in X=AN$, where $m\cdot x:=mx$. We recall the function space \index{$C^\infty\left(X\times_M \mathrm{End}(V_\pi)\right)$} for $(\pi,V_\pi)\in \widehat{M}$ 
\begin{equation*}
C^\infty\left(X\times_M (V_\pi) \right):=\{F\in C^\infty(X,\mathrm{End}(V_\pi)): F(m\cdot x)=\pi(m)F(x) \mbox{ for all } x\in X,m\in M\}.
\end{equation*}

Then for $f\in C^\infty(X)$, $x\in X$ and $m'\in M$
\begin{eqnarray}\label{eq: equivar}f^\pi(m'\cdot x)&=&d_\pi \int_M f(m\cdot (m'\cdot x))\pi(m^{-1})dm\nonumber\\&=& 
d_\pi\int_M f(mm'\cdot x)\pi(m^{-1})dm\nonumber\\ &=& d_\pi \int_M f(m\cdot x)\pi(m'm^{-1})dm=\pi(m')f^\pi(x) 
\end{eqnarray}
by unimodularity of $M$. Hence, $f^\pi\in C^\infty(X\times_M V_\pi)$.

\begin{remark}\label{rem: comdiag}
We recall the definition of the action of $M$ on $G$ via \label{$*$} left translation \[m*g:=mg.\]  

In Section \ref{sec: odeimp} we showed that the following diagram is commutative:

 
\begin{center}
    \makebox[0pt]{
\begin{xy}
  \xymatrix{
      G \ar[r]^{m*} \ar[d]_{\mathrm{pr}}    &   G \ar[d]^{\mathrm{pr}}  \\
      X=G/K=AN \ar[r]_{m\cdot }             &   X=G/K=AN   
  }
\end{xy}
}
\end{center}

We can also extend the definition of the \index{$()^\pi$, $\pi$-projection} \index{$pi$@$\pi$-projection} $\pi$-projection to functions $f\in C^\infty(G)$ for $(\pi,V_\pi)\in\widehat M$, $\mathrm{dim}(V_\pi)=d_\pi$,    
\begin{equation}\label{def: projex}
f^\pi(g):=d_\pi\int_M f(m*g)\pi(m^{-1})dm.
\end{equation}

It follows again that $f^\pi$ is in \index{$C^\infty(G,\times_M V_\pi)$, space of sections} \[ C^\infty(G,\mathrm{End}_M(V_\pi):=\{F\in C^\infty(G,\mathrm{End}(V_\pi)): F(m*g)=\pi(m)F(g) \mbox{ for all } m\in M,g\in G\}.\]
\end{remark}

\begin{proposition}\label{prop: deco} For any $f\in C^\infty(X)$
\begin{equation*}
f=\sum_{\pi\in \widehat M}d_\pi \left(\chi_\pi* f\right)(x)=\sum_{\pi\in\widehat{M}}\mathrm{Tr}(f^\pi).
\end{equation*}
\end{proposition}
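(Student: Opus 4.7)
The proof is essentially an unpacking of the Helgason decomposition (Theorem \ref{th helgason}) combined with the definition of the $\pi$-projection $f^\pi$. The plan is as follows.

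First I would apply Theorem \ref{th helgason} directly. Since $M \cong SO(l-1)$ is compact and connected, and $X = AN$ carries a smooth $M$-action by conjugation with countable base, the hypotheses are satisfied. This immediately yields
\[
f(x) = \sum_{\pi \in \widehat{M}} d_\pi (\chi_\pi * f)(x)
\]
with absolute convergence, establishing the first equality.

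Next I would verify the second equality $d_\pi(\chi_\pi * f)(x) = \mathrm{Tr}(f^\pi(x))$ by a short computation from the definitions. Starting from (\ref{def: projection}) and using linearity of the trace,
\[
\mathrm{Tr}(f^\pi(x)) = d_\pi \int_M f(m\cdot x)\,\mathrm{Tr}(\pi(m^{-1}))\,dm = d_\pi \int_M f(m\cdot x)\,\chi_\pi(m^{-1})\,dm.
\]
Applying the change of variable $m \mapsto m^{-1}$ (permissible by unimodularity of the compact group $M$) converts this into
\[
d_\pi \int_M \chi_\pi(m)\, f(m^{-1}\cdot x)\, dm = d_\pi (\chi_\pi * f)(x),
\]
matching the convolution convention from Theorem \ref{th helgason}.

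There is no genuine obstacle here; the only thing to be careful about is book-keeping between the two conventions (acting by $m$ versus $m^{-1}$), which is why the change of variable step is the one to write out explicitly. The absolute convergence of the resulting series $\sum_\pi \mathrm{Tr}(f^\pi)$ is inherited from the absolute convergence in Theorem \ref{th helgason}, so no further analytic work is required.
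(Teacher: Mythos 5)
Your proposal is correct and takes essentially the same route as the paper: apply Theorem \ref{th helgason} for the first equality, then unwind the definition of $f^\pi$ using $\chi_\pi=\mathrm{Tr}(\pi)$, a change of variables $m\mapsto m^{-1}$ justified by unimodularity of the compact group $M$, and linearity of the trace. The paper writes the chain of equalities starting from $d_\pi(\chi_\pi*f)(x)$ and ending at $\mathrm{Tr}(f^\pi(x))$, whereas you go in the reverse direction, but the computation is identical.
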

\begin{proof}
By the preceding Theorem \ref{th helgason} we have for $x\in X$
\begin{eqnarray*}
f(x)&=&\sum_{\pi\in \widehat M}d_\pi \left(\chi_\pi* f\right)(x)\\
&=& \sum_{\pi\in\widehat M} d_\pi \left(\mathrm{Tr}(\pi)* f\right)(x)\\
&=& \sum_{\pi\in\widehat M} d_\pi\int_M \chi_\pi(m) f(m^{-1}x)dm\\
&=& \sum_{\pi\in\widehat M} d_\pi\int_M \mathrm{Tr}(\pi(m^{-1})) f(m\cdot x)dm\\
&=& \sum_{\pi\in\widehat M} \mathrm{Tr}\left(d_\pi\int_M \pi(m^{-1}) f(m\cdot x)dm\right)=\sum_{\pi\in\widehat{M}}\mathrm{Tr}(f^\pi)
\end{eqnarray*}
by unimodularity of $M$ and the linearity of the trace.
\end{proof}

Let us summarize the preliminary result of this chapter.
\begin{theorem}\label{theo: generaltr}
Let $G=SO_o(1,l)=ANK$, $M=Z_K(A)$ and $\Gamma\subset G$ a uniform lattice. Denote by $\pi_R$ the right regular representation of $G$ on $L^2(\Gamma\backslash G)$ and fix $\sigma\in C^\infty(X_\Gamma)$ and $f\in C^\infty(G//K)$ admissible. Then the operator $\sigma\cdot \pi_R(f)$, given by (\ref{eq: convolution}), maps $L^2(X_\Gamma)$ into itself and is of trace class. Its trace can be computed to
\begin{eqnarray*}\mathrm{Tr}(\sigma\cdot \pi_R(f))&=&f(e)\int_{F_\Gamma}\sigma(x)dx+\sum_{1\neq [\gamma]\in C\Gamma}\int_Nf(n^{-1}a_\gamma m_\gamma n)\sum_{\pi\in\widehat M}\mathrm{Tr}(I_\gamma(\sigma)^\pi(n))dn\\
&=& f(e)\int_{F_\Gamma}\sigma(x)dx+\sum_{1\neq [\gamma]\in C\Gamma}\int_Nf(n^{-1}a_\gamma m_\gamma n)\sum_{\pi\in\widehat M}d_\pi \left(\chi_\pi*I_\gamma(\sigma)\right)(n)dn,
\end{eqnarray*}
where $1\neq \gamma$ is conjugate to $a_\gamma m_\gamma$, $F_\Gamma\subset X$ a fundamental domain for $\Gamma$  and $I_\gamma(\sigma)^\pi$ is defined by (\ref{def: projection}).
\end{theorem}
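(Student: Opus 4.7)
The plan is to assemble the computations already carried out in Sections \ref{sec: traceform} and the preceding discussion into a coherent argument, then apply the Peter--Weyl type decomposition of Proposition \ref{prop: deco} to each orbital integral. First I would invoke Proposition \ref{prop: traceclass}: admissibility of $f$ ensures that $\pi_R(f)$ is of trace class on $L^2(\Gamma\backslash G)$, hence multiplication by the bounded function $\sigma$ keeps it trace class, and the right-$K$-invariance of both $f$ and $\sigma$ shows that the subspace $L^2(X_\Gamma)\subset L^2(\Gamma\backslash G)$ is preserved. The calculation in (\ref{eq: convolution}) exhibits $\sigma\cdot\pi_R(f)$ as an integral operator on $X_\Gamma$ with smooth kernel
\[ K_\sigma(\Gamma x,\Gamma g)=\sigma(\Gamma x)\sum_{\gamma\in\Gamma}f(x^{-1}\gamma g), \]
the sum converging absolutely and uniformly on compacta by admissibility.

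Next I would compute the trace by integrating the kernel along the diagonal, separating the $\gamma=e$ contribution, which gives the leading term $f(e)\int_{F_\Gamma}\sigma(x)\,dx$. The remaining sum is regrouped according to conjugacy classes $[\gamma]\in C\Gamma$; using that $\Gamma_\gamma=\langle\gamma_0\rangle$ is infinite cyclic (Proposition \ref{chap: geom icg}) and that each element of $[\gamma]$ is represented by a unique coset in $\Gamma_\gamma\backslash\Gamma$, the sum collapses to an integral over a fundamental domain $F_\gamma\subset X$ for $\langle\gamma\rangle$, with the combinatorial factor $1/n_\gamma$ accounting for the passage from $\Gamma_\gamma$ to $\langle\gamma\rangle$. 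The absolute convergence needed for the interchanges of sum and integral comes again from admissibility; I would mention this explicitly.

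Conjugating by $\alpha_\gamma$ (so $\alpha_\gamma\gamma\alpha_\gamma^{-1}=a_\gamma m_\gamma\in A^+ M$) and applying Lemma \ref{lco1} together with the explicit fundamental domain $F_{a_\gamma m_\gamma}=\{a_s n:0\le s\le l_\gamma,\,n\in N\}$ of Lemma 5.1, one rewrites the integral over $F_\gamma$ as an integral over $N\times A/\langle a_{\gamma_0}\rangle$, using $a_\gamma=a_{\gamma_0}^{n_\gamma}$ and $dx=da\,dn$ on $X=AN$. After changing variables in $A$, the conjugation by $a$ inside $f(n^{-1}a^{-1}a_\gamma m_\gamma a n)=f(n^{-1}a_\gamma m_\gamma n)$ decouples the two variables, and the $A$-integral becomes precisely the weight function $I_\gamma(\sigma)(n)$ from (\ref{def: weight}). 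This yields the compact form
\[ \mathrm{Tr}(\sigma\cdot\pi_R(f))=f(e)\int_{F_\Gamma}\sigma(x)\,dx+\sum_{1\neq[\gamma]\in C\Gamma}\int_N f(n^{-1}a_\gamma m_\gamma n)\,I_\gamma(\sigma)(n)\,dn. \]

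Finally, I would apply Proposition \ref{prop: deco} pointwise: since $I_\gamma(\sigma)\in C^\infty(N)$, Helgason's decomposition (Theorem \ref{th helgason}) for the compact group $M$ acting on $N$ by conjugation gives
\[ I_\gamma(\sigma)(n)=\sum_{\pi\in\widehat M}d_\pi(\chi_\pi*I_\gamma(\sigma))(n)=\sum_{\pi\in\widehat M}\mathrm{Tr}\bigl(I_\gamma(\sigma)^\pi(n)\bigr), \]
with absolute convergence in every $C^\infty$ seminorm, so one may interchange this sum with the $N$-integration against $f(n^{-1}a_\gamma m_\gamma n)$, which has compact support. Substituting produces the two stated forms of the trace. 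The main subtlety — and the only point that is not simply bookkeeping — is justifying the interchange of the infinite sum over $[\gamma]\in C\Gamma$ with the trace and with the $N$-integral; this rests on admissibility together with the fact that the inner integrals depend on $f$ only through its values on a conjugate of $n^{-1}a_\gamma m_\gamma n$, whose growth in $\gamma$ is controlled by the first admissibility condition.
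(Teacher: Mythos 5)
Your proposal is correct and follows essentially the same route as the paper: trace class via Proposition \ref{prop: traceclass}, trace as a diagonal integral of the kernel, regrouping by conjugacy classes and passing through the fundamental domains $F_{\gamma_0}$, $F_\gamma$, and $F_{a_\gamma m_\gamma}$ to produce the orbital integrals with weight $I_\gamma(\sigma)$, and finally the Peter--Weyl decomposition of Proposition \ref{prop: deco} applied to each weight. Your explicit remarks about where admissibility justifies interchanging sums and integrals are a welcome addition that the paper leaves implicit.
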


\begin{remark}\label{chap: gtrace rem}
Let us examine $I_\gamma(\sigma)^\pi$ more closely. We slightly generalize Lemma \ref{chap: ode lem: inv} in Chapter \ref{chap: ode}.

\begin{lemma}\label{chap: gtrace scalar}
If $a\exp sX_1\in A\exp \mathbb{R}X$ and $F\in C(X,\times_MV_\pi)$, then $F(a\exp sX_1)$ maps $V_\pi$ into $V_\pi^{Z_M(S)}$.  
\end{lemma}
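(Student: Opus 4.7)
The plan is to leverage the fact that the action of $Z_M(S)$ fixes every point of the form $a\exp sX_1 \in A\exp\mathbb{R}X_1$, and then invoke the $M$-equivariance of $F$.

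First I would show the fixing statement. For $m \in Z_M(S)$, the action of $m$ on $X = G/K$ is by left multiplication. Since $M \subset K$ centralizes $A$, we have $ma = am$ for any $a \in A$. Since $m \in Z_M(S)$ means $m$ commutes with every element of $S = \exp\mathbb{R}X_1$, we get $m\exp sX_1 = \exp sX_1 \, m$. Combining these, for $a\exp sX_1 \cdot o \in X$ (with $o = K$),
\[
m \cdot (a\exp sX_1 \cdot o) = ma\exp sX_1 K = a\exp sX_1 \, m K = a\exp sX_1 \cdot o,
\]
using $m \in K$ at the last step.

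Second, I would invoke the equivariance $F(m\cdot x) = \pi(m)F(x)$ with $x = a\exp sX_1$. The preceding computation gives $m\cdot x = x$ for all $m \in Z_M(S)$, hence $\pi(m)F(a\exp sX_1) = F(a\exp sX_1)$ as endomorphisms of $V_\pi$. Applying both sides to an arbitrary $v\in V_\pi$ yields $\pi(m)\bigl(F(a\exp sX_1)v\bigr) = F(a\exp sX_1)v$ for every $m \in Z_M(S)$, which is exactly the statement that $F(a\exp sX_1)v \in V_\pi^{Z_M(S)}$.

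There is no real obstacle here; this is a direct generalization of Lemma \ref{chap: ode lem: inv}, the only new ingredient being the presence of the factor $a \in A$, which is absorbed harmlessly because $M$ and $A$ commute. The argument is purely group-theoretic and uses nothing beyond the definitions of the action, the equivariance class $C(X,\times_M V_\pi)$, and the defining property of $Z_M(S)$.
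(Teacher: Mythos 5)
Your proof is correct and follows essentially the same path as the paper's own argument: both use the $M$-equivariance $F(m\cdot x)=\pi(m)F(x)$ together with the observation that every $m\in Z_M(S)$ fixes the point $a\exp sX_1$ (since $M$ centralizes $A$ and $Z_M(S)$ centralizes $\exp \mathbb R X_1$). You spell out the fixing computation in the coset model $G/K$ with base point $o=K$, whereas the paper works in the identification $X\cong AN$, but this is a notational difference only.
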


\begin{proof}
The same proof as before applies. We use $F(m\cdot x)=\pi(m)F(x)$ for any $m\in M$, $x\in X$. But then for $v\in V_\pi$ \begin{eqnarray*}\pi(m)F(a\exp sX_1)v&=&F(m\cdot a\exp sX_1)v=F(am\cdot\exp sX_1)v\\&=&F(a\exp sX_1)v\end{eqnarray*} for all $a\exp sX_1\in A\exp \mathbb{R}X_1$, $m\in Z_M(S)$. 
\end{proof}

We recall that $r_g$, $l_g$ denote right- resp. left-translation on $G$. Let $\gamma\neq 1$ be in $\Gamma$ with primitive $\gamma_0$. We assume that $\gamma_0$ is conjugated to $a_{\gamma_0}m_{\gamma_0}\in A^+M$, where $a_{\gamma_0}=\exp L_{\gamma_0}H_0$. This implies $L_{\gamma_0}=\left(\sqrt{2(l-1)}\right)^{-1}l_{\gamma_0}$, where $l_{\gamma_0}$ is defined by $a_{\gamma_0}=\exp l_{\gamma_0}H_1$. By definition (\ref{def: projection})

\begin{eqnarray}\label{eq: ini}
\left(\chi_\pi*I_\gamma(\sigma)\right)(\exp sX_1)&=& \int_M I_\gamma(\sigma)(m\cdot \exp sX_1)\chi_\pi(m^{-1})dm \nonumber \\
&=&   \int_M\int_{A/<a_{\gamma_0}>}\sigma (\alpha_\gamma^{-1}ma\exp sX_1)da\mbox{ }\chi_\pi(m^{-1})dm
\nonumber\\
&=& \int_M\int_0^{L_{\gamma_0}}\sigma(\alpha_\gamma^{-1}\exp tH_0m\exp sX_1)da\chi_\pi(m^{-1})dm \nonumber\\
&=& \sqrt{2(l-1)}\int_M\int_0^{l_{\gamma_0}}\sigma(\alpha_\gamma^{-1}\exp tH_1m\exp sX_1)da\chi_\pi(m^{-1})dm\nonumber\\
&=& \sqrt{2(l-1)}\int_M \int_0^{l_{\gamma_0}}\sigma\circ l_{\alpha_\gamma^{-1}}\circ r_{m\cdot \exp sX_1}da\chi_\pi(m^{-1})dm\nonumber\\
&=& \sqrt{2(l-1)}\int_M\left(\int_{c_{\gamma_0}}\sigma\circ r_{m\cdot \exp sX_1}\right)\chi_\pi(m^{-1})dm\nonumber\\
&=&\sqrt{2(l-1)} \left(\chi_\pi*\left(x\mapsto \int_{c_{\gamma_0}}\sigma\circ r_{x}\right)\right)(\exp sX_1)
\end{eqnarray}
where we used the definition of the $\pi$-projection from (\ref{def: projection}) and for functions $f$ on $\Gamma\backslash X$ the integral over the (prime) closed geodesic belonging to primitive $\gamma_0$ \[c_{\gamma_0}=\{\Gamma\alpha_{\gamma^{-1}}\exp (-tH_1) M:0\leq t\leq l_{\gamma_0}\},\] see (\ref{eq: primclos}), is given by
\[\int_{c_{\gamma_0}}f:=\int_{0}^{l_{\gamma_0}}f\circ l_{\alpha_\gamma^{-1}}(\exp tH_1)dt.\]

Finally, we remark that  
\begin{eqnarray}\label{eq: inivalsimp}
\sqrt{2(l-1)}\int_{c_{\gamma_0}}\sigma&=&\int_{A/<a_{\gamma_0}>}\sigma(\alpha_\gamma^{-}x)dx\nonumber
\\&=&I_\gamma(\sigma)(e)\\&=&\int_M I_\gamma(\sigma)(m)dm,\nonumber \end{eqnarray} since for any nontrivial $\pi$
\[\left(\chi_\pi*I_\gamma(\sigma)\right)(e)=0,\]
where $e\in G$ is the neutral element, as $I_\gamma(\sigma)$ is right-$K$-invariant and $\int_M\chi_\pi(m)dm=0$.

\end{remark}

\section{The case of $\sigma=\varphi$ an automorphic eigenfunction}\label{sec: specialization}

Now we specify $\sigma$ to be an \index{automorphic eigenfunction} automorphic eigenfunction of the Laplacian $\Delta_\Gamma$ on the compact manifold $X_\Gamma$. So let $\varphi$ be some fixed Laplace eigenfunction, i.e. \[\Delta_\Gamma \varphi=\Omega \varphi=\mu \varphi\] for some eigenvalue $\mu$. More specifically, there are \index{$r$, $\mu=-\frac{1}{4}\left(\rho_0+\frac{r^2}{\rho_0}\right)$} $r\in\mathfrak{a}_{\mathbb C}^*\cong \mathbb{C}$ such that \[\mu=-\frac{1}{4}\left(\rho_0+\frac{r^2}{\rho_0}\right)=-\frac{1}{4\rho_0}\left(\rho_0^2+r^2\right) .\] 

By the ellipticity of $\Delta_\Gamma$ we know that $\varphi$ is smooth and we can apply the theory of the last section to $\sigma=\varphi$. Then we consider $I_\gamma(\varphi)$ defined in (\ref{def: weight}), so that 
\begin{equation*}I_\gamma(\varphi)(x)=\int_{A/<a_{\gamma_0}>}\varphi(\alpha_\gamma^{-1}ax)da .\end{equation*}

Since $\Omega$ lies in the center $Z(\mathfrak g)$ of $U(\mathfrak g)$ it commutes with the right-translation \index{$r_g$, right-translation with $g$} $r_g$ and the left-translation \index{$l_g$, left-translation with $g$} $l_g$ \[\Omega(\varphi\circ l_{g})=(\Omega \varphi)\circ l_{g}=\mu\varphi\circ l_{g} \]  for all $g$ in $G$, in particular for $g=\alpha_\gamma^{-1}a$, $a\in A$.

If we apply now $\Omega$ to $I_\gamma(\varphi)$ we can use, as $A/<a_{\gamma_0}>$ is compact, the theorem of dominated convergence and exchange differentiation with integration to see that $I_\gamma(\varphi)$ is also a $\Omega$-eigenfunction, i.e. \[\Omega I_\gamma(\varphi)=\mu I_\gamma(\varphi).\] 

Since also the action of $M$ and $\Omega$ commute, even for any $(\pi,V_\pi)\in\widehat M$, by the same reasoning
\begin{equation}\label{equal}
\Omega I_\gamma(\varphi)^\pi=\mu I_\gamma(\varphi)^\pi,
\end{equation} 
where, according to (\ref{def: projection}) 
\begin{equation}\label{def: piphi}
I_\gamma(\varphi)^\pi(x)=d_\pi\int_M I_\gamma(\varphi)(m\cdot x)\pi(m^{-1})dm.
\end{equation}

Equations of the form (\ref{equal}) have been discussed in Chapter \ref{chap: ode}. We fix a slice $S\subset N$ for $M$ acting on $N$ and assume that $S=\exp \mathbb{R}^+X_1$, where $X_1\in\mathfrak n $ is of unit length. We recall that if $F$ is an element of $C^\infty(X\times_M V_\pi)$, then $F|_S$ maps $V_\pi$ onto $V_\pi^{Z_M(S)}$, see Lemma \ref{chap: ode lem: inv}, where $V^{Z_M(S)}_\pi=\mathbb{C}\cdot v$ for some $v\in V_\pi$. 

In Section \ref{sec: odeimp} we have shown that the equation \[\Omega F-\mu F=0\] leads to an equation on $S$ 

\begin{eqnarray}\label{chap: gtrace ef}
\left(\left(\alpha(H_1)^2t^2+2\right)\frac{d^2}{dt^2}+\left(\left(\alpha(H_1)^2+2\alpha(H_\rho)\right)t+2\frac{l-2}{t}\right)\frac{d}{dt}\right.\\
\left.+
\frac{2}{t^2}\sum_{j}\pi(Z_{Y_j}^2)|_{V_\pi^{Z_M(S)}}-\mu\right)F(t)=0,\nonumber
\end{eqnarray}
for the scalar valued function $F(t)$ defined by $F(t)v:=F(\exp tX_1)v$, see Theorem \ref{th: odesum}. Here $\alpha$ is the unique positive root, $H_1,H_\rho\in\mathfrak a^+$ the unique unit vector resp. the defining vector for $\rho$, $\{Z_{Y_j}\}_j$ the orthogonal set of $\mathfrak z_{\mathfrak{m} }(\mathfrak s)^{\bot_\mathfrak{m} }$ which corresponds to the orthonormal basis $\{Y_j\}_j$ of $\{X_1\}^{\bot_\mathfrak n}$ via $[Z_{Y_j},X_1]=Y_j$. By Lemma \ref{chap: ode d} we know that $\sum_{j}\pi(Z_{Y_j}^2)|_{V_\pi^{Z_M(S)}}\in\mathrm{End}(V_\pi^{Z_M(S)})$ can be identified with a scalar $\leq 0$. We note that $F(t)=d_\pi
\left(\chi_\pi*f\right)(\exp tX_1)$, if $F=f^\pi$ for some $f\in C^\infty(X)$, see Theorem \ref{th: odesum}.
  
The results of Theorem \ref{chap: ode main} now imply that there are constants $a,b,d,p$, where $a=\rho_0+ir$ and $b=\rho_0-ir$ only depend on the eigenvalue, $2p\in \mathbb{N}_0$ and $d\leq 0$ depend only on $\pi$ and $S$, such that for some other constant $C$ and for $l\geq 3$ any solution $F$ to (\ref{chap: gtrace ef}), which is smooth at $t=0$, can be expressed as
\begin{equation}\label{eq: general solution}\tilde F(t):=F(2\sqrt{l-1}t)=(-1)^pC\cdot t^{2p}\cdot{}_2F_1\left(a+l,b+l,1+2\sqrt{ \frac{(1-\rho_0)^2}{4}-d };-t^2\right)\end{equation}
for $t>0$, see Remark \ref{rem: hypsymxp}. By the Leibniz rule \begin{eqnarray*}
 \tilde{F}^{(2p)}(t)&=&(-1)^pC\left( t^{2p}\cdot{}_2F_1\left(a+p,b+p,1+2\sqrt{ \frac{(1-\rho_0)^2}{4}-d };-t^2\right)\right)^{(2p)}
\\&=& (-1)^pC\sum_{i=0}^{2p}\dbinom{2p}{i}\frac{1}{i!}t^{2p-i}{}_2F_1\left(a+p,b+p,1+2\sqrt{ \frac{(1-\rho_0)^2}{4}-d };-t^2\right)^{(2p-i)}\\&=&(-1)^pC\cdot {}_2F_1\left(a+p,b+p,1+2\sqrt{ \frac{(1-\rho_0)^2}{4}-d };-t^2\right)\\&&+(-1)^pC\sum_{i=0}^{2p-1}\dbinom{2p}{i}\frac{1}{i!}t^{2p-i}{}_2F_1\left(a+p,b+p,1+2\sqrt{ \frac{(1-\rho_0)^2}{4}-d };-t^2\right)^{(2p-i)}\\&=& (-1)^pC\cdot {}_2F_1\left(a+p,b+p,1+2\sqrt{ \frac{(1-\rho_0)^2}{4}-d };-t^2\right)\\&&+(-1)^pC\cdot t\sum_{i=0}^{2p-1}\dbinom{2p}{i}\frac{1}{i!}t^{2p-i-1}{}_2F_1\left(a+p,b+p,1+2\sqrt{ \frac{(1-\rho_0)^2}{4}-d };-t^2\right)^{(2p-i)}
\end{eqnarray*}

Hence, \[\tilde{F}^{(2p)}(0)=\left(4(l-1)\right)^pF(0)^{(2p)}=(-1)^pC\cdot {}_2F_1\left(a+p,b+p,1+2\sqrt{ \frac{(1-\rho_0)^2}{4}-d };0\right)=(-1)^pC\] because ${}_2F_1\left(a+p,b+p,1+2\sqrt{ \frac{(1-\rho_0)^2}{4}-d };0\right)=1$ by the definition of hypergeometric functions.

For $l=2$ the space of solutions to (\ref{chap: gtrace ef}) is spanned by ${}_2F_1\left(a,b,\frac{1}{2};\frac{-t^2}{4}\right)$  and $it\cdot {}_2F_1\left(a+\frac{1}{2},b+\frac{1}{2}, \frac{3}{2};-\frac{t^2}{4} \right)$, where $a=\rho_0+ir$ and $b=\rho_0-ir$ as before. The general solution in this case has thus the form
\begin{equation}\label{eq: n2}F(2t):=F(\exp 2tX_{1})=F(0)\cdot {}_2F_1\left(a,b,\frac{1}{2};-t^2\right)+ 2F'(0)it\cdot {}_2F_1\left(a+\frac{1}{2},b+\frac{1}{2}, \frac{3}{2};-t^2 \right).\end{equation}

Now we consider $F:=I_\gamma(\varphi)^\pi:\mathbb R^+\to \mathbb{C}$ defined by $I_\gamma(\varphi)^\pi(s)v=I_\gamma(\varphi)^\pi(\exp sX_1)v$, then also \[I_\gamma(\varphi)^\pi(s)=d_\pi \left(\chi_\pi*I_\gamma(\varphi)\right)(\exp sX_1)
,\] 
see Theorem \ref{th: odesum}. Under the map $\exp tX_{1}\mapsto t$, $X_{1}$ is identified with $ \frac{d}{dt} $, thus 
\begin{equation}\label{eq: idiff}
\left(I_\gamma(\varphi)^\pi\right)^{(2p)}(0)= d_\pi\frac{d^{2p}}{dt^{2p}}|_{t=0}\left(\chi_\pi*I_\gamma(\varphi)\right)(\exp tX_{1})= d_\pi \left(X_{1}^{2p}\chi_\pi*I_\gamma(\varphi)\right)(e),
\end{equation}
where \index{$e$, neutral element} $e\in N$ is the neutral element. For $I_\gamma(\varphi)^\pi:\mathbb{R}^+\to\mathbb{C}$, we get from (\ref{eq: general solution}) for $l\geq 3$

\begin{eqnarray}\label{eq: lghyp}
d_\pi\left(\chi_\pi*I_\gamma(\varphi)\right)(\exp 2\sqrt{l-1}sX_1)&=&I_\gamma(\varphi)^\pi( 2\sqrt{l-1}s)\nonumber\\
&\overset{(\ref{eq: general solution})}=& (-1)^p\widetilde{I_\gamma(\varphi)}^{(2p)}(0)\cdot\nonumber
\\ &&\cdot
s^{2p}\cdot{}_2F_1\left(a+p,b+p,1+2\sqrt{ \frac{(1-\rho_0)^2}{4}-d };-s^2\right)
\nonumber\\
&=& (-1)^p\left(4(l-1)\right)^p\left(I_\gamma(\varphi)^\pi\right)^{(2p)}(0)\cdot\nonumber
\\ &&\cdot
s^{2p}\cdot{}_2F_1\left(a+p,b+p,1+2\sqrt{ \frac{(1-\rho_0)^2}{4}-d };-s^2\right)
\nonumber\\&\overset{(\ref{eq: idiff})}=&(-1)^p\left(4(l-1)\right)^{p}d_\pi\frac{d^{2p}}{dt^{2p}}|_{t=0}\left(\chi_\pi*I_\gamma(\varphi)\right)(\exp tX_{1})s^{2p}\nonumber\\&&\cdot{}_2F_1\left(a+p,b+p,1+2\sqrt{ \frac{(1-\rho_0)^2}{4}-d };-s^2\right)\nonumber
\\&=&(-1)^p\left(4(l-1)\right)^{p}d_\pi \left(X_{1}^{2p}\chi_\pi*I_\gamma(\varphi)\right)(e)\cdot s^{2p}\\&&\cdot{}_2F_1\left(a+p,b+p,1+2\sqrt{ \frac{(1-\rho_0)^2}{4}-d };-s^2\right) \nonumber
.
\end{eqnarray}

For $l=2$ we see from (\ref{eq: n2}) that the weight of the orbital integral is given by 

\begin{eqnarray}\label{eq: l2hyp}
I_\gamma(\varphi)(2s)&=& I_\gamma(\varphi)(\exp 2sX_1)\nonumber\\&=& \left(\int_{c_{\gamma_0}}\varphi\right) \cdot {}_2F_1\left(a,b,\frac{1}{2} ;-s^2\right)\\&&+ 2i\left(\int_{c_{\gamma_0}}X_{1}\varphi \right) \cdot s \cdot {}_2F_1\left(a+\frac{1}{2},b+\frac{1}{2}, \frac{3}{2};-s^2 \right).\nonumber
\end{eqnarray}

From Lemma \ref{lem: finfacts}, Theorem \ref{chap: ode main} and Proposition \ref{prop: deco} we get:

\begin{theorem}\label{th: gstf}
Let $X=G/K$ be a real hyperbolic space of dimension $l$, $\Gamma\subset G$ a uniform lattice, $1\neq\gamma\in \Gamma$ and $\varphi$ be an automorphic eigenfunction (on $X_\Gamma$) with eigenvalue $\mu$. The weight $I_\gamma(\varphi)$ can be decomposed as
\[I_\gamma(\varphi)=\sum_{\pi\in\widehat M}d_\pi\left(\chi_\pi*I_\gamma(\varphi)\right).\]

Here $d_\pi\left(\chi_\pi*I_\gamma(\varphi)\right)=\mathrm{Tr}\left(I_\gamma(\varphi)^\pi\right)\in C^\infty(X)_{\check{\pi}}$ is a Casimir eigenfunction. On each slice $S=\exp \mathbb{R}^+X_1$, $|X_1|=1$, for $M$ acting on $N$, $d_\pi\left(\chi_\pi*I_\gamma(\varphi)\right)$ satisfies equation (\ref{chap: gtrace ef}) and is given by (\ref{eq: lghyp}) for $l\geq 3$ resp. (\ref{eq: l2hyp}) for $l=2$. 

\end{theorem}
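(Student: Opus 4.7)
The proof proceeds in four stages. First, I would establish the decomposition
\[I_\gamma(\varphi)=\sum_{\pi\in\widehat M}d_\pi\left(\chi_\pi*I_\gamma(\varphi)\right).\]
This is a direct application of Proposition \ref{prop: deco}: the weight $I_\gamma(\varphi)$ is a smooth function on $X=AN$ (smoothness comes from the smoothness of $\varphi$ and the compactness of $A/\langle a_{\gamma_0}\rangle$, which lets us differentiate under the integral sign), so Helgason's expansion applies with $M$ acting by conjugation. The identification $d_\pi(\chi_\pi*I_\gamma(\varphi))=\mathrm{Tr}(I_\gamma(\varphi)^\pi)$ follows from the last line of the proof of Proposition \ref{prop: deco}, and the containment in $C^\infty(X)_{\check\pi}$ is standard (Helgason, Ch. IV Lem. 1.7), as already noted in Lemma \ref{lem: finfacts}(a)--(b).

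Second, I would verify that each projection is a Casimir eigenfunction. Because $\Omega\in Z(\mathfrak g)$, it commutes with both left and right translations, so $\Omega(\varphi\circ l_{\alpha_\gamma^{-1}a})=\mu\,\varphi\circ l_{\alpha_\gamma^{-1}a}$. Applying dominated convergence to the definition (\ref{def: weight}) (using compactness of $A/\langle a_{\gamma_0}\rangle$) yields $\Omega I_\gamma(\varphi)=\mu I_\gamma(\varphi)$. A second application to the integral over $M$ defining $\chi_\pi*I_\gamma(\varphi)$, or equivalently $I_\gamma(\varphi)^\pi$, gives $\Omega I_\gamma(\varphi)^\pi=\mu I_\gamma(\varphi)^\pi$ as in equation (\ref{equal}).

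Third, I would restrict $I_\gamma(\varphi)^\pi$ to the slice $S=\exp\mathbb R^+X_1$. Since $I_\gamma(\varphi)^\pi\in C^\infty(X\times_M V_\pi)$ satisfies $\Omega F=\mu F$, the analysis in Section \ref{sec: odeimp} applies verbatim: writing $\Omega$ in polar coordinate form relative to the decomposition $G=M(A\exp\mathfrak s)K$, and using that $F|_S$ takes values in $V_\pi^{Z_M(S)}$ (Lemma \ref{chap: ode lem: inv}) which is at most one-dimensional (Lemma \ref{chap: ode aux}), one obtains the scalar ODE (\ref{chap: gtrace ef}) for the function $t\mapsto d_\pi(\chi_\pi*I_\gamma(\varphi))(\exp tX_1)$, thanks to Theorem \ref{th: odesum}(b) which identifies this scalar with $\mathrm{Tr}(I_\gamma(\varphi)^\pi(\exp tX_1))$.

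Finally, I would extract the explicit form. Theorem \ref{chap: ode main} tells us that for $l\geq 3$ the only solution smooth at the origin is, up to a constant $C$, given by
\[(-1)^p C\,s^{2p}\cdot {}_2F_1\!\bigl(a+p,b+p,1+2\sqrt{(1-\rho_0)^2/4-d};-s^2\bigr)\]
after the rescaling $t\mapsto 2\sqrt{l-1}\,s$ from Remark \ref{rem: hypsymxp}. To fix $C$, I would take the $(2p)$-th derivative of the normalized function at $t=0$: the Leibniz rule applied to the explicit solution yields $C$ (since ${}_2F_1(\cdot,\cdot,\cdot;0)=1$ and all lower-order terms carry a factor of $t$). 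On the other hand, identifying $X_1$ with $d/dt$ under the slice parametrization, the same derivative equals $d_\pi(X_1^{2p}\chi_\pi*I_\gamma(\varphi))(e)$. Combining these produces (\ref{eq: lghyp}). For $l=2$ the solution space is two-dimensional (spanned by an even and an odd hypergeometric function, Theorem \ref{chap: ode main}), and I would determine both constants by evaluating $F$ and $F'$ at $0$, with $F(0)=\int_{c_{\gamma_0}}\varphi$ from (\ref{eq: inivalsimp}), giving (\ref{eq: l2hyp}).

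The main obstacle is the last step: matching the abstract ODE coefficient $C$ with the geometric quantity $d_\pi(X_1^{2p}\chi_\pi*I_\gamma(\varphi))(e)$ requires tracking carefully the rescaling between $X_1$ of unit $B_\theta$-length and the normalization used in Chapter \ref{chap: ode}, together with the precise form of the Leibniz expansion of the product $t^{2p}\cdot {}_2F_1(\ldots;-t^2)$ at $t=0$. Everything else is an application of results already established.
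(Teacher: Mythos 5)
Your proposal is correct and follows the same route as the paper: Proposition \ref{prop: deco} for the decomposition, commutation of $\Omega\in Z(\mathfrak g)$ with translations plus dominated convergence for the Casimir eigenfunction property, the polar/radial-part machinery of Section \ref{sec: odeimp} and Theorem \ref{th: odesum} for the scalar ODE (\ref{chap: gtrace ef}) on the slice, and Theorem \ref{chap: ode main} together with the Leibniz expansion at $t=0$ and the identification (\ref{eq: idiff}) of the derivative with $d_\pi(X_1^{2p}\chi_\pi*I_\gamma(\varphi))(e)$ for the explicit formulas (\ref{eq: lghyp}) and (\ref{eq: l2hyp}). You also correctly flag the rescaling bookkeeping (the factor $2\sqrt{l-1}$ relating $X_1$ of unit $B_\theta$-length to $X_{e_1}$) as the main thing to track carefully, which is exactly where the paper spends its computational effort.
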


\begin{remark}
We digress to examine $X_{1}^{2p}I_\gamma^\pi(\varphi)$ more closely. Using the definition of $I_\gamma(\varphi)^\pi$, see (\ref{def: piphi}), and the computations made at the end of the last section, see Remark \ref{chap: gtrace rem}, as $\exp tX_{1}\in S$ for all $t>0$,
\begin{eqnarray*}\label{chap: gtrace initial1}
X_{1}^{2p}\left(\chi_\pi*I_\gamma(\varphi)\right)(e)&=& \frac{d^{2p}}{dt^{2p}}|_{t=0}\int_M I_\gamma(\varphi)(m\cdot \exp tX_{1})\chi_\pi(m^{-1})dm\\&=&
\frac{d^{2p}}{dt^{2p}}|_{t=0}\int_M \int_{A/<a_{\gamma_0}>}\varphi(\alpha_\gamma^{-1}am\exp tX_{1})da \chi_\pi(m^{-1})dm\\
&\overset{(\ref{eq: ini})}=& \sqrt{2(l-1)} \frac{d^{2p}}{dt^{2p}}|_{t=0} \left(\chi_\pi*\left(x\mapsto \int_{c_{\gamma_0}}\varphi\circ r_x\right)\right)(\exp tX_{1}).
\end{eqnarray*}

On the other hand,  
\[\left(\chi_\pi*\left(\varphi\circ l_{\alpha_\gamma^{-1}}\right)\right)(x)=d_\pi\int_M \varphi\circ l_{\alpha_\gamma^{-1}}(m\cdot x)\chi_\pi(m^{-1})dm\]

and
\begin{eqnarray}\label{chap: gtrace initial2}
X_{1}^{2p}\left(\chi_\pi*I_\gamma(\varphi)\right)(e)&=& \frac{d^{2p}}{dt^{2p}}|_{t=0}\int_M I_\gamma(\varphi)(m\cdot \exp tX_{1})\chi_\pi(m^{-1})dm\nonumber\\
&=&
\frac{d^{2p}}{dt^{2p}}|_{t=0}\int_M \int_{A/<a_{\gamma_0}>}\varphi(\alpha_\gamma^{-1}am\exp tX_{1})da \chi_\pi(m^{-1})dm\nonumber\\
&=&\frac{d^{2p}}{dt^{2p}}|_{t=0}\int_M \int_{A/<a_{\gamma_0}>}\varphi\circ l_{\alpha_\gamma^{-1}}(am\exp tX_{1})da \chi_\pi(m^{-1})dm\nonumber\\
&=&\frac{d^{2p}}{dt^{2p}}|_{t=0} \int_{A/<a_{\gamma_0}>}\int_M\varphi\circ l_{\alpha_\gamma^{-1}}(m\cdot a\exp tX_{1}) \chi_\pi(m^{-1})dmda\nonumber\\
&=&\frac{d^{2p}}{dt^{2p}}|_{t=0} \int_{A/<a_{\gamma_0}>}\left(\chi_\pi*\left(\varphi\circ l_{\alpha_\gamma^{-1}}\right)\right)( a\exp tX_{1})da\nonumber\\
&=&\int_{A/<a_{\gamma_0}>}\frac{d^{2p}}{dt^{2p}}|_{t=0}\left(\chi_\pi*\left(\varphi\circ l_{\alpha_\gamma^{-1}}\right)\right)(a\exp tX_{1})da\nonumber\\
&=&\int_{A/<a_{\gamma_0}>}\left(X_{1}^{2p}(\chi_\pi*\left(\varphi\circ l_{\alpha_\gamma^{-1}})\right)\right)(a)da
\end{eqnarray}

For $\pi=\mathbf{1}$ trivial, (\ref{chap: gtrace initial2}) simplifies to 
\begin{eqnarray*}&&\int_{A/<a_{\gamma_0}>}\left(X_{e_1}^{2p}\left(x\mapsto \int_M \varphi\circ l_{\alpha_\gamma^{-1}}(m\cdot x)dm\right)\right)(a)da\\
&=&\int_{A/<a_{\gamma_0}>}\left(X_{e_1}^{2p}\left(\left[x\mapsto \int_M\varphi(m\cdot x)dm\right]\circ l_{\alpha_\gamma^{-1}}\right)\right)(a)da\\
&=&\int_{A/<a_{\gamma_0}>}\left(\left(X_{e_1}^{2p}\left[x\mapsto \int_M\varphi(m\cdot x)dm\right]\right)\circ l_{\alpha_\gamma^{-1}}\right)(a)da\\
&=&\sqrt{2(l-1)}\int_{c_{\gamma_0}}\left(X_{e_1}^{2p}\left[x\mapsto \int_M\varphi(m\cdot x)dm\right]\right).\end{eqnarray*}
\end{remark}

\section{An auxiliary zeta function $ \mathcal{R}(\varphi)$}\label{sec: zeta}

We choose now $f$ from  a special family of bi-$K$-invariant functions $f_k$. We consider on $G$ the function \index{$f_k$, bi-$K$-invariant function} \begin{equation}\label{def: fk}f_k:g\mapsto \left(1-|g\cdot 0|^2\right)^{k/2}\end{equation} for $k\in \mathbb{C}$, where we recall the action of $G$ on $B_1(\mathbb R^{n})$ given by (\ref{def: gaction}) in Chapter \ref{chap: hyper}. While this function is bi-$K$-invariant on $G$, see Lemma \ref{lem: bikinv}, in particular it is a function on $X=G/K$, it does not have compact support and we have to show that $\pi_R(f_k)$ is indeed of trace class for suitable $k\in\mathbb C $. 

Recall that we fixed $H_0,H_1\in\mathfrak a^+$ where $\alpha (H_0)=1$ and $|H_1|=1$. It follows that $|H_0|^2=2(l-1)$, see \cite[(4.2.10)]{GV} and thus $H_0=\sqrt{2(l-1)}H_1$. Let now $\gamma \in \Gamma$ be conjugated to $a_\gamma m_\gamma\in MA^+$. By definition of $l_\gamma$, $a_\gamma=\exp l_\gamma H_1=\exp \frac{l_\gamma}{\sqrt{2(l-1)}}H_0$. Let us set \index{$L_\gamma$, lenght of $[\gamma]$} \begin{equation}\label{eq: lengthgamma}L_\gamma:=\frac{l_\gamma}{\sqrt{2(l-1)}}.\end{equation} 

Then by definition $a_\gamma=\exp L_\gamma H_0$ and we call \index{length of a closed geodesic} $L_\gamma$ the length of the closed geodesic $[\gamma]$. Further, we call the set \index{length spectrum} \[\{L_\gamma:1\neq [\gamma]\in C\Gamma\}\] the length spectrum of $\Gamma$. It follows from Proposition \ref{prop: inflspec} that $\{L_\gamma:1\neq [\gamma]\in C\Gamma\}$ has an infimum \index{$L_{\mathrm{inf}}$, infimum of length spectrum} \index{infimum of the length spectrum} $L_{\mathrm{inf}}>0$.

\begin{remark}
Let $[\gamma]\in C\Gamma$. The definition of the length of $[\gamma]$ from (\ref{eq: lengthgamma}) differs from $l_\gamma$ in \cite{Ga1}, see also Section \ref{chap: geometry}, by the factor $\left(\sqrt{2(l-1)}\right)^{-1}$. That is $l_\gamma=\sqrt{2(l-1)}L_\gamma$. If $G=SO_o(1,2)$, then our definition of $L_\gamma$ agrees with the one which is used in \cite{AZ}. In this case $\sqrt{2}L_\gamma=l_\gamma$.
\end{remark}

The computations made at the end of Chapter \ref{chap: hyper}, especially equation (\ref{comp: natmn}ff), give

\begin{equation}\label{eq: fk1}
f_k\left((\exp -sX_{e_1}) a_\gamma m_\gamma^{m^{-1}}(\exp sX_{e_1})\right)=\left(-(m_\gamma^{m^{-1}})_{1,1}s^2+(1+s^2) \cosh L_\gamma\right)^{-k},
\end{equation} 
where \index{$a_\gamma$, element of $A^+$} $a_\gamma=\exp l_\gamma H_1=\exp L_\gamma H_0=a_{L_\gamma}$ and \index{$(m_\gamma^{m^{-1}})_{1,1}$, first entry in first row of $m_\gamma^{m^{-1}}$} 
\begin{equation}\label{def: cm}
(m_\gamma^{m^{-1}})_{1,1}:=(m^{-1}m_\gamma m)_{1,1}
\end{equation}
is the first entry in the first row of $m_\gamma^{m^{-1}}$ as we identified $m^{-1}m_\gamma m$ with the $(l-1)\times (l-1)$-matrix $(m^{-1}m_\gamma m)_{i,j=1}^{l-1}$ in $SO(l-1)$. In other words, $(m_\gamma^{m^{-1}})_{1,1}$ is just the matrix coefficient belonging to the defining representation of $M$ and the first basis vector evaluated at $m_\gamma^{m^{-1}}$. Also, see  equation (\ref{eq: fkatsp}),
\begin{equation}\label{eq: fkatn}
f_k(\exp tH_0\exp X_u)=f_k(a_t\exp X_u)=\left(\cosh t +\frac{|u|^2}{2}e^t\right)^{-k}.
\end{equation}
for $X_u$ in $\mathfrak{n}$, i.e. $u\in\mathbb{R}^{l-1}$.

In \cite{Ga} on page 8 we find the following useful remark for determining whether a function $f\in C^\infty(G//K)$ is admissible.
\begin{remark}\label{rem: gang}
For a function $f\in C^\infty(G//K)$ we consider its Abel transform. Here, we use the formula for the Abel transform given in  \cite[(9.37)]{Wi}. We define \index{$F_f$, Abel transform of $f$}
\[F_{f}(a_t):=F_f(t):= e^{\rho_0 t}\int_N f(a_tn)dn,\] 
whenever this integral is finite.

The transform $F_f$, if defined, is a smooth function on $A$ resp. $\mathbb R$. As a function on $\mathbb R$ it is even by the bi-$K$-invariance which translates to a Weyl group invariance on $A$. If $F_f$ also satisfies for some $\varepsilon>0$
\begin{equation}\label{crit}
\sup_{t\in \mathbb R}(\exp(\rho_0+\varepsilon)|t|)|F_f(t)|<\infty,
\end{equation}
then  $\pi_R(f)$ is admissible. 
\end{remark}

\begin{proposition}\label{th: traceclass}
Let $G=SO_o(1,l)$, $\pi_R$ the right-regular representation of $G$ on $\Gamma\backslash G$, where $\Gamma$ is a uniform lattice, and let $f_k$ as defined in (\ref{def: fk}). The function $f_k$ is admissible for $\mathrm{Re}(k)>2\rho_0$. In particular, the operator $\pi_R(f_k)$ is of trace class for $\mathrm{Re}(k)>2\rho_0$.
\end{proposition}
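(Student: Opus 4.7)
The plan is to apply the admissibility criterion of Remark \ref{rem: gang}, so everything reduces to computing the Abel transform $F_{f_k}(t) = e^{\rho_0 t}\int_N f_k(a_t n)\,dn$ explicitly and checking the exponential bound. Since $f_k$ is bi-$K$-invariant and the values on $AN$ are already recorded in (\ref{eq: fkatn}), I will not need any further geometric input.

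First I would substitute (\ref{eq: fkatn}) into the Abel integral and use polar coordinates on $N \cong \mathbb R^{l-1}$ (exactly as in Section \ref{sec: spher traf}) to rewrite
\[
\int_N f_k(a_t n)\,dn \;=\; \omega_{l-1}\int_0^\infty s^{l-2}\!\left(\cosh t + \tfrac{s^2}{2}e^{t}\right)^{-k} ds.
\]
Pulling out $(\cosh t)^{-k}$ and substituting $u = \tfrac{s^2 e^{t}}{2\cosh t}$ turns the $s$-integral into $\tfrac{1}{2}\bigl(\tfrac{2\cosh t}{e^{t}}\bigr)^{\rho_0}\int_0^\infty u^{\rho_0-1}(1+u)^{-k}\,du$, and the inner integral is the Beta integral $B(\rho_0, k-\rho_0) = \Gamma(\rho_0)\Gamma(k-\rho_0)/\Gamma(k)$, which converges absolutely precisely when $\mathrm{Re}(k) > \rho_0$.

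Collecting the factors, the exponentials $e^{\rho_0 t}$ and $e^{-\rho_0 t}$ cancel, leaving the clean expression
\[
F_{f_k}(t) \;=\; \omega_{l-1}\cdot 2^{\rho_0-1}\, B(\rho_0, k-\rho_0)\,(\cosh t)^{\rho_0 - k}.
\]
Since $|(\cosh t)^{\rho_0-k}| = (\cosh t)^{\rho_0 - \mathrm{Re}(k)} \sim 2^{\mathrm{Re}(k)-\rho_0}\,e^{-(\mathrm{Re}(k)-\rho_0)|t|}$ as $|t|\to\infty$, the admissibility condition (\ref{crit}) becomes
\[
\sup_{t\in\mathbb R} e^{(\rho_0+\varepsilon)|t|}\,|F_{f_k}(t)| \;<\; \infty
\quad\Longleftrightarrow\quad
2\rho_0+\varepsilon-\mathrm{Re}(k) \leq 0,
\]
which can be arranged with some $\varepsilon > 0$ exactly when $\mathrm{Re}(k) > 2\rho_0$. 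At that point Remark \ref{rem: gang} yields admissibility, hence $\pi_R(f_k)$ is trace class by the first half of Proposition \ref{prop: traceclass}.

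The only genuinely delicate point is making sure that the convergence of the iterated integral is uniform enough to justify switching the order of the $s$- and $t$-computations and to guarantee that $F_{f_k}$ defined from the integral really is smooth in $t$; but since $\mathrm{Re}(k) > 2\rho_0 > \rho_0$ everything is absolutely convergent and the closed form above makes smoothness immediate. I do not expect any serious obstacle beyond the bookkeeping of the Beta-integral substitution.
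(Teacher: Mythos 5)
Your proof is correct and follows essentially the same route as the paper: both invoke the admissibility criterion of Remark \ref{rem: gang}, compute the Abel transform $F_{f_k}(t)$ from formula (\ref{eq: fkatn}), rescale to pull out the $t$-dependence as $(\cosh t)^{\rho_0-k}$, and read off the required exponential bound. The only cosmetic difference is that you pass through polar coordinates and identify the remaining integral as $B(\rho_0,k-\rho_0)$, while the paper performs a vector substitution $u\mapsto e^{-t/2}\sqrt{2}\cosh^{1/2}t\cdot u$ and leaves the constant as $\int_{\mathbb R^{l-1}}(1+|u|^2)^{-k}du$; the final closed form and the resulting bound $e^{\rho_0|t|}|F_{f_k}(t)|\sim e^{(2\rho_0-\mathrm{Re}(k))|t|}$ agree.
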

\begin{proof}
We make use of the criterion from \cite{Ga} recalled in Remark \ref{rem: gang}. Thus we have to compute the Abel transform $F_{f_k}$ and check (\ref{crit}).
\begin{eqnarray*}
F_{f_k}(t)&=& e^{\rho_0 t}\int_N f_k(a_tn)dn  \\&\overset{(\ref{eq: fkatn})}=&
e^{\rho_0 t}\int_{\mathbb R^{l-1}} \left(\cosh t +\frac{|u|^2}{2}e^t\right)^{-k}du\\&=&
e^{\rho_0 t}\cosh^{-k}t\int_{\mathbb R^{l-1}} \left(1+\frac{|u|^2e^t}{2\cosh t}\right)^{-k}du\\&=&
2^{\rho_0}\cosh^{-k+\rho_0}t\int_{\mathbb R^{l-1}}\left(1+|u|^2 \right)^{-k}du,
\end{eqnarray*}
where we applied for the last equation the coordinate transform $u\mapsto e^{-t/2}\sqrt{2}\cosh^{1/2} t\cdot u$. Thus $e^{\rho_0|t|}|F_{f_k}(t)|$ behaves as $e^{(2\rho_0-\mathrm{Re}(k))|t|}$ for large $|t|$.
\end{proof}


Let us then assume that the automorphic eigenfunction $\varphi$ on $X_\Gamma$ with eigenvalue $\mu=-\frac{1}{4}(\rho_0+\frac{r^2}{\rho_0})$, $r\in \mathfrak{a}^*_\mathbb{C}\cong \mathbb{C}$, is not trivial, so $\int_{\Gamma\backslash G}\varphi(x)dx=0$. Then the first summand in the formula for $\varphi\cdot\pi_R(f_k)$ in Theorem \ref{theo: generaltr} vanishes and the trace computes for $l\geq 3$ by (\ref{eq: fk1}) to 

\begin{eqnarray*}
\mathrm{Tr}(\varphi\cdot\pi_R(f_k))
&=& \sum_{1\neq [\gamma]\in C\Gamma}\int_Nf_k(n^{-1}a_\gamma m_\gamma n)\sum_{\pi\in\widehat M}d_\pi\left(\chi_\pi*I_\gamma(\varphi)\right)(n)dn
\\&=&\omega_{l-1}\sum_{1\neq[\gamma]\in\Gamma} \sum_{\pi\in\widehat M}d_\pi\int_M\int_{0}^\infty s^{l-2} f_k\left(\exp -sX_{e_1}m^{-1}m_\gamma m \exp sX_1\right) \\&&\cdot\left(\chi_\pi*I_\gamma(\varphi)\right)(m\cdot \exp sX_{e_1})dsdm,
\end{eqnarray*}

where we used \index{polar coordinates} polar coordinates to obtain the second equality. We recall that if we identify $N\cong \mathbb{R}^{l-1}$, $u\mapsto X_u$, see (\ref{eq: nidr}), then
\begin{eqnarray*}
\int_Nf(n)dn&=&\int_{\mathbb{R}^{l-1}}f(\exp X_u)du\\
&=& \int_{0}^\infty \int_{\partial B_s(0)}fdSds\\
&=& \omega_{l-1}\int_0^\infty s^{l-2}  f(\exp sX_{e_1})ds
\end{eqnarray*}
for radial functions $f$, i.e. $f(n)=\int_Mf(m\cdot n)dm$ for all $n\in N$. Here $B_1(0)=\{x\in\mathbb R^{l-1}:|x|\leq 1\}$ and $\omega_{l-1}:=|\partial B_1(0)|$ with respect to the Lebesgue measure in $\mathbb R^{l-2}$ for $l\geq 3$. For $l=2$ we set $\omega_1:=2$.

Now we plug in equation (\ref{eq: fk1}) to get 

\begin{eqnarray*}
&&\mathrm{Tr}\left(\varphi\cdot\pi_R(f_k)\right)\\
&=&\omega_{l-1}\sum_{1\neq[\gamma]\in\Gamma} \sum_{\pi\in\widehat M}d_\pi\int_M\int_{0}^\infty s^{l-2} f_k\left(\exp -sX_{e_1}m^{-1}m_\gamma m \exp sX_1\right) 
\\&&\cdot \left(\chi_\pi*I_\gamma(\varphi)\right)(m\cdot \exp sX_{e_1})dsdm
\\&\overset{(\ref{eq: fk1})}=&  \omega_{l-1}\sum_{1\ne[\gamma]\in\Gamma} \sum_{\pi\in\widehat M} \int_M\int_{0}^\infty s^{l-2}\left(-(m_\gamma^ {m^ {-1}})_{1,1}s^2+(1+s^2)\cosh L_\gamma\right)^{-k}\\&&\cdot \left(\chi_\pi*I_\gamma(\varphi)\right)(m\cdot \exp sX_{e_1})dsdm
\\&=& \omega_{l-1}\sum_{1\neq[\gamma]\in\Gamma} \sum_{\pi\in\widehat M} d_\pi \cosh^{-k} L_\gamma\int_M\int_{0}^\infty s^{l-2}  \left(s^2\left(\frac{\cosh L_\gamma-(m_\gamma^{m^{-1}})_{1,1}}{\cosh L_\gamma}\right)+1\right)^{-k} \\&&\cdot\left(\chi_\pi*I_\gamma(\varphi)\right)(m\cdot \exp sX_{e_1})dsdm
\\
&=&\omega_{l-1}\sum_{1\neq[\gamma]\in\Gamma} \cosh^{-k+ \frac{l-1}{2} }L_\gamma\sum_{\pi\in\widehat M}d_\pi 
\int_M \frac{1}{(\cosh L_\gamma-(m_\gamma^{m^{-1}})_{1,1})^{\frac{l-1}{2}}}
\\ 
&&\cdot \int_{0}^\infty s^{l-2}\left(s^2+1 \right)^{-k} \left(\chi_\pi*I_\gamma(\varphi)\right)\left(m\cdot \exp \sqrt{\frac{\cosh L_\gamma}{\cosh L_\gamma-(m_\gamma^{m^{-1}})_{1,1}}}sX_{e_1}\right)dsdm,
\end{eqnarray*}
where we applied the transformation $s\mapsto \sqrt{\frac{\cosh L_\gamma}{\cosh L_\gamma-(m_\gamma^{m^{-1}})_{1,1}}}s$ to obtain the last equality.

We define for the automorphic eigenfunction $1\neq \varphi$, $\gamma\neq 1$ and $\pi\in\widehat M$ \index{$c(\varphi,\gamma,\pi,k)$, coefficient in zeta function $\mathcal Z(\varphi)$}

\begin{eqnarray}\label{def: coeff}
c(\varphi,\gamma,\pi,k)&:=&\omega_{l-1}d_\pi\int_M \frac{1}{(\cosh L_\gamma-(m_\gamma^{m^{-1}})_{1,1})^{\frac{l-1}{2}}}
\int_{0}^\infty s^{l-2}\left(s^2+1 \right)^{-k}\nonumber\\&&\cdot \left(\chi_\pi*I_\gamma(\varphi)\right)\left(m\cdot \exp \sqrt{\frac{\cosh L_\gamma}{\cosh L_\gamma-(m_\gamma^{m^{-1}})_{1,1}}}sX_{e_1}\right)dsdm,
\end{eqnarray} $L_\gamma= \sqrt{2(l-1)}^{-1}l_\gamma>0$. Thus we can write for $\mathrm{Re}(k)>2\rho_0$ the trace of $\varphi\cdot \pi_R(f_k)$ for any nontrivial automorphic eigenfunction $\varphi$ as 
\begin{equation}\label{eq: tracezeta}
\mathrm{Tr}\left(\varphi\cdot \pi_R(f_k)\right)= \sum_{1\neq[\gamma]\in C\Gamma}\sum_{\pi\in\widehat M}c(\varphi,\gamma,\pi,k) (\cosh L_\gamma)^{-k+\rho_0}.
\end{equation}

For this special choice of $f_k$ we call this trace (\ref{eq: tracezeta}) the \index{auxiliary zeta function} \textit{auxiliary zeta function} for $k\in\mathbb C$, $\mathrm{Re}(k)>2\rho_0$,
\index{$\mathcal{R}(k;\varphi)$, $\mathcal{R}(\varphi)$, auxiliary zeta function} 
\begin{equation}\label{def: auxzet}
\mathcal{R}(k;\varphi):=\mathrm{Tr}(\varphi\cdot\pi_R(f_k))=\sum_{1\neq[\gamma]\in C\Gamma}\sum_{\pi\in\widehat M}c(\varphi,\gamma,\pi,k) (\cosh L_\gamma)^{-k+\rho_0}, 
\end{equation}
where $\rho_0=\rho(H_0)=\frac{l-1}{2}$.

We will often omit the argument of the (auxiliary) zeta function and write $\mathcal R(\varphi)$ (resp. $\mathcal Z(\varphi)$, see the next section) instead of $\mathcal R(k;\varphi)$ (resp. $\mathcal Z(k;\varphi)$).

\begin{remark}\label{rem: inival}
In (\ref{eq: normx1}) we have shown that $|X_{e_1}|^2=4(l-1)$, i.e. $ \frac{1}{2\sqrt{l-1}}X_{e_1}$ has unit length. On the slice \index{$S_m$, slice in $N$} $$S_m:= m\cdot \exp\mathbb{R}^+X_{e_1}=\exp \mathbb{R}^+m\cdot X_{e_1}=\exp \mathbb{R}^+m\cdot \frac{1}{2\sqrt{l-1}}X_{e_1}$$ the function $d_\pi\left(\chi_\pi*I_\gamma(\varphi)\right)$ satisfies for $l\geq 3$, see (\ref{eq: lghyp}),
\begin{eqnarray*} d_\pi \left(\chi_\pi*I_\gamma(\varphi)\right)\left(\exp s \left(m\cdot X_{e_1}\right)\right)&=& d_\pi \left(\chi_\pi*I_\gamma(\varphi)\right)\left(\exp  2\sqrt{l-1}s\left(\frac{m\cdot X_{e_1}}{2\sqrt{l-1}}\right)\right) 
\\&=&(-1)^p\left(4(l-1)\right)^{p}d_\pi \left(\left(\frac{m\cdot X_{e_1}}{2\sqrt{l-1}}\right)^{2p}\chi_\pi*I_\gamma(\varphi)\right)(e)\cdot s^{2p}\nonumber\\&&\cdot{}_2F_1\left(a+p,b+p,1+2\sqrt{ \frac{(1-\rho_0)^2}{4}-d };-s^2\right)
\\&=& (-1)^p d_\pi \left(\left(m\cdot X_{e_1}\right)^{2p}\chi_\pi*I_\gamma(\varphi)\right)(e)\cdot s^{2p}\nonumber\\&&\cdot{}_2F_1\left(a+p,b+p,1+2\sqrt{ \frac{(1-\rho_0)^2}{4}-d };-s^2\right)
\end{eqnarray*} for some constants $a,b,d,p$ depending on the eigenvalue $\mu$, $\pi$ and the slice $S_m$. For $l=2$, see (\ref{eq: l2hyp}) 
\begin{eqnarray*}
I_\gamma(\varphi)(\exp sX_{e_1})&=& I_\gamma(\varphi)\left(\exp 2s \frac{X_{e_1}}{2}\right)
\\&=& \left(\int_{c_{\gamma_0}}\varphi\right) \cdot {}_2F_1\left(a,b,\frac{1}{2} ;-s^2\right)\\&&+ 2i\left(\int_{c_{\gamma_0}}X_{1}\varphi \right) \cdot s \cdot {}_2F_1\left(a+\frac{1}{2},b+\frac{1}{2}, \frac{3}{2};-s^2 \right).
\end{eqnarray*}

\end{remark}
\begin{remark}
For the case of $G=SO_o(1,2)$ we note that $N=\{\exp uX_{e_1}:u\in\mathbb R\}$. It follows thus from (\ref{def: orbitalint}), as $M$ is trivial,
\begin{eqnarray*}
\mathcal{O}_\gamma(f_k)
&=& \int_N f_k(n^{-1} a_\gamma  n)I_\gamma(\varphi_j)(n)dn\\
&=& \int_{-\infty}^\infty f_k(\exp (-uX_{e_1}) a_\gamma \exp (uX_{e_1}))I_\gamma(\varphi_j)(\exp uX)du\\
&=& \left(\int_{c_{\gamma_0}}\varphi_j\right) \int_{-\infty}^\infty f_k(\exp (-uX_{e_1})a_\gamma \exp (uX_{e_1})) {}_2F_1\left(a,b,\frac{1}{2} ;-u^2\right)du
\\&&+ 2i\left(\int_{c_{\gamma_0}}X_{e_1}\varphi_j \right) \int_{-\infty}^\infty f_k(\exp -uX a_\gamma \exp uX) \cdot u \cdot {}_2F_1\left(a+\frac{1}{2},b+\frac{1}{2}, \frac{3}{2};-u^2 \right) du.
\end{eqnarray*}

The restriction $f|_N$ for any $f\in C(G//K)$ is even, see Corollary \ref{cor1}. Hence
\begin{equation*}
\int_{-\infty}^\infty uf(\exp (-uX_{e_1}) a_\gamma \exp (uX_{e_1}))  \cdot {}_2F_1\left(a+\frac{1}{2},b+\frac{1}{2}, \frac{3}{2};-\frac{u^2}{4} \right)du=0
\end{equation*}
and 
\begin{equation*}
\mathcal{O}_\gamma(f_k)=\left(\int_{c_{\gamma_0}}\varphi_j\right) \int_{-\infty}^\infty (u^2+1)^{-k} {}_2F_1\left(a,b,\frac{1}{2} ;-u^2 \sqrt{\frac{\cosh L_\gamma}{\cosh L_\gamma-1}} \right)du.
\end{equation*}
\end{remark}

We will now come to the theorem which shows that the auxiliary zeta function is defined at least for all $k\in\mathbb C$ with $\mathrm{Re}(k)>2\rho_0$.
\begin{theorem}\label{th: conauxzeta} 
Let $X=G/K$ be a real hyperbolic space and $\Gamma\subset G$ a uniform lattice. Let $\varphi$ be a non trivial automorphic eigenfunction and $k\in\mathbb C$. The auxiliary zeta function $\mathcal{R}(k;\varphi)$ converges for $ \mathrm{Re}(k)>2\rho_0 $.
\end{theorem}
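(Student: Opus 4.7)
The plan is to derive everything from Proposition \ref{th: traceclass} together with Theorems \ref{theo: generaltr} and \ref{th: gstf}, treating the stated series (\ref{eq: tracezeta}) as the absolutely convergent re-writing of a trace that already exists. First I would note that for $\mathrm{Re}(k)>2\rho_0$, Proposition \ref{th: traceclass} shows that $f_k$ is admissible, and since $|f_k|=f_{\mathrm{Re}(k)}$ (because $1-|g\cdot 0|^2\in(0,1]$), the majorant $|f_k|$ is also admissible in the sense of Remark \ref{rem: gang}. By Proposition \ref{prop: traceclass}, the operator $\varphi\cdot\pi_R(f_k)$ is then trace class on $L^2(X_\Gamma)$ with continuous kernel $K_\varphi(x,g)=\varphi(x)\sum_{\gamma\in\Gamma}f_k(x^{-1}\gamma g)$, so its trace exists as the integral of the diagonal.

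Next I would establish the absolute convergence needed for the conjugacy-class rearrangement. Because $\varphi$ is bounded on the compact quotient $X_\Gamma$ and $\sum_{\gamma}|f_k(x^{-1}\gamma x)|=\sum_\gamma f_{\mathrm{Re}(k)}(x^{-1}\gamma x)$ converges to a continuous, hence bounded, function on $F_\Gamma$, Fubini justifies all the permutations of sums and integrals performed in Section \ref{sec: traceform}. This is precisely the derivation that produced Theorem \ref{theo: generaltr}, giving
\[
\mathrm{Tr}(\varphi\cdot\pi_R(f_k))=f_k(e)\int_{F_\Gamma}\varphi(x)\,dx+\sum_{1\neq[\gamma]\in C\Gamma}\mathcal O_\gamma(f_k),
\]
with absolute convergence of the right-hand side. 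Since $\varphi$ is a non-trivial Laplace eigenfunction on the compact manifold $X_\Gamma$, its eigenvalue is nonzero and $\int_{F_\Gamma}\varphi=0$, so the identity contribution drops.

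Then I would insert the $M$-spectral decomposition of the weight $I_\gamma(\varphi)$ furnished by Theorem \ref{th: gstf}, namely $I_\gamma(\varphi)=\sum_{\pi\in\widehat M}d_\pi(\chi_\pi\ast I_\gamma(\varphi))$, which converges absolutely in every continuous seminorm on $C^\infty(X)$. For each fixed $[\gamma]$ this interchange produces, after polar coordinates on $N$, the use of (\ref{eq: fk1}) to evaluate $f_k$ on $N$-conjugates of $a_\gamma m_\gamma$, and the change of variable $s\mapsto s\sqrt{\cosh L_\gamma/(\cosh L_\gamma-(m_\gamma^{m^{-1}})_{1,1})}$, exactly the coefficient $c(\varphi,\gamma,\pi,k)$ of (\ref{def: coeff}) multiplied by $(\cosh L_\gamma)^{-k+\rho_0}$. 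This is the same computation already carried out just before (\ref{def: coeff}); it is algebraic once one knows the integrals in it are finite.

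Finally, absolute convergence of the full double series $\sum_{[\gamma]}\sum_\pi$ reduces to the bound $|I_\gamma(\varphi)|\leq\|\varphi\|_\infty\cdot L_{\gamma_0}$ inside each $\mathcal O_\gamma(f_k)$, after which
\[
\sum_{[\gamma]}\sum_{\pi}|c(\varphi,\gamma,\pi,k)|(\cosh L_\gamma)^{-\mathrm{Re}(k)+\rho_0}\leq \|\varphi\|_\infty\sum_{1\neq[\gamma]}L_{\gamma_0}\int_N f_{\mathrm{Re}(k)}(n^{-1}a_\gamma m_\gamma n)\,dn,
\]
and the right-hand side is finite because it is precisely the geometric side of the trace formula applied to the admissible, bounded function $L_{\gamma_0}\cdot\mathbf{1}$ (dominated by the trace computation for $f_{\mathrm{Re}(k)}$). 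The \textbf{main obstacle} is simply the bookkeeping: once the admissibility of $|f_k|$ is in hand, every interchange is a straightforward application of Fubini, and the series (\ref{def: auxzet}) converges absolutely to $\mathrm{Tr}(\varphi\cdot\pi_R(f_k))$ for all $k$ with $\mathrm{Re}(k)>2\rho_0$.
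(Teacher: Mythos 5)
Your proposal takes the same route as the paper, resting on Proposition \ref{th: traceclass}: $f_k$ is admissible for $\mathrm{Re}(k)>2\rho_0$, so $\pi_R(f_k)$ is trace class, and since $\varphi$ is bounded so is $\varphi\cdot\pi_R(f_k)$, hence its trace exists. The paper's own proof is essentially just those two sentences. Your version fills in the Fubini bookkeeping behind the equality of the trace with the series in (\ref{def: auxzet}), which the paper leaves tacit; that is a reasonable and useful expansion, and the observation that $|f_k|=f_{\mathrm{Re}(k)}$ gives a clean admissible majorant is a nice way to package the dominance argument.

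One small imprecision worth flagging: you bound $\sum_{[\gamma]}\sum_{\pi}|c(\varphi,\gamma,\pi,k)|(\cosh L_\gamma)^{-\mathrm{Re}(k)+\rho_0}$ directly by $\|\varphi\|_\infty\sum_{[\gamma]}L_{\gamma_0}\int_N f_{\mathrm{Re}(k)}(n^{-1}a_\gamma m_\gamma n)\,dn$, but the orbital-integral estimate $|I_\gamma(\varphi)|\le\|\varphi\|_\infty L_{\gamma_0}$ controls only $\left|\sum_\pi c(\varphi,\gamma,\pi,k)\right|$, not $\sum_\pi|c(\varphi,\gamma,\pi,k)|$, because the Helgason decomposition $I_\gamma(\varphi)=\sum_\pi d_\pi\chi_\pi*I_\gamma(\varphi)$ does not majorize its summands by its sum pointwise. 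The paper's Lemma \ref{auxi} also only bounds $\bigl|\sum_\pi c\bigr|$. That weaker statement is all you need for convergence of $\mathcal R(k;\varphi)$ (the inner sum over $\pi$ is the already-evaluated orbital integral, and only the outer sum over $[\gamma]$ requires an absolute estimate), so the proof goes through; but as written the displayed inequality overclaims. If you do want absolute convergence of the full double series, you would have to invoke the seminorm-absolute convergence in Theorem \ref{th helgason} rather than the sup-norm majorant.
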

\begin{proof}
From Proposition \ref{th: traceclass} it follows that $\varphi\cdot \pi_R(f_k)$ is of trace class for $\mathrm{Re}(k)>2\rho_0$, because if $\pi_R(f_k)$ of trace class then also $\varphi\cdot \pi_R(f)$ as $\varphi$ is bounded. But then also $\mathrm{Tr}\left(\varphi\cdot\pi_R(f_k)\right)$ converges for $\mathrm{Re}(k)>2\rho_0$.
\section{From $\mathcal{R}(\varphi)$ to $\mathcal{Z}(\varphi)$}\label{chap: gtrace sec: zeta}
From the auxiliary zeta function $ \mathcal{R}(\varphi)$ we can also derive the \index{zeta function} zeta function $ \mathcal{Z}(\varphi)$ via the following lemma, which can be found in \cite[Lemma 9.3]{AZ}.

\begin{lemma}\label{binom}
Let $k\in \mathbb{C}$ and $y\in (1,\infty)$ then there exist coefficients \index{$beta$@$\beta(k;m)$, coefficient} $\beta(k;m)$, which tend to zero for $m\to \infty$,  such that 
\begin{equation*}
\left(1-\sqrt{1- \frac{1}{y} }\right)^k=y^{-k}\sum_{m=0}^\infty \beta(k;m)y^{-m}.
\end{equation*}
Further, $\beta(k;0)=2^{-k}$ and $k\mapsto \beta(k;m)$ is holomorphic for any $m\in\mathbb N_0$.
\end{lemma}

\begin{proof}
Let $y\in(1,\infty)$. For $k\in \mathbb{C}$ we find by the binomial theorem

\begin{eqnarray*}
\left(1-\sqrt{1-y^{-1}}\right)^k&=& \left(1-\sum_{m=0}^\infty \binom{\frac{1}{2} }{m}(-1)^my^{-m}\right)^k\\
&=& \left(-\sum_{m=1}^\infty \binom{ \frac{1}{2} }{m}(-1)^my^{-m}\right)^k\\
&=& \left(\sum_{m=1}^\infty \binom{ \frac{1}{2} }{m}(-1)^{m-1}y^{-m}\right)^k\\
&=& \left(y^{-1}\sum_{m=1}^\infty \binom{ \frac{1}{2}}{m}(-1)^{m-1}y^{-m+1}\right)^k\\
&=& y^{-k}\left(\sum_{m=0}^\infty \binom{\frac{1}{2} }{m+1}(-1)^my^{-m}\right)^k.
\end{eqnarray*}

Now we assume temporarily $k\in\mathbb N_0$. Then by induction and Cauchy's product formula we find coefficients $\beta(k;m)$ such that
\begin{equation}\label{eq: bkntemp}
\left(\sum_{m=0}^\infty \binom{ \frac{1}{2} }{m+1}(-1)^my^{-m}\right)^k=\sum_{m=0}^\infty \beta(k;m)y^{-m},
\end{equation}
where \[\beta(k;m)=(-1)^m\sum_{r_1+\ldots+r_k=m,r_i\in\mathbb N_0}\binom{ \frac{1}{2} }{r_1+1}\cdots \binom{ \frac{1}{2} }{r_k+1}.\]

In particular, $\beta(k;0)=2^{-k}$. Let now $z\in (0,\infty)$ and $k\in \mathbb{C}$ arbitrary. Once again the binomial theorem gives 
\begin{eqnarray}\label{eq: binotemp}z^k&=&\left(1+(z-1)\right)^k\nonumber\\
&=& \sum_{l=0}^\infty \binom{ k }{l}(z-1)^l\nonumber\\
&=& \sum_{l=0}^\infty \sum_{m=0}^l\binom{ k }{l} \binom{l}{m}(-1)^{l-m}z^m.
\end{eqnarray}

So let $k\in\mathbb C$ again be arbitrary. We take for $z$ in (\ref{eq: binotemp}) the series $\sum_{m=0}^\infty \binom{\frac{1}{2} }{nm+1}(-1)^my^{-m}(=y-\sqrt{y^2-y})$. Then by (\ref{eq: bkntemp})  
\begin{eqnarray*}
\left(\sum_{m=0}^\infty \binom{\frac{1}{2} }{m+1}(-1)^my^{-m}\right)^k&=& \sum_{l=0}^\infty \sum_{s=0}^l \binom{ k}{l}\binom{l}{s}(-1)^{l-s}\left(\sum_{m=0}^\infty \binom{\frac{1}{2} }{m+1}(-1)^my^{-m}\right)^s\\ 
&\overset{(\ref{eq: bkntemp})}=& \sum_{m=0}^\infty\left(\sum_{l=0}^\infty \sum_{s=0}^l \binom{k}{l}\binom{l}{s}\beta(s;m)(-1)^{l-s}\right)y^{-m}\\
&=:& \sum_{m=0}^\infty \beta(k;m)y^{-m}.
\end{eqnarray*} 

In particular, 
\begin{eqnarray*}
\beta(k;0)&=& \sum_{l=0}^\infty \sum_{s=0}^\infty \binom{k}{l}\binom{l}{s}\beta(s;0)(-1)^{l-s}\\
&=& \sum_{l=0}^\infty (-1)^l\sum_{s=0}^l \binom{l}{s}\left(\frac{1}{2}\right)^s(-1)^{-s}\\
&=& \sum_{l=0}^\infty \binom{k}{l}(-1)^l\sum_{s=0}^l\binom{l}{s}\left(\frac{1}{2}\right)^s(-1)^s\\
&=& \sum_{l=0}^\infty \binom{k}{l}(-1)^l\left(1-\frac{1}{2}\right)^l\\
&=& \sum_{l=0}^\infty \binom{k}{l}(-1)^l\left(\frac{1}{2}\right)^l=\left(\frac{1}{2}\right)^k.
\end{eqnarray*}

Finally, $k\mapsto \beta(k;m)$ is holomorphic as $k\mapsto \binom{k}{l}=\frac{k(k-1)\ldots(k-l+1)}{l!}$ is holomorphic for $l\in \mathbb{N}_0$.
\end{proof}

Now we come to the definition of the \textit{zeta function} 
\begin{equation}\label{def: zetfunc}\index{$\mathcal{Z}(k;\varphi)$, $\mathcal{Z}(\varphi)$, zeta function}
 \mathcal{Z}(k;\varphi):=\sum_{1\neq[\gamma]\in C\Gamma}\sum_{\pi\in\widehat M}c(\varphi,\gamma,\pi,k) e^{ -(k-\rho_0)L_\gamma}.
\end{equation}
\begin{proposition}\label{prop: zetf}
The zeta function $\mathcal{Z}(k;\varphi)$ converges for $\mathrm{Re}(k)>2\rho_0$.
\end{proposition}
\begin{proof}
This will follow by the next lemma.

\begin{lemma}\label{auxi}
Let $k\in\mathbb{C}$ with $\mathrm{Re}(k)>2\rho_0$. For any $\gamma\neq 1$ there exists a constant \index{$C(\varphi)$, constant depending on $\varphi$} $C(\varphi)>0$ only depending on $\varphi$ such that 
\begin{equation*}
\left|\sum_{\pi\in\widehat{M}}c(\varphi,\gamma,\pi,k)\right|\leq C(\varphi)\cdot L_\gamma e^{-\rho_0L_\gamma}.
\end{equation*}  
Here $L_\gamma> 0$ is the length of the closed geodesic $[\gamma]\neq 1$.
\end{lemma}
We start with the orbital integral $\mathcal{O}_\gamma(f_k)$ for $\gamma\neq 1$ which equals by (\ref{def: orbitalint}) 
\begin{eqnarray*}\label{eq: og}
\mathcal{O}_\gamma(f_k)&=& \int_Nf_k(n^{-1}a_\gamma m_\gamma n)I_\gamma(\varphi)(n)dn\nonumber\\&=&
\int_{\mathbb R^{l-1}}\left(-(m_\gamma^{m^{-1}})_{1,1}|u|^2+(1+|u|^2)\cosh L_\gamma \right)^{-k} I_\gamma(\varphi)(\exp X_u)du\\&& \mbox{ , where  } (m_\gamma)_{1,1} \mbox{ as in } (\ref{def: cm})\nonumber\\&=&
\cosh^{-k} L_\gamma \int_{ \mathbb{R}^{l-1} }\left( -(m_\gamma^{m^{-1}})_{1,1} \frac{|u|^2}{\cosh L_\gamma}(1+|u|^2)\right)^{-k}I_\gamma(\varphi)(\exp X_u)du\nonumber\\&=&
\cosh^{-k}L_\gamma\int_{ \mathbb{R}^{l-1} }\left( |u|^2\left(1- \frac{(m_\gamma^{m^{-1}})_{1,1}}{\cosh L_\gamma}\right)+1 \right)^{-k}I_\gamma(\varphi)(\exp X_u)du\nonumber\\&=&
\cosh^{-k}L_\gamma\left(1-\frac{(m_\gamma^{m^{-1}})_{1,1}}{\cosh L_\gamma}\right)^{-\rho_0}\int_{\mathbb{R}^{l-1}}(|u|^2+1)^{-k}I_\gamma(\varphi)(\exp X_{T(u)})du\nonumber \\&=&
\cosh^{\rho_0-k}L_\gamma (\cosh L_\gamma-(m_\gamma^{m^{-1}})_{1,1})^{-\rho_0} \int_{\mathbb{R}^{l-1}}(|u|^2+1)^{-k}I_\gamma(\varphi)(\exp X_{T(u)})du\\&=:& (*),
\end{eqnarray*}
where $T(u)=((1-(m_\gamma^{m^{-1}})_{1,1}\cosh^{-1}L_\gamma)^{-1/2}u_1,\ldots,(1-(m_\gamma^{m^{-1}})_{1,1}\cosh^{-1}L_\gamma)^{-1/2}u_{l-1})^T$. By comparing the right side of the definition of $\mathcal{R}(k;\varphi)$, (\ref{def: auxzet}), with the last equation $(*)$ for $\mathcal{O}_\gamma(f_k)=\cosh^{-k+\rho_0}L_\gamma \sum_{\pi\in\widehat M}c(\varphi,\gamma,\pi,k)$ 
we see that
\begin{equation*}
\sum_{\pi\in\widehat M}c(\varphi,\gamma,\pi,k)=\cosh^{k-\rho_0} L_\gamma\cdot \mathcal{O}_\gamma(f_k)\overset{(*)}=(\cosh L_\gamma-(m_\gamma^{m^{-1}})_{1,1})^{-\rho_0} \int_{\mathbb{R}^{l-1}}(|u|^2+1)^{-k}I_\gamma(\varphi)(\exp X_{T(u)})du.
\end{equation*}

Now $|(m_\gamma^{m^{-1}})_{1,1}|\leq 1$ and $\varphi$ is bounded as a continuous function on compact $\Gamma \backslash X$, so 
\begin{eqnarray*}\left| I_\gamma(\varphi)(\exp X_{T(u)})\right|&\leq& \int_{A/<a_{\gamma_0}>}|\varphi(\alpha_\gamma^{-1}a\exp X_{T(u)})|da\\&\leq& C \int_{A/<a_{\gamma_0}>}= CL_\gamma  
\end{eqnarray*}
for $C:=\max_{x\in X_\Gamma}\{|\varphi(x)|\}$. For $k\in \mathbb{C}$ with $\mathrm{Re}(k)>2\rho_0$ let
\begin{equation}\label{eq: conpk}C(\varphi;k):=C\int_{\mathbb{R}^{l-1}}(|u|^2+1)^{-\mathrm{Re}(k)}du\geq 0.
\end{equation} 

The integral \[\int_{\mathbb{R}^{l-1}}(|u|^2+1)^{-\mathrm{Re}(k)}du\] surely converges for all $k\in\mathbb C$ with $\mathrm{Re}(k)>2\rho_0$ and \[\int_{\mathbb{R}^{l-1}}(|u|^2+1)^{-\mathrm{Re}(k)}\leq \int_{\mathbb{R}^{l-1}}(|u|^2+1)^{-2\rho_0}du=\frac{\omega_{l-1}}{2}B(\rho_0,\rho_0)\] for all $k\in\mathbb C$ with $\mathrm{Re}(k)>2\rho_0$. Hence, 
\begin{eqnarray*}
\left|\sum_{\pi\in\widehat{M}}c(\varphi,\gamma,\pi,k)\right|&=& (\cosh L_\gamma-(m_\gamma^{m^{-1}})_{1,1})^{-\rho_0}\left|\int_{\mathbb{R}^{l-1}}(|u|^2+1)^{-k}I_\gamma(\varphi)(\exp X_{T(u)})du  \right|\\ & \leq & 
e^{-\rho_0L_\gamma} \int_{\mathbb{R}^{l-1}}(|u|^2+1)^{-\mathrm{Re}(k)}|I_\gamma(\varphi)(\exp X_{T(u)})|du\\
&\leq & C(\varphi)\cdot L_\gamma e^{-\rho_0L_\gamma} 
\end{eqnarray*}
with $C(\varphi):=C(\varphi;2\rho_0)$, see (\ref{eq: conpk}).
\end{proof}

We continue with the proof of Proposition \ref{prop: zetf}. By Lemma \ref{auxi} we know that for $\mathrm{Re}(k)>2\rho_0$ \[\left|\mathcal Z(k;\varphi)\right|\leq C(\varphi)\sum_{1\neq [\gamma]\in C\Gamma}L_\gamma e^{-\rho_0L_\gamma}. \]

But for any $1\neq \gamma\in C\Gamma$ we can find some natural number $n$ such that $n-1\leq L_\gamma\leq n$. Hence $L_\gamma e^{-\rho_0L_\gamma}\leq n e^{-\rho_0(n-1)}=e^{\rho_0} ne^{-\rho_0n}$. It follows that $\sum_{1\neq [\gamma]\in C\Gamma}L_\gamma e^{-\rho_0L_\gamma}$ is convergent as it is bounded from above by \[e^{\rho_0}\sum_{n=1}^\infty ne^{-\rho_0 n}\] which is convergent as $\rho_0>0$.

\end{proof}

\begin{remark}
We follow Anantharaman and Zelditch in calling $\mathcal R(\sigma)$ and $\mathcal Z(\sigma)$ zeta functions, instead of logarithmic derivatives of zeta functions, which would be more correct in view of the classical Selberg Zeta function $$Z_S(k):=\prod_{[\gamma]}\prod_{s=0}^\infty\left(1-e^{-(s+k)L_\gamma}\right).$$ 

Normally one would pass from $ \frac{d}{dk}\ln Z_S $ to $Z_S$ by showing that $\frac{d}{dk}\ln Z_S$ has simple poles with integer residue. As we will see in Chapter \ref{chap: meromorph}, $\mathcal Z(\sigma)$ still has simple poles in some cases, but we can not guarantee that its residues are integers.  
\end{remark}

\begin{theorem}\label{th: zeta}
Let $X=G/K$ be a real hyperbolic space and $\Gamma\subset G$ a uniform lattice. Let $\varphi$ be a non trivial automorphic eigenfunction and $k\in\mathbb C$. We have \[\mathcal{Z}(k;\varphi)= \sum_{m=0}^\infty \beta(k-\rho_0;m)\mathcal{R}(k+2m;\varphi)\] for $k\in \mathbb{C}$ with $ \mathrm{Re}(k)>2\rho_0 $, where the coefficients $\beta(k-\rho_0;m)$ are determined by Lemma \ref{binom}. 
\end{theorem}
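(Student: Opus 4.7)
The plan is to expand the exponential factor $e^{-(k-\rho_0)L_\gamma}$ into a convergent series of powers of $\cosh L_\gamma$ via Lemma \ref{binom}, substitute this expansion into the definition (\ref{def: zetfunc}) of $\mathcal Z(k;\varphi)$, and then interchange the two summations so as to recognize the inner sum as $\mathcal R(k+2m;\varphi)$.

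The first step is the elementary identity $e^{-L_\gamma}=\cosh L_\gamma-\sinh L_\gamma=\cosh L_\gamma\bigl(1-\sqrt{1-\cosh^{-2}L_\gamma}\bigr)$, valid because $\cosh L_\gamma>1$ (by Proposition \ref{prop: inflspec} we in fact have $L_\gamma\geq L_{\mathrm{inf}}>0$). Setting $y=\cosh^2 L_\gamma\in(1,\infty)$ and applying Lemma \ref{binom} with exponent $k-\rho_0$ gives
\[
e^{-(k-\rho_0)L_\gamma}=\cosh^{k-\rho_0} L_\gamma\bigl(1-\sqrt{1-1/y}\bigr)^{k-\rho_0}=\sum_{m=0}^\infty \beta(k-\rho_0;m)\,(\cosh L_\gamma)^{-(k+2m)+\rho_0}.
\]
Feeding this into $\mathcal Z(k;\varphi)=\sum_{1\neq[\gamma]}\sum_\pi c(\varphi,\gamma,\pi,k)\,e^{-(k-\rho_0)L_\gamma}$ and interchanging the $m$-sum with the geodesic/$\pi$-sums yields
\[
\mathcal Z(k;\varphi)=\sum_{m=0}^\infty \beta(k-\rho_0;m)\sum_{1\neq[\gamma]\in C\Gamma}\sum_{\pi\in\widehat M} c(\varphi,\gamma,\pi,k)\,(\cosh L_\gamma)^{-(k+2m)+\rho_0},
\]
which, once the inner double sum is identified with $\mathcal R(k+2m;\varphi)$ (as in the definition (\ref{def: auxzet})), is the desired identity.

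The main and really only obstacle is to justify the interchange of the two summations, which requires absolute summability of the full triple series. For $\mathrm{Re}(k)>2\rho_0$ the shifted parameter $k+2m$ stays in the same half-plane for every $m\geq 0$, so Lemma \ref{auxi} supplies the uniform estimate $\bigl|\sum_\pi c(\varphi,\gamma,\pi,k)\bigr|\leq C(\varphi)\,L_\gamma e^{-\rho_0 L_\gamma}$, while the inequality $\cosh L_\gamma\geq \tfrac{1}{2}e^{L_\gamma}$ yields $(\cosh L_\gamma)^{-(k+2m)+\rho_0}\leq 2^{\mathrm{Re}(k)+2m-\rho_0}e^{-(\mathrm{Re}(k)+2m-\rho_0)L_\gamma}$. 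Multiplying these estimates, the geodesic sum is dominated by $\sum_{[\gamma]}L_\gamma e^{-(\mathrm{Re}(k)+2m)L_\gamma}$, which converges uniformly in $m$ to a quantity exponentially small in $m$ (using $L_\gamma\geq L_{\mathrm{inf}}$) by the same majorization argument used in the proof of Proposition \ref{prop: zetf}. Combined with the fact from Lemma \ref{binom} that $\beta(k-\rho_0;m)\to 0$, this produces a summable majorant, and Fubini's theorem permits the reordering. Once the interchange is justified, recognizing the inner double sum as $\mathcal R(k+2m;\varphi)$ is immediate from its definition and completes the proof.
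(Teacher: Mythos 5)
Your overall strategy coincides with the paper's: expand $e^{-(k-\rho_0)L_\gamma}$ into a series of powers of $(\cosh L_\gamma)^{-1}$ via Lemma~\ref{binom}, substitute into the definition of $\mathcal Z(k;\varphi)$, and interchange sums to recognize $\mathcal R(k+2m;\varphi)$. Your derivation of the key identity $e^{-(k-\rho_0)t}=\sum_{m\geq 0}\beta(k-\rho_0;m)(\cosh t)^{-(k+2m-\rho_0)}$ through $e^{-t}=\cosh t\,(1-\tanh t)$ is arguably more direct than the paper's route through $e^{-t}\cosh t=\tfrac12(1+e^{-2t})$, and — unlike the paper — you do try to justify the interchange.

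However, the justification as written has a gap. You bound $(\cosh L_\gamma)^{-(k+2m)+\rho_0}$ by $2^{\mathrm{Re}(k)+2m-\rho_0}\,e^{-(\mathrm{Re}(k)+2m-\rho_0)L_\gamma}$ using $\cosh t\geq\tfrac12 e^t$, and then say the geodesic sum is "dominated by $\sum_\gamma L_\gamma e^{-(\mathrm{Re}(k)+2m)L_\gamma}$, which is exponentially small in $m$." But you have quietly dropped the $m$-dependent constant $2^{2m}$; restoring it, the $m$-dependent factor of your majorant is of order $\bigl(4\,e^{-2L_{\mathrm{inf}}}\bigr)^m$. If $L_{\mathrm{inf}}\leq\ln 2$ this does not decay, and Lemma~\ref{binom} only provides $\beta(k-\rho_0;m)\to 0$, which is not enough to rescue summability in that case. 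The fix is elementary: bound $(\cosh L_\gamma)^{-2m}\leq(\cosh L_{\mathrm{inf}})^{-2m}$ directly (valid since $L_\gamma\geq L_{\mathrm{inf}}>0$, hence $\cosh L_\gamma\geq\cosh L_{\mathrm{inf}}>1$), so that after pulling out $(\cosh L_{\mathrm{inf}})^{-2m}$ the remaining $\gamma$-sum is precisely the one controlled in Proposition~\ref{prop: zetf}, and the $m$-series is dominated by a geometric series with ratio $(\cosh L_{\mathrm{inf}})^{-2}<1$ times the bounded sequence $|\beta(k-\rho_0;m)|$. With that replacement your Fubini argument is correct.
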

\begin{proof} A slight modification of the proof of \cite[Lemma 9.3.]{AZ} works also here. 
We have $e^{-t}\cdot \cosh t=\frac{1}{2}(1+e^{-2t})$ for $t\in \mathbb{R}$. Hence,
\begin{eqnarray*}
(e^{-t})^{k-\rho_0}&=& \left(\frac{1}{2}\right)^{k-\rho_0}(\cosh t)^{-(k-\rho_0)} \left(1+e^{-2t}\right)^{k-\rho_0}\\ &=&
\left(\frac{1}{2}\right)^{k-\rho_0}(\cosh t)^{-(k-\rho_0)} \left(1+(\cosh t-\sinh t)^2\right)^{k-\rho_0}
\\ &=& \left(\frac{1}{2}\right)^{k-\rho_0}(\cosh t)^{-(k-\rho_0)} \left(1+\left(y-\sqrt{y^2-1}\right)^2\right)^{k-\rho_0},
\end{eqnarray*}
where $y=\cosh t$ and using $\cosh^2t-\sinh^2t=1$. We continue with
\begin{eqnarray*}
(e^{-t})^{k-\rho_0}&=&\left(\frac{1}{2}\right)^{k-\rho_0}(\cosh t)^{-(k-\rho_0)} \left(1+(y^2-2y\sqrt{y^2-1}+y^2-1) \right)^{k-\rho_0}\\&=& 
\left(\frac{1}{2}\right)^{k-\rho_0}(\cosh t)^{-(k-\rho_0)} \left(2y^2-2y^2\sqrt{1-y^{-2}}\right)^{k-\rho_0}\\
&=&\left(\frac{1}{2}\right)^{k-\rho_0}(\cosh t)^{-(k-\rho_0)} (2y^2)^{k-\rho_0}\left(1-\sqrt{1-y^{-2}}\right)^{k-\rho_0}\\
&=&(\cosh t)^{-(k-\rho_0)} (y^2)^{k-\rho_0}\left(1-\sqrt{1-y^{-2}}\right)^{k-\rho_0}.
\end{eqnarray*}

We assume now $t\neq 0$, so that $\cosh t>1$. By the preceding Lemma \ref{binom} there are coefficients $\beta(k-\rho_0;m)$ such that
\[\left(1-\sqrt{1-y^{-2}}\right)^{k-\rho_0}=(y^2)^{-(k-\rho_0)}\sum_{m=0}^\infty \beta(k-\rho_0;m)(y^2)^{-m}\]

Hence,
\begin{eqnarray*}
(\cosh t)^{-(k-\rho_0)}(y^2)^{k-\rho_0}(y^2)^{-(k-\rho_0)}\sum_{m=0}^\infty \beta(k-\rho_0;m)(y^2)^{-m}&=&\sum_{m=0}^\infty \beta(k-\rho_0;m)(\cosh t)^{-(k+2m-\rho_0)}.
\end{eqnarray*}

Thus, we have shown for any $t\neq 0$
\[(e^{-t})^{k-\rho_0}=\sum_{m=0}^\infty \beta(k-\rho_0;m)(\cosh t)^{-(k+2m-\rho_0)}\]
and it follows
\begin{eqnarray*}
\sum_{1\neq \gamma}\sum_{\pi\in\widehat{M}}c(\varphi_n,\gamma,\pi,k)e^{-(k-\rho_0)L_\gamma}&=&\sum_{1\neq \gamma}\sum_{\pi\in\widehat{M}}c(\varphi_n,\gamma,\pi,k)\sum_{m=0}^\infty \beta(k-\rho_0;m)(\cosh L_\gamma)^{-(k+2m-\rho_0)}\\
&=& \sum_{m=0}^\infty \beta(k-\rho_0;m)\sum_{1\neq \gamma}\sum_{\pi\in\widehat{M}}c(\varphi_n,\gamma,\pi,k)(\cosh L_\gamma)^{-(k+2m-\rho_0)},
\end{eqnarray*}
where $L_\gamma=\sqrt{2(l-1)}^{-1}l_\gamma>0$ for $\gamma\neq 1$.
\end{proof}


\section{Simplifications - the zeta function for $SO_o(1,2)$ and $SO_o(1,3)$}\label{sec: trivial}

In this section we want to explain how the zeta function \[\mathcal{R}(k;\varphi)=\sum_{1\neq[\gamma]\in C\Gamma}\sum_{\pi\in\widehat M}c(\varphi,\gamma,\pi,k) (\cosh L_\gamma)^{-k+\rho_0}\] resp. $c(\varphi,\gamma,\pi,k)$ simplify if $\Gamma$ satisfies a special property. Namely, we have seen before that every $\gamma\in\Gamma$ is conjugated in $G$ to some $a_\gamma m_\gamma\in AM$. The assumption we make is that $m_\gamma$ is always central in $M$. 

This assumption is satisfied for any uniform lattice, if $G=SO_o(1,2)$ or $SO_o(1,3)$. In this case $M$ is trivial resp. isomorphic to $SO(2)$, i.e. abelian. Unfortunately, we do not know of any examples in $SO_o(1,l)$, for $l>3$. The case $SO_o(1,2)$ has also been considered in \cite{AZ}.

Let $ \mathbf{1}$ denote the trivial representation of $M$ on $ \mathbb{C}$ and $\varphi\neq 1$ a non trivial automorphic eigenfunction with eigenvalue $\mu=-\frac{1}{4}(\rho_0+\frac{r^2}{\rho_0})$. Further, let us assume $m_\gamma^{m^{-1}}=m_\gamma$ for all $m\in M$  and $\gamma\in\Gamma$. We now look at the definition (\ref{def: coeff}) of $c(\varphi,\gamma,\pi,k)$
\begin{eqnarray*}
c(\varphi,\gamma,\pi,k)&=&\omega_{l-1}d_\pi\int_M \frac{1}{(\cosh L_\gamma-(m_\gamma^{m^{-1}})_{1,1})^{\frac{l-1}{2}}}
\int_{0}^\infty s^{l-2}\left(s^2+1 \right)^{-k}\nonumber\\&&\cdot \left(\chi_\pi*I_\gamma(\varphi)\right)\left(m\cdot \exp \sqrt{\frac{\cosh L_\gamma}{\cosh L_\gamma-(m_\gamma^{m^{-1}})_{1,1}}}sX_{e_1}\right)dsdm,
\\&=:&(*).
\end{eqnarray*}

In Proposition \ref{chap: geom prop minv} we observed that for $\sigma\in C(\Gamma\backslash G)$, $n\in N$ and $z\in M_{m_\gamma}A$
\begin{equation*}
\int_{A/ <a_{\gamma_0}>}\sigma(\alpha_\gamma^{-1}azn)da=\int_{\alpha_\gamma \Gamma_\gamma \alpha_\gamma^{-1}\backslash G_{a_\gamma m_\gamma}}\sigma(\alpha_\gamma^{-1}xzn)dx=\int_{\alpha_\gamma \Gamma_\gamma \alpha_\gamma^{-1}\backslash G_{a_\gamma m_\gamma}}\sigma(\alpha_\gamma^{-1}xn)dx.
\end{equation*}

In particular this is true for $z\in M_{m_\gamma}=M$ by assumption. It follows that the weight function \[I_\gamma(\varphi)(n)=\int_{A/ <a_{\gamma_0}>}\varphi(\alpha_\gamma^{-1}an)da=\int_{\alpha_\gamma \Gamma_\gamma \alpha_\gamma^{-1}\backslash G_{a_\gamma m_\gamma}}\varphi(\alpha_\gamma^{-1}xn)dx\] is also left-$M$-invariant since $M_{m_\gamma}=M$, i.e. $G_{a_\gamma m_\gamma}=MA$ for every $\gamma\in\Gamma$. Thus, $\chi_\pi*I_\gamma(\varphi)=0$ for all $\pi\neq 1$ and $c(\varphi,\gamma,\mathbf{1},k)=0$ for any non trivial $\pi$. Then $ \mathcal{R}(k;\varphi)$ simplifies to
\begin{equation}\label{eq: simple}\mathcal{R}(k;\varphi)=\sum_{1\neq[\gamma]\in C\Gamma}c(\varphi,\gamma,\mathbf{1},k) (\cosh L_\gamma)^{-k+\rho_0}.\end{equation}

But $\pi= \mathbf{1}$ implies $p=d=0$, see Theorem \ref{chap: ode main}, hence we get the following proposition.
\begin{proposition}\label{prop: simple}
For $\varphi\neq 1$, $\gamma\neq 1$ and $k\in \mathbb{C}$ with $\mathrm{Re}(k)>2\rho_0$
the coefficient $c(\varphi,\gamma,\mathbf{1},k)$ can be computed to 
\begin{eqnarray}\label{eq: c1}
c(\varphi,\gamma,\mathbf 1,k)&=& \omega_{l-1}\sqrt{2(l-1)}\left(\int_{c_{\gamma_0}}\varphi\right) \frac{1}{(\cosh L_\gamma-(m_\gamma)_{1,1})^{\rho_0}}\nonumber\\ &&\cdot\int_0^\infty s^{l-2}(s^2+1)^{-k}  {}_2F_1\left(a,b,\rho_0;-s^2\frac{\cosh L_\gamma}{\cosh L_\gamma-(m_\gamma)_{1,1}}\right)ds,
\end{eqnarray}
where \[\frac{1}{(\cosh L_\gamma-(m_\gamma)_{(1,1)})^{\rho_0} }
= \frac{2^{\rho_0}}{e^{\rho_0 L_\gamma }\det\left(1-\mathrm{Ad}(m_\gamma a_\gamma)^{-1}|_{\mathfrak n}\right)} .\]
\end{proposition}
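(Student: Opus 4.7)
The plan is to begin from the general formula (\ref{def: coeff}) specialized to $\pi=\mathbf{1}$ and exploit the simplifications afforded by centrality of $m_\gamma$. Because $m^{-1}m_\gamma m=m_\gamma$ for every $m\in M$, the quantity $(m_\gamma^{m^{-1}})_{1,1}=(m_\gamma)_{1,1}$ is constant throughout the $m$-integration. Moreover, the discussion preceding the proposition (via Proposition \ref{chap: geom prop minv} applied to $z\in M$) shows that $I_\gamma(\varphi)$ is itself left-$M$-invariant in this setting, so $\chi_{\mathbf{1}}*I_\gamma(\varphi)=I_\gamma(\varphi)$ and the factor $m\cdot$ appearing in the argument of $\chi_{\mathbf{1}}*I_\gamma(\varphi)$ may be dropped. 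Consequently the $M$-integration reduces to multiplication by $\int_M dm=1$ (and $d_{\mathbf{1}}=1$), leaving only the slice integral over $\exp\mathbb{R}^+X_{e_1}$.

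The next step is to evaluate $I_\gamma(\varphi)$ on this slice. For the trivial representation, the constants in Theorem \ref{chap: ode main} are $p=0$ and $d=0$, and the hypergeometric parameter $1+2\sqrt{(1-\rho_0)^2/4}$ reduces to $\rho_0$ (using $\rho_0\geq 1$ for $l\geq 3$). Combining Remark \ref{rem: inival} and Remark \ref{rem: hypsymxp} — which express $I_\gamma(\varphi)^{\mathbf{1}}$ on the slice as a single hypergeometric multiplied by a scalar — with the identity (\ref{eq: inivalsimp}) furnishing the initial value $I_\gamma(\varphi)(e)=\sqrt{2(l-1)}\int_{c_{\gamma_0}}\varphi$, one obtains
\[I_\gamma(\varphi)(\exp sX_{e_1})=\sqrt{2(l-1)}\Bigl(\int_{c_{\gamma_0}}\varphi\Bigr)\cdot{}_2F_1(a,b,\rho_0;-s^2).\]
Substituting this into the reduced form of (\ref{def: coeff}) after the inherent rescaling $s\mapsto\sqrt{\cosh L_\gamma/(\cosh L_\gamma-(m_\gamma)_{1,1})}\,s$ yields exactly the right-hand side of (\ref{eq: c1}).

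It remains to verify the alternative expression for the prefactor. Since $m_\gamma$ is central in $M\cong SO(l-1)$, either $m_\gamma=I$ or (only for $l-1$ even) $m_\gamma=-I$; in either case $m_\gamma^2=e$ and $(m_\gamma)_{1,1}=\pm 1$ is precisely the scalar by which $\mathrm{Ad}(m_\gamma)$ acts on $\mathfrak n\cong\mathbb{R}^{l-1}$. Using that $A$ commutes with $M$ and that $\mathrm{Ad}(a_\gamma)|_{\mathfrak n}=e^{L_\gamma}\mathrm{Id}$ (as $\alpha(H_0)=1$), one computes
\[\det\bigl(1-\mathrm{Ad}(m_\gamma a_\gamma)^{-1}|_{\mathfrak n}\bigr)=\bigl(1-(m_\gamma)_{1,1}e^{-L_\gamma}\bigr)^{l-1}.\]
Multiplying by $e^{\rho_0 L_\gamma}=e^{(l-1)L_\gamma/2}$ distributes as $(e^{L_\gamma/2}-(m_\gamma)_{1,1}e^{-L_\gamma/2})^{2\rho_0}$, and squaring the inner factor (using $(m_\gamma)_{1,1}^2=1$) gives $2(\cosh L_\gamma-(m_\gamma)_{1,1})$; the $\rho_0$-th power then produces $2^{\rho_0}(\cosh L_\gamma-(m_\gamma)_{1,1})^{\rho_0}$, which rearranges to the claimed identity. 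The only non-cosmetic obstacle is aligning the hypergeometric parameter $1+2\sqrt{(1-\rho_0)^2/4}$ with $\rho_0$, which succeeds precisely because $\rho_0\geq 1$ under our hypotheses; the remaining steps are elementary assembly of previously established identities.
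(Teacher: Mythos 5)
Your first two steps match the paper's proof closely and are correct: centrality of $m_\gamma$ makes $(m_\gamma^{m^{-1}})_{1,1}$ constant in $m$, $I_\gamma(\varphi)$ is left-$M$-invariant so $\chi_{\mathbf 1}*I_\gamma(\varphi)=I_\gamma(\varphi)$ and the $M$-integral collapses; the slice formula from Theorem \ref{chap: ode main} with $p=d=0$ together with (\ref{eq: inivalsimp}) gives $I_\gamma(\varphi)(\exp sX_{e_1})=\sqrt{2(l-1)}\bigl(\int_{c_{\gamma_0}}\varphi\bigr)\,{}_2F_1(a,b,\rho_0;-s^2)$, which after the rescaling in (\ref{def: coeff}) yields the integral in (\ref{eq: c1}).

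The gap is in your derivation of the prefactor identity. You assert that $m_\gamma$ central in $M\cong SO(l-1)$ forces $m_\gamma=\pm I$, and then compute the determinant as if $\mathrm{Ad}(m_\gamma)$ were the scalar $(m_\gamma)_{1,1}$ on $\mathfrak{n}$. This is false in the one interesting case. For $l=3$ we have $M\cong SO(2)$, which is abelian, so \emph{every} $m_\gamma\in M$ is central: $m_\gamma$ can be an arbitrary rotation $\bigl(\begin{smallmatrix}\cos\theta&-\sin\theta\\\sin\theta&\cos\theta\end{smallmatrix}\bigr)$, and then $\mathrm{Ad}(m_\gamma)$ does not act on $\mathfrak n\cong\mathbb R^2$ by the scalar $(m_\gamma)_{1,1}=\cos\theta$, so $\det\bigl(1-\mathrm{Ad}(m_\gamma a_\gamma)^{-1}|_{\mathfrak n}\bigr)\neq(1-(m_\gamma)_{1,1}e^{-L_\gamma})^{l-1}$ in general. (The final identity does still hold for $SO(2)$ by a direct eigenvalue computation $\det(1-e^{-L_\gamma}m_\gamma^{-1})=1-2e^{-L_\gamma}\cos\theta+e^{-2L_\gamma}=2e^{-L_\gamma}(\cosh L_\gamma-\cos\theta)$, but that is not what your argument shows.) Since $l=2,3$ are precisely the cases the paper identifies as the ones where the centrality hypothesis is known to hold, your argument is valid only where it is not needed. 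The paper avoids this by proving the stronger \emph{integrated} identity
\[\int_M\Bigl(\frac{1}{\cosh L_\gamma-(m_\gamma^{m^{-1}})_{1,1}}\Bigr)^{\rho_0}dm=\frac{2^{\rho_0}e^{-\rho_0 L_\gamma}}{\det\bigl(1-\mathrm{Ad}(m_\gamma a_\gamma)^{-1}|_{\mathfrak n}\bigr)}\]
for arbitrary $m_\gamma$, by evaluating $\int_N f_k(n^{-1}m_\gamma a_\gamma n)\,dn$ in two ways — once via the change-of-variables Lemma (Corollary \ref{cor: bo}) reducing it to $\int_N f_k(a_\gamma n)\,dn$, once directly in polar coordinates — and then specializing to the central case where the integrand is constant. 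Your direct determinant computation cannot substitute for this unless you treat the rotation case in $SO(2)$ separately.
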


For $l=2$ and $l=3$ we thus get

\begin{equation*}
\mathcal{R}(k;\varphi)=\sum_{1\neq[\gamma]\in C\Gamma}c(\varphi,\gamma,\mathbf{1},k) (\cosh L_\gamma)^{-k+1/2}\mbox{ , } l=2 
\end{equation*}
resp.
\begin{equation*}
\mathcal{R}(k;\varphi)=\sum_{1\neq[\gamma]\in C\Gamma}c(\varphi,\gamma,\mathbf{1},k) (\cosh L_\gamma)^{-k+1}\mbox{ , } l=3 
\end{equation*}
where 

\begin{equation*}
c(\varphi,\gamma,\mathbf 1,k)=  \frac{2\left(\int_{c_{\gamma_0}}\varphi\right)}{\sinh L_\gamma/2}\int_0^\infty \left(s^2+1\right)^{-k} {}_2F_1\left(a,b,\frac{1}{2} ;-s^2 \frac{\cosh L_\gamma}{\cosh L_\gamma-1} \right)ds
\end{equation*}
for  $l=2$ resp. 
\begin{eqnarray*}
c(\varphi,\gamma,\mathbf 1,k)&=&  \frac{2\omega_{2} \left(\int_{c_{\gamma_0}}\varphi\right)}{\cosh L_\gamma-(m_\gamma)_{1,1}}\int_0^\infty s\left(s^2+1 \right)^{-k} {}_2F_1\left(a,b,1;-s^2\frac{\cosh L_\gamma}{\cosh L_\gamma-(m_\gamma)_{1,1}}\right)ds
\end{eqnarray*}
for $l=3$. 

\begin{proof}
First, by the assumptions on $m_\gamma$ and as $I_\gamma(\varphi)$ is left-$M$-invariant we get
\begin{eqnarray*}
c(\varphi,\gamma,\mathbf{1},k)&=&\omega_{l-1}\int_M \frac{1}{(\cosh L_\gamma-(m_\gamma^{m^{-1}})_{1,1})^{\frac{l-1}{2}}}
\int_{0}^\infty s^{l-2}\left(s^2+1 \right)^{-k}\nonumber\\&&\cdot I_\gamma(\varphi)\left(m\cdot \exp \sqrt{\frac{\cosh L_\gamma}{\cosh L_\gamma-(m_\gamma^{m^{-1}})_{1,1}}}sX_{e_1}\right)dsdm,
\\&=&
\omega_{l-1} \frac{1}{(\cosh L_\gamma-(m_\gamma)_{1,1})^{\frac{l-1}{2}}}
\int_{0}^\infty s^{l-2}\left(s^2+1 \right)^{-k}\nonumber\\&&\cdot I_\gamma(\varphi)\left(\exp \sqrt{\frac{\cosh L_\gamma}{\cosh L_\gamma-(m_\gamma)_{1,1}}}sX_{e_1}\right)ds
\\&=:&(+).
\end{eqnarray*}

The equality in $(\ref{eq: c1})$ of Proposition \ref{prop: simple} then follows directly from $(+)$ and the following observations 
\begin{eqnarray*}I_\gamma(\varphi)\left(\exp \sqrt{\frac{\cosh L_\gamma}{\cosh L_\gamma-(m_\gamma)_{1,1}}}sX_{e_1}\right)&=& I_\gamma(\varphi)(e){}_2F_1\left(a,b,\rho_0;-s^2\frac{\cosh L_\gamma}{\cosh L_\gamma-(m_\gamma)_{1,1}}\right)\\
&=& \sqrt{2(l-1)}\left(\int_{c_{\gamma_0}}\varphi\right){}_2F_1\left(a,b,\rho_0;-s^2\frac{\cosh L_\gamma}{\cosh L_\gamma-(m_\gamma)_{1,1}}\right),
\end{eqnarray*}
see equations (\ref{eq: l2hyp}) for $l=2$ resp. (\ref{eq: lghyp}), $l\geq 3$, with $p=d=0$ and also (\ref{eq: inivalsimp}) and Remark \ref{rem: inival}. To get the second equality about $(\cosh L_\gamma-(m_\gamma)_{(1,1)})^{-\rho_0}$ we use two observations.

\begin{lemma}
Let $a\in A^+$ and $m\in M$. The map \index{$h(ma)$, diffeomorphism on $N$} $h(ma):n\mapsto m^{-1}a^{-1}n^{-1}amn$ is a diffeomorphism of $N$ and the Haar measure $dn$ transforms to \[dh(ma)(n)=\mathrm{det}\left(1-\mathrm{Ad}(ma)^{-1}|_{\mathfrak n}\right)dn.\]
\end{lemma}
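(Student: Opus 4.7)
The plan is to exploit the fact that in the real hyperbolic setting $\mathfrak n$ is abelian, which turns the map $h(ma)$ into an affine map on $\mathfrak n$ under the identification $N\cong \mathfrak n$ via $\exp$, so that both the diffeomorphism claim and the Jacobian computation become linear algebra.

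\textbf{Step 1: Rewrite $h(ma)$ in terms of $\mathrm{Ad}$.} Since $M$ centralizes $A$, the elements $a$ and $m$ commute, so $(am)^{-1}=(ma)^{-1}=m^{-1}a^{-1}$. For $n=\exp X$ with $X\in\mathfrak n$, conjugation gives
\[
m^{-1}a^{-1}n^{-1}am=\exp\bigl(-\mathrm{Ad}((ma)^{-1})X\bigr).
\]
Because $\mathfrak n$ is abelian (we are in $G=SO_o(1,l)$), the Baker--Campbell--Hausdorff formula collapses to the sum, so
\[
h(ma)(\exp X)=\exp\bigl(-\mathrm{Ad}((ma)^{-1})X\bigr)\cdot \exp X=\exp\bigl((1-\mathrm{Ad}((ma)^{-1}))X\bigr).
\]
Hence, under the exponential identification $N\cong\mathfrak n$, $h(ma)$ is the linear map $T:=1-\mathrm{Ad}((ma)^{-1})$ acting on $\mathfrak n$.

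\textbf{Step 2: Invertibility of $T$.} From Section \ref{chap: hyper} we have $\mathrm{Ad}(a_t)X_u=e^tX_u$, so for $a\in A^+$ all eigenvalues of $\mathrm{Ad}(a^{-1})|_{\mathfrak n}$ lie strictly inside the unit disk. Since $m\in M\subset K$ is compact, $\mathrm{Ad}(m^{-1})|_{\mathfrak n}$ is orthogonal with respect to $B_\theta$, so $\|\mathrm{Ad}(m^{-1})|_{\mathfrak n}\|=1$. Consequently
\[
\|\mathrm{Ad}((ma)^{-1})|_{\mathfrak n}\|\leq \|\mathrm{Ad}(m^{-1})|_{\mathfrak n}\|\cdot\|\mathrm{Ad}(a^{-1})|_{\mathfrak n}\|<1,
\]
and $T=1-\mathrm{Ad}((ma)^{-1})$ is invertible. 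Thus $h(ma)$ is a linear diffeomorphism of $\mathfrak n\cong N$.

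\textbf{Step 3: Jacobian and Haar measure.} The Haar measure $dn$ on the abelian nilpotent group $N$ coincides (up to a constant we already normalized in Section \ref{sec: traceform}) with the Lebesgue measure on $\mathfrak n$ transported via $\exp$. Since $h(ma)$ corresponds to the linear map $T$ on $\mathfrak n$, the change of variables formula for Lebesgue measure yields
\[
d\bigl(h(ma)\bigr)(n)=|\det T|\,dn=\det\bigl(1-\mathrm{Ad}((ma)^{-1})|_{\mathfrak n}\bigr)dn,
\]
where the absolute value signs can be dropped because (using the root space structure $\mathrm{Ad}(a^{-1})|_{\mathfrak{g}_\alpha}=e^{-\alpha(\log a)}\mathrm{Id}$ together with the fact that $\mathrm{Ad}(m^{-1})$ preserves $B_\theta$) the spectrum of $\mathrm{Ad}((ma)^{-1})|_{\mathfrak n}$ lies in the open unit disk, forcing $\det(1-\mathrm{Ad}((ma)^{-1})|_{\mathfrak n})>0$. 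This is exactly the stated formula.

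There is no real obstacle here: the only substantive point is recognizing that abelianness of $\mathfrak n$ lets one replace the nonlinear conjugation-type map with the honest linear endomorphism $1-\mathrm{Ad}((ma)^{-1})$, after which both assertions reduce to standard facts about linear maps on $\mathbb R^{l-1}$.
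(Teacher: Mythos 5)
Your proof is correct. The paper itself offers no argument for this lemma --- it simply cites \cite[Ch.\,I Lem.\,5.4]{GASS} and \cite[Th.\,11.24]{Wi} --- so you have supplied a self-contained proof where the author defers to the literature. Your route is to exploit the fact that in the real hyperbolic case $\mathfrak n$ is abelian, so that under the global diffeomorphism $\exp:\mathfrak n\to N$ the map $h(ma)$ becomes the \emph{linear} endomorphism $1-\mathrm{Ad}((ma)^{-1})|_{\mathfrak n}$; invertibility then follows from the spectral-radius bound $\|\mathrm{Ad}((ma)^{-1})|_{\mathfrak n}\|=e^{-t}<1$ for $a=a_t\in A^+$, and the Jacobian factor is immediate from the change-of-variables formula for linear maps, with positivity of the determinant coming from the fact that $1-e^{-t}O$ has positive determinant for any orthogonal $O$. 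The cited sources prove the statement in the generality of arbitrary semisimple rank-one (or even higher-rank) groups, where $N$ can be two-step nilpotent and $\exp$ no longer linearizes the conjugation; there one must work harder, typically by filtering $\mathfrak n$ by the descending central series and computing the Jacobian layer by layer. Your abelian-specific argument is shorter and entirely elementary, at the modest cost of not covering the non-real-hyperbolic rank-one cases that the references do. One very small addendum worth making explicit: you invoke the identification of $dn$ with Lebesgue measure on $\mathfrak n$ via $\exp$; this is exactly how $dn$ was fixed in Section \ref{sec: traceform}, so the identification carries no hidden normalization constant, and the displayed equality is literal rather than up to a scalar.
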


\begin{proof}
See \cite[Ch. I Lem.5.4.]{GASS} or \cite[Th.11.24.]{Wi}.
\end{proof}
\begin{corollary}\label{cor: bo}
For integrable functions $f$ on $N$, $a\in A^+$ and $m\in M$
\[\int_Nf(n)dn=\mathrm{det}\left(1-\mathrm{Ad}(ma)^{-1}|_{\mathfrak n}\right)\int_Nf(m^{-1}a^{-1}n^{-1}amn)dn\] resp. if $f$ is also bi-$K$-invariant, then
\[\int_N f(an)dn=\mathrm{det}\left(1-\mathrm{Ad}(ma)^{-1}|_{\mathfrak n}\right)\int_N f(n^{-1}amn)dn .\]
\end{corollary}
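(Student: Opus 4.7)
The plan is to deduce the corollary directly from the preceding lemma by a change of variables, first for the general integrable case and then refining for bi-$K$-invariant functions.

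For the first identity, I would apply the transformation formula to the diffeomorphism $h(ma):N\to N$, $n\mapsto m^{-1}a^{-1}n^{-1}amn$. By the preceding lemma its Jacobian is the constant $\det\bigl(1-\mathrm{Ad}(ma)^{-1}|_{\mathfrak n}\bigr)$, so substituting $n\mapsto h(ma)(n)$ in the left-hand side gives
\[
\int_N f(n)\,dn=\int_N f\bigl(h(ma)(n)\bigr)\,\det\bigl(1-\mathrm{Ad}(ma)^{-1}|_{\mathfrak n}\bigr)\,dn,
\]
which is exactly the first claimed equality.

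For the second identity, I would apply the first equality not to $f$ itself but to the auxiliary function $g(n):=f(an)$, which is still integrable on $N$. This yields
\[
\int_N f(an)\,dn=\int_N g(n)\,dn=\det\bigl(1-\mathrm{Ad}(ma)^{-1}|_{\mathfrak n}\bigr)\int_N f\bigl(a\,m^{-1}a^{-1}n^{-1}amn\bigr)\,dn.
\]
Since $m\in M=Z_K(A)$ centralizes $A$, we have $am^{-1}a^{-1}=m^{-1}$, so the integrand simplifies to $f(m^{-1}n^{-1}amn)$. Finally, using that $m\in M\subset K$ and that $f$ is left-$K$-invariant, $f(m^{-1}n^{-1}amn)=f(n^{-1}amn)$, which gives the second equality.

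The argument is essentially bookkeeping: the only substantive ingredient is the Jacobian computation already supplied by the preceding lemma, so no real obstacle arises. The one point to double-check is that $g(n)=f(an)$ is integrable whenever $f$ is, and that applying the first identity to $g$ is legitimate; this is immediate once one notes that left translation by $a$ preserves the Haar measure on $AN$ up to a scalar and $g$ is the restriction to $N$ of such a translate of $f$.
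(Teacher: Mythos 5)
Your argument is correct and is exactly the immediate deduction from the preceding lemma that the paper has in mind (the paper in fact states the corollary without proof, treating it as automatic). Both steps are fine: the first identity is the change-of-variables formula with the constant Jacobian $\det\bigl(1-\mathrm{Ad}(ma)^{-1}|_{\mathfrak n}\bigr)$, and in the second you correctly apply the first identity to $g(n)=f(an)$, simplify $a\,m^{-1}a^{-1}n^{-1}amn=m^{-1}n^{-1}amn$ using $M=Z_K(A)$, and then drop the $m^{-1}$ by left-$K$-invariance; the integrability of $g$ is just the assumed finiteness of $\int_N f(an)\,dn$, so your cautionary remark about Haar measure on $AN$ is unnecessary but harmless.
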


We already know that for $s\in\mathbb R$ \[f_k\left((\exp -sX_{e_1}) a_\gamma m(\exp sX_{e_1})\right)=\left(-(m_\gamma)_{1,1}s^2+(1+s^2) \cosh L_\gamma \right)^{-k},\] see $(\ref{eq: fk1})$. Also for $t\in\mathbb R$ \[
f_k(\exp tH_0\exp sX_{e_1})=f_k(a_t\exp sX_{e_1})=\left(\cosh t +\frac{s^2}{2}e^t\right)^{-k},
\] see $(\ref{eq: fkatn})$. We finally recall the polar coordinates formula \[\int_N f(n)dn=\omega_{l-1}\int_0^\infty s^{l-2}f(\exp sX_{e_1})ds\] for bi-$M$-invariant, integrable functions $f$ on $N$. Thus for $f=f_k$, $a=a_\gamma=\exp L_\gamma H_0$ and $m=m_\gamma$ we get that

\begin{eqnarray*}
&&\frac{1}{\det\left(1-\mathrm{Ad}(m_\gamma a_\gamma)^{-1}|_{\mathfrak n}\right)}\int_N f_k(a_\gamma n)dn
\\&=&\frac{\omega_{l-1}}{\det\left(1-\mathrm{Ad}(m_\gamma a_\gamma)^{-1}|_{\mathfrak n}\right)} \int_0^\infty \int_M f_k\left(a_\gamma m \exp sX_{e_1} m^{-1}\right)dmds\\
&=&  \frac{\omega_{l-1}}{\det\left(1-\mathrm{Ad}(m_\gamma a_\gamma)^{-1}|_{\mathfrak n}\right)} \int_0^\infty \int_M f_k\left(a_\gamma \exp sX_{e_1} \right)dmds 
\\&=& \frac{\omega_{l-1}}{\det\left(1-\mathrm{Ad}(m_\gamma a_\gamma)^{-1}|_{\mathfrak n}\right)}\int_0^\infty s^{l-2}\left( \frac{s^2}{2}e^{L_\gamma}+\cosh L_\gamma \right)^{-k}ds\\
&=& \frac{\omega_{l-1}}{\det\left(1-\mathrm{Ad}(m_\gamma a_\gamma)^{-1}|_{\mathfrak n}\right)} \cosh^{-k}L_\gamma\int_0^\infty s^{l-2}\left( \frac{s^2e^{L_\gamma}}{2\cosh L_\gamma}+1 \right)^{-k}ds\\
&=& \frac{\omega_{l-1}}{\det\left(1-\mathrm{Ad}(m_\gamma a_\gamma)^{-1}|_{\mathfrak n}\right)} \cosh^{-k}L_\gamma e^{-\frac{l-1}{2}L_\gamma }(2\cosh L_\gamma)^{\frac{l-1}{2} }\int_0^\infty s^{l-2}(s^2+1)^{-k}ds\\
&=& \frac{\omega_{l-1}}{\det\left(1-\mathrm{Ad}(m_\gamma a_\gamma)^{-1}|_{\mathfrak n}\right)} 2^{ \frac{l-1}{2} }e^{-\frac{l-1}{2}L_\gamma } \cosh^{-k+\frac{l-1}{2}} L_\gamma  \int_0^\infty s^{l-2}(s^2+1)^{-k}ds\\
&=:&(*)
\end{eqnarray*}
equals

\begin{eqnarray*}
\int_{N}f_k(n^{-1}m_\gamma a_\gamma n)dn&=& \omega_{l-1}\int_0^\infty s^{l-2}\int_M f_k\left(\exp -sX_{e_1}(m_\gamma^{m^{-1}})a_\gamma \exp sX_{e_1}\right)dmds
\\&=& \omega_{l-1}\int_M \int_0^\infty s^{l-2}\left( -(m_\gamma^{m^{-1}})_{1,1}s^2+(1+s^2)\cosh L_\gamma\right)^{-k}dsdm, \\&&  (m_\gamma^{m^{-1}})_{1,1}=(m^{-1}mm)_{1,1}\\
&=& \omega_{l-1}\cosh^{-k}L_\gamma \int_M\int_0^\infty s^{l-2}\left( s^2\left(1-\frac{(m^{m'})_{1,1}}{\cosh t} \right)+1\right)^{-k}dsdm'\\
&=& \omega_{l-1}\cosh^{-k}L_\gamma\int_M\left(1-\frac{(m_\gamma^{m^{-1}})_{1,1}}{\cosh L_\gamma} \right)^{-\frac{l-1}{2} }dm \int_0^\infty s^{l-2}(1+s^2)^{-k}ds\\
&=& \omega_{l-1}\cosh^{-k+\frac{l-1}{2}}L_\gamma\int_M \left(\frac{1}{\cosh L_\gamma-(m_\gamma^{m^{-1}})_{1,1}} \right)^{\frac{l-1}{2} }dm \int_0^\infty s^{l-2}(1+s^2)^{-k}ds\\
&=:& (**). 
\end{eqnarray*}
Therefore, Corollary \ref{cor: bo} and comparing $(*)$ to $(**)$ gives

\begin{equation*}
\int_M\left(\frac{1}{\cosh L_\gamma -(m_\gamma^{m^{-1}})_{1,1}} \right)^{\frac{l-1}{2} }dm = \frac{1}{\det\left(1-\mathrm{Ad}(m_\gamma a_\gamma)^{-1}|_{\mathfrak n}\right)} 2^{ \frac{l-1}{2} }e^{-\frac{l-1}{2}L_\gamma }.
\end{equation*}

In particular if $m_\gamma$ is central then $(m_\gamma^{m^{-1}})_{(1,1)}=(m_\gamma)_{1,1}$ is actually independent of $m$ and as $\int_M=1$

\begin{eqnarray*}
\int_M\left(\frac{1}{\cosh t-(m_\gamma)_{1,1}} \right)^{\frac{l-1}{2} }dm&=&\frac{1}{(\cosh t-(m_\gamma)_{1,1})^{\frac{l-1}{2}} }\\
&=& \frac{1}{\det\left(1-\mathrm{Ad}(m_\gamma a_\gamma)^{-1}|_{\mathfrak n}\right)} 2^{ \frac{l-1}{2} }e^{-\frac{l-1}{2}L_\gamma }.
\end{eqnarray*}

Hence, we can replace \begin{eqnarray}\label{eq: jactransf}
\frac{1}{(\cosh L_\gamma-(m_\gamma)_{(1,1)})^{\rho_0 }}&=&\frac{1}{(\cosh L_\gamma-(m_\gamma)_{(1,1)})^{\frac{l-1}{2}} }
\nonumber\\&=& \frac{1}{\det\left(1-\mathrm{Ad}(m_\gamma a_t)^{-1}|_{\mathfrak n}\right)} 2^{ \frac{l-1}{2} }e^{-\frac{l-1}{2}L_\gamma }\nonumber\\
&=& \frac{1}{\det\left(1-\mathrm{Ad}(m_\gamma a_t)^{-1}|_{\mathfrak n}\right)} 2^{ \rho_0 }e^{-\rho_0 L_\gamma}
\end{eqnarray}
in the first equality of (\ref{eq: c1}).

\end{proof}

We can even get rid off the integrals in the formula (\ref{eq: c1}) for $c(\varphi,\gamma,\mathbf{1})$. Putting for the moment $z:=\frac{\cosh L_\gamma}{(\cosh L_\gamma-(m_\gamma^{m^{-1}})_{1,1})}\overset{(\ref{eq: jactransf})}=\frac{2^{\rho_0}\cosh L_\gamma}{\det\left(1-\mathrm{Ad}(m_\gamma a_\gamma)^{-1}|_{\mathfrak n}\right)e^{\rho_0 L_\gamma}}$ we have to compute for $\mathrm{Re}(k)>2\rho_0$ the integral 

\begin{eqnarray*}
&&\int_0^\infty s^{l-2}(s^2+1)^{-k} {}_2F_1\left(a,b,\rho_0;-s^2\frac{\cosh L_\gamma}{\cosh L_\gamma-(m_\gamma)_{1,1}}\right)ds\\&=&\int_0^\infty s^{l-2}\left( s^2+1\right)^{-k} {}_2F_1(a,b,\rho_0;-zs^2)ds.
\end{eqnarray*}

By the transformation $s\mapsto \sqrt{s}$ this turns into

\begin{equation}\label{eq: intransform}
\frac{1}{2}\int_{0}^\infty s^{\frac{l-2}{2} }s^{-\frac{1}{2} }(s+1)^{-k}{}_2F_1(a,b,\rho_0;-zs)ds.
\end{equation}

Noting that $\rho_0=\frac{l-1}{2}$ the transformation $s\mapsto \frac{s}{z}$ yields
\begin{eqnarray*}\label{eq: hypint}
&&\frac{1}{2}\int_{0}^\infty s^{\frac{l-2}{2} }s^{-\frac{1}{2} }(s+1)^{-k}{}_2F_1(a,b,\rho_0;-zs)ds\nonumber\\&=&
\frac{1}{2}z^{k-\rho_0}\int_0^\infty s^{\rho_0-1}(s+z)^{-k}{}_2F_1(a,b,\rho_0;-s)ds\nonumber\\&=& \frac{1}{2}z^{k-\rho_0} \cdot \frac{\Gamma(\rho_0)\Gamma(a-\rho_0+k)\Gamma(b-\rho_0+k)}{\Gamma(k)\Gamma(a+b-\rho_0+k)}\\&&\cdot {}_2F_1(a-\rho_0+k,b-\rho_0+k,a+b-\rho_0+k;1-z)\nonumber
\\&=& \frac{1}{2}z^{k-\rho_0} \cdot \frac{\Gamma(\rho_0)\Gamma(a-c+k)\Gamma(b-\rho_0+k)}{\Gamma(k)^2}\\&&\cdot {}_2F_1(a-\rho_0+k,b-\rho_0+k,a+b-\rho_0+k;1-z)\nonumber
\\&=& \frac{1}{2}z^{k-\rho_0} \cdot \frac{\Gamma(\rho_0)\Gamma(k-a)\Gamma(k-b)}{\Gamma(k)^2}\\&&\cdot {}_2F_1(a-\rho_0+k,b-\rho_0+k,k;1-z),
\end{eqnarray*}
where we used an integral transform for hypergeometric functions, see \cite[20.2 (10)]{Ba} which is valid for $\mathrm{Re}(a-\rho_0+k)=\mathrm{Re}(k-b)=\mathrm{Re}(k-\frac{\rho_0}{2}+\frac{ir}{2})>0$, $\mathrm{Re}(b-\rho_0+k)=\mathrm{Re}(k-a)=\mathrm{Re}(k-\frac{\rho_0}{2}-\frac{ir}{2})>0$ for $k\in \mathbb{C}$ with $\mathrm{Re}(k)>2\rho_0$. 

From $(\ref{eq: c1})$ we thus get

\begin{eqnarray}\label{eq: coefsimex}
c(\varphi,\gamma,\mathbf 1,k)&=&\omega_{l-1}\sqrt{2(l-1)}\left(\int_{c_{\gamma_0}}\varphi\right) \frac{1}{(\cosh L_\gamma-(m_\gamma)_{1,1})^{\rho_0}}\nonumber\\ &&\cdot \frac{1}{2}\left(\frac{\cosh L_\gamma}{\cosh L_\gamma-(m_\gamma)_{1,1}}\right)^{k-\rho_0}\frac{\Gamma(\rho_0)\Gamma(-\frac{1}{2}(\rho_0-ir)+k)\Gamma(-\frac{1}{2}(\rho_0+ir)+k)} {\Gamma(k)^2}\nonumber\\&&\cdot {}_2F_1\left(-\frac{1}{2}(\rho_0-ir)+k,-\frac{1}{2}(\rho_0+ir)+k,k;1-\frac{\cosh L_\gamma}{\cosh L_\gamma-(m_\gamma)_{1,1}}\right)\nonumber\\
&= & \frac{\omega_{l-1}\sqrt{2(l-1)}\left(\int_{c_{\gamma_0}}\varphi\right) }{2}\left(\frac{\cosh L_\gamma}{\cosh L_\gamma -(m_\gamma)_{(1,1)}}\right)^{k-2\rho_0}\nonumber \\
&&\cdot\frac{\Gamma(\rho_0)\Gamma(-\frac{1}{2}(\rho_0-ir)+k)\Gamma(-\frac{1}{2}(\rho_0+ir)+k)} {\Gamma(k)^2}\nonumber\\&&\cdot {}_2F_1\left(-\frac{1}{2}(\rho_0-ir)+k,-\frac{1}{2}(\rho_0+ir)+k,k;1-\frac{\cosh L_\gamma}{\cosh L_\gamma-(m_\gamma)_{1,1}}\right)
\end{eqnarray}
where $\frac{2^{\rho_0}\cosh L_\gamma}{\det\left(1-\mathrm{Ad}(m_\gamma a_\gamma)^{-1}|_{\mathfrak n}\right)e^{\rho_0 L_\gamma}}=\frac{\cosh L_\gamma}{\cosh L_\gamma-(m_\gamma)_{1,1}}$.

\begin{remark}
Recall the definition (\ref{def: coeff}) for $c(\varphi,\gamma,\pi,k)$. For $G=SO_o(1,l)$ with $l\geq 4$ the definition of $c(\varphi,\gamma,\pi,k)$ can be simplified at least one more step by using a computer algebra program like MATHEMATICA in order to carry out the $ds$-integration. Let $z:=\frac{\cosh L_\gamma}{\cosh L_\gamma-(m_\gamma^{m^{-1}})_{(1,1)}}$, then 
\begin{eqnarray*}
&&\int_0^\infty s^{l-2}(s^2+1)^{-k}s^{2p}{}_2F_1\left(a+p,b+p,1+2\sqrt{\frac{(\rho_0)^2}{4}-d };-zs^2\right)ds\\
&=& \frac{1}{2}\left(z^{k-\rho_0-p}\frac{\Gamma(1+\frac{1}{2}\sqrt{(l-3)^2-16d} )\Gamma(k-a)\Gamma(k-b)\Gamma(p-k+\rho_0)}{\Gamma(1+k-\rho_0-p+\frac{1}{2}\sqrt{(l-3)^2-16d})\Gamma(a+p)\Gamma(b+p)}\right.
\\&&\cdot {}_3F_2\left( \{k,k-a,k-b\},\{k+1-p-\rho_0,1+\frac{1}{2}\sqrt{(l-3)^2-16d}+k-\rho_0-p\};z\right)\\&&+\left.
\frac{\Gamma(k-\rho_0-p)\Gamma(\rho_0+p)}{\Gamma(k)}{}_3F_2\left(\{a+p,b+p,\rho_0+p\},\{1+\frac{1}{2}\sqrt{(l-3)^2-16d},\rho_0-k+p\};z\right)\right).
\end{eqnarray*}

Here \[{}_3F_2\left(\{\alpha_1,\alpha_2,\alpha_4\},\{\beta_1,\beta_2\};z\right):=\sum_{k=0}^\infty \frac{(\alpha_1)_k(\alpha_2)_k(\alpha_3)_k}{(\beta_1)_k(\beta_2)_k}\frac{z^k}{k!}\] denotes the generalized hypergeometric function, see for example \cite[\S 11]{Ol}.
\end{remark}

\section{$\sigma\equiv 1$ - the classical Selberg zeta function}

In this section we want to compare the (classical) Selberg zeta function as one can find in the book \cite{BO} with our zeta function $\mathcal Z(1)$ derived from $\mathcal R(1)$. So we recall that we have considered the auxiliary zeta function from (\ref{def: auxzet}) \[
\mathcal{R}(k;\varphi)=\sum_{1\neq[\gamma]\in C\Gamma}\sum_{\pi\in\widehat M}c(\varphi,\gamma,\pi,k) (\cosh L_\gamma)^{-k+\rho_0} \]
and the zeta function from (\ref{def: zetfunc})
\[
\mathcal{Z}(k;\varphi)=\sum_{1\neq[\gamma]\in C\Gamma}\sum_{\pi\in\widehat M}c(\varphi,\gamma,\pi,k) e^{ -(k-\rho_0)L_\gamma}.
\]

From now on we fix $\varphi\equiv 1$ and go back to Theorem \ref{theo: generaltr} with $\sigma\equiv \varphi\equiv 1$. Then we note that in this case the weight $I_\gamma(\sigma)$, see (\ref{def: weight}), is a constant which equals \[\int_{A/<a_{\gamma_0}>}da=\int_0^{L_{\gamma_0}}dt=\frac{L_\gamma}{n_\gamma}=L_{\gamma_0}=\frac{l_{\gamma_0}}{\sqrt{2(l-1)}} ,\] see Proposition \ref{chap: geoprop l}, where $l_{\gamma_0}$ is the length of the prime closed geodesic belonging to the conjugacy class of $\gamma$. In particular, the weight is $M$-invariant and hence $\chi_\pi*I_\gamma(\sigma)=0$ for any $\pi\neq \mathbf{1}$. Hence,
\begin{equation}\label{eq: tracmod}
\mathrm{Tr}(\sigma\cdot \pi_R(f))=\mathrm{Tr}(\pi_R(f))=f(e)|\Gamma\backslash G|+\sum_{1\neq [\gamma]\in C\Gamma}L_{\gamma_0}\int_Nf(n^{-1}a_\gamma m_\gamma n)dn
\end{equation}
for any $f\in C^\infty(G//K)$ such that $\pi_R(f)$ is of trace-class. 
We note that 
\begin{eqnarray*}
\frac{1}{2}B(k-\rho_0,k)=\int_0^\infty u^{l-2}(u^2+1)^{-k}du&=& \frac{1}{2}\int_0^\infty y^{\frac{n-2}{2}y^{-\frac{1}{2} }(1+y)^{-k}dy }
\\&=&\frac{\Gamma(k-\rho_0)\Gamma(\rho_0)}{2\Gamma(k)}
,
\end{eqnarray*}
see (\ref{eq: betafunc}). Furthermore, 
$\omega_{l-1}$ is positive by definition. Now we modify the function $f_k$ from (\ref{def: fk}) by dividing it by the non-zero number, see (\ref{eq: inthypgam}),
\begin{equation*}
2^{\rho_0}\omega_{l-1}\int_0^\infty s^{l-2}(s^2+1)^{-k}ds=2^{\rho_0-1}\omega_{l-1}B(k-\rho_0,\rho_0).
\end{equation*}

Then replacing $f_k$ by \[\frac{f_k}{2^{\rho_0-1}\omega_{l-1}B(k-\rho_0.\rho_0)}\] in $\pi_R(f_k)$ gives still a trace class operator and in the proof of Proposition \ref{prop: simple} we have computed 
\begin{eqnarray*}
\int_Nf_k(n^{-1}a_\gamma m_\gamma n)dn&=& \frac{1}{\det\left(1-\mathrm{Ad}(m_\gamma a_\gamma)^{-1}|_{\mathfrak n}\right)}\int_N f_k(a_\gamma n)dn\\
&=& \frac{2^{\rho_0-1}e^{-\rho_0t}\omega_{l-1}B(k-\rho_0,\rho_0)}{\det\left(1-\mathrm{Ad}(m_\gamma a_\gamma)^{-1}|_{\mathfrak n}\right)} \cosh^{-k+\rho_0} L_\gamma,
\end{eqnarray*}

Hence,
\[\frac{1}{2^{\rho_0-1}\omega_{l-1}B(k-\rho_0,\rho_0)}\int_N f_k(n^{-1}a_\gamma m_\gamma n)dn=\frac{e^{-\rho_0L_\gamma}}{\det\left(1-\mathrm{Ad}(m_\gamma a_\gamma)^{-1}|_{\mathfrak n}\right)}\cosh^{-k+\rho_0} L_\gamma. \]


We define  \index{$\mathcal{R}(k)=\mathcal{R}(k;1)$, auxiliary zeta function to $\varphi\equiv 1$} for $\mathrm{Re}(k)>2\rho_0$

\begin{equation*}
\mathcal{R}(k;1):=\mathcal{R}(k):= \sum_{1\neq [\gamma]\in C\Gamma} \frac{L_{\gamma_0}e^{-\rho_0 L_\gamma}}{\det\left(1-\mathrm{Ad}(m_\gamma a_\gamma)^{-1}|_{\mathfrak n}\right)}(\cosh L_\gamma)^{(-k+\rho_0)L_\gamma} 
\end{equation*}
resp. \index{$\mathcal{Z}(k)=\mathcal{Z}(k;1)$, zeta function to $\varphi\equiv 1$}

\begin{equation}\label{eq: clzeta1}
\mathcal{Z}(k;1):=\mathcal{Z}(k):= \sum_{1\neq [\gamma]\in C\Gamma} \frac{L_{\gamma_0} e^{-\rho_0 L_\gamma}}{\det\left(1-\mathrm{Ad}(m_\gamma a_\gamma)^{-1}|_{\mathfrak n}\right)}e^{(-k+\rho_0)L_\gamma}. 
\end{equation}

It follows immediately from $(\ref{eq: tracmod})$ that

\begin{eqnarray*}
\mathcal{R}(k)&=&\mathrm{Tr}\left(\pi_R\left(\frac{f_k}{\omega_{l-1} 2^{\rho_0-1}B(k-\rho_0,\rho_0) }\right)\right)-\frac{f_k(e)}{\omega_{l-1} 2^{\rho_0-1}B(k-\rho_0,\rho_0) }\cdot |\Gamma\backslash G|\\
&=& \mathrm{Tr}\left(\pi_R\left(\frac{f_k}{\omega_{l-1} 2^{\rho_0-1} B(k-\rho_0,\rho_0) }\right)\right)-\frac{1}{\omega_{l-1} 2^{\rho_0-1}B(k-\rho_0,\rho_0) }\cdot |\Gamma\backslash G|
\end{eqnarray*}

Hence, $\mathcal R(k)$ is well-defined at least for $\mathrm{Re}(k)>2\rho_0$. Also, the proof of \ref{th: zeta} works as before to get
\[\mathcal{Z}(k)= \sum_{m=0}^\infty \beta(k-\rho_0;m)\mathcal{R}(k+2m)\] for $k\in \mathbb{C}$ with $ \mathrm{Re}(k)>2\rho_0 $.


We compare now $\mathcal Z$ to the logarithmic derivative $L_{S,\chi}(s,\sigma)$ of the zeta function \begin{eqnarray*}
Z_{S, \mathbf{1}}(k,\mathbf 1)&:=&\prod_{1\neq [\gamma]\in C\Gamma}\prod_{r=0}^\infty \det\left(1-\left(S^r\left(\mathrm{Ad}(m_\gamma a_\gamma)|_{\overline{\mathfrak{ n}}}\right)\right)e^{-(k+\rho_0)l_\gamma}\right)
\end{eqnarray*} from \cite[Def. 3.2]{BO}, when  $\chi=\mathbf{1}$ is the trivial character of $\Gamma$, $\sigma=\mathbf{1}$ the trivial representation of $M$ and $S^r$ denotes the $r^{th}$ symmetric power of an endomorphism. By Lemma 3.3. in \cite{BO} $Z_{S,\mathbf{1}}(s,\sigma)$ converges for $\mathrm{Re}(k)>\rho_0$ and $L_{S,\mathbf{1}}(s,\mathbf{1})$ is given by
\begin{equation}\label{eq: clzeta2}
L_{S,\mathbf{1}}(k,\mathbf{1}):=\frac{d}{dk}\ln Z_{S,\mathbf{1}}(k,\mathbf{1})=2\cdot\sum_{1\neq [\gamma]\in C\Gamma}\frac{l_\gamma}{n_\gamma}\frac{(-1)^{l-1}}{\det\left(1-\mathrm{Ad}(m_\gamma a_\gamma)|_{\mathfrak n}\right)}e^{(\rho_0-k)l_\gamma}.
\end{equation}

Recall that $\gamma^{n_\gamma}_0=\gamma$, that is $l_\gamma/n_\gamma=l_{\gamma_0}$. If we compare (\ref{eq: clzeta1}) with (\ref{eq: clzeta2}) we see using \[\det\left(1-\mathrm{Ad}(a_t)|_{\mathfrak{n}}\right)=e^{2\rho_0 t}\det\left(1-\mathrm{Ad}(a_{-t})|_\mathfrak{n}\right)\] that up to the constant $2(-1)^{l-1}$, $\mathcal{Z}(k)$ equals 
\begin{equation}\label{eq: zetacon}
Z_{S,\mathbf 1}(k-\rho_0,\mathbf 1),
\end{equation} in the special case that for all $\gamma\in \Gamma$, $m_\gamma$ is trivial. In particular for $l=2$, i.e. $G=SO_0(1,2)$. More general let us define \index{$c(a_\gamma m_\gamma)$, coefficient}
\[c(a_\gamma m_\gamma):=\frac{l_{\gamma_0}}{\det\left(1-\mathrm{Ad}(a_\gamma m_\gamma)^{-1}|_{\mathfrak n}\right)}. \]

We recall that for $\gamma\in \Gamma$, $l_\gamma=\sqrt{2(l-1)}l_\gamma$, thus  
\[\mathcal{Z}(k)=(\sqrt{2(l-1)})^{-1}\sum_{1\neq [\gamma]\in C\Gamma}c(a_\gamma m_\gamma)e^{-kL_\gamma}\] 
and
\[Z_{S,\mathbf 1}(k+\rho_0,\mathbf 1)=2(-1)^{l-1}\cdot\sum_{1\neq [\gamma]\in C\Gamma}c(a_\gamma^{-1}m_\gamma^{-1})e^{-kl_\gamma}.\]


\chapter{The spectral trace}\label{chap: strace}
In this chapter we want to compute the spectral trace of the trace class operator $\varphi\cdot \pi_R(f_k)$ from Chapter \ref{chap: gtrace}. We obtain a general formula involving Wigner distributions which is valid for all rank one symmetric spaces, see Theorem \ref{thm: strac} and a refined version thereof which involves Patterson-Sullivan distributions in Theorem \ref{thm: stracspec}. A careful analysis of this refined trace formula will show that we can define a meromorphic continuation of $\mathcal R(\varphi)$ and $\mathcal Z(\varphi)$ by Theorem \ref{thm: stracspec}. This will be done in the next chapter.

\section{First computations on the spectral trace}

Let $X=G/K$ be a symmetric space of noncompact type of rank one, $G=NAK$, $M=Z_K(A)$, $\Gamma\subset G$ a uniform lattice and \index{$B$, boundary of $X$} $B=K/M$ the boundary of $X=G/K$. We denote by $SX_\Gamma$ the unit sphere bundle of $X_\Gamma=\Gamma\backslash X$. It can be identified with $\Gamma\backslash G/M$, see Lemma \ref{eq: sphbund}. We also identify the dual \index{$\mathfrak a_\mathbb{C}^*$, complex dual of $\mathfrak a$} $\mathfrak a_{\mathbb C}^*$ of the complexification $\mathfrak a_{\mathbb{C}}=\mathfrak{a}\otimes \mathbb{C}$ with $\mathbb C$ by sending a linear functional $\lambda$ on $\mathfrak a_\mathbb{C}$ to $\lambda(H_0)$, where $H_0$ is defined in (\ref{def: h0}). By $\Omega$ we denote the Casimir operator as in (\ref{def: casimir}).
 
Further, let $\{\varphi_j\}_{j\in\mathbb{N}}$ be an orthonormal basis of $L^2(X_\Gamma)$ of automorphic eigenfunctions, in particular  \[\Omega\varphi_j=-(\lambda_j^2+\rho_0^2)\varphi_j \mbox{ , } j\in\mathbb N_0\] for $\lambda_j\in \mathbb{C}$ and $\varphi_j(\gamma g)=\varphi_j(g)$ for $\gamma\in\Gamma$. Then one can assume that 
\begin{equation*}
0=\lambda_0^2+\rho_0^2\leq \lambda_1^2+\rho_0^2\leq \cdots,
\end{equation*}
see \cite[(12.12)]{Wi}. Hence, there are the two cases $\lambda_j\in \mathbb{R}$ or $\lambda_j\in i\mathbb R$ and there are only finitely many $\lambda_j$ in the latter case, see \cite[Cor. 12.10]{Wi}. Thus, when we consider asymptotics, we can assume that $\lambda_j\in \mathbb{R}$. If $\varphi_j$ is an eigenfunction with $\lambda_j\in\mathbb R$, we say that $\varphi_j$ or $\lambda_j$ lies in the \index{principal series} principal series, otherwise we say that $\varphi_j$ resp. $\lambda_j$ is in the \index{complementary series} complementary series. Let \index{$j_0$, index seperating principal and complementary series} $j_0\in \mathbb{N}$ be such that $\varphi_j$ is in the complementary series iff $j\leq j_0$. We make the assumption that $\lambda_{j_0}$ possibly equals 0, while $\lambda_j\neq 0$ for all $j\neq j_0$.

We recall that we defined the horocycle bracket by \[\langle gK, kM \rangle=A(k^{-1}g),\] see (\ref{def: horobrac}), where $A:G\to \mathfrak{a}$ is the projection belonging to $G=NAK$.
One can show that there is for any automorphic eigenfunction $\varphi$ with eigenvalue $-(\lambda^2+\rho_0^2)$ some \index{$T_\varphi$, boundary value} $T_\varphi$ in  the dual \index{$\mathcal D'(B)$, distributions on $B$} $\mathcal{D}'(B)$ of $C^\infty(B)$,  such that for $z\in X_\Gamma$ \begin{equation}\label{def: boundv}\varphi(z)=\langle e^{(i\lambda+\rho)\langle z,\cdot\rangle},T_\varphi\rangle_{B}=:\int_Be^{(i\lambda+\rho)\langle z,b\rangle}dT_\varphi(b), \end{equation}
see \cite{HS} or \cite[Ch.II \S 4 C(iii)]{GGA}. $T_\varphi$ is unique up to Weyl group action. That is, if we assume $\mathrm{Re}(\lambda)\geq 0$, then $T_\varphi$ is unique. We will always assume that $\mathrm{Re}(\lambda_j)\geq 0$ for all eigenvalues $\mu_j$ and call \index{boundary value} $T_\varphi$ \textit{boundary value} of $\varphi$.

We state an analogue of Proposition 2.10 in \cite{Z} for the special case of an automorphic eigenfunction. It shows that the operator $\pi_R(f)$, defined in (\ref{def: fourierr}), is diagonalized by the orthonormal basis $\{\varphi_j\}_{j\in \mathbb{N}}$. Finally, we recall the definitions of the spherical transform $\mathcal S(f,\cdot)$ in (\ref{def: sphericaltransform})
. 
\begin{proposition}\label{prop:1}
$\pi_R(f)\varphi_j=\mathcal{S}(f,-\lambda_j)\varphi_j$, where $f\in L^1(G//K)$. 
\end{proposition}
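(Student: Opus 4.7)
The plan is to reduce the statement to the exponential eigenfunctions of the horocycle picture and then evaluate an explicit scalar. Using the boundary-value representation (\ref{def: boundv}), one writes
\[
\pi_R(f)\varphi_j(z) \;=\; \int_G f(g)\,\varphi_j(zg)\,dg \;=\; \int_G f(g)\int_B e^{(i\lambda_j+\rho)\langle zg,b\rangle}\,dT_{\varphi_j}(b)\,dg,
\]
and the first step is to justify Fubini so as to exchange the two integrations. Since $f\in L^1(G)$, the function $g\mapsto e^{(i\lambda_j+\rho)\langle zg,b\rangle}$ is bounded uniformly in $b\in B=K/M$ (for $z$ fixed), and $T_{\varphi_j}$ is a continuous linear functional on $C^\infty(B)$, the exchange is routine; this pushes everything onto proving the pointwise identity
\[
\int_G f(g)\,e^{(i\lambda_j+\rho)\langle zg,b\rangle}\,dg \;=\; \mathcal{S}(f,-\lambda_j)\,e^{(i\lambda_j+\rho)\langle z,b\rangle}
\]
for fixed $z\in X$ and $b\in B$.

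To verify that identity I would use the cocycle formula for the Iwasawa projection. Writing $b=kM$ and $y:=k^{-1}z$, one has $\langle zg,b\rangle=A(yg)$ and $\langle z,b\rangle=A(y)$. Decomposing $y=n(y)\exp A(y)\,k(y)$ in $G=NAK$ yields the standard identity
\[
A(yg)\;=\;A(y)+A(k(y)\,g),
\]
so that $e^{(i\lambda_j+\rho)A(yg)}=e^{(i\lambda_j+\rho)A(y)}\,e^{(i\lambda_j+\rho)A(k(y)g)}$. Substituting $g\mapsto k(y)^{-1}g$ and invoking unimodularity of $G$ together with the left-$K$-invariance of $f$, the remaining factor collapses:
\[
\int_G f(g)\,e^{(i\lambda_j+\rho)A(k(y)g)}\,dg \;=\; \int_G f(g)\,e^{(i\lambda_j+\rho)A(g)}\,dg,
\]
which is independent of both $z$ and $b$.

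It then suffices to match this last constant with $\mathcal{S}(f,-\lambda_j)$. Using the chosen Iwasawa decomposition $G=ANK$ with Haar measure $dg=da\,dn\,dk$ and the bi-$K$-invariance of $f$, the computation $A(ank)=\log a$ (since $ank=(ana^{-1})ak\in NAK$) gives
\[
\int_G f(g)\,e^{(i\lambda_j+\rho)A(g)}\,dg \;=\; \int_A\int_N f(an)\,e^{(i\lambda_j+\rho)\log a}\,dn\,da,
\]
which by Proposition \ref{prop: spherefact} is exactly $\mathcal{S}(f,-\lambda_j)$. Combining the three steps yields the proposition. I do not expect a serious obstacle; the only subtle points are the application of Fubini (harmless since $T_{\varphi_j}$ is a distribution of finite order on the compact space $B$ and $f\in L^1$) and the bookkeeping of Haar-measure conventions in verifying the last Iwasawa identity.
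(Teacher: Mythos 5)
The paper does not actually write out a proof; it simply cites \cite[Th.\,7, Cor.\,8]{Ga2}, and the standard route there is representation-theoretic: for bi-$K$-invariant $f$, the operator $\pi_R(f)$ preserves the one-dimensional $K$-fixed line in each spherical irreducible summand and acts on it by the scalar given by the spherical transform (a consequence of the functional equation for spherical functions). Your route through the Poisson/boundary-value representation $\varphi_j(z)=\int_B e^{(i\lambda_j+\rho)\langle z,b\rangle}\,dT_{\varphi_j}(b)$ is a genuinely different strategy, and the algebraic part is correct: the cocycle identity $A(yg)=A(y)+A(k(y)g)$, the substitution $g\mapsto k(y)^{-1}g$ combined with left-$K$-invariance of $f$, and the identification of the residual constant with $\mathcal S(f,-\lambda_j)$ through Proposition~\ref{prop: spherefact} all check out.

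However, your justification of the interchange is wrong at the step where it matters. You claim that $g\mapsto e^{(i\lambda_j+\rho)\langle zg,b\rangle}$ is bounded uniformly in $b$, so that $f\in L^1$ licenses the exchange. This is false: for $\lambda_j\in\mathbb R$ the modulus equals $e^{\rho\langle zg,b\rangle}$, which blows up as $g$ escapes to infinity in the direction selected by $b$. As a consequence, for arbitrary $f\in L^1(G//K)$ the pointwise integral $\int_G f(g)\,e^{(i\lambda_j+\rho)\langle zg,b\rangle}\,dg$ --- equivalently, after your substitutions, the iterated Abel--Fourier integral $\int_A\int_N f(an)e^{(i\lambda_j+\rho)\log a}\,dn\,da$ --- need not converge absolutely, even though $\mathcal S(f,-\lambda_j)=\int_G f\,\varphi_{\lambda_j}$ does (since $\varphi_{\lambda_j}$ is bounded). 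Nor is the interchange with the distribution $T_{\varphi_j}\in\mathcal D'(B)$ ``routine'' merely because $T_{\varphi_j}$ has finite order on a compact space; one needs control of the $C^\infty(B)$-seminorms of $b\mapsto e^{(i\lambda_j+\rho)\langle zg,b\rangle}$ uniformly over the unbounded domain of $g$, and that control is absent. The clean repair is a two-step argument: first prove the identity for $f\in C_c^\infty(G//K)$, where the $G$-integral runs over a compact set and $g\mapsto e^{(i\lambda_j+\rho)\langle zg,\cdot\rangle}$ is a compactly supported continuous map into $C^\infty(B)$, so the interchange with $T_{\varphi_j}$ is legitimate; then extend to $L^1(G//K)$ by density, using that both $f\mapsto\pi_R(f)\varphi_j$ and $f\mapsto\mathcal S(f,-\lambda_j)$ are $L^1$-continuous ($\|\pi_R(f)\|\le\|f\|_{L^1}$ and $|\mathcal S(f,-\lambda_j)|\le\|\varphi_{\lambda_j}\|_\infty\|f\|_{L^1}$). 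As written, your proof has a gap at precisely that point.
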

\begin{proof}
\cite[Th.7,Cor.8 \text{and following remark}]{Ga2} \label{thm:1}
\end{proof}

By the invariance of the spherical transform under the Weyl group we also have $\mathcal{S}(f,-\lambda_j)=\mathcal{S}(f,\lambda_j)$ for all $j$.

Let us fix an automorphic eigenfunction $\varphi_k$ from the orthonormal basis $\{\varphi_j\}$ as we did in Section \ref{sec: specialization}. 
We suppose in addition that $f\in C^ \infty(G//K)$ is such that $\pi_R(f)$ is of trace class, for example $f=f_k$ from Section \ref{sec: zeta}. We then conclude that, as for 
Proposition \ref{prop: traceclass}, since $\varphi_k$ is bounded, that $\varphi_k\cdot \pi_R(f)$ is also of trace class and maps $L^2(\Gamma\backslash X)$ into itself. 

In contrast to Chapter \ref{chap: gtrace}, where we computed the trace by integrating a kernel for $\varphi_k\cdot \pi_R(f)$ over $\Gamma\backslash X$ to get the trace, we now sum matrix coefficients which we build from pairing $\varphi_k\cdot \pi_R(f)$ with the orthonormal basis $\{\varphi_j\}$. Explicitly, 
\begin{eqnarray}\label{eqn: strace}
\mbox{Tr} (\varphi_k\cdot\pi_R(f)) & = & \underset{j}\sum \langle\varphi_k\cdot\pi_R(f)\varphi_j,\varphi_j\rangle_{L^2(X_\Gamma)}\nonumber\\
& \overset{\text{Prop.} \ref{prop:1}}=& \underset{j}\sum \langle\varphi_k \mathcal{S}(f,-\lambda_j)\varphi_j,\varphi_j \rangle_{L^2(X_\Gamma)}\nonumber\\
&=& \underset{j}\sum \langle\varphi_k\varphi_j,\varphi_j\rangle_{L^2(X_\Gamma)}\mathcal{S}(f,-\lambda_j \rangle.
\end{eqnarray}

Next we define a map $\mathrm{Op}$ from $C(X_\Gamma)$ into the bounded operators on $L^2(X_\Gamma)$ by \begin{equation}\label{def: opspec}
\mathrm{Op}(a)\varphi_j:=a\cdot \varphi_j.
\end{equation}

Then we can infer from (\ref{eqn: strace})
\begin{eqnarray*}
\mbox{Tr} (\varphi_k\cdot\pi_R(f))&=&\underset{j}\sum \langle\varphi_k\varphi_j,\varphi_j\rangle_{L^2(X_\Gamma)}\mathcal{S}(f,-\lambda_j).\\&=& \underset{j}\sum \langle \mathrm{Op}(\varphi_k)\varphi_j,\varphi_j \rangle_{L^2(X_\Gamma)} \mathcal{S}(f,-\lambda_j).
\end{eqnarray*}

\begin{remark}
This definition of $\mathrm{Op}$ is a special instance of a $\psi DO$-calculus for symbols from \cite[Def. 4.14.]{Sch}, that is, a map $\mathrm{Op}$ from symbols $a\in C^\infty(X\times B)$ into continuous operators from \[C^\infty(X)_c\to C^\infty(X),\] resp. from the dual of $C^\infty(X)$ \index{$\mathcal{D}'(X)$, distributions on $X$} \index{$\mathcal E'(X)$, dual of $C^\infty(X)$} \[\mathcal{E}'(X):=(C^\infty(X))'\to \mathcal{D}'(X):=(C_c^\infty(X))'\] into the dual of $C_c^\infty(X)$,  see \cite[Th. 4.15.]{Sch}. For $a\in C_c^\infty(X\times B)$, $\mathrm{Op}(a)$ is given by
\begin{equation}\label{def: generalop}
\mathrm{Op}(a)u(x)= \frac{1}{2}\int_{\mathfrak{a}_+^*}\int_{K/M}e^{(i\lambda+\rho)\langle z,b\rangle}\mathcal{F}(u,\lambda,b)a(x,b)db |c(\lambda)|^{-2}d\lambda,
\end{equation}
where $\mathcal{F}(u,\lambda,b)$ is Helgason's non-euclidean Fouriertransform, see (\ref{def: fouriertransf}), 
$c(\lambda)$ is Harish-Chandra's 
$c$-function and we identified $G/M=X\times K/M$, i.e. $a(z)=a(x,b)$ for $z\in G/M$, $x\in X$, $b\in K/M$. Note that this integral is finite, as $\mathcal{F}(u,\lambda,b)$ is rapidly decreasing in $\lambda$, see \cite[Ch.III Th.5.1.]{GASS}. Furthermore, it satisfies 
\begin{equation}\label{eq: opae}
\mathrm{Op}(a)e^{(i\lambda+\rho)\langle z,b\rangle}=a(z,b)e^{(i\lambda+\rho)\langle z,b\rangle}
\end{equation} 
for $a\in C(G/M)$, see \cite[(4.2)]{HS} or \cite[(4.18)]{Sch}.

The linear map \index{Wigner distribution} $a\mapsto \langle \mathrm{Op}(a)\varphi_j,\varphi_j\rangle_{L^2(X_\Gamma)}$ is called \index{Wigner distribution} the \textit{Wigner distribution} associated with $\varphi_j$ on $C^\infty(SX_\Gamma)$.

With (\ref{eq: opae}) we also see that the definition of $\mathrm{Op}$ from \cite[Def. 4.14.]{Sch} extends (\ref{def: opspec}), since
\begin{eqnarray*}\label{eq: oponb}
\mathrm{Op}(\varphi_k)\varphi_j(x)&=& \mathrm{Op}(\varphi_k)\int_B e^{(i\lambda+\rho)\langle x,b\rangle}dT_{\varphi_j}(b) \mbox{ , } T_{\varphi_j}\in\mathcal D'(B) \mbox{ boundary value, see (\ref{def: boundv})} \nonumber\\ &=& \int_B \mathrm{Op}(\varphi_k)e^{(i\lambda+\rho)\langle x,b\rangle}dT_{\varphi_j}(b)\mbox{ , see \cite[(4.7)]{HS}} \nonumber\\ &=& \int_B\varphi_k(x)e^{(i\lambda+\rho)\langle x,b\rangle}dT_{\varphi_j}(b)\\&& \mbox{ with } \mathrm{Op}(a)e^{(i\lambda+\rho)\langle x,b\rangle}=a(x,\lambda,b)e^{(i\lambda+\rho)\langle x,b\rangle}, \mbox{ see  \cite[(4.18)]{Sch}}\nonumber\\ &=&\varphi_k(x)\int_Be^{(i\lambda+\rho)\langle x,b\rangle} dT_{\varphi_j}(b) =\varphi_k(x)\varphi_j(x).
\end{eqnarray*}
for $x\in X$.
\end{remark}

To sum up, we have the following.

\begin{proposition}\label{thm: strac}
Let $X=G/K$ be a noncompact symmetric space of rank one and $\Gamma\subset G$ a uniform lattice. Further, we fix a orthonormal basis $\{\varphi_j\}_j$ of $L^2(\Gamma\backslash X)$ of automorphic Laplace-eigenfunctions with eigenvalues \[-\left(\lambda_j^2+\rho_0^2\right)\] and choose $\varphi_k$ in $\{\varphi_j\}_j$. Finally, let $f\in C^\infty(G//K)$ be such that $\pi_R(f)$ is of trace class, where $\pi_R$ is the right-regular representation of $G$ on $L^2(\Gamma\backslash G)$. For example $f\in C^\infty_c(G//K)$ suffices. Then $\varphi_k\cdot \pi_R(f)$ is also of trace class with trace given by
\begin{equation}\label{eq: spectracefi}
\mathrm{Tr}(\varphi_k\cdot \pi_R(f))=\sum_j \langle \mathrm{Op}(\varphi_k)\varphi_j,\varphi_j\rangle_{L^2(SX_\Gamma)}\mathcal{S}(f,\lambda_j).
\end{equation}

See (\ref{def: opspec}) for the definition of $\mathrm{Op}(\varphi_k)$. 
\end{proposition}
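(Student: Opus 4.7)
The plan is to combine two inputs that are already in place: the spectral decomposition of $\pi_R(f)$ furnished by Proposition \ref{prop:1}, and the fact that multiplication by a bounded function preserves the trace class property. Once these are in hand, the formula (\ref{eq: spectracefi}) reduces to summing diagonal matrix coefficients in the diagonalizing basis $\{\varphi_j\}$.

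First I would dispatch the trace class assertion. Since $X_\Gamma$ is compact and $\varphi_k \in C^\infty(X_\Gamma)$, the multiplication operator $u \mapsto \varphi_k \cdot u$ is bounded on $L^2(X_\Gamma)$ (with norm $\|\varphi_k\|_\infty$). Composition of a bounded operator with a trace class operator is trace class, so $\varphi_k \cdot \pi_R(f)$ is trace class on $L^2(X_\Gamma)$. That $\pi_R(f)$, and hence $\varphi_k \cdot \pi_R(f)$, actually maps $L^2(X_\Gamma)$ into itself follows as in Proposition \ref{prop: traceclass}, using the bi-$K$-invariance of $f$ and the right-$K$-invariance of $\varphi_k$.

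Next I would compute the trace. Because the operator is trace class, its trace equals $\sum_j \langle \varphi_k \cdot \pi_R(f)\varphi_j, \varphi_j\rangle_{L^2(X_\Gamma)}$ for any orthonormal basis, and the series converges absolutely. Applying Proposition \ref{prop:1} to move $\pi_R(f)$ past $\varphi_j$ gives
\begin{equation*}
\mathrm{Tr}(\varphi_k \cdot \pi_R(f)) = \sum_j \mathcal{S}(f,-\lambda_j)\,\langle \varphi_k \varphi_j, \varphi_j\rangle_{L^2(X_\Gamma)}.
\end{equation*}
The Weyl group invariance of the spherical transform, $\mathcal{S}(f,-\lambda_j) = \mathcal{S}(f,\lambda_j)$, converts this into the form that appears in (\ref{eq: spectracefi}).

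The only remaining bookkeeping step is to identify $\langle \varphi_k \varphi_j, \varphi_j\rangle_{L^2(X_\Gamma)}$ with $\langle \mathrm{Op}(\varphi_k)\varphi_j,\varphi_j\rangle_{L^2(SX_\Gamma)}$. By the defining property (\ref{def: opspec}), $\mathrm{Op}(\varphi_k)\varphi_j = \varphi_k \cdot \varphi_j$, and since the symbol $\varphi_k$ is pulled back from $X_\Gamma$ (constant along the $M$-fibers of $SX_\Gamma \to X_\Gamma$), the pairing on $SX_\Gamma$ collapses to the pairing on $X_\Gamma$ after integrating out the fiber (which has unit mass in the chosen normalization). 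I do not expect any genuine obstacle here; the whole argument is a short combination of Proposition \ref{prop:1} with the elementary trace class calculus, and the mildly delicate point is merely confirming the compatibility of the two inner products under the $\mathrm{Op}$ calculus.
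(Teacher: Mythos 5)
Your proposal matches the paper's own argument almost line for line: trace class via boundedness of the multiplication operator, trace as a sum of diagonal matrix coefficients in the eigenbasis, Proposition \ref{prop:1} to extract $\mathcal S(f,-\lambda_j)$, Weyl invariance, and the identification $\mathrm{Op}(\varphi_k)\varphi_j=\varphi_k\varphi_j$ from (\ref{def: opspec}). The remark about the $L^2(SX_\Gamma)$ pairing collapsing to the $L^2(X_\Gamma)$ pairing for $M$-invariant functions (with $M$ of unit mass) is the correct reading of the paper's notational shift, so no gap remains.
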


We call (\ref{eq: spectracefi}) also the \index{spectral trace} spectral trace of the operator $\varphi_k\cdot \pi_R(f)$.

\section{From Wigner to Patterson-Sullivan distributions}\label{conn}

In this section $G/K$ is at first an arbitrary noncompact symmetric space of rank one. From  Theorem \ref{thm: diffeqex} on we will  specialize to real hyperbolic spaces. The next step then is to relate $\langle \mathrm{Op}(\varphi_k)\varphi_j,\varphi_j\rangle$ to the Patterson-Sullivan distributions. 

Recall that $B=K/M$. We consider the diagonal action of of $G$ on \index{$B^{(2)}$, $B^{(2)}=(B\times B)-\Delta(B)$} \[B^{(2)}:=(B\times B)-\Delta(B),\] where $\Delta(B)$ is the diagonal in $B\times B$. One can show that this action is transitive and that the stabilizer of $(M,wM)$, where $w$ is the non-trivial Weyl group element, is $MA$, \cite[Prop. 2.4.]{HS}. Then we identify $B^{(2)}$ with $G/MA$ and write $g(b,b')$ for an element in $G$ with $g(b,b')\cdot (M,wM)=(g(b,b')\cdot M,g(b,b')\cdot wM)=(b,b')$.

We now recall some definitions from \cite{HHS}, \cite{HS} and \cite{Sch}. For a function $f$ on $G/M$ the \index{Radon transform} Radon transform on $G/M$ is given by 
\begin{equation*} 
\mathcal Rf(b,b'):=\int_Af(g(b,b')aM)da,
\end{equation*}
see \cite[Def. 4.3]{HHS}. This is well-defined by the unimodularity of $A$ and maps $C_c^\infty(G/M)$ into $C_c^\infty(G/MA)$, see \cite[Lem. 4.4]{HHS}. Next we define the so-called \index{intertwining operators} \textit{intertwining operators}.
For $\lambda\in\mathfrak{a}_{\mathbb{C}}^*$ and $a\in C(G)$ we set \index{$L_\lambda$, intertwiner} \begin{equation*} L_\lambda a(g):=\int_N e^{-(i\lambda+\rho)H(n^{-1}w)}a(gn)dn,\end{equation*} see \cite[(1.3)]{HS}, \cite[(6.35)]{Sch}, whenever the integral exists, where $H(g^{-1}k)=-A(k^{-1}g)$ and $w$ is the non-trivial Weyl group element. One can show that $L_\lambda$ maps $C_c^\infty(G/M)$ into $C^\infty(G/M)$, \cite[Lem. 6.30.]{Sch}.

We continue with the notion of intermediate values.
For $\lambda\in \mathfrak{a}^*_{\mathbb C}$ we define \index{intermediate value} \index{$d_\lambda(gMA)$, intermediate value} the \textit{intermediate value} on $G/MA$
\begin{equation*} d_\lambda(gMA):=e^{(i\lambda+\rho)(H(g)+H(gw))},\end{equation*}
see \cite[Def. 4.1]{HHS}. Then we are in the position to define \textit{Patterson-Sullivan distributions}. We start with distributions on $G/M$.
\begin{definition}\cite[Def. 4.8]{HHS}, \cite[Def. 6.12]{Sch}\\
\index{Patterson-Sullivan distributions} \index{$\mathrm{PS}_{\varphi}$, Patterson-Sullivan distributions} For an automorphic Laplace-eigenfunction $\varphi$ with eigenvalue \[-\left(\lambda^2+\rho_0^2\right)\] the Patterson-Sullivan distribution $\mathrm{PS}_{\varphi}$ on $G/M$ is given by \begin{equation*} \int_{G/M} a(gM) d\mathrm{PS}_{\varphi}(gM):=\langle a,\mathrm{PS}_{\varphi} \rangle:=\int_{B^{(2)}}\mathcal{R}a(b,b')d_{\lambda}(b,b')dT_{\varphi}(b)dT_{\varphi}(b'),\end{equation*} 
where $T_\varphi$ is the boundary value of $\varphi$.
\end{definition}

We call $\chi\in C_c^\infty(G/M)$ \index{$chi$@$\chi$, cut-off} \index{smooth fundamental domain cut-off} a \textit{smooth fundamental domain cut-off}, if \begin{equation*} \underset{\gamma\in\Gamma}\sum \chi(\gamma g)=1\end{equation*} for all $g\in G$. Smooth fundamental domain cut-off functions exist. If for example $f\in C_c^\infty(G/M)$ is such that $f$ equals 1 on a fundamental domain for $\Gamma$, then $\chi(g)=\frac{f(g)}{\sum_{\gamma\in\Gamma}f(\gamma g)}$ is a smooth cut-off.

We have the following two Theorems, \ref{thm: inter} and \ref{thm: 2}, which are generalizations of Theorem 1.1 and Lemma 6.4 in \cite{AZ}.
\begin{theorem}\label{thm: inter}
\cite[Theorem 1.1]{AZ}, \cite[Theorem 1.1]{HS}, \cite[Theorem 6.40.]{Sch} \\ Let $a\in C^\infty(SX_\Gamma)$. Then \begin{equation*} \langle \mathrm{Op}(a)\varphi_{\lambda},\varphi_{\lambda}\rangle_{L^2(X_\Gamma)} =\langle L_{\lambda}(\chi a),\mathrm{PS}_{\varphi}\rangle_{G/M} ,\end{equation*} where $\chi$ is an arbitrary smooth fundamental domain cut-off.
\end{theorem}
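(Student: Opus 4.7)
The plan is to expand the inner product on the left through boundary values, unfold from $X_\Gamma$ to $X$ using the cut-off $\chi$, and then recognise the right-hand side through a change of variables on $X$ adapted to the pair $(b,b')\in B^{(2)}$. First, I would use the eigen-property \eqref{eq: opae} of the calculus $\mathrm{Op}$ together with the boundary value representation \eqref{def: boundv} to write
\[
\mathrm{Op}(a)\varphi_\lambda(z)=\int_B a(z,b)\,e^{(i\lambda+\rho)\langle z,b\rangle}\,dT_\varphi(b),\qquad \overline{\varphi_\lambda(z)}=\int_B e^{(-i\bar\lambda+\rho)\langle z,b'\rangle}\,d\overline{T_\varphi}(b').
\]
Since $\sum_{\gamma\in\Gamma}\chi(\gamma z)=1$ and the whole integrand in $z$ is $\Gamma$-invariant, unfolding from $X_\Gamma$ to $X$ and interchanging orders (which is legitimate because $\chi$ has compact support in $G/K$ and $T_\varphi\in\mathcal D'(B)$) yields
\[
\langle \mathrm{Op}(a)\varphi_\lambda,\varphi_\lambda\rangle_{L^2(X_\Gamma)} = \int_{B\times B}\!\Bigl[\int_X \chi(z)\,a(z,b)\,e^{(i\lambda+\rho)\langle z,b\rangle+(-i\bar\lambda+\rho)\langle z,b'\rangle}\,dz\Bigr]dT_\varphi(b)\,d\overline{T_\varphi}(b').
\]

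The geometric heart of the proof is to rewrite the bracketed inner integral in coordinates attached to $(b,b')$. For $(b,b')\in B^{(2)}$ pick $g(b,b')\in G$ with $g(b,b')\cdot(eM,wM)=(b,b')$; then $(n,a)\mapsto g(b,b')\,na\cdot o$ is a diffeomorphism $N\times A\to X$ with Jacobian $e^{-2\rho\log a}$, i.e.\ $dz=e^{-2\rho\log a}\,dn\,da$. Using the cocycle identities for the Iwasawa projection (equivalently for the horocycle bracket) together with $\langle na\cdot o,eM\rangle=\log a$ and $\langle na\cdot o,wM\rangle=\log a-H(n^{-1}w)$, one obtains
\[
\langle g(b,b')na\cdot o,b\rangle=\log a+H(g(b,b')),\qquad \langle g(b,b')na\cdot o,b'\rangle=\log a-H(n^{-1}w)+H(g(b,b')w).
\]
Substituting these expressions into the exponential, the $\log a$-contributions combine to $2\rho\log a$, cancelling exactly the Jacobian, while the remaining terms factorise as $d_\lambda(b,b')\cdot e^{-(i\lambda+\rho)H(n^{-1}w)}$ by definition of the intermediate value. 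The $N$-integration then assembles precisely into $L_\lambda(\chi a)(g(b,b')aM)$, and the subsequent $A$-integration produces the Radon transform, so the bracketed quantity equals $d_\lambda(b,b')\,\mathcal R\bigl(L_\lambda(\chi a)\bigr)(b,b')$. Feeding this back into the double boundary integral gives exactly the defining expression of $\langle L_\lambda(\chi a),\mathrm{PS}_\varphi\rangle_{G/M}$.

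The main obstacle will be the coordinate calculation in the second paragraph: one has to verify the two cocycle-type identities for $\langle\cdot,b\rangle$ and $\langle\cdot,b'\rangle$ in the $(n,a)$-parameterisation based at $g(b,b')$ so that the integrand factors cleanly into the intermediate value, the kernel $e^{-(i\lambda+\rho)H(n^{-1}w)}$ of $L_\lambda$, and a Jacobian-cancelling piece $e^{2\rho\log a}$. Secondary technical points are the justification of Fubini when the interior integrals are viewed as smooth functions paired with $T_\varphi\otimes\overline{T_\varphi}$ (the required smoothness in $(b,b')$ and rapid decrease come from the compact support of $\chi$ and the smooth dependence of the Poisson kernel on $b$), and the independence of the right-hand side from the choice of $\chi$, which follows because the difference of two cut-offs is $\Gamma$-invariantly trivial and $\mathrm{PS}_\varphi$ is built from $\Gamma$-equivariant boundary data. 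For complementary-series $\lambda$ one must moreover be careful that $\overline{T_\varphi}$ corresponds under the Weyl-group normalisation to the other root of the spectral equation, so that $-i\bar\lambda$ plays the same role in $d_\lambda$ as $-i\lambda$ does; this is a book-keeping step rather than a genuine obstruction.
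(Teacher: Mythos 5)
The paper does not prove this theorem itself — it cites [AZ, Th.\ 1.1], [HS, Th.\ 1.1] and [Sch, Th.\ 6.40] — so your proof can only be measured against those sources and against the closely-related computation the paper carries out for Proposition~\ref{thm: 2}. Your overall strategy (expand $\mathrm{Op}(a)\varphi_\lambda$ and $\varphi_\lambda$ through boundary values, unfold from $X_\Gamma$ to $X$ via $\chi$, switch to the $N\times A$ parametrisation based at $g(b,b')$, and recognise $d_\lambda\cdot\mathcal R\circ L_\lambda$ after the Jacobian cancels) is precisely the one used in those references, and your coordinate identities and the Jacobian $e^{-2\rho\log a}$ are correct.

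The gap is in how you write $\overline{\varphi_\lambda(z)}$. You take the complex conjugate of the boundary representation, obtaining
$\overline{\varphi_\lambda(z)}=\int_B e^{(-i\bar\lambda+\rho)\langle z,b'\rangle}\,d\overline{T_\varphi}(b')$,
so the exponent on $\langle z,b'\rangle$ is $(-i\bar\lambda+\rho)$ and the distribution is $\overline{T_\varphi}$. But the definition of $\mathrm{PS}_\varphi$ requires the \emph{same} $T_\varphi$ on both factors of $B^{(2)}$ and the exponent $(i\lambda+\rho)$ on \emph{both} horocycle brackets (this is exactly what the paper's step ``Lemma 5.15 in [HS]'' in the proof of Proposition~\ref{thm: 2} gives: $\int_{B^2}\int_X \chi a(z,b)\,e^{(i\lambda+\rho)(\langle z,b\rangle+\langle z,b'\rangle)}\,dz\,dT_\varphi\,dT_\varphi$). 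With your exponents, after the $\log a$ contributions cancel the Jacobian, what remains is $e^{(i\lambda+\rho)H(g)}e^{(-i\bar\lambda+\rho)H(gw)}$, not $d_\lambda(b,b')=e^{(i\lambda+\rho)(H(g)+H(gw))}$, and the $N$-integral acquires the kernel $e^{(i\bar\lambda-\rho)H(n^{-1}w)}$ rather than the $L_\lambda$-kernel $e^{-(i\lambda+\rho)H(n^{-1}w)}$. For the principal series ($\lambda\in\mathbb R$) these genuinely differ, and $\overline{T_\varphi}$ is the boundary value at $-\lambda$, related to $T_\varphi$ by a Knapp--Stein intertwiner that contributes a nontrivial scalar; your closing remark has the cases reversed (for the complementary series $-i\bar\lambda=i\lambda$, so there the issue disappears, but for the principal series it does not). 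The clean repair is not to conjugate the representation at all: since $\varphi_\lambda$ can be chosen real, $\overline{\varphi_\lambda(z)}=\varphi_\lambda(z)=\int_B e^{(i\lambda+\rho)\langle z,b'\rangle}\,dT_\varphi(b')$, and then both exponents are $(i\lambda+\rho)$ and both factors carry the same $T_\varphi$, matching the definition of $\mathrm{PS}_\varphi$ exactly.
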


We will use this theorem to extend Patterson-Sullivan distributions $\mathrm{PS}_\varphi$ to the range $L_\lambda(C_c^\infty(G/M))$ of the intertwiner $L_\lambda$. The next lemma allows us to define Patterson-Sullivan distributions on $\Gamma\backslash G/M$.
 
\begin{lemma}\label{lem: 3}\cite[4.12]{HHS}, \cite[3.5]{AZ}\\
Let $T\in \mathcal D'(\Gamma\backslash G/M)$, $a\in C^\infty(\Gamma\backslash G/M)$ and $a_1,a_2\in C_c^\infty(G/M)$ with $\underset{\gamma\in\Gamma}\sum a_i(\gamma gM)=a(gM)$ for $i=1,2$. Then $\langle a_1,T\rangle_{G/M}=\langle a_2,T\rangle_{G/M}$.
\end{lemma}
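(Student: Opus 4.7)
\medskip

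\noindent\textbf{Proof plan for Lemma \ref{lem: 3}.}

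The plan is to reduce the claim to the statement that if $b\in C_c^\infty(G/M)$ satisfies $\sum_{\gamma\in\Gamma}b(\gamma gM)=0$ for all $g\in G$, then $\langle b,T\rangle_{G/M}=0$. Setting $b:=a_1-a_2$, the hypothesis $\sum_\gamma a_1(\gamma gM)=a(gM)=\sum_\gamma a_2(\gamma gM)$ gives exactly this vanishing of the $\Gamma$-average, so linearity of the pairing reduces the lemma to this one assertion.

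Next I would unwind the meaning of the pairing $\langle \,\cdot\,,T\rangle_{G/M}$. Since $T$ lives on $\Gamma\backslash G/M$, its natural interpretation on $G/M$ is the $\Gamma$-invariant distribution $\widetilde T\in\mathcal D'(G/M)$ obtained by pull-back under the projection $G/M\to\Gamma\backslash G/M$, and the pairing with a compactly supported test function is just $\langle b,\widetilde T\rangle_{G/M}$. The problem then becomes: a compactly supported $b$ whose $\Gamma$-orbit sum is zero must pair trivially with any $\Gamma$-invariant distribution.

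To prove that, I would fix a smooth fundamental domain cut-off $\chi\in C_c^\infty(G/M)$, i.e.\ $\sum_{\gamma\in\Gamma}\chi(\gamma gM)=1$ for all $g$; such a $\chi$ exists by the construction recalled just before Theorem \ref{thm: inter}. Because $b$ has compact support, at most finitely many translates $\chi(\gamma\,\cdot\,)$ are non-zero on $\mathrm{supp}(b)$, so the interchange of sum and pairing in
\begin{equation*}
\langle b,\widetilde T\rangle_{G/M}
=\Bigl\langle b\cdot\sum_{\gamma\in\Gamma}\chi(\gamma\,\cdot\,),\widetilde T\Bigr\rangle_{G/M}
=\sum_{\gamma\in\Gamma}\bigl\langle b(\,\cdot\,)\,\chi(\gamma\,\cdot\,),\widetilde T\bigr\rangle_{G/M}
\end{equation*}
is legitimate and only involves finitely many non-trivial terms.

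Finally I would invoke the $\Gamma$-invariance of $\widetilde T$: the change of variable $gM\mapsto \gamma^{-1}gM$ gives $\langle b(\,\cdot\,)\chi(\gamma\,\cdot\,),\widetilde T\rangle=\langle b(\gamma^{-1}\,\cdot\,)\chi(\,\cdot\,),\widetilde T\rangle$. Recollecting the sum yields
\begin{equation*}
\langle b,\widetilde T\rangle_{G/M}
=\Bigl\langle\chi\cdot\sum_{\gamma\in\Gamma}b(\gamma^{-1}\,\cdot\,),\widetilde T\Bigr\rangle_{G/M}=0,
\end{equation*}
since the inner sum vanishes identically by assumption (after relabelling $\gamma\mapsto\gamma^{-1}$, which is a bijection of $\Gamma$). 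I do not foresee a serious obstacle; the only point that needs care is the interchange of summation and the distributional pairing, and this is handled by the compact support of $b$ (which ensures local finiteness of both sums $\sum_\gamma\chi(\gamma\,\cdot\,)$ and $\sum_\gamma b(\gamma^{-1}\,\cdot\,)$ on $\mathrm{supp}(b)$ and $\mathrm{supp}(\chi)$ respectively).
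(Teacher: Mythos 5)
Your proof is correct and uses essentially the same mechanism as the paper: insert the partition of unity $\sum_{\gamma}\chi(\gamma\,\cdot\,)$, interchange sum and pairing, change variables, and invoke $\Gamma$-invariance of the pulled-back distribution. The only cosmetic difference is that you first reduce to $b=a_1-a_2$ with vanishing $\Gamma$-average, whereas the paper directly shows each $\langle a_i,T\rangle_{G/M}$ equals $\int_{G/M}\chi\,a\,dT$; your version also makes the local-finiteness justification for interchanging sum and pairing more explicit.
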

\begin{proof}
We have
\begin{eqnarray*}
\langle a_i,T \rangle_{G/M} &=& \int_{G/M}a_i(gM)dT(gM)\\
&=& \int_{G/M}\underset{\gamma\in\Gamma}\sum \chi(\gamma gM) a_i(gM)dT(gM)\\ &\overset{gM\mapsto \gamma^{-1}gM}=& \int_{G/M}\underset{\gamma\in\Gamma}\sum \chi(gM)a_i(\gamma^{-1}gM)dT(\gamma^{-1} gM) \\ &=& \int_{G/M}\chi(gM)\underset{\gamma\in\Gamma}\sum a_i(\gamma^{-1}gM)dT(gM)\\ &=& \int_{G/M}\chi(gM)\underset{\gamma\in\Gamma}\sum a_i(\gamma gM)dT(gM)\\&=&\int_{G/M}\chi(gM)a(gM)dT(gM)
\end{eqnarray*}
\end{proof}

Let now $\chi_1,\chi_2\in C_c^\infty(G/M)$ with $\underset{\gamma\in\Gamma}\sum \chi_i(\gamma gM)=1$ for all $g\in G$ and $a$ in $C^\infty(\Gamma\backslash G/M)$. Set $a_i=a\chi_i$, $(i=1,2)$. Then \[\underset{\gamma\in \Gamma}\sum a_i(\gamma gM)=\underset{\gamma\in\Gamma}\sum a(\gamma gM)\chi_i(\gamma gM)=a(gM)\underset{\gamma\in\Gamma}\sum \chi_i(\gamma gM)=a(gM)\] and hence, see also \cite[Lemma 3.5]{AZ}, \begin{equation}\label{eq: independent} \langle a\chi_1,T\rangle_{G/M}=\langle a\chi_2,T\rangle_{G/M}.\end{equation}

One can show that the Patterson-Sullivan distributions are $\Gamma$-invariant on $G/M$, i.e. \[\langle f\circ\gamma  ,\mathrm{PS}_\varphi\rangle_{G/M}=\langle f,\mathrm{PS}_\varphi\rangle_{G/M}\] for all $\gamma\in\Gamma$ and $f\in C_c^\infty(G/M)$, \cite[Prop.4.10]{HHS} or \cite[Prop.6.13.]{Sch}. Here $f\circ \gamma(g)=f(\gamma g)$ for all $g\in G$. Thus they descend to the quotient $\Gamma\backslash G/M$ and define there elements of $\mathcal D'(\Gamma\backslash G/M)$, see \cite[Prop.4.9]{HHS}.
\begin{definition}\label{def: psgamma}\cite[Def.4.13]{HHS}, \cite[Def.6.19.]{Sch}\\
Let $\varphi$ be an automorphic eigenfunction. On $SX_\Gamma=\Gamma\backslash G/M $ the Patterson-Sullivan distributions \index{$\mathrm{PS}_{\varphi}$} $\mathrm{PS}_{\varphi}$ are given by \[\int_{\Gamma\backslash G/M}a(\Gamma gM)d\mathrm{PS}_{\varphi}(\Gamma gM):=\langle a,\mathrm{PS}_{\varphi}\rangle_{SX_\Gamma}:=\langle \chi a , \mathrm{PS}_\varphi\rangle_{G/M},\]
where $ \chi$ is some smooth fundamental domain cut-off. 
\end{definition}

By (\ref{eq: independent}) this definition is independent of the choice of $\chi$.
\begin{remark}\label{rem: pscontd}
In \cite[Prop. 4.9]{HHS} it is shown that the Patterson-Sullivan distributions are continuous distributions on $C^\infty(G/M)$ resp. $C^\infty(\Gamma\backslash G/M)$. That is, there exist a constant $C>0$ depending only on $\varphi$ and a seminorm \index{$||.||$ seminorm on $C^\infty(G/M)$} $||.||$ independent of $\varphi$  such that 
$$|\langle \chi f,\mathrm{PS}_\varphi\rangle_{SX_\Gamma}|\leq C||\chi f||$$ for all $f\in C^\infty(G/M)$. 
\end{remark}

 We state now a lemma which connects the Patterson-Sullivan distributions on $G/M$ with those on $\Gamma\backslash G/M$.
\begin{lemma}\label{lem: pscon}
Let $f\in C_c^\infty(G/M)$, then \[ \langle f,\mathrm{PS_\varphi}\rangle_{G/M}=\left\langle \sum_{\gamma\in\Gamma}  f\circ\gamma ,\mathrm{PS}_\varphi\right\rangle_{\Gamma\backslash G/M}.\]
\end{lemma}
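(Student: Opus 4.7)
The plan is to unwind Definition \ref{def: psgamma} of the Patterson--Sullivan distribution on $\Gamma\backslash G/M$ and then match the resulting integral over $G/M$ with the left-hand side using Lemma \ref{lem: 3}. The key auxiliary object is the $\Gamma$-periodization $F(gM):=\sum_{\gamma\in\Gamma}f(\gamma gM)$. Since $f\in C_c^\infty(G/M)$ and $\Gamma$ acts properly on $G/M$, only finitely many translates $f\circ\gamma$ are nonzero on any compact set, so $F$ is well defined, smooth, and $\Gamma$-invariant, hence descends to an element of $C^\infty(\Gamma\backslash G/M)$.

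Fix any smooth fundamental domain cut-off $\chi\in C_c^\infty(G/M)$. By Definition \ref{def: psgamma} applied to $F$,
\[
\Bigl\langle F,\mathrm{PS}_\varphi\Bigr\rangle_{\Gamma\backslash G/M}
\;=\;\Bigl\langle \chi F,\mathrm{PS}_\varphi\Bigr\rangle_{G/M}.
\]
The right-hand side of the target identity is therefore $\langle \chi F,\mathrm{PS}_\varphi\rangle_{G/M}$, and it remains to show this equals $\langle f,\mathrm{PS}_\varphi\rangle_{G/M}$.

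For this I would invoke Lemma \ref{lem: 3} with $T=\mathrm{PS}_\varphi$ (which is $\Gamma$-invariant on $G/M$, so descends to $\mathcal D'(\Gamma\backslash G/M)$), $a=F\in C^\infty(\Gamma\backslash G/M)$, $a_1=f$ and $a_2=\chi F$. Both $a_1$ and $a_2$ lie in $C_c^\infty(G/M)$: $f$ by assumption, and $\chi F$ because $\chi$ has compact support. The required periodization hypothesis holds in each case:
\[
\sum_{\gamma\in\Gamma} a_1(\gamma gM)=\sum_{\gamma\in\Gamma}f(\gamma gM)=F(gM),
\]
and, using $\Gamma$-invariance of $F$,
\[
\sum_{\gamma\in\Gamma} a_2(\gamma gM)=\sum_{\gamma\in\Gamma}\chi(\gamma gM)F(\gamma gM)=F(gM)\sum_{\gamma\in\Gamma}\chi(\gamma gM)=F(gM).
\]
Lemma \ref{lem: 3} then yields $\langle f,\mathrm{PS}_\varphi\rangle_{G/M}=\langle \chi F,\mathrm{PS}_\varphi\rangle_{G/M}$, which combined with the displayed consequence of Definition \ref{def: psgamma} gives the claim.

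There is no substantive obstacle here; the only points that require attention are the local finiteness of the periodization (which uses compact support of $f$ together with properness of the $\Gamma$-action on $G/M$) and the $\Gamma$-invariance of $\mathrm{PS}_\varphi$ on $G/M$, which is exactly what is needed in order to interpret it as a distribution on the quotient and to apply Lemma \ref{lem: 3}. Both are already available in the material preceding the lemma.
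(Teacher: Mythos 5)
Your argument is correct and lands on the same underlying idea as the paper's proof: unwind Definition \ref{def: psgamma}, then use $\Gamma$-invariance of $\mathrm{PS}_\varphi$ together with the partition-of-unity property $\sum_\gamma \chi\circ\gamma = 1$ to convert $\langle \chi F,\mathrm{PS}_\varphi\rangle_{G/M}$ into $\langle f,\mathrm{PS}_\varphi\rangle_{G/M}$. The only difference is presentational: you factor the key step cleanly through Lemma \ref{lem: 3} (whose own proof is exactly this manipulation), whereas the paper re-derives it inline by pulling the cut-off $\chi$ through the sum and shifting by $\gamma^{-1}$; both are valid, and your version is arguably a bit more modular.
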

\begin{proof}
This follows since $\mathrm{PS}_\varphi$ is $\Gamma$-invariant. By Definition \ref{def: psgamma} we have for any smooth fundamental domain cut-off $\chi$ and any $f\in C_c^\infty(G/M)$
\begin{eqnarray*}
\left\langle \sum_{\gamma\in\Gamma} f\circ \gamma, \mathrm{PS}\right\rangle_{\Gamma\backslash G/M}&=& \left\langle \chi\cdot \sum_{\gamma\in \Gamma}f \circ \gamma,\mathrm{PS}_\varphi\right\rangle_{G/M}\\&=& 
\left\langle  \sum_{\gamma\in \Gamma}\chi\cdot (f \circ\gamma ),\mathrm{PS}_\varphi\right\rangle_{G/M}
\\&\overset{\Gamma-\text{invariance}}=&
\left\langle f\cdot \sum_{\gamma\in\Gamma} \chi\circ\gamma^{-1} ,\mathrm{PS}_\varphi\right\rangle_{G/M}\\&=&
\langle f,\mathrm{PS}_\varphi\rangle_{G/M}, 
\end{eqnarray*}
as $\sum_{\gamma\in\Gamma}\chi\circ\gamma^{-1} =1$, since $\chi$ is a fundamental domain cut-off.
\end{proof} 

Let $a\in C^\infty(\Gamma\backslash G/M)$ and let \index{$piM$@$\pi_M$, projection $C(G)\to C(G/M)$} $\pi_M$ denote the projection from $C(G)\to C(G/M)$ given by \[\pi_M(f)(g):=\int_M f(gm)dm.\] 

Further, we write \index{${}^n$, $f^n=f\circ r_n$} $f^n:=f\circ r_n$ for $n\in N$ for functions $f$, i.e. $f^{n}(g)=f(gn)$ for $g\in G$. Putting $a=\varphi_k$ and applying the theory of Chapter \ref{chap: ode} we obtain:

\begin{theorem}\label{thm: diffeqex}
Let $G=SO_o(1,l)=ANK$, $M=Z_K(A)$, $X_{e_1}\in\mathfrak n$ as for (\ref{eq: normx1}) and $\Gamma\subset G$ a uniform lattice. Further, let $\varphi_k$ be an automorphic Laplace eigenfunction on $\Gamma\backslash G/K$, then $n\mapsto \langle \pi_M(\varphi_k^n),\mathrm{PS}_{\varphi_j}\rangle_{SX_\Gamma}$ is bi-$M$-invariant and a  solution to \[\Omega f=-\left(\lambda_k^2+\rho_0^2\right)f=-\frac{1}{4\rho_0}(\rho_0^2+r_k^2)f\] on $N$. On the slice $\exp \mathbb{R}^+X_{e_1}$ it satisfies for any $j$ and $k$ for $l>2$
\begin{equation*}
\langle \pi_M(\varphi_k^{\exp sX_{e_1}}),\mathrm{PS}_{\varphi_j}\rangle_{SX_\Gamma}=\langle \varphi_k,\mathrm{PS}_{\varphi_j}\rangle_{SX_\Gamma}\cdot {}_2F_1\left(a,b,\rho_0;-s^2\right),
\end{equation*}
where $a=\frac{1}{2}(\rho_0+ir_k)$ and $b=\frac{1}{2}(\rho_0-ir_k)$ as in Section \ref{sec: hyppr}. For $l=2$ it is given by
\begin{eqnarray*}
\langle \pi_M(\varphi_k^{\exp sX_{e_1}}),\mathrm{PS}_{\varphi_j}\rangle_{SX_\Gamma}&=& \langle \varphi_k,\mathrm{PS}_{\varphi_j}\rangle_{SX_\Gamma}\cdot {}_2F_1\left(a,b,\frac{1}{2} ;-s^2\right)\\&&+2i\langle X_1\varphi_k,\mathrm{PS}_{\varphi_j}\rangle_{SX_\Gamma}\cdot s\cdot {}_2F_1\left(a+\frac{1}{2},b+\frac{1}{2}, \frac{3}{2};-s^2 \right).
\end{eqnarray*}
\end{theorem}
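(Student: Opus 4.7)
The plan is to verify that $f(n):=\langle \pi_M(\varphi_k^n),\mathrm{PS}_{\varphi_j}\rangle_{SX_\Gamma}$ is bi-$M$-invariant and a Casimir eigenfunction on $N$ with eigenvalue $\mu_k=-(\lambda_k^2+\rho_0^2)$, and then feed the result into the ODE classification of Chapter \ref{chap: ode} in the trivial $M$-type. Right-$M$-invariance is immediate: since $M\subset K$ and $\varphi_k$ is right-$K$-invariant, $\varphi_k^{nm}=\varphi_k^n$, hence $f(nm)=f(n)$. Left-$M$-invariance follows by the unimodular substitution $m''=m'm$ in the defining integral of $\pi_M$:
\[
\pi_M(\varphi_k^{mn})(g)=\int_M \varphi_k(gm'mn)\,dm'=\int_M \varphi_k(gm''n)\,dm''=\pi_M(\varphi_k^n)(g),
\]
so $f(mn)=f(n)$.

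For the Casimir equation I would exploit the biinvariance of $\Omega\in Z(U(\mathfrak g))$. For fixed $g\in G$ and $m\in M$ the map $n\mapsto \varphi_k(gmn)$ is a left translate of $\varphi_k$, and biinvariance together with $\Omega\varphi_k=\mu_k\varphi_k$ gives $\Omega_n\varphi_k(gmn)=\mu_k\varphi_k(gmn)$. Dominated convergence over the compact group $M$ allows one to interchange $\Omega_n$ with $\int_M dm$, yielding $\Omega_n\pi_M(\varphi_k^n)(g)=\mu_k\pi_M(\varphi_k^n)(g)$. Since $n\mapsto \pi_M(\varphi_k^n)$ is smooth as a curve in $C^\infty(\Gamma\backslash G/M)$ and $\mathrm{PS}_{\varphi_j}$ is continuous on $C^\infty(\Gamma\backslash G/M)$ (Remark \ref{rem: pscontd}), one can commute $\Omega_n$ through the distributional pairing to conclude $\Omega_n f(n)=\mu_k f(n)$. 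This commutation is the only non-formal step.

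With $f$ now bi-$M$-invariant and Casimir-eigen, Proposition \ref{prop: deco} annihilates all non-trivial $\pi$-projections, so on the slice $\exp \mathbb R^+X_{e_1}$ the function $f$ satisfies the ODE (\ref{chap: gtrace ef}) for the trivial $M$-type, i.e.\ with $d=0$; the relevant indicial root smooth at the origin is $p=0$. Theorem \ref{chap: ode main} and Remark \ref{rem: hypsymxp} then force, for $l\geq 3$,
\[
f(\exp sX_{e_1})=f(e)\cdot {}_2F_1\!\left(a,b,\rho_0;-s^2\right),
\]
using $1+2\sqrt{(1-\rho_0)^2/4}=\rho_0$ (valid because $\rho_0\geq 1$), with $a=\tfrac12(\rho_0+ir_k)$, $b=\tfrac12(\rho_0-ir_k)$. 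The initial value is $f(e)=\langle \varphi_k,\mathrm{PS}_{\varphi_j}\rangle_{SX_\Gamma}$ because $\pi_M(\varphi_k)=\varphi_k$ (right-$K$-invariance together with $M\subset K$).

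For $l=2$ the group $M$ is trivial, so the bi-$M$-invariance constraint is vacuous and the space of solutions on the slice is two-dimensional, spanned by the even and odd hypergeometric functions appearing in (\ref{eq: n2}). The stated formula then follows by matching both pieces of initial data, namely $f(e)=\langle \varphi_k,\mathrm{PS}_{\varphi_j}\rangle$ and $\frac{d}{ds}f(\exp sX_{e_1})|_{s=0}=\langle X_{e_1}\varphi_k,\mathrm{PS}_{\varphi_j}\rangle$, while absorbing the normalisation $2\sqrt{l-1}\,X_1=X_{e_1}$ of Remark \ref{rem: hypsymxp} into the factor relating $X_1$ to $X_{e_1}$ in the displayed identity. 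The one genuinely substantive point in the whole argument is the justification of pulling $\Omega_n$ through the Patterson--Sullivan pairing; everything else is a direct application of the structural results of Chapters \ref{chap: radial} and \ref{chap: ode}.
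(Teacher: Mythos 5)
Your proof tracks the paper's argument essentially step for step: establish bi-$M$-invariance of $n\mapsto\langle\pi_M(\varphi_k^n),\mathrm{PS}_{\varphi_j}\rangle$ by right-$K$-invariance of $\varphi_k$ and the unimodular substitution in the $\pi_M$-integral, pass the Casimir through the Patterson--Sullivan pairing by continuity of $\mathrm{PS}_{\varphi_j}$ on $C^\infty(SX_\Gamma)$ (the paper cites Schwartz's theorem for this; you invoke smoothness of the curve plus continuity of the distribution, which is the same justification), and then reduce to the trivial-$M$-type ODE on the slice, reading off the hypergeometric solution from Theorem \ref{chap: ode main} and Remark \ref{rem: hypsymxp}, with the $l=2$ case handled by matching the two initial data for the two-dimensional solution space as in equation (\ref{eq: n2}). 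This is the same approach as the paper's proof, correctly identifying the commutation through the distributional pairing as the one non-formal step.
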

\begin{proof}
As $\Omega$ and translations commute \[\Omega \varphi_k^n=-\frac{1}{4\rho_0}(\rho_0^2+r_k^2)\varphi_k^n\] for all $n\in N$. 
If we express $\varphi_k^n$ in local charts  of the compact manifold $SX_\Gamma=\Gamma\backslash G/M$ as $\varphi_k^n(x_l)$, then all partial derivatives of $ \frac{\partial^\alpha}{\partial x_l^\alpha} \varphi_k^n(x_l)$ in these local coordinates $x_l$ are simultaneously continuous in $x_l$ and $n$.
Thus, we can interchange the distribution with differentiation using \cite[Ch.IV Th.II]{Schw} to get 
\begin{eqnarray*}&&\Omega\left(n\mapsto \int_{\Gamma\backslash G/M}\int_M \varphi^n_k(\Gamma gm)dm \mathrm{PS}_{\varphi_j}(\Gamma gM)\right)\\&=& \int_{\Gamma\backslash G/M}\int_M \Omega\left(n\mapsto \varphi^n_k(\Gamma gm)\right)dm \mathrm{PS}_{\varphi_j}(\Gamma gM)\\ &=&n\mapsto -\frac{1}{4\rho_0}(\rho_0^2+r_k^2)\int_{\Gamma\backslash G/M}\int_M \varphi^n_k(\Gamma gm)dm \mathrm{PS}_{\varphi_j}(\Gamma gM).
\end{eqnarray*} 

Obviously, for all $m_1,m_2\in M$
\begin{eqnarray*}\int_{\Gamma\backslash G/M}\int_M \varphi^{m_1nm_2}_k(\Gamma gm)dm \mathrm{PS}_{\varphi_j}(\Gamma gM)&=& \int_{\Gamma\backslash G/M}\int_M \varphi_k(\Gamma gmm_1nm_2)dm \mathrm{PS}_{\varphi_j}(\Gamma gM)\\&=&
\int_{\Gamma\backslash G/M}\int_M \varphi_k(\Gamma gmn)dm \mathrm{PS}_{\varphi_j}(\Gamma gM) 
\end{eqnarray*}
by unimodularity of $M$ and right-$K$-invariance of $\varphi_k$. Hence, \[F(n):=\int_{\Gamma\backslash G/M}\int_M \varphi^n_k(\Gamma gm)dm \mathrm{PS}_{\varphi_j}(\Gamma gM)=\int_{\Gamma\backslash G/M}\int_M \varphi_k(\Gamma gmn)dm \mathrm{PS}_{\varphi_j}(\Gamma gM) \] is a bi-$M$-invariant solution on $N$, in particular at the origin $n=e$, to \begin{equation}\label{eq: minv}\Omega F+\mu_k F=0,\end{equation} where $\mu_k=\frac{1}{4\rho_0}(\rho_0^2+r_k^2)$ and we can restrict $F$ to the slice $S=\exp \mathbb{R}^+X_{1}=\exp \mathbb{R}^+X_{e_1}$. 
By Theorem \ref{th: odesum} we know that on the slice $S$ equation (\ref{eq: minv}) reads for bi-$M$-invariant functions $F$, see equation (\ref{e1}),
\begin{eqnarray*}
\left((\alpha(H_1)^2s^2+2)\frac{d^2}{ds^2}+\left(\left(\alpha(H_1)^2+2\alpha(H_\rho)\right)s+2\frac{l-2}{s}\right)\frac{d}{ds}+\mu_k\right)F(s)=0.
\end{eqnarray*}

For $l>3$, by Theorem \ref{chap: ode main}, the restriction of the function $F(n)$ to the slice is then given by
\begin{eqnarray*}
F(s):=F(\exp sX_{1})
&=&
F(0)\cdot {}_2F_1\left(a,b,\rho_0;\frac{-s^2}{4(l-1)}\right)\\
&\overset{\mathrm{def. of } F}=& \langle \varphi_k^{\exp 0X_1},\mathrm{PS}_{\varphi_j}\rangle_{SX_\Gamma}\cdot {}_2F_1\left(a,b,\rho_0;\frac{-s^2}{4(l-1)}\right)
\\&=&\langle \varphi_k,\mathrm{PS}_{\varphi_j}\rangle_{SX_\Gamma}\cdot {}_2F_1\left(a,b,\rho_0;\frac{-s^2}{4(l-1)}\right)
\end{eqnarray*}
resp.
\[F(\exp sX_{e_1})=\langle \varphi_k,\mathrm{PS}_{\varphi_j}\rangle_{SX_\Gamma}\cdot {}_2F_1\left(a,b,\rho_0;-s^2\right).\]

The claim for $l=2$ follows similarly.
\end{proof}

\begin{lemma}\label{lem: conec}
For $a\in C^\infty(\Gamma\backslash G/M)$ and any fundamental domain cut-off $\chi$ we have \[\langle \pi_M(\chi a)^n,\mathrm{PS}_\varphi\rangle_{G/M}=\langle \pi_M(a^n),\mathrm{PS}_\varphi\rangle_{\Gamma\backslash G/M}.\]
\end{lemma}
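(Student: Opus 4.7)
First I would clarify the notation on the left-hand side. Since both $\chi\in C_c^\infty(G/M)$ and $a\in C^\infty(\Gamma\backslash G/M)$ are $M$-right-invariant on $G$, the product $\chi a$ is $M$-right-invariant and hence $\pi_M(\chi a)=\chi a$. Consequently the function $\pi_M(\chi a)^n:g\mapsto \chi(gn)a(gn)$ is compactly supported on $G$, but it is \emph{not} in general $M$-right-invariant. The Radon transform $\mathcal R$ in the definition of $\mathrm{PS}_\varphi$ naturally extends from $C_c^\infty(G/M)$ to $C_c^\infty(G)$ by averaging over $M$, i.e.\ $\mathcal R f=\mathcal R(\pi_M f)$; so we may (and do) set $\langle f,\mathrm{PS}_\varphi\rangle_{G/M}:=\langle \pi_M f,\mathrm{PS}_\varphi\rangle_{G/M}$ for $f\in C_c^\infty(G)$. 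Under this convention the left-hand side equals $\langle \pi_M((\chi a)^n),\mathrm{PS}_\varphi\rangle_{G/M}$ with $\pi_M((\chi a)^n)\in C_c^\infty(G/M)$.

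Next I would apply Lemma \ref{lem: pscon} to the function $f:=\pi_M((\chi a)^n)\in C_c^\infty(G/M)$, which yields
\[
\langle \pi_M((\chi a)^n),\mathrm{PS}_\varphi\rangle_{G/M}
=\Big\langle \sum_{\gamma\in\Gamma}\pi_M((\chi a)^n)\circ\gamma,\,\mathrm{PS}_\varphi\Big\rangle_{\Gamma\backslash G/M}.
\]
The remaining task is to identify the $\Gamma$-sum with $\pi_M(a^n)$. Using that $\chi$ has compact support (so the sum over $\gamma$ is locally finite and Fubini applies), and that $a$ is $\Gamma$-left-invariant,
\[
\sum_{\gamma\in\Gamma}\pi_M((\chi a)^n)(\gamma g)
=\int_M\sum_{\gamma\in\Gamma}\chi(\gamma gmn)\,a(\gamma gmn)\,dm
=\int_M a(gmn)\sum_{\gamma\in\Gamma}\chi(\gamma gmn)\,dm.
\]
The cut-off identity $\sum_{\gamma}\chi(\gamma x)=1$ applied at $x=gmn$ collapses the inner sum to $1$, giving $\int_M a(gmn)\,dm=\pi_M(a^n)(g)$, which is exactly the function on the right-hand side.

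Substituting this back yields
\[
\langle \pi_M(\chi a)^n,\mathrm{PS}_\varphi\rangle_{G/M}
=\langle \pi_M(a^n),\mathrm{PS}_\varphi\rangle_{\Gamma\backslash G/M},
\]
as required. There is no substantive obstacle: the only point requiring care is the bookkeeping that $\pi_M((\chi a)^n)$ really lies in $C_c^\infty(G/M)$ (so that Lemma \ref{lem: pscon} applies) and that the $\Gamma$-sum may be interchanged with the $M$-integral; both are immediate from the compact support of $\chi$. Alternatively, one could bypass Lemma \ref{lem: pscon} and apply Lemma \ref{lem: 3} directly to $a_1:=\pi_M((\chi a)^n)$ and $a_2:=\chi\cdot\pi_M(a^n)$, verifying by the same calculation that $\sum_\gamma a_i(\gamma gM)=\pi_M(a^n)(gM)$ for $i=1,2$, and then invoking Definition \ref{def: psgamma} to rewrite $\langle\chi\cdot\pi_M(a^n),\mathrm{PS}_\varphi\rangle_{G/M}$ as $\langle\pi_M(a^n),\mathrm{PS}_\varphi\rangle_{\Gamma\backslash G/M}$.
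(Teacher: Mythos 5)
Your proof is correct and follows essentially the same route as the paper's: both reduce to Lemma \ref{lem: pscon} (equivalently, the unfolding identity $\int_{G/M}=\int_{\Gamma\backslash G/M}\sum_{\gamma\in\Gamma}$), then use the $\Gamma$-invariance of $a$ together with the cut-off identity $\sum_\gamma\chi(\gamma x)=1$ to collapse $\sum_{\gamma}\pi_M((\chi a)^n)\circ\gamma$ to $\pi_M(a^n)$. Your preliminary remark that $\pi_M((\chi a)^n)$ lies in $C_c^\infty(G/M)$ (so Lemma \ref{lem: pscon} applies) and that the interchange of the $\Gamma$-sum with the $M$-integral is justified by compact support are exactly the bookkeeping points the paper leaves implicit.
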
 
\begin{proof}
\begin{eqnarray*}
\langle \pi_M((\chi a)^n),\mathrm{PS}_\varphi \rangle_{G/M} dn
&=& \int_{G/M}\int_M (\chi a)(gmn)dm\mathrm{PS}_\varphi(gM)
\\&=&
\int_{\Gamma\backslash G/M} \sum_{\gamma\in\Gamma}\int_M(\chi a)(\gamma gmn)dm\mathrm{PS}_\varphi(\Gamma gM)\mbox{, as }\int_{G/M}=\int_{\Gamma\backslash G/M}\sum_{\gamma\in\Gamma},\\&& \mbox{see Lemma \ref{lem: pscon}}
\\
&=&\int_{\Gamma\backslash G/M}\int_M a(gmn)\sum_{\gamma\in\Gamma}\chi(\gamma gmn)dm\mathrm{PS}_\varphi(\Gamma gM)\mbox{, as } a \mbox{ is $\Gamma$-invariant}\\
&=&\int_{\Gamma\backslash G/M}\int_M a(gmn)dm \mathrm{PS}_\varphi(\Gamma gM)\mbox{ , as $\chi$ is a fundamental domain cut-off} 
\\&=& \int_{\Gamma\backslash G/M} \pi_M(a^n)(g)\mathrm{PS}_\varphi(\Gamma gM)\\
&=& \langle \pi_M(a^n),\mathrm{PS}_\varphi\rangle_{SX_\Gamma}. 
\end{eqnarray*}
\end{proof}

\begin{proposition}\label{prop: absconv}
Let $\varphi_j$ be an automorphic eigenfunction in the principal series and $\varphi_k$ any automorphic eigenfunction with eigenvalues $-\left(\rho_0^2+\lambda_j^2\right)$ resp. $-\frac{1}{4\rho_0}\left(\rho_0^2+r_k^2\right)$. Then \[\int_N e^{-(i\lambda_j+\rho)H(n^{-1}w)}\langle \pi_M(\varphi_k)^n,\mathrm{PS}_{\varphi_j}\rangle_{SX_\Gamma}dn\] converges absolutely.
\end{proposition}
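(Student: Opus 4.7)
The plan is to exploit the explicit formula for $F(n) := \langle \pi_M(\varphi_k^n), \mathrm{PS}_{\varphi_j}\rangle_{SX_\Gamma}$ provided by Theorem \ref{thm: diffeqex}, together with the Iwasawa projection formula (\ref{eq: Hnuw}), in order to reduce the integral over $N \cong \mathbb{R}^{l-1}$ to a one-dimensional radial integral and then to control its behavior at infinity via the classical connection formula for ${}_2F_1$ at $z=\infty$.

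First I would factor the integrand as $|e^{-(i\lambda_j+\rho)H(n^{-1}w)}|\cdot|F(n)|$. Since $\varphi_j$ is in the principal series, $\lambda_j\in\mathbb R$, and (\ref{eq: Hnuw}) gives
\[
|e^{-(i\lambda_j+\rho)H(n_u^{-1}w)}|=(1+|u|^2)^{-\rho_0}.
\]
The bi-$M$-invariance of $F$ (Theorem \ref{thm: diffeqex}) together with the transitivity of the conjugation action $M\cong SO(l-1)\curvearrowright\mathbb R^{l-1}$ on spheres shows that $F(n_u)$ depends only on $|u|$, and matching with the slice formula yields, for $l\geq 3$,
\[
F(n_u)=\langle\varphi_k,\mathrm{PS}_{\varphi_j}\rangle_{SX_\Gamma}\cdot{}_2F_1\!\bigl(a,b,\rho_0;-|u|^2\bigr),
\]
with $a=\tfrac12(\rho_0+ir_k)$, $b=\tfrac12(\rho_0-ir_k)$, and the analogous sum of two hypergeometric terms in the $l=2$ case from Theorem \ref{thm: diffeqex}. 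Passing to polar coordinates on $N$ reduces the problem to the finiteness of
\[
\int_0^\infty s^{l-2}(1+s^2)^{-\rho_0}\bigl|{}_2F_1(a,b,\rho_0;-s^2)\bigr|\,ds.
\]
Near $s=0$ the integrand is $O(s^{l-2})$, hence locally integrable for $l\geq 2$, so only the tail at $s\to\infty$ requires analysis.

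For the tail I would invoke the standard connection formula (e.g.\ \cite[2.10(2)]{Ba}): provided $a-b\notin\mathbb Z$,
\[
{}_2F_1(a,b,c;z)=\frac{\Gamma(c)\Gamma(b-a)}{\Gamma(b)\Gamma(c-a)}(-z)^{-a}\bigl(1+O(z^{-1})\bigr)+\frac{\Gamma(c)\Gamma(a-b)}{\Gamma(a)\Gamma(c-b)}(-z)^{-b}\bigl(1+O(z^{-1})\bigr),
\]
so that $|{}_2F_1(a,b,\rho_0;-s^2)|=O\bigl(s^{-2\min(\mathrm{Re}(a),\mathrm{Re}(b))}\bigr)$ as $s\to\infty$. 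If $\varphi_k$ is in the principal series, $r_k\in\mathbb R$ and $\mathrm{Re}(a)=\mathrm{Re}(b)=\rho_0/2$, giving $O(s^{-\rho_0})$; if $\varphi_k$ is complementary, $r_k\in i\mathbb R$ with $|r_k|<\rho_0$ (forced by $\varphi_k\in L^2(X_\Gamma)$), and the bound becomes $O(s^{-(\rho_0-|r_k|)})$. Combining the three factors, the integrand at infinity is
\[
O\bigl(s^{l-2}\cdot s^{-2\rho_0}\cdot s^{-\rho_0+|r_k|}\bigr)=O\bigl(s^{-1-(l-1)/2+|r_k|}\bigr),
\]
(reading $|r_k|=0$ in the principal case), which is integrable in both regimes because the exponent is strictly less than $-1$. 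The exceptional coincidence $a=b$ (i.e.\ $r_k=0$) produces only a logarithmic correction in the expansion and does not affect integrability; the $l=2$ case adds a term $\sim s\cdot{}_2F_1(a+\tfrac12,b+\tfrac12,\tfrac32;-s^2)=O(s^{-\rho_0})$ of the same order which is handled identically.

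The only real obstacle is the exponent bookkeeping, specifically that one must verify the complementary-series constraint $|r_k|<\rho_0$ leaves exactly the slack needed to outweigh the polar factor $s^{l-2}$. The decay rate of the radial projection $(1+|u|^2)^{-\rho_0}$ is precisely $2\rho_0=l-1$, which exactly cancels the polar weight, so the entire convergence at infinity is supplied by the hypergeometric factor — and the hypothesis that $\varphi_j$ lies in the principal series is essential, since in the complementary case one would instead have $|e^{-(i\lambda_j+\rho)H(n^{-1}w)}|=(1+|u|^2)^{-\rho_0+|\tilde\lambda_j|}$ and the argument would fail.
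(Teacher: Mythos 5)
Your proof follows the same route as the paper: reduce to the radial integral on a slice using bi-$M$-invariance and the Iwasawa projection formula (\ref{eq: Hnuw}), plug in the explicit hypergeometric form from Theorem \ref{thm: diffeqex}, and control the tail via the connection formula for ${}_2F_1$ at infinity. This is exactly what the paper does, except that the paper outsources the asymptotics to Lemma \ref{thm: absconvofint}, whereas you redo them inline.

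Two small remarks. First, your bound $|{}_2F_1(a,b,\rho_0;-s^2)|=O(s^{-\rho_0+|r_k|})$ in the complementary regime is actually more careful than the paper's: Lemma \ref{thm: absconvofint} keeps only $s^{-\rho_0+\mathrm{Re}(ir_k)}$ without the absolute value, which underestimates the dominant term when $\mathrm{Re}(ir_k)<0$. Your use of $|r_k|$ correctly tracks the slower of the two connection-formula terms. Second, the assertion that $|r_k|<\rho_0$ is ``forced by $\varphi_k\in L^2(X_\Gamma)$'' is slightly too strong: the constant function is in $L^2$ and has $|r_k|=\rho_0$, and indeed for constant $\varphi_k$ (so $a=0$, ${}_2F_1\equiv 1$) your exponent lands exactly at $-1$ and the radial integral diverges logarithmically. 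What squeezes out the strict inequality is the non-constancy of $\varphi_k$, which is the standing hypothesis elsewhere in the paper (e.g.\ Theorem \ref{thm: wigpat}) and should be read into this proposition as well; your bookkeeping makes that hidden hypothesis visible, whereas the paper's argument passes over it.
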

\begin{proof}
We recall that $G=SO_o(1,l)$ and $M\cong SO(l-1)$. First we note that the integrand $n\mapsto e^{-(i\lambda+\rho)H(n^{-1}w)} \langle \pi_M(\varphi_k^n),PS_\lambda\rangle_{SX_\Gamma}$ of (\ref{eq: umform}) is bi-$M$-invariant. We find then on the slice $S=\{\exp sX_{e_1}:s>0\}$ that   
\[e^{-(i\lambda_j+\rho)H(\exp -sX_{e_1}w)}=\left(1+s^2\right)^{-(i\lambda_j+\rho_0)},\] see (\ref{eq: Hnuw}). Further, by Theorem \ref{thm: diffeqex} for $l\geq 3$
\begin{equation*}
\langle \pi_M(\varphi_k^{\exp sX_{e_1}}),\mathrm{PS}_{\varphi_j}\rangle_{SX_\Gamma}=\langle \varphi_k,\mathrm{PS}_{\varphi_j}\rangle_{SX_\Gamma}\cdot {}_2F_1\left(a,b,\rho_0;-s^2\right)
\end{equation*}  
on the slice $S$. Hence, we can use for $l\geq 3$ polar coordinates and we get the following 
\begin{eqnarray*}
&&\int_N e^{-(i\lambda_j+\rho)H(n^{-1}w)} \langle \pi_M(\varphi_k^n),\mathrm{PS}_{\varphi_j}\rangle_{SX_\Gamma}dn\\&=& \omega_{l-1}\int_0^\infty s^{l-2}\left(1+s^2\right)^{-(i\lambda_j+\rho_0)} \langle \pi_M(\varphi_k^{\exp sX_{e_1}}),\mathrm{PS}_{\varphi_j}\rangle_{SX_\Gamma} ds\\
&\overset{\text{Prop. \ref{thm: diffeqex}}}=& \omega_{l-1}\int_0^\infty s^{l-2}\left(1+s^2\right)^{-(i\lambda_j+\rho_0)} {}_2F_1\left(a,b,\rho_0;-s^2\right)ds\cdot \langle \varphi_k,\mathrm{PS}_{\varphi_j}\rangle_{SX_\Gamma}=:(*).
\end{eqnarray*}

For $l=2$ we know that $N=\{\exp sX_{e_1}:s\in\mathbb{R}\}$ and by Theorem \ref{thm: diffeqex} with $a=\varphi_k$ we get
\begin{eqnarray*}
&&\int_N e^{-(i\lambda_j+\rho)H(n^{-1}w)} \langle \pi_M(\varphi_k^n),\mathrm{PS}_{\varphi_j}\rangle_{SX_\Gamma}dn\\&=& \int_{-\infty}^\infty \left(1+s^2\right)^{-(i\lambda_j+\rho_0)} \langle (\varphi_k^{\exp sX_{e_1}}),\mathrm{PS}_{\varphi_j}\rangle_{SX_\Gamma}ds\\&\overset{\text{Prop. \ref{thm: diffeqex}}}=&  \langle \varphi_k,\mathrm{PS}_{\varphi_j}\rangle_{SX_\Gamma}\cdot\int_{-\infty}^\infty \left(1+s^2\right)^{-(i\lambda_j+\rho_0)} {}_2F_1\left(a,b,\rho_0;-s^2\right)ds\\&&+2i\langle X_{e_1}\varphi_k,\mathrm{PS}_{\varphi_j}\rangle_{SX_\Gamma}\cdot \int_{-\infty}^\infty \left(1+s^2\right)^{-(i\lambda_j+\rho_0)} s\cdot {}_2F_1\left(a+\frac{1}{2},b+\frac{1}{2}, \frac{3}{2};-s^2 \right)ds 
\\&=& \langle \varphi_k,\mathrm{PS}_{\varphi_j}\rangle_{SX_\Gamma}\cdot\int_{-\infty}^\infty (1+s^2)^{-(i\lambda_j+\rho_0)} {}_2F_1\left(a,b,\rho_0;-s^2\right)ds=:(**).
\end{eqnarray*}


Here $a=\frac{1}{2}(\rho_0+ir_k)$ and $b=\frac{1}{2}(\rho_0-ir_k)$. We want to use Lemma \ref{thm: absconvofint} from the appendix to this chapter to investigate the convergence of $(*)$ resp. $(**)$. Thus, we have to check whether 
\begin{equation}\label{eq: convcheck}
2\mathrm{Re}(\rho_0+i\lambda_j)>\rho_0+\mathrm{Re}(ir_k).
\end{equation} 

As $\varphi_j$ is in the principal series $2\mathrm{Re}(\rho_0+i \lambda_j)=2\rho_0$. Because \[0\leq \frac{1}{4\rho_0}(\rho_0^2+r_k^2)\] either $r_k\in\mathbb R$ or $r_k\in i\mathbb R$. In any case $\mathrm{Re}(ir_k)\leq 0$ and $(\ref{eq: convcheck})$ is true. The integrals in $(*)$ resp. $(**)$ now converge absolutely by Theorem \ref{thm: absconvofint}.
\end{proof}

From now on we will always assume that $G=SO_o(1,l)$. In order to explain the connection between Wigner- and Patterson-Sullivan distributions we need the following continuity property of the latter ones.  
\begin{proposition}\label{thm: 2}
Let $a\in C^\infty(\Gamma\backslash G/M)$. Then 
\begin{equation*} \langle L_\lambda(\chi a),\mathrm{PS}_\varphi\rangle_{G/M}=\int_N e^{-(i\lambda+\rho)H(n^{-1}w)}\langle \pi_M\left((\chi a)^n\right),\mathrm{PS}_\varphi \rangle_{G/M} dn, \end{equation*} 
where $\chi\in C_c^\infty(G/M)$ is a smooth fundamental domain cut-off and $\varphi$ an automorphic eigenfunction with eigenvalue $-(\lambda^2+\rho_0^2)$.
\end{proposition}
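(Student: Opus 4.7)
The plan is to expand both sides into iterated integrals using the explicit formula
\[
\langle f,\mathrm{PS}_\varphi\rangle_{G/M}=\int_{B^{(2)}}\int_A f(g(b,b')aM)\,da\,d_\lambda(b,b')\,dT_\varphi(b)\,dT_\varphi(b'),
\]
valid for $f\in C_c^\infty(G/M)$ (and extending to $L_\lambda(\chi a)\in C^\infty(G/M)$ via the continuity statement in Remark \ref{rem: pscontd}), and then transform one side into the other by Fubini, using the right-$M$-invariance of $\chi a$ together with the $M$-equivariance properties of $e^{-(i\lambda+\rho)H(n^{-1}w)}\,dn$ on $N$.

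Concretely, I would first substitute the definition of $L_\lambda$ into the left-hand side to obtain the quadruple integral
\[
\int_{B^{(2)}}\int_A\int_N e^{-(i\lambda+\rho)H(n^{-1}w)}(\chi a)(g(b,b')an)\,dn\,da\,d_\lambda(b,b')\,dT_\varphi(b)\,dT_\varphi(b'),
\]
and expand the right-hand side as
\[
\int_N e^{-(i\lambda+\rho)H(n^{-1}w)}\int_{B^{(2)}}\int_A\int_M (\chi a)(g(b,b')amn)\,dm\,da\,d_\lambda(b,b')\,dT_\varphi(b)\,dT_\varphi(b')\,dn.
\]
In the right-hand expression I would write $amn=a(mnm^{-1})m$, use the right-$M$-invariance of $\chi a$ to drop the trailing $m$, and then change variables $n\mapsto m^{-1}nm$ in the $dn$-integral (for each fixed $m$); since $M$ acts on $N$ through $\mathrm{Ad}$, which is an orthogonal transformation on $\mathfrak n\cong\mathbb R^{l-1}$, the Haar measure $dn$ is preserved.

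After this substitution the exponent becomes $H(m^{-1}n^{-1}mw)$, and the key identity $H(m^{-1}n^{-1}mw)=H(n^{-1}w)$ follows from the left-$K$-invariance and right-$M$-invariance of the Iwasawa projection $H$ (recall $H(g m_0)=H(g)$ for $m_0\in M$ since $M$ centralises $A$), together with the observation that $mwm^{-1}$ represents the nontrivial Weyl element, hence equals $wm_0$ for some $m_0\in M$. The integrand then becomes independent of $m$, the $dm$-integral contributes $\int_M dm=1$, and after reinterpreting the $dn$-integral as $L_\lambda(\chi a)$ one recovers the left-hand side.

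The main obstacle is justifying the successive Fubini interchanges: the outer pairing against $\mathrm{PS}_\varphi$ is distributional. However, since $\chi\in C_c^\infty(G/M)$ the integrand $(g,n)\mapsto (\chi a)(gn)$ vanishes outside a compact set in $G\times N$, so for fixed $b,b',a$ the $dn$- and $dm$-integrals are over compact sets; moreover $B^{(2)}$ is compact and the boundary values $T_\varphi\in\mathcal D'(B)$ are distributions of finite order, so they are dominated by finitely many $C^k$-seminorms of their argument uniformly in the remaining parameters. Thus every appearance of the distributional pairings in $B$ can be absorbed into a uniform estimate on the jointly continuous (and compactly supported in the relevant variables) kernel, reducing each interchange to ordinary Fubini.
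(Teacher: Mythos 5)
Your strategy is genuinely different from the paper's. You expand $L_\lambda(\chi a)$ and $\mathrm{PS}_\varphi$ directly by their definitions, then show that the $M$-average $\int_M (\chi a)(g(b,b')amn)\,dm$ collapses by the substitution $n\mapsto m^{-1}nm$ combined with $H(m^{-1}n^{-1}mw)=H(n^{-1}w)$. The paper instead invokes Lemma 5.15 of \cite{HS} to convert $\mathcal{R}(L_\lambda(\chi a))(b,b')$ into the Poisson-kernel integral $\int_X \chi a(z,b)\,e^{(i\lambda+\rho)(\langle z,b\rangle+\langle z,b'\rangle)}\,dz$, and then rebuilds the $N$-integral via the horocycle-bracket identities $\langle gan\cdot o,g\cdot wM\rangle=-H(n^{-1}w)+H(gaw)$ and $d_\lambda(gMA)=e^{(i\lambda+\rho)(H(g)+H(gw))}$. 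Your approach is more elementary and avoids citing \cite{HS} Lemma 5.15, while the paper's detour through \cite{HS} simultaneously establishes that the relevant integrals converge. The final $M$-averaging step (your key identity $H(m^{-1}n^{-1}mw)=H(n^{-1}w)$ using bi-$M$-invariance of $H$ and $w$ normalizing $M$) is exactly the mechanism the paper uses at the end of its proof, so you have correctly identified the crucial algebraic ingredient.

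There are, however, real gaps in your justification of the interchanges. First, $B^{(2)}=(B\times B)\setminus\Delta(B)$ is an \emph{open} subset of the compact manifold $B\times B$ and is therefore not compact; the intermediate value $d_\lambda(b,b')$ degenerates as $(b,b')$ approaches the diagonal, so the ``everything is over compact sets'' reasoning breaks down exactly where it is most needed. Second, Remark \ref{rem: pscontd} gives continuity of $\mathrm{PS}_\varphi$ on $C^\infty(\Gamma\backslash G/M)$ as a distribution, but what you actually need is that the $B^{(2)}$-integral formula for $\mathrm{PS}_\varphi$ still applies to $L_\lambda(\chi a)$, which is \emph{not} compactly supported on $G/M$ (the Radon transform of a non-compactly-supported function need not converge); the paper secures this by first invoking Theorem \ref{thm: inter} to know the left-hand side is finite and then \cite{HS} Lemma 5.15 to exhibit it as a convergent explicit integral. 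Third, to make the Fubini interchange past $dT_\varphi(b)\,dT_\varphi(b')$ rigorous, the paper replaces $T_\varphi$ by $D_b^\alpha F_\varphi$ for a continuous $F_\varphi$ using the structure theorem cited from \cite{GO}, converting distributional pairings into genuine integrals of jointly continuous kernels before interchanging. Your ``finite order plus uniform $C^k$-bound'' heuristic points in the right direction, but it needs to be carried out explicitly (e.g.\ in the manner of the paper) and, because of the noncompactness of $B^{(2)}$, the uniformity in $(b,b')$ must be argued, not asserted.
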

\begin{proof}


By Theorem \ref{thm: inter} we know that $\langle L_\lambda(\chi a),\mathrm{PS}_\varphi\rangle_{G/M}$ is finite. Thus,
\begin{eqnarray*}
\langle L_\lambda(\chi a),\mathrm{PS}_\varphi\rangle &\overset{\text{def}}=& \int_{B^2} \mathcal{R}(L_\lambda(\chi a))(b,b')d_\lambda(b,b')dT_\varphi(b)dT_\varphi(b')\\ 
&\overset{\text{Lemma 5.15 in [HS]}}=&\int_{B^2} \int_X \chi a(z,b) e^{(i\lambda+\rho)(\langle z,b\rangle+\langle z,b'\rangle)}dz dT_\varphi(b)dT_\varphi(b')\\&=:&(*)
\end{eqnarray*}

We then identify $G/M=X\times B=G/K\times K/M$ and assume that $\chi(z,b)=\chi(z)$ is independent of $b$, i.e. $\chi\in C_c^\infty(G/K)$. Further we assume that $T_\varphi$ equals $D_b^\alpha F_\varphi$ for some differential operator $D_b^\alpha$ on $B$ and some $F_\varphi\in C(B)$ in the distributional sense, see \cite[Th. 1.3.]{GO}. That is, \[\langle f,T_\varphi\rangle_B=\int_B (D_b^\alpha f)(b)F_\varphi(b)db\] for all $f\in C^\infty(B)$. Then we also identify $X=AN$, $dx=dnda$. Thus,

\begin{eqnarray*}
(*)
&=& \int_{B^2}\int_{AN} \chi(an) D_b^\alpha D_{b'}^\alpha \left(a(an,b)e^{(i\lambda+\rho)(\langle an,b\rangle+\langle an,b'\rangle)}\right)F_\varphi(b)F_\varphi(b')dndadbdb'\nonumber\\
&\overset{\text{Fubini}}=& \int_N \int_{B^2} \int_A\chi(an)  D_b^\alpha D_{b'}^\alpha \left(a(an,b)e^{(i\lambda+\rho)(\langle an,b\rangle+\langle an,b'\rangle)}\right)F_\varphi(b)F_\varphi(b') dadbdb'dn\nonumber\\
&=&\int_N \int_{B^{(2)}} \int_A\chi(an)  D_b^\alpha D_{b'}^\alpha \left(a(an,b)e^{(i\lambda+\rho)(\langle an,b\rangle+\langle an,b'\rangle)}\right)F_\varphi(b)F_\varphi(b') dadbdb'dn\nonumber\\
&=& \int_N\int_{B^{(2)}} \int_A \chi(an)a(an,b)e^{(i\lambda+\rho)(\langle an,b\rangle+\langle an,b'\rangle)} dadT_\varphi(b)dT_\varphi(b')dn=:(**)\\
\end{eqnarray*}

We write $\chi a$ as a function on $G/M$, compare \cite[(6.15ff)]{HS}. For any $g\in G$ with $(g\cdot M,g\cdot wM)=(b,b')$ we have \[\chi a(gan\cdot o,b)=\chi a(gan\cdot o,g\cdot M)=\chi a(gan\cdot o,gan\cdot M)=\chi a(ganM).\]

Since $dx=dadn$ is $G$-invariant, $(**)$ equals 
\[\int_N\int_{G/MA} \int_A \chi a(ganM)e^{(i\lambda+\rho)(\langle gan\cdot o,g\cdot M\rangle+\langle gan\cdot o,g\cdot wM\rangle)} da dT_\varphi\otimes dT_\varphi(gMA)dn.\]

Then $\langle gan\cdot o,g\cdot M\rangle=\langle gan\cdot o,gan\cdot M\rangle=H(gan)=H(ga)$, since $g\cdot M=b$ and $P=MAN$, in particular $AN$ fixes $M$. 
Further, 
\begin{eqnarray*}
\langle gan\cdot o,g\cdot wM\rangle&=&-H(n^{-1}a^{-1}w)+H(gw)\\
&=&-H(n^{-1}w)+H(a)+H(gw)\\
&=&-H(n^{-1}w)+H(gaw),
\end{eqnarray*} 
since $\langle g\cdot x,g\cdot b\rangle=\langle x,b\rangle+\langle g\cdot o,g\cdot b\rangle$, $H(ga)=H(g)+H(a)$, see \cite[Ch.II (46)]{GASS}, for all $a\in A$ and $g\in G$.  Also $H(waw^{-1})=H(a^{-1})$ for all $a\in A$. Hence,
\begin{eqnarray*}
&&\int_N\int_{G/MA} \int_A \chi a(ganM)e^{(i\lambda+\rho)(\langle gan\cdot o,g\cdot M\rangle+\langle gan\cdot o,g\cdot wM\rangle)} da dT_\varphi\otimes dT_\varphi(gMA)dn\\&=& \int_N\int_{G/MA} \int_A \chi a(ganM)e^{(i\lambda+\rho)(H(ga)+H(gaw))}e^{-(i\lambda+\rho)H(n^{-1}w)}da dT_\varphi\otimes dT_\varphi(gMA)dn
\\&=& \int_{N} \int_{G/MA} \int_A \chi a(ganM) e^{-(i\lambda+\rho)H(n^{-1}w)}d_\lambda(gMA)dT_\varphi da\otimes dT_\varphi(gMA)dnda
\\&=& \int_N e^{-(i\lambda+\rho)H(n^{-1}w)}\int_{G/MA}d_\lambda(gMA) \int_A (\chi a)^n(gaM) dadT_\varphi\otimes dT_\varphi(gMA)dn\\
&=& \int_N e^{-(i\lambda+\rho)H(n^{-1}w)} \int_{G/M} \chi a(gnM) d\mathrm{PS}_\varphi(gM).
\end{eqnarray*}

Now $H$ is the projection belonging to $G=KAN$, which is bi-$M$-invariant, and $w$ normalizes $M$. Hence, for any $m\in M$ \[H(n^{-1}w)=H(n^{-1}ww^{-1}mw)=H(n^{-1}mw)=H(m^{-1}n^{-1}mw).\] 

Furthermore, $\chi a \in C_c^\infty(G/M)$ and $\int_M=1$, so
\begin{eqnarray*}
&&\int_N\int_{G/M} e^{-(i\lambda+\rho)H(n^{-1}w)}\chi a(gn)d \mathrm{PS}_\varphi(gM) dn\\&=& \int_N \int_{G/M} \int_M e^{-(i\lambda+\rho)H(m^{-1}n^{-1}mw)} \chi a (gnm)dm d \mathrm{PS}_\varphi(gM) dn. 
\end{eqnarray*} 

For every $m\in M$ the mapping $n\mapsto mnm^{-1}$ is an automorphism of $N$ fixing $dn$, thus
\begin{eqnarray*}
&&\int_N \int_{G/M} \int_M e^{-(i\lambda+\rho)H(m^{-1}n^{-1}mw)} \chi a (gnm)dm d \mathrm{PS}_\varphi(gM) dn \\&=& \int_N \int_{G/M} \int_M e^{-(i\lambda+\rho)H(n^{-1}w)} \chi a (gmn) dm d \mathrm{PS}_\varphi(gM) dn\\
&=& \int_N e^{-(i\lambda+\rho)H(n^{-1}w)} \int_{G/M}\int_M \chi a (gmn) dm d \mathrm{PS}_\varphi(gM) dn\\
&=& \int_N e^{-(i\lambda+\rho)H(n^{-1}w)}\langle \pi_M\left((\chi a)^n\right),\mathrm{PS}_\varphi\rangle_{G/M}dn.
\end{eqnarray*}

We note that $\pi_M((\chi a)^n)$ is indeed in $C_c^\infty(G/M)$ as $M$ is compact.
\end{proof}



In view of Lemma \ref{lem: conec} we also get  the following corollary.
\begin{corollary}\label{cor: inter}
Let $a\in C^\infty(\Gamma\backslash G/M)$. Then \begin{equation*} \langle L_\lambda(\chi a),\mathrm{PS}_\varphi\rangle_{G/M}=\int_N e^{-(i\lambda+\rho)H(n^{-1}w)} \langle \pi_M(a^n),\mathrm{PS}_\varphi\rangle_{SX_\Gamma}dn. 
\end{equation*}
\end{corollary}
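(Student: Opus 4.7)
The plan is to derive Corollary \ref{cor: inter} as an immediate combination of Proposition \ref{thm: 2} and Lemma \ref{lem: conec}, once one verifies that the substitution under the $N$-integral is legitimate.

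First I would start from Proposition \ref{thm: 2} applied to $a \in C^\infty(\Gamma\backslash G/M)$ and an arbitrary smooth fundamental domain cut-off $\chi \in C_c^\infty(G/M)$, giving
\[
\langle L_\lambda(\chi a),\mathrm{PS}_\varphi\rangle_{G/M}
= \int_N e^{-(i\lambda+\rho)H(n^{-1}w)}\,\langle \pi_M\left((\chi a)^n\right),\mathrm{PS}_\varphi \rangle_{G/M}\,dn.
\]
Next, for each fixed $n \in N$, the function $(\chi a)^n$ lies in $C_c^\infty(G/M)$ and decomposes as $(\chi a)^n = \chi^n \cdot a^n$, where $\chi^n$ is itself a compactly supported function whose $\Gamma$-translates sum to $1$ (this uses only the right-$N$-equivariance of the cut-off property). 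Hence $\chi^n$ is again a smooth fundamental domain cut-off, and Lemma \ref{lem: conec} (after averaging over $M$, which commutes with right translation by $n$) yields
\[
\langle \pi_M\left((\chi a)^n\right),\mathrm{PS}_\varphi\rangle_{G/M}
= \langle \pi_M(a^n),\mathrm{PS}_\varphi\rangle_{SX_\Gamma}
\]
for every $n\in N$. Substituting this pointwise identity into the integrand of the $N$-integral produces the desired formula.

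The only genuine thing to check is that this substitution is legitimate, i.e.\ that both integrals make sense. On the left, $L_\lambda(\chi a)$ lies in $C^\infty(G/M)$ with reasonable growth properties so that the pairing with $\mathrm{PS}_\varphi$ (and the decay estimate governing Proposition \ref{thm: 2}) is unchanged; on the right, one invokes the continuity of $\mathrm{PS}_\varphi$ on $C^\infty(SX_\Gamma)$ recorded in Remark \ref{rem: pscontd} together with the local uniformity in $n$ of the seminorms applied to $\pi_M(a^n)$. The expected main obstacle is exactly this integrability/continuity bookkeeping; for the principal series case it is already subsumed in Proposition \ref{prop: absconv}, so the corollary requires no further analytic input beyond a clean application of Fubini's theorem and Lemma \ref{lem: conec}.
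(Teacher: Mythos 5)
Your proposal follows the paper's own route: Corollary~\ref{cor: inter} is stated there as an immediate consequence of Proposition~\ref{thm: 2} and Lemma~\ref{lem: conec}, exactly the two ingredients you invoke, and the convergence concern you raise is indeed the only delicate point (and in the principal-series case is subsumed in Proposition~\ref{prop: absconv}).

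One small correction: the detour through the factorization $(\chi a)^n = \chi^n \cdot a^n$ and the claim that $\chi^n$ is again a smooth fundamental domain cut-off is superfluous, and as written it is slightly off. Lemma~\ref{lem: conec} is stated directly for the pairing of $\pi_M\bigl((\chi a)^n\bigr)$ and already builds in the right $n$-translation; no decomposition of $(\chi a)^n$ is needed. Moreover, while $\chi^n$ does satisfy $\sum_{\gamma}\chi^n(\gamma g)=1$, it is not in general right-$M$-invariant (one has $\chi^n(gm)=\chi(gmn)$, which need not equal $\chi(gn)$), so it does not literally live in $C_c^\infty(G/M)$ and is not a cut-off in the sense used in the paper. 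Your parenthetical appeal to ``averaging over $M$'' papers over this, but none of it is needed: simply apply Lemma~\ref{lem: conec} pointwise in $n$ to pass from $\langle \pi_M\bigl((\chi a)^n\bigr),\mathrm{PS}_\varphi\rangle_{G/M}$ to $\langle \pi_M(a^n),\mathrm{PS}_\varphi\rangle_{SX_\Gamma}$, then substitute under the $N$-integral of Proposition~\ref{thm: 2}.
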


Hence, combining this theorem, Corollary \ref{cor: inter} and Theorem \ref{thm: inter} we get an exact relation between Wigner- and Patterson-Sullivan distributions from the principal series on the level of non-constant automorphic eigenfunctions. 
For $k\in \mathbb{C}$ with $\mathrm{Re}(k)>\rho_0$ let us set \index{$I(a,b,\rho_0,k)$, normalizing constant}
\begin{equation}\label{eq: normcon}I(a,b,\rho_0,k):=\int_0^\infty s^{l-2}(1+s^2)^{-k}{}_2F_1\left(a,b,\rho_0;-s^2\right)ds,\end{equation}
 which depends on $k$ and the eigenfunction $\varphi_j$ with eigenvalue $\mu_j=-\frac{1}{4\rho_0}(\rho_0^2+r_j^2)$, $a=\frac{1}{2}(\rho_0+ir_j)$, $b=\frac{1}{2}(\rho_0-ir_j)$.

\begin{theorem}\label{thm: wigpat}
Let $G=SO_o(1,l)$, $\varphi_j$ be an automorphic eigenfunction in the principal series and $\varphi_k$ be any non-constant automorphic eigenfunctions, then 

\begin{eqnarray}\label{eq: wipsint}
\langle \mathrm{Op}(\varphi_k)\varphi_j, \varphi_j\rangle_{L^2(X_\Gamma)}&=&\omega_{l-1}\int_0^\infty s^{l-2}\left(1+s^2\right)^{-(i\lambda_j+\rho_0)}{}_2F_1\left(a,b,\rho_0;-s^2\right)ds \cdot \langle \varphi_k,\mathrm{PS}_{\varphi_j}\rangle_{SX_\Gamma}\nonumber\\&=&\omega_{l-1} I(a,b,\rho_0,i\lambda_j+\rho_0)\cdot \langle \varphi_k,\mathrm{PS}_{\varphi_j}\rangle_{SX_\Gamma}
\end{eqnarray}
with $I(a,b,\rho_0,i\lambda_j+\rho_0)$ as in (\ref{eq: normcon}).
\end{theorem}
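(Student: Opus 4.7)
The plan is to chain together the three main ingredients already in place: Theorem~\ref{thm: inter} expressing the Wigner distribution via the intertwiner, Corollary~\ref{cor: inter} rewriting the intertwiner as an integral over $N$ against the translated Patterson--Sullivan pairing, and Theorem~\ref{thm: diffeqex} identifying $n\mapsto \langle \pi_M(\varphi_k^n),\mathrm{PS}_{\varphi_j}\rangle_{SX_\Gamma}$ on a slice as a hypergeometric function with initial value $\langle \varphi_k,\mathrm{PS}_{\varphi_j}\rangle_{SX_\Gamma}$.

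First, I would apply Theorem~\ref{thm: inter} with $a=\varphi_k$ (viewed as a function on $SX_\Gamma$ via pullback along the projection $SX_\Gamma\to X_\Gamma$) to get
\[
\langle \mathrm{Op}(\varphi_k)\varphi_j,\varphi_j\rangle_{L^2(X_\Gamma)}=\langle L_{\lambda_j}(\chi\varphi_k),\mathrm{PS}_{\varphi_j}\rangle_{G/M}
\]
for an arbitrary smooth fundamental domain cut-off $\chi$. Then Corollary~\ref{cor: inter} rewrites this as
\[
\int_N e^{-(i\lambda_j+\rho)H(n^{-1}w)}\,\langle \pi_M(\varphi_k^n),\mathrm{PS}_{\varphi_j}\rangle_{SX_\Gamma}\,dn,
\]
and Proposition~\ref{prop: absconv} guarantees this integral converges absolutely, so no further delicate exchange of limits is required.

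Next I would exploit bi-$M$-invariance: the factor $e^{-(i\lambda_j+\rho)H(n^{-1}w)}$ is bi-$M$-invariant because $H$ is bi-$M$-invariant and $w$ normalizes $M$, while the map $n\mapsto \langle \pi_M(\varphi_k^n),\mathrm{PS}_{\varphi_j}\rangle_{SX_\Gamma}$ is bi-$M$-invariant by Theorem~\ref{thm: diffeqex}. For $l\geq 3$, polar coordinates on $N\cong\mathbb{R}^{l-1}$ reduce the integral to one over $\exp\mathbb{R}^+X_{e_1}$ with weight $\omega_{l-1}s^{l-2}$. On this slice, formula (\ref{eq: Hnuw}) gives $H(\exp(-sX_{e_1})w)=\ln(1+s^2)H_0$, hence $e^{-(i\lambda_j+\rho)H(\exp(-sX_{e_1})w)}=(1+s^2)^{-(i\lambda_j+\rho_0)}$, and Theorem~\ref{thm: diffeqex} identifies $\langle \pi_M(\varphi_k^{\exp sX_{e_1}}),\mathrm{PS}_{\varphi_j}\rangle_{SX_\Gamma}$ with $\langle \varphi_k,\mathrm{PS}_{\varphi_j}\rangle_{SX_\Gamma}\cdot {}_2F_1(a,b,\rho_0;-s^2)$. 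Substituting and pulling the scalar $\langle \varphi_k,\mathrm{PS}_{\varphi_j}\rangle_{SX_\Gamma}$ out of the integral yields precisely $\omega_{l-1}I(a,b,\rho_0,i\lambda_j+\rho_0)\cdot\langle \varphi_k,\mathrm{PS}_{\varphi_j}\rangle_{SX_\Gamma}$.

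For $l=2$ the group $N$ is one-dimensional and the second solution on the slice contributes $2i\langle X_{e_1}\varphi_k,\mathrm{PS}_{\varphi_j}\rangle_{SX_\Gamma}\cdot s\cdot{}_2F_1(a+\tfrac12,b+\tfrac12,\tfrac32;-s^2)$. Since the weight $(1+s^2)^{-(i\lambda_j+\rho_0)}$ is even in $s$ while this term is odd, it integrates to zero over $\mathbb{R}$, leaving only the even part, which again produces the asserted identity (with $\omega_1=2$). The main technical point is therefore the combination of the three preceding results; the expected obstacle---namely justifying the Fubini-type exchange that lets one pull $\langle \varphi_k,\mathrm{PS}_{\varphi_j}\rangle_{SX_\Gamma}$ out of the $s$-integral---is already dealt with by the absolute convergence statement of Proposition~\ref{prop: absconv} together with the continuity of $\mathrm{PS}_{\varphi_j}$ noted in Remark~\ref{rem: pscontd}.
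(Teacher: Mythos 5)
Your proposal is correct and follows essentially the same route as the paper: chain Theorem~\ref{thm: inter}, Corollary~\ref{cor: inter}, and Theorem~\ref{thm: diffeqex}, then evaluate the $N$-integral in polar coordinates using bi-$M$-invariance and formula~(\ref{eq: Hnuw}), with the odd term vanishing for $l=2$. The paper abbreviates the slice computation by pointing to the proof of Proposition~\ref{prop: absconv}, which is precisely the calculation you spell out, so there is no substantive difference.
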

\begin{proof}
For any $\varphi_k$ we have 
\begin{eqnarray}\label{eq: umform}
\langle \mathrm{Op}(\varphi_k)\varphi_j,\varphi_j\rangle&\overset{\ref{thm: inter}}=&\langle L_{\lambda_j}(\chi \varphi_k),\mathrm{PS}_{\varphi_j}\rangle_{G/M}\nonumber\\&\overset{\ref{cor: inter}}=& \int_N e^{-(i\lambda_j+\rho)H(n^{-1}w)} \langle \pi_M(\varphi_k^n),\mathrm{PS}_{\varphi_j}\rangle_{SX_\Gamma}dn.
\end{eqnarray}

As in the proof of Proposition \ref{prop: absconv} we find that for $l\geq 3$
\begin{eqnarray*}
&&\int_N e^{-(i\lambda_j+\rho)H(n^{-1}w)} \langle \pi_M(\varphi_k^n),\mathrm{PS}_{\varphi_j}\rangle_{SX_\Gamma}dn\\
&=& \omega_{l-1}\int_0^\infty s^{l-2}\left(1+s^2\right)^{-(i\lambda_j+\rho_0)} {}_2F_1\left(a,b,\rho_0;-s^2\right)ds\cdot \langle \varphi_k,\mathrm{PS}_{\varphi_j}\rangle_{SX_\Gamma}.
\end{eqnarray*}

For $l=2$ we have  
\begin{eqnarray*}
&&\int_N e^{-(i\lambda_j+\rho)H(n^{-1}w)} \langle \pi_M(\varphi_k^n),\mathrm{PS}_{\varphi_j}\rangle_{SX_\Gamma}dn 
\\&=& \langle \varphi_k,\mathrm{PS}_{\varphi_j}\rangle_{SX_\Gamma}\cdot\int_{-\infty}^\infty (1+s^2)^{-(i\lambda_j+\rho_0)} {}_2F_1\left(a,b,\rho_0;-s^2\right)ds.
\end{eqnarray*}

The  last equality of (\ref{eq: wipsint}) follows from the definition of $I(a,b,\rho_0,k)$, see (\ref{eq: normcon}), and the convergence of the integral $I(a,b,\rho_0,i\lambda_j+\rho_0)$ follows from Theorem \ref{thm: absconvofint}.
\end{proof}

We excluded the case where $\varphi_j$ is a complementary series eigenfunctions, as we do not know whether the integral
\[\int_0^\infty s^{l-2}\left(1+s^2\right)^{-(i\lambda_j+\rho_0)}{}_2F_1\left(\frac{1}{2}(\rho_0+ir_k) ,\frac{1}{2}(\rho_0-ir_k) ,\rho_0;-s^2\right)ds\] converges in this case. It surely converges absolutely if $\varphi_j$ is in the principal series as Theorem \ref{thm: absconvofint} shows.

But we can still say something about the relation between Wigner- and Patterson-Sullivan distributions in the case of the complementary series. Let us define \index{$C_{k,j}$, constant associated to eigenfunctions $\varphi_k$ and $\varphi_j$}
\[C_{k,j}=\left\{\begin{array}{ccc}\frac{\langle \mathrm{Op}(\varphi_k)\varphi_j,\varphi_j\rangle_{L^2(X_\Gamma)}}{\langle \varphi_k,\mathrm{PS}_{\varphi_j}\rangle_{SX_\Gamma}} &, & \langle \varphi_k,\mathrm{PS}_{\varphi_j}\rangle_{SX_\Gamma}\neq 0\\ 0 & , & \mbox{ else}\end{array}\right.\]

\begin{theorem}\label{th: trivcon}
For any automorphic eigenfunctions $\varphi_k$ and $\varphi_j$ we have \[\langle \mathrm{Op}(\varphi_k)\varphi_j,\varphi_j\rangle_{L^2(X_\Gamma)}=C_{k,j}\cdot\langle \varphi_k,\mathrm{PS}_{\varphi_j}\rangle_{SX_\Gamma}.\]
\end{theorem}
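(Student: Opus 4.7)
The asserted identity is almost tautological: when $\langle\varphi_k,\mathrm{PS}_{\varphi_j}\rangle_{SX_\Gamma}\neq 0$, the constant $C_{k,j}$ is defined precisely as the quotient $\langle\mathrm{Op}(\varphi_k)\varphi_j,\varphi_j\rangle/\langle\varphi_k,\mathrm{PS}_{\varphi_j}\rangle$, so the equality holds by construction. The only substantive content is therefore the implication
\[
\langle\varphi_k,\mathrm{PS}_{\varphi_j}\rangle_{SX_\Gamma}=0 \;\Longrightarrow\; \langle\mathrm{Op}(\varphi_k)\varphi_j,\varphi_j\rangle_{L^2(X_\Gamma)}=0,
\]
and my strategy is to focus entirely on this direction.

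The tool to exploit is identity (\ref{eq: umform}), obtained by combining Theorem \ref{thm: inter} with Corollary \ref{cor: inter}, which expresses the Wigner pairing as
\[
\langle\mathrm{Op}(\varphi_k)\varphi_j,\varphi_j\rangle_{L^2(X_\Gamma)} \;=\; \int_N e^{-(i\lambda_j+\rho)H(n^{-1}w)}\,\langle\pi_M(\varphi_k^n),\mathrm{PS}_{\varphi_j}\rangle_{SX_\Gamma}\,dn.
\]
By Theorem \ref{thm: diffeqex} the integrand $n\mapsto\langle\pi_M(\varphi_k^n),\mathrm{PS}_{\varphi_j}\rangle_{SX_\Gamma}$ is bi-$M$-invariant on $N$ and satisfies the Casimir eigenvalue equation; when $l\geq 3$ its restriction to the slice $\exp\mathbb{R}^+X_{e_1}$ equals $\langle\varphi_k,\mathrm{PS}_{\varphi_j}\rangle_{SX_\Gamma}\cdot {}_2F_1(a,b,\rho_0;-s^2)$. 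Thus vanishing of the Patterson--Sullivan pairing forces the function to vanish on the slice, and Kostant's transitivity of $M$ on spheres in $\mathfrak{n}$ (used throughout Chapter \ref{chap: radial}) together with bi-$M$-invariance propagates the vanishing to all of $N$. The integral is then identically zero.

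The case $l=2$ requires a brief additional argument, because Theorem \ref{thm: diffeqex} then supplies an extra odd component $2i\langle X_{e_1}\varphi_k,\mathrm{PS}_{\varphi_j}\rangle_{SX_\Gamma}\cdot s\cdot {}_2F_1(a+\tfrac12,b+\tfrac12,\tfrac32;-s^2)$ which need not vanish even when $\langle\varphi_k,\mathrm{PS}_{\varphi_j}\rangle=0$. However, $N\cong\mathbb{R}$ and the weight $e^{-(i\lambda_j+\rho)H(n^{-1}w)}=(1+s^2)^{-(i\lambda_j+\rho_0)}$ is even in $s$, so multiplying by the odd surviving summand produces an odd integrand on $\mathbb{R}$ whose integral vanishes by parity. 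The hard part of the argument will be justifying that this parity cancellation can be applied rigorously in the complementary series case, where the criterion of Theorem \ref{thm: absconvofint} does not directly guarantee absolute convergence; I expect to handle this either by regularising using the explicit hypergeometric tail behaviour, or, more cleanly, by working throughout with the intrinsic pairing $\langle L_{\lambda_j}(\chi\varphi_k),\mathrm{PS}_{\varphi_j}\rangle_{G/M}$ from Theorem \ref{thm: inter} (which is unconditionally defined via the continuity of $\mathrm{PS}_{\varphi_j}$ noted in Remark \ref{rem: pscontd}) and only passing to the $N$-integral representation after the vanishing of the integrand has been established.
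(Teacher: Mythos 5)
Your proposal follows essentially the same route as the paper: reduce to the case $\langle\varphi_k,\mathrm{PS}_{\varphi_j}\rangle_{SX_\Gamma}=0$, use Theorem \ref{thm: diffeqex} to conclude that $n\mapsto\langle\pi_M(\varphi_k^n),\mathrm{PS}_{\varphi_j}\rangle_{SX_\Gamma}$ vanishes on $N$, and then identify the Wigner pairing with the intrinsic pairing $\langle L_{\lambda_j}(\chi\varphi_k),\mathrm{PS}_{\varphi_j}\rangle_{G/M}$ via Proposition \ref{thm: 2} and Theorem \ref{thm: inter}, obtaining zero. Your final instinct to organize the argument around the unconditionally defined intrinsic pairing and only invoke the $N$-integral after the pointwise vanishing is known is exactly how the paper avoids any convergence issue in the $l\geq 3$ case, since integrating the zero function raises no question.

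Your observation about $l=2$ is genuinely correct and exposes a gap in the paper's stated proof: for $SO_o(1,2)$, Theorem \ref{thm: diffeqex} gives the additional odd summand $2i\langle X_{e_1}\varphi_k,\mathrm{PS}_{\varphi_j}\rangle\cdot s\cdot {}_2F_1(a+\tfrac12,b+\tfrac12,\tfrac32;-s^2)$, so $\langle\varphi_k,\mathrm{PS}_{\varphi_j}\rangle=0$ alone does \emph{not} force the integrand to vanish identically, contrary to what the paper's proof asserts. Your parity argument (the weight $(1+s^2)^{-(i\lambda_j+\rho_0)}$ is even, the surviving term is odd, so the $N\cong\mathbb{R}$ integral cancels) is a valid repair when the integral converges, and your worry about the complementary series is also apt: once the integrand is not pointwise zero, one must justify the Fubini step in Proposition \ref{thm: 2}, which the paper does not do for $\lambda_j\notin\mathbb{R}$. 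The thesis is really aimed at $l\geq 3$, with the surface case treated in \cite{AZ}, so this gap is not fatal to the development, but your reading is sharper than the written proof.
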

\begin{proof}
It suffices to consider the case of $\langle \varphi_k,\mathrm{PS}_{\varphi_j}\rangle_{SX_\Gamma}=0$ and show that it implies $\langle \mathrm{Op}(\varphi_k)\varphi_j,\varphi_j\rangle_{L^2(X_\Gamma)}=0$. If we assume $\langle \varphi_k,\mathrm{PS}_{\varphi_j}\rangle_{SX_\Gamma}=0$, it follows by Theorem \ref{thm: diffeqex} that for any smooth fundamental domain cut-off $\chi$ \[n\mapsto \langle \pi_M(\chi \varphi_k)^n,\mathrm{PS}_{\varphi_j}\rangle_{G/M}\] vanishes identically. Thus,
\[\int_N e^{-(i\lambda_j+\rho)H(n^{-1}w)}\langle \pi_M(\chi \varphi_k)^n,\mathrm{PS}_{\varphi_j}\rangle_{G/M}dn=0.\]
 
But then it follows by Proposition \ref{thm: 2} and Theorem \ref{thm: inter} that
\begin{eqnarray*}
\int_N e^{-(i\lambda_j+\rho)H(n^{-1}w)}\langle \pi_M(\chi \varphi_k)^n,\mathrm{PS}_{\varphi_j}\rangle_{G/M}dn&\overset{\text{Prop. \ref{thm: 2}}}=& \langle L_{\lambda_j}(\chi \varphi_k),\mathrm{PS}_{\varphi_j}\rangle_{G/M}\\ 
&\overset{\text{Thm. \ref{thm: inter}}}=& \langle \mathrm{Op}(\varphi_k)\varphi_j,\varphi_j\rangle_{L^2(X_\Gamma)}=0.
\end{eqnarray*} 
\end{proof}


Note that by Proposition \ref{prop: decay} from the appendix to this chapter we will know how $C_{k,j}$ decays, if $\varphi_j$ is in the principal series. \begin{definition}
For an automorphic eigenfunction $\varphi$ with eigenvalue $-(\lambda^2+\rho_0^2)$ from the principal series we define the \index{normalized Patterson-Sullivan distribution} \textit{normalized Patterson-Sullivan distribution} by \index{$\widehat{\mathrm{PS}}_\varphi$}
\[\langle a,\widehat{\mathrm{PS}}_{\varphi}\rangle_{SX_\Gamma}:=\frac{\langle a,\mathrm{PS}_{\varphi}\rangle_{SX_\Gamma}}{\langle \mathbf 1,\mathrm{PS}_\varphi\rangle_{SX_\Gamma}}.\]
\end{definition}

It will follow from the next proposition that $\langle \mathbf 1,\mathrm{PS}_\varphi\rangle_{SX_\Gamma}\neq 0$, if $\varphi$ is in the principal series. For the normalization of Patterson-Sullivan distributions we have:
\begin{proposition} Let $\varphi$ be an automorphic eigenfunction with eigenvalue $-(\lambda^2+\rho_0^ 2)$ such that $\mathrm{Re}(\lambda)\in \mathfrak{a}_{+}^*$, i.e. such that $\varphi_j$ is in the principal series. Then
\[\langle L_{\lambda}(\chi \mathbf 1),\mathrm{PS}_\varphi\rangle_{G/M}=\langle \mathbf{1},\mathrm{PS}_\varphi\rangle_{\Gamma\backslash G/M}\cdot \mathcal{C}(\lambda).\] 

Here \index{$\mathcal C(\lambda)$, constant depending on $\lambda\in\mathfrak a^*_{\mathbb C}$}  $\mathcal{C}(\lambda)=\int_Ne^{-(i\lambda+\rho)H(n^{-1}w)}dn=\omega_{l-1}\int_0^\infty s^{l-2}\left(1+s^2\right)^{-(i\lambda+\rho_0)}ds$. 
\end{proposition}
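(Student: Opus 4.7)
The plan is to apply Proposition \ref{thm: 2} with test function $a=\mathbf{1}$, which allows us to rewrite the left hand side as an integral over $N$ of a Patterson-Sullivan pairing. Since $\mathbf{1}^n=\mathbf{1}$ and $\pi_M(\mathbf{1})=\mathbf{1}$, this pairing becomes a constant in $n$, and everything reduces to computing the explicit integral $\mathcal{C}(\lambda)$.

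First I would apply Proposition \ref{thm: 2} directly to $a=\mathbf{1}$ to obtain
\[\langle L_\lambda(\chi \mathbf 1),\mathrm{PS}_\varphi\rangle_{G/M}=\int_N e^{-(i\lambda+\rho)H(n^{-1}w)}\langle \pi_M\left((\chi \mathbf 1)^n\right),\mathrm{PS}_\varphi \rangle_{G/M} dn.\]
Then by Lemma \ref{lem: conec} with $a=\mathbf{1}$, and since $\mathbf{1}^n=\mathbf{1}$ with $\pi_M(\mathbf{1})=\mathbf{1}$, the inner pairing is
\[\langle \pi_M((\chi\mathbf{1})^n),\mathrm{PS}_\varphi\rangle_{G/M}=\langle \pi_M(\mathbf{1}^n),\mathrm{PS}_\varphi\rangle_{\Gamma\backslash G/M}=\langle \mathbf{1},\mathrm{PS}_\varphi\rangle_{\Gamma\backslash G/M},\]
which is independent of $n\in N$ and can be pulled out of the integral.

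It remains to compute $\mathcal{C}(\lambda)=\int_N e^{-(i\lambda+\rho)H(n^{-1}w)}dn$. Using the identification $N\cong\mathbb{R}^{l-1}$ via $u\mapsto n_u$ and the formula (\ref{eq: Hnuw}), namely $H(n_{-u}w)=H(n_uw)=\ln(1+|u|^2)H_0$, together with $n_u^{-1}=n_{-u}$ and $\rho(H_0)=\rho_0$, I compute
\[e^{-(i\lambda+\rho)H(n_u^{-1}w)}=(1+|u|^2)^{-(i\lambda+\rho_0)},\]
where $\lambda$ is identified with $\lambda(H_0)$ as in the paper. Converting to polar coordinates on $\mathbb{R}^{l-1}$ gives
\[\mathcal{C}(\lambda)=\int_{\mathbb{R}^{l-1}}(1+|u|^2)^{-(i\lambda+\rho_0)}du=\omega_{l-1}\int_0^\infty s^{l-2}(1+s^2)^{-(i\lambda+\rho_0)}ds,\]
and combining with the two previous steps yields the claim.

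The only point requiring care is convergence of $\mathcal{C}(\lambda)$: the asymptotics of the radial integrand $s^{l-2}(1+s^2)^{-(i\lambda+\rho_0)}$ at infinity require $2\mathrm{Re}(i\lambda+\rho_0)>l-1=2\rho_0$, i.e. $\mathrm{Re}(i\lambda)>0$, which is precisely the principal series hypothesis $\mathrm{Re}(\lambda)\in\mathfrak{a}_+^*$ placed on $\lambda$ (read in the convention of Theorem \ref{th: hccf} for Harish-Chandra's $c$-function). The hypothesis of Proposition \ref{thm: 2} — that the intertwiner $L_\lambda$ applied to $\chi\mathbf{1}\in C_c^\infty(G/M)$ is a well-defined element on which $\mathrm{PS}_\varphi$ acts — is exactly covered by this same half-plane condition, so both sides of the identity are simultaneously defined and the argument goes through with no further input.
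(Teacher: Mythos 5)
Your proof follows the same route as the paper's: apply Proposition \ref{thm: 2} with $a=\mathbf{1}$, identify $\langle\pi_M((\chi\mathbf{1})^n),\mathrm{PS}_\varphi\rangle_{G/M}=\langle\mathbf{1},\mathrm{PS}_\varphi\rangle_{\Gamma\backslash G/M}$ via the fundamental-domain-cut-off argument (the paper unfolds this directly, you invoke Lemma \ref{lem: conec}; they are the same computation), and pull the constant out of the $N$-integral. One caveat on your closing convergence remark: for genuine principal-series $\lambda$ one has $\lambda\in\mathbb{R}$ and hence $\mathrm{Re}(i\lambda)=0$, so the integrand decays only like $s^{-1}$ and the integral $\mathcal{C}(\lambda)$ does not converge absolutely; the identification $\mathcal{C}(\lambda)=\frac{\omega_{l-1}}{2}B(i\lambda,\rho_0)$ used later in the paper is obtained by first working in the half-plane $\mathrm{Re}(i\lambda)>0$ (equivalently $\mathrm{Im}(\lambda)<0$) and then continuing meromorphically, so the stated hypothesis $\mathrm{Re}(\lambda)\in\mathfrak{a}_+^*$ should not be conflated with $\mathrm{Re}(i\lambda)\in\mathfrak{a}_+^*$ as in Theorem \ref{th: hccf}.
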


\begin{proof}
It is
\begin{eqnarray*}
\langle L_\lambda(\chi \mathbf 1),\mathrm{PS}_\varphi\rangle_{G/M}&=&\int_Ne^{-(i\lambda+\rho)H(n^{-1}w)}\langle \pi_M(\chi^n),\mathrm{PS}_\varphi\rangle_Gdn \\
&=& \int_Ne^{-(i\lambda+\rho)H(n^{-1}w)}\int_{G/M}\int_M\chi^n(gm)dm\mathrm{PS}_\varphi(gM)dn\\ &=& \int_N e^{-(i\lambda+\rho)H(n^{-1}w)}\int_{G/M}\pi_M(\chi^n)(gM)\mathrm{PS}_\varphi(gM)dn\\ &=& \int_Ne^{-(i\lambda+\rho)H(n^{-1}w)}\langle \mathbf{1},\mathrm{PS}_\varphi\rangle_{\Gamma\backslash G/M}dn=\mathcal{C}(\lambda)\langle \mathbf 1,\mathrm{PS}_\varphi\rangle_{\Gamma\backslash G/M}.
\end{eqnarray*}

\end{proof}

\begin{remark}\label{rem: hccfc}
We recall Harish-Chandra's $c$-function which is given by $$c(\lambda)=\int_{\bar N}e^{-(i\lambda+\rho)H(\bar n)}d\bar{n},$$ where the measure $d\bar n$ on $\bar N$ is normalized such that $\int_{\overline{N}}e^{-2\rho\left(H(\bar n)\right)}d\bar n=1$, see Theorem \ref{th: hccf}. Now with our normalization of $dn$ via $N\cong \bar{N}\cong \mathbb{R}^{l-1}$, $X_u\mapsto \theta X_u\mapsto u$, we compute
\begin{eqnarray*}
\int_{\bar{N}}e^{-2\rho \left(H(\bar n)\right)}d\bar n&=& \int_N e^{-2\rho\left( H(n_{-u}w)\right)}du\\
&=& \omega_{l-1}\int_0^\infty (1+s^2)^{-2\rho_0}ds\\
&=& \frac{\omega_{l-1}}{2}B(\rho_0,\rho_0).
\end{eqnarray*}
\end{remark}

Together with Theorem \ref{thm: inter} we get:
\begin{corollary}\label{cor: normasymp}
If $\varphi_j$ is an automorphic eigenfunction in the principal series with eigenvalue $-(\lambda_j^2+\rho_0^2)$, then \[ \langle 1,\mathrm{PS}_{\varphi_j}\rangle_{SX_\Gamma}\cdot \mathcal C(\lambda_j)=1\] resp. \[\langle a,\widehat{\mathrm{PS}}_{\varphi_j}\rangle_{SX_\Gamma}=\mathcal C(\lambda_j)\langle a,\mathrm{PS}_{\varphi_j}\rangle_{SX_\Gamma} \] for $a\in C^\infty(SX_\Gamma)$. 
\end{corollary}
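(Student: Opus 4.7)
The plan is to read both statements as immediate consequences of Theorem \ref{thm: inter} and the preceding proposition, combined with the fact that $\{\varphi_j\}$ is orthonormal.

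First I would establish the first equality. Take $a \equiv \mathbf{1}$ in Theorem \ref{thm: inter}. The operator $\mathrm{Op}(\mathbf{1})$ is by (\ref{def: opspec}) just multiplication by $1$, i.e.\ the identity on $L^2(X_\Gamma)$, so
\[
\langle \mathrm{Op}(\mathbf{1})\varphi_j,\varphi_j\rangle_{L^2(X_\Gamma)} \;=\; \|\varphi_j\|_{L^2(X_\Gamma)}^2 \;=\; 1,
\]
since we fixed $\{\varphi_j\}$ to be orthonormal. On the other hand, Theorem \ref{thm: inter} gives
\[
\langle \mathrm{Op}(\mathbf{1})\varphi_j,\varphi_j\rangle_{L^2(X_\Gamma)} \;=\; \langle L_{\lambda_j}(\chi\mathbf{1}),\mathrm{PS}_{\varphi_j}\rangle_{G/M},
\]
and the previous proposition identifies the right-hand side with $\langle \mathbf{1},\mathrm{PS}_{\varphi_j}\rangle_{SX_\Gamma}\cdot \mathcal{C}(\lambda_j)$. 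Equating both expressions yields $\langle \mathbf{1},\mathrm{PS}_{\varphi_j}\rangle_{SX_\Gamma}\cdot \mathcal{C}(\lambda_j)=1$.

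For the second equality I would simply unwind the definition of the normalized Patterson--Sullivan distribution: by construction
\[
\langle a,\widehat{\mathrm{PS}}_{\varphi_j}\rangle_{SX_\Gamma} \;=\; \frac{\langle a,\mathrm{PS}_{\varphi_j}\rangle_{SX_\Gamma}}{\langle \mathbf{1},\mathrm{PS}_{\varphi_j}\rangle_{SX_\Gamma}},
\]
and substituting $\langle \mathbf{1},\mathrm{PS}_{\varphi_j}\rangle_{SX_\Gamma}=\mathcal{C}(\lambda_j)^{-1}$ from the first part gives the claimed formula $\langle a,\widehat{\mathrm{PS}}_{\varphi_j}\rangle_{SX_\Gamma}=\mathcal{C}(\lambda_j)\langle a,\mathrm{PS}_{\varphi_j}\rangle_{SX_\Gamma}$.

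There is essentially no obstacle here; the only point that has to be checked is that the normalizing constant $\mathcal{C}(\lambda_j)$ is nonzero for $\lambda_j$ in the principal series (so that the division is legitimate and the first identity forces $\langle \mathbf{1},\mathrm{PS}_{\varphi_j}\rangle_{SX_\Gamma}\neq 0$). This is automatic: by Remark \ref{rem: gang} and the computation of $F_{f_k}$ in Proposition \ref{th: traceclass}, the integral $\mathcal{C}(\lambda_j)=\omega_{l-1}\int_0^\infty s^{l-2}(1+s^2)^{-(i\lambda_j+\rho_0)}\,ds$ converges absolutely for $\lambda_j\in\mathbb{R}$ and equals $\tfrac{\omega_{l-1}}{2}B\bigl(\rho_0,\tfrac{i\lambda_j+\rho_0-\rho_0}{1}+\tfrac{\rho_0}{2}\bigr)$-type expression in the Gamma function, which has no zeros on the relevant line. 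So the proof reduces to plugging $a=\mathbf{1}$ into the two established identities and exploiting the orthonormality of $\{\varphi_j\}$.
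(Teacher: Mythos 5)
Your proof is correct and matches the paper's intended argument: the corollary is stated immediately after the proposition computing $\langle L_\lambda(\chi\mathbf{1}),\mathrm{PS}_\varphi\rangle_{G/M}=\langle\mathbf{1},\mathrm{PS}_\varphi\rangle_{SX_\Gamma}\cdot\mathcal{C}(\lambda)$ with "Together with Theorem \ref{thm: inter} we get," and the route you take (plug $a\equiv\mathbf{1}$ into Theorem \ref{thm: inter}, use orthonormality, then unwind the definition of $\widehat{\mathrm{PS}}$) is exactly that. The only blemish is that the closed form you wrote for $\mathcal{C}(\lambda_j)$ is garbled; the correct evaluation is $\mathcal{C}(\lambda_j)=\tfrac{\omega_{l-1}}{2}B(\rho_0,i\lambda_j)=\tfrac{\omega_{l-1}}{2}\Gamma(\rho_0)\Gamma(i\lambda_j)/\Gamma(\rho_0+i\lambda_j)$, but your conclusion that it is nonzero for $\lambda_j\in\mathbb{R}\setminus\{0\}$ is still right.
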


Finally, let us state the refinement of Proposition \ref{thm: strac}, we obtained in this section by Theorems \ref{thm: wigpat} and \ref{th: trivcon}.

\begin{theorem}\label{thm: stracspec}
Let $G=SO_o(1,l)=NAK$, $M=Z_K(A)$, $\Gamma\subset G$ a uniform lattice and $SX_\Gamma=\Gamma \backslash G/M$ the unit sphere bundle. We fix an orthonormal basis $\{\varphi_j\}_j$ of $L^2(\Gamma\backslash X)$ of automorphic Laplace-eigenfunctions with eigenvalues $-\left(\lambda_j^2+\rho_0^2\right)=-\frac{1}{4\rho_0}\left(r_j^2+\rho_0^2\right)$ and choose a non-constant $\varphi_k$ in $\{\varphi_j\}_j$. Finally, let $f\in C^\infty(G//K)$ such that $\pi_R(f)$ is of trace class, where $\pi_R$ is the right-regular representation of $G$ on $L^2(\Gamma\backslash X)$. For example $f\in C^\infty_c(G//K)$ suffices. Then $\varphi_k\cdot \pi_R(f)$ is also of trace class with trace given by 
\end{theorem}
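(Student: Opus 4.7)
The plan is to start from the spectral trace formula already established in Proposition \ref{thm: strac},
\begin{equation*}
\mathrm{Tr}(\varphi_k\cdot \pi_R(f))=\sum_j \langle \mathrm{Op}(\varphi_k)\varphi_j,\varphi_j\rangle_{L^2(X_\Gamma)}\,\mathcal{S}(f,\lambda_j),
\end{equation*}
which is valid for any rank one symmetric space once $\pi_R(f)$ is of trace class, and then replace each Wigner pairing on the right by its expression in terms of a Patterson--Sullivan distribution. The splitting of the sum follows the spectral picture described at the start of Chapter \ref{chap: strace}: only the finitely many indices $j\le j_0$ belong to the complementary series, while all other $\varphi_j$ lie in the principal series ($\lambda_j\in \mathbb R$, $\mathrm{Re}(\rho+i\lambda_j)=\rho_0$).

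For the principal series part, I would substitute Theorem \ref{thm: wigpat} directly, getting
\begin{equation*}
\langle \mathrm{Op}(\varphi_k)\varphi_j,\varphi_j\rangle_{L^2(X_\Gamma)}=\omega_{l-1}\,I(a_k,b_k,\rho_0,i\lambda_j+\rho_0)\,\langle \varphi_k,\mathrm{PS}_{\varphi_j}\rangle_{SX_\Gamma},
\end{equation*}
where $a_k=\frac{1}{2}(\rho_0+ir_k)$, $b_k=\frac{1}{2}(\rho_0-ir_k)$ are fixed (they only depend on $\varphi_k$, not on $j$), and the constant $I(a_k,b_k,\rho_0,i\lambda_j+\rho_0)$ was defined in (\ref{eq: normcon}). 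For the complementary series part (finitely many indices), Theorem \ref{th: trivcon} yields the analogous identity with the constant $C_{k,j}$. Combining the two, the proposed formula is
\begin{equation*}
\mathrm{Tr}(\varphi_k\cdot \pi_R(f))=\sum_{j\le j_0}C_{k,j}\,\langle \varphi_k,\mathrm{PS}_{\varphi_j}\rangle_{SX_\Gamma}\,\mathcal{S}(f,\lambda_j)+\omega_{l-1}\sum_{j>j_0}I(a_k,b_k,\rho_0,i\lambda_j+\rho_0)\,\langle \varphi_k,\mathrm{PS}_{\varphi_j}\rangle_{SX_\Gamma}\,\mathcal{S}(f,\lambda_j).
\end{equation*}

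The only substantive issue is that the substitution is a termwise rewriting of an absolutely convergent series, and I want the rewritten series to be absolutely convergent in its new form as well. The principal series case is the one that needs care: one has to control both growth of $I(a_k,b_k,\rho_0,i\lambda_j+\rho_0)$ as $|\lambda_j|\to\infty$ and the decay of $\mathrm{PS}$-pairings versus that of Wigner pairings. Since $\{\varphi_j\}$ is fixed independently of $f$ and $\mathcal{S}(f,\lambda_j)$ is rapidly decreasing in $\lambda_j$ whenever $f\in C_c^\infty(G//K)$, the cleanest route is to note that termwise both expressions of $\langle \mathrm{Op}(\varphi_k)\varphi_j,\varphi_j\rangle$ are equal (this is where Theorems \ref{thm: wigpat} and \ref{th: trivcon} enter), so the absolute convergence of the original spectral expansion, guaranteed by the trace class property of $\varphi_k\cdot\pi_R(f)$, automatically passes to the new form. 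The main obstacle I anticipate is making this bookkeeping precise --- in particular, verifying that the constant $I(a_k,b_k,\rho_0,i\lambda_j+\rho_0)$ (which is the crucial factor of Theorem \ref{thm: wigpat}) does not hide a cancellation, i.e.\ that each individual term really is the Wigner pairing and not a simplification of it that could fail for some $j$. This is resolved by observing that the derivation of Theorem \ref{thm: wigpat} through Proposition \ref{thm: 2} and Corollary \ref{cor: inter} holds term by term for every principal series $\varphi_j$.

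Finally, once this formula is in hand, the first summand (only finitely many terms) is harmless, and all the analytic content of the theorem --- which will drive the meromorphic continuation of $\mathcal Z(\varphi_k)$ in the next chapter --- is encoded in the second sum, since $k\mapsto I(a_k,b_k,\rho_0,i\lambda_j+\rho_0)$ is (when $f=f_k$) a concrete meromorphic function of $k$ through the Beta/Gamma identities discussed in Section \ref{sec: spher traf} and Section \ref{sec: ode.1}, and its $j$-dependence is controlled through the parameters of the hypergeometric function.
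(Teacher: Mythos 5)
Your proposal is correct and matches the paper's own (implicit) argument: the theorem is obtained by termwise substitution into Proposition \ref{thm: strac}, using Theorem \ref{thm: wigpat} for the principal series indices $j>j_0$ and Theorem \ref{th: trivcon} for the finitely many complementary series indices $j\le j_0$ (the $j=0$ term vanishes since $\varphi_k$ is orthogonal to constants). Your added remark that absolute convergence passes automatically to the rewritten series because the substitution is a term-by-term identity is sound and is the only bookkeeping needed.
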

\begin{eqnarray*}\label{eq: stracspec}\mathrm{Tr}(\varphi_k\cdot \pi_R(f))&=&\sum_j \langle \mathrm{Op}(\varphi_k)\varphi_j,\varphi_j\rangle_{L^2(SX_\Gamma)}\mathcal{S}(f,\lambda_j)\\&=&\sum^{j_0}_{j>0} \langle \mathrm{Op}(\varphi_k)\varphi_j,\varphi_j \rangle_{L^2(X_\Gamma)} \mathcal{S}(f,\lambda_j)\\&&+\omega_{l-1}\sum_{j>j_0}I(a,b,\rho_0,\rho_0+i\lambda_j) \mathcal{S}(f,\lambda_j)\cdot\langle \varphi_n,\mathrm{PS}_{\lambda_j}\rangle_{SX_\Gamma}\\&=&
\sum^{j_0}_{j>0} C_{k,j} \mathcal{S}(f,\lambda_j)\cdot\langle \varphi_k,\mathrm{PS}_{\varphi_j}\rangle_{SX_\Gamma)}\\&&+\omega_{l-1}\sum_{j>j_0}I(a,b,\rho_0,\rho_0+i\lambda_j) \mathcal{S}(f,\lambda_j)\cdot\langle \varphi_n,\mathrm{PS}_{\lambda_j}\rangle_{SX_\Gamma}
\end{eqnarray*}
where $a=\frac{1}{2}(\rho_0+ir_k)$, $b=\frac{1}{2}(\rho_0-ir_k)$, $\rho_0=\frac{l-1}{2}$ and $\mathcal S(f)$ denotes the spherical transform. 
\subsection{Convergence and asymptotics of $I(a,b,\rho_0,z)$}\label{chap: convergence}

In this appendix to Section \ref{conn} we want to discuss the convergence and asymptotic behaviour of the integrals

\begin{equation}\label{eq: asconint}
I(a,b,\rho_0,z)=\int_0^\infty s^{l-2}(1+s^2)^{-z}{}_2F_1\left(a,b,\rho_0;-s^2\right)ds,
\end{equation}
we needed in Proposition \ref{prop: absconv}. The next theorem determines $z\in\mathbb{C}$ for which $(\ref{eq: asconint})$ converges absolutely. 

\begin{lemma}\label{thm: absconvofint}
Let $\varphi_k$ be an automorphic Laplace eigenfunction with eigenvalue $\mu_k=-\frac{1}{4\rho_0}(\rho_0^2+r_k^2)$, so $a=\frac{1}{2}(\rho_0+ir_k)$, $b=\frac{1}{2}(\rho_0-ir_k)$. The integral 
\begin{equation*} \int_0^\infty s^{l-2}(1+s^2)^{-z}{}_2F_1\left(a,b,\rho_0;-s^2\right)ds \end{equation*}
 converges absolutely for $\rho_0+\mathrm{Re}(ir_k)<2\mathrm{Re}(z)$, more precisely 
\begin{equation*}
\int_0^\infty \left|s^{n-2}(1+s^2)^{-z}{}_2F_1(a,b,\rho_0;-s^2)\right|ds\leq C \int_{0}^\infty (s+1)^{-2\mathrm{Re}(z)+\rho_0+\mathrm{Re}(ir_k)}ds
\end{equation*}
for some constant $C$ independent of $z$ and $r_k$.
\end{lemma}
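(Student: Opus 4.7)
The plan is to split the integral at $s=1$ and control each piece separately. On $[0,1]$ the hypergeometric function is continuous (with value $1$ at $s=0$) and $(1+s^2)^{-z}$ is bounded, so the integrand is dominated by a constant times $s^{l-2}$; since $l\geq 2$ this is integrable. All the work is therefore in the tail $[1,\infty)$, where I would combine an elementary estimate on $(1+s^2)^{-z}$ with the classical asymptotic for ${}_2F_1$ at infinity.

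For $s\geq 1$ one has $|(1+s^2)^{-z}|\leq(1+s^2)^{-\mathrm{Re}(z)}\leq s^{-2\mathrm{Re}(z)}$. To handle the hypergeometric factor I would invoke the standard linear transformation (Abramowitz--Stegun 15.3.7, or equivalently \cite[2.10(2)]{Ba}): for $a-b\notin\mathbb Z$,
\begin{equation*}
{}_2F_1(a,b,\rho_0;-s^2)=\frac{\Gamma(\rho_0)\Gamma(b-a)}{\Gamma(b)\Gamma(\rho_0-a)}(s^2)^{-a}\,{}_2F_1\!\left(a,1-\rho_0+a,1-b+a;-s^{-2}\right)+\frac{\Gamma(\rho_0)\Gamma(a-b)}{\Gamma(a)\Gamma(\rho_0-b)}(s^2)^{-b}\,{}_2F_1\!\left(b,1-\rho_0+b,1-a+b;-s^{-2}\right).
\end{equation*}
As $s\to\infty$ the two residual hypergeometric factors tend to $1$, hence are uniformly bounded for $s\geq 1$. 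Taking absolute values and using $2\mathrm{Re}(a)=\rho_0+\mathrm{Re}(ir_k)$, $2\mathrm{Re}(b)=\rho_0-\mathrm{Re}(ir_k)$ gives a constant $C$ (depending only on $r_k$ through the Gamma factors) with
\begin{equation*}
\bigl|{}_2F_1(a,b,\rho_0;-s^2)\bigr|\leq C\bigl(s^{-\rho_0-\mathrm{Re}(ir_k)}+s^{-\rho_0+\mathrm{Re}(ir_k)}\bigr),\qquad s\geq 1.
\end{equation*}

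Multiplying with $s^{l-2}\cdot s^{-2\mathrm{Re}(z)}$ and using $l-2-\rho_0=\rho_0-1$, the tail integrand is bounded by a constant times
\begin{equation*}
s^{\rho_0-1-2\mathrm{Re}(z)-\mathrm{Re}(ir_k)}+s^{\rho_0-1-2\mathrm{Re}(z)+\mathrm{Re}(ir_k)},
\end{equation*}
and both exponents are strictly less than $-1$ exactly when $2\mathrm{Re}(z)>\rho_0+|\mathrm{Re}(ir_k)|$; in particular the condition $\rho_0+\mathrm{Re}(ir_k)<2\mathrm{Re}(z)$ of the lemma is sufficient for absolute convergence. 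The explicit pointwise estimate, after extending to $s\in[0,1]$ using the trivial bound $(1+s)^{-2\mathrm{Re}(z)+\rho_0+\mathrm{Re}(ir_k)}\leq 1$, then produces the stated majorant up to a constant independent of $z$.

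The main obstacle is the exceptional case $a-b=ir_k\in\mathbb Z$ (e.g.\ $r_k=0$), in which the connection formula above degenerates and must be replaced by the logarithmic variant (\cite[2.10(12)-(14)]{Ba}). The leading powers $s^{-2a}$ and $s^{-2b}$ coincide there, but the additional $\log s$ factor is absorbed by any strict inequality in $\mathrm{Re}(z)$, so the same convergence threshold persists; one handles this either by a limiting argument as $r_k$ is perturbed off the exceptional locus, or by inserting the logarithmic formula and estimating $\log s\leq C_\varepsilon s^\varepsilon$ for arbitrarily small $\varepsilon>0$. A secondary technical point is the uniformity of the constant $C$ in $r_k$, which is needed if one wants the bound to behave well for the asymptotic decay statements used in Proposition \ref{prop: decay}; this follows from Stirling's formula (\ref{eq: gammaasympt}) applied to the Gamma ratio $\Gamma(b-a)/(\Gamma(b)\Gamma(\rho_0-a))$ and its companion.
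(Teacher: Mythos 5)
Your approach is essentially the same as the paper's: split off the tail, bound $|(1+s^2)^{-z}|\leq s^{-2\mathrm{Re}(z)}$, invoke the connection formula for ${}_2F_1$ at infinity, and control the Gamma ratios via Stirling. One place where you are genuinely more careful than the paper is the degenerate case $a-b=-ir_k\in\mathbb Z$: the paper writes the connection formula down without caveat, while you correctly note that the logarithmic variant is needed there and that the extra $\log s$ factor is harmless for any strict inequality on $\mathrm{Re}(z)$.

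However, the final inference contains a logical gap. You correctly record that both leading powers $s^{-\rho_0-\mathrm{Re}(ir_k)}$ and $s^{-\rho_0+\mathrm{Re}(ir_k)}$ appear, so that absolute convergence of the tail is equivalent to $2\mathrm{Re}(z)>\rho_0+|\mathrm{Re}(ir_k)|$. You then assert "in particular the condition $\rho_0+\mathrm{Re}(ir_k)<2\mathrm{Re}(z)$ of the lemma is sufficient." That implication is false whenever $\mathrm{Re}(ir_k)<0$, since then $\rho_0+\mathrm{Re}(ir_k)<\rho_0+|\mathrm{Re}(ir_k)|$ and the lemma's hypothesis is strictly weaker than the condition you derived. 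This is not a vacuous case: in the proof of Proposition \ref{prop: absconv}, where this lemma is applied, the paper explicitly asserts $\mathrm{Re}(ir_k)\leq 0$. Your more careful bookkeeping has in fact exposed an inconsistency already latent in the paper's own argument (the paper silently keeps only the single power $(1+s)^{-\rho_0+\mathrm{Re}(ir_k)}$, which is the \emph{smaller} of the two exponents precisely when $\mathrm{Re}(ir_k)\leq 0$, hence not a genuine majorant in the complementary-series regime). To close the gap you should either state the condition as $\rho_0+|\mathrm{Re}(ir_k)|<2\mathrm{Re}(z)$, or pin down a sign convention on $r_k$ (so that $\mathrm{Re}(ir_k)\geq 0$) under which the dominant power is $s^{-\rho_0+\mathrm{Re}(ir_k)}$ and the "in particular" becomes an equality rather than a strict weakening; as written the lemma's stated bound does not follow from what you proved. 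A secondary, smaller issue: the constant you obtain on $[0,1]$ from the "trivial bound" involves $2^{2\mathrm{Re}(z)}$ when inverted, so uniformity of $C$ in $z$ needs at least one more sentence.
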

\begin{proof}The proof in \cite[Prop. 5.2.]{AZ} generalizes almost verbatim. First we note that $s^{l-2}(s^2+1)^{-z}$ is asymptotically $s^{2\rho_0-1-2\mathrm{Re}(z)}$, where $\rho_0=\frac{l-1}{2}$. The hypergeometric factor can be controlled by a formula for hypergeometric functions, see for example \cite[(4.7.23)]{GV},
\begin{eqnarray}\label{eq1 asy}
{}_2F_1(\alpha,\beta,\gamma;s)&=&\frac{\Gamma(\gamma)\Gamma(\beta-\alpha)}{\Gamma(\beta)\Gamma(\gamma- \alpha)}|s|^{-\alpha}{}_2F_1(\alpha,1-\gamma+\alpha,1-\beta+\alpha;s^{-1})\nonumber\\&&+\frac{\Gamma(\gamma)\Gamma(\alpha-\beta)}{\Gamma(\alpha)\Gamma(\gamma-\beta)}|s|^{-\beta}{}_2F_1(\beta,1-\gamma+\beta,1-\alpha+\beta;s^{-1}).
\end{eqnarray}

Since ${}_2F_1(\alpha,\beta,\gamma;0)=1$ it follows that asymptotically ${}_2F_1(a,b,\rho_0;-s^2)$ equals
\begin{equation*}
\frac{\Gamma(\rho_0)\Gamma(-ir_k)}{\Gamma(\frac{\rho_0}{2}-\frac{ir_k}{2})^2}\cdot |s|^{-(\rho_0+ir_k)}+\frac{\Gamma(\rho_0)\Gamma(ir_k)}{\Gamma(\frac{\rho_0}{2}+\frac{ir_k}{2})^2}\cdot|s|^{-(\rho_0-ir_k)}.
\end{equation*}

Now $\Gamma(x+iy)\sim \sqrt{2\pi}e^{-\frac{\pi}{2}|y| }|y|^{x-\frac{1}{2} }$ for $|y|\to\infty$, see (\ref{eq: gammaasympt}). Hence the ratios $\frac{\Gamma(\rho_0)\Gamma(-ir_k)}{\Gamma(\frac{\rho_0}{2}-\frac{ir_k}{2})^2}$ and $\frac{\Gamma(\rho_0)\Gamma(ir_k)}{\Gamma(\frac{\rho_0}{2}+\frac{ir_k}{2})^2}$ are uniformly bounded in $r_k$. Thus, for some constant $C>0$ independent of $r_k$ and for all $s\geq 0$

\begin{equation*}
\left|{}_2F_1(a,b,\rho_0;-s^2)\right|\leq C(1+s)^{-\rho_0+\mathrm{Re}(ir_k)}.
\end{equation*}

Multiplying the two terms together we find that the integrand 
\begin{equation*}
s^{l-2}\left(s^2+1 \right)^{-z} {}_2F_1(a,b,\rho_0;-s^2)
\end{equation*}
is asymptotically bounded by
\begin{equation*}
C(1+s)^{\rho_0-1+\mathrm{Re}(ir_k)-2\mathrm{Re}(z)},
\end{equation*}
which is integrable on $[0,\infty)$ iff 
\begin{equation*}
\rho_0-1+\mathrm{Re}(ir_k)-2\mathrm{Re}(z)<-1,
\end{equation*} i.e. iff 
\begin{equation*}
\rho_0+\mathrm{Re}(ir_k)<2\mathrm{Re}(z).
\end{equation*}
\end{proof}

In particular $I(a,b,\rho_0,i\lambda+\rho_0)$ converges if $-(\lambda^2+\rho_0^2)$ is an eigenvalue from the principal series, that is, $\lambda\in\mathbb R$. Next, we want to show that $k\mapsto I(a,b,\rho_0,k)$ for fixed $a$ and $b$ can be meromorphically extended to $\mathbb C$.

\begin{lemma}\label{lem: inthypg}Let $\varphi$ be a Laplacian eigenfunction with eigenvalue $-\frac{1}{4\rho_0}(r^2+\rho_0^2)$ , $a=\frac{1}{2}(\rho_0+ir)$, $b=\frac{1}{2}(\rho_0-ir)$.
\begin{eqnarray}\label{eq: inthypgam}I(a,b,\rho_0,z)&=&\frac{1}{2}\frac{\Gamma(\rho_0)\Gamma(a-\rho_0+z)\Gamma(b-\rho_0+z)}{\Gamma(z)^2}\nonumber\\&=&\frac{1}{2}\frac{\Gamma(\rho_0)\Gamma(z-a)\Gamma(z-b)}{\Gamma(z)^2}  
\end{eqnarray} defines a meromorphic continuation to $\mathbb C$ with poles exactly in $a,a-1,a-2,\ldots$ and $b,b-1,b-2,\ldots$. Furthermore, $I(a,b,\rho_0,k)$ does not vanish for $\mathrm{Re}(k)>0$.  
\end{lemma}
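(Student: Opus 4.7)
The key observation is that, after the substitution $u=s^2$, the integral $I(a,b,\rho_0,z)$ is essentially the one already evaluated in equation (\ref{eq: hypint}) of Section \ref{sec: trivial}. Indeed $s^{l-2}\,ds=\tfrac{1}{2}\,u^{\rho_0-1}\,du$ (using $\rho_0=(l-1)/2$), so
\[
I(a,b,\rho_0,z)=\tfrac{1}{2}\int_0^\infty u^{\rho_0-1}(1+u)^{-z}\,{}_2F_1(a,b,\rho_0;-u)\,du.
\]

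The next step is to invoke the Bateman identity \cite[20.2(10)]{Ba} that was already applied in deriving (\ref{eq: hypint}), namely
\[
\int_0^\infty u^{\rho_0-1}(u+w)^{-k}\,{}_2F_1(a,b,\rho_0;-u)\,du=\frac{\Gamma(\rho_0)\Gamma(k-a)\Gamma(k-b)}{\Gamma(k)\,\Gamma(k+a+b-\rho_0)}\,{}_2F_1(k-a,k-b,k;1-w),
\]
specialized to $w=1$ and $k=z$. Since ${}_2F_1(\cdot,\cdot,\cdot;0)=1$ and the relation $a+b=\rho_0$ collapses $\Gamma(k+a+b-\rho_0)$ to $\Gamma(z)$, this yields
\[
I(a,b,\rho_0,z)=\tfrac{1}{2}\,\frac{\Gamma(\rho_0)\Gamma(z-a)\Gamma(z-b)}{\Gamma(z)^2}.
\]
By Lemma \ref{thm: absconvofint} the equality first holds in the half-plane of absolute convergence $2\,\mathrm{Re}(z)>\rho_0+\mathrm{Re}(ir)$, and the right-hand side, a quotient of Gamma functions, supplies the meromorphic continuation to all of $\mathbb{C}$ by analytic continuation.

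The remaining two assertions are then read off directly from this closed form. Since $\Gamma$ has simple poles at the non-positive integers and no zeros anywhere, $\Gamma(z-a)$ contributes simple poles at $z\in a-\mathbb{N}_0$, $\Gamma(z-b)$ contributes simple poles at $z\in b-\mathbb{N}_0$, and $1/\Gamma(z)^2$ is entire with double zeros at $z\in -\mathbb{N}_0$. Because $a=\tfrac{1}{2}(\rho_0+ir)$ and $b=\tfrac{1}{2}(\rho_0-ir)$ attached to a non-constant eigenfunction always satisfy $\mathrm{Re}(a),\mathrm{Re}(b)\in(0,\rho_0]$, neither $a-\mathbb{N}_0$ nor $b-\mathbb{N}_0$ meets $-\mathbb{N}_0$, so no cancellation with the denominator zeros occurs and the pole set is exactly $\{a-m,b-m:m\in\mathbb{N}_0\}$. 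Finally, non-vanishing on $\{\mathrm{Re}(z)>0\}$ is immediate: $\Gamma$ is zero-free on $\mathbb{C}$ and $\Gamma(z)$ is holomorphic and non-zero on the right half-plane, so wherever $I(a,b,\rho_0,z)$ is defined there it is a non-zero quotient of non-vanishing Gamma values.

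There is no substantive obstacle here: the hard work was already carried out in deriving (\ref{eq: hypint}), and the present lemma is the streamlined specialization. The only point requiring care is the algebraic simplification $k+a+b-\rho_0=k$ coming from $a+b=\rho_0$, which is precisely the reason the two numerator factors $\Gamma(z-a)\Gamma(z-b)$ are matched by a perfect square $\Gamma(z)^2$ in the denominator.
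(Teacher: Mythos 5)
Your proof is correct and follows essentially the same route as the paper: substitute $u=s^2$, apply a Bateman integral transform to evaluate the resulting Mellin-type integral of $(1+u)^{-z}\,{}_2F_1(a,b,\rho_0;-u)$, and then read off the poles and non-vanishing from the resulting ratio of Gamma functions. The paper cites \cite[20.2(9)]{Ba} directly, whereas you invoke \cite[20.2(10)]{Ba} (the version with the extra parameter $w$) and specialize $w=1$; these are equivalent and your specialization, including the cancellation $\Gamma(z+a+b-\rho_0)=\Gamma(z)$ from $a+b=\rho_0$ and ${}_2F_1(\cdot,\cdot,\cdot;0)=1$, is carried out correctly.

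One small imprecision: the inference ``$\mathrm{Re}(a),\mathrm{Re}(b)\in(0,\rho_0]$ implies neither $a-\mathbb{N}_0$ nor $b-\mathbb{N}_0$ meets $-\mathbb{N}_0$'' does not follow as stated, since $a$ or $b$ could be a positive integer in $(0,\rho_0]$ (only in the complementary series, where $a,b$ are real), in which case $a-a=0\in-\mathbb{N}_0$ and a cancellation with the double zero of $1/\Gamma(z)^2$ does occur. The precise observation is that for $r\in\mathbb{R}\setminus\{0\}$ (principal series) $a,b\notin\mathbb{R}$, so no cancellation happens and the pole set is literally the stated one; for the finitely many complementary-series parameters one would need to check that $a,b$ are not integers. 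The paper's own proof elides this point (it just cites Remark~\ref{rem: gamma}), and since the non-vanishing claim on $\{\mathrm{Re}(k)>0\}$—which you argue correctly from the zero-freeness of $\Gamma$ and the location of zeros of $1/\Gamma(z)^2$—is what is actually used downstream, this is a cosmetic gap rather than a substantive one.
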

\begin{proof}
Let $z\in \mathbb{C}$ with $\mathrm{Re}(z)>\rho_0$. By Lemma \ref{thm: absconvofint} \[I(a,b,\rho_0,z)=\int_0^\infty s^{l-2}(s^2+1)^{-z}{}_2F_1(a,b,\rho_0,-s^2)ds\] converges absolutely. We use now an integral transform from \cite[20.2 (9)]{Ba} to get \begin{eqnarray}\label{eq: inthypconc}\int_0^\infty s^{l-2}(s^2+1)^{-z}{}_2F_1(a,b,\rho_0,-s^2)ds &\overset{s\mapsto \sqrt s}=&\frac{1}{2}\int_0^\infty s^{\rho_0-1}(s+1)^{-k}{}_2F_1(a,b,\rho_0;-s)ds \nonumber\\&=& \frac{1}{2}\frac{\Gamma(\rho_0)\Gamma(z+a-\rho_0)\Gamma(z+b-\rho_0)}{\Gamma(z)^2}\nonumber\\&=& \frac{1}{2}\frac{\Gamma(\rho_0)\Gamma(z-b)\Gamma(z-a)}{\Gamma(z)^2}.\end{eqnarray}

The remaining claims now follow from properties of the Gamma function, see Remark \ref{rem: gamma}. 
\end{proof}

Now we want to examine how 
\begin{equation*}
\int_0^\infty s^{l-2}(1+s^2)^{-\rho_0+i\lambda_j}{}_2F_1(a,b,\rho_0;-s^2)ds, 
\end{equation*}
grows as $\lambda_j$ tends to $\infty$. 
In (\ref{eq: inthypconc}) we showed that

\begin{equation}\label{eq: hypertransform}
\int_0^\infty s^{l-2}(1+s^2)^{-(i\lambda_j+\rho_0)}{}_2F_1(a,b,\rho_0;-s^2)ds
\end{equation}
equals

\begin{eqnarray*}
\frac{1}{2}\frac{\Gamma(\rho_0)\Gamma(\frac{\rho_0}{2}+i(\lambda_j+ \frac{r_k}{2} ) )\Gamma(\frac{\rho_0}{2}+i(\lambda_j-\frac{r_k}{2} ))}{\Gamma(i\lambda_j+\rho_0)^2}
.\end{eqnarray*}

Then we make use of the asymptotic formula for the $\Gamma$-function
\begin{equation*}
\Gamma(x+iy)\sim \sqrt{2\pi}e^{-\frac{\pi}{2}|y|}|y|^{x- \frac{1}{2} }\mbox{ , } y\to\infty,
\end{equation*}
see (\ref{eq: gammaasympt}). It follows that

\begin{equation*}
\int_0^\infty s^{l-2}(1+s^2)^{-(i\lambda_j+\rho_0)}{}_2F_1(a,b,\rho_0;-s^2)ds
\end{equation*}
asymptotically equals, as $\lambda_j\to\infty$,

\begin{eqnarray}\label{eq: gammaasym}
\frac{e^{-\frac{\pi}{2}(\lambda_j+ \frac{r_k}{2} ) }(\lambda_j+\frac{r_k}{2} )^{\frac{\rho_0-1}{2}}e^{-\frac{\pi}{2}(\lambda_j- \frac{r_k}{2} ) }(\lambda_j-\frac{r_k}{2} )^{\frac{\rho_0-1}{2}}}{ e^{-\pi \lambda_j}\lambda_j^{2\rho_0-1}}\sim \lambda_j^{-\rho_0} .
\end{eqnarray}

We just have proved the following proposition which will be useful for the meromorphic continuation in Chapter \ref{chap: meromorph}.
\begin{proposition}\label{prop: decay}
Let $\lambda_j$ be in the principal series, i.e. $\lambda_j>0$. The integral \[\int_0^\infty s^{l-2}\left(s^2+1\right)^{(-i\lambda_j+\rho_0)}{}_2F_1\left(a,b,\rho_0;-s^2\right)ds\] decays asymptotically as $\lambda_j^{-\rho_0}$ for $\lambda_j\to \infty$.
\end{proposition}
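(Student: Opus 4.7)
The plan is to reduce the integral to a quotient of Gamma values via the explicit evaluation in Lemma \ref{lem: inthypg}, and then apply the standard asymptotic formula for the Gamma function along vertical strips. Concretely, since $\lambda_j\in\mathbb R$ and $\mathrm{Re}(\rho_0+i\lambda_j)=\rho_0>0$, we are in the range where Lemma \ref{lem: inthypg} applies with $z=\rho_0+i\lambda_j$, giving
\[
I(a,b,\rho_0,\rho_0+i\lambda_j)=\tfrac{1}{2}\,\frac{\Gamma(\rho_0)\,\Gamma(\tfrac{\rho_0}{2}+i(\lambda_j+\tfrac{r_k}{2}))\,\Gamma(\tfrac{\rho_0}{2}+i(\lambda_j-\tfrac{r_k}{2}))}{\Gamma(\rho_0+i\lambda_j)^2}.
\]
Note that although $r_k$ may a priori be purely imaginary (if $\varphi_k$ is in the complementary series), the shifts $\pm r_k/2$ occur as additive constants in the imaginary argument of the Gamma functions, so the asymptotic behaviour as $\lambda_j\to\infty$ will only be affected by a bounded multiplicative factor.

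Next I would invoke the refined Stirling-type estimate \eqref{eq: gammaasympt}, namely $|\Gamma(x+iy)|\sim\sqrt{2\pi}\,e^{-\tfrac{\pi}{2}|y|}|y|^{x-\tfrac{1}{2}}$ as $|y|\to\infty$ with $x$ fixed. Applied term by term, the numerator Gamma factors each contribute a factor asymptotic to $e^{-\tfrac{\pi}{2}\lambda_j}\lambda_j^{\tfrac{\rho_0-1}{2}}$ (since the additive shifts by $\pm r_k/2$ in the imaginary part are absorbed into lower-order corrections, as $|\lambda_j\pm r_k/2|\sim\lambda_j$). The denominator $\Gamma(\rho_0+i\lambda_j)^2$ contributes $e^{-\pi\lambda_j}\lambda_j^{2\rho_0-1}$. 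Taking the quotient, the exponential factors $e^{-\pi\lambda_j}$ cancel exactly, and the powers of $\lambda_j$ combine to
\[
\lambda_j^{(\rho_0-1)-(2\rho_0-1)}=\lambda_j^{-\rho_0},
\]
giving the claimed decay rate.

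The only subtlety I anticipate concerns the case $r_k\in i\mathbb R$ from the complementary series: then $\tfrac{\rho_0}{2}+i(\lambda_j\mp\tfrac{r_k}{2})$ has modified real part and imaginary part $\lambda_j$. One still has $|\lambda_j\mp r_k/(2i)|\sim\lambda_j$, but one must verify that the exponential growth of $e^{-\tfrac{\pi}{2}|\lambda_j\mp r_k/(2i)|}$ still produces the same cancellation against the denominator. This is straightforward because $e^{-\tfrac{\pi}{2}|\lambda_j-c|}\cdot e^{-\tfrac{\pi}{2}|\lambda_j+c|}=e^{-\pi\lambda_j}$ for any fixed real $c$ and $\lambda_j>|c|$, so the exponential cancellation is preserved. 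The polynomial factors only change by a bounded multiplicative constant depending on $r_k$, so the asymptotic $\lambda_j^{-\rho_0}$ remains valid. With these observations the proof is essentially a direct computation, and no additional input beyond Lemma \ref{lem: inthypg} and the asymptotic \eqref{eq: gammaasympt} is needed.
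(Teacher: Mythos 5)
Your argument is correct and follows the same route as the paper: evaluate the integral via Lemma \ref{lem: inthypg} as a ratio of Gamma values and apply the Stirling-type asymptotic \eqref{eq: gammaasympt}, cancelling the exponential factors and collecting $\lambda_j^{(\rho_0-1)-(2\rho_0-1)}=\lambda_j^{-\rho_0}$. One small imprecision in your complementary-series aside: if $r_k\in i\mathbb R$ the shift $\mp r_k/2$ changes the \emph{real} part of the Gamma arguments while the imaginary part is exactly $\lambda_j$, so the exponentials are $e^{-\pi\lambda_j/2}$ each (not $e^{-\pi|\lambda_j\mp r_k/(2i)|/2}$) and the shifted real parts $\rho_0/2\pm c/2$ enter the exponents of $\lambda_j$ and sum to $\rho_0-1$ — the conclusion $\lambda_j^{-\rho_0}$ is unchanged, but the mechanism is slightly different from what you describe.
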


\chapter{The meromorphic continuation of $\mathcal R(\varphi)$ and $\mathcal Z(\varphi)$}\label{chap: meromorph}
In this chapter we will at first give meromorphic continuations of $\mathcal R(\varphi)$ and $\mathcal Z(\varphi)$ which were defined in (\ref{def: auxzet}) resp. (\ref{def: zetfunc}) on the complex half plane $\{k\in\mathbb C:\mathrm{Re}(k)>2\rho_0\}$ to all of $\mathbb C$. This is done in Section \ref{sec: mero1} and Section \ref{sec: mcontrz}. For the loaction of possible poles and residues of $\mathcal R(\varphi)$ and $\mathcal Z(\varphi)$ the focus is on a certain strip $\mathcal S$ in $\mathbb C$ and the (main) results are summarized in Section \ref{sec: msum}.

In Section \ref{sec: mnorm} we will shortly explain how to normalize $\mathcal Z(\varphi)$ in order to obtain a simple formula for its residue in the strip $\mathcal S$. The last Section \ref{sec: out} compares our results on the zeta function $\mathcal Z(\varphi)$ in the surface case with the ones from \cite{AZ}, see also Theorem \ref{th: mainth} and (\ref{eq: zetasurf}).
\section{The meromorphic continuation of $\mathcal{R}(\varphi)$}\label{sec: mero1}
In this section we will discuss the meromorphic continuation which we obtain by using the formula for the spectral trace from Chapter \ref{chap: strace}. The case of $G=SO_o(1,2)$ was dealt with before in \cite[Th. 9.1.]{AZ}. 
Let us recall that $G=SO_o(1,l)=KAN$ and we fixed a uniform lattice $\Gamma$ with an orthonormal basis of automorphic Laplace eigenfunctions $\{\varphi_j\}$ in $L^2(\Gamma\backslash G/K)$ with eigenvalues $-(\lambda_j^2+\rho_0^2)=-\frac{1}{4\rho_0}(r_j^2+\rho_0^2)$. Here $\varphi_0,\ldots, \varphi_{j_0}$ are in the complementary series, i.e. $\lambda_0^2,\ldots,\lambda_{j_0}^2\in[-\rho_0^2,0]$, and $\varphi_j$ is in the principal series, i.e. $\lambda_j^2\in (0,\infty)$ for $j>j_0$. 

By the results of Chapter \ref{chap: gtrace}, see Theorem \ref{th: conauxzeta}, we know that the geometric trace of \[\varphi_n\circ\pi_R(f_k)\] for any eigenfunction $\varphi_n\in\{\varphi_j\}$ which is orthogonal to constants, is given by the auxiliary zeta function, see (\ref{def: auxzet}), 
\begin{equation}\label{gtr}
\mathcal{R}(k;\varphi_n)= \sum_{1\neq[\gamma]\in C\Gamma}\sum_{\pi\in\widehat M}c(\varphi_n,\gamma,\pi,k) (\cosh L_\gamma)^{-k+\rho_0}, 
\end{equation}
for $\mathrm{Re}(k)>2\rho_0$. On the other hand, by Theorem \ref{thm: strac} the spectral trace of $\varphi_n\circ\pi_R(f_k)$ is given for $\mathrm{Re}(k)>2\rho_0$ by \[\sum_{j=0}^\infty \langle \mathrm{Op}(\varphi_n)\varphi_j,\varphi_j \rangle_{L^2(X_\Gamma)} \mathcal{S}(f_k,\lambda_j)\] which equals \[\sum_{j=1}^\infty \langle \mathrm{Op}(\varphi_n)\varphi_j,\varphi_j \rangle_{L^2(X_\Gamma)} \mathcal{S}(f_k,\lambda_j)\] as $\varphi_n$ is orthogonal to the constant function by assumption. From Theorem \ref{thm: stracspec} we then infer that the spectral trace of $\varphi_n\cdot\pi_R(f_k)$ equals, if we replace $\langle \mathrm{Op}(\varphi_n)\varphi_j,\varphi_j\rangle_{L^2(X_\Gamma)} $ by $C_{n,j}\langle \varphi_n,\mathrm{PS}_{\varphi_j}\rangle_{SX_\Gamma}$ and keep in mind that $C_{n,j}=\omega_{l-1}I(a,b,\rho_0,\rho_0+i\lambda_j)$, see Theorems \ref{thm: wigpat} and \ref{th: trivcon},
\begin{equation*}
\sum^{j_0}_{j=1} C_{n,j} \mathcal{S}(f_k,\lambda_j)\cdot\langle \varphi_n,\mathrm{PS}_{\varphi_j}\rangle_{SX_\Gamma}
+\omega_{l-1}\sum_{j=j_0+1}^\infty I(a,b,\rho_0,\rho_0+i\lambda_j) \mathcal{S}(f_k,\lambda_j)\cdot\langle \varphi_n,\mathrm{PS}_{\varphi_j}\rangle_{SX_\Gamma}=:(+)
.
\end{equation*}

Then we use the formula for the spherical transform of $\mathcal S(f_k.\lambda_j)$, see  $(\ref{eq: stfk})$, to see that $(+)$ equals
\begin{eqnarray*}
&& 2^{k-2}B(k-\rho_0,\rho_0)\omega_{l-1}\sum^{j_0}_{j=1} C_{n,j}B\left( \frac{k-i\lambda_j-\rho_0}{2}, \frac{k+i\lambda_j-\rho_0}{2}\right)\langle \varphi_n,\mathrm{PS}_{\varphi_j}\rangle_{SX_\Gamma} + \\&&2^{k-2}\omega_{l-1}^2 B(k-\rho_0,\rho_0)\sum_{j=j_0+1}^\infty
 B\left( \frac{k-i\lambda_j-\rho_0}{2}, \frac{k+i\lambda_j-\rho_0}{2}\right)\langle \varphi_n,\mathrm{PS}_{\varphi_j}\rangle_{SX_\Gamma}\\&&\cdot I(a,b,\rho_0,\rho_0+i\lambda_j)\\&=:&(*).
\end{eqnarray*}

Since the operator $\varphi_n\cdot \pi_R(f_k)$ is of trace class for $ \mathrm{Re}(k)>2\rho_0 $ we know that $(*)$ coincides with (\ref{gtr}) for $ \mathrm{Re}(k)>2\rho_0 $. To obtain the meromorphic continuation we want to show that $(*)$ converges for any $k$ in $\mathbb C$ except from the set \index{$\mathcal P$, set of poles}
\begin{eqnarray}\label{eq: poleset}
\mathcal{P}&:=&\{\rho_0\pm i\lambda_j,\rho_0-2\pm i\lambda_j,\ldots:-(\lambda_j^2+\rho_0^2) \mbox{  eigenvalue of the Laplacian}\}\nonumber\\&&\cup\{\rho_0,\rho_0-1,\rho_0-2,\ldots\}.\end{eqnarray} 

It is clear that \[k\mapsto 2^{k-2}B(k-\rho_0,\rho_0)\omega_{l-1}\sum^{j_0}_{j=1} C_{n,j} B\left( \frac{k-i\lambda_j-\rho_0}{2}, \frac{k+i\lambda_j-\rho_0}{2}\right)\langle \varphi_n,\mathrm{PS}_{\varphi_j}\rangle_{SX_\Gamma}\]
defines a meromorphic function on $\mathbb{C}$. The poles correspond to the poles of the Beta functions $B(k-\rho_0,\rho_0)$ and $B\left( \frac{k-i\lambda_j-\rho_0}{2}, \frac{k+i\lambda_j-\rho_0}{2}\right)$, i.e. $k=\rho_0,\rho_0-1,\ldots$ resp. $k=\rho_0\pm i\lambda_j,\rho_0-2\pm i\lambda_j,\ldots$, where $-\left(\lambda_j^2+\rho_0^2\right)$ is an eigenvalue of the Laplacian. 

Thus, for the convergence of $(*)$ we only have to consider the infinite series 
\begin{equation}\label{eq: infseries}
\sum_{j=j_0+1}^\infty\langle \varphi_n,\mathrm{PS}_{\varphi_j}\rangle_{SX_\Gamma}\cdot I(a,b,\rho_0,\rho_0+i\lambda_j)
\cdot B\left( \frac{k-i\lambda_j-\rho_0}{2}, \frac{k+i\lambda_j-\rho_0}{2}\right)
\end{equation}
outside (\ref{eq: poleset}). Let us check each of the three terms in this series separately. At first, we know that
\begin{equation*}
I(a,b,\rho_0,\rho_0+i\lambda_j)=\int_0^\infty s^{l-2}(1+s^2)^{-(i\lambda_j+\rho_0)}{}_2F_1\left(a,b,\rho_0;-s^2\right)ds
\end{equation*}
behaves for $j\to\infty$ as $\lambda_j^{-\rho_0}$, see Proposition \ref{prop: decay}, because $\varphi_j$ is in the principal series. For the term $\langle \varphi_n,\mathrm{PS}_{\varphi_j}\rangle_{SX_\Gamma}$ we claim the following:
\begin{lemma}\label{lem: asymequiv}
For any automorphic eigenfunction $\varphi_n$, $\langle \varphi_n,\widehat{\mathrm{PS}}_{\lambda_j}\rangle_{SX_\Gamma}$ is uniformly bounded , i.e. there is some constant $C>0$ independent of $j$ such that \[\left|\langle \varphi_n,\widehat{\mathrm{PS}}_{\lambda_j}\rangle_{SX_\Gamma}\right|\leq C\] for all $j$.
\end{lemma}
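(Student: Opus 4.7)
The plan is to reduce the claim to three ingredients already available in the thesis: (i) Corollary \ref{cor: normasymp}, which expresses $\widehat{\mathrm{PS}}_{\varphi_j}$ as a scalar multiple of $\mathrm{PS}_{\varphi_j}$; (ii) Theorem \ref{thm: wigpat}, which converts $\langle \varphi_n,\mathrm{PS}_{\varphi_j}\rangle_{SX_\Gamma}$ into a Wigner matrix coefficient divided by the explicit factor $\omega_{l-1}I(a,b,\rho_0,\rho_0+i\lambda_j)$; and (iii) the Gamma function asymptotics collected in Remark \ref{rem: gamma} together with Proposition \ref{prop: decay}, to show that the two asymptotic factors cancel.

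First I would use Corollary \ref{cor: normasymp}, which (since $\varphi_j$ is in the principal series) gives
\[
\langle \varphi_n,\widehat{\mathrm{PS}}_{\varphi_j}\rangle_{SX_\Gamma}
= \mathcal C(\lambda_j)\,\langle \varphi_n,\mathrm{PS}_{\varphi_j}\rangle_{SX_\Gamma}.
\]
Then Theorem \ref{thm: wigpat} rewrites the right-hand factor as
\[
\langle \varphi_n,\mathrm{PS}_{\varphi_j}\rangle_{SX_\Gamma}
= \frac{\langle \mathrm{Op}(\varphi_n)\varphi_j,\varphi_j\rangle_{L^2(X_\Gamma)}}{\omega_{l-1}\,I(a,b,\rho_0,\rho_0+i\lambda_j)}.
\]
The numerator is uniformly bounded in $j$, since $\mathrm{Op}(\varphi_n)$ is the multiplication operator by the continuous function $\varphi_n$ on the compact space $X_\Gamma$, so
\[
\bigl|\langle \mathrm{Op}(\varphi_n)\varphi_j,\varphi_j\rangle_{L^2(X_\Gamma)}\bigr|
= \bigl|\langle \varphi_n\varphi_j,\varphi_j\rangle_{L^2(X_\Gamma)}\bigr|
\leq \|\varphi_n\|_\infty\,\|\varphi_j\|_{L^2(X_\Gamma)}^2 = \|\varphi_n\|_\infty.
\]
Hence the lemma reduces to showing that the ratio $\mathcal C(\lambda_j)/I(a,b,\rho_0,\rho_0+i\lambda_j)$ is bounded as $\lambda_j\to\infty$.

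For this last step, I would compute both asymptotics separately. By the same substitution $s^2=u$ and the Beta integral used in Section \ref{sec: spher traf},
\[
\mathcal C(\lambda_j) = \omega_{l-1}\int_0^\infty s^{l-2}(1+s^2)^{-(i\lambda_j+\rho_0)}\,ds
= \tfrac{\omega_{l-1}}{2}\,\frac{\Gamma(\rho_0)\Gamma(i\lambda_j)}{\Gamma(i\lambda_j+\rho_0)}.
\]
Applying the asymptotic formula $|\Gamma(x+iy)|\sim \sqrt{2\pi}\,e^{-\pi|y|/2}|y|^{x-1/2}$ from Remark \ref{rem: gamma} to numerator and denominator, the exponentials $e^{-\pi\lambda_j/2}$ cancel and we obtain $\mathcal C(\lambda_j)\sim C\cdot \lambda_j^{-\rho_0}$ for some nonzero constant $C$. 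On the other hand Proposition \ref{prop: decay} already gives $I(a,b,\rho_0,\rho_0+i\lambda_j)\sim \lambda_j^{-\rho_0}$. Therefore
\[
\frac{\mathcal C(\lambda_j)}{\omega_{l-1}\,I(a,b,\rho_0,\rho_0+i\lambda_j)}
\]
is bounded in $j$, and combining with the bound on the Wigner coefficient yields the desired uniform bound $|\langle \varphi_n,\widehat{\mathrm{PS}}_{\varphi_j}\rangle_{SX_\Gamma}|\leq C$.

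The main obstacle, and the step I would treat with care, is the asymptotic cancellation: one must verify that the nonzero constant arising from the $\Gamma$-ratio in $\mathcal C(\lambda_j)$ survives division by the nonvanishing leading coefficient of $I(a,b,\rho_0,\rho_0+i\lambda_j)$ computed in (\ref{eq: gammaasym}), uniformly as $\lambda_j\to\infty$. This is purely a matter of tracking constants in the Stirling-type formula (\ref{eq: gammaasymp}) of Remark \ref{rem: gamma}, but it is the point where the complementary-series indices $j\leq j_0$ would cause trouble, which is why the statement tacitly relies on Proposition \ref{prop: decay} being valid only in the principal series (the finitely many complementary-series terms are handled separately and contribute a bounded quantity).
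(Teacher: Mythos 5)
Your proof is correct, but it takes a genuinely different route from the one in the thesis. The paper's proof is essentially a citation: it invokes \cite[Th.\ 1.3]{HS}, which states that the normalized Patterson-Sullivan distributions $\widehat{\mathrm{PS}}_{\varphi_j}$ are asymptotically equivalent to the Wigner distributions $\langle\mathrm{Op}(\varphi_n)\varphi_j,\varphi_j\rangle_{L^2(X_\Gamma)}$, and then bounds the latter by the compactness of $X_\Gamma$ exactly as you do. You instead re-derive the needed asymptotic relation internally by chaining Corollary~\ref{cor: normasymp} with Theorem~\ref{thm: wigpat}, obtaining the explicit identity
\[
\langle\varphi_n,\widehat{\mathrm{PS}}_{\varphi_j}\rangle_{SX_\Gamma}
=\frac{\mathcal C(\lambda_j)}{\omega_{l-1}\,I(a,b,\rho_0,\rho_0+i\lambda_j)}\,\langle\mathrm{Op}(\varphi_n)\varphi_j,\varphi_j\rangle_{L^2(X_\Gamma)},
\]
and then verifying via the Stirling asymptotics (\ref{eq: gammaasympt}) and Proposition~\ref{prop: decay} that the scalar prefactor is bounded in $j$. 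This is a self-contained argument that makes the cited asymptotic equivalence concrete within the thesis's own framework; as a bonus, the explicit Gamma-function form from Lemma~\ref{lem: inthypg} together with the nonvanishing of $I(a,b,\rho_0,k)$ on $\mathrm{Re}(k)>0$ gives you uniform boundedness over all $j$ directly, rather than relying implicitly on the ``asymptotic equivalence plus finitely many remaining indices'' step that the paper's phrasing glosses over. Two small remarks: (i) the prefactor actually tends to $1$, since the Gamma-factors cancel exactly in the limit, which you could state explicitly; (ii) your worry about the complementary series is moot here, because $\widehat{\mathrm{PS}}_{\varphi_j}$ is only defined for principal-series eigenfunctions, so the lemma implicitly ranges over $j>j_0$ only.
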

\begin{proof}
In \cite[Th. 1.3.]{HS} it is shown that
\begin{equation*}
\langle \varphi_n,\widehat{\mathrm{PS}}_{\lambda_j}\rangle_{SX_\Gamma}
\end{equation*}
behaves asymptotically for $j\to\infty$ like 
\begin{equation*}
\langle \mathrm{Op}(\varphi_n)\varphi_j,\varphi_j\rangle_{L^2(X_\Gamma)}=\int_{X_\Gamma}\varphi_n(x) |\varphi_j(x)|^2dx.
\end{equation*}

But $X_\Gamma$ is compact so that 
\[\left|\langle \mathrm{Op}(\varphi_n)\varphi_j,\varphi_j\rangle_{L^2(X_\Gamma)}\right|\leq \int_{X_\Gamma}|\varphi_n(x)||\varphi_j(x)|^2dx\leq C,\] where $C:=\max\{\varphi_n(x):x\in X_\Gamma\}$. Hence, $\langle \mathrm{Op}(\varphi_n)\varphi_j,\varphi_j\rangle_{L^2(X_\Gamma)}$ and also
$\langle \varphi_n,\widehat{\mathrm{PS}}_{\lambda_j}\rangle_{SX_\Gamma}$ are bounded uniformly in $j$. 
\end{proof}

Now by definition $\langle \varphi_n,\widehat{\mathrm{PS}}_{\lambda_j}\rangle_{SX_\Gamma}=\mathcal{C}(\lambda_j)\langle \varphi_n,\mathrm{PS}_{\lambda_j}\rangle_{SX_\Gamma}$ and $$\mathcal{C}(\lambda_j)=\frac{\omega_{l-1}}{2}B(i\lambda_j,\rho_0)=\frac{\omega_{l-1}}{2}\frac{\Gamma(i\lambda_j)\Gamma(\rho_0)}{\Gamma(\rho_0+i\lambda_j)}$$ which by (\ref{eq: gammaasympt}) for $\lambda_j\to\infty$ equals $$ \frac{\sqrt{2\pi}e^{-\pi/2|\lambda_j|}|\lambda_j^{-1/2}|}{\sqrt{2\pi}e^{-\pi/2|\lambda_j|}|\lambda_j^{\rho_0-1/2}|}=|\lambda_j|^{-\rho_0}. $$

Corollary \ref{cor: normasymp} implies then:
\begin{corollary}
For fixed $\varphi_n$, $\langle \varphi_n,\mathrm{PS}_{\varphi_j}\rangle_{SX_\Gamma}\sim \lambda_j^{\rho_0}$ as $j\to\infty$.
\end{corollary}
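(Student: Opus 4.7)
The plan is to derive the growth estimate directly from the normalization identity established in Corollary~\ref{cor: normasymp} together with the uniform bound of Lemma~\ref{lem: asymequiv}. Concretely, for every $j > j_0$ (so that $\lambda_j \in \mathbb{R}$ and $\mathcal{C}(\lambda_j)$ is defined through the integral formula from Theorem~\ref{th: hccf}) we have the identity
\[
\langle \varphi_n,\widehat{\mathrm{PS}}_{\varphi_j}\rangle_{SX_\Gamma} \;=\; \mathcal{C}(\lambda_j)\,\langle \varphi_n,\mathrm{PS}_{\varphi_j}\rangle_{SX_\Gamma},
\]
so solving for the unnormalized pairing gives $\langle \varphi_n,\mathrm{PS}_{\varphi_j}\rangle_{SX_\Gamma} = \mathcal{C}(\lambda_j)^{-1}\,\langle \varphi_n,\widehat{\mathrm{PS}}_{\varphi_j}\rangle_{SX_\Gamma}$.

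The numerator is uniformly bounded in $j$ by Lemma~\ref{lem: asymequiv}, so the asymptotics are governed entirely by $\mathcal{C}(\lambda_j)^{-1}$. The first step is therefore to analyse $\mathcal{C}(\lambda_j)$. Using the explicit evaluation
\[
\mathcal{C}(\lambda_j) \;=\; \omega_{l-1}\int_0^\infty s^{l-2}(1+s^2)^{-(i\lambda_j+\rho_0)}\,ds \;=\; \tfrac{\omega_{l-1}}{2}\,B(i\lambda_j,\rho_0) \;=\; \tfrac{\omega_{l-1}}{2}\,\frac{\Gamma(i\lambda_j)\,\Gamma(\rho_0)}{\Gamma(\rho_0+i\lambda_j)},
\]
together with the Stirling-type asymptotic (\ref{eq: gammaasympt}), the $e^{-\pi |\lambda_j|/2}$ factors in numerator and denominator cancel and we obtain $|\mathcal{C}(\lambda_j)| \sim |\lambda_j|^{-\rho_0}$ as $j\to\infty$. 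Combining this with the uniform bound on $\langle \varphi_n,\widehat{\mathrm{PS}}_{\varphi_j}\rangle_{SX_\Gamma}$ yields the claimed bound
\[
\bigl|\langle \varphi_n,\mathrm{PS}_{\varphi_j}\rangle_{SX_\Gamma}\bigr| \;=\; O\bigl(|\lambda_j|^{\rho_0}\bigr).
\]

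The only subtle point is the justification that the asymptotic for the quotient of Gamma functions is genuinely sharp (not merely an upper bound), i.e.\ that $\mathcal{C}(\lambda_j)$ does not vanish for large $|\lambda_j|$; this follows from $\Gamma$ having no zeros, see Remark~\ref{rem: gamma}. There is no serious obstacle here — the corollary is essentially a repackaging of Corollary~\ref{cor: normasymp} and Lemma~\ref{lem: asymequiv} via Stirling's asymptotic for the Gamma function, and the symbol "$\sim$" in the statement is to be read as the order-of-growth bound $O(\lambda_j^{\rho_0})$ used in the discussion of convergence of the series (\ref{eq: infseries}) that follows.
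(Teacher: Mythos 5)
Your proof is correct and follows essentially the same route as the paper: combine the normalization identity from Corollary~\ref{cor: normasymp}, the uniform bound from Lemma~\ref{lem: asymequiv}, and the Stirling asymptotics (\ref{eq: gammaasympt}) for $\mathcal{C}(\lambda_j)$. Your closing remark that the "$\sim$" in the statement can only be read as the one-sided bound $O(\lambda_j^{\rho_0})$ is a fair observation — Lemma~\ref{lem: asymequiv} gives only an upper bound on the normalized pairing, so no lower bound on $\langle\varphi_n,\mathrm{PS}_{\varphi_j}\rangle$ is available, but the $O$-bound is exactly what is needed for the ensuing convergence argument for (\ref{eq: infseries}).
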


Finally, we see:
\begin{lemma}\label{lem: betaasym}
Let $k\in \mathbb{C}-\{\rho_0\pm i\lambda_j,\rho_0-2\pm i\lambda_j,\ldots:-\left(\lambda_j^2+\rho_0^2\right)\mbox{ eigenvalue of the Laplacian}\}$. Then
$B\left( \frac{k-i\lambda_j-\rho_0}{2}, \frac{k+i\lambda_j-\rho_0}{2}\right)$
behaves asymptotically for $j\to\infty$ like \[e^{-\frac{\pi}{4}(|\mathrm{Im}(k)+\lambda_j|+|\mathrm{Im}(k)-\lambda_j|)}\left|\mathrm{Im}(k)+\lambda_j\right|^{\frac{\mathrm{Re}(k)-\rho_0-1}{2}}\left|\mathrm{Im}(k)-\lambda_j\right|^{\frac{\mathrm{Re}(k)-\rho_0-1}{2}}.\]
\end{lemma}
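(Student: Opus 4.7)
The plan is to apply the asymptotic formula (\ref{eq: gammaasympt}) for the Gamma function directly to the identity
\[
B\!\left(\tfrac{k-i\lambda_j-\rho_0}{2},\tfrac{k+i\lambda_j-\rho_0}{2}\right)
=\frac{\Gamma\!\left(\tfrac{k-\rho_0-i\lambda_j}{2}\right)\,\Gamma\!\left(\tfrac{k-\rho_0+i\lambda_j}{2}\right)}{\Gamma(k-\rho_0)}.
\]
The denominator depends only on $k$ and therefore contributes a (possibly zero) constant in $j$; this constant is well-defined because $\Gamma$ never vanishes and, when $k-\rho_0\in -\mathbb N_0$, the Beta on the left vanishes identically in $j$, consistent with the claimed asymptotics up to constants.

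For the numerator I would write $k=\mathrm{Re}(k)+i\,\mathrm{Im}(k)$, so that the two Gamma arguments are of the form $x_j+iy_j^{\pm}$ with
\[
x_j=\tfrac{\mathrm{Re}(k)-\rho_0}{2},\qquad y_j^{\pm}=\tfrac{\mathrm{Im}(k)\pm\lambda_j}{2}.
\]
Since $\varphi_j$ is in the principal series, $\lambda_j\to\infty$, so $|y_j^{\pm}|\to\infty$. The hypothesis on $k$ excludes exactly those $k$ for which one of $\tfrac{k\pm i\lambda_j-\rho_0}{2}$ could be a non-positive integer, so both Gamma values stay away from their poles and (\ref{eq: gammaasympt}) applies. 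It gives
\[
\Big|\Gamma\!\left(x_j+iy_j^{\pm}\right)\Big|\;\sim\;\sqrt{2\pi}\,\,|y_j^{\pm}|^{x_j-\tfrac12}\,e^{-\tfrac{\pi}{2}|y_j^{\pm}|}
=\sqrt{2\pi}\,\Big|\tfrac{\mathrm{Im}(k)\pm\lambda_j}{2}\Big|^{\tfrac{\mathrm{Re}(k)-\rho_0-1}{2}}e^{-\tfrac{\pi}{4}|\mathrm{Im}(k)\pm\lambda_j|}.
\]

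Multiplying the two estimates and absorbing the numerical prefactors $2\pi/2^{\mathrm{Re}(k)-\rho_0-1}$ and the constant $1/\Gamma(k-\rho_0)$ into the asymptotic constant yields exactly the asserted behaviour
\[
e^{-\tfrac{\pi}{4}(|\mathrm{Im}(k)+\lambda_j|+|\mathrm{Im}(k)-\lambda_j|)}
\left|\mathrm{Im}(k)+\lambda_j\right|^{\tfrac{\mathrm{Re}(k)-\rho_0-1}{2}}
\left|\mathrm{Im}(k)-\lambda_j\right|^{\tfrac{\mathrm{Re}(k)-\rho_0-1}{2}}.
\]
No real obstacle arises; the only care needed is to observe that the pole set excluded in the hypothesis is precisely what guarantees both numerator Gamma values remain nonsingular as $j\to\infty$, so Stirling is legitimately applicable term-by-term.
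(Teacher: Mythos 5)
Your proof is correct and follows essentially the same route as the paper: express the Beta function as a ratio of Gamma functions, observe that the denominator is independent of $j$, and apply the Stirling-type asymptotics (\ref{eq: gammaasympt}) to each numerator factor as $\lambda_j\to\infty$. The only additions you make — tracking the factor of $2$ inside $|y_j^\pm|$ and remarking on the degenerate case $k-\rho_0\in-\mathbb{N}_0$ — are harmless refinements that the paper absorbs silently into the asymptotic constant.
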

\begin{proof}
Let us write out
\begin{eqnarray*}
B\left( \frac{k-i\lambda_j-\rho_0}{2}, \frac{k+i\lambda_j-\rho_0}{2}\right)&=&\frac{\Gamma(\frac{k-i\lambda_j-\rho_0}{2})\Gamma(\frac{k+i\lambda_j-\rho_0}{2})}{\Gamma(k-\rho_0)}\\
&=& \frac{\Gamma\left(\frac{\mathrm{Re}(k)-\rho_0+i\left(\mathrm{Im}(k)-\lambda_j\right)}{2}\right)\Gamma \left(\frac{\mathrm{Re}(k)-\rho_0+i\left(\mathrm{Im}(k)+\lambda_j\right)}{2}\right)}{\Gamma\left(\mathrm{Re}(k)-\rho_0+i\mathrm{Im}(k)\right)} .
\end{eqnarray*}

Using the asymptotic formula for the Gamma function, see (\ref{eq: gammaasympt}) in Remark \ref{rem: gamma},
\begin{equation}\label{eq: gammaasym}
\Gamma(x+iy)\sim \sqrt{2\pi}e^{-\frac{\pi}{2}|y|}|y|^{x- \frac{1}{2} }\mbox{ , } y\to\infty,
\end{equation}
we find that $B\left( \frac{k-i\lambda_j-\rho_0}{2}, \frac{k+i\lambda_j-\rho_0}{2}\right)$ behaves for $j\to\infty$ as
\begin{equation*}
e^{-\frac{\pi}{4}(|\mathrm{Im}(k)+\lambda_j|+|\mathrm{Im}(k)-\lambda_j|)}\left|\mathrm{Im}(k)+\lambda_j\right|^{\frac{\mathrm{Re}(k)-\rho_0-1}{2}}\left|\mathrm{Im}(k)-\lambda_j\right|^{\frac{\mathrm{Re}(k)-\rho_0-1}{2}}.
\end{equation*}
\end{proof}

Since the term $e^{-\frac{\pi}{4}(|\mathrm{Im}(k)+\lambda_j|+|\mathrm{Im}(k)-\lambda_j|)}$ dominates the other terms $$\left|\mathrm{Im}(k)+\lambda_j\right|^{\frac{\mathrm{Re}(k)-\rho_0-1}{2}}\left|\mathrm{Im}(k)-\lambda_j\right|^{\frac{\mathrm{Re}(k)-\rho_0-1}{2}}$$ and since the series $\sum_{n\in \mathbb N}ne^{-\alpha n}$ is absolutely convergent for every $\alpha>0$, we see that the series $$\sum_{j=j_0+1}^\infty e^{-\frac{\pi}{4}(|\mathrm{Im}(k)+\lambda_j|+|\mathrm{Im}(k)-\lambda_j|)}\left|\mathrm{Im}(k)+\lambda_j\right|^{\frac{\mathrm{Re}(k)-\rho_0-1}{2}}\left|\mathrm{Im}(k)-\lambda_j\right|^{\frac{\mathrm{Re}(k)-\rho_0-1}{2}}$$ is absolute convergent for any $k\in\mathbb C$.  Thus, by Proposition \ref{prop: decay}, Lemma \ref{lem: asymequiv} and Lemma \ref{lem: betaasym} we find 
the following proposition,
\begin{proposition}\label{prop: merozeta}
The infinite series $(\ref{eq: infseries})$ converges  absolutely for $k$ in $\mathbb C$ outside the set $\mathcal P$, see (\ref{eq: poleset}), and defines there a holomorphic function.
\end{proposition}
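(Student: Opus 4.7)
The plan is to combine the three asymptotic estimates that have just been collected and then upgrade pointwise absolute convergence to locally uniform absolute convergence on $\mathbb{C}\setminus\mathcal{P}$, from which holomorphy follows by the Weierstrass theorem.

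First I would fix a compact set $K\subset\mathbb{C}\setminus\mathcal{P}$ and bound each factor of the generic term of (\ref{eq: infseries}) uniformly for $k\in K$. By Proposition \ref{prop: decay}, there is a constant $C_1=C_1(\varphi_n)$ such that
\[
\bigl|I(a,b,\rho_0,\rho_0+i\lambda_j)\bigr|\leq C_1\lambda_j^{-\rho_0}
\]
for all $j>j_0$ large. By Lemma \ref{lem: asymequiv} together with Corollary \ref{cor: normasymp} and the asymptotics of $\mathcal{C}(\lambda_j)$ computed just after it, there is a constant $C_2=C_2(\varphi_n)$ with
\[
\bigl|\langle\varphi_n,\mathrm{PS}_{\varphi_j}\rangle_{SX_\Gamma}\bigr|\leq C_2\lambda_j^{\rho_0}.
\]
Finally, Lemma \ref{lem: betaasym} gives, uniformly for $k$ in the compact set $K$, a constant $C_3=C_3(K)$ such that
\[
\Bigl|B\bigl(\tfrac{k-i\lambda_j-\rho_0}{2},\tfrac{k+i\lambda_j-\rho_0}{2}\bigr)\Bigr|
\leq C_3\, e^{-\frac{\pi}{4}(|\mathrm{Im}(k)+\lambda_j|+|\mathrm{Im}(k)-\lambda_j|)}(1+\lambda_j)^{\mathrm{Re}(k)-\rho_0-1}.
\]
The exponents $\rho_0$ and $-\rho_0$ coming from the Patterson--Sullivan pairing and $I(a,b,\rho_0,\rho_0+i\lambda_j)$ cancel exactly, so the generic term is bounded by a $K$-uniform constant times
\[
(1+\lambda_j)^{M_K}\, e^{-\frac{\pi}{2}\lambda_j}
\]
for all $\lambda_j$ sufficiently larger than $\sup_{k\in K}|\mathrm{Im}(k)|$, where $M_K=\sup_{k\in K}\mathrm{Re}(k)-\rho_0-1$.

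Next I would control the sum over $j$ using Weyl's law for the Laplacian on the compact manifold $X_\Gamma$: the number of eigenvalues $\lambda_j^2+\rho_0^2\leq T^2$ grows only polynomially in $T$, so there are constants $C_4,N$ with $\#\{j:\lambda_j\leq R\}\leq C_4(1+R)^N$. Consequently the tail of (\ref{eq: infseries}) is dominated, uniformly for $k\in K$, by
\[
\sum_{j>j_0}(1+\lambda_j)^{M_K}e^{-\frac{\pi}{2}\lambda_j}\leq C_5\sum_{m=1}^\infty (1+m)^{M_K+N}e^{-\frac{\pi}{2}m}<\infty.
\]
This is an absolutely convergent series of constants independent of $k$, so the Weierstrass $M$-test yields absolute and locally uniform convergence of (\ref{eq: infseries}) on $\mathbb{C}\setminus\mathcal{P}$. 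Each individual summand is an entire function of $k$ (the Beta function being meromorphic in $k$ with poles precisely among $\mathcal{P}$, and the other factors being independent of $k$), so the limit is holomorphic on $\mathbb{C}\setminus\mathcal{P}$.

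The only genuine obstacle in implementing this plan is verifying that the asymptotic bounds supplied by Lemmata \ref{lem: asymequiv}, \ref{lem: betaasym} and Proposition \ref{prop: decay} can indeed be made uniform in the two relevant parameters: the compact set $K$ of values of $k$, and the index $j$. Inspecting the derivation of those estimates via (\ref{eq: gammaasympt}) for the Gamma function, the constants depend only on $K$ through bounds on $\mathrm{Re}(k)$ and $\mathrm{Im}(k)$, so this is routine. Once these uniform bounds are in hand, the combination of exponential decay with at most polynomial growth from Weyl's law finishes the argument.
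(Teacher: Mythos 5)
Your argument follows the same approach as the paper — combine Proposition \ref{prop: decay}, Lemma \ref{lem: asymequiv} via Corollary \ref{cor: normasymp}, and Lemma \ref{lem: betaasym} so that the powers $\lambda_j^{\rho_0}$ and $\lambda_j^{-\rho_0}$ cancel and the exponential factor $e^{-\frac{\pi}{4}(|\mathrm{Im}(k)+\lambda_j|+|\mathrm{Im}(k)-\lambda_j|)}$ forces convergence — and it is correct. In fact you make explicit two steps that the paper's discussion leaves unspoken: the appeal to Weyl's law to bound $\#\{j:\lambda_j\leq R\}$ polynomially, which is what actually justifies comparing the sum over the discrete sequence $\{\lambda_j\}$ with a series of the form $\sum_m m^{N}e^{-\alpha m}$; and the upgrade from pointwise to locally uniform absolute convergence on compacta of $\mathbb{C}\setminus\mathcal{P}$, which is needed to invoke the Weierstrass convergence theorem and deliver the holomorphy the proposition asserts. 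One small slip: you call each summand ``entire,'' but as your own parenthetical notes, the Beta factor is meromorphic with poles at $\rho_0\pm i\lambda_j-2n$, $n\in\mathbb{N}_0$, all lying in $\mathcal{P}$; the summands are holomorphic on $\mathbb{C}\setminus\mathcal{P}$, not entire, which is all the Weierstrass argument requires.
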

 
Hence, we define the meromorphic continuation of $ \mathcal{R}(k;\varphi_n) $ by $(*)$. Let us now focus on a certain strip in $\mathbb C$ and describe the poles and residues of $\mathcal R(k;\varphi_n)$ more precisely. It follows from Proposition \ref{prop: merozeta} that in the strip \index{$\mathcal S$, strip in $\mathbb C$} \[\mathcal{S}:=\{k\in\mathbb C:\rho_0-\frac{1}{2} <\mathrm{Re}(k)<\rho_0+\frac{1}{2}\} \] the poles are at \index{$\mathcal P_{\mathcal S}$, intersection of $\mathcal P$ and $\mathcal S$} \[\mathcal{P}_{\mathcal S}:=\{\rho_0,\rho_0\pm i\lambda_j: -\left(\lambda_j^2+\rho_0^2\right) \mbox{ eigenvalue of the Laplacian}\}\cap \mathcal S.\] 

For computing the residue we make use of the following facts.
\begin{lemma}\label{lem: rescal}See \cite[Kap.III]{FB}\\
\begin{enumerate}
\item If $f$ is holomorphic in $z_0$, then $\mathrm{Res}_{z=z_0}f=0$
\item The residue is linear, i.e. for all functions $f,g$ and $\alpha,\beta\in\mathbb C$ \[\mathrm{Res}_{z=z_0}\left(\alpha f+\beta g\right)=\alpha\mathrm{Res}_{z=z_0}f+\beta \mathrm{Res}_{z=z_0}g.\]
\item If $g$ is holomorphic and bijective on $\mathbb C$, then $\mathrm{Res}_{z=g(z_0)}=\mathrm{Res}_{z=z_0}g'\cdot (f\circ g)$.
\item If $f$ has a pole of order $m$ at $z_0$, then  \[\mathrm{Res}_{z=z_0}f=\frac{d^{m-1}}{dz^{m-1}}|_{z=z_0}\left(\frac{(z-z_0)^m}{(m-1)!}f(z)\right) .\]
\item If $f$ is holomorphic in $z_0$ and $g$ has a pole of order 1 at $z_0$, then \[\mathrm{Res}_{z=z_0}f\cdot g=f(z_0)\mathrm{Res}_{z=z_0}g.\]
\end{enumerate}

From (4) and (5) we can deduce:
\begin{enumerate}[(6)]
\item If $f$ is holomorphic in $z_0$ and $g$ has a pole of order 2 at $z_0$, $g(z)=\sum_{k=-2}^\infty a_k(z-z_0)^k$ in a neighbourhood of $z_0$, then \[\mathrm{Res}_{z=z_0}f\cdot g=f'(z_0)\cdot \frac{a_{-2}}{2}+f(z_0)\mathrm{Res}_{z=z_0}g.\]
\end{enumerate}
\end{lemma}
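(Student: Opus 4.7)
The plan is to derive all six facts uniformly from the Laurent series characterization of the residue, namely $\mathrm{Res}_{z=z_0}f = a_{-1}$ where $f(z) = \sum_{k} a_k (z-z_0)^k$ is the Laurent expansion of $f$ in a punctured neighbourhood of $z_0$. Since everything in the list is a formal manipulation of Laurent series, each item reduces to a short computation, and the entire lemma is essentially a bookkeeping exercise; I would therefore write the proof as six short bullets rather than as one unified argument.

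First I would establish (1), (2) and (4) directly from the definition. For (1), holomorphicity at $z_0$ means $a_k = 0$ for all $k<0$, so in particular $a_{-1}=0$. For (2), the Laurent expansion of $\alpha f + \beta g$ has $(\alpha f + \beta g)_k = \alpha f_k + \beta g_k$, so taking $k=-1$ gives linearity. For (4), if $f$ has a pole of order $m$ at $z_0$ then $(z-z_0)^m f(z)$ is holomorphic with Taylor expansion whose $(m-1)$-th coefficient is exactly $a_{-1}$; the standard Taylor formula then yields
\[
\mathrm{Res}_{z=z_0} f = \frac{1}{(m-1)!}\frac{d^{m-1}}{dz^{m-1}}\Bigl|_{z=z_0}\bigl((z-z_0)^m f(z)\bigr).
\]

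Next I would handle (3) by change of variables: writing $w = g(z)$ with $g$ biholomorphic near $z_0$ and $g(z_0) = w_0$, the identity $\mathrm{Res}_{w=w_0} f(w)\, dw = \mathrm{Res}_{z=z_0} f(g(z)) g'(z)\, dz$ follows from the invariance of residues under biholomorphic substitution (which itself comes from the change of variables formula for the defining contour integral $\tfrac{1}{2\pi i}\oint f\, dw$). Finally, (5) and (6) are obtained by multiplying a Taylor expansion $f(z) = f(z_0) + f'(z_0)(z-z_0) + O((z-z_0)^2)$ against the principal part of $g$ and reading off the coefficient of $(z-z_0)^{-1}$: for a simple pole this immediately gives $f(z_0)\cdot \mathrm{Res}_{z=z_0} g$, and for a pole of order $2$ with principal part $a_{-2}(z-z_0)^{-2} + a_{-1}(z-z_0)^{-1}$ the $(z-z_0)^{-1}$ coefficient of the product is $f(z_0) a_{-1} + f'(z_0) a_{-2}$, which is precisely $f(z_0)\mathrm{Res}_{z=z_0} g + \tfrac{1}{2}\cdot 2 f'(z_0) a_{-2}/2 \cdot \ldots$; one has to be careful that the stated formula reads $f'(z_0)\cdot a_{-2}/2$, which does not match the naive product — so the main (and only) obstacle is to verify the combinatorial factor in (6) carefully. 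I expect that either the stated factor $a_{-2}/2$ is a typo for $a_{-2}$, or I have misread the indexing; in either case the resolution is a direct comparison of coefficients, not a substantive argument.

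Since all six items are standard and referenced to \cite[Kap.III]{FB}, the most honest write-up is simply to cite the reference for (1)--(5) and to verify (6) by one line of coefficient comparison on the product $f(z)\cdot g(z)$ using the Taylor expansion of $f$ at $z_0$ and the given principal part of $g$.
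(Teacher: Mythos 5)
Your approach --- reading everything off the Laurent expansion and the change-of-variables formula for the defining contour integral --- is exactly the standard argument; the paper itself only refers to the cited textbook for items (1)--(5) and asserts without proof that (6) follows from (4) and (5), so you have not missed any hidden idea. Your suspicion about item (6) is correct, and you need not hedge: there is no indexing subtlety, the printed formula is simply wrong. Comparing coefficients of $(z-z_0)^{-1}$ in
\[
f(z)g(z)=\bigl(f(z_0)+f'(z_0)(z-z_0)+O((z-z_0)^2)\bigr)\bigl(a_{-2}(z-z_0)^{-2}+a_{-1}(z-z_0)^{-1}+\ldots\bigr)
\]
gives $f(z_0)a_{-1}+f'(z_0)a_{-2}$, so the correct identity is
\[
\mathrm{Res}_{z=z_0}f\cdot g=f'(z_0)\,a_{-2}+f(z_0)\,\mathrm{Res}_{z=z_0}g,
\]
with no factor of $\frac{1}{2}$ on $a_{-2}$. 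The same conclusion follows by the route the paper suggests: applying item (4) with $m=2$ to the product $fg$ gives $\mathrm{Res}_{z=z_0}fg=\frac{d}{dz}\big|_{z=z_0}\bigl((z-z_0)^2f(z)g(z)\bigr)$; since $h(z):=(z-z_0)^2g(z)$ is holomorphic with $h(z_0)=a_{-2}$ and $h'(z_0)=a_{-1}$, the product rule yields $f'(z_0)a_{-2}+f(z_0)a_{-1}$, again without the $\frac{1}{2}$. You should state plainly that (6) as printed is incorrect; it is worth flagging because the spurious $\frac{1}{2}$ is used in the residue computation at $k=\rho_0$ in the meromorphic-continuation chapter (equation (III)), so the constant there inherits the error.
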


Let us now compute the residue at $k\in\mathcal P_{\mathcal S}$, i.e. $k=\rho_0$ or $k=\rho_0+i\lambda_j$ and there is an eigenfunction $\varphi_j$ with eigenvalue $-(\lambda_j^2+\rho_0^2)$. If $\varphi_j$ is now in the principal series, then $\lambda_j\neq 0$ and $\int_0^\infty u^{l-2}(1+u^2)^{-(\rho_0+i\lambda_j)}du=\frac{1}{2}B(i\lambda_j,\rho_0)$ is defined. The residue $\mathrm{Res}_{k=\rho_0+i\lambda_j}\mathcal{R}(k;\varphi_n)$ at $k=\rho_0+i\lambda_j$ then equals
\begin{eqnarray*}
&&2^{\rho_0+i\lambda_j-2}\omega_{l-1}^2B(i\lambda_j,\rho_0)\sum_r \langle \varphi_n,\mathrm{PS}_{\varphi_r}\rangle I(a,b,\rho_0,\rho_0+i\lambda_j)\\&\cdot&\mathrm{Res}_{k=\rho_0+i\lambda_j}B\left(
\frac{k-i\lambda_j-\rho_0}{2},\frac{k+i\lambda_j-\rho_0}{2}\right)
,
\end{eqnarray*}
where the (finite) sum runs over all eigenfunctions $\varphi_r$ with eigenvalue $-\left(\lambda_j^2+\rho_0^2\right)$.

Now the residue $\mathrm{Res}_{k=\rho_0+i\lambda_j}B\left(
\frac{k-i\lambda_j-\rho_0}{2},\frac{k+i\lambda_j-\rho_0}{2}\right)$ is given by

\begin{eqnarray*}
\mathrm{Res}_{k=\rho_0+i\lambda_j}B\left(
\frac{k-i\lambda_j-\rho_0}{2},\frac{k+i\lambda_j-\rho_0}{2}\right)
&=& \mathrm{Res}_{k=\rho_0+i\lambda_j}\frac{\Gamma\left(\frac{k-i\lambda_j-\rho_0}{2} \right)\Gamma\left(\frac{k+i\lambda_j-\rho_0}{2} \right)}{\Gamma(k-\rho_0)}
\\&=& \frac{\Gamma(i\lambda_j)}{\Gamma(i\lambda_j)}\cdot\mathrm{Res}_{k=\rho_0+i\lambda_j}\Gamma\left(\frac{k-i\lambda_j-\rho_0}{2} \right)
\\&=& \mathrm{Res}_{k=0}\Gamma(k)
\\&=& 1
\end{eqnarray*}

Consequently, the residue at $k=\rho_0+i\lambda_j$ in the principal series is
\begin{eqnarray*}
\mathrm{Res}_{k=\rho_0+i\lambda_j}\mathcal{R}(k;\varphi_n) &=&2^{\rho_0+i\lambda_j-2}\sum_r \langle \varphi_n,\mathrm{PS}_{\varphi_r}\rangle 
\omega_{l-1}^2B(i\lambda_j,\rho_0)\cdot I(a,b,\rho_0,\rho_0+i\lambda_j)\\
&=& 2^{\rho_0+i\lambda_j-1} \omega_{l-1} I(a,b,\rho_0,\rho_0+i\lambda_j)\sum_r \langle \varphi_n,\widehat{\mathrm{PS}}_{\varphi_r}\rangle 
\\
&=:&\text{(I)},
\end{eqnarray*}
by Remark \ref{rem: gamma} and Corollary \ref{cor: normasymp}, since $2^{-1}\omega_{l-1}B(i\lambda_j,\rho_0)=\mathcal{C}(\lambda_j)$ for $\lambda_j>0$. 

If $\varphi_j$ is in the complementary series and $j<j_0$, then $\lambda_j\neq 0$ and $B(i\lambda_j,\rho_0)$ is defined. The residue $\mathrm{Res}_{k=\rho_0+i\lambda_j}\mathcal{R}(k;\varphi_n)$ at $k=\rho_0+i\lambda_j$ is

\begin{eqnarray*}
&&2^{\rho_0+i\lambda_j-2}B(i\lambda_j,\rho_0)\omega_{l-1}\sum_r C_{n,r}{} \langle \varphi_n,\mathrm{PS}_{\varphi_r}\rangle_{SX_\Gamma}\\&\cdot& \mathrm{Res}_{k=\rho_0+i\lambda_j}B\left(
\frac{k-i\lambda_j-\rho_0}{2},\frac{k+i\lambda_j-\rho_0}{2}\right).
\end{eqnarray*}

As before $\mathrm{Res}_{k=\rho_0+i\lambda_j}B\left(
\frac{k-i\lambda_j-\rho_0}{2},\frac{k+i\lambda_j-\rho_0}{2}\right)=1$ and hence the residue at $k=\rho_0+i\lambda_j$, $j<j_0$, is
\begin{eqnarray*}
\mathrm{Res}_{k=\rho_0+i\lambda_j}\mathcal{R}(k;\varphi_n)=2^{\rho_0+i\lambda_j-2}B(i\lambda_j,\rho_0)\omega_{l-1}\sum_r C_{n,r} \langle \varphi_n,\mathrm{PS}_{\varphi_r}\rangle_{SX_\Gamma}=:\text{(II)},
\end{eqnarray*}
where we again sum over all eigenfunctions $\varphi_r$ with eigenvalue $-\left(\lambda_j^2+\rho_0^2 \right)$. 

The same formula (II) is valid for $j=j_0$ and $\lambda_{j_0}\neq 0$. If $\lambda_{j_0}=0$, then \[B(k-\rho_0,\rho_0)=\frac{\Gamma(k-\rho_0)\Gamma(\rho_0)}{\Gamma(k)}\] has a pole of order 1 at $k=\rho_0$, while \[B\left(\frac{k-i\lambda_{j_0}-\rho_0}{2},\frac{k+i\lambda_{j_0}-\rho_0}{2}\right)=B\left(\frac{k-\rho_0}{2},\frac{k-\rho_0}{2}\right)=\frac{\Gamma\left(\frac{k-\rho_0}{2}\right)^2}{\Gamma(k-\rho_0)}\] has also a pole of order 1 at $k=\rho_0$. The product of both has thus a pole of order 2 at $k=\rho_0$ and the Laurent series around $k=\rho_0$ of the product starts with $(k-\rho_0)^{-2}$. Using Lemma \ref{lem: rescal}(6) the residue $\mathrm{Res}_{k=\rho_0}\mathcal{R}(k;\varphi_n)$ at $k=\lambda_{j_0}+\rho_0=\rho_0$ computes to
\begin{eqnarray*}
&&\mathrm{Res}_{k=\rho_0}\mathcal{R}(k;\varphi_n)\\
&=& \mathrm{Res}_{k=\rho_0}2^{k-2}B(k-\rho_0,\rho_0)\omega_{\ell-1}\sum_s C_{n,s}{\textstyle B\left(\frac{k-i\lambda_s-\rho_0}{2},\frac{k+i\lambda_s-\rho_0}{2}\right)}\langle\varphi_n,\mathrm{PS}_{\varphi_s}\rangle_{SX_\Gamma}\\
&&+\mathrm{Res}_{k=\rho_0}2^{k-2}B(k-\rho_0,\rho_0)\omega_{\ell-1}\sum_r C_{n,r}{\textstyle B\left(\frac{k-i\lambda_r-\rho_0}{2},\frac{k+i\lambda_r-\rho_0}{2}\right)}\langle\varphi_n,\mathrm{PS}_{\varphi_r}\rangle_{SX_\Gamma}\\
&&+2^{\rho_0-2}\omega_{\ell-1}^2\mathrm{Res}_{k=\rho_0}B(k-\rho_0,\rho_0)\sum_{j=j_0+1}^\infty {\textstyle B\left(\frac{k-i\lambda_j-\rho_0}{2},\frac{k+i\lambda_j-\rho_0}{2}\right)}\langle \varphi_n,\mathrm{PS}_{\varphi_j}\rangle_{SX_\Gamma}\\
&&\cdot I(a,b,\rho_0,\rho_0+i\lambda_j)\\
&=&
\omega_{\ell-1}\sum_{s}C_{n,s}{\textstyle B\left(\frac{i\lambda_s}{2},\frac{-i\lambda_s}{2}\right)}\langle\varphi_n,\mathrm{PS}_{\varphi_s}\rangle_{SX_\Gamma}2^{\rho_0-2}\cdot \mathrm{Res}_{k=\rho_0}B(k-\rho_0,\rho_0)\\
&&+\omega_{\ell-1}\sum_{r}C_{n,r}\langle\varphi_n,\mathrm{PS}_{\varphi_r}\rangle_{SX_\Gamma}\cdot \mathrm{Res}_{k=\rho_0}2^{k-2}B(k-\rho_0,\rho_0){\textstyle B\left(\frac{k-i\lambda_j-\rho_0}{2},\frac{k+i\lambda_j-\rho_0}{2}\right)}\\
&&+ \Big(2^{\rho_0-2}\omega_{\ell-1}^2\sum_{j=j_0+1}^\infty {\textstyle B\left(\frac{i\lambda_j}{2},\frac{-i\lambda_j}{2}\right)}\langle \varphi_n,\mathrm{PS}_{\varphi_j}\rangle_{SX_\Gamma}\cdot I(a,b,\rho_0,\rho_0+i\lambda_j)\Big)\mathrm{Res}_{k=\rho_0}B(k-\rho_0,\rho_0)
,
\end{eqnarray*}
where $\sum_s$ runs over all eigenfunctions $\varphi_s$ with eigenvalue $\mu_s=-(\lambda_s^2+\rho_0^2)>-\rho_0^2$ and $\sum_s$ runs over all eigenfunctions $\varphi_r$ with eigenvalue $\mu_r=-(\lambda_r^2+\rho_0^2)=-\rho_0^2$. Then $B(k-\rho_0,\rho_0)=\frac{\Gamma(k-\rho_0)\Gamma(\rho_0)}{\Gamma(k)}$ has a pole of order 1 at $k=\rho_0$, while \[B{\textstyle \left(\frac{k-i\lambda_{j_0}-\rho_0}{2},\frac{k+i\lambda_{j_0}-\rho_0}{2}\right)
=B\left(\frac{k-\rho_0}{2},\frac{k-\rho_0}{2}\right)}
=\frac{\Gamma\left(\frac{k-\rho_0}{2}\right)^2}{\Gamma(k-\rho_0)}\]
has also a pole of order 1 at $k=\rho_0$. The product of both has thus a pole of order 2 at $k=\rho_0$ and the Laurent series around $k=\rho_0$ of of the product starts with $(k-\rho_0)^{-2}$. Using Lemma \ref{lem: rescal}(6) it follows that

\begin{eqnarray*}
&&\mathrm{Res}_{k=\rho_0}2^{k-2}B(k-\rho_0,\rho_0){\textstyle B\left(\frac{k-i\lambda_r-\rho_0}{2},\frac{k+i\lambda_r-\rho_0}{2}\right)}
\\&=& \ln(2)2^{\rho_0-3}+2^{\rho_0-2}\cdot \mathrm{Res}_{k=\rho_0}B(k-\rho_0,\rho_0){\textstyle B\left(\frac{k-i\lambda_r-\rho_0}{2},\frac{k+i\lambda_r-\rho_0}{2}\right)}
\\&=&
\ln(2)2^{\rho_0-3}+2^{\rho_0-2}\cdot \mathrm{Res}_{k=\rho_0}B(k-\rho_0,\rho_0){\textstyle B\left(\frac{k-i\lambda_{j_0}-\rho_0}{2},\frac{k+i\lambda_{j_0}-\rho_0}{2}\right)}
\\&=&
\ln(2)2^{\rho_0-3}+2^{\rho_0-2}\cdot \mathrm{Res}_{k=\rho_0}B(k-\rho_0,\rho_0){\textstyle B\left(\frac{k-\rho_0}{2},\frac{k-\rho_0}{2}\right)}
\end{eqnarray*}

and

\begin{eqnarray*}
&&\mathrm{Res}_{k=\rho_0}B(k-\rho_0,\rho_0){\textstyle B \left(\frac{k-\rho_0}{2},\frac{k-\rho_0}{2}\right)}\\
&=& \mathrm{Res}_{k=\rho_0} {\textstyle \frac{\Gamma\left(\frac{k-\rho_0}{2}\right)^2 }{\Gamma(k-\rho_0)}}\cdot \frac{\Gamma(k-\rho_0)\Gamma(\rho_0)}{\Gamma(k)}
\ =\  \frac{\Gamma(\rho_0)}{\Gamma(\rho_0)}\cdot \mathrm{Res}_{k=\rho_0}{\textstyle\frac{\Gamma\left(\frac{k-\rho_0}{2}\right)^2 } {\Gamma(k-\rho_0)}}\cdot \Gamma(k-\rho_0)\\
&=& \mathrm{Res}_{k=\rho_0}{\textstyle \Gamma\left(\frac{k-\rho_0}{2}\right)^2}
\ =\  \mathrm{Res}_{k=0}{\textstyle \Gamma\left(\frac{k}{2}\right)^2}
\ =\ 2\mathrm{Res}_{k=0}\Gamma(k)^2
\\&=& 2 \frac{d}{dk}|_{k=0} \frac{k^2}{2}\frac{\Gamma(k+1)^2}{k^2}
\ =\  2\Gamma'(1)
\ =\ -2C_\gamma,
\end{eqnarray*}
where $C_\gamma$ is the Euler-Mascheroni constant, see \cite[Kap.IV]{FB}. Also
$$ \mathrm{Res}_{k=\rho_0}B(k-\rho_0,\rho_0)=\mathrm{Res}_{k=0}=1 .$$
Thus,

\begin{eqnarray*}
&&\mathrm{Res}_{k=\rho_0}\mathcal{R}(k;\varphi_n)\\
&=& \omega_{\ell-1}\sum_{s}C_{n,s}{\textstyle B\left(\frac{i\lambda_s}{2},\frac{-i\lambda_s}{2}\right)}\langle\varphi_n,\mathrm{PS}_{\varphi_s}\rangle_{SX_\Gamma}2^{\rho_0-2}
\\&&+ \omega_{\ell-1}\sum_r C_{n,r}\langle \varphi_n,\mathrm{PS}_{\varphi_r}\rangle_{SX_\Gamma}\cdot \left(\ln(2)2^{\rho_0-3}-2^{\rho_0-1}C_\gamma\right)
\\&&+2^{\rho_0-2}\omega_{\ell-1}^2\sum_{j=j_0+1}^\infty {\textstyle B\left(\frac{i\lambda_j}{2},\frac{-i\lambda_j}{2}\right)}\langle \varphi_n,\mathrm{PS}_{\varphi_j}\rangle_{SX_\Gamma}\cdot I(a,b,\rho_0,\rho_0+i\lambda_j)
\\
&=&\omega_{\ell-1}\sum_{s}C_{n,s}{\textstyle B\left(\frac{i\lambda_s}{2},\frac{-i\lambda_s}{2}\right)}\langle\varphi_n,\mathrm{PS}_{\varphi_s}\rangle_{SX_\Gamma}2^{\rho_0-2}
\\&&+\omega_{\ell-1}\left(\ln(2)2^{\rho_0-3}-2^{\rho_0-1}C_\gamma\right)\sum_r C_{n,r}\langle\varphi_n,\mathrm{PS}_{\varphi_r}\rangle_{SX_\Gamma)}\\
\\&&+ 2^{\rho_0-2}\omega_{\ell-1}^2\sum_{j=j_0+1}^\infty {\textstyle B\left(\frac{i\lambda_j}{2},\frac{-i\lambda_j}{2}\right)}\langle \varphi_n,\mathrm{PS}_{\varphi_j}\rangle_{SX_\Gamma}\cdot I(a,b,\rho_0,\rho_0+i\lambda_j)\\
&=:&(III).
\end{eqnarray*}

Finally, if $\lambda_{j_0}\neq 0$, then $\mathcal R(k;\varphi_n)$ has still a pole at $k=\rho_0$ coming from the term $B(k-\rho_0,\rho_0)$. The residue $\mathrm{Res}_{k=\rho_0}\mathcal{R}(k;\varphi_n)$ is
\begin{eqnarray*}
&&\mathrm{Res}_{k=\rho_0} \mathcal{R}(k;\varphi_n)\\
&=& 2^{\rho_0-2}\omega_{\ell-1}\Big(\sum_{j=1}^{j_0}C_{n,j}
{\textstyle B\left(\frac{i\lambda_j}{2},\frac{-i\lambda_j}{2}\right)}
\langle \varphi_n,\mathrm{PS}_{\varphi_j}\rangle_{SX_\Gamma}\\
&&+\omega_{\ell-1}\sum_{j=j_0+1}^\infty {\textstyle B\left(\frac{i\lambda_j}{2},\frac{-i\lambda_j}{2}\right)}
\langle \varphi_n,\mathrm{PS}_{\varphi_j}\rangle_{SX_\Gamma} I(a,b,\rho_0,\rho_0+i\lambda_j)\Big) \mathrm{Res}_{k=\rho_0}B(k-\rho_0,\rho_0)\\
&=& 2^{\rho_0-2}\omega_{\ell-1}
\Big(\sum_{j=1}^{j_0}C_{n,j}
{\textstyle B\left(\frac{i\lambda_j}{2},\frac{-i\lambda_j}{2}\right)}
\langle \varphi_n,\mathrm{PS}_{\varphi_j}\rangle_{SX_\Gamma}\\
&&+\omega_{\ell-1}\sum_{j=j_0+1}^\infty
{\textstyle B\left(\frac{i\lambda_j}{2},\frac{-i\lambda_j}{2}\right)}
\langle \varphi_n,\mathrm{PS}_{\varphi_j}\rangle_{SX_\Gamma}
I(a,b,\rho_0,\rho_0+i\lambda_j)\Big)
\\
&=:&(IV),
\end{eqnarray*}
since $\mathrm{Res}_{k=\rho_0}B(k-\rho_0,\rho_0)=1$.

\section{The meromorphic continuation of $\mathcal Z(\varphi)$}\label{sec: mcontrz}
In this short section we show how the meromorphic continuation of $\mathcal{Z}(k;\varphi)$ is deduced from the one of $\mathcal R(k;\varphi_n)$.

For the zeta function $ \mathcal{Z}(k;\varphi_n) $ we note that the poles of $\mathcal{R}(k+2m;\varphi_n)$ are the $-2m$-shifted poles of $\mathcal{R}(k;\varphi_n)$. Now
\begin{equation}\label{eq: mcont}
\mathcal{Z}(k;\varphi_n)=\sum_{m=0}^\infty \beta(k-\rho_0;m) \mathcal{R}(k+2m;\varphi_n)=\beta(k-\rho_0;0)\mathcal{R}(k;\varphi_n)+\sum_{m=1}^\infty \beta(k-\rho_0;m) \mathcal{R}(k+2m;\varphi_n),
\end{equation}
see Theorem \ref{th: zeta}. 

We also know from Lemma \ref{binom} that $\beta(k-\rho_0;m)\to 0$ for $m\to \infty$ and that $k\mapsto \beta(k-\rho_0;m)$ is holomorphic for any $m\in \mathbb{N}_0$. Furthermore, \[\mathcal{R}(k;\varphi_n)= \sum_{1\neq[\gamma]\in C\Gamma}\sum_{\pi\in\widehat M}c(\varphi_n,\gamma,\pi,k) (\cosh L_\gamma)^{-k+\rho_0}\] and by Lemma \ref{auxi} we know that \[\sum_{\pi\in\widehat M}c(\varphi_n,\gamma,\pi,k)\] is absolutely bounded by \[C(\varphi_n)\cdot L_\gamma e^{-\rho_0 L_\gamma},\] where $L_\gamma$ is the length of the closed geodesic $[\gamma]$ and \[C(\varphi_n):=\frac{\omega_{l-1}\max_{X_\Gamma}\{|\varphi_n|\}}{2}B(\rho_0,\rho_0)\] 
if $\mathrm{Re}(k)>2\rho_0$.
It follows that for $\mathrm{Re}(k)>2\rho_0$
\begin{eqnarray*}
\left|\mathcal{R}(k;\varphi_n)\right|&\leq& \sum_{1\neq[\gamma]\in C\Gamma}\left|\sum_{\pi\in\widehat M}c(\varphi_n,\gamma,\pi,k)\right| (\cosh L_\gamma)^{-\mathrm{Re}(k)+\rho_0}
\\ &\leq & C(\varphi_n)\sum_{1\neq [\gamma]\in C\Gamma} e^{-\rho_0L_\gamma }L_\gamma (\cosh L_\gamma)^{-\mathrm{Re}(k)+\rho_0}
\\ &\leq& C(\varphi_n) (\cosh L_{\mathrm{inf}})^{-\mathrm{Re}(k)+\rho_0} \sum_{1\neq [\gamma]\in C\Gamma} e^{-\rho_0L_\gamma }L_\gamma
\end{eqnarray*}
for the constant $C(\varphi_n)$ depending only on $\varphi_n$. Here \index{$L_{\mathrm{inf}}$} $L_{\mathrm{inf}}=\sqrt{2(l-1)}^{-1}l_{\mathrm{inf}}$, where $l_{\mathrm{inf}}>0$ is the infimum of the set $\{l_\gamma:1\neq [\gamma]\in C\Gamma\}$, see Proposition \ref{prop: inflspec}. Thus \[\left|\mathcal{R}(k;\varphi_n)\right|\leq C\cdot (\cosh L_{\mathrm{inf}})^{-\mathrm{Re}(k)}\] for some constant $C>0$, since \[C(\varphi_n)(\cosh L_{\mathrm{inf}})^{\rho_0} \sum_{1\neq [\gamma]\in C\Gamma} e^{-\rho_0L_\gamma }L_\gamma\] is bounded. 

As a consequence, $\mathcal{R}(k;\varphi_n)=O(e^{-\mathrm{Re}(k)})$ for $\mathrm{Re}(k)>2\rho_0$. In particular, $\mathcal{R}(k+2m;\varphi_n)$ decays like $e^{-\mathrm{Re}(k)+2m}$ for $m\in\mathbb N$ and $m\to\infty$. Hence, (\ref{eq: mcont}) converges for $k$ away from the poles and defines there a meromorphic continuation of the zeta function $ \mathcal{Z}(k;\varphi_n)$. Since in the strip \[\mathcal{S}=\rho_0-\frac{1}{2} <\mathrm{Re}(k)<\rho_0+\frac{1}{2} \] the only poles of $\mathcal R(k;\varphi_n)$ are at \[\{\rho_0,\rho_0\pm i\lambda_j:-(\lambda_j+\rho_0^2) \mbox{ eigenvalue of the Laplacian}\}\cap \mathcal S,\] only the $m=0$-term contributes to poles of $\mathcal{Z}(k;\varphi_n)$ in $\mathcal S$. The residues/poles in the strip $\mathcal S$ of $\mathcal{Z}(k;\varphi_n)$ are hence the same as the ones of $\mathcal{R}(k;\varphi_n)$ modulo $\beta(k-\rho_0;0)$ which equals $2^{\rho_0-k}$ by Lemma \ref{binom}. More precisely, we have to distinguish again the 4 cases from (I) to (IV) and to note that \[\mathrm{Res}_{z=k}\mathcal{Z}(z;\varphi_n)=\beta(k-\rho_0;0)\mathrm{Res}_{z=k}\mathcal{R}(z;\varphi_n)=2^{\rho_0-k}\mathrm{Res}_{z=k}\mathcal{R}(z;\varphi_n)\] for $k\in\mathbb C$ with $\rho_0-\frac{1}{2} <\mathrm{Re}(k)<\rho_0+\frac{1}{2}$.

\section{Summary}\label{sec: msum}
We collect the results of the previous sections:

\begin{theorem}\label{th: merocont}
Let $\varphi_n$ be a non-constant automorphic eigenfunction. The zeta functions $\mathcal{R}(k;\varphi_n)$ and $\mathcal{Z}(k;\varphi_n)$ can be extended meromorphically to $\mathbb C$ by $(*)$ resp. (\ref{eq: mcont}). In the strip $\rho_0-\frac{1}{2} <\mathrm{Re}(k)<\rho_0+\frac{1}{2}$ the only possible poles are at $k\in\{\rho_0,\rho_0\pm i\lambda_j:-\left(\lambda_j^2+\rho_0^2\right) \mbox{ eigenvalue of the Laplacian} \}$. If the eigenvalue $-\left(\lambda_j^2+\rho_0^2\right)$ lies in the principal series and $(I)\neq 0$, there is a pole of order 1 at $k=\rho_0\pm i\lambda_j$, the residue of $\mathcal{R}(k;\varphi_n)$ is then given by (I). If the eigenvalue comes from the complementary series, $\lambda_j\neq 0$ and $(II)\neq 0$, then there is also a pole of order 1 at $k=\rho_0\pm i\lambda_j$ and residue $(II)$. If $\lambda_j=0$, there is a pole of order $2$ and residue determined by $(III)$. If $(I)$ or $(II)$ vanishes, then $\mathcal{R}(\varphi_n)$ can be holomorphically extended to $k=\rho_0\pm i\lambda_j$. The residues of $\mathcal{Z}(\varphi_n)$ are the same modulo $\beta(k-\rho_0;0)=2^{\rho_0-k}$. 
\end{theorem}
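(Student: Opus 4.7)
The plan is to assemble the claim from the ingredients already prepared: identify the meromorphic extension of $\mathcal{R}(k;\varphi_n)$ via the spectral trace, then pass to $\mathcal{Z}(k;\varphi_n)$ via the superposition formula of Theorem \ref{th: zeta}.

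First, I would fix $k$ with $\mathrm{Re}(k)>2\rho_0$ and equate the geometric trace (\ref{gtr}) with the spectral trace formula of Theorem \ref{thm: stracspec}. Substituting the explicit form (\ref{eq: stfk}) of $\mathcal{S}(f_k,\lambda_j)$ gives the expression $(*)$ as a sum of three pieces: a finite sum over complementary series eigenvalues (weighted by $C_{n,j}$), an isolated term for $\lambda_{j_0}$ (which may or may not be zero), and the infinite tail (\ref{eq: infseries}) indexed by principal series eigenvalues. The first two pieces are manifestly meromorphic on $\mathbb{C}$, with poles only in the set $\mathcal{P}$ of (\ref{eq: poleset}) arising from the Beta factors $B(k-\rho_0,\rho_0)$ and $B\!\left(\frac{k\pm i\lambda_j-\rho_0}{2}\right)$.

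The main obstacle — and the only non-routine step — is the convergence of the tail (\ref{eq: infseries}) on $\mathbb{C}\setminus\mathcal{P}$, which has already been prepared in Proposition \ref{prop: merozeta} by combining three asymptotic estimates: Proposition \ref{prop: decay} gives $I(a,b,\rho_0,\rho_0+i\lambda_j)=O(\lambda_j^{-\rho_0})$; Lemma \ref{lem: asymequiv} together with Corollary \ref{cor: normasymp} gives $\langle\varphi_n,\mathrm{PS}_{\varphi_j}\rangle_{SX_\Gamma}=O(\lambda_j^{\rho_0})$; and Lemma \ref{lem: betaasym} gives the Gaussian-type decay $e^{-\frac{\pi}{4}(|\mathrm{Im}(k)+\lambda_j|+|\mathrm{Im}(k)-\lambda_j|)}$ with polynomial correction in the Beta factor. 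Since the exponential factor dominates any polynomial growth in $\lambda_j$ and Weyl's law gives at most polynomial growth in the number of eigenvalues up to height $T$, absolute convergence on compacta in $\mathbb{C}\setminus\mathcal{P}$ is immediate and the tail defines a holomorphic function there. Thus $(*)$ extends $\mathcal{R}(k;\varphi_n)$ meromorphically to $\mathbb{C}$, with poles contained in $\mathcal{P}$, and in the strip $\mathcal{S}$ the only candidate poles are $\rho_0$ and $\rho_0\pm i\lambda_j$ for eigenvalues of the Laplacian.

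The residue computation at each such $k\in\mathcal{P}_\mathcal{S}$ is a direct application of Lemma \ref{lem: rescal}, splitting into the four cases according to whether $\lambda_j$ lies in the principal or complementary series and whether $\lambda_{j_0}=0$. For the principal-series case, only the Gamma factor $\Gamma\!\left(\tfrac{k-i\lambda_j-\rho_0}{2}\right)$ contributes a simple pole, yielding formula (I) after using $\tfrac{1}{2}\omega_{\ell-1}B(i\lambda_j,\rho_0)=\mathcal{C}(\lambda_j)$ and Corollary \ref{cor: normasymp} to replace $\mathrm{PS}$ by $\widehat{\mathrm{PS}}$. The complementary-series case with $\lambda_j\neq 0$ is identical in structure and gives (II). When $\lambda_{j_0}=0$, the two Beta factors combine into a double pole at $k=\rho_0$; computing the Laurent coefficient via $\Gamma\!\left(\tfrac{k-\rho_0}{2}\right)^2$ and using $\Gamma'(1)=-C_\gamma$ gives (III). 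If $\lambda_{j_0}\neq 0$, only $B(k-\rho_0,\rho_0)$ produces a simple pole at $k=\rho_0$, and summing the resulting values against every principal- and complementary-series contribution yields (IV). If (I) or (II) vanishes, the corresponding would-be pole cancels and $\mathcal{R}$ is holomorphic there.

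Finally, for $\mathcal{Z}(k;\varphi_n)$ I would use the representation (\ref{eq: mcont}) from Theorem \ref{th: zeta}. The pointwise bound $|\mathcal{R}(k;\varphi_n)|\leq C\cosh(L_{\mathrm{inf}})^{-\mathrm{Re}(k)}$ for $\mathrm{Re}(k)>2\rho_0$, together with the fact that $\beta(k-\rho_0;m)\to 0$ and is entire in $k$, shows that the sum $\sum_{m\geq 1}\beta(k-\rho_0;m)\mathcal{R}(k+2m;\varphi_n)$ defines a holomorphic function on an open set containing the strip $\mathcal{S}$; only the $m=0$ summand contributes poles in $\mathcal{S}$. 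Since $\beta(k-\rho_0;0)=2^{\rho_0-k}$ is holomorphic and nonvanishing, the poles and the computation of residues transfer directly from $\mathcal{R}(k;\varphi_n)$ to $\mathcal{Z}(k;\varphi_n)$ modulo this factor, completing the proof.
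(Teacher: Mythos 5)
Your proposal is correct and follows essentially the same route as the paper: equate the geometric and spectral traces, substitute the spherical transform of $f_k$, establish convergence of the infinite tail via the three asymptotic estimates feeding Proposition \ref{prop: merozeta}, compute residues case-by-case with Lemma \ref{lem: rescal}, and pass to $\mathcal{Z}$ via (\ref{eq: mcont}). Your explicit appeal to Weyl's law to control the eigenvalue count is a small clarification the paper leaves implicit, but otherwise the argument is the same.
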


\section{Normalization of $\mathcal Z(\varphi)$}\label{sec: mnorm}
In this chapter we have shown so far that $\mathcal{Z}(\varphi_n)$ is a meromorphic function on $\mathbb C$. Let $k_0:=\rho_0+i\lambda$, where $-(\lambda^2+\rho_0^2)$ is an eigenvalue of the Laplacian from the principal series. Then $\mathcal{Z}(\varphi_n)$  has a simple pole at $k_0$ with residue given - up to a non-zero constant - by normalized Patterson-Sullivan distributions, see Equation $(I)$ in Section \ref{sec: mero1}. This constant equals \[\omega_{l-1}I(a,b,\rho_0,\rho_0+i\lambda).\]

To be consistent with \cite{AZ} we can divide $\mathcal Z(\varphi_n)$ by this constant, i.e. we consider the function \begin{equation}\label{eq: normzetaf}k\mapsto \frac{\mathcal{Z}(\varphi_n,k)}{\omega_{l-1}I(a,b,2^{-1},k)}.\end{equation}

Now \[k\mapsto\frac{1}{\omega_{l-1}I(a,b,\rho_0,k)}\overset{\text{Lem. \ref{lem: inthypg}}}=\frac{2\Gamma(k)^2}{\omega_{l-1}\Gamma(k-a)\Gamma(k-b)\Gamma(\rho_0)}\] is a meromorphic function on $\mathbb C$ with poles exactly in $\{0,-1,-2,\ldots\}$. Hence (\ref{eq: normzetaf}) defines also a meromorphic function on $\mathbb C$. In the strip $\rho_0-\frac{1}{2}<\mathrm{Re}(k)<\rho_0+\frac{1}{2}$ the poles of (\ref{eq: normzetaf}) equal the poles of $\mathcal Z(\varphi_n)$. The residue at $k_0$, $\rho_0+\frac{1}{2}<\mathrm{Re}(k_0)<\rho_0-\frac{1}{2}$, is 
\[\mathrm{Res}_{k=k_0}\frac{\mathcal Z(\varphi_n,k)}{\omega_{l-1}I(a,b,\rho_0,k)}=\frac{1}{\omega_{l-1} I(a,b,\rho_0,k_0)}\mathrm{Res}_{k=k_0}\mathcal{Z}(\varphi_n,k).\]

In particular, if $k_0=\rho_0+i\lambda$ comes from the principal series, we deduce from equation $(I)$ that 
\[\mathrm{Res}_{k=k_0}\frac{\mathcal Z(\varphi_n,k)}{\omega_{l-1}I(a,b,\rho_0,k)}=2^{\rho_0-1}\sum_r \langle \varphi_n,\widehat{\mathrm{PS}}_{\varphi_r}\rangle,\] 
where as before the finite sum runs over all eigenfunctions $\varphi_r$ with eigenvalue $-(\lambda^2+\rho_0^2)$. We can of course also  normalize $\mathcal R(\varphi_n)$ by the same constant $\left(\omega_{l-1}I(a,b,\rho_0,k)\right)^{-1}$.

\section{Comparison with the zeta function from \cite{AZ}}\label{sec: out}

Let $G=SO_o(1,2)$. In this section we want to discuss a difference between the zeta function \cite[(1.9ii)]{AZ} and our zeta function, see (\ref{def: zetfunc}), in the case of a compact hyperbolic surface $X_\Gamma =\Gamma\backslash SO_o(1,2)/SO(2)$. In \cite[(1.9ii)]{AZ} we find as a definition
\begin{equation*}
\mathcal{Z}_1(k;\varphi_n):= \sum_{1\neq [\gamma]\in C\Gamma} \frac{e^{-kL_\gamma}}{1-e^{-L_\gamma}}\left(\int_{c_{\gamma_0}}\varphi_n\right)= \sum_{1\neq [\gamma]\in C\Gamma} \frac{e^{-(k-\frac{1}{2} )L_\gamma}}{2\sinh \frac{L_\gamma}{2}}\left(\int_{c_{\gamma_0}}\varphi_n\right)
\end{equation*}

After normalizing our zeta function $\mathcal Z(k;\varphi_j)=\sum_{1\neq [\gamma]\in C\Gamma}c(\varphi_j,\gamma,\mathbf{1},k)e^{-(k-1/2)L_\gamma}$ from Proposition \ref{prop: simple} by $\left(4I(a,b,1/2,k)\right)^{-1}$ the term   $\left(2I(a,b,1/2,k)\right)^{-1} c(\varphi_j,\gamma,\mathbf{1},k)$ takes the form, see equation (\ref{eq: coefsimex}),
\begin{eqnarray*}
&&\left(4I(a,b,1/2,k)\right)^{-1} \cdot c(\varphi_j,\gamma,\mathbf{1},k)\\&\overset{\text{Lem. } \ref{lem: inthypg} }=& \frac{2\Gamma(k)^2}{4\Gamma(\frac{1}{2} )\Gamma(k-a)\Gamma(k-b)}\cdot \sum_{1\neq [\gamma]\in C\Gamma} \frac{1}{\sqrt{\cosh L_\gamma-1}} \sqrt{2}\left(\int_{c_{\gamma_0}}\varphi_n\right)\\&&\cdot \left(\frac{\cosh L_\gamma}{\cosh L_\gamma-1}\right)^{k-1 }{}_2F_1\left(k-a,k-b,k;1-\frac{\cosh L_\gamma}{\cosh L_\gamma- 1} \right)\\
&&\cdot \frac{\Gamma(\frac{1}{2})\Gamma(k-a)\Gamma(k-b)}{\Gamma(k)^2}\\
&=& \sum_{1\neq [\gamma]\in C\Gamma} \frac{\sqrt{2}}{2\sqrt{2}\sinh \frac{L_\gamma}{2} }\left(\int_{c_{\gamma_0}}\varphi_n\right)\\&&\cdot \left(\frac{\cosh L_\gamma}{\cosh L_\gamma-1}\right)^{k-1 }{}_2F_1\left(k-a,k-b,k;1-\frac{\cosh L_\gamma}{\cosh L_\gamma- 1} \right).
\end{eqnarray*}

Hence,
\begin{eqnarray*}
\frac{1}{4 I(a,b,1/2,k)}\mathcal{Z}(k;\varphi_n)&=& \sum_{1\neq [\gamma]\in C\Gamma} \frac{e^{-(k-\frac{1}{2} )L_\gamma}}{2\sinh \frac{L_\gamma}{2} } \left(\int_{c_{\gamma_0}}\varphi_n\right)\\&&\cdot \left(\frac{\cosh L_\gamma}{\cosh L_\gamma-1}\right)^{k-1 } {}_2F_1\left(k-a,k-b,k;1-\frac{\cosh L_\gamma}{\cosh L_\gamma- 1} \right).
\end{eqnarray*}

Here $a=\frac{1}{2}(\rho_0+ir_n)$ and $b=\frac{1}{2}(\rho_0-ir_n)$, if the eigenvalue of $\varphi_n$ is $-\frac{1}{4\rho_0} (\rho_0^2+r_n^2)$. The difference $\mathcal{Z}_1(k;\varphi_n)-\frac{1}{4 I(a,b,1/2,k)}\mathcal{Z}(k;\varphi_n)$ thus equals on $\{k\in \mathbb{C}:\mathrm{Re}(k)>1\}$
\index{$\mathcal D(k;\varphi_n)$, holomorphic function on $\{\mathrm{Re}(k)>0\}$}
\begin{eqnarray*}
\mathcal{D}(k;\varphi_n)&:=&\sum_{1\neq [\gamma]\in C\Gamma} \frac{e^{-(k-\frac{1}{2} )L_\gamma}}{ 2\sinh \frac{L_\gamma}{2} } \left(\int_{c_{\gamma_0}}\varphi_n\right)
\\ &&\cdot \left(1- \left(\frac{\cosh L_\gamma}{\cosh L_\gamma-1}\right)^{k-1 }{}_2F_1\left(k-a,k-b,k;1-\frac{\cosh L_\gamma}{\cosh L_\gamma- 1}\right)\right).
\end{eqnarray*}

On $\{k\in \mathbb{C}:\mathrm{Re}(k)>1\}$, the function $\mathcal D(\varphi_n)$ is holomorphic as it is the difference of two holomorphic functions. Furthermore, it is also holomorphic on the half plane $\{k\in\mathbb C: \mathrm{Re}(k)>0\}$, because $\left(\frac{\cosh L_\gamma}{\cosh L_\gamma-1}\right)^{k-1 }{}_2F_1\left(k-a,k-b,k;1-\frac{\cosh L_\gamma}{\cosh L_\gamma- 1}\right)$ tends to 1 as $L_\gamma\to \infty$, the argument $1-\frac{\cosh L_\gamma}{\cosh L_\gamma- 1}$ lies always in the interval $(-\infty,1)$ and $k\mapsto {}_2F_1(k-a,k-b,k;x)$ is holomorphic on $\{k\in\mathbb C:k\neq 0,-1,-2,\ldots\}$ for fixed $x\in (-\infty,1)$, see \cite[Th. 9.1]{Ol}. It follows that the term \[\left(1- \left(\frac{\cosh L_\gamma}{\cosh L_\gamma-1}\right)^{k-1 }{}_2F_1\left(k-a,k-b,k;1-\frac{\cosh L_\gamma}{\cosh L_\gamma- 1}\right)\right)\] is bounded for all $L_\gamma$. Also $C\cdot \sum_{n\in\mathbb N} \frac{n}{1-e^{-n}}e^{-kn}$ is a summable upper bound to \[\sum_{1\neq [\gamma]\in C\Gamma} \frac{e^{-(k-\frac{1}{2} )L_\gamma}}{ 2\sinh \frac{L_\gamma}{2} } \left(\int_{c_{\gamma_0}}\varphi_n\right)=\sum_{1\neq [\gamma]\in C\Gamma} \frac{e^{-kL_\gamma}}{1-e^{-L_\gamma}}\left(\int_{c_{\gamma_0}}\varphi_n\right)\] for a suitable constant $C>0$.  

Thus, \[\frac{1}{4I(a,b,1/2,k)}\mathcal{Z}(k;\varphi_n)+\mathcal{D}(k;\varphi_n)\] defines a meromorphic continuation of $\mathcal Z_1(k;\varphi_n)$ to the half plane $\{k\in\mathbb C:\mathrm{Re}(k)>0\}$ with the same poles and residues as $\frac{1}{4 I(a,b,1/2,k)}\mathcal{Z}(k;\varphi_n)$.


\chapter{The zeta function on the spherical spectrum}
Let $G=SO_o(1,l)=ANK$ as before. Also let $\Gamma\subset G$ be a uniform lattice. We will use a decomposition of the right-regular representation $\pi_R$ on $L^2(\Gamma\backslash G)$ in order to extend the definition to zeta function $\mathcal Z(\sigma)$, where $\sigma$ is more general than an automorphic eigenfunction of the Laplacian. More precisely, $\sigma$ will be an element of phase space $C^\infty(\Gamma\backslash G/M)$, $M=Z_K(A)$, which satisfies a certain finiteness conditions explained in Section \ref{chap: class1}. In Section \ref{sec: out} we indicate an approach for general $\sigma\in C^\infty(SX_\Gamma)$.

The case of $SO_o(1,2)$ is dealt with in \cite{AZ} and we will basically omit this case as 
In this chapter we want to show how one can pass in Theorem \ref{th: zeta} and \ref{th: conauxzeta}, resp. Theorem \ref{thm: stracspec} and \ref{th: merocont} from automorphic eigenfunctions $\varphi_n$ to $\sigma\in C^\infty(\Gamma\backslash G/M)$ lying in the class 1 spectrum with only finitely many nontrivial components. 
\section{Extension to the spherical spectrum}\label{chap: class1}

To give a dynamical interpretation of Patterson-Sullivan distributions, one needs to pass from automorphic eigenfunctions $\varphi_n$ to arbitrary $\sigma \in C^\infty(\Gamma \backslash G/M)$. We make use of the decomposition of the (right-)regular representation $\pi_R$ of $G$ on $L^2(\Gamma \backslash G)$. So let $G=SO_o(1,l)$, $l>2$, and $\Gamma$ a uniform lattice. Since $\Gamma$ is assumed to be co-compact, the representation decomposes discretely into a direct sum of at most 2 different types, see \cite[Th. 2.7.]{Wi},
\begin{equation}\label{eq: l2dec}
L^2(\Gamma \backslash G)= \bigoplus_{\text{spherical}} V \oplus \bigoplus_{\text{non-spherical}} W. 
\end{equation}

The \index{spherical representation} spherical part is called the \index{spherical spectrum} \textit{spherical spectrum} or also \index{class 1 spectrum} \textit{class 1 spectrum}. Here we call a representation \textit{spherical}, if it possesses a non-trivial $K$-invariant vector. 

Therefore, \index{$V^M$, $M$-invariant vectors in $V$}

\begin{equation}\label{eq: l2decm}
L^2(\Gamma\backslash G)^M=L^2(\Gamma \backslash G/M)= \bigoplus_{\text{spherical}} V^M \oplus \bigoplus_{\text{non-spherical}} W^M . 
\end{equation}

In this section we want to show how one can pass in Theorem \ref{th: zeta} and \ref{th: conauxzeta}, resp. Theorem \ref{thm: stracspec} and \ref{th: merocont} from automorphic eigenfunctions $\varphi_n$ to $\sigma\in C^\infty(\Gamma\backslash G/M)$ lying in the class 1 spectrum with only finitely many nontrivial components. That is, \[\sigma=\sum_{\text{finite}} \sigma_n,\] where each $\sigma_n$ lies in some irreducible spherical (principal series) component $V$ and is right-$M$-invariant. 

We show now that the subspace of $M$-invariant functions in each spherical principal series representation is generated by the spherical vector if we restrict the representation from $G$ to $A$.

\begin{lemma}\label{lem: sphcy}
Let $G=SO_o(1,l)$, $l>2$, and $V$ be an irreducible spherical representation with spherical vector $\varphi$. Then $A$ acts on $V^M$ with cyclic vector $\varphi$, i.e. $U(\mathfrak a)\varphi$ is dense in $V^M$. If $G=SO_o(1,2)$, then \[U(\mathfrak a)\varphi\oplus U(\mathfrak a)X\varphi\] is dense in $V^M=V$, where $ \mathfrak{n}=\mathbb{R}X $.  
\end{lemma}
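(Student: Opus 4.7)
The plan is as follows. I would begin by combining irreducibility of $V$ with the Iwasawa--PBW decomposition $U(\mathfrak{g}) = U(\mathfrak{n})\,U(\mathfrak{a})\,U(\mathfrak{k})$. Since $\varphi$ is $K$-fixed, $\mathfrak{k}\cdot\varphi = 0$ and hence $U(\mathfrak{k})\varphi = \mathbb{C}\varphi$, so $V = \overline{U(\mathfrak{n})\,U(\mathfrak{a})\,\varphi}$. Averaging over $M$ via the continuous projector $v \mapsto \int_M m\cdot v\, dm$ onto $V^M$, and using that $A$ centralizes $M$ so that $\mathrm{Ad}(m)$ fixes $U(\mathfrak{a})$ pointwise (while it preserves $U(\mathfrak{n})$), one gets
$$V^M = \overline{U(\mathfrak{n})^M\, U(\mathfrak{a})\, \varphi}.$$

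For $l > 2$ I would then invoke the classical invariant-theoretic fact that, since $M \cong SO(l-1)$ acts on $\mathfrak{n} \cong \mathbb{R}^{l-1}$ by the defining representation, the ring of $M$-invariants in $S(\mathfrak{n}) = U(\mathfrak{n})$ (the latter being commutative as $\mathfrak{n}$ is abelian) is the polynomial algebra $\mathbb{C}[Y]$ generated by $Y := \sum_i X_i^2$. Consequently
$$V^M = \overline{\mathbb{C}[Y]\cdot U(\mathfrak{a})\,\varphi}.$$

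Next I would reduce $\mathbb{C}[Y]\cdot U(\mathfrak{a})\varphi$ to $U(\mathfrak{a})\varphi$. Applying the Casimir formula (\ref{eq: casfin}) to $\varphi$ and noting that $M_i\varphi = 0$ and $W_i\varphi = 0$ (because $\mathfrak{m},\, W_i \subset \mathfrak{k}$), while $\Omega\varphi = c\varphi$ by Schur's lemma for some constant $c$, one obtains
$$2\, Y\varphi \;=\; c\,\varphi - \sum_i H_i^2\,\varphi + 2 H_\rho\,\varphi \;\in\; U(\mathfrak{a})\,\varphi.$$
Combined with the commutation $[H,Y] = 2\alpha(H)\, Y$, which rearranges to $Y\cdot u(H) = u(H - 2\alpha(H))\cdot Y$ in $U(\mathfrak{g})$ for any $u \in U(\mathfrak{a})$, an easy induction on $k$ produces $Y^k\, u(H)\,\varphi \in U(\mathfrak{a})\,\varphi$ for all $k \geq 0$. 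Hence $V^M = \overline{U(\mathfrak{a})\,\varphi}$, proving cyclicity. For $l = 2$ the subgroup $M$ is trivial so $V^M = V$, while $\mathfrak{n} = \mathbb{R}X$ and $U(\mathfrak{n}) = \mathbb{C}[X]$; the analogous reading of (\ref{eq: casfin}) now gives $X^2\varphi \in U(\mathfrak{a})\varphi$, and the commutation $[H,X] = \alpha(H)X$ then yields by the same induction $X^{2k}u(H)\varphi \in U(\mathfrak{a})\varphi$ and $X^{2k+1}u(H)\varphi \in U(\mathfrak{a})X\varphi$, so that $U(\mathfrak{a})\varphi + U(\mathfrak{a})X\varphi$ is dense in $V$. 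Directness of the sum follows from a parity argument (orthogonality of even- and odd-degree $K$-isotypic components under the unitary structure).

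The main obstacle, to my mind, is reconciling density statements in the Hilbert space $V$ with purely algebraic manipulations in $U(\mathfrak{g})$; one has to restrict to the dense subspace of $K$-finite (equivalently smooth) vectors and verify that the projection $P_M$ commutes with the PBW factorization in the way used. The invariant-theoretic input that $U(\mathfrak{n})^M = \mathbb{C}[Y]$ is classical, but it is precisely what confines the argument to this real-hyperbolic setting: its validity reflects transitivity of $M$ on spheres in $\mathfrak{n}$, the very feature that, as noted in the introduction, makes the generalization of Theorem \ref{th: mainth} tractable here and not in arbitrary rank one.
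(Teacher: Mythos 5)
Your proof is correct and follows essentially the same route as the paper's: irreducibility plus PBW to reduce to $U(\mathfrak a)U(\mathfrak n)^M\varphi$, the fact that $U(\mathfrak n)^M$ is generated by $\sum_i X_i^2$ for $l>2$, and the Casimir identity (\ref{eq: casfin}) to absorb that generator into $U(\mathfrak a)$. Your version is a bit more explicit than the paper's on the induction over powers of $Y=\sum_i X_i^2$ via the weight-shift $[H,Y]=2\alpha(H)Y$ and on the directness of the sum in the $l=2$ case, both of which the paper leaves implicit.
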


\begin{proof}

Since $V$ is assumed to be irreducible we have (up to completion)
\begin{equation*}
V^M=(U(\mathfrak g)\varphi)^M=U(\mathfrak{g})^M\varphi ,
\end{equation*}
 since we can average over the compact group $M$, i.e. for $X\in \mathfrak{g}$ we can define the $M$-average $\tilde X$ by 
\begin{equation*}
\tilde{X}f(g):=\int_M \frac{d}{dt}|_{t=0}f(g\exp \mathrm{Ad}(m)tX )dm.
\end{equation*}
Then $\tilde Xf$ is $M$-invariant, if $f$ is. Now
\begin{eqnarray*}
U(\mathfrak{g} )^M\varphi&=& U(\mathfrak{a}\oplus \mathfrak{n} )^M\varphi\\
&=& U(\mathfrak{a} )U(\mathfrak{n} )^M\varphi.
\end{eqnarray*}

Since $l>2$, $M$ is not trivial and it follows that $U(\mathfrak{n})^M$ is generated by the Laplacian $\sum_i X_i^2$ on $\mathfrak n$, see \cite[Prop. 4.11.]{GGA}. Further, we know by Chapter \ref{chap: casimir} 
\begin{equation*}
\sum_i X_i^2=\frac{1}{2}\left(-\Omega-H^2+2H_\rho)\mbox{ mod }  \mathfrak{k}U(\mathfrak{g}\right).
\end{equation*}

Because the space of $K$-invariant elements in $V$ is one dimensional and because $\Omega\varphi$ is also $K$-invariant, we deduce that $\varphi$ is a Casimir eigenfunction, let us assume $\Omega\varphi=\mu\varphi$. Hence, the effect of applying $\sum_i X_i^2$ to $\varphi$ can be expressed by elements in $U(\mathfrak{a})$, i.e.
\begin{equation*}
\left(\sum_i X_i^2\right)\varphi=\frac{1}{2}\left(-\mu -H^2+2H_\rho \right)\varphi,
\end{equation*}
where $\mu$ is the eigenvalue of $\varphi$.

If $l=2$, i.e. $M$ is trivial, we just note that $U(\mathfrak{n} )^M=U(\mathfrak{n} )$ is generated by $X$, since $ \mathfrak{n}=\mathbb{R}X $.
\end{proof}

If we assume $l>2$ and $V$ is some irreducible representation in the class 1 spectrum in (\ref{eq: l2dec}), the lemma shows that restricting the irreducible $G$-representation $\pi_R$ on $V$ from $G$ to $A$ yields an $A$-representation on $V^M$ with cyclic vector $\varphi$, if $\varphi$ is the (normalized) $K$-fixed vector in $V$. It follows that the induced representation of $L^1(A)$ on $V^M$ is also cyclic with cyclic vector $\varphi$, \cite[13.3.5]{Dix}.  

Thus, if $\sigma_n$ is an $M$-invariant element of the irreducible spherical component $V$ with (normalized) $K$-fixed vector $\varphi_n$, then there is some \index{$alpha_n$@$\alpha_n$, coefficient function in $L^1(A)$} $\alpha_{n}\in L^1(A)$ such that \begin{equation}\label{eq: sdec}\sigma_n=\int_A\alpha_{n}(a)\pi_R(a) \varphi_n da.
\end{equation}

By invariance under the geodesic flow for any $a\in A$ and any $\mathrm{PS}_{\varphi_j}$
$$\langle \pi_R(a) \varphi_n,\mathrm{PS}_{\varphi_j}\rangle_{SX_\Gamma}=\langle  \varphi_n,\mathrm{PS}_{\varphi_j}\rangle_{SX_\Gamma}.$$  

Hence, as $\mathrm{PS}_{\varphi_j}$ is a continuous functional on $C^\infty(SX_\Gamma)$, see Remark \ref{rem: pscontd}, and as $SX_\Gamma$ is compact
\begin{eqnarray}\label{eq: dsdec}\langle \sigma_n,\mathrm{PS}_{\varphi_j}\rangle_{SX_\Gamma}&=& \left\langle\int_A\alpha_{n}\pi_R(a) \varphi_n da,\mathrm{PS}_{\varphi_j}\right\rangle_{SX_\Gamma}\nonumber\\
&=& \int_A\alpha_n(a)\langle \pi_R(a)\varphi_n,\mathrm{PS}_{\varphi_j}\rangle_{SX_\Gamma} da 
  \\&=&\int_A\alpha_{n}(a)\langle  \varphi_n,\mathrm{PS}_{\varphi_j}\rangle_{SX_\Gamma}da\nonumber\\
&=& \int_A\alpha_n(a)da \cdot \langle  \varphi_n,\mathrm{PS}_{\varphi_j}\rangle_{SX_\Gamma}  .\nonumber\end{eqnarray}

That is for any $\mathrm{PS}_{\varphi_j}$, the value $\langle \varphi_n,\mathrm{PS}_{\varphi_j}\rangle_{SX_\Gamma}$ essentially determines $\mathrm{PS}_{\varphi_j}$ on the whole $M$-invariant part $V^M$ of the irreducible spherical component $V$ associated with $\varphi_n$. We also remark that (\ref{eq: dsdec}) remains true if we replace $\mathrm{PS}_{\varphi_j}$ by $\widehat{\mathrm{PS}}_{\varphi_j}$.
 
For $G=SO_o(1,2)$ see \cite[Th. 9.6.]{AZ}. Here, for spherical irreducible components $V$ of $L^2(\Gamma\backslash SO_o(1,2))$ one basically needs knowledge of $\mathrm{PS}_{\varphi}$ on $\varphi_n$ and $X\varphi_n$, if $\varphi_n$ is the (normalized) spherical vector of $V$. If $V$ is not spherical, in other words, if $V$ is a discrete series representation, then $\mathrm{PS}_{\varphi}$ on $V$ is determined by its value on the lowest weight vector.

Going back to $G=SO_o(1,l)$, $l>2$, we define for $\sigma=\sum^{\text{finite}}_n \sigma_n=\sum_n^{\text{finite}}\int_A\alpha_{n}(a)\pi_R(a)\varphi_n da$, see (\ref{eq: sdec}), \index{$\mathcal Z(\sigma)$, zeta function ass. with $\sigma$}
\begin{equation}\label{def: zetagen}
\mathcal{Z}(\sigma):=\sum^{\text{finite}}_n\int_A\alpha_n(a)da\cdot \mathcal{Z}(\varphi_n)=\sum_{1\neq[\gamma]\in C\Gamma}\sum_{\pi\in\widehat{M}}c(\gamma,\sigma,\pi,k) e^{-(k-\rho_0)\log a_\gamma},
\end{equation} if $\varphi_n$ is the normalized $K$-fixed vector of the component of $\sigma_n$ in $L^2(\Gamma\backslash G)$, see $(\ref{eq: l2dec})$. Here \[c(\gamma,\sigma,\pi,k):=\sum_n^{\text{finite}}\int_A\alpha_n(a)da\cdot c(\gamma,\varphi_n,\pi,k).\]

Now the analogue of Theorem  \ref{th: merocont} holds.

\begin{proposition}\label{th: geodirrnonsphere}
Let $\sigma$ be a function in $C^\infty(\Gamma\backslash G/M)$ with only finitely many nontrivial components in the spherical spectrum (\ref{eq: l2dec}) and no component in the non spherical spectrum. The zeta function $\mathcal{Z}(\sigma)$ defined by (\ref{def: zetagen}) is a meromorphic function on $\mathbb{C}$. In the strip $\mathcal S=\{k\in \mathbb{C}:\rho_0-\frac{1}{2} <\mathrm{Re}(k)<\rho_0+\frac{1}{2}\}$, the poles are at \[\{\rho_0,\rho_0\pm i\lambda: -(\rho_0+\lambda^2) \mbox{ eigenvalue of the Laplacian}\}\cap \mathcal S.\] 

If $-(\lambda^2+\rho_0^2)$ is an eigenvalue of the Laplacian from the principal series, the residue at $k=\rho_0+i\lambda$ is up to the non-zero constant $\omega_{l-1}I(a,b,\rho_0,\rho_0+i\lambda)$ \begin{equation}\label{eq: sumar}
\sum_{\text{finite}} \langle \sigma,\widehat{PS}_\varphi\rangle,
\end{equation}
where this finite sum ranges over all normalized Patterson-Sullivan distributions $\widehat{PS}_\varphi$ associated to Laplace eigenfunctions with eigenvalue $-(\lambda^2+\rho_0^2)$. 
\end{proposition}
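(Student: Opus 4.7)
My plan is to reduce the proposition to Theorem \ref{th: merocont}, whose content we already have for each spherical Laplace eigenfunction $\varphi_n$, and then to use the key equivariance identity (\ref{eq: dsdec}) to repackage the residues in terms of $\sigma$ rather than in terms of the individual $\varphi_n$.

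First, I write $\sigma=\sum_{n\in F}\sigma_n$ with $F$ finite, where each $\sigma_n$ lies in an irreducible spherical component $V_n\subset L^2(\Gamma\backslash G)$ with normalized $K$-fixed vector $\varphi_n$. By Lemma \ref{lem: sphcy}, applied to the $A$-action on $V_n^M$, there exist $\alpha_n\in L^1(A)$ such that $\sigma_n=\int_A\alpha_n(a)\pi_R(a)\varphi_n\,da$. Definition (\ref{def: zetagen}) then gives
\begin{equation*}
\mathcal Z(k;\sigma)=\sum_{n\in F}\Big(\int_A\alpha_n(a)\,da\Big)\,\mathcal Z(k;\varphi_n).
\end{equation*}
Since $F$ is finite and Theorem \ref{th: merocont} provides a meromorphic continuation of each $\mathcal Z(k;\varphi_n)$ to all of $\mathbb C$, the same follows for $\mathcal Z(k;\sigma)$. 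The pole location statement in the strip $\mathcal S$ is likewise inherited directly: a finite linear combination of meromorphic functions can only have poles where at least one summand does, so the poles of $\mathcal Z(k;\sigma)$ in $\mathcal S$ lie in $\{\rho_0\}\cup\{\rho_0\pm i\lambda_j : -(\rho_0^2+\lambda_j^2)\text{ is a Laplace eigenvalue}\}\cap\mathcal S$.

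For the residue calculation at $k_0=\rho_0+i\lambda$ with $-(\lambda^2+\rho_0^2)$ in the principal series, I invoke formula (I) from Section \ref{sec: mero1}, which gives
\begin{equation*}
\mathrm{Res}_{k=k_0}\mathcal Z(k;\varphi_n)=2^{\rho_0-k_0}\cdot 2^{\rho_0+i\lambda-1}\omega_{l-1}\,I(a,b,\rho_0,\rho_0+i\lambda)\sum_r\langle\varphi_n,\widehat{\mathrm{PS}}_{\varphi_r}\rangle_{SX_\Gamma},
\end{equation*}
where the sum runs over an orthonormal basis of eigenfunctions $\varphi_r$ of eigenvalue $-(\lambda^2+\rho_0^2)$. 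Now I use (\ref{eq: dsdec}), which (since it equally holds with $\widehat{\mathrm{PS}}$ in place of $\mathrm{PS}$) tells me
\begin{equation*}
\Big(\int_A\alpha_n(a)\,da\Big)\langle\varphi_n,\widehat{\mathrm{PS}}_{\varphi_r}\rangle_{SX_\Gamma}=\langle\sigma_n,\widehat{\mathrm{PS}}_{\varphi_r}\rangle_{SX_\Gamma}.
\end{equation*}
Summing over $n\in F$ and exchanging the finite sum over $n$ with the (finite) sum over $r$ collapses the $\sigma_n$-sum to $\sigma$, yielding
\begin{equation*}
\mathrm{Res}_{k=k_0}\mathcal Z(k;\sigma)=C(\lambda)\sum_r\langle\sigma,\widehat{\mathrm{PS}}_{\varphi_r}\rangle_{SX_\Gamma}
\end{equation*}
with the explicit nonzero constant $C(\lambda)=2^{\rho_0-k_0}\cdot 2^{\rho_0+i\lambda-1}\omega_{l-1}\,I(a,b,\rho_0,\rho_0+i\lambda)$, matching the form (\ref{eq: sumar}) claimed in the statement.

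The only genuinely delicate step is the passage from the $\alpha_n$-averaged residue expression to an intrinsic expression involving $\sigma$, and this is precisely where the geodesic-flow invariance of $\mathrm{PS}_{\varphi}$ (the content of (\ref{eq: dsdec})) is essential; without it, the $A$-smearing used to write $\sigma_n$ in terms of $\varphi_n$ would not be transparent on the spectral side. Everything else is a bookkeeping exercise using the linearity of residues and the finiteness of $F$. Note that non-vanishing of the constant $C(\lambda)$ for principal-series $\lambda$ follows from Lemma \ref{lem: inthypg}, since $I(a,b,\rho_0,\rho_0+i\lambda)$ has no zeros for $\mathrm{Re}(\rho_0+i\lambda)=\rho_0>0$.
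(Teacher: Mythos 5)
Your proof follows the same route as the paper's: decompose $\sigma=\sum_{n\in F}\sigma_n$ with $\sigma_n=\int_A\alpha_n(a)\pi_R(a)\varphi_n\,da$, invoke Theorem \ref{th: merocont} termwise for the continuation and pole locations, and collapse the residue via the flow-invariance identity (\ref{eq: dsdec}). The only cosmetic difference is that you carry the $2^{\rho_0-1}$ prefactor explicitly where the paper's proof silently absorbs it into the ``up to a non-zero constant'' phrasing; both are consistent with the statement as written.
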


\begin{proof}
The meromorphic continuation of $\mathcal Z(\sigma)$ follows directly from Theorem \ref{th: merocont} and the fact that $\sigma$ has only finitely many nontrivial components. In order to compute the residue at $k=\rho+i\lambda\in \mathcal{S}$ we recall that we can assume that there are $\alpha_{n}\in L^1(A)$ such that $$ \sigma=\sum_n^{\text{finite}}\sigma_n=\sum_n^{\text{finite}}\int_A\alpha_{n}(a)\pi_R(a)\varphi_nda,$$ see (\ref{eq: sdec}). This implies, see (\ref{eq: dsdec}), 
\begin{equation}\label{eq: dsdec2} \langle \sigma,\widehat{\mathrm{PS}}_{\varphi},\rangle_{SX_\Gamma}=\sum_n^{\text{finite}}\int_A\alpha_n(a)da\langle \varphi_n,\widehat{\mathrm{PS}}_\varphi\rangle_{SX_\Gamma}.\end{equation}   

For the residue at $k_0\in \mathcal{S}$ we have
$$\mathrm{Res}_{k=k_0}\mathcal{Z}(k;\sigma)=\sum_n^{\text{finite}}\int_A\alpha_n(a)da\mathrm{Res}_{k=k_0}\mathcal{Z}(k;\varphi_n) .$$

In particular, if $k=\rho_0+i\lambda\in\mathcal S$, $-(\lambda^2+\rho_0^2)$ an eigenvalue from the principal series, it follows from Theorem \ref{th: merocont} that 
\begin{eqnarray*}
\mathrm{Res}_{k=\rho_0+i\lambda} \mathcal{Z}(k;\sigma)&=& \sum_n^{\text{finite}} \int_A\alpha_n(a)da\omega_{l-1}I(a,b,\rho_0,\rho_0+i\lambda) \sum_{r:\lambda_r^2=\lambda^2 }^{\text{finite}} \langle \varphi_n,\widehat{\mathrm{PS}}_{\varphi_r}\rangle_{SX_\Gamma}\\
&=& \omega_{l-1}I(a,b,\rho_0,\rho_0+i\lambda)\sum_{r:\lambda_r^2=\lambda^2 }^{\text{finite}} \sum_n^{\text{finite}} \int_A\alpha_n(a)da \langle \varphi_n,\widehat{\mathrm{PS}}_{\varphi_r}\rangle_{SX_\Gamma}\\
&\overset{(\ref{eq: dsdec2})}=& \omega_{l-1}I(a,b,\rho_0,\rho_0+i\lambda)\sum_{r:\lambda_r^2=\lambda^2 }^{\text{finite}} \langle \sigma,\widehat{\mathrm{PS}}_{\varphi_r}\rangle_{SX_\Gamma}.
\end{eqnarray*}
\end{proof}

For the general case of a $K$-finite Casimir eigenfunction of type $\delta$, $\delta\in \widehat{K}$ we just state the following.

\begin{lemma}\label{th: geodirrsphere}
Let $l>2$ and $V$ be any irreducible representation of $G$ with $K$-finite $\Omega$-eigenfunction $v$ of type $\delta$, $(\delta,V_\delta)\in \widehat{K} $, i.e. $ \mathrm{span}\{K\cdot v\}=\mathrm{span}\{v_1,\ldots,v_{d_\delta}\}\cong V_\delta $ and $\Omega v=\mu v$ for some $\mu\in \mathbb{C}$. Then $\bigoplus_{i=1}^{d_\delta} U(\mathfrak{a} )v_i$ contains $V^M$.
\end{lemma}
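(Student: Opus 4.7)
The plan is to adapt the proof of Lemma \ref{lem: sphcy} to the $K$-finite setting by a filtration-by-$\mathfrak{n}$-degree argument, exploiting that $\mathfrak{n}$ is abelian in the rank one real hyperbolic case. First, I would use irreducibility of $V$ together with the Iwasawa PBW decomposition $U(\mathfrak{g})=U(\mathfrak{n})U(\mathfrak{a})U(\mathfrak{k})$ and the $K$-finiteness $U(\mathfrak{k})v\subseteq V_\delta=\mathrm{span}\{v_1,\ldots,v_{d_\delta}\}$ to conclude
\[ V=U(\mathfrak{g})v=U(\mathfrak{n})U(\mathfrak{a})V_\delta=\sum_{i=1}^{d_\delta}U(\mathfrak{n})U(\mathfrak{a})v_i. \]
So the task reduces to showing that for any $w\in V^M$ the $U(\mathfrak{n})$-factor can be eliminated at the cost of landing in $\sum_i U(\mathfrak{a})v_i$.

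Next, since $\mathfrak{n}\cong\mathbb{R}^{l-1}$ is abelian, $U(\mathfrak{n})=S(\mathfrak{n})$, and I would filter $V$ by the $\mathfrak{n}$-degree $V^{\leq k}:=U^{\leq k}(\mathfrak{n})U(\mathfrak{a})V_\delta$. The claim $V^M\subseteq\bigoplus_i U(\mathfrak{a})v_i$ follows by induction on $k$ from
\[  V^M\cap V^{\leq k}\;\subseteq\;V^M\cap V^{\leq k-1}+\sum_i U(\mathfrak{a})v_i. \]
The base case $k=0$ is immediate. For the inductive step, the leading-symbol map $V^{\leq k}/V^{\leq k-1}\to S^{k}(\mathfrak{n})\otimes U(\mathfrak{a})\otimes V_\delta$ is $M$-equivariant, so the leading symbol of any $w\in V^M\cap V^{\leq k}$ lands in $(S^{k}(\mathfrak{n})\otimes U(\mathfrak{a})\otimes V_\delta)^M$. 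Since $M=SO(l-1)$ acts on $\mathfrak{n}$ as the defining representation and $l>2$, one has $S(\mathfrak{n})^M=\mathbb{C}[\Delta_{\mathfrak{n}}]$ with $\Delta_\mathfrak{n}=\sum_iX_i^2$ (Helgason/Kostant, as in Lemma \ref{lem: sphcy}). Hence the leading symbol may be rewritten, using the $M$-isotypic decomposition of $V_\delta$ paired with the harmonic/Laplacian decomposition of $S(\mathfrak{n})$, as a $U(\mathfrak{a})$-linear combination of monomials $\Delta_\mathfrak{n}^{\,j}\,v_i$ modulo $V^{\leq k-1}$.

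The crucial reduction is then the Casimir identity from Chapter \ref{chap: casimir},
\[ \Delta_\mathfrak{n}\;\equiv\;\tfrac12\bigl(\Omega-H^{2}+2H_\rho\bigr)\ \mod U(\mathfrak{g})\mathfrak{k},  \]
combined with $\Omega|_V=\mu$. Applied to $v_i$, this replaces each factor $\Delta_\mathfrak{n}$ by an element of $U(\mathfrak{a})$ plus a correction in $U(\mathfrak{g})\mathfrak{k}\,v_i$; and because $\mathfrak{k}v_i\subseteq V_\delta$, the correction lies again in $U(\mathfrak{n})U(\mathfrak{a})V_\delta$ but now with one factor of $\mathfrak{n}$ replaced by an element of $\mathfrak{k}$, hence of strictly smaller $\mathfrak{n}$-degree after PBW-rewriting. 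Iterating within the top $\mathfrak{n}$-degree component reduces $w$ modulo $V^{\leq k-1}$ to an element of $\sum_i U(\mathfrak{a})v_i$, closing the induction.

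The main obstacle is precisely the handling of the $U(\mathfrak{g})\mathfrak{k}$-correction: in the spherical case of Lemma \ref{lem: sphcy} it is killed by $\mathfrak{k}\varphi=0$, but here it only lands back in $V$ via $\mathfrak{k}v_i\subseteq V_\delta$. Showing that after PBW-rewriting (using $[\mathfrak{k},\mathfrak{n}]\subseteq\mathfrak{n}\oplus\mathfrak{a}\oplus\mathfrak{m}$ and $[\mathfrak{k},\mathfrak{a}]\subseteq\mathfrak{p}$) the correction has leading symbol of strictly lower $\mathfrak{n}$-degree, so that the inductive hypothesis actually applies, is the delicate bookkeeping step — and it is here that the identification $U(\mathfrak{n})^M=\mathbb{C}[\Delta_\mathfrak{n}]$ for $l>2$ and the abelianness of $\mathfrak{n}$ are essential.
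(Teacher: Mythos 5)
Your route through the Iwasawa decomposition and a filtration by $\mathfrak{n}$-degree is genuinely different from the paper's. The paper works with the Cartan decomposition: it writes $V = S(\mathfrak{p})U(\mathfrak{k})v = S(\mathfrak{p})V_\delta$, uses the rank-one observation $S(\mathfrak{p})^M = U(\mathfrak{a})S(\mathfrak{p})^K$ for $l>2$ (where $S(\mathfrak{p})^K=\mathbb{C}[\sum_j T_j^2]$), and concludes at once from $\sum_j T_j^2 v_i = \Omega v_i + \sum_t W_t^2 v_i \in V_\delta$, with no filtration or induction. Your approach trades that fact for the Iwasawa version $S(\mathfrak{n})^M=\mathbb{C}[\Delta_\mathfrak{n}]$ with an induction on $\mathfrak{n}$-degree, structurally closer to Lemma~\ref{lem: sphcy}; both proofs hinge on the same Casimir trick, applied in different decompositions.

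There is, however, a gap at the step where you claim the leading symbol of an $M$-invariant element of $V^{\leq k}$ ``may be rewritten \dots\ as a $U(\mathfrak{a})$-linear combination of monomials $\Delta_\mathfrak{n}^j v_i$ modulo $V^{\leq k-1}$.'' For nontrivial $\delta$ the $M$-invariants of $S^k(\mathfrak{n})\otimes V_\delta$ are not exhausted by $\Delta_\mathfrak{n}^{k/2}\otimes V_\delta^M$: by the harmonic decomposition $S^k(\mathfrak{n})=\bigoplus_{2j\leq k}\Delta_\mathfrak{n}^j\,\mathcal{H}^{k-2j}$ under $M=SO(l-1)$, one also gets mixed invariants from $(\mathcal{H}^r\otimes V_\delta)^M$ whenever $\mathcal{H}^r$ occurs in $V_\delta|_M$ for some $r>0$, and already for the standard $K$-type $\delta$ of $SO(l)$ the representation $\mathcal{H}^1$ occurs in $\delta|_M$. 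The leading symbol of such a mixed invariant is not a multiple of $\Delta_\mathfrak{n}^j\otimes w$, so the Casimir identity, which only trades a factor $\Delta_\mathfrak{n}$ for $U(\mathfrak{a})$ plus lower $\mathfrak{n}$-degree, does not touch it. You name the harmonic and $M$-isotypic decompositions but do not show how these mixed contributions vanish or reduce, and I do not see an argument that closes the induction. For what it is worth, the paper's own proof passes from $V^M\subset(S(\mathfrak{p})V_\delta)^M$ to $V^M\subset\bigoplus_i S(\mathfrak{p})^M v_i$ without comment, and for nontrivial $\delta$ the $M$-invariants of $S(\mathfrak{p})\otimes V_\delta$ are likewise not confined to $S(\mathfrak{p})^M\otimes V_\delta$; so the paper has the analogous lacuna, and you have reproduced not only a different strategy but also the same unaddressed difficulty.
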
 
\begin{proof}
We use the Cartan decomposition $ \mathfrak{g}=\mathfrak{p}\oplus \mathfrak{k} $. Then by irreducibility $V=U(\mathfrak{g} )v=S(\mathfrak{p} )U(\mathfrak{k} )v$, where $ S(\mathfrak{p} ) $ is the symmetric algebra of $ \mathfrak{p} $. Since $v$ is $K$-finite of type $\delta$, we have that $U(\mathfrak{k} )v$ is contained in $ \mathrm{span}\{v_1,\ldots,v_{d_\delta}\} $, where the $v_i$ are $K$-translates which are $K$-finite of type $\delta$ and $\Omega$-eigenfunctions, $\Omega v_i=\mu_i v_i$, since $\Omega$ lies in the center of $ U(\mathfrak{g} ) $. Thus, 
\begin{equation*}
V^M\subset \bigoplus_{i} S(\mathfrak{p} )^M v_i. 
\end{equation*}

But $ S(\mathfrak{p} )^M=U(\mathfrak a)S( \mathfrak{p} )^K $, as $l>2$, where $ U(\mathfrak{p} )^K $ is generated by $\sum_j T_j^2$, $T_j$ forming an orthonormal basis of $ \mathfrak{p} $. Now $ \Omega $ can be expressed as $ \Omega=\sum_j T_j^2 -\sum_l W_t^2$, where the $W_t$ form an orthonormal base of $ \mathfrak{k} $. Hence for any $i$,
\begin{equation*}
\sum_j T_j^2v_i=\Omega v_i+\sum_l W_t^2v_i =\mu_i v_i+\sum_l W_t^2v_i\in \mathrm{span}\{v_1,\ldots, v_{d_\delta}\}.
\end{equation*}

It follows that 
\begin{equation*}
V^M\subset \bigoplus_i U(\mathfrak{a} )v_i.
\end{equation*}

\end{proof}

\section{Outlook}\label{sec: out}
It remains an open problem what are good choices for $\sigma$ coming from the $M$-invariant part $V^M$ of a non-spherical representation $V$ in the decomposition (\ref{eq: l2decm}), if $G=SO_o(1,l)$ with $l>2$. Lemma \ref{th: geodirrsphere} shows that if   $V$ has a $K$-finite $\Omega$-eigenfunction $v$ then $\mathrm{PS}_{\varphi_j}$ is  basically determined by its value on finitely many $K$-translates $v_i$ of $v$. Thus, one would like to associate a zeta function $\mathcal Z(v_i)$ to $v_i$ as one did in the case of an automorphic eigenfunction $\varphi$. Here for example the problem occurs how to associate an operator with $v_i$ which maps $L^2(X_\Gamma)$ into itself and how to compute its trace. For $G=SO_o(1,2)$, \cite{AZ} gives a solution to these problems, at least if $\sigma$ has only finitely many nontrivial components in the decomposition (\ref{eq: l2dec}).  

An approach would be to consider instead of $\sigma\in C^\infty(X_\Gamma)$ and $f_k\in C^\infty(G//K)$ sections of vector bundles $\Sigma\in C^\infty(X_\Gamma\times_K V_\delta)$ and $F_k\in C^\infty(X\times_K V_{\delta^*})$ for $K$-types $\delta$, $k\in\mathbb C$, so that $\Sigma\cdot \pi_R(F_k)$ maps $L^2(X_\Gamma)$  into itself and is an operator of trace class on $L^2(X_\Gamma)$.

\bibliography{Phd2}

\bibliographystyle{amsalpha}
\printindex
\end{document}